\newcommand{\itemcolor}[1]{
  \renewcommand{\makelabel}[1]{\color{#1}\hfil ##1}}
\DeclareFontFamily{U}{mathb}{\hyphenchar\font45}
\DeclareFontShape{U}{mathb}{m}{n}{
      <5> <6> <7> <8> <9> <10> gen * mathb
      <10.95> mathb10 <12> <14.4> <17.28> <20.74> <24.88> mathb12
      }{}
\DeclareSymbolFont{mathb}{U}{mathb}{m}{n}
\DeclareMathSymbol{\curvearrowright}{3}{mathb}{'361}
\DeclareMathAlphabet{\mathcalligra}{T1}{calligra}{m}{n}
\DeclareFontShape{T1}{calligra}{m}{n}{<->s*[2.2]callig15}{}
\def\dlceil{\left\lceil\kern-4.75pt\left\lceil}
\def\drceil{\right\rceil\kern-4.75pt\right\rceil}
\patchcmd{\appendices}{\quad}{. }{}{}
\global\long\def\bZ{\mathbb{Z}}
\global\long\def\bZpos{\mathbb{Z}_{>0}}
\global\long\def\bZnn{\mathbb{Z}_{\geq 0}}
\global\long\def\bC{\mathbb{C}}
\global\long\def\ii{\mathfrak{i}}
\global\long\def\sF{\mathcal{F}}
\newcommand{\rad}{\textnormal{rad}\,}
\newcommand{\Span}{\textnormal{span}\,}
\newcommand{\End}{\textnormal{End}\,}
\newcommand{\LS}{\mathsf{L}}
\newcommand{\Gen}{U}
\newcommand{\ValGenWJ}{U}
\newcommand{\ValGenMWJ}{V}
\newcommand{\LP}{\mathsf{LP}}
\newcommand{\LD}{\mathsf{LD}}
\newcommand{\ThetaNet}{\Theta}
\newcommand{\TetraNet}{\textnormal{Tet}}
\newcommand{\smin}{s_{\textnormal{min}}}
\newcommand{\smax}{s_{\textnormal{max}}}
\newcommand{\pmin}{\mathfrak{p}}
\newcommand{\ppmin}{\bar{\pmin}}
\newcommand{\TL}{\mathsf{TL}}
\newcommand{\WJ}{\mathsf{JW}}
\newcommand{\PS}{\mathsf{P}}
\newcommand{\PP}{\PS\mathsf{P}}
\newcommand{\PD}{\PS\mathsf{D}}
\newcommand{\BarAction}{\left\bracevert\phantom{A}\hspace{-9pt}\right.}
\newcommand{\cheque}{{\scaleobj{0.85}{\vee}}}
\newcommand{\Gram}{\mathscr{G}}
\newcommand{\np}{d}
\newcommand{\Summed}{n}
\newcommand{\Defect}{u}
\newcommand{\sIndex}{s}
\newcommand{\DefectSet}{\mathsf{E}}
\newcommand{\multii}{\varsigma}
\newcommand{\multiii}{\varpi}
\newcommand{\fds}{{\underset{\vspace{1pt} \scaleobj{1.4}{\check{}}}{\multii}}}
\newcommand{\lds}{{\hat{\multii}}}
\newcommand{\flds}{{\underset{\vspace{1pt} \scaleobj{1.4}{\check{}}}{\hat{\multii}}}}
\newcommand{\super}[1]{^{\scaleobj{0.85}{(#1)}}}
\newcommand{\sub}[1]{_{\scaleobj{0.85}{(#1)}}}
\newcommand{\superscr}[1]{^{\scaleobj{0.85}{#1}}}
\newcommand{\BiForm}[2]{(#1 \BarAction #2)}
\newcommand{\Dim}{D}
\newcommand{\WJProj}{P}
\newcommand{\WJproj}{\WJProj}
\newcommand{\one}{\mathbf{1} \hspace*{-.25em} \textnormal{l}}
\newcommand{\id}{\textnormal{id}}
\newcommand{\ProjBox}{\vcenter{\hbox{\includegraphics[scale=0.275]{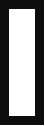}}}}
\newcommand{\const}{\mathrm{const.}}
\newsavebox\CBox
\newcommand\hcancel[2][0.5pt]{%
  \ifmmode\sbox\CBox{$#2$}\else\sbox\CBox{#2}\fi%
  \makebox[0pt][l]{\usebox\CBox}%
  \rule[0.5\ht\CBox-#1/2]{\wd\CBox}{#1}}
\newcommand{\Mod}[1]{\ (\mathrm{mod}\ #1)}
\definecolor{amber}{rgb}{1.0, 0.75, 0.0}
\newcommand{\blue}{\textcolor{blue}}
\newcommand{\red}{\textcolor{red}}
\newcommand{\be}{\begin{equation}}
\newcommand{\ee}{\end{equation}}
\newcommand{\bea}{\begin{eqnarray}}
\newcommand{\eea}{\end{eqnarray}}
\newcommand{\bean}{\begin{eqnarray*}}
\newcommand{\eean}{\end{eqnarray*}}
\global\long\def\Poset{\Lambda}
\global\long\def\poset{\lambda}
\global\long\def\posetprime{\lambda'}
\global\long\def\Cell{M}
\global\long\def\CellMod{\mathsf{M}}
\global\long\def\CellMap{\iota}
\global\long\def\Lpar{(\hspace*{-1mm}(}
\global\long\def\Rpar{)\hspace*{-1mm})}
\theoremstyle{plain}
\newtheorem*{theorem*}{Theorem}
\newtheorem{theorem}{Theorem}
\numberwithin{theorem}{section} 
\newtheorem{prop}[theorem]{Proposition}
\numberwithin{prop}{section} 
\newtheorem{cor}[theorem]{Corollary}
\numberwithin{cor}{section} 
\newtheorem{lem}[theorem]{Lemma}
\numberwithin{lem}{section} 
\newtheorem{conj}[theorem]{Conjecture}
\numberwithin{conj}{section} 
\numberwithin{quest}{section} 
\newtheorem{claim}[theorem]{Claim}
\newtheorem{InductAssump}[theorem]{Induction Hypothesis}
\numberwithin{InductAssump}{section} 
\numberwithin{comment}{section}
\theoremstyle{definition}
\newtheorem{remark}[theorem]{Remark}
\numberwithin{remark}{section} 
\newtheorem{recipe}[theorem]{Recipe}
\numberwithin{recipe}{section} 
\newtheorem{defn}[theorem]{Definition}
\numberwithin{defn}{section} 
\numberwithin{example}{section} 
\newcounter{parentnumber}
\def\clap#1{\hbox to 0pt{\hss#1\hss}}
\def\mathclap{\mathpalette\mathclapinternal}
\def\mathclapinternal#1#2{%
\clap{$\mathsurround=0pt#1{#2}$}}
\DeclareRobustCommand{\cev}[1]{%
  \mathpalette\do@cev{#1}%
}
\newcommand{\do@cev}[2]{%
  \fix@cev{#1}{+}%
  \reflectbox{$\m@th#1\vec{\reflectbox{$\fix@cev{#1}{-}\m@th#1#2\fix@cev{#1}{+}$}}$}%
  \fix@cev{#1}{-}%
}
\newcommand{\fix@cev}[2]{%
  \ifx#1\displaystyle
    \mkern#23mu
  \else
    \ifx#1\textstyle
      \mkern#23mu
    \else
      \ifx#1\scriptstyle
        \mkern#22mu
      \else
        \mkern#22mu
      \fi
    \fi
  \fi
}
\numberwithin{equation}{section}
\renewcommand{\thesection}{\arabic{section}} 
\begin{document}
\title{Generators, projectors, and the Jones-Wenzl algebra \vspace*{.5cm}}


\author{\bf Steven M. Flores}
\affiliation{\blue{\tt \small steven.miguel.flores@gmail.com} \\ 
Department of Mathematics and Systems Analysis, \\ 
P.O. Box 11100, FI-00076, Aalto University, Finland}

\author{\bf Eveliina Peltola}
\affiliation{\blue{\tt \small eveliina.peltola@unige.ch} \\ 
Section de Math\'{e}matiques, Universit\'{e} de Gen\`{e}ve, \\
2--4 rue du Li\`{e}vre,  C.P. 64, 1211 Gen\`{e}ve, Switzerland
\vspace*{.5cm}}


\begin{abstract}
\begingroup
\setlength{\parindent}{1.5em}
\setlength{\parskip}{.5em}
We investigate a subalgebra of the Temperley-Lieb algebra called the Jones-Wenzl algebra,
which is obtained by action of certain Jones-Wenzl projectors. 
This algebra arises naturally in applications to conformal field theory and statistical physics.
It is 
also the commutant (centralizer) algebra of the Hopf algebra $U_q(\mathfrak{sl}_2)$ on its type-one modules
--- this fact is a generalization of the $q$-Schur-Weyl duality of Jimbo. 
In this article, we find two minimal generating sets for the Jones-Wenzl algebra.
In special cases, we also find all of the independent relations satisfied by these generators.

\endgroup
\end{abstract}

\maketitle

\vspace*{-1.5cm}
\renewcommand{\tocname}{}
{\hypersetup{linkcolor=black}
\tableofcontents
}

\begingroup
\setlength{\parindent}{1.5em}
\setlength{\parskip}{.5em}

\newpage

\section{Introduction} \label{Intro}

In this article, we consider a subalgebra of the Temperley-Lieb algebra that appears naturally 
in applications to conformal field theory and statistical 
physics~\cite{pms2, pms, bdmn, dgp, dge, nic, dgn, prt, mrr, fp0, fp3, fp1}.
We call 
it the ``Jones-Wenzl algebra.'' It is obtained from the Temperley-Lieb algebra 
by action of certain Jones-Wenzl projectors. 
We construct two minimal generating sets for the Jones-Wenzl algebra, and find relations that these generators satisfy. 
These results generalize similar properties of the Temperley-Lieb algebra.


The Temperley-Lieb algebra was originally discovered by H.~Temperley and E.~Lieb~\cite{tl},
and independently by V.~Jones~\cite{vj, vj2}. 
In the 1970s, Temperley and Lieb found this algebra from its connections to transfer matrices 
in integrable statistical mechanics models~\cite{tl, pen, pm, bax}. 
On the other hand, in the 1980s Jones used the Temperley-Lieb algebra to construct new knot invariants. 
Furthermore, results from~\cite{vj, vj2} manifested a close relationship between the Temperley-Lieb algebra and quantum 
groups~\cite{mj2, lk2, cp, ck, gras, krt, vt}. 
Indeed, Jimbo~\cite{mj2} noticed that the Temperley-Lieb algebra is in ``quantum Schur-Weyl duality''
with the Hopf algebra $U_q(\mathfrak{sl}_2)$ on tensor products of the fundamental $U_q(\mathfrak{sl}_2)$-modules,
see also~\cite{ppm, mma}, and~\cite{fks, cs} and references therein.
In~\cite{fp3}, we discuss a concrete generalization of such a duality, where the Temperley-Lieb algebra is replaced by 
the Jones-Wenzl algebra and the fundamental $U_q(\mathfrak{sl}_2)$-modules with modules of higher spin.


Simple special cases of the Jones-Wenzl algebra are closely related to 
the ``one-boundary Temperley-Lieb algebra'' 
and the ``two-boundary Temperley-Lieb algebra,'' that appear in the literature~\cite{pms2, bdmn, mrr}.
The one-boundary Temperley-Lieb algebra, also termed the ``blob algebra,''
was introduced by P.~Martin and H.~Saleur~\cite{pms2, pms} 
to study transfer matrices in Potts models with toroidal boundary conditions.
Its relation to other critical planar statistical mechanics models was revealed in~\cite{prt}.
M.~Batchelor, J.~de Gier, S.~Mitra, and B.~Nienhuis
introduced the two-boundary Temperley-Lieb algebra in the context of the dense loop $O(1)$ model~\cite{bdmn}.
It was also later found to be naturally related also, e.g., to the six-vertex model with integrable boundary terms~\cite{dgp, nic, dge, dgn}. 
A quotient of the one-boundary Temperley-Lieb algebra, called the ``boundary seam algebra,''
was introduced and investigated recently by A.~Morin-Duchesne, J.~Rasmussen, and D.~Ridout in~\cite{mrr}.
It is a special case of the Jones-Wenzl algebra with only one non-trivial projector box.


%
%

\subsection{Temperley-Lieb algebra}

For each $n \in \bZnn$, we define an \emph{$n$-link diagram} to be any planar geometric object comprising two vertical lines, 
$n$ distinct marked points (\emph{nodes}) on each line,
and $n$ simple, nonintersecting, planar curves (\emph{links}) between the lines, joining the nodes pairwise.
The links can be \emph{crossing links} or \emph{turn-back links},
\begin{align}
\vcenter{\hbox{\includegraphics[scale=0.275]{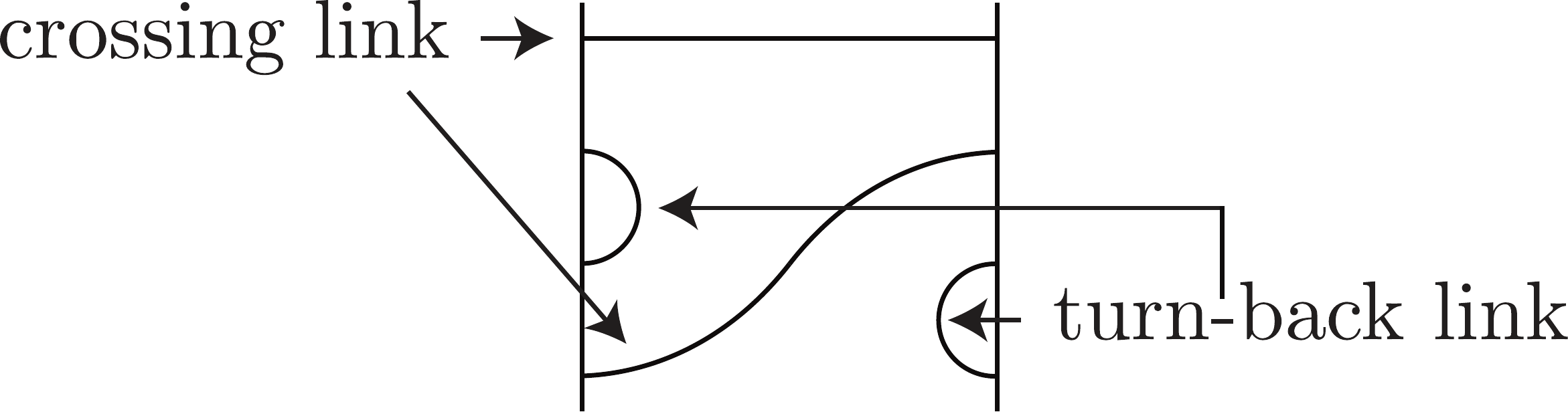} ,}}
\end{align}
and they are determined up to homotopy.  We denote the set of $n$-link diagrams by $\LD_n$, and
we denote by $\TL_n$ the complex vector space of all \emph{tangles}, that is, formal linear combinations of $n$-link diagrams.
We define a natural multiplication of link diagrams via concatenation:
\begin{align}
\label{TLmult1} 
& \vcenter{\hbox{\includegraphics[scale=0.275]{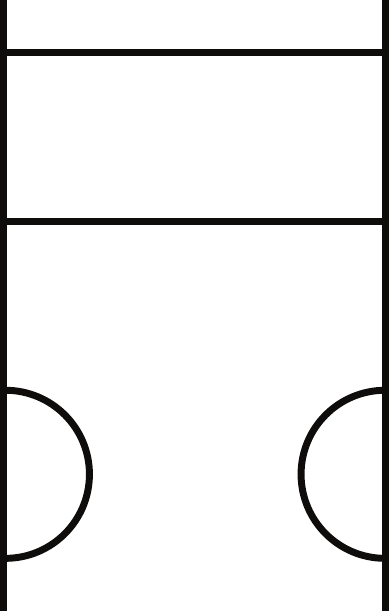}}} \quad 
\vcenter{\hbox{\includegraphics[scale=0.275]{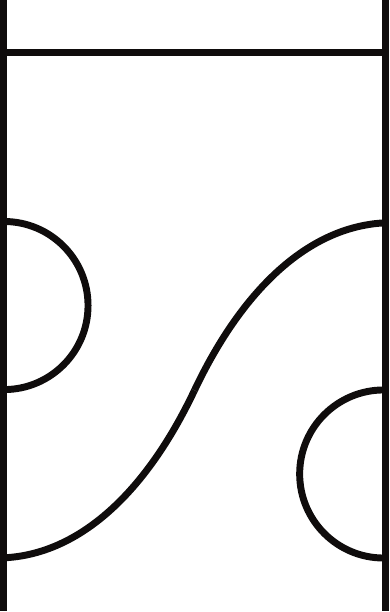}}} \quad := \quad 
\vcenter{\hbox{\includegraphics[scale=0.275]{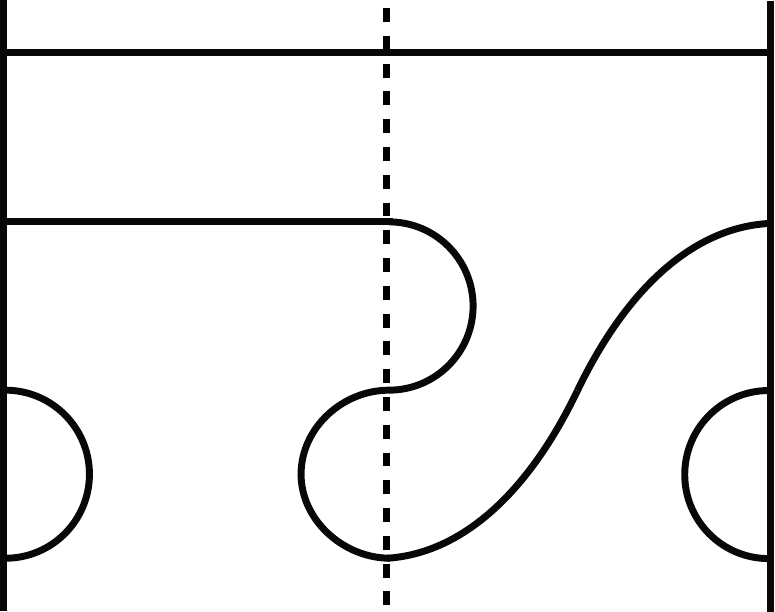}}} \quad = \quad 
\vcenter{\hbox{\includegraphics[scale=0.275]{e-TLalgebra1.pdf} .}}
\\[1em] 
\label{TLmult2}
& \vcenter{\hbox{\includegraphics[scale=0.275]{e-TLalgebra2.pdf}}} \quad 
\vcenter{\hbox{\includegraphics[scale=0.275]{e-TLalgebra1.pdf}}} \quad := \quad 
\vcenter{\hbox{\includegraphics[scale=0.275]{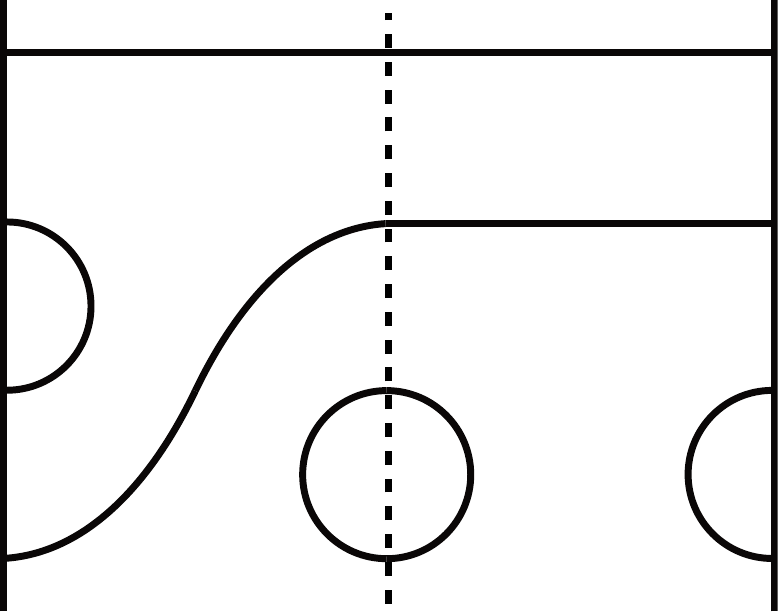}  .}}
\end{align}
The concatenation forms a number $k \in \bZnn$ of internal loops. We remove the loops and multiply the resulting tangle by $\nu^k$,
where $\nu$ is a given complex number, called the \emph{loop fugacity}. 
Thus, for instance,~\eqref{TLmult2} becomes
\begin{align}
\label{TLmult4}
& \vcenter{\hbox{\includegraphics[scale=0.275]{e-TLalgebra2.pdf}}} \quad 
\vcenter{\hbox{\includegraphics[scale=0.275]{e-TLalgebra1.pdf}}} \quad := \quad 
\vcenter{\hbox{\includegraphics[scale=0.275]{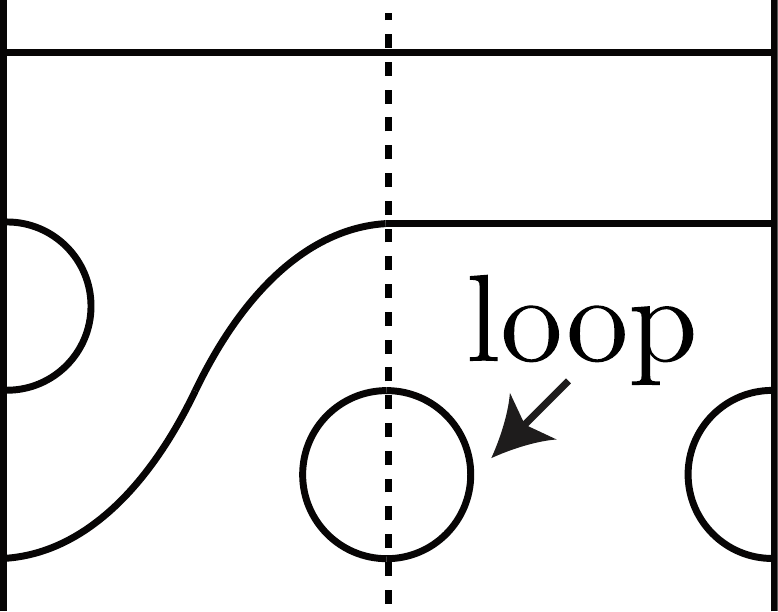}}} \quad = \quad \nu \,\, \times \,\, 
\vcenter{\hbox{\includegraphics[scale=0.275]{e-TLalgebra2.pdf}  .}}
\end{align}
This concatenation recipe endows the vector space $\TL_n$ with the structure of an associative, unital algebra, 
the \emph{Temperley-Lieb algebra}, denoted by $\TL_n(\nu)$, for fixed $\nu \in \bC$.
Its dimension is obtained by counting all $n$-link diagrams (see, e.g.,~\cite[section~\red{2}]{rsa}),
and the result is the $n$:th Catalan number
\begin{align} \label{Dim34}
\dim \TL_n(\nu) = C_n = \frac{1}{n+1} \binom{2n}{n} .
\end{align}

It is well-known~\cite{vj, lk, rsa} that the diagram algebra $\TL_n(\nu)$ 
is isomorphic to the abstract associative algebra with generating set $\{ \mathbf{1}, \Gen_1, \Gen_2, \ldots, \Gen_{n-1} \}$,
where $\mathbf{1}$ is the unit and the other generators satisfy 
the relations
\begin{alignat}{2}
\label{WordRelations1} 
\Gen_i \Gen_{i \pm 1} \Gen_i &= \Gen_i, \qquad  &&\text{if $1 \leq i\pm1 \leq n-1$}, \\ 
\label{WordRelations2} 
\Gen_i^2 &= \nu \Gen_i, \qquad && \\
\label{WordRelations3} 
\Gen_i \Gen_j &= \Gen_j \Gen_i, \qquad  &&\text{if $|i-j| > 1$},
\end{alignat} 
for all $i,j \in \{ 1, 2, \ldots, n - 1 \}$, and no other relations.  
Following the pioneering work~\cite{vj} of V.~Jones, 
an elementary proof of this fact appears in~\cite[theorem~\red{2.4}]{rsa}.  
Diagrammatically, the generating set reads
\begin{align}
\label{Units} 
\hphantom{\vcenter{\hbox{\includegraphics[scale=0.275]{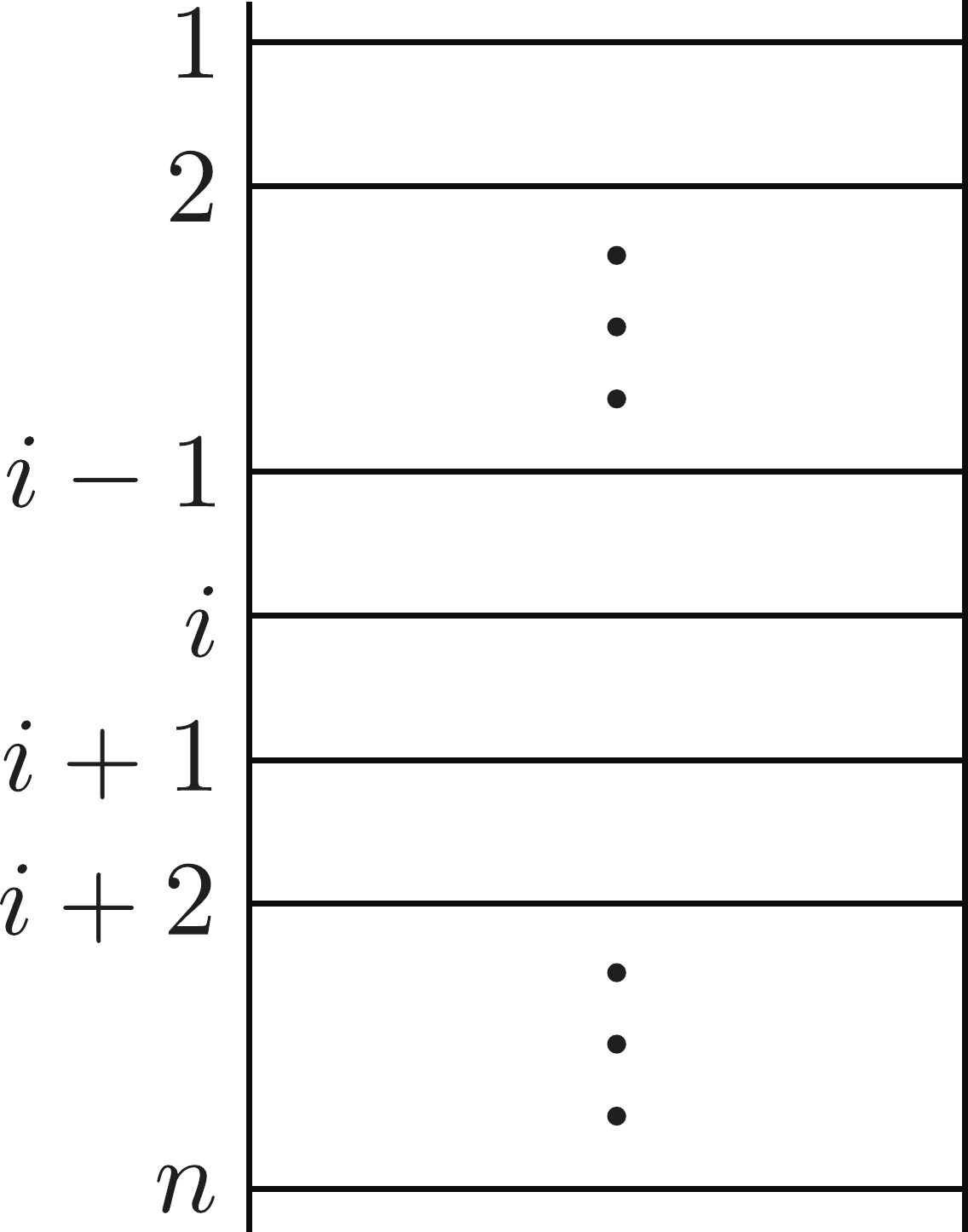}}}}
\mathbf{1}_{\TL_n} \quad & =
&& \vcenter{\hbox{\includegraphics[scale=0.275]{e-TLalgebra5.pdf}}} \quad \in \TL_n(\nu) \\[1em]
\label{ExtMe} 
\hphantom{\vcenter{\hbox{\includegraphics[scale=0.275]{e-TLalgebra5.pdf}}}}
\Gen_i^{\TL} = \Gen_i \quad & =
&& \vcenter{\hbox{\includegraphics[scale=0.275]{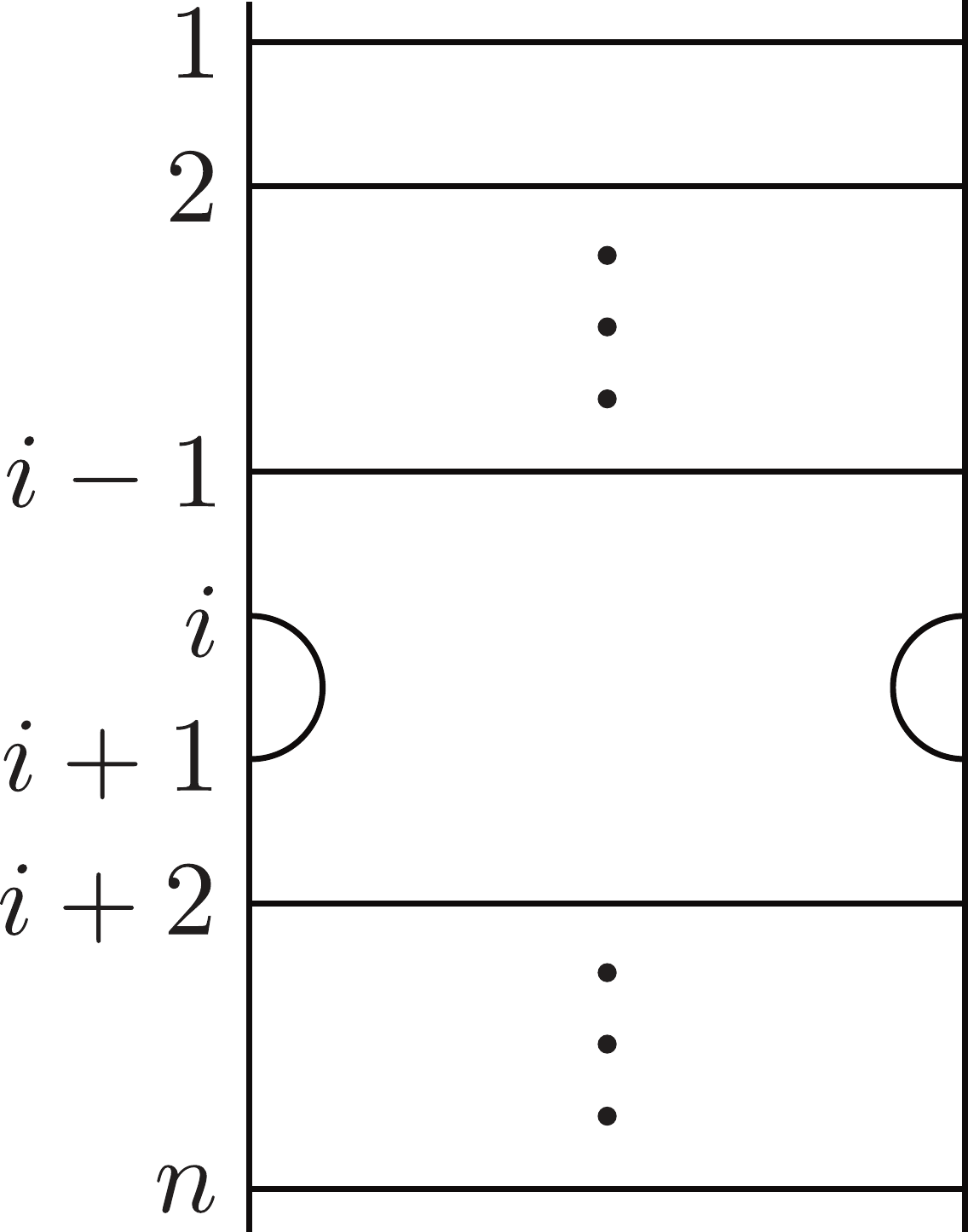}}} \quad \in \TL_n(\nu) ,
\quad \text{with $i \in \{1,2, \ldots, n-1\}$},
\end{align} 
and the two first relations~(\ref{WordRelations1}--\ref{WordRelations2}) (the third~\eqref{WordRelations3} being very intuitive, whence we omit its diagrams) read
\begin{align}
\label{TLrel1} 
\vcenter{\hbox{\includegraphics[scale=0.275]{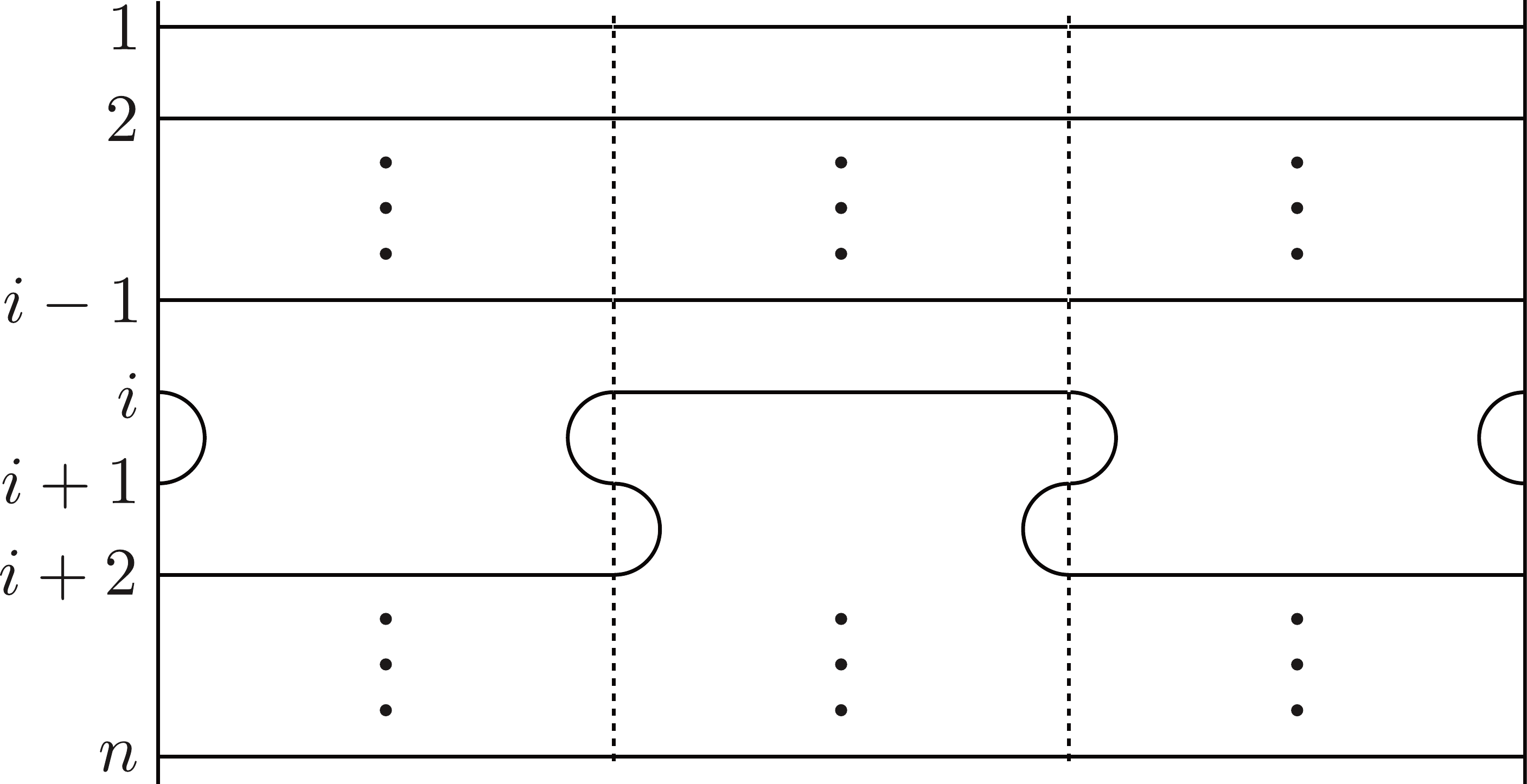}}}
\quad & \overset{\eqref{WordRelations1}}{=} \quad \hphantom{\nu \,\, \times \,\,} \; \vcenter{\hbox{\includegraphics[scale=0.275]{e-TLalgebra6.pdf} ,}} \\[1em] 
\label{TLrel2}
\vcenter{\hbox{\includegraphics[scale=0.275]{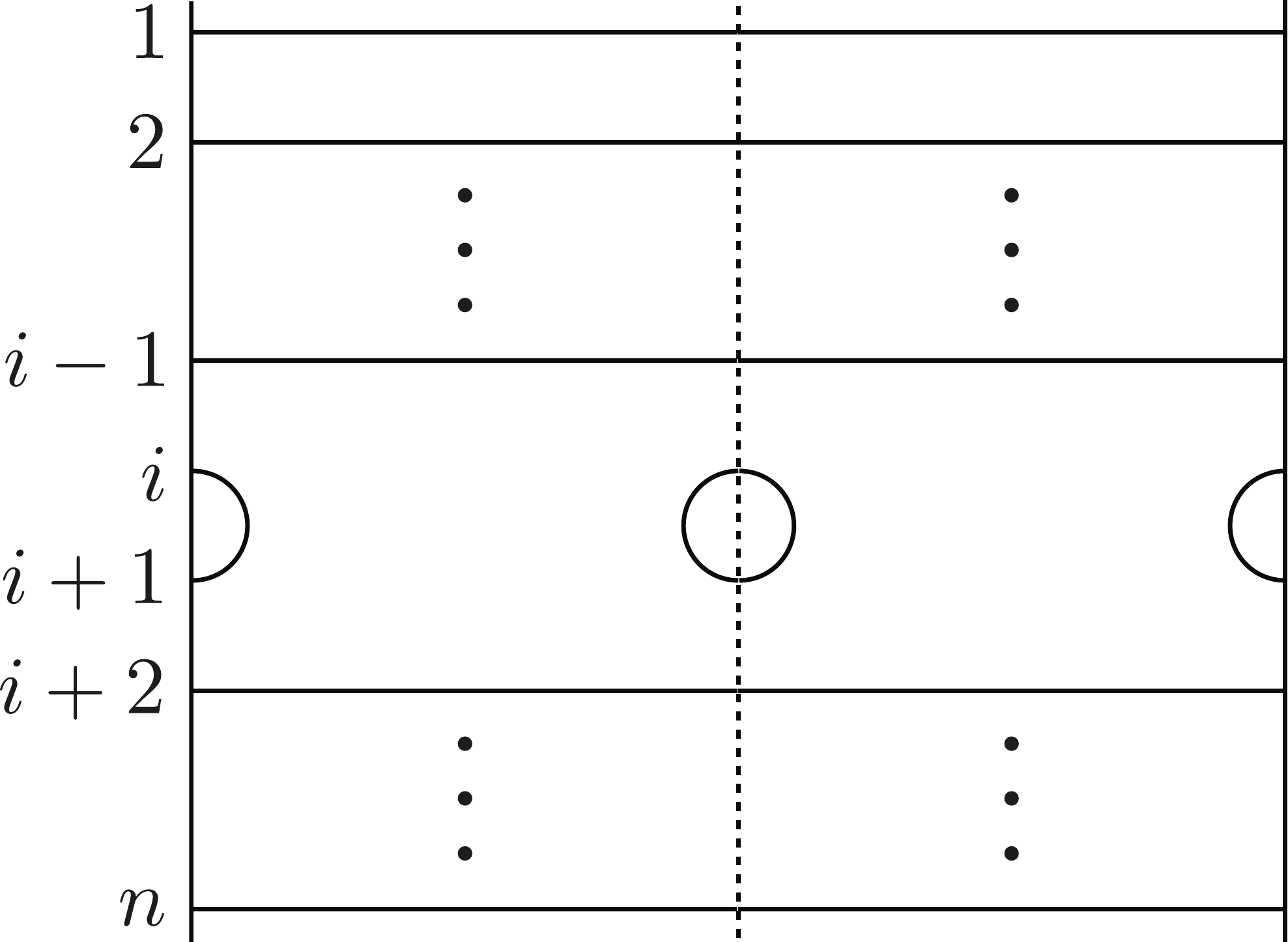}}}
\quad & \overset{\eqref{WordRelations2}}{=} \quad \nu \,\, \times \,\, \vcenter{\hbox{\includegraphics[scale=0.275]{e-TLalgebra6.pdf} .}} 
\end{align}

Throughout, we fix a nonzero complex number $q \in \bC^\times = \bC \setminus \{0\}$ and parameterize the loop fugacity 
$\nu \in \bC$ as
\begin{align}\label{fugacity} 
\nu = -q - q^{-1} = - [2] ,
\end{align} 
where $[2]$ is an example of a \emph{quantum integer}, defined for any $k \in \bZ$ as
\begin{align}\label{Qinteger} 
[k] = [k]_q := \frac{q^{k} - q^{-k}}{q - q^{-1}}. 
\end{align}
The representation theory of the Temperley-Lieb algebra is well-studied, e.g., in~\cite{pm, gwe, gl2, rsa, fp0}.
For instance, the algebra $\TL_n(\nu)$ is known to be semisimple if and only if either 
$n < \ppmin(q)$, or $n$ is odd and $q \in \{\pm \ii\}$, where
\begin{align} \label{MinPower} 
\pmin(q) : = 
 \begin{cases} \infty, & \text{$q$ is not a root of unity}, \\
p, & \text{$q=e^{\pi ip'/p}$ for coprime $p,p' \in \bZpos$}, 
 \end{cases} \qquad \qquad
\ppmin(q) := 
\begin{cases} \infty, & q \in \{\pm1\}, \\ \pmin(q), & q \not\in \{\pm1\} . 
\end{cases} 
\end{align}
We summarize further facts about the representation theory of $\TL_n(\nu)$ in section~\ref{RepThSec}.

\subsection{Jones-Wenzl projectors}


The purpose of this article is to consider a subalgebra of the Temperley-Lieb algebra that arises naturally 
in applications to conformal field theory and statistical 
physics~\cite{pms2, pms, bdmn, dgp, dge, nic, dgn, prt, mrr, fp0, fp3, fp1}.
We call this algebra the ``Jones-Wenzl algebra,''
and we define it in the following section~\ref{JWdefsec}.
In order to define this algebra, we first define the ``Jones-Wenzl projectors.''

For each $s \in \{0, 1, \ldots, \ppmin(q) - 1 \}$, the \emph{Jones-Wenzl projector of size} $s$
is the unique nonzero tangle $\WJProj\sub{s} \in \TL_s(\nu)$ satisfying the two properties~\cite{vj, hw, kl}
\begin{enumerate}[leftmargin=*, label = P\arabic*., ref = P\arabic*]
\itemcolor{red}
\item \label{wj1item} 
$\smash{\WJproj\sub{s}^2} = \smash{\WJproj\sub{s}}$, and

\item \label{wj2item}
$\smash{\Gen_i} \smash{\WJproj\sub{s}} = \smash{\WJproj\sub{s}} \smash{\Gen_i} = 0$, 
for all $i \in \{1, 2, \ldots, s - 1\}$.
\end{enumerate}
For example, we have
\begin{align}\label{WJ2} 
\WJProj\sub{0} = \text{the empty tangle}, \qquad \qquad
\WJProj\sub{1} = \mathbf{1}_{\TL_1}, \qquad \qquad \text{and} \qquad \qquad
\WJProj\sub{2} = \mathbf{1}_{\TL_2} - \nu^{-1} \Gen_1 .
\end{align} 
We represent the Jones-Wenzl projector $\WJProj\sub{s}$ diagrammatically as an empty \emph{projector box}:
\begin{align}\label{ProjBoxDiag}
s
\begin{cases}
\quad \vcenter{\hbox{\includegraphics[scale=0.275]{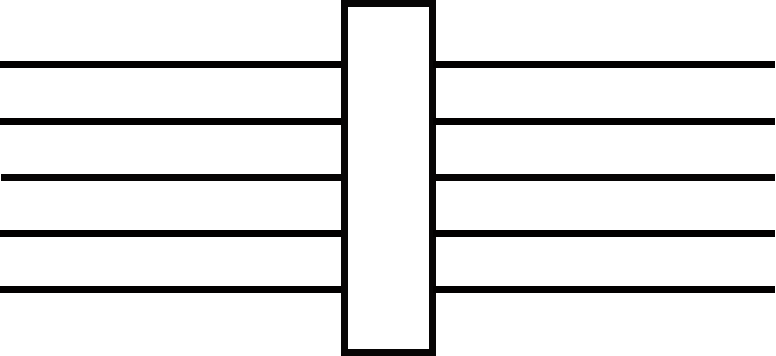}}}
\end{cases}
\; = \quad 
\vcenter{\hbox{\includegraphics[scale=0.275]{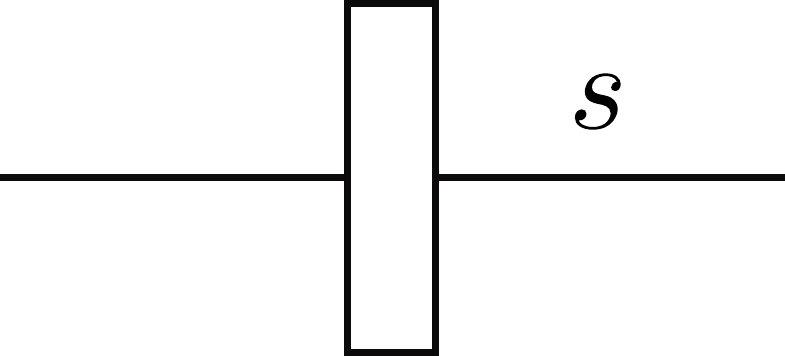}  ,}}
\end{align} 
where the label ``$s$" indicates a \emph{cable of size $s$}, that is, a collection of $s$ parallel links within a tangle:
\begin{align} \label{cable}
s
\begin{cases}
\quad \vcenter{\hbox{\includegraphics[scale=0.275]{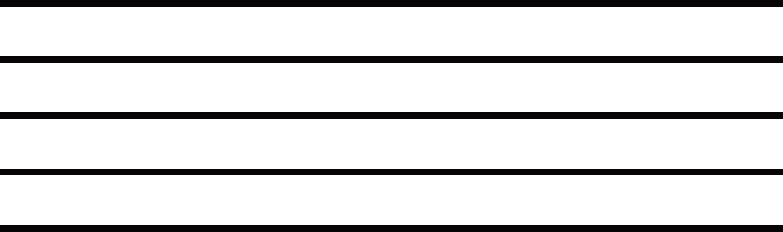}}}
\end{cases}
\; = \quad \raisebox{1pt}{\includegraphics[scale=0.275]{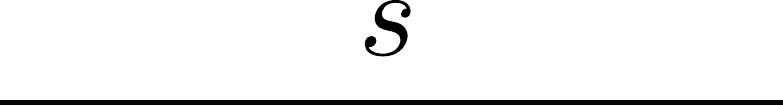}  .} 
\end{align} 
For example, the first, 
second, 
and third 
Jones-Wenzl projectors are respectively
\begin{alignat}{2} 
\label{ExampleProj1} 
\WJProj\sub{1}  \quad = \quad 
\vcenter{\hbox{\includegraphics[scale=0.275]{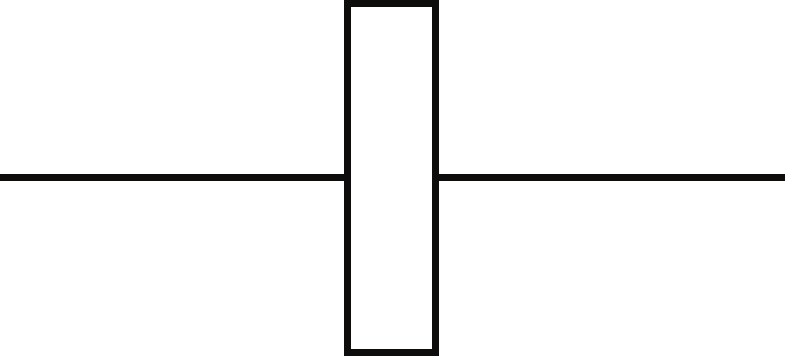}}} \quad & = \quad 
\vcenter{\hbox{\includegraphics[scale=0.275]{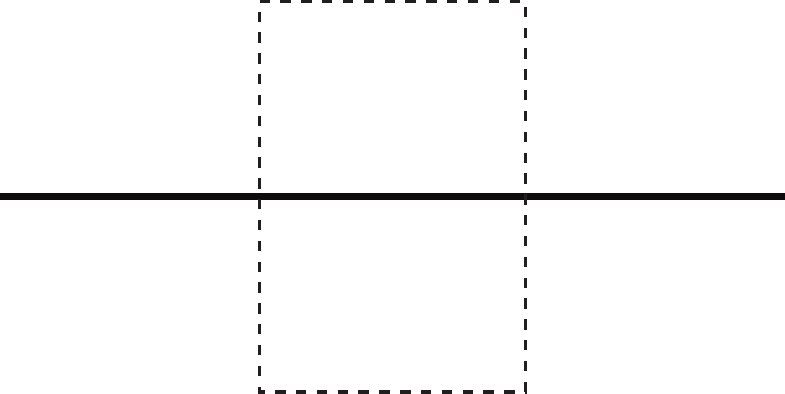} ,}} && \\[1em]
\label{ExampleProj2} 
\WJProj\sub{2}  \quad = \quad 
\vcenter{\hbox{\includegraphics[scale=0.275]{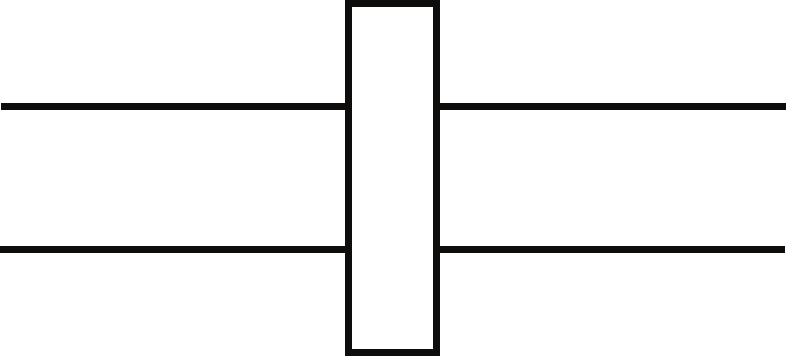}}} \quad & = \quad 
\vcenter{\hbox{\includegraphics[scale=0.275]{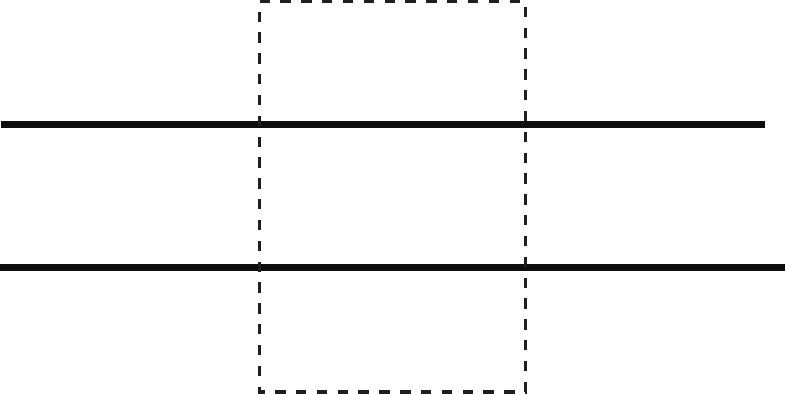}}} && \quad + \quad \frac{1}{[2]} \,\, \times \,\,
\vcenter{\hbox{\includegraphics[scale=0.275]{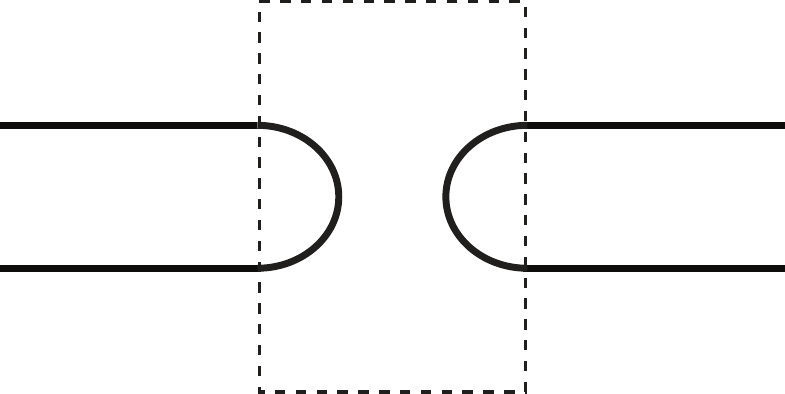} ,}} \\[1em] 
\nonumber
\WJProj\sub{3}  \quad = \quad 
\vcenter{\hbox{\includegraphics[scale=0.275]{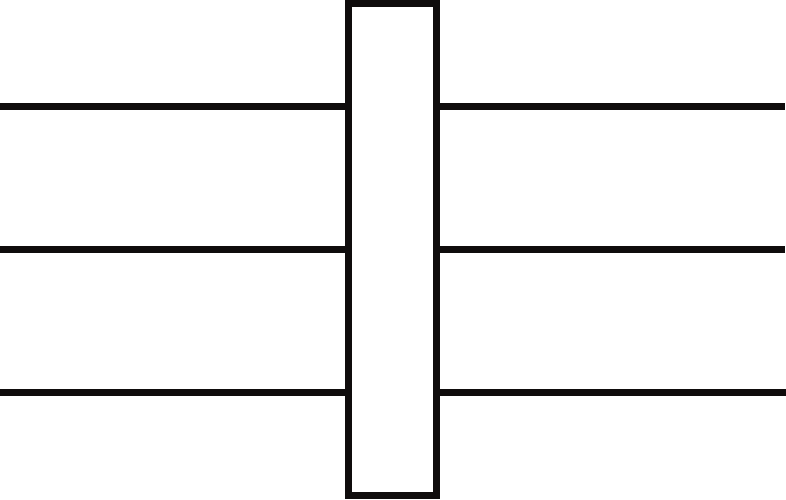}}} \quad & = \quad 
\vcenter{\hbox{\includegraphics[scale=0.275]{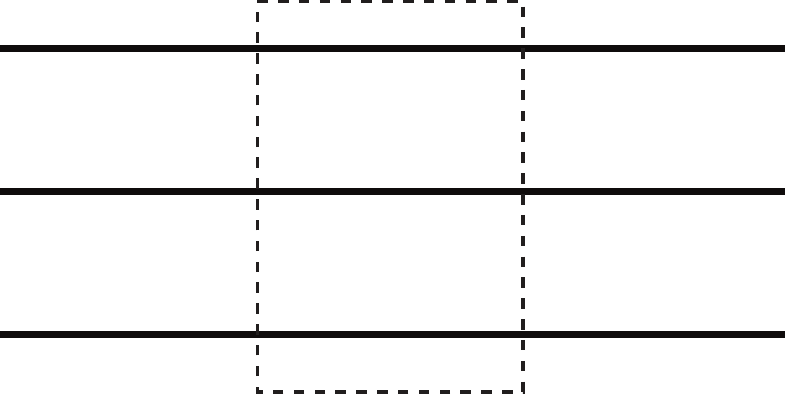}}} && \quad 
+ \quad \frac{[2]}{[3]}  \,\, \times \,\, \left( \; \vcenter{\hbox{\includegraphics[scale=0.275]{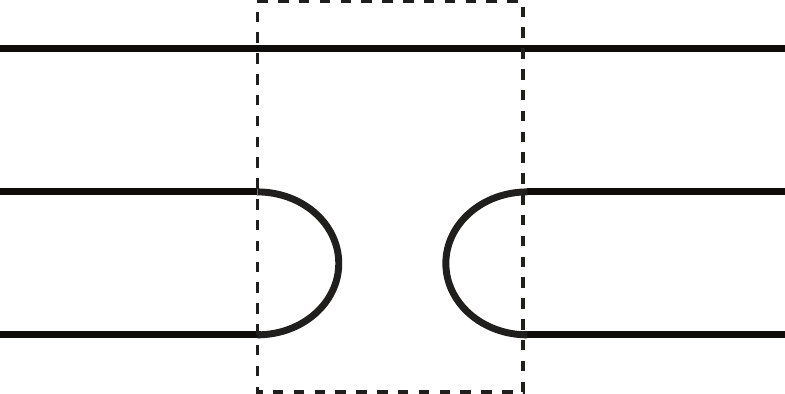}}} 
\quad + \quad \vcenter{\hbox{\includegraphics[scale=0.275]{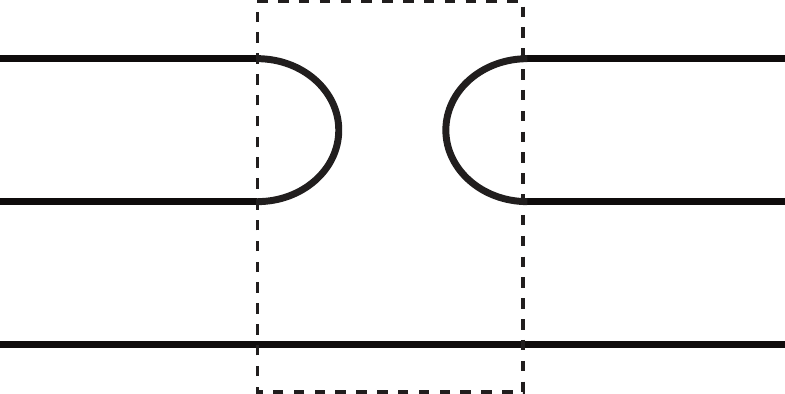}}} \; \right) \\
& && \quad + \quad \frac{1}{[3]}  \,\, \times \,\,
\left( \; \vcenter{\hbox{\includegraphics[scale=0.275]{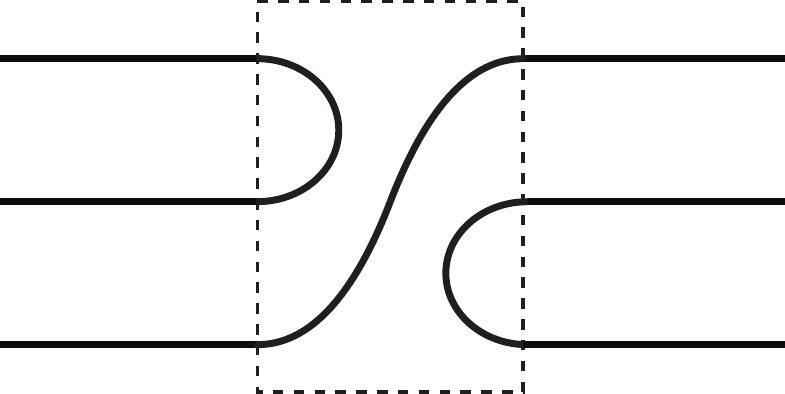}}} \quad + \quad 
\vcenter{\hbox{\includegraphics[scale=0.275]{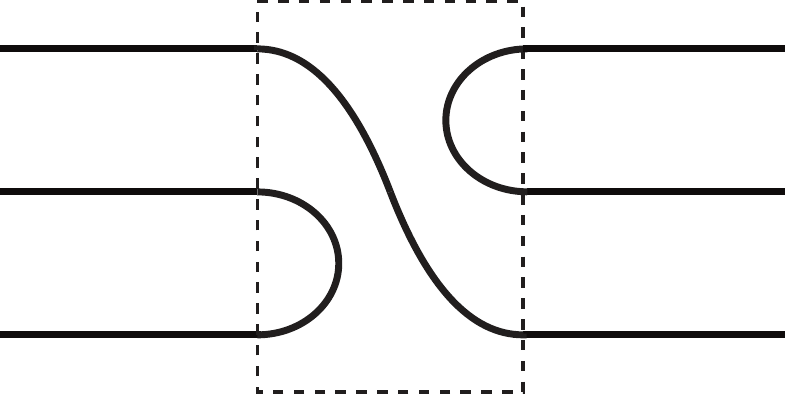}}} \; \right) .
\label{ExampleProj3} 
\end{alignat}

If $s \leq n$, then we may embed a projector box of size $s$ into a tangle in $\TL_n(\nu)$.  For example, the diagram
\begin{align}\label{CanonEmbed}
\vcenter{\hbox{\includegraphics[scale=0.275]{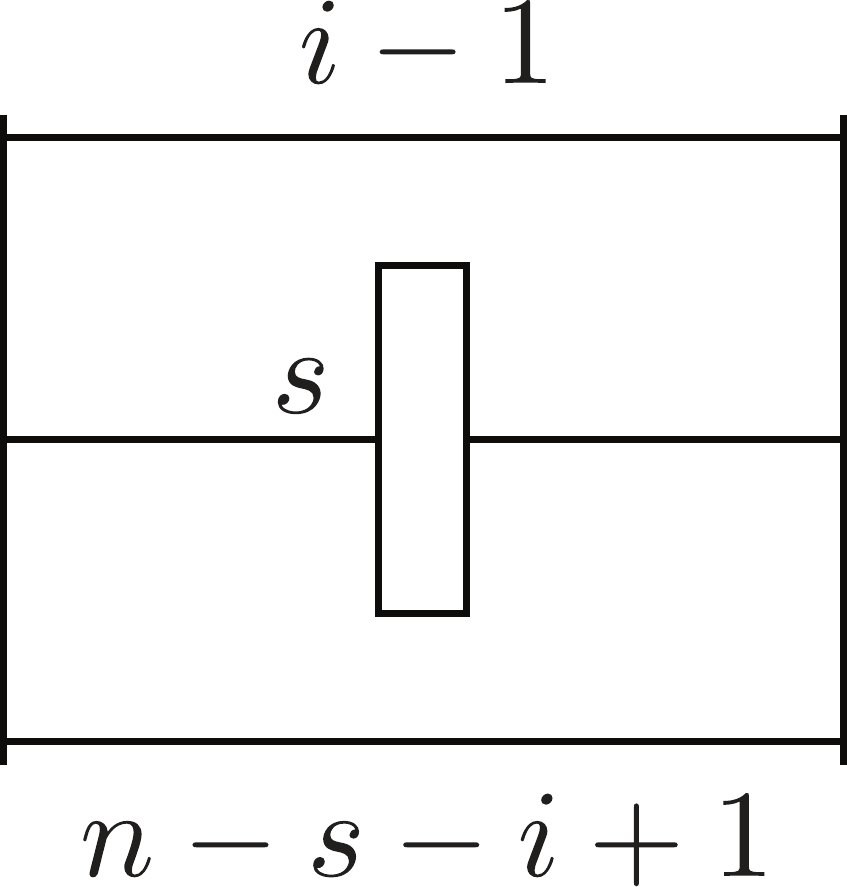}}} 
\end{align} 
represents the tangle in $\TL_n(\nu)$ obtained by replacing the box of size $s$ with the tangle $\WJProj\sub{s}$ within the larger diagram.
Abusing notation, we let the symbol $\WJProj\sub{s} \in \TL_n(\nu)$ also denote the tangle~\eqref{CanonEmbed} 
in $\TL_n(\nu)$ with $i = 1$.  Then the various projectors $\WJProj\sub{1} , \WJProj\sub{2}, \ldots,\WJProj\sub{n} \in \TL_n(\nu)$
satisfy the recursion relations~\cite{vj, hw, kl}
\begin{align}\label{wjrecursion} 
\smash{\WJproj\sub{1}} = \mathbf{1}_{\TL_n} 
\qquad \qquad \text{and} \qquad \qquad
\smash{\WJproj\sub{s+1}} = \smash{\WJproj\sub{s}} 
+ \frac{[s]}{[s+1]} \smash{\WJproj\sub{s}} \smash{\Gen_s} \smash{\WJproj\sub{s}} ,
\end{align}
for all $s \in \{1, 2, \ldots, n - 1\}$.
In terms of diagrams, recursion relation~\eqref{wjrecursion} reads
\begin{align}\label{RecursionDiagram}
\vcenter{\hbox{\includegraphics[scale=0.275]{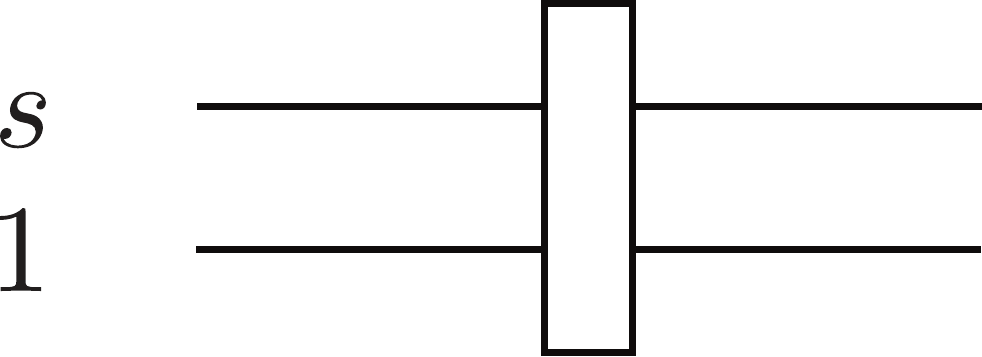}}} \quad = \quad 
\vcenter{\hbox{\includegraphics[scale=0.275]{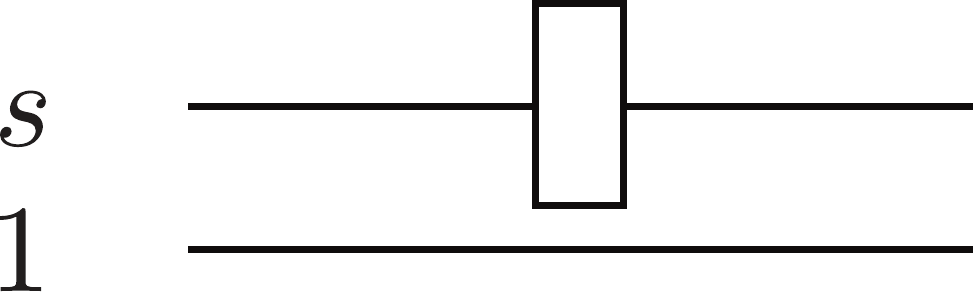}}} \quad + \quad 
\frac{[s]}{[s+1]} \,\, \times \,\, \vcenter{\hbox{\includegraphics[scale=0.275]{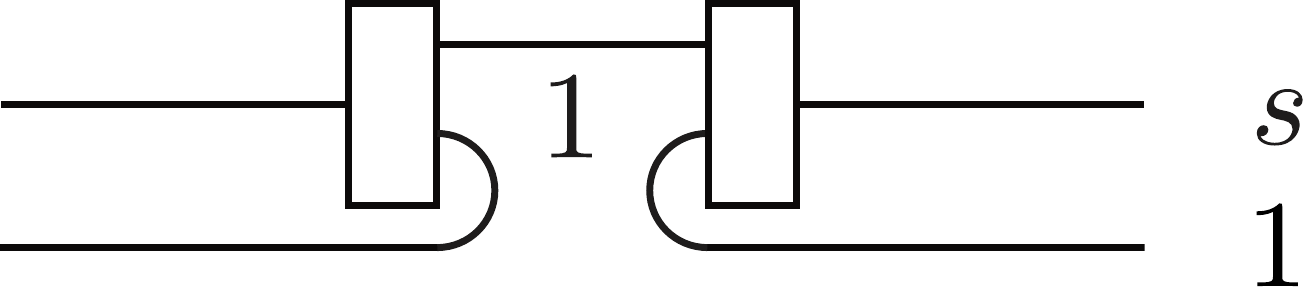} .}} 
\end{align}
With the graphical representation, properties~\ref{wj1item} and~\ref{wj2item} respectively translate to the diagram identities
\begin{align} 
\label{ProjectorID0} 
\tag{\ref{wj1item}} 
\vcenter{\hbox{\includegraphics[scale=0.275]{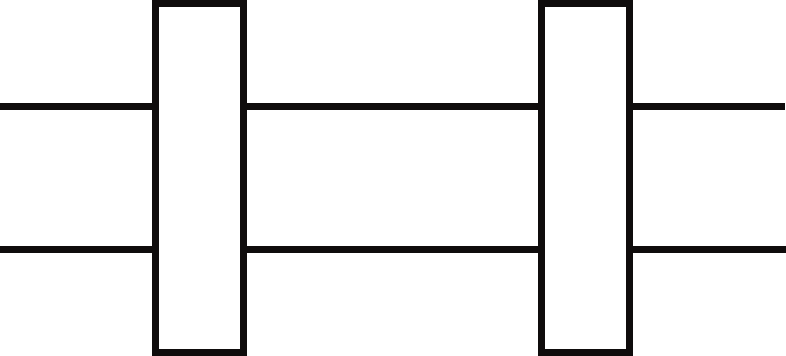}}} 
\quad & = \quad \vcenter{\hbox{\includegraphics[scale=0.275]{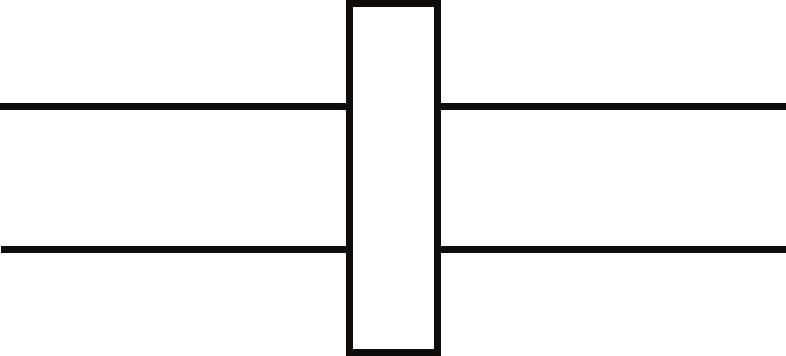}}} \\[1em]
\label{ProjectorID2} 
\tag{\ref{wj2item}} 
\vcenter{\hbox{\includegraphics[scale=0.275]{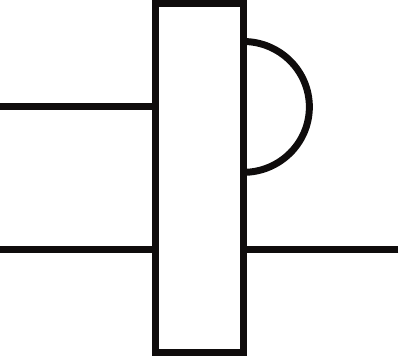}}} 
\quad = \quad \vcenter{\hbox{\includegraphics[scale=0.275]{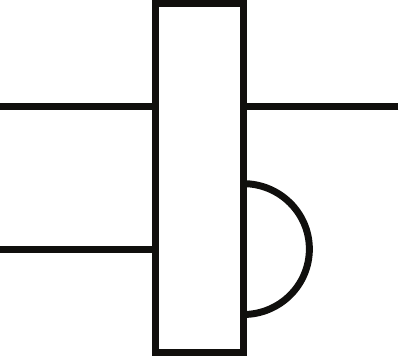}}} 
\quad & = \quad \vcenter{\hbox{\includegraphics[scale=0.275]{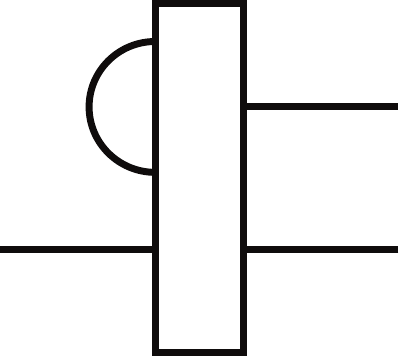}}} 
\quad = \quad \vcenter{\hbox{\includegraphics[scale=0.275]{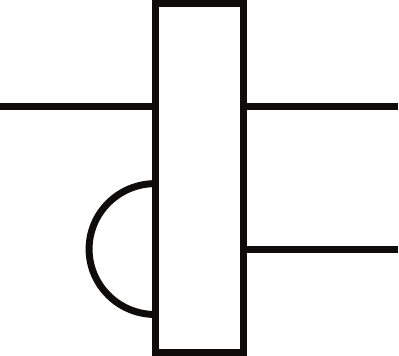}}}
\quad = \quad 0.
\end{align}
In fact, property~\eqref{ProjectorID0} can be strengthened to say that $\WJProj\sub{s}\WJProj\sub{t} = \WJProj\sub{t}\WJProj\sub{s} = \WJProj\sub{s}$ whenever $t \leq s$~\cite{kl}:
\begin{align} \label{ProjectorID1}
\tag{\ref{wj1item}$\red{'}$} 
\vcenter{\hbox{\includegraphics[scale=0.275]{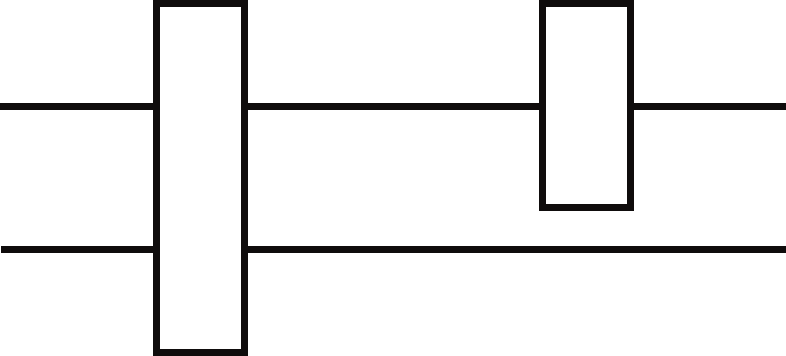}}} 
\quad = \quad \vcenter{\hbox{\includegraphics[scale=0.275]{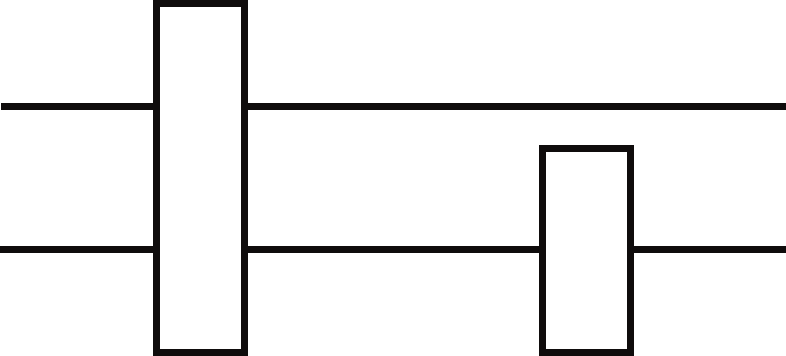}}} 
\quad = \quad \vcenter{\hbox{\includegraphics[scale=0.275]{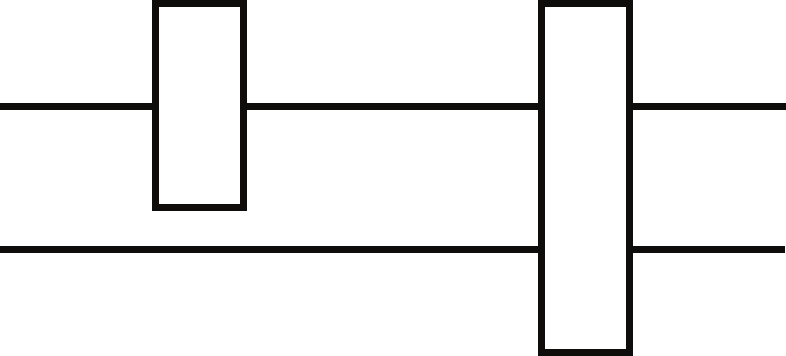}}} 
\quad = \quad \vcenter{\hbox{\includegraphics[scale=0.275]{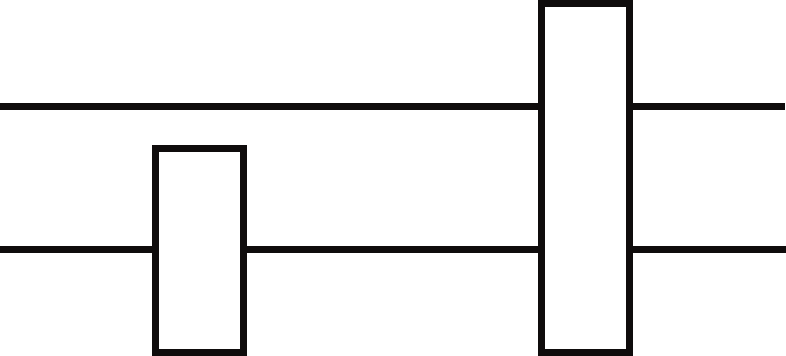}}} 
\quad = \quad \vcenter{\hbox{\includegraphics[scale=0.275]{e-ProjectorBox14.pdf} .}}
\end{align}
Finally, as a tangle in $\TL_s$, the Jones-Wenzl projector $\WJProj\sub{s}$ equals a linear combination of the link diagrams in $\LD_s$.  
Property~\ref{wj1item} implies that the coefficient of the unit $\mathbf{1}_{\TL_s}$ in this linear combination equals one.  
Hence, we have
\begin{align}\label{ProjDecomp} 
\WJproj\sub{s} = \mathbf{1}_{\TL_s} + \sum_{\substack{T \, \in \, \LD_s, \\  T \, \neq \, \mathbf{1}_{\TL_s}}} (\text{coef}_T) \,T,
\end{align}
for some coefficients $\text{coef}_T \in \bC$ (whose values depend on $q \in \bC^\times$). 
In fact, S.~Morrison derived an explicit formula for these coefficients in~\cite{sm}. 
In appendix~\ref{WJProjApp}, we give a new, alternative derivation of his formula.


\subsection{Jones-Wenzl algebra} \label{JWdefsec}


Now we define the Jones-Wenzl algebra, denoted by $\WJ_\multii(\nu)$.
Throughout, we fix a multiindex $\multii$ of nonnegative entries and use the following notation:
\begin{align} 
\label{PositiveIndexSetDef}
\bZpos^\# := \; & \bZpos \cup \bZpos^2 \cup \bZpos^3 \cup \dotsm , \\
\label{MultiindexNotation}
\multii := \; & (\sIndex_1, \sIndex_2,\ldots, \sIndex_{\np_\multii}) \in \{ (0) \} \cup \bZpos^\# , \\
\label{ndefn} 
\Summed_\multii := \; & \sIndex_1 + \sIndex_2 + \cdots + \sIndex_{\np_\multii} .
\end{align}
Using parameterization~(\ref{fugacity}--\ref{MinPower}), we assume throughout that $\max \multii < \ppmin(q)$.
Next, we define the \emph{Jones-Wenzl composite projector} to be the tangle
\begin{align}\label{WJCompProj} 
\WJProj_\multii \quad := \quad \vcenter{\hbox{\includegraphics[scale=0.275]{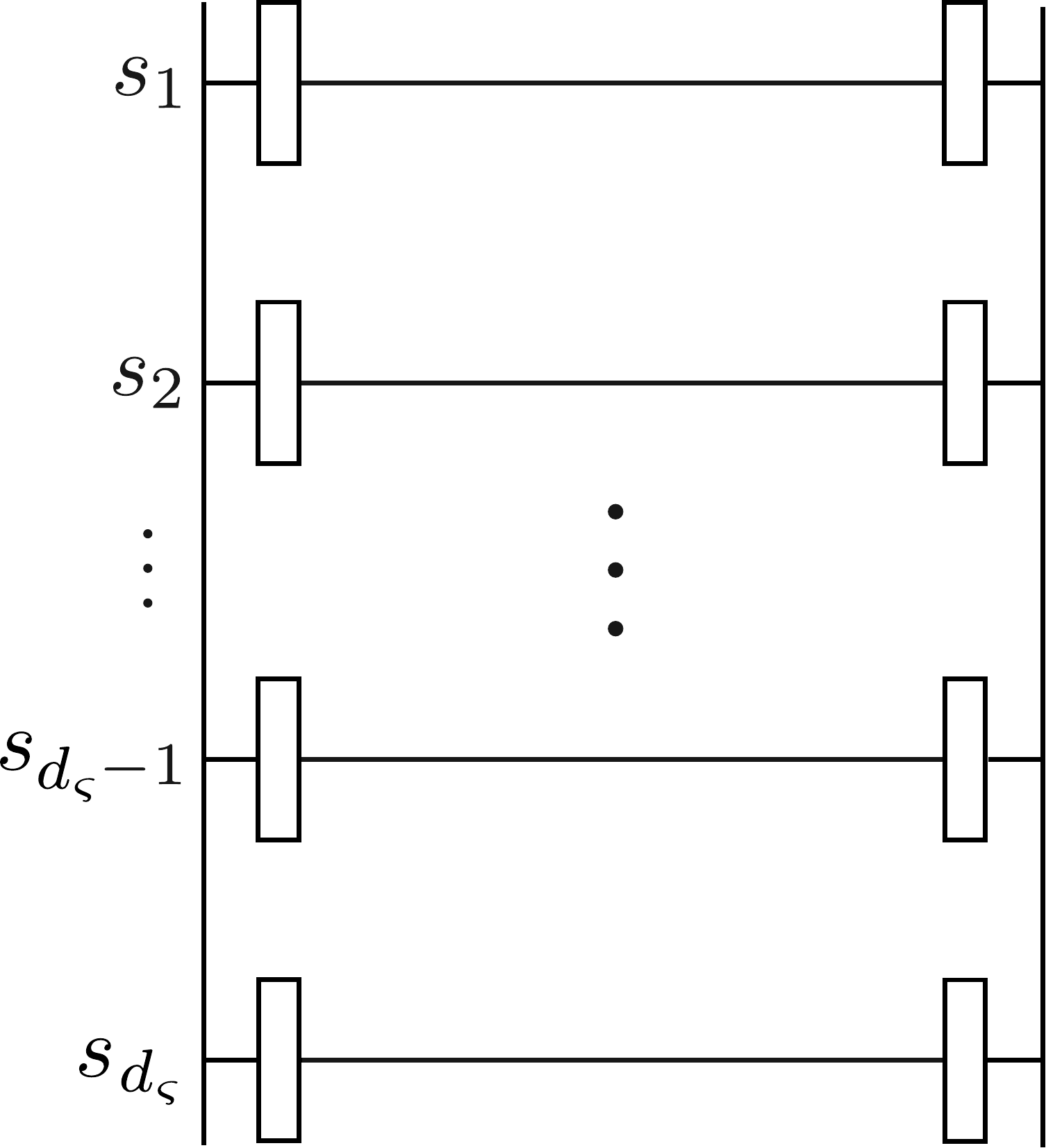} .}} 
\hphantom{\WJProj_\multii \quad := \quad}
\end{align} 
The restriction on the sizes of the indices in $\multii$ is needed in order for
all of the projector boxes in~\eqref{WJCompProj} to exist.
Then, the \emph{Jones-Wenzl algebra} $\WJ_\multii(\nu)$ is defined to be
\begin{align} \label{WJAdef}
\WJ_\multii(\nu) := \WJProj_\multii \TL_{\Summed_\multii}(\nu) \WJProj_\multii 
= \big\{ \WJProj_\multii T \WJProj_\multii \,|\, T \in \TL_{\Summed_\multii}(\nu) \big\} .
\end{align}
In other words, $\WJ_\multii(\nu)$ is the collection of all tangles in $\TL_{\Summed_\multii}(\nu)$ of the form
\begin{align} \label{JWform2-1} 
\vcenter{\hbox{\includegraphics[scale=0.275]{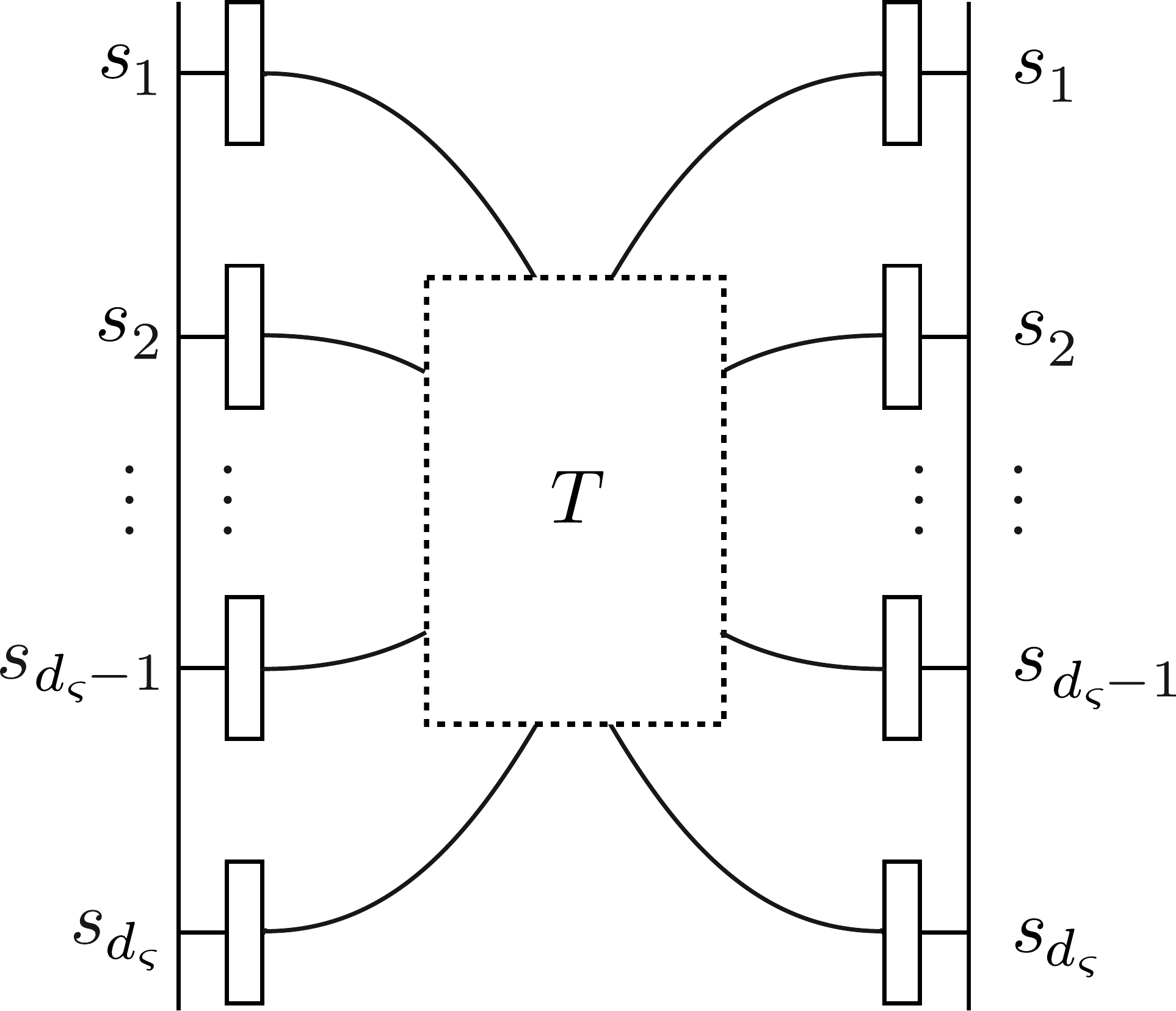} ,}}
\end{align} 
where $T \in \TL_{\Summed_\multii}(\nu)$. 
We note that 
by property~\eqref{ProjectorID2}, those tangles~\eqref{JWform2-1} that have a link with both endpoints at the same projector box are zero.
We call an element of $\WJ_\multii(\nu)$ a \emph{$\multii$-Jones-Wenzl tangle}. 
If $T$ is an $\Summed_\multii$-link diagram such that~\eqref{JWform2-1} does not vanish, then we call~\eqref{JWform2-1} 
a \emph{$\multii$-Jones-Wenzl link diagram.}
For example,
\begin{align}
\vcenter{\hbox{\includegraphics[scale=0.275]{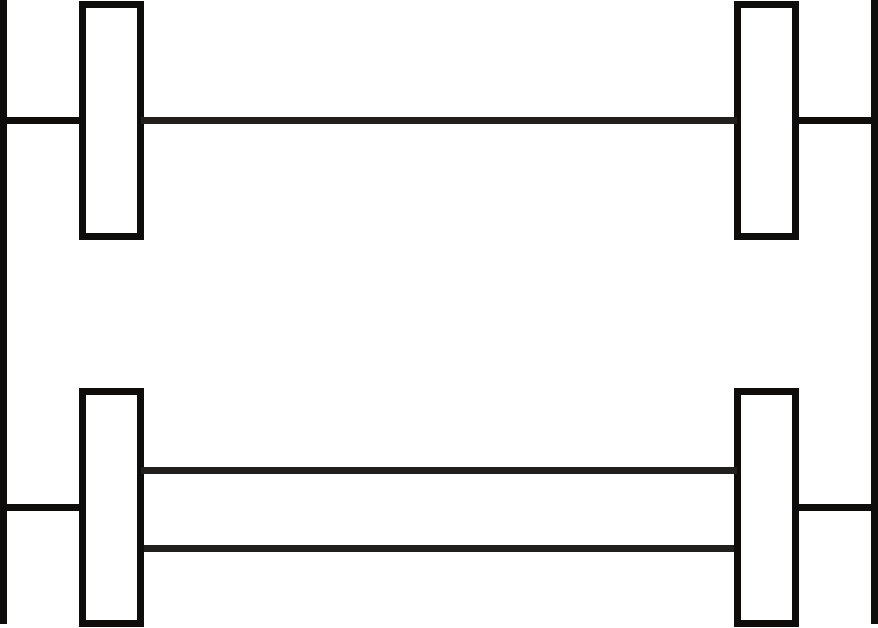}}} \qquad \qquad \text{and} \qquad \qquad
\vcenter{\hbox{\includegraphics[scale=0.275]{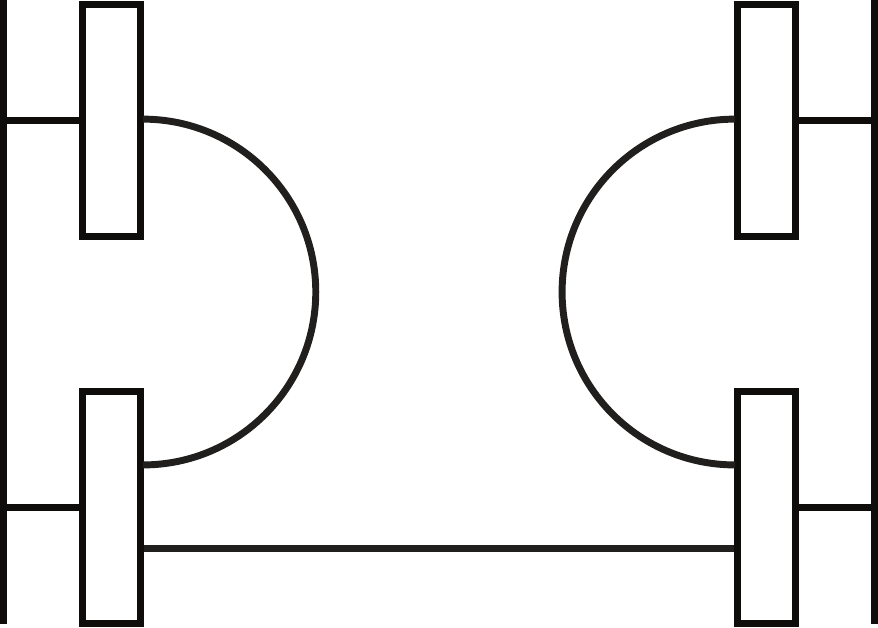}}} 
\end{align} 
are $(1,2)$-Jones-Wenzl link diagrams, and the following $(1,2)$-Jones-Wenzl tangle 
is not a Jones-Wenzl link diagram because it vanishes by property~\eqref{ProjectorID2}:
\begin{align}
\vcenter{\hbox{\includegraphics[scale=0.275]{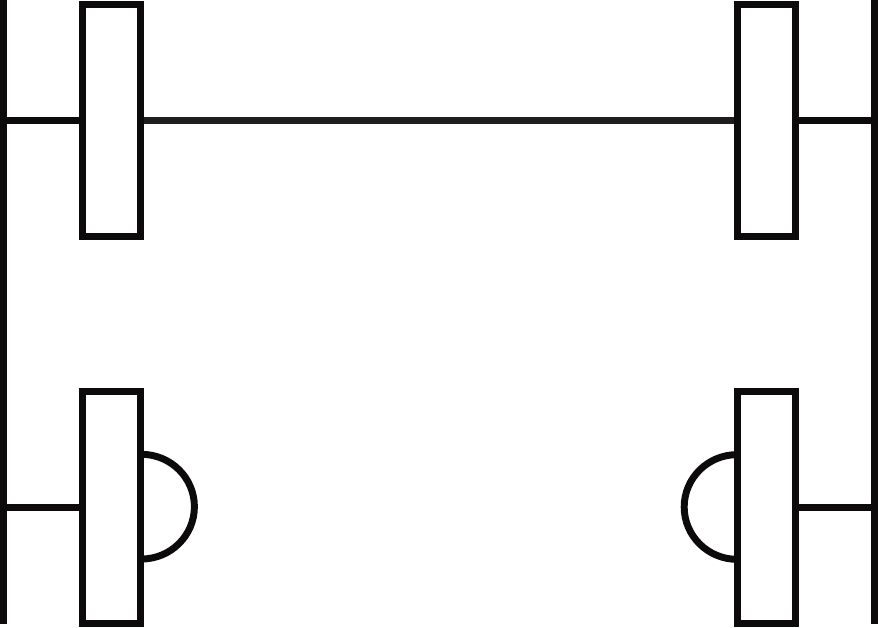}}} \quad \overset{\eqref{ProjectorID2}}{=} \quad 0 .
\end{align} 
We denote the set of all $\multii$-Jones-Wenzl link diagrams by $\PD_\multii$.
By definition, it forms a spanning set for the Jones-Wenzl algebra $\WJ_\multii(\nu)$,
and in fact, by~\cite[lemma~\red{B.1}]{fp0}, this spanning set is also a basis for $\WJ_\multii(\nu)$:
\begin{align}
\PD_\multii &:= \WJProj_\multii \LD_{\Summed_\multii} \WJProj_\multii \setminus \{0\} 
\qquad \qquad \overset{\eqref{WJAdef}}{\Longrightarrow} \qquad \qquad
\WJ_\multii(\nu) = \Span \PD_\multii .
\end{align}

The Jones-Wenzl algebra is a unital, associative algebra: indeed, 
it inherits the associative multiplication from the Temperley-Lieb algebra $\TL_{\Summed_\multii}(\nu)$, 
and property~\eqref{ProjectorID0} of the Jones-Wenzl projectors implies that $\WJProj_\multii$ is its unit: 
\begin{align} \label{WJunit}
\WJProj_\multii T = T \WJProj_\multii = T ,
\end{align}
for all tangles $T \in \TL_{\Summed_\multii}(\nu)$.  
Due to identity~\eqref{WJunit}, we sometimes write $\mathbf{1}_{\WJ_\multii} = \WJProj_\multii$.

\subsection{Main results: generators and relations for the Jones-Wenzl algebra}


The main purpose of this article is to find minimal generating sets and relations for the Jones-Wenzl algebra. 
In our forthcoming work~\cite{fp3}, we use these generating sets to obtain generalizations of the quantum Schur-Weyl duality
for the colored braid group and the Hopf algebra $U_q(\mathfrak{sl}_2)$.
Furthermore, in~\cite{fp1} we use those results to prove existence and uniqueness of certain monodromy-invariant 
correlation functions in conformal field theory.

To state our main theorem~\ref{GeneratorThm}, we first introduce needed notation.
For a multiindex $(r,t)$ with two entries, we set 
\begin{align} 
\DefectSet\sub{r,t} := \{ |r-t| , |r-t| + 2, \ldots, r+t\} .
\end{align}
We also define the following \emph{closed three-vertex} notation~\cite{kl, mv, cfs}:
for all $s \in \DefectSet\sub{r,t}$, we denote
\begin{align}\label{3vertex1}  
\vcenter{\hbox{\includegraphics[scale=0.275]{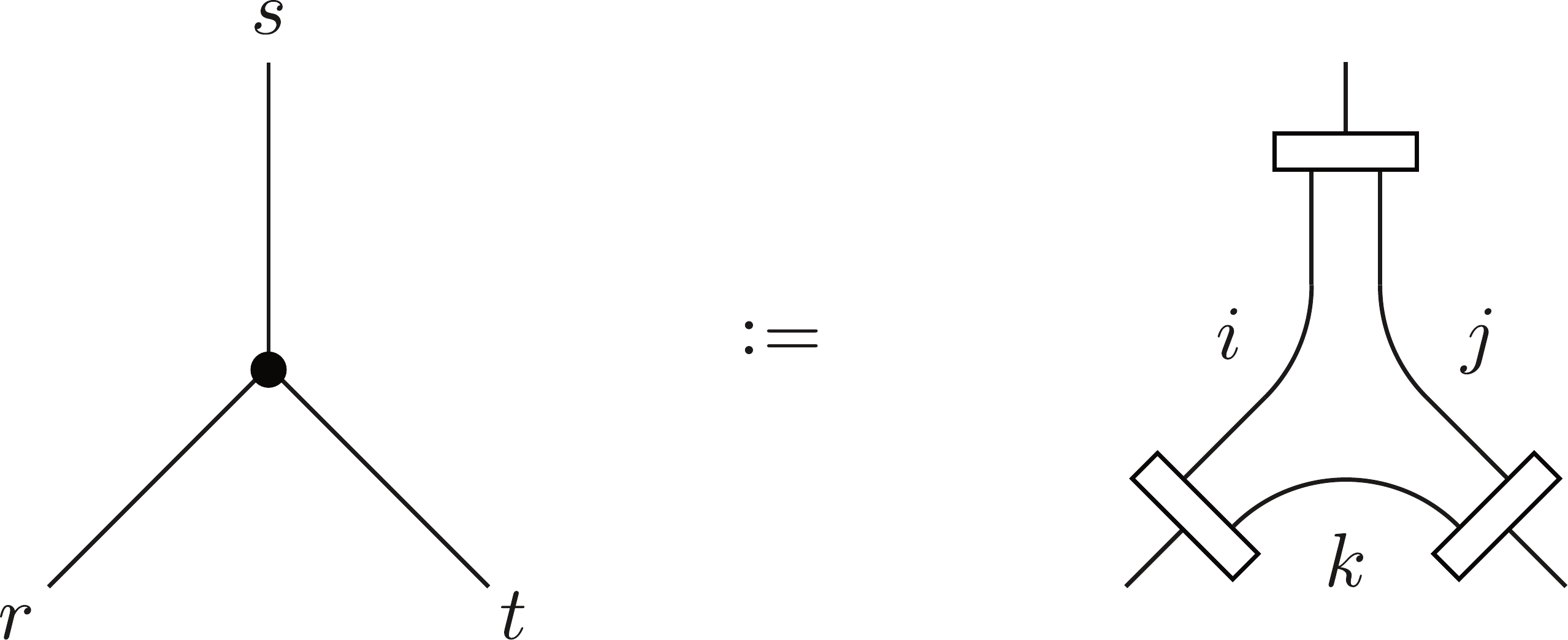} ,}} &
\qquad  \qquad \qquad
\begin{aligned} 
i & = \frac{r + s - t}{2} , \\[.7em] 
j & = \frac{s + t - r}{2} , \\[.7em] 
k & = \frac{t + r - s}{2} .
\end{aligned}
\end{align}

\begin{restatable}{theorem}{GeneratorThm} \label{GeneratorThm} 
Suppose $\Summed_\multii < \ppmin(q)$. Then the following hold:
\begin{enumerate} 
\itemcolor{red}

\item  \label{GeneratorThmItem1} 
The unit $\mathbf{1}_{\WJ_\multii} = \WJProj_\multii$~\eqref{WJCompProj} 
together with the $\multii$-Jones-Wenzl link diagrams 
\begin{align} \label{EGenerators-00} 
\ValGenWJ_i :=
\WJProj_\multii \Gen_{\sIndex_1 + \sIndex_2 + \dotsm + \sIndex_i}^{\TL} \WJProj_\multii 
\quad = \quad \vcenter{\hbox{\includegraphics[scale=0.275]{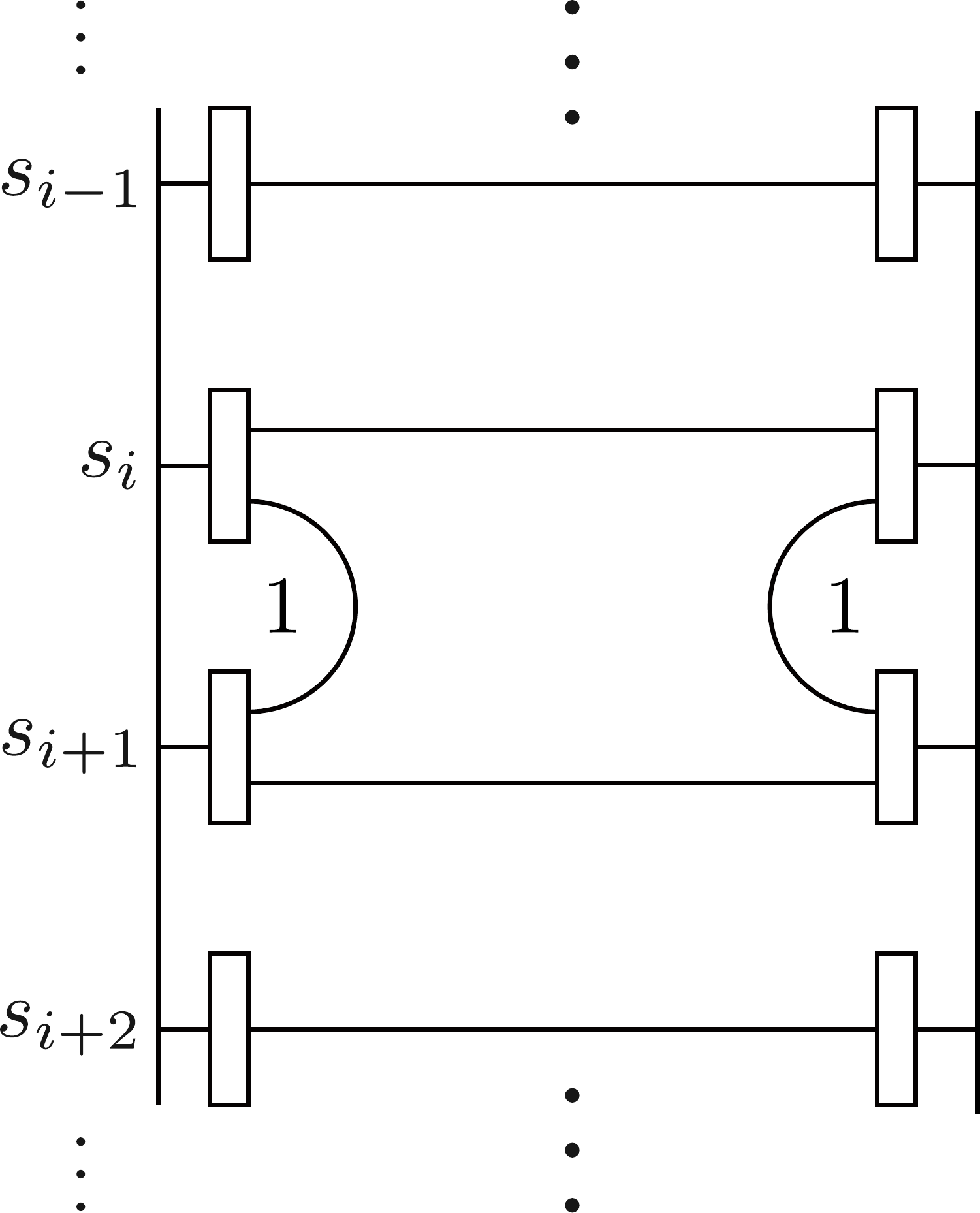} ,}}
\hphantom{\WJProj_\multii \Gen_{\sIndex_1 + \sIndex_2 + \dotsm + \sIndex_i}^{\TL} \WJProj_\multii 
\quad = \quad}
\end{align}
with $i \in \{1,2,\ldots,\np_\multii-1\}$, forms a minimal generating set for the Jones-Wenzl algebra $\WJ_\multii(\nu)$. 

\item \label{GeneratorThmItem2} 

Alternatively, 
the collection of all $\multii$-Jones-Wenzl tangles of the form
\begin{align} \label{MasterDiagramsWJ-00} 
\ValGenMWJ\super{s}_i \quad := \quad 
\vcenter{\hbox{\includegraphics[scale=0.275]{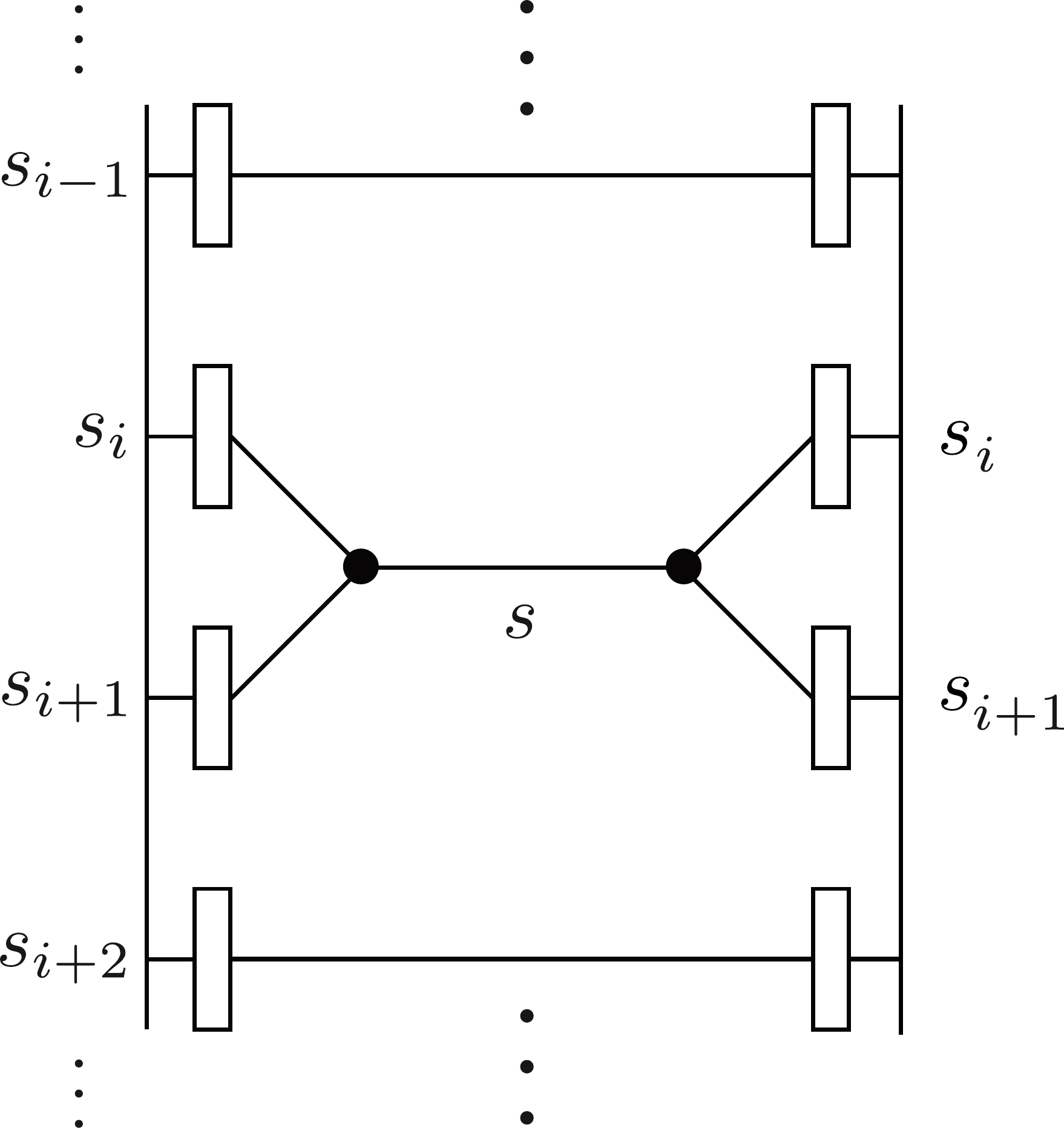} ,}}
\end{align}
with $s \in \DefectSet\sub{\sIndex_i,\sIndex_{i+1}}$ and $i \in \{ 1, 2, \ldots, \np_\multii - 1 \}$, 
forms a minimal generating set for $\WJ_\multii(\nu)$.

\end{enumerate}
\end{restatable}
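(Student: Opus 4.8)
The plan is to prove the two parts of Theorem~\ref{GeneratorThm} in tandem, using the known presentation of $\TL_n(\nu)$ via the generators $\Gen_1,\dots,\Gen_{n-1}$ and relations~(\ref{WordRelations1}--\ref{WordRelations3}) together with the absorption property~\eqref{ProjectorID1} of the Jones-Wenzl projectors. For part~\ref{GeneratorThmItem1}, the starting point is that $\WJ_\multii(\nu)=\WJProj_\multii\,\TL_{\Summed_\multii}(\nu)\,\WJProj_\multii$ is spanned by elements $\WJProj_\multii w\WJProj_\multii$ where $w$ ranges over words in the $\Gen_k$, $k\in\{1,\dots,\Summed_\multii-1\}$. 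The key observation is that whenever $k$ lies strictly between two consecutive "block boundaries" $\sIndex_1+\dots+\sIndex_{i-1}$ and $\sIndex_1+\dots+\sIndex_i$, the generator $\Gen_k$ sits underneath the $i$-th projector box, so $\WJProj_\multii\Gen_k = \Gen_k\WJProj_\multii = 0$ by property~\eqref{wj2item}; hence the only "surviving" single-letter generators are the $\Gen_{\sIndex_1+\dots+\sIndex_i}$ straddling two adjacent boxes, which give precisely the $\ValGenWJ_i$ of~\eqref{EGenerators-00}.

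First I would make this precise: I claim that for any word $w=\Gen_{k_1}\cdots\Gen_{k_m}$, the element $\WJProj_\multii w\WJProj_\multii$ lies in the subalgebra $A$ generated by $\WJProj_\multii$ and $\{\ValGenWJ_1,\dots,\ValGenWJ_{\np_\multii-1}\}$. I would argue by induction on $m$, reading $w$ left to right and inserting a resolution of the identity $\mathbf{1}_{\TL}$ — or rather, repeatedly using that $\WJProj_\multii$ is idempotent to write $\WJProj_\multii w\WJProj_\multii = (\WJProj_\multii\Gen_{k_1}\WJProj_\multii)(\WJProj_\multii\Gen_{k_2}\WJProj_\multii)\cdots$ is \emph{not} legitimate in general (one cannot freely insert $\WJProj_\multii$ between letters), so the real content is a normal-form / diagrammatic argument: any $\multii$-Jones-Wenzl link diagram $\WJProj_\multii T\WJProj_\multii\in\PD_\multii$ can be built from the $\ValGenWJ_i$. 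Here I would use the structure of link diagrams: a non-vanishing $\multii$-JW link diagram has no turn-back link returning to the same box, so its "connectivity pattern between boxes" can be realized by a product of the elementary straddling tangles $\ValGenWJ_i$, exactly as in the classical fact~\cite[theorem~2.4]{rsa} that every $n$-link diagram is a product of the $\Gen_i$, now performed one block-boundary at a time. Combining with the spanning statement $\WJ_\multii(\nu)=\Span\PD_\multii$ gives that $\{\mathbf{1}_{\WJ_\multii},\ValGenWJ_i\}$ generates.

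For minimality, I would show no $\ValGenWJ_i$ can be dropped. The clean way is a "connectivity grading" argument: assign to each $\multii$-JW link diagram the set of pairs of adjacent boxes $(i,i+1)$ that it "links across" (i.e. connects directly, in the sense that removing that interface disconnects the diagram there), observe that a product $\ValGenWJ_{i_1}\cdots\ValGenWJ_{i_r}$ can only link across interfaces in $\{i_1,\dots,i_r\}$, and note $\ValGenWJ_i$ itself links across interface $i$; hence $\ValGenWJ_i\notin\langle\mathbf{1}_{\WJ_\multii},\{\ValGenWJ_j\}_{j\neq i}\rangle$. (One must check $\ValGenWJ_i\neq 0$, which follows since $\sIndex_i,\sIndex_{i+1}\geq 1$ so the index $\sIndex_1+\dots+\sIndex_i$ straddles the boxes, and $\Summed_\multii<\ppmin(q)$ keeps $\TL_{\Summed_\multii}(\nu)$ semisimple so the projectors and hence $\WJProj_\multii\Gen_{\cdots}\WJProj_\multii$ are nonzero.) For part~\ref{GeneratorThmItem2}, I would relate the $\ValGenMWJ_i\super{s}$ to the $\ValGenWJ_i$: using the three-vertex/recoupling identity~\cite{kl,mv} one expands the single straddling generator $\ValGenWJ_i=\WJProj_\multii\Gen_{\sIndex_1+\dots+\sIndex_i}\WJProj_\multii$ as a $\bC$-linear combination $\sum_{s\in\DefectSet\sub{\sIndex_i,\sIndex_{i+1}}}c_s\,\ValGenMWJ_i\super{s}$ with all $c_s\neq 0$ (the Jones-Wenzl expansion of a single cap-cup between two projectors, whose coefficients are ratios of quantum integers nonzero because $\Summed_\multii<\ppmin(q)$) — conversely each $\ValGenMWJ_i\super{s}$ lies in the algebra generated by $\ValGenWJ_i$ and $\WJProj_\multii$, since it is $\WJProj_\multii$ times an explicit TL element times $\WJProj_\multii$. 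This two-way containment shows $\{\mathbf{1}_{\WJ_\multii}\}\cup\{\ValGenMWJ_i\super{s}\}$ and $\{\mathbf{1}_{\WJ_\multii}\}\cup\{\ValGenWJ_i\}$ generate the same algebra; minimality of the second set follows from the first-set argument plus the fact that for fixed $i$ the various $\ValGenMWJ_i\super{s}$ are linearly independent (distinct "through-strand" number $s$ across interface $i$ is a diagram invariant preserved under the other generators), so none is redundant.

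\textbf{Main obstacle.} The hard part is the generation (spanning) step of part~\ref{GeneratorThmItem1}: making rigorous that \emph{every} $\multii$-JW link diagram — not just products of single $\Gen_k$'s sandwiched individually — is expressible in the $\ValGenWJ_i$. The subtlety is that in a word $\WJProj_\multii\Gen_{k_1}\cdots\Gen_{k_m}\WJProj_\multii$ the internal letters are \emph{not} each flanked by projectors, so one cannot reduce letter-by-letter; instead one needs a genuine diagrammatic normal-form argument (an induction on, say, the number of turn-back links, or on $\Summed_\multii$, peeling off one projector block at a time and tracking how through-strands are routed), together with the absorption identities~\eqref{ProjectorID1} and~\eqref{ProjectorID2} to kill any configuration that dips below a single box. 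I expect this to occupy the bulk of the proof, with the conversion in part~\ref{GeneratorThmItem2} and the minimality arguments being comparatively short once the right invariants (interfaces linked, through-strand count across an interface) are in place.
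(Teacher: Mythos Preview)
Your overall architecture for part~\ref{GeneratorThmItem1} --- induction on the number $\np_\multii$ of projector blocks, ``peeling off one projector block at a time'' --- is exactly the paper's strategy. But the way you propose to carry out the induction step has a genuine gap, and the analogy with the classical Temperley--Lieb normal form is misleading in a way that hides it.

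In the classical case, a product $\Gen_{k_1}\cdots\Gen_{k_m}$ is (up to a scalar) a \emph{single} link diagram, so one can realize any target diagram by writing down the right word. In the Jones--Wenzl setting, a product $\ValGenWJ_{i_1}\cdots\ValGenWJ_{i_m}$ is a linear combination of many $\multii$-link diagrams, because each factor carries a full composite projector $\WJProj_\multii$ that must be expanded. So ``building'' a specific basis diagram is not a matter of routing strands; it is a matter of arranging products so that, after all the projector expansions, the desired diagram survives with a \emph{nonzero} coefficient while the unwanted terms either cancel or are already known to lie in the subalgebra by induction. Your proposal does not contain a mechanism for controlling these coefficients. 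The paper's induction step is precisely this mechanism: it replaces the na\"ive basis $\PD_\multii$ by alternative bases $\PD1_\multii,\PD3_\multii$ (lemma~\ref{ManyBasesLem}) carrying extra internal projector boxes, then builds those basis elements as triple products (left tangle)(middle tangle)(right tangle) where the outer tangles lie in $\WJ_{\hat\multii}(\nu)$ or $\WJ_{\check\multii}(\nu)$ (covered by the induction hypothesis) and the middle tangle is chosen so that, via items~\ref{ExtractLemItem} and~\ref{LoopErasureLemItem} of lemma~\ref{CollectionLem}, the product collapses to a single basis element times an explicit Theta- or Tetrahedral-network coefficient. The heart of the proof (lemmas~\ref{SameSideLem2}, \ref{2stepLem}, \ref{SameSideLem}) is checking, case by case, that these coefficients are finite and nonzero --- and \emph{this} is where the hypothesis $\Summed_\multii<\ppmin(q)$ is actually used (both to ensure the quantum integers in the coefficients do not vanish and to guarantee existence of the dual link states $\alpha^\cheque$ needed to set up the computation). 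Your sketch invokes $\Summed_\multii<\ppmin(q)$ only for semisimplicity and nonvanishing of $\ValGenWJ_i$; that is not enough, and indeed the paper conjectures (conjecture~\ref{GeneratorConj}) that the weaker $\max\multii<\ppmin(q)$ should suffice but cannot prove it by this method.

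For part~\ref{GeneratorThmItem2} your idea is right in spirit but the justification you give is circular: saying ``$\ValGenMWJ_i\super{s}$ is $\WJProj_\multii$ times a TL element times $\WJProj_\multii$'' only places it in $\WJ_\multii(\nu)$, not in the algebra generated by $\ValGenWJ_i$. What the paper does (equation~\eqref{UpperTri}) is expand each $\ValGenMWJ_i\super{s}$ as an explicit upper-triangular combination of the ``$k$-cable'' diagrams between boxes $i$ and $i{+}1$, check the diagonal coefficients are nonzero under $\Summed_\multii<\ppmin(q)$, and invoke the two-box case (lemma~\ref{InitialCaseLem}) to see those $k$-cable diagrams are polynomials in $\ValGenWJ_i$. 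This gives the two-way containment you want, but the content is again in the coefficient computation, not in an abstract containment. Your minimality argument via the ``interfaces linked'' invariant is fine and is essentially what the paper asserts in one sentence at the end of its proof.
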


We prove theorem~\ref{GeneratorThm} in section~\ref{GeneratorLemProofSect}.
In fact, we believe that item~\ref{GeneratorThmItem1} in theorem~\ref{GeneratorThm} 
also holds under the weaker assumption $\max \multii < \ppmin(q)$.
This assumption is necessary for the diagram algebra $\WJ_\multii(\nu)$ to be well-defined.

\begin{conj} \label{GeneratorConj} 
Item~\ref{GeneratorThmItem1} in theorem~\ref{GeneratorThm} holds whenever $\max \multii < \ppmin(q)$.
\end{conj}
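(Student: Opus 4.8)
The plan is to deduce Conjecture~\ref{GeneratorConj} from the already-established semisimple case of item~\ref{GeneratorThmItem1} in Theorem~\ref{GeneratorThm} by an integral specialization argument. First I would work over the localized ring $R := \bZ[q^{\pm 1}]$ in which the quantum integers $[2], [3], \ldots, [\max\multii]$ are inverted, together with the $R$-form $\WJ_\multii^R := \WJProj_\multii\,\TL_{\Summed_\multii}^{R}\,\WJProj_\multii$, where $\TL_{\Summed_\multii}^R$ is the integral Temperley-Lieb algebra on the link-diagram basis $\LD_{\Summed_\multii}$. Crucially, the composite projector $\WJProj_\multii$ and every generator $\ValGenWJ_i = \WJProj_\multii\,\Gen_{\sIndex_1+\dotsm+\sIndex_i}^{\TL}\,\WJProj_\multii$ are defined over $R$: each constituent projector $\WJProj\sub{\sIndex_k}$ is produced by the recursion~\eqref{wjrecursion} using only scalars $[s]/[s+1]$ with $s < \sIndex_k \le \max\multii$, and concatenating generators merely removes loops (each weighted by the unit $\nu = -[2] \in R^\times$) and merges projectors via~\eqref{ProjectorID0}--\eqref{ProjectorID1}. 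Hence every word in the $\ValGenWJ_i$, and every basis diagram of $\PD_\multii$, lies in $\WJ_\multii^R$. Let $M$ be the matrix over $R$ of the map sending each word in the generators to its expansion in the basis $\PD_\multii$; since \cite[lemma~\red{B.1}]{fp0} makes $\PD_\multii$ a basis of $q$-independent size, the conjecture follows once we show that $\operatorname{coker} M = 0$ over $R$.

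Over the fraction field $\bC(q)$ the parameter is generic, so $\ppmin(q) = \infty$ and item~\ref{GeneratorThmItem1} in Theorem~\ref{GeneratorThm} applies directly; thus $M$ is surjective over $\bC(q)$ and $\operatorname{coker} M$ is $R$-torsion. To force it to vanish it suffices to exhibit a single $|\PD_\multii| \times |\PD_\multii|$ minor of $M$ --- that is, to select that many words in the generators --- whose determinant is a \emph{unit} of $R$, namely a monomial in $q$ and in the $[k]$ with $k \le \max\multii$; such a minor is nonzero at every specialization $R \to \bC$ permitted by $\max\multii < \ppmin(q)$ and exhibits $M$ as a split surjection, whence $\operatorname{coker} M = 0$. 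To produce these words I would induct on the number $\np_\multii$ of projector boxes. The base case $\np_\multii = 1$ is immediate, as $\WJ_{(\sIndex_1)}(\nu) = \bC\,\WJProj\sub{\sIndex_1}$ is one-dimensional; for the inductive step I would filter $\WJ_\multii^R$ by how strands connect to the last box, peel off the coupling between boxes $\np_\multii-1$ and $\np_\multii$, and reduce the sought minor to a product of two-box minors.

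The essential case is therefore $\multii = (a,b)$ with $a \le b = \max\multii$, where $\WJ_{(a,b)}(\nu)$ has dimension $a+1$ and the assertion is that it is generated by the single element $U := \ValGenWJ_1$ together with the unit; equivalently, $1, U, U^2, \ldots, U^a$ must be linearly independent for every $q$ with $b < \ppmin(q)$. I would expand each power $U^k$ in the link-diagram basis $\PD_{(a,b)}$ and seek a unitriangular change of basis whose diagonal entries are the partial-trace (``bubble'') weights created when a returning strand is absorbed into a projector box. By the standard Jones-Wenzl partial-trace identity these weights are $\pm[s+1]/[s]$ for cables of size $s \le b$; their denominators are invertible in $R$, and each is itself a unit \emph{except} at the single boundary value $s = b$, where $[b+1] = [\ppmin]$ vanishes precisely when $b = \ppmin(q) - 1$.

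The main obstacle is exactly this boundary locus. At a root of unity $q_0$ with $\max\multii = \ppmin(q_0) - 1$ the top projector $\WJProj\sub{\max\multii}$ has vanishing quantum dimension $(-1)^{\max\multii}[\ppmin] = 0$, every bubble weight through it dies, and the eigenvalue separation underlying the generic argument collapses as two fusion channels merge --- which is also why Conjecture~\ref{GeneratorConj} is stated only for item~\ref{GeneratorThmItem1}, since the three-vertex generators $\ValGenMWJ\super{s}_i$ of item~\ref{GeneratorThmItem2}, indexed by $s$ up to $\sIndex_i + \sIndex_{i+1}$, need not even exist once $\sIndex_i + \sIndex_{i+1} \ge \ppmin(q_0)$. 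To close the gap I would show that, although $U$ is then no longer diagonalizable, it retains a single Jordan block bridging the two merged channels, so that its minimal polynomial still has degree $a+1$ and $1, U, \ldots, U^a$ remain independent; the mechanism is already visible for $\multii = (1,1)$ at $q_0 = \ii$, where $U = \Gen_1$ is nilpotent with $U^2 = 0 \ne U$, so $\{1, U\}$ stays a basis even though $[2] = 0$. Making this Jordan-block persistence uniform --- for instance by deriving a Chebyshev-type recursion for the $U^k$ with coefficients in $R$ and checking that the degree of the resulting minimal polynomial is stable under specialization to the boundary --- is the crux, and the step I expect to demand the most care.
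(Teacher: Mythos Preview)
This statement is a \emph{conjecture} in the paper, not a theorem: the authors explicitly leave it open, remarking that their proof of Theorem~\ref{GeneratorThm} ``relies heavily on the condition that $\Summed_\multii < \ppmin(q)$'' and that a proof under the weaker hypothesis ``may look quite different.'' Only two special cases are settled --- $\np_\multii = 2$ (Corollary~\ref{InitialCaseCor}) and $\multii = (1,\ldots,1,k)$ (attributed to~\cite{mrr}). There is thus no proof in the paper to compare against; your proposal is an attempt on an open problem.

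Your integral-form framework is reasonable, and the two-box case does go through --- but without the Jordan-block complication you anticipate. From the recursion in the proof of Lemma~\ref{InitialCaseLem} (equations~(\ref{TopLineTangles})--(\ref{BotTangle})), the leading coefficient when passing from $k$ to $k{+}1$ turn-back cables is $[\sIndex_1-k][\sIndex_2-k]/([\sIndex_1][\sIndex_2])$, which for $0 \le k \le \min(\sIndex_1,\sIndex_2)-1$ involves only quantum integers $[m]$ with $1 \le m \le \max\multii$. These are all units in your ring $R$, so the transition matrix from $\{1,\ValGenWJ_1,\ldots,\ValGenWJ_1^{\min(\sIndex_1,\sIndex_2)}\}$ to $\PD_{(\sIndex_1,\sIndex_2)}$ is already triangular with unit diagonal over $R$. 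The factor $[\max\multii+1]$ never enters; your ``boundary locus'' worry for two boxes is misplaced, and Corollary~\ref{InitialCaseCor} already establishes this case under $\max(\sIndex_1,\sIndex_2) < \ppmin(q)$.

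The genuine gap is the inductive step. You propose to ``filter $\WJ_\multii^R$ by how strands connect to the last box, peel off the coupling between boxes $\np_\multii{-}1$ and $\np_\multii$, and reduce the sought minor to a product of two-box minors,'' but this sentence is the entire content of the conjecture and you supply no mechanism. The paper's own induction (Section~\ref{IndStepSec}) proceeds by inserting intermediate Jones-Wenzl projectors across the crossing strands --- the bases $\PD1_\multii$, $\PD2_\multii$, $\PD3_\multii$ of Lemma~\ref{ManyBasesLem} and the triple products of Lemma~\ref{TProductLem} --- and those intermediate boxes have sizes up to $\Summed_\multii$, which is precisely why the stronger hypothesis $\Summed_\multii < \ppmin(q)$ is forced there. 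Any ``peeling'' argument must either avoid such large projectors altogether or show that the coefficients they produce are units in $R$; you have not indicated how to do either, and there is no evident reason the determinant you seek should factor into two-box pieces. Until that step is actually carried out, the proposal is a restatement of the difficulty rather than a resolution of it.
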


Unfortunately, our explicit proof of theorem~\ref{GeneratorThm} in section~\ref{GeneratorLemProofSect}
relies heavily on the condition that $\Summed_\multii < \ppmin(q)$, which suggests that a proof of conjecture~\ref{GeneratorConj} 
may look quite different.
However, conjecture~\ref{GeneratorConj} is known to hold at least in the special case 
$\multii = (1,1,\ldots,1,k)$ with $\max \multii = k < \ppmin(q)$.
A proof for it is given in~\cite[appendix~\red{C}]{mrr}.
Also, in corollary~\ref{InitialCaseCor} in section~\ref{BaseCaseSec}, 
we prove conjecture~\ref{GeneratorConj} for the case of two projector boxes ($\np_\multii = 2$).

\bigskip

Another goal of this article is to find all independent relations for the generators in theorem~\ref{GeneratorThm}.
A presentation of the Jones-Wenzl algebra in terms of explicit generators and relations
gives means to extend the definition of this algebra beyond diagrams, 
a priori limited by the sizes of the Jones-Wenzl projectors.
In section~\ref{RelationLemProofSect}, we discuss special cases of $\multii$ for which all of the relations are known.
Below, we state some simple relations that we have found.
In general, however, the relations appear rather complicated, and therefore, we 
will not analyze the general case.

In the next proposition, we use the evaluation of the \emph{Theta network}
from lemma~\ref{ThetaLem} of appendix~\ref{TLRecouplingSect}:
\begin{align} 
\ThetaNet(r,s,t) 
= \frac{(-1)^{\frac{r + s + t}{2}} \left[ \frac{r + s + t}{2} + 1 \right]! \left[ \frac{ r + s - t }{2} \right]! \left[ \frac{ s + t - r}{2} \right]! \left[ \frac{t + r - s}{2} \right]! }{[ r ]! [s ]! [ t ]!} .
\end{align}

\begin{prop} \label{RelationProp} 
Suppose $\max \multii < \ppmin(q)$. 
\begin{enumerate} 
\itemcolor{red}
\item \label{WordRelationsWJItem1}
Generators~\eqref{EGenerators-00} satisfy the following relations, for all $i,j \in \{ 1, 2, \ldots, \np_\multii - 1 \}$\textnormal{:}
\begin{alignat}{4} 
\label{WordRelationsWJ01} 
\ValGenWJ_i \ValGenWJ_{i \pm 1} \ValGenWJ_i 
&= \ValGenWJ_i,  \qquad &&\textnormal{if $1 \leq i \pm1 \leq \np_\multii-1$} \quad &&&\textnormal{and $\sIndex_i = \sIndex_{i + 1} = 1$}, \\ 
\label{WordRelationsWJ02}
\ValGenWJ_i \ValGenWJ_{i + 1} \ValGenWJ_i - \ValGenWJ_{i + 1} \ValGenWJ_i \ValGenWJ_{i + 1}
&= \frac{1}{[\sIndex_{i + 1}]^2}  \, ( \ValGenWJ_i - \ValGenWJ_{i + 1} ) , \qquad
&&\textnormal{if $1 \leq i \leq \np_\multii-2$} \quad &&& \textnormal{and $\sIndex_i = \sIndex_{i+2} = 1$}, \\ 
\label{WordRelationsWJ03}
\ValGenWJ_i^2 &= - \frac{[\max (\sIndex_i, \sIndex_{i+1}) + 1]}{[\max (\sIndex_i, \sIndex_{i+1})]} \, \ValGenWJ_i, \qquad
&&\textnormal{if $\min (\sIndex_i, \sIndex_{i+1}) = 1$},  \\ 
\label{WordRelationsWJ04}
\ValGenWJ_i \ValGenWJ_j &= \ValGenWJ_j \ValGenWJ_i,  \qquad &&\textnormal{if $|i-j| > 1$}. 
\end{alignat}

\item \label{WordRelationsWJItem2} 
Generators~\eqref{MasterDiagramsWJ-00} satisfy the following relations, for all $i,j \in \{ 1, 2, \ldots, \np_\multii - 1 \}$\textnormal{:}
\begin{alignat}{2} 
\label{WordRelationsWJ05} 
\ValGenMWJ\super{s}_i \ValGenMWJ\super{s'}_i &= \delta_{s,s'} \frac{\ThetaNet(\sIndex_i,\sIndex_{i+1},s)}{(-1)^s[s+1]} \, \ValGenMWJ\super{s}_i, 
\qquad && \\ 
\label{WordRelationsWJ06}
\ValGenMWJ\super{s}_i \ValGenMWJ\super{s''}_j &= \ValGenMWJ\super{s''}_j \ValGenMWJ\super{s}_i,  
\qquad &&\textnormal{if $|i-j| > 1$},
\end{alignat}
for all $s,s' \in \DefectSet\sub{\sIndex_i,\sIndex_{i+1}}$ and $s'' \in \DefectSet\sub{\sIndex_j,\sIndex_{j+1}}$, and 
\begin{align} \label{WordRelationsWJ07} 
\mathbf{1}_{\WJ_\multii} 
& = \sum_{s \, \in \, \DefectSet\sub{\sIndex_i,\sIndex_{i+1}}} \frac{(-1)^s [s+1]}{\ThetaNet(\sIndex_i,\sIndex_{i+1},s)} \, \ValGenMWJ\super{s}_i .
\end{align} 
\end{enumerate}
\end{prop}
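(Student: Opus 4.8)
The plan is to verify each of the relations (\ref{WordRelationsWJ01}--\ref{WordRelationsWJ07}) directly by diagram calculus inside $\TL_{\Summed_\multii}(\nu)$, using only: properties~\ref{wj1item},~\ref{wj2item} and the absorption property \eqref{ProjectorID1} of the Jones--Wenzl projectors; the recursion \eqref{wjrecursion}; the Temperley--Lieb relations (\ref{WordRelations1}--\ref{WordRelations3}); and the recoupling calculus of appendix~\ref{TLRecouplingSect}, including the evaluation of $\ThetaNet$ in lemma~\ref{ThetaLem}. Theorem~\ref{GeneratorThm} is not invoked, so the weaker hypothesis $\max\multii<\ppmin(q)$ suffices; its only role is to guarantee that every projector box, closed three-vertex, and Theta-network appearing in the statement is well-defined and has a nonvanishing denominator.

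Two of the relations are ``support'' relations. For \eqref{WordRelationsWJ04} and \eqref{WordRelationsWJ06}, when $|i-j|>1$ the local modification of $\WJProj_\multii$ defining $\ValGenWJ_i$ (resp.\ $\ValGenMWJ\super{s}_i$) engages only the boxes $i,i{+}1$ and the one defining $\ValGenWJ_j$ (resp.\ $\ValGenMWJ\super{s''}_j$) only the boxes $j,j{+}1$; writing $\WJProj_\multii=RS$ with $R$ (resp.\ $S$) the product of the projector boxes up to $i{+}1$ (resp.\ from $i{+}2$ on), a short commutation using $[\Gen_{a_i}^{\TL},S]=[\Gen_{a_j}^{\TL},R]=[\Gen_{a_i}^{\TL},\Gen_{a_j}^{\TL}]=0$ (the last by \eqref{WordRelations3}) brings both products to the manifestly symmetric form $\WJProj_\multii\,\Gen_{a_i}^{\TL}\Gen_{a_j}^{\TL}\,\WJProj_\multii$, and likewise for the $V$'s. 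For \eqref{WordRelationsWJ01}, the hypothesis $\sIndex_i=\sIndex_{i+1}=1$ forces the Temperley--Lieb generator $\Gen_a^{\TL}$ with $a:=\sIndex_1+\dots+\sIndex_i$ to touch only the single strands of boxes $i$ and $i{+}1$, each carrying the trivial projector $\WJProj\sub{1}=\mathbf{1}$, so $\Gen_a^{\TL}$ commutes with $\WJProj_\multii$; hence $\ValGenWJ_i\ValGenWJ_{i\pm1}\ValGenWJ_i=\WJProj_\multii\,\Gen_a^{\TL}\Gen_{a\pm1}^{\TL}\Gen_a^{\TL}\,\WJProj_\multii$, and \eqref{WordRelations1} collapses the middle to $\Gen_a^{\TL}$, i.e.\ to $\ValGenWJ_i$.

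Relations \eqref{WordRelationsWJ05} and \eqref{WordRelationsWJ07} are recoupling identities for the two adjacent boxes $i,i{+}1$, transported to $\WJ_\multii(\nu)$ by sandwiching with $\WJProj_\multii$ and using \eqref{WJunit}. For \eqref{WordRelationsWJ07}: the fusion decomposition of appendix~\ref{TLRecouplingSect} states that the product $\WJProj\sub{\sIndex_i}\WJProj\sub{\sIndex_{i+1}}$ of the two boxes equals $\sum_{s\in\DefectSet\sub{\sIndex_i,\sIndex_{i+1}}}\tfrac{(-1)^s[s+1]}{\ThetaNet(\sIndex_i,\sIndex_{i+1},s)}$ times the tangle ``fuse boxes $i,i{+}1$ into a size-$s$ internal edge, then split again''; sandwiching by $\WJProj_\multii$ (so that $\WJProj_\multii$ absorbs $\WJProj\sub{\sIndex_i}\WJProj\sub{\sIndex_{i+1}}$) turns the $s$-summand into $\ValGenMWJ\super{s}_i$. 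For \eqref{WordRelationsWJ05}: concatenating the split half of $\ValGenMWJ\super{s'}_i$ with the fuse half of $\ValGenMWJ\super{s}_i$ produces a closed Theta-network on $(\sIndex_i,\sIndex_{i+1},s)$ along one side of the internal edge, which by Schur's lemma (as $\WJProj\sub{s}$ is a primitive idempotent) vanishes unless $s=s'$ and otherwise equals $\ThetaNet(\sIndex_i,\sIndex_{i+1},s)$ divided by the closed-$s$-cable value $(-1)^s[s+1]$ of $\WJProj\sub{s}$. Both evaluations are standard (appendix~\ref{TLRecouplingSect}, lemma~\ref{ThetaLem}); they make literal sense in the range where all $\ValGenMWJ\super{s}_i$ occurring in the statement are defined, in particular when $\sIndex_i+\sIndex_{i+1}<\ppmin(q)$.

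It remains to prove \eqref{WordRelationsWJ03} and \eqref{WordRelationsWJ02}. Reflecting link diagrams across a vertical line gives an algebra isomorphism $\WJ_\multii(\nu)\xrightarrow{\sim}\WJ_{\multii^{\mathrm{rev}}}(\nu)$ exchanging box $k$ with box $\np_\multii+1-k$ (each single-box Jones--Wenzl projector being reflection-symmetric), so for \eqref{WordRelationsWJ03} we may assume $\sIndex_{i+1}=1$; with $m:=\sIndex_i=\max(\sIndex_i,\sIndex_{i+1})$ and $a:=\sIndex_1+\dots+\sIndex_i$ we have $\ValGenWJ_i^2=\WJProj_\multii\,\Gen_a^{\TL}\,\WJProj_\multii\,\Gen_a^{\TL}\,\WJProj_\multii$, and since box $i{+}1$ is trivial the inner factor $\Gen_a^{\TL}\WJProj_\multii\Gen_a^{\TL}$ closes the last strand of the size-$m$ projector box $i$ into a loop, i.e.\ takes the partial trace of $\WJProj\sub{m}$ over one strand; that partial trace equals $-\tfrac{[m+1]}{[m]}\WJProj\sub{m-1}$ (an easy induction on $m$ from \eqref{wjrecursion}, agreeing at $m=1$ with the closed loop $\nu=-[2]$), and reabsorbing this $\WJProj\sub{m-1}$ into $\WJProj_\multii$ via \eqref{ProjectorID1} leaves exactly $-\tfrac{[m+1]}{[m]}\ValGenWJ_i$. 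Finally, for \eqref{WordRelationsWJ02} put $t:=\sIndex_{i+1}$; since $\sIndex_i=\sIndex_{i+2}=1$, the Temperley--Lieb generators occurring in $\ValGenWJ_i$ and $\ValGenWJ_{i+1}$ touch only strands of the boxes $i,i{+}1,i{+}2$, hence commute with all other (``spectator'') projector boxes of $\WJProj_\multii$, while boxes $i$ and $i{+}2$ are trivial; factoring $\WJProj_\multii=\Pi\,Q$ with $Q$ the size-$t$ projector box $i{+}1$ and $\Pi$ the idempotent product of the spectator boxes (commuting with everything relevant) reduces \eqref{WordRelationsWJ02} to the single identity
\begin{align*}
A'B'A'-B'A'B'=\tfrac{1}{[t]^2}\,(A'-B')\ \ \text{in }\TL_{t+2}(\nu),\qquad A':=\WJProj\sub{t}\,\Gen_1\,\WJProj\sub{t},\quad B':=\WJProj\sub{t}\,\Gen_{t+1}\,\WJProj\sub{t},
\end{align*}
where $\WJProj\sub{t}$ occupies the middle $t$ of the $t{+}2$ strands. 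I would prove this by induction on $t$: the base case $t=1$ is exactly \eqref{WordRelations1} in $\TL_3(\nu)$, and the inductive step expands the outermost $\WJProj\sub{t}$ in each monomial by \eqref{wjrecursion} and collects terms using properties~\ref{wj1item},~\ref{wj2item},~\eqref{ProjectorID1}, the one-strand partial-trace formula above, and the induction hypothesis. Arranging this bookkeeping so that the quantum-integer coefficients collapse to $\tfrac{1}{[t]^2}$ (and keeping the signs consistent with $\nu=-[2]$ throughout) is the step I expect to be the main obstacle. An alternative route to both \eqref{WordRelationsWJ03} and \eqref{WordRelationsWJ02} is to first establish the identification $\ValGenWJ_i=\ValGenMWJ\super{s}_i$ with $s=\max(\sIndex_i,\sIndex_{i+1})-1$ whenever $\min(\sIndex_i,\sIndex_{i+1})=1$ (the complementary channel of parameter $\max(\sIndex_i,\sIndex_{i+1})+1$ being annihilated by the turnback, by property~\ref{wj2item}, and the normalization pinned down by $\ThetaNet(1,m,m-1)=(-1)^m[m+1]$), whence \eqref{WordRelationsWJ03} follows from \eqref{WordRelationsWJ05}, and \eqref{WordRelationsWJ02} follows from \eqref{WordRelationsWJ05}--\eqref{WordRelationsWJ07} together with the Racah ($6j$) recoupling move applied to the triple $\ValGenMWJ\super{t-1}_i\,\ValGenMWJ\super{t-1}_{i+1}\,\ValGenMWJ\super{t-1}_i$.
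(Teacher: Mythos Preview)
Your argument is correct and, for relations \eqref{WordRelationsWJ01}, \eqref{WordRelationsWJ03}--\eqref{WordRelationsWJ07}, essentially coincides with the paper's: the paper also treats \eqref{WordRelationsWJ04} and \eqref{WordRelationsWJ06} as immediate from diagram multiplication, derives \eqref{WordRelationsWJ05} from the loop-erasure identity~\eqref{LoopErasure1}, cites the fusion decomposition (the $6j$-identity of \cite{kl}) for \eqref{WordRelationsWJ07}, and obtains \eqref{WordRelationsWJ03} via the identification $\ValGenWJ_i=\ValGenMWJ\super{m-1}_i$ combined with \eqref{WordRelationsWJ05} and the Theta evaluation $\ThetaNet(1,m,m-1)=(-1)^m[m+1]$ --- which is exactly your ``alternative route'' (and is equivalent to your partial-trace computation).

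The one place your proposal diverges is \eqref{WordRelationsWJ02}. You outline an induction on $t=\sIndex_{i+1}$, expanding the outermost $\WJProj\sub{t}$ via the recursion~\eqref{wjrecursion}, and you flag the coefficient bookkeeping as the main obstacle. The paper avoids this induction entirely. After localizing to three boxes (your reduction to $\TL_{t+2}(\nu)$ and the identification $\ValGenWJ_i=\ValGenMWJ\super{t-1}_i$ are exactly how the paper organizes things, via proposition~\ref{GeneratorThmThreeNode}), it expands $\ValGenMWJ\super{t-1}_1\ValGenMWJ\super{t-1}_2\ValGenMWJ\super{t-1}_1$ directly: in the concatenated diagram there are two internal $\WJProj\sub{t}$-boxes sandwiched between turnbacks, and by property~\ref{wj2item} only one or two internal link diagrams from each box survive, with coefficients read off from proposition~\ref{SpecialTProp}. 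One application of the loop-removal rule~\eqref{DeltaTangleGen} then yields the two-term result~\eqref{collection2}, and subtracting the analogous expansion of $\ValGenMWJ\super{t-1}_2\ValGenMWJ\super{t-1}_1\ValGenMWJ\super{t-1}_2$ gives \eqref{WordRelationsWJ02} immediately. This two-step expansion is shorter and sidesteps the inductive bookkeeping you anticipate; your alternative $6j$-route would also work and is morally the same computation.
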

We prove proposition~\ref{RelationProp} in the end of section~\ref{ThreeNodeCase11}. 
Next, we make some remarks: 
\begin{itemize}[leftmargin=*]
\item Generators~\eqref{EGenerators-00} and relations~(\ref{WordRelationsWJ01}--\ref{WordRelationsWJ04})
reduce to the Temperley-Lieb generators $\{ \Gen_1, \Gen_2, \ldots, \Gen_{n-1} \}$
and relations~(\ref{WordRelations1}--\ref{WordRelations3}) when 
$\multii = (1,1,\ldots,1)$.
Thus, in this case we already know that~(\ref{WordRelationsWJ01}--\ref{WordRelationsWJ04}) 
are all of the independent relations satisfied by these generators. 
(This fact is proved, e.g., in~\cite{rsa}.)

\item When $\multii = (1,1,\ldots,1, 1, k)$ for some $k \geq 2$, (resp.~$\multii = (k_1,1,\ldots,1, 1, k_2)$ for some $k_1,k_2 \geq 2$),
the Jones-Wenzl algebra $\WJ_\multii(\nu)$ is closely related to the one-boundary Temperley-Lieb algebra,
i.e., blob algebra, and the boundary seam algebra
(resp.~two-boundary Temperley-Lieb algebra)~\cite{pms2, bdmn, mrr}.
We prove in section~\ref{ThreeNodeCase11} that in this case,~(\ref{WordRelationsWJ01}--\ref{WordRelationsWJ04})
are all of the independent relations satisfied by generators~\eqref{EGenerators-00}. 


\item However, for general multiindices $\multii$, further relations exist, and finding all of them appears rather complicated.
\end{itemize}

%
%

Finally, let us discuss the scope of the main results of the present article. In the companion article~\cite{fp0},
we investigate the representation theory of the ``valenced Temperley-Lieb algebra'' $\TL_\multii(\nu)$, which for $\max \multii < \ppmin(q)$
is isomorphic to the Jones-Wenzl algebra $\WJ_\multii(\nu)$. In~\cite[section~\red{2}]{fp0}, we define this algebra in terms of ``valenced'' tangles
analogous to the Jones-Wenzl tangles. However, the valenced Temperley-Lieb algebra should be well-defined by generators and relations
(but not necessarily as a diagram algebra) even if the condition $\max \multii < \ppmin(q)$ is not satisfied. 
To enlighten this, we present an example of such a definition:

Suppose $k_1, k_2, \ldots, k_{m+1} \geq 1$, and consider the multiindex
\begin{align} \label{specialmultii}
\multii = (\sIndex_1, \sIndex_2, \ldots, \sIndex_{\np_\multii}) = (k_1,1,1,k_2,1,1,\ldots,k_{m},1,1,k_{m+1}) ,
\end{align} 
with $\np_\multii = 3m + 1$. Fix also the parameters
\begin{align}
\lambda = (\lambda_s)_{s \in \bZpos} \in \bC^{\bZpos}
\qquad \qquad \text{and} \qquad \qquad 
\mu = (\mu_s)_{s \in \bZpos} \in \bC^{\bZpos} .
\end{align}
We define the \emph{(abstract) valenced Temperley-Lieb algebra} $\mathsf{A}_\multii(\lambda, \mu)$ to be the associative, 
unital $\bC$-algebra determined by 
its presentation with generators $\smash{\{ \ValGenWJ_i \}_{i=1}^{\np_\multii-1}}$ and relations 
similar to~(\ref{WordRelationsWJ01}--\ref{WordRelationsWJ04}):
\begin{alignat}{4} 
\label{WordRelationsVal1} 
\ValGenWJ_i \ValGenWJ_{i \pm 1} \ValGenWJ_i 
&= \ValGenWJ_i,  \qquad &&\textnormal{if $1 \leq i\pm1 \leq \np_\multii-1$} \quad  &&& \textnormal{and $\sIndex_i = \sIndex_{i + 1} = 1$}, \\ 
\label{WordRelationsVal2}
\ValGenWJ_i \ValGenWJ_{i + 1} \ValGenWJ_i - \ValGenWJ_{i + 1} \ValGenWJ_i \ValGenWJ_{i + 1}
&= \lambda_{\sIndex_{i + 1}}  \, ( \ValGenWJ_i - \ValGenWJ_{i + 1} ) , \qquad
&&\textnormal{if $1 \leq i \leq \np_\multii-2$} \quad &&& \textnormal{and $\sIndex_i = \sIndex_{i+2} = 1$}, \\ 
\label{WordRelationsVal3}
\ValGenWJ_i^2 &= \mu_{\max (\sIndex_i, \sIndex_{i+1})} \, \ValGenWJ_i, \qquad &&  \\ 
\label{WordRelationsVal4}
\ValGenWJ_i \ValGenWJ_j &= \ValGenWJ_j \ValGenWJ_i,  \qquad &&\textnormal{if $|i-j| > 1$}.
\end{alignat}
We expect that if $\max \multii = \max(k_1, k_2, \ldots, k_{m+1}) < \ppmin(q)$, 
then choosing 
\begin{align} \label{lammu}
\mu_s = - \frac{[s + 1]}{[s]} \qquad \qquad \text{and} \qquad \qquad \lambda_s = \frac{1}{[s]^2} ,
\end{align}
for all indices $s \in \{1,k_1, k_2, \ldots, k_{m+1}\}$,
the algebra $\mathsf{A}_\multii(\lambda, \mu)$ becomes isomorphic to the Jones-Wenzl diagram algebra $\WJ_\multii(\nu)$.
With a suitable dimension argument, for $\Summed_\multii < \ppmin(q)$
this should follow from theorem~\ref{GeneratorThm} and proposition~\ref{RelationProp}.


%
%
%
%
%

Along the lines of~(\ref{WordRelationsVal1}--\ref{WordRelationsVal4}), a general definition for the valenced Temperley-Lieb algebra of~\cite{fp0}
might be useful in applications, e.g., for logarithmic conformal field theories and critical planar statistical mechanics models, 
where the assumption that $\max \multii < \ppmin(q)$ can be violated. For instance,
the special case of $k_1 = k_2 = \cdots = k_m = 1$ and $k_{m+1} = k$, called the \emph{boundary seam algebra},
was introduced and investigated recently in the article~\cite{mrr} of A.~Morin-Duchesne, J.~Rasmussen, and D.~Ridout.
This algebra is related to transfer matrices of certain critical statistical physics models, which in the continuum limit are expected to 
be described by logarithmic minimal models.

\begin{center}
\bf Organization of this article
\end{center}

Section~\ref{RepThSec} consists of preliminaries and a summary of the representation theory of the Jones-Wenzl algebra.
These results immediately follow, e.g., from results obtained in~\cite{fp0} for the valenced Temperley-Lieb algebra
and the isomorphisms discussed in~\cite[appendix~\red{B}]{fp0}.
In section~\ref{CellSec}, we present explicit cellular bases for the Jones-Wenzl algebra, relating our work to the abstract
theory of J.~Graham and G.~Lehrer~\cite{gl, gl2}. 
Many representation-theoretic properties of the Jones-Wenzl algebra could also be obtained using this theory.

In section~\ref{GeneratorLemProofSect}, we prove the generator theorem~\ref{GeneratorThm}.  Our proof is constructive and no prerequisites are needed.
Then in section~\ref{RelationLemProofSect}, we discuss relations for the generators of the Jones-Wenzl algebra,
and prove proposition~\ref{RelationProp}.
In special cases, we find all of the independent relations.

In appendix~\ref{WJProjApp}, we 
show that all coefficients of the Jones-Wenzl projectors equal ratios of entries in the inverse of the meander matrix of~\cite{fgg}. 
We thus recover a result found in~\cite{sm}.
Our technique is to manipulate integration contours of Coulomb-gas-type 
integral formulas in the spirit of~\cite{fk, fkk, sfk3, sfk4}. 
In appendix~\ref{TLRecouplingSect}, we collect results related to Temperley-Lieb recoupling theory~\cite{kl, cfs}, 
for use throughout this article. 
The final appendix~\ref{LemmaApp} includes technical details.

\begin{center}
\bf Acknowledgements
\end{center}

We are very grateful to K.~Kyt\"ol\"a for numerous discussions and encouragement,
and cordially thank J.~Bellet\^ete, A.~Langlois-R\'emillard, D.~Ridout, 
and Y.~Saint-Aubin discussions and comments on this work.
E.P. is supported by the ERC AG COMPASP, the NCCR SwissMAP, and the Swiss NSF,
and she also acknowledges the earlier support from the Vilho, Yrj\"o and Kalle V\"ais\"al\"a Foundation.
During this work, S.F. was supported by the Academy of Finland grant number 288318, 
``Algebraic structures and random geometry of stochastic lattice models.''
\section{Representation theory of the Jones-Wenzl algebra} \label{RepThSec}

In this section, we briefly discuss representation theory of the Temperley-Lieb algebra
and the Jones-Wenzl algebra. We investigate the representations in more detail in the companion article~\cite{fp0}.

\subsection{Standard modules} 

Standard modules (termed ``cell modules'' in the theory of cellular algebras) 
are building blocks for representations of both the Temperley-Lieb algebra $\TL_n(\nu)$ 
and the Jones-Wenzl algebra $\WJ_\multii(\nu)$.
Indeed, certain quotients of these modules constitute the complete set of simple modules~\cite{gl2, rsa, fp0}
(see proposition~\ref{SimpleModuleProp}). The standard modules are spanned by ``link patterns,'' which we also define shortly.

The number $s \in \bZnn$ of crossing links in any $n$-link diagram 
\begin{align}
\vcenter{\hbox{\includegraphics[scale=0.275]{e-TLSplit6.pdf}}}
\end{align}
in $\LD_n$ necessarily belongs to the set
\begin{align}\label{DefectSet} 
\DefectSet_n := \{ n \; \text{mod} \; 2, \; (n \; \text{mod} \; 2) + 2, \; \ldots, \; n \} .
\end{align} 
Given an $n$-link diagram with $s$ crossing links, we create an \emph{$(n,s)$-link pattern}
by dividing the link diagram vertically in half, discarding the right half, and rotating the left half by $\pi/2$ radians:
\begin{align} 
\nonumber
\vcenter{\hbox{\includegraphics[scale=0.275]{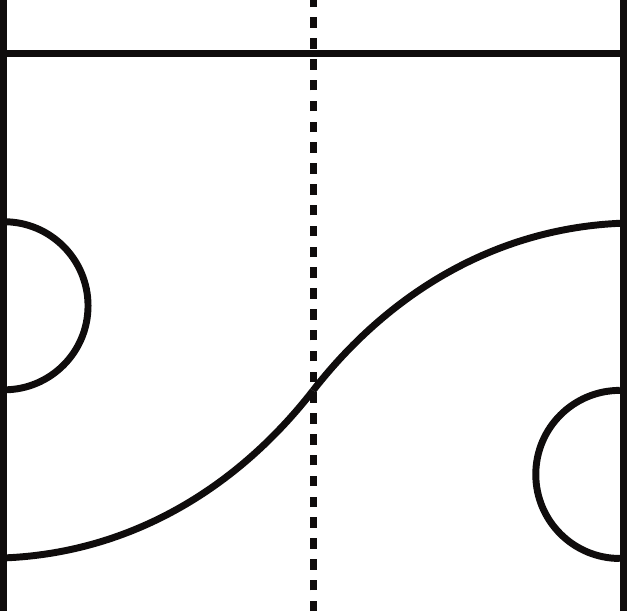}}} \qquad \qquad & \longmapsto \qquad \qquad
\vcenter{\hbox{\includegraphics[scale=0.275]{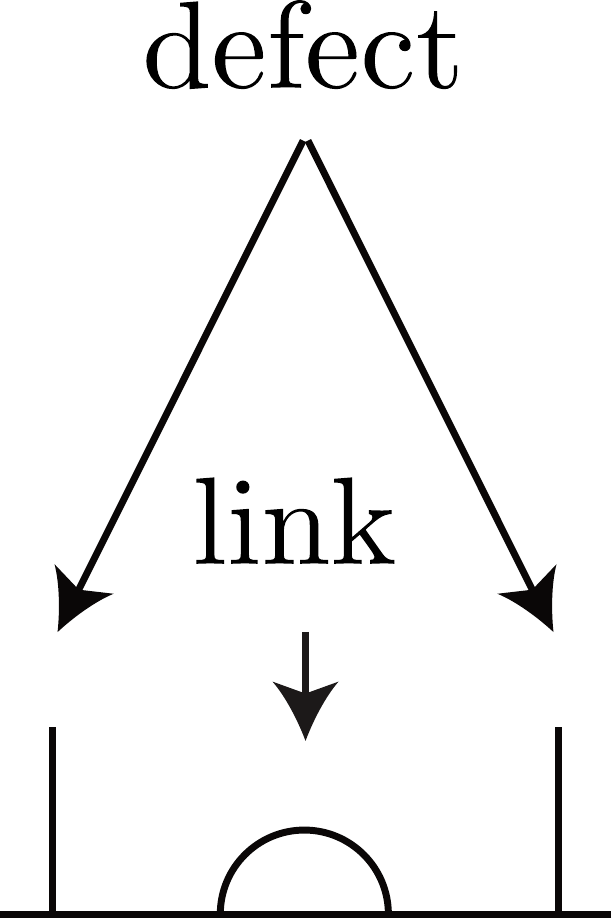}}} \\[1em]
\label{CutInHalf}
\text{link diagram} \qquad \qquad & \longmapsto \qquad \qquad \text{link pattern}.
\end{align}
We call the broken links in the $(n,s)$-link pattern \emph{defects}.  
We call a formal linear combination of $(n,s)$-link patterns with complex coefficients an \emph{$(n,s)$-link state},
and we denote by $\smash{\LS_n\super{s}}$ the complex vector space of $(n,s)$-link states.

We endow the space $\smash{\LS_n\super{s}}$ with a $\TL_n(\nu)$-action via the following diagram concatenation recipe. 
Given an $n$-link diagram $T \in \LD_n$
and an $(n,s)$-link pattern $\alpha \in \smash{\LP_n\super{s}}$, 
the latter rotated $-\pi/2$ radians, 
we concatenate $T$ to the left of $\alpha$,
remove the $k \geq 0$ loops formed by this concatenation, and multiply the result by $\nu^{k}$:
\begin{align} \label{loopex}
& \vcenter{\hbox{\includegraphics[scale=0.275]{e-TLalgebra1.pdf}}} \quad 
\vcenter{\hbox{\includegraphics[scale=0.275]{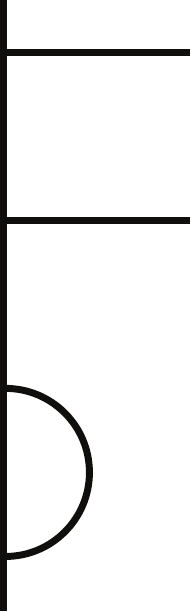}}} \quad
= \quad \vcenter{\hbox{\includegraphics[scale=0.275]{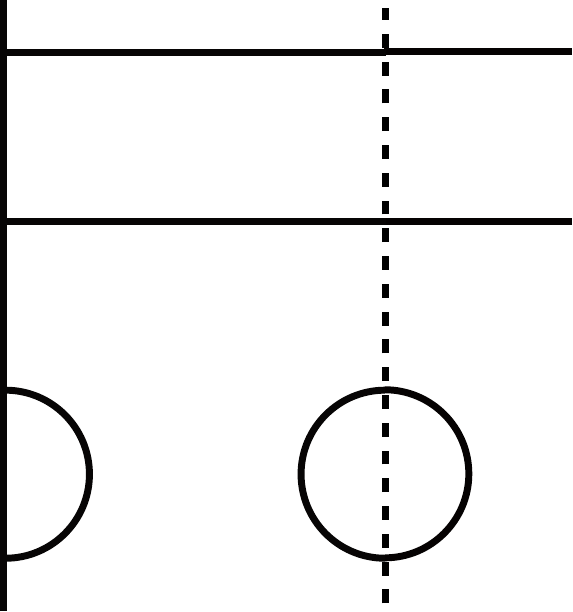}}}  \quad
 = \quad \nu \,\, \times \,\, 
\vcenter{\hbox{\includegraphics[scale=0.275]{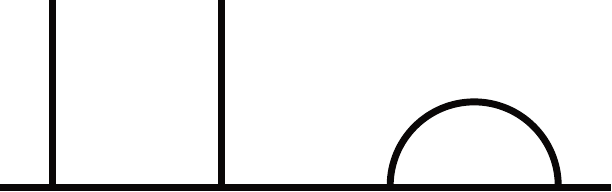} .}} 
\end{align}
Importantly, we set diagrams containing \emph{turn-back paths} to zero,
so $\TL_n(\nu)$ preserves the number $s$ of defects:
\begin{align} \label{turnbackex} 
& \vcenter{\hbox{\includegraphics[scale=0.275]{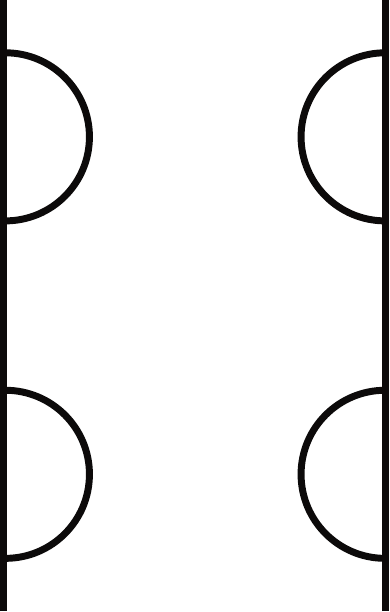}}} \quad 
\vcenter{\hbox{\includegraphics[scale=0.275]{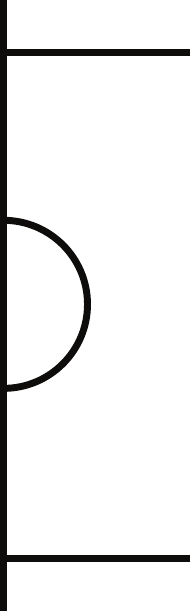}}} \quad
= \quad \vcenter{\hbox{\includegraphics[scale=0.275]{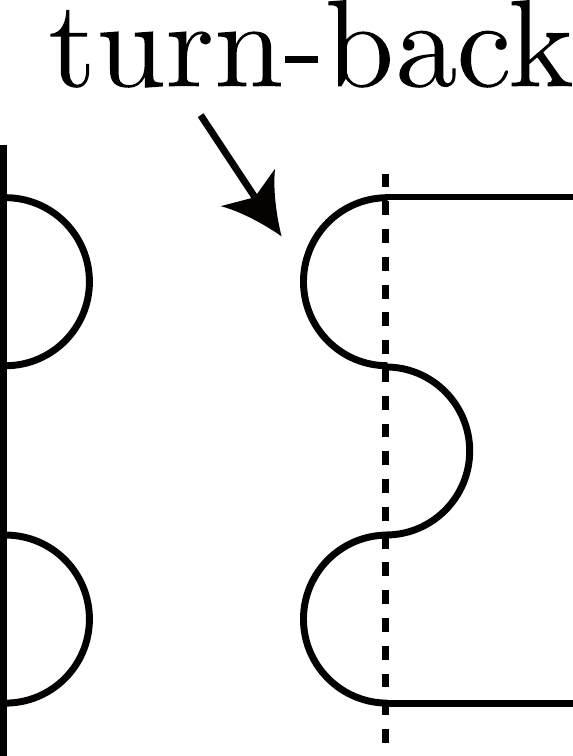}}} \quad
= \quad 0 \,\, \times \,\, 
\vcenter{\hbox{\includegraphics[scale=0.275]{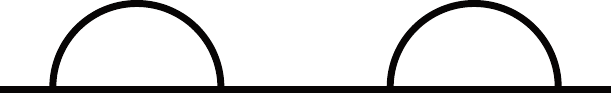}}} 
\quad = \quad 0 .
\end{align}
Bilinear extension of this recipe defines a $\TL_n(\nu)$-module structure on $\smash{\LS_n\super{s}}$,
and we thus call $\smash{\LS_n\super{s}}$ a $\TL_n(\nu)$-\emph{standard module}.
We also define the $\TL_n(\nu)$-\emph{link state module} to be the direct sum of all of the standard modules,
\begin{align} 
\LS_n :=  \bigoplus_{s \, \in \, \DefectSet_n} \LS_n\super{s} .
\end{align}

The representation theory of the Temperley-Lieb algebra 
$\TL_n(\nu)$ is completely understood~\cite{pm, hw, gwe, bw, gl, gl2, rsa, fp0}.
For example, $\TL_n(\nu)$ is semisimple if and only if the parameter $q \in \bC^\times$ 
that determines the fugacity 
$\nu$ via~\eqref{fugacity} satisfies
\begin{align}  
\text{either} \qquad n < \ppmin(q), \qquad \text{or} \qquad q = \pm \ii \text{ if $n$ is odd} ,
\end{align}
see, e.g.,~\cite[theorem~\red{8.1}]{rsa} and~\cite[Corollary~\red{6.10}]{fp0}.
In this case, the collection $\smash{\big\{ \LS_n\super{s} \,\big| \, s \in \DefectSet_n \big\}}$ 
is the complete set of non-isomorphic simple $\TL_n(\nu)$-modules. 
If $\TL_n(\nu)$ is not semisimple, some of its standard modules $\smash{\LS_n\super{s}}$ are not simple 
(but still indecomposable), and certain quotients of the standard modules are simple, see proposition~\ref{SimpleModuleProp}.

%

\bigskip

With notation~(\ref{MultiindexNotation}--\ref{ndefn}), we next introduce basic representations of 
the Jones-Wenzl algebra $\WJ_\multii(\nu) \subset \TL_{\Summed_\multii}(\nu)$.
Like for the Temperley-Lieb algebra, the simple modules of $\WJ_\multii(\nu)$ are quotients of its standard modules.
Elements in the latter are \emph{$(\multii,s)$-Jones-Wenzl link states}, that is, link states of the form
\begin{align}\label{JWLinkState3} 
\vcenter{\hbox{\includegraphics[scale=0.275]{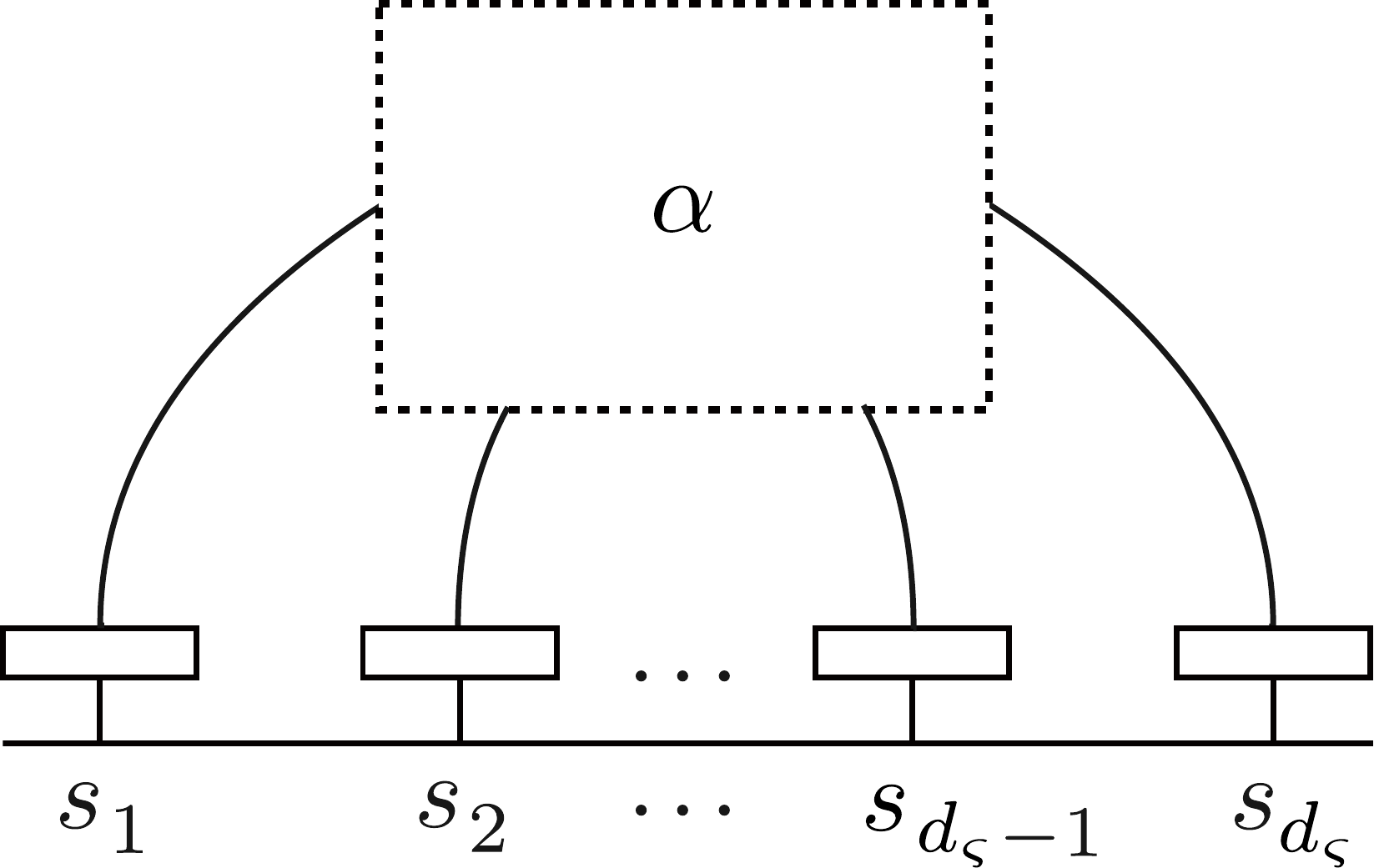} ,}}
\end{align}
for some 
$\alpha \in \smash{\LS_{\Summed_\multii}\super{s}}$.
By property~\eqref{ProjectorID2}, those link states~\eqref{JWLinkState3} that have a link with 
both endpoints at the same projector box are zero.
Also, if $\alpha$ is a $(\Summed_\multii, s)$-link pattern such that~\eqref{JWLinkState3} does not vanish, 
then we call~\eqref{JWLinkState3} a \emph{$(\multii,s)$-Jones-Wenzl link pattern}.
We define the sets of $(\multii,s)$-Jones-Wenzl link patterns and \emph{$\multii$-Jones-Wenzl link patterns} respectively by
\begin{align} \label{ProjsSpDefn2}  
\PP_\multii\super{s} := \WJProj_\multii \LP_{\Summed_\multii}\super{s} \setminus \{0\},
\qquad \qquad 
\PP_\multii := \WJProj_\multii \LP_{\Summed_\multii} \setminus \{0\}
= \bigcup_{s \, \in \, \DefectSet_\multii} \PP_\multii\super{s},
\end{align}
where $\DefectSet_\multii$ is the set of all integers $s \geq 0$ such that the set 
$\smash{\PP_\multii\super{s}}$ is not empty.
By breaking links into pairs of defects, it becomes evident that there are integers $\smin(\multii), \smax(\multii) \geq 0$ 
such that
\begin{align}\label{DefSet2} 
\DefectSet_\multii = \{ \smin(\multii), \smin(\multii) + 2, \; \ldots, \; \smax(\multii) \} . 
\end{align}
We refer to~\cite[section~\red{2B}]{fp0} 
for a complete determination of the set $\DefectSet_\multii$. 
Here we just note the obvious fact that the $(\multii,\Summed_\multii)$-Jones-Wenzl link pattern 
with only defects and no links has $\smax(\multii)$ defects, so
\begin{align} \label{smaxeq} 
\smax(\multii) = \Summed_\multii \overset{\eqref{ndefn}}{:=} \sIndex_1 + \sIndex_2 + \dotsm + \sIndex_{\np_\multii} .
\end{align}
Also, in the special cases of $\np_\multii = 1,2$, the set~\eqref{DefSet2} has a simple form~\cite[lemma~\red{2.1}]{fp0}:
\begin{align} \label{SpecialDefSet} 
\DefectSet\sub{s} = \{s\} \qquad \qquad \textnormal{and} \qquad\qquad \DefectSet\sub{r,t} = \{ |r-t| , |r-t| + 2, \ldots, r+t\} . 
\end{align}
The sets $\smash{\PP_\multii\super{s}}$ and $\PP_\multii$ 
span the complex vector spaces of $(\multii,s)$-Jones-Wenzl link states and \emph{$\multii$-Jones-Wenzl link states},
respectively defined as
\begin{align} \label{ProjsSpDefn}  
\PS_\multii\super{s} &:= \WJProj_\multii  \LS_{\Summed_\multii}\super{s} = \big\{ \WJProj_\multii \alpha \,|\, \alpha \in \LS_{\Summed_\multii}\super{s} \big\}, \\
\PS_\multii &:= \WJProj_\multii  \LS_{\Summed_\multii} = \big\{ \WJProj_\multii \alpha \,|\, \alpha \in \LS_{\Summed_\multii} \big\} 
= \bigoplus_{s \, \in \, \DefectSet_\multii} \PS_\multii\super{s} .
\end{align}
From~\eqref{ProjectorID2}, it is straightforward to see that they are $\WJ_\multii(\nu)$-modules. We call 
$\smash{\PS_\multii\super{s}}$ a $\WJ_\multii(\nu)$-\emph{standard module} and 
$\PS_\multii$ the $\WJ_\multii(\nu)$-\emph{link state module}.
The representation theory of the Jones-Wenzl algebra is analogous to that of the Temperley-Lieb algebra,
and we summarize salient facts about it in theorem~\ref{BigSSTHM} below.

We also note that by~\cite[corollaries~\red{2.7} and~\red{B.2}]{fp0}, the dimension of the Jones-Wenzl algebra is
\begin{align} \label{DimOfWJ}
\dim \WJ_\multii(\nu) = \sum_{s \, \in \, \DefectSet_\multii} \big(\Dim_\multii\super{s}\big)^2 ,
\end{align}
where for each $s \in \DefectSet_\multii$, using notation~\eqref{hats}, 
$\smash{\{\Dim_\multii\super{s}\}_{s \in \DefectSet_\multii}}$ is the unique solution to the recursion
\begin{align} 
\label{PreRecursion2} 
\Dim_\multii\super{s} \hspace{.1cm}
 = \hspace{.5cm} \sum_{\mathclap{r \, \in \, \DefectSet_{\lds} \, \cap \, \DefectSet\sub{s,t} }}  
\quad \Dim_{\lds}\super{r} 
\quad \qquad \text{and} \quad \qquad \Dim \sub{s}\super{s} = 1 .
\end{align}

\subsection{Bilinear form}

The simple modules of the Temperley-Lieb algebra are quotients of the standard modules 
by the radical of a natural invariant bilinear form. This bilinear form is given by evaluations of ``networks.''
A \emph{network} is a collection of nonintersecting, non-self-intersecting planar loops and paths within a rectangle. 
A path in a network can be a \emph{through-path}, which is a curve that respectively enters and 
exits the network at the bottom and top sides of the rectangle, or
a \emph{turn-back path}, which enters and exits 
at the same side of the rectangle, either top or bottom:
\begin{align}
\vcenter{\hbox{\includegraphics[scale=0.275]{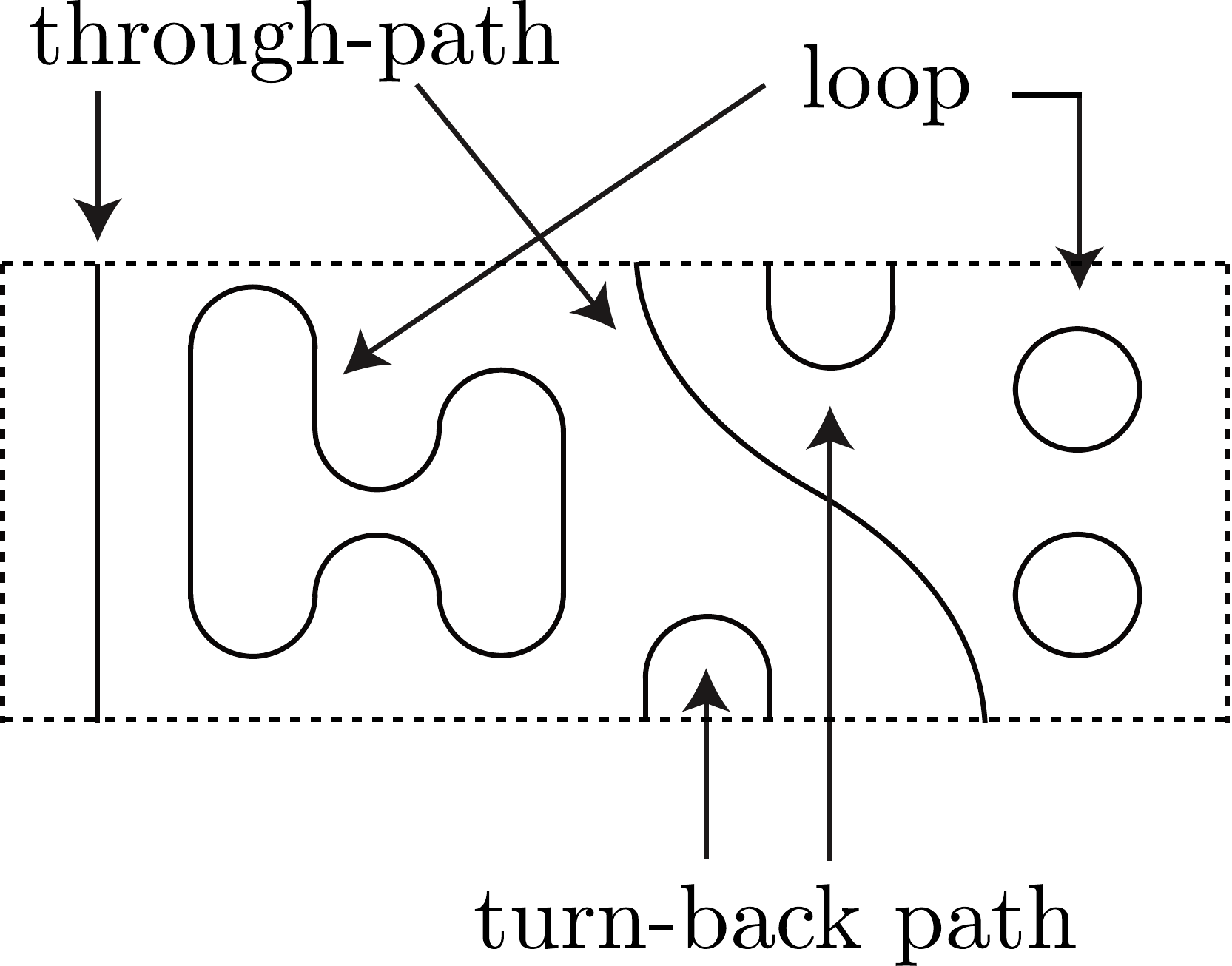} .}}
\end{align}
The \emph{evaluation of a network $T$} is a complex number obtained by
assigning all loops, through-paths, and turn-back paths in $T$ the following weights in $\bC$:
\begin{alignat}{7} 
\label{LoopWeight} 
& \text{loop weight (fugacity):} \qquad \qquad 
& \vcenter{\hbox{\includegraphics[scale=0.275]{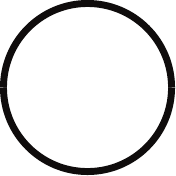}}}  \qquad & \text{and} \qquad  
& \raisebox{-18pt}{\includegraphics[scale=0.275]{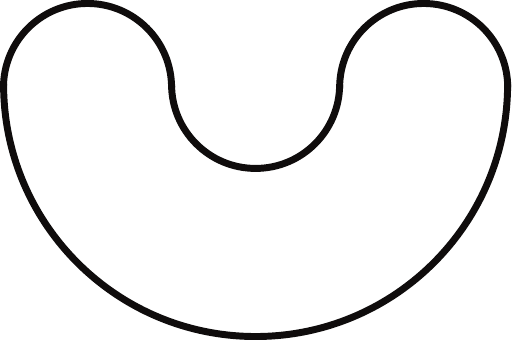}}  \qquad & \text{and} \qquad
& \raisebox{-11pt}{\includegraphics[scale=0.275]{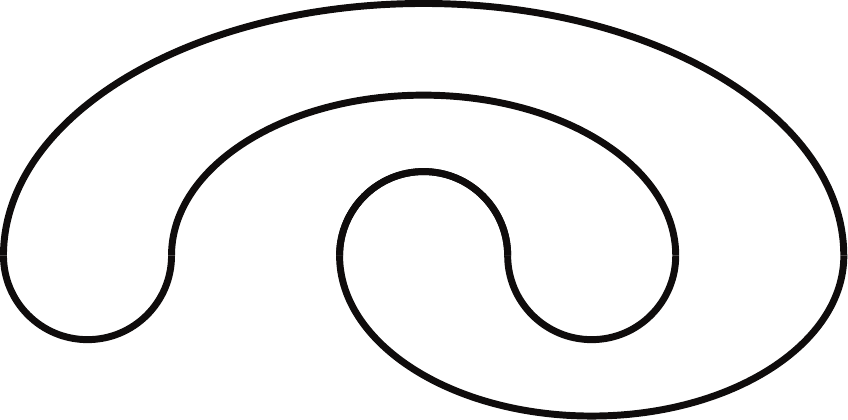}}  \qquad & \text{etc.}  
\quad = \quad \nu , \\[1em]
\label{ThroughPathWeight}
& \text{through-path weight:} \qquad \qquad
& \vcenter{\hbox{\hspace*{-4mm} \includegraphics[scale=0.275]{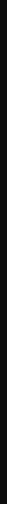}}}  \qquad & \text{and} \qquad
& \vcenter{\hbox{\includegraphics[scale=0.275]{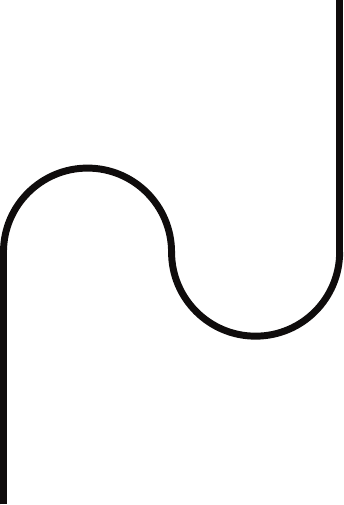} \hspace*{2mm}}}  \qquad & \text{and} \qquad
& \vcenter{\hbox{\includegraphics[scale=0.275]{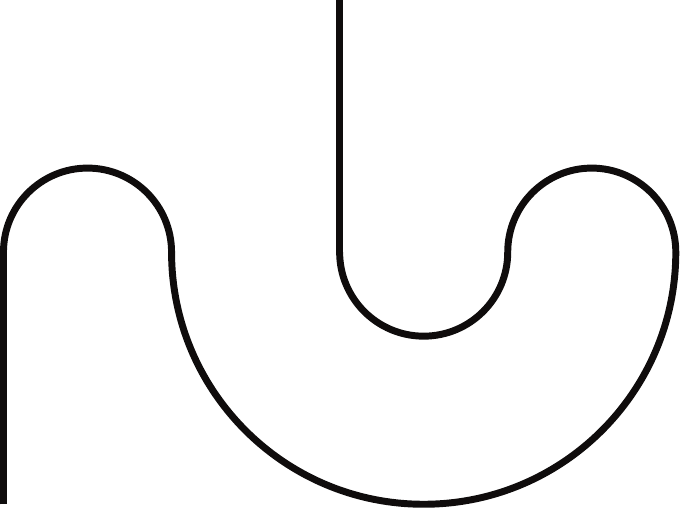}\hspace*{3mm}}}  \qquad & \text{etc.} 
\quad = \quad 1 , \\[1em]
\label{TurnBack0}
& \text{turn-back path weight:} \qquad \qquad
& \raisebox{-5pt}{\includegraphics[scale=0.275]{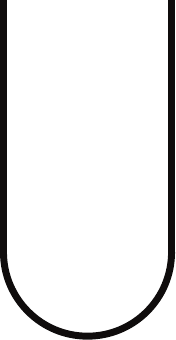}}  \qquad & \text{and} \qquad
& \raisebox{-5pt}{\includegraphics[scale=0.275]{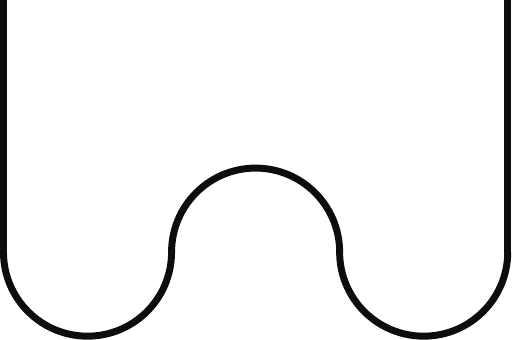}}  \qquad & \text{and} \qquad
& \raisebox{-19pt}{\includegraphics[scale=0.275]{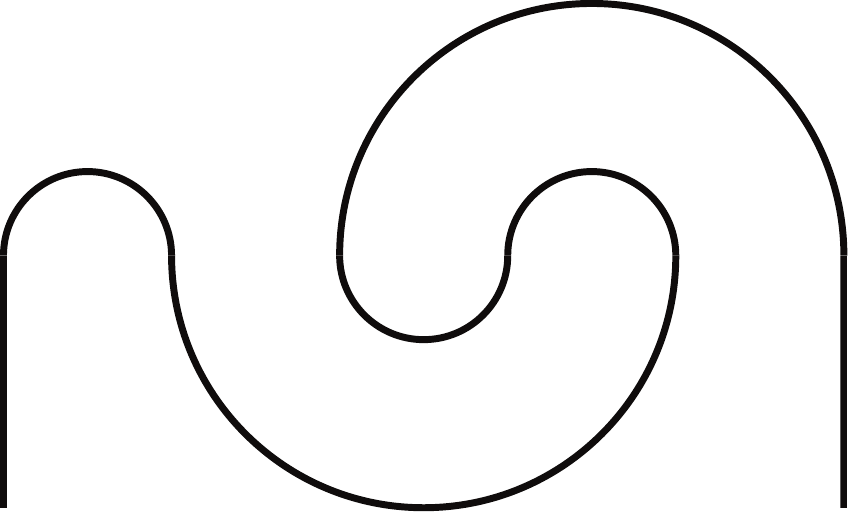}}  \qquad & \text{etc.} 
\quad = \quad 0 ,
\end{alignat}
and multiplying all of these factors together:
\begin{align} \label{evT} 
(\, T \,) &:= \prod \{ \text{the weights of all connected components in the network $T$} \} \\
\label{evT2} & = 
\begin{cases} 
\nu^{\textnormal{\# loops in $T$}} , 
& \textnormal{if the network $T$ has no turn-back path} , \\ 
0 , & \textnormal{if the network $T$ has a turn-back path,}
\end{cases}
\end{align}
where ``\# loops in $T$" stands for the number of loops in the network $T$.

We define the bilinear form on the link state module $\LS_n$ as follows. For two link patterns 
$\alpha,\beta \in \LP_n$, we horizontally reflect $\alpha$ so it is upside down, we concatenate it to $\beta$ from below, and delete the overlapping 
horizontal lines of $\alpha$ and $\beta$.  The resulting diagram is a network $\alpha \BarAction \beta$.  For instance, we have
\begin{align} 
\label{LinkStateRule} 
& \alpha \; = \; \raisebox{1pt}{\includegraphics[scale=0.275]{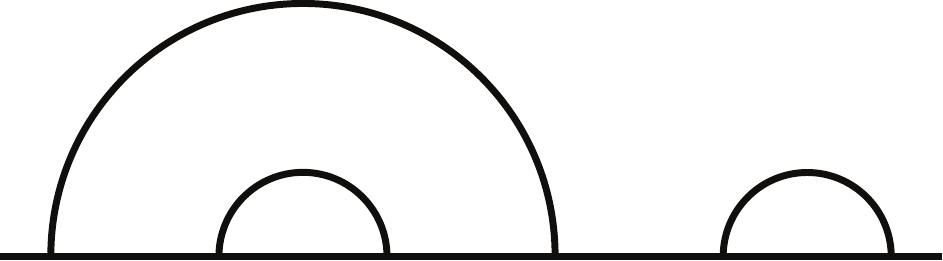}} \; , \qquad 
\beta \; = \; \raisebox{1pt}{\includegraphics[scale=0.275]{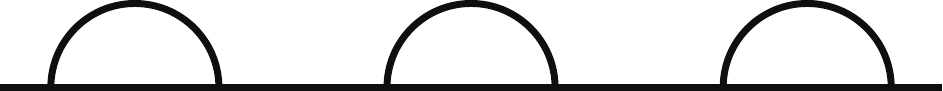}}
\qquad \qquad \Longrightarrow \qquad \qquad
\alpha \BarAction \beta \; = \;
\raisebox{-19pt}{\includegraphics[scale=0.275]{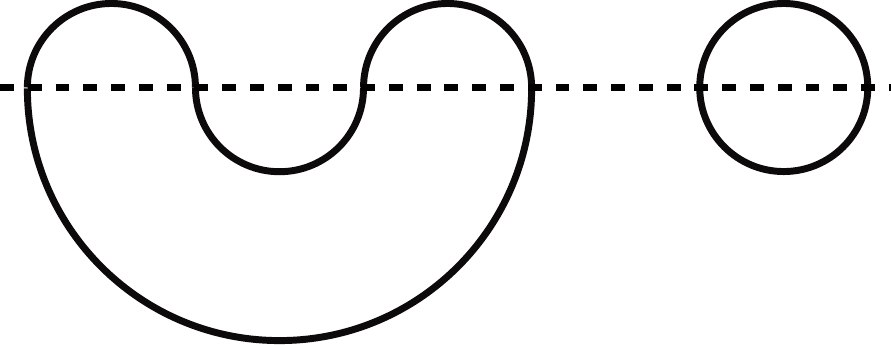}} \; , \\[.3em]
\label{LinkStateRuleDefect2}
& \alpha \; = \; \raisebox{1pt}{\includegraphics[scale=0.275]{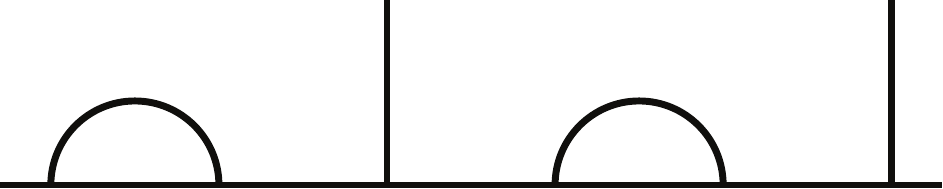}} \; , \qquad 
\beta \; = \; \raisebox{1pt}{\includegraphics[scale=0.275]{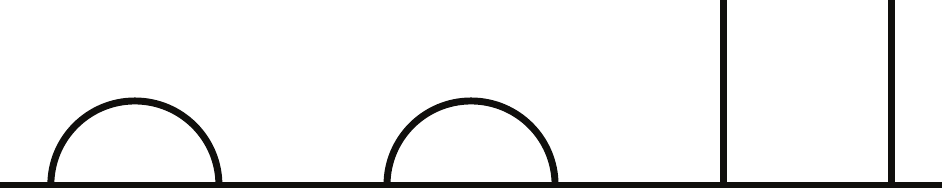}}
\qquad \qquad \Longrightarrow \qquad \qquad
\alpha \BarAction \beta \; = \;
\vcenter{\hbox{\includegraphics[scale=0.275]{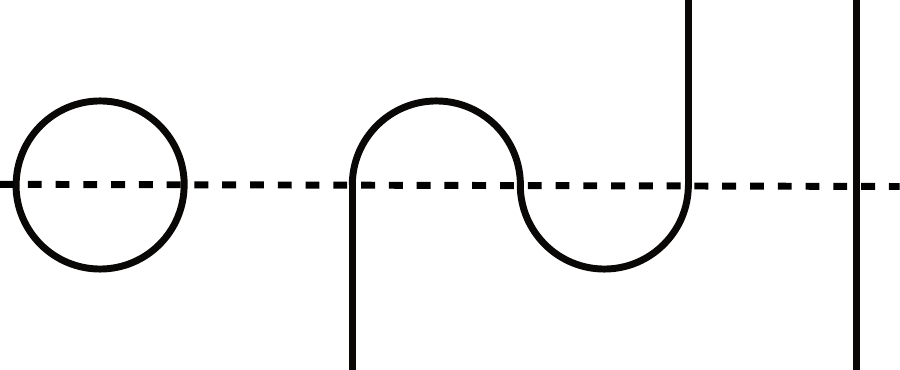}}} \; , \\[1em]
\label{LinkStateRuleDefect1}
& \alpha \; = \; \raisebox{1pt}{\includegraphics[scale=0.275]{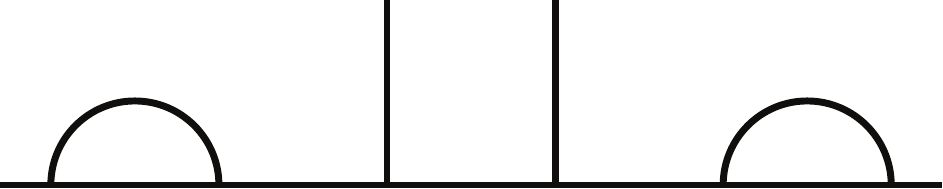}} \; , \qquad 
\beta \; = \; \raisebox{1pt}{\includegraphics[scale=0.275]{e-Connectivities6.pdf}} 
\qquad \qquad \Longrightarrow \qquad \qquad
\alpha \BarAction \beta \; = \;
\vcenter{\hbox{\includegraphics[scale=0.275]{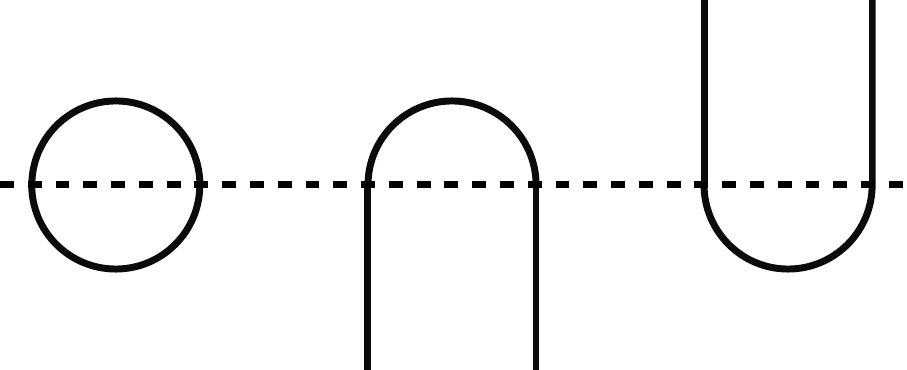}}} \; .
\end{align}
Then we define the \emph{link state bilinear form} 
$\smash{\BiForm{\cdot}{\cdot}} \colon \LS_n \times \LS_n \longrightarrow \bC$ by bilinear extension of the rule
\begin{align} \label{LSBiForm} 
(\alpha,\beta) \quad \longmapsto \quad \BiForm{\alpha}{\beta} ,
\end{align}
for 
all link patterns $\alpha,\beta \in \LP_n$.
If $\alpha,\beta \in \LS_0$, then the product in~\eqref{evT} is empty, so we take $\BiForm{\alpha}{\beta} = 1$.  
As examples, the bilinear forms $\BiForm{\alpha}{\beta}$ of the link patterns $\alpha$ and $\beta$ in 
(\ref{LinkStateRule},~\ref{LinkStateRuleDefect2},~\ref{LinkStateRuleDefect1}) respectively evaluate to 
\begin{align}
\bigg( \; \raisebox{-18pt}{\includegraphics[scale=0.275]{e-Connectivities4.pdf}} \; \bigg) \;
= \; \nu^2 , \qquad \qquad
\bigg( \; \vcenter{\hbox{\includegraphics[scale=0.275]{e-Connectivities10.pdf}}}  \; \bigg) \;
= \; \nu , \qquad \qquad
\bigg( \; \vcenter{\hbox{\includegraphics[scale=0.275]{e-Connectivities8.pdf}}}  \; \bigg) \;
= \; 0 . 
\end{align}

The link state bilinear form $\BiForm{\cdot}{\cdot}$ is \emph{symmetric}:
$\BiForm{\alpha}{\beta} = \BiForm{\beta}{\alpha}$, and \emph{invariant} in the sense that,
for all link patterns $\alpha, \beta \in \LS_n$ and for all tangles $T \in \TL_n(\nu)$, we have
\begin{align}
\label{InvarProp} \BiForm{\alpha}{T \beta} = \; & \BiForm{T^\dagger \alpha}{\beta} ,
\end{align}
where $T^\dagger \in \TL_n(\nu)$ denotes the tangle obtained by reflecting $T$ about a vertical axis:
\begin{align}\label{DaggerRefl} 
T \quad = \quad \vcenter{\hbox{\includegraphics[scale=0.275]{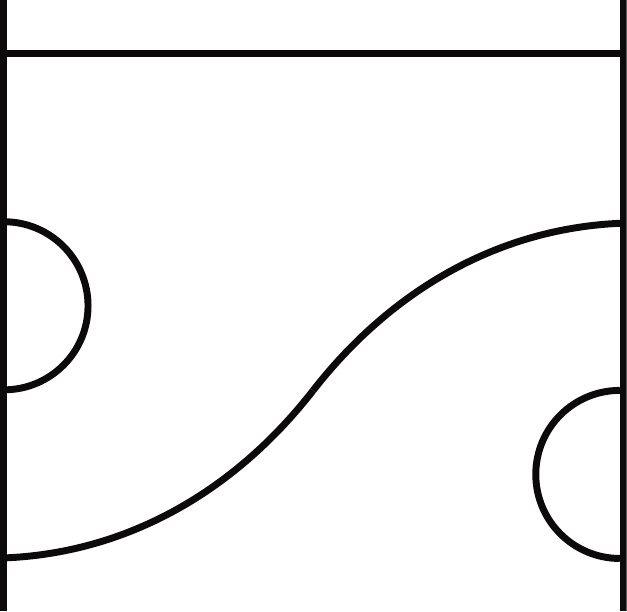}}} 
\qquad \qquad \Longrightarrow \qquad \qquad
T^\dagger \quad = \quad \vcenter{\hbox{\includegraphics[scale=0.275]{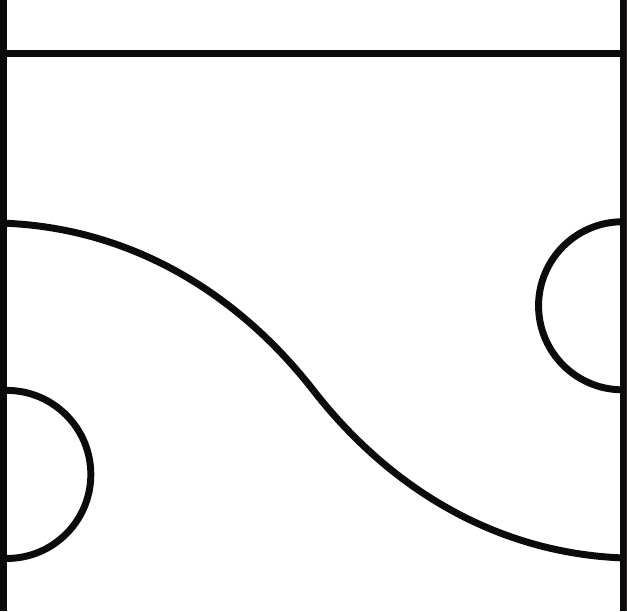} .}}
\end{align}
Invariance property~\eqref{InvarProp} guarantees that the radical of the bilinear form is a $\TL_n(\nu)$-submodule of $\LS_n$.

The restriction of the bilinear form $\BiForm{\cdot}{\cdot}$ to $\PS_\multii \subset \LS_{\Summed_\multii}$
defines a bilinear form on $\multii$-Jones-Wenzl link states. 
Property~\eqref{InvarProp} shows that the radical of this bilinear form,
\begin{align} 
\rad \PS_\multii := \; & \big\{\alpha \in \PS_\multii \, \big| \, \text{$\BiForm{\alpha}{\beta} = 0$ for all $\beta \in \PS_\multii$} \big\},
\end{align}
is a $\WJ_\multii(\nu)$-submodule of $\PS_\multii$. 
Furthermore, because $\multii$-Jones-Wenzl link patterns with different numbers of defects are orthogonal, $\rad \PS_\multii$
equals a direct sum of the radicals of the standard modules $\smash{\PS_\multii\super{s}}$:
\begin{align} 
\rad \PS_\multii = \bigoplus_{s \, \in \, \DefectSet_\multii} \rad \PS_\multii\super{s}, \qquad \text{where} \quad
\rad\smash{\PS_\multii\super{s}} := \; & \big\{\alpha \in\smash{\PS_\multii\super{s}} \, \big| \, \text{$\BiForm{\alpha}{\beta} = 0$ 
for all $\beta \in \smash{\PS_\multii\super{s}}$} \big\} ,
\end{align}
and, for each $s \in \DefectSet_\multii$, the radical
$\rad \smash{\PS_\multii\super{s}}$ is a $\WJ_\multii(\nu)$-submodule of $\smash{\PS_\multii\super{s}}$.

If $\rad \PS_\multii = \{0\}$, then 
for the $(\multii,s)$-Jones-Wenzl link pattern basis $\smash{\{ \alpha \in \smash{\PP_\multii\super{s}}\}}$ 
of the $\WJ_\multii(\nu)$-standard module $\smash{\PS_\multii\super{s}}$, we define 
the dual basis $\smash{\{ \alpha^\cheque \, | \, \in \smash{\PP_\multii\super{s}}\}}$ via the property
\begin{align} \label{DualLS} 
\BiForm{\alpha^\cheque}{\beta} = \delta_{\alpha,\beta} .
\end{align}

\subsection{Simple modules and semisimplicity}

Next, we summarize facts from the representation theory of the Jones-Wenzl algebra.
(We note that the Temperley-Lieb algebra is also included as a special case, with $\multii = (1,1,\ldots,1)$.)

\begin{theorem} \textnormal{\cite[theorem~\red{6.9}]{fp0}} \label{BigSSTHM}
Suppose $\max \multii < \ppmin(q)$. The following statements are equivalent:
\begin{enumerate}[label = \arabic*., ref = \arabic*]
\itemcolor{red}
\item \label{SSitem0}
The Jones-Wenzl algebra $\WJ_\multii(\nu)$ is semisimple, i.e., its Jacobson radical is trivial: $\rad \WJ_\multii(\nu) = \{0\}$.

\item \label{SSitem4}
We have $\rad \PS_\multii = \{0\}$.

\item \label{SSitem5}
The link state representation induced by the action of $\WJ_\multii (\nu)$ on $\PS_\multii$ is faithful.

\item \label{SSitem6}
The link state representation 
induces an isomorphism of algebras from $\WJ_\multii (\nu)$ to 
$\smash{\underset{s \, \in \, \DefectSet_\multii}{\bigoplus} \End \PS_\multii\super{s}}$.

\item \label{SSitem2}
The collection $\smash{\big\{ \PS_\multii\super{s} \,\big| \, s \in \DefectSet_\multii \big\}}$ 
is the complete set of non-isomorphic simple $\WJ_\multii(\nu)$-modules.

\end{enumerate}
If $\Summed_\multii < \ppmin(q)$, then all of the above properties hold.
\end{theorem}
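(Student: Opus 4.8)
The plan is to prove two things: that conditions~\ref{SSitem0}--\ref{SSitem2} are equivalent, and that they all hold once $\Summed_\multii < \ppmin(q)$. The only substantive ingredient I would use is the cellular structure of $\WJ_\multii(\nu)$ established in Section~\ref{CellSec} (equivalently, imported from~\cite{fp0}): $\WJ_\multii(\nu)$ is a cellular algebra in the sense of Graham--Lehrer~\cite{gl}, with cell (standard) modules $\{\smash{\PS_\multii\super{s}}\}_{s \in \DefectSet_\multii}$, the cell order refined by the defect number, and cell bilinear forms equal to the restrictions of the invariant link-state form $\BiForm{\cdot}{\cdot}$. I would then exploit three consequences of this package: a cellular basis of $\WJ_\multii(\nu)$ is indexed by pairs of $\multii$-Jones--Wenzl link patterns, so $\dim \WJ_\multii(\nu) = \sum_{s} (\dim \smash{\PS_\multii\super{s}})^2$ (cf.~\eqref{DimOfWJ}); the defect number is preserved by the link-state action, so $\PS_\multii = \bigoplus_s \smash{\PS_\multii\super{s}}$ is a decomposition into $\WJ_\multii(\nu)$-submodules with $\rad \PS_\multii = \bigoplus_s \rad \smash{\PS_\multii\super{s}}$; and the nonzero quotients $\smash{\PS_\multii\super{s}}/\rad \smash{\PS_\multii\super{s}}$ form a complete set of pairwise non-isomorphic simple $\WJ_\multii(\nu)$-modules. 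I expect that producing this cellular package --- writing down the cellular basis, verifying the cellular axioms, and checking that the resulting cell form is $\BiForm{\cdot}{\cdot}$ --- is the step that really requires work; once it is available, the rest is formal.

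Granting the package, I would close the following cycle. For \ref{SSitem0} $\Leftrightarrow$ \ref{SSitem4}: comparing $\dim \WJ_\multii(\nu) = \sum_{s} (\dim \smash{\PS_\multii\super{s}})^2$ with $\dim(\WJ_\multii(\nu)/\rad \WJ_\multii(\nu)) = \sum_{s} (\dim \smash{\PS_\multii\super{s}}/\rad \smash{\PS_\multii\super{s}})^2$ (Artin--Wedderburn over $\bC$, then the cellular classification of the simples), one sees that $\rad \WJ_\multii(\nu) = \{0\}$ if and only if every $\rad \smash{\PS_\multii\super{s}}$ is zero, i.e.\ if and only if $\rad \PS_\multii = \{0\}$. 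For \ref{SSitem0} $\Rightarrow$ \ref{SSitem5}: a semisimple cellular algebra has all cell modules simple and exhausting its simple modules, so the link-state action on $\PS_\multii = \bigoplus_s \smash{\PS_\multii\super{s}}$ realizes the Wedderburn decomposition and is in particular faithful. For \ref{SSitem5} $\Rightarrow$ \ref{SSitem6}: the link-state action maps $\WJ_\multii(\nu)$ into $\bigoplus_s \End \smash{\PS_\multii\super{s}} \subseteq \End \PS_\multii$, injectively by \ref{SSitem5}; since source and target both have dimension $\sum_s (\dim \smash{\PS_\multii\super{s}})^2$, the map is an isomorphism. For \ref{SSitem6} $\Rightarrow$ \ref{SSitem2}: the simple modules of a finite product of full matrix algebras over $\bC$ are exactly its column modules, here the pairwise non-isomorphic $\smash{\PS_\multii\super{s}}$. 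For \ref{SSitem2} $\Rightarrow$ \ref{SSitem0}: then $\dim(\WJ_\multii(\nu)/\rad \WJ_\multii(\nu)) = \sum_s (\dim \smash{\PS_\multii\super{s}})^2 = \dim \WJ_\multii(\nu)$, forcing $\rad \WJ_\multii(\nu) = \{0\}$. Since \ref{SSitem6} $\Rightarrow$ \ref{SSitem5} is trivial, this proves the equivalence.

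For the last assertion, suppose $\Summed_\multii < \ppmin(q)$ (which also gives $\max \multii < \ppmin(q)$, so the diagram algebra $\WJ_\multii(\nu)$ is defined). By the semisimplicity criterion for the Temperley--Lieb algebra recalled in Section~\ref{Intro} (see~\cite{rsa}), $\TL_{\Summed_\multii}(\nu)$ is semisimple. The composite projector $\WJProj_\multii$~\eqref{WJCompProj} is a nonzero idempotent of $\TL_{\Summed_\multii}(\nu)$ --- it is a juxtaposition of Jones--Wenzl projector boxes supported on disjoint groups of strands, each idempotent by property~\ref{wj1item}, and it has coefficient $1$ on the identity diagram --- and by definition~\eqref{WJAdef} the Jones--Wenzl algebra is the corner algebra $\WJ_\multii(\nu) = \WJProj_\multii \, \TL_{\Summed_\multii}(\nu)\, \WJProj_\multii$ with unit $\WJProj_\multii$. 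Since $\rad(eAe) = e\,(\rad A)\,e$ for an idempotent $e$ in a ring $A$, and $\rad \TL_{\Summed_\multii}(\nu) = \{0\}$, we get $\rad \WJ_\multii(\nu) = \{0\}$; that is, \ref{SSitem0} holds, and by the equivalence just proved so do \ref{SSitem4}--\ref{SSitem2}.
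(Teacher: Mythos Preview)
Your argument is correct, but it is not the route the paper takes. The paper's proof is a pure citation: it invokes \cite[theorem~6.9]{fp0} for the valenced Temperley--Lieb algebra $\TL_\multii(\nu)$ and then transports the statement to $\WJ_\multii(\nu)$ via the algebra and standard-module isomorphisms of \cite[corollaries~B.2 and~B.3]{fp0}. No cellular machinery, no dimension counting, and no corner-algebra argument appears there.

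Your approach is the one the paper itself advertises as an alternative just after proposition~\ref{CellPropo}: use the cellular structure of $\WJ_\multii(\nu)$ (proved directly in that proposition under $\max\multii<\ppmin(q)$, independently of theorem~\ref{BigSSTHM}) together with the Graham--Lehrer package. Your cycle of implications is clean; the only step worth flagging is \ref{SSitem5} $\Rightarrow$ \ref{SSitem6}, where you correctly note that the action lands in $\bigoplus_s \End\smash{\PS_\multii\super{s}}$ because defects are preserved, and then match dimensions via~\eqref{DimOfWJ}. For the final clause, your corner-algebra argument $\rad(eAe)=e(\rad A)e$ with $e=\WJProj_\multii$ and $A=\TL_{\Summed_\multii}(\nu)$ semisimple is a nice shortcut that bypasses any analysis of the standard modules; the paper does not use this observation. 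The trade-off: the paper's route keeps the present article light by outsourcing the work, while yours is self-contained modulo the Graham--Lehrer theory and the proof of proposition~\ref{CellPropo}.
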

\begin{proof}
The assertion follows from~\cite[theorem~\red{6.9}]{fp0} combined with the following facts:
by~\cite[corollary~\red{B.2}]{fp0}, the algebras $\TL_\multii (\nu)$ 
(the valenced Temperley-Lieb algebra defined in~\cite[section~\red{2D}]{fp0}) and $\WJ_\multii (\nu)$ are isomorphic, 
and their standard modules 
are isomorphic too, and furthermore, by~\cite[corollary~\red{B.3}]{fp0}, the radicals of the 
standard modules are isomorphic as well.
\end{proof}

In~\cite{fp0}, we also give an explicit characterization in terms of $q$ for
properties~\ref{SSitem0}--\ref{SSitem2} stated in theorem~\ref{BigSSTHM} to hold.
When none of the properties~\ref{SSitem0}--\ref{SSitem2} holds, $\WJ_\multii(\nu)$ fails to be semisimple.
However, also in this case, the complete set of simple $\WJ_\multii(\nu)$-modules is known to be given by 
quotients of the standard modules by their radicals:

\begin{prop} \textnormal{\cite[proposition~\red{6.7}]{fp0}}  \label{SimpleModuleProp}
Suppose $\max \multii < \ppmin(q)$.  The collection 
\begin{align} 
\big\{ \PS_\multii\super{s} / \rad \PS_\multii\super{s} \,\big| \, \rad \PS_\multii\super{s} \neq \PS_\multii\super{s} \big\}
\end{align}
is the complete set of non-isomorphic simple $\WJ_\multii(\nu)$-modules.
\end{prop}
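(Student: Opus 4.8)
We prove the proposition by recognizing $\WJ_\multii(\nu)$ as a cellular algebra in the sense of Graham and Lehrer~\cite{gl} and invoking their classification of simple modules. The starting observation is that, since $\max\multii<\ppmin(q)$, every projector box in $\WJProj_\multii$~\eqref{WJCompProj} is defined, $\WJProj_\multii$ is an idempotent of $\TL_{\Summed_\multii}(\nu)$ (the non-overlapping projector boxes commute and each is idempotent by~\eqref{ProjectorID0}), and $\WJProj_\multii$ is fixed by the vertical-reflection anti-involution $T\mapsto T^\dagger$ of~\eqref{DaggerRefl}, because each Jones-Wenzl projector is $\dagger$-symmetric (by uniqueness, $\WJProj\sub{s}^\dagger$ again satisfies \ref{wj1item}--\ref{wj2item}). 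Moreover $\WJ_\multii(\nu)=\WJProj_\multii\,\TL_{\Summed_\multii}(\nu)\,\WJProj_\multii$ by~\eqref{WJAdef}, so $\WJ_\multii(\nu)$ is the idempotent truncation $e A e$ of $A=\TL_{\Summed_\multii}(\nu)$ by the $\dagger$-fixed idempotent $e=\WJProj_\multii$.

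\textbf{Step 1: the cellular datum.} The first step is to equip $\WJ_\multii(\nu)$ with a cell structure. One route is direct: check that the set $\PD_\multii$ of $\multii$-Jones-Wenzl link diagrams — each of which splits canonically into a ``bottom half'' and a ``top half'', i.e.\ a pair of $(\multii,s)$-Jones-Wenzl link patterns with a common number $s\in\DefectSet_\multii$ of defects — is a cellular basis, with poset $\DefectSet_\multii$ under a suitable order and anti-involution $\dagger$; this is precisely what the explicit cellular bases constructed in section~\ref{CellSec} supply. A shorter route is the idempotent-truncation fact for cellular algebras (see~\cite{gl}): since $\TL_{\Summed_\multii}(\nu)$ is cellular with cell modules the standard modules $\LS_{\Summed_\multii}\super{s}$ and anti-involution $\dagger$, and $e=\WJProj_\multii=e^\dagger$ is idempotent, the truncation $\WJ_\multii(\nu)=e\,\TL_{\Summed_\multii}(\nu)\,e$ is again cellular, its cell modules being $\WJProj_\multii\LS_{\Summed_\multii}\super{s}=\PS_\multii\super{s}$ for exactly those $s$ with $\PS_\multii\super{s}\neq\{0\}$, i.e.\ for $s\in\DefectSet_\multii$, and its cell bilinear forms being the restrictions of the link state form $\BiForm{\cdot}{\cdot}$ to $\PS_\multii\super{s}$, whose radicals are by definition $\rad\PS_\multii\super{s}$.

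\textbf{Steps 2 and 3: classification and index set.} Next I would quote the Graham--Lehrer structure theorem: over the field $\bC$, for each $s$ whose cell form is not identically zero the quotient $\PS_\multii\super{s}/\rad\PS_\multii\super{s}$ is absolutely simple, these quotients are pairwise non-isomorphic, and together they form a complete set of representatives of the isomorphism classes of simple $\WJ_\multii(\nu)$-modules. Finally I would translate the surviving index set: the cell form on $\PS_\multii\super{s}$ vanishes identically exactly when $\rad\PS_\multii\super{s}=\PS_\multii\super{s}$, while for $s\notin\DefectSet_\multii$ the module $\PS_\multii\super{s}$ is $\{0\}$ and contributes nothing; hence the simple modules are precisely those $\PS_\multii\super{s}/\rad\PS_\multii\super{s}$ with $\rad\PS_\multii\super{s}\neq\PS_\multii\super{s}$, which is the assertion. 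As an even quicker alternative — the one used elsewhere in this article, cf.\ the proof of theorem~\ref{BigSSTHM} — one transports the statement along the algebra isomorphism $\WJ_\multii(\nu)\cong\TL_\multii(\nu)$ of~\cite[appendix~\red{B}]{fp0}, under which standard modules and their bilinear-form radicals correspond, reducing the proposition verbatim to~\cite[proposition~\red{6.7}]{fp0}.

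\textbf{Main obstacle.} The only step with genuine content is Step 1, namely the verification of the cellular axioms: one must show that left multiplication by an element of $\WJ_\multii(\nu)$ sends a basis diagram with $s$ defects to a combination of diagrams with at most $s$ defects whose ``top-level'' coefficients depend only on the multiplier and on the top half, while any move that would raise the defect count creates a turn-back link incident to a projector box and so vanishes by~\eqref{ProjectorID2}. The feature absent from plain Temperley-Lieb is that each $\WJProj_\multii$ is itself a nontrivial linear combination of link diagrams, so one must confirm that the projector expansions do not contaminate the leading coefficients; this is handled using~\eqref{ProjectorID0}, the absorption property~\eqref{ProjectorID1}, and the fact that a projector box swallows any through-connecting diagram into a scalar times the identity plus strictly lower terms. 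The idempotent-truncation route packages exactly this bookkeeping abstractly, which is why I would prefer it; once the cell structure is in place, Steps 2 and 3 are formal.
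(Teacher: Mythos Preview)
Your proposal is correct. The paper's own proof is precisely your ``even quicker alternative'': it simply transports~\cite[proposition~\red{6.7}]{fp0} along the isomorphisms of~\cite[corollaries~\red{B.2} and~\red{B.3}]{fp0}, in one sentence. Your primary route via cellularity is a genuinely different argument, and in fact it is the one the paper sets up in section~\ref{CellSec} (proposition~\ref{CellPropo} and the discussion after it) but does not actually invoke for this proposition. The trade-off is that the transport proof is shorter but outsources all content to~\cite{fp0}, whereas your cellular argument is more self-contained and explains \emph{why} the result holds structurally; the paper itself remarks that the Graham--Lehrer machinery would yield these facts abstractly. One small correction: the idempotent-truncation result you attribute to~\cite{gl} is due to K\"onig--Xi~\cite[proposition~\red{4.3}]{kxi}, as the paper notes just after proposition~\ref{CellPropo}. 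Your ``main obstacle'' discussion is accurate and matches the paper's verification of axiom~\ref{Cellitem3} in the proof of proposition~\ref{CellPropo}.
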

\begin{proof}
This follows from~\cite[proposition~\red{6.7}]{fp0} by the isomorphisms~\cite[corollaries~\red{B.2} and~\red{B.3}]{fp0}.
\end{proof}

\subsection{Cellularity of the Jones-Wenzl algebra} \label{CellSec}

The purpose of this section is to specify the connection of the Jones-Wenzl algebra $\WJ_\multii(\nu)$ to 
the theory of ``cellular algebras'' \`a la J.~Graham and G.~Lehrer~\cite{gl}.
Many diagram algebras, such as the Temperley-Lieb algebra, are cellular.
Graham and Lehrer have developed a category-theoretic approach for investigating the representation theory 
of such algebras, and indeed, quite general results can be obtained using this abstract theory.

Cellular algebras are characterized by the existence of a ``cellular basis'' with certain useful properties.

\begin{defn} \textnormal{\cite[Definition~\red{1.1}]{gl}} \label{CellDef}
A \emph{cellular $\bC$-algebra} is an associative algebra $(\mathsf{A}, \cdot)$ over the field $\bC$
equipped with the additional structure of a \emph{cell datum} $(\Poset, \Cell, \CellMap, \star)$, where
\begin{enumerate} 
\itemcolor{red}

\item $\Poset$ is a finite partially ordered set,

\item $\smash{\Cell = \{ \Cell(\poset) \, | \, \poset \in \Poset \}}$ is a collection of finite sets $\smash{\Cell(\poset)}$,

\item $\CellMap$ is a function from 
$\smash{\underset{\poset \, \in \, \Poset}{\bigsqcup} \Cell(\poset) \times \Cell(\poset)}$ to $\mathsf{A}$, and

\item $\star$ is a $\bC$-linear map from $\mathsf{A}$ to $\mathsf{A}$,
\end{enumerate}
such that the following properties hold:
\begin{enumerate}[label = $\mathrm{c}$\arabic*., ref = $\mathrm{c}$\arabic*]
\itemcolor{red}
\item \label{Cellitem1}
The function 
$\CellMap \colon \smash{\underset{\poset \, \in \, \Poset}{\bigsqcup} \Cell(\poset) \times \Cell(\poset)} \longrightarrow \mathsf{A}$ 
is injective and its image is a basis for $\mathsf{A}$.

\item \label{Cellitem2}
The map $\star \colon \mathsf{A} \longrightarrow \mathsf{A}$ is an anti-involution such that 
$\CellMap(\alpha,\beta)^\star  = \CellMap(\beta,\alpha)$, for all $\alpha,\beta \in \smash{\Cell(\poset)}$ and $\poset \in \Poset$.

\item \label{Cellitem3}
For any two elements $\alpha,\beta \in \smash{\Cell(\poset)}$ and for all elements $a \in \mathsf{A}$, we have
\begin{align}
a \cdot \CellMap(\alpha,\beta) = \sum_{\gamma \, \in \, \Cell(\poset)} c_a(\gamma,\alpha) \, \CellMap(\gamma,\beta)
\qquad \big( \textnormal{mod } \mathsf{A}\super{< \poset} \big) ,
\end{align}
where the coefficients $c_a(\gamma,\alpha) \in \bC$ are independent of $\beta$, and where
\begin{align}
\mathsf{A}\super{< \poset} = \Span \big\{\CellMap(\delta,\epsilon) \, \big| \, \delta,\epsilon \in \Cell(\posetprime) \text{ and } \posetprime < \poset \big\} .
\end{align}
\end{enumerate}
\end{defn}

The image of the function $\CellMap$ is called a \emph{cellular basis} for $\mathsf{A}$.
For each $\poset \in \Poset$, the set $\smash{\Cell(\poset)}$ spans an $\mathsf{A}$-module
$\smash{\CellMod\super{\poset} = \Span \Cell(\poset)}$, called a \emph{cell module}, with $\mathsf{A}$-action
defined by linear extension of~\cite[Definition~\red{2.1}]{gl} 
\begin{align} \label{cellmodule}
a . \beta := \sum_{\gamma \, \in \, \Cell(\poset)} c_a(\gamma,\beta) \gamma ,
\end{align}
for all basis elements $\beta \in \Cell(\poset)$ of the module $\smash{\CellMod\super{\poset}}$ and for all elements $a \in \mathsf{A}$ of the algebra.

It was proved in~\cite{gl} that any cellular algebra admits nice structural properties, including the following:
\begin{itemize}[leftmargin=*]
\item On each $\smash{\CellMod\super{\poset}}$, there exists a symmetric invariant bilinear form $\Lpar \cdot, \cdot \Rpar$, which is determined by the formula
\begin{align} \label{GLBiform}
\CellMap(\alpha,\beta) \CellMap(\gamma,\delta) = \Lpar \beta,\gamma \Rpar \, \CellMap(\alpha, \delta) \qquad \big( \textnormal{mod } \mathsf{A}\super{< \poset} \big) ,
\end{align}
for all $\alpha, \beta, \gamma, \delta \in \smash{\CellMod\super{\poset}}$.

\item 
The radical $\rad \smash{\CellMod\super{\poset}}$
of the bilinear form on $\smash{\CellMod\super{\poset}}$ is an $\mathsf{A}$-submodule of $\smash{\CellMod\super{\poset}}$.
When $\rad \smash{\CellMod\super{\poset}} \neq \smash{\CellMod\super{\poset}}$, the quotient module 
$\smash{\CellMod\super{\poset} / \rad \CellMod\super{\poset}}$ is simple, and $\rad \smash{\CellMod\super{\poset}}$
is equal to the intersection of all maximal proper submodules of $\smash{\CellMod\super{\poset}}$.

\item The collection $\smash{\big\{ \CellMod\super{\poset} / \rad \CellMod\super{\poset} \,\big| \, \rad \CellMod\super{\poset} \neq \CellMod\super{\poset} \big\}}$
is the complete set of non-isomorphic simple $\mathsf{A}$-modules. 

\item The following statements are equivalent:
\begin{itemize}

\item[(a):] The algebra $\mathsf{A}$ is semisimple.

\item[(b):] All of the nonzero $\mathsf{A}$-modules $\smash{\CellMod\super{\poset} / \rad \CellMod\super{\poset}}$
are simple and non-isomorphic.

\item[(c):] The bilinear form on the direct sum module $\smash{\underset{\poset \, \in \, \Poset}{\bigoplus} \CellMod\super{\poset}}$
is nondegenerate, that is, $\rad \smash{\CellMod\super{\poset}} = \{0\}$, for all $\poset \in \Poset$.
\end{itemize}
\end{itemize}

The above properties are very reminiscent of some of the results stated in theorem~\ref{BigSSTHM} and proposition~\ref{SimpleModuleProp}.
Indeed, the Jones-Wenzl algebra is cellular, so one could apply the theory of~\cite{gl} to it.
However, this would not give concrete knowledge about, e.g., the structure of the simple modules and their radicals,
such as explicit bases and dimensions for them. In the case of the Temperley-Lieb and Jones-Wenzl algebras,
very concrete investigations were carried out by D.~Ridout and Y.~Saint-Aubin for the case of $\TL_n(\nu)$~\cite{rsa},
and by the authors of the present article in~\cite{fp0}.

\begin{prop} \label{CellPropo}
Suppose $\max \multii < \ppmin(q)$. The Jones-Wenzl algebra $\WJ_\multii(\nu)$ is cellular.
\end{prop}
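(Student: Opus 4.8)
The plan is to exhibit an explicit cell datum $(\Poset, \Cell, \CellMap, \star)$ for $\WJ_\multii(\nu)$ using the Jones-Wenzl link patterns already constructed in section~\ref{RepThSec}, thereby verifying Definition~\ref{CellDef} directly. For the poset, I would take $\Poset = \DefectSet_\multii$ with the \emph{reverse} of the natural order on integers (so that $s < s'$ in $\Poset$ means more defects, i.e.\ $s > s'$ as integers) --- this is the standard convention making turn-back terms ``lower order'' in Temperley--Lieb-type algebras. For each $s \in \DefectSet_\multii$, set $\Cell(s) := \PP_\multii\super{s}$, the set of $(\multii,s)$-Jones-Wenzl link patterns from~\eqref{ProjsSpDefn2}. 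The map $\CellMap \colon \bigsqcup_s \Cell(s)\times\Cell(s) \to \WJ_\multii(\nu)$ should send a pair $(\alpha,\beta)$ of $(\multii,s)$-link patterns to the $\multii$-Jones-Wenzl \emph{link diagram} obtained by horizontally reflecting $\alpha$, placing it on the left, placing $\beta$ on the right, and joining their defect endpoints --- that is, $\CellMap(\alpha,\beta) = \smash{\alpha^\dagger} \beta$ built through the composite projector $\WJProj_\multii$. Finally, $\star$ is the anti-involution $T \mapsto T^\dagger$ given by reflection about the vertical axis, as in~\eqref{DaggerRefl}; note $\WJProj_\multii^\dagger = \WJProj_\multii$ since each Jones-Wenzl projector is symmetric, so $\star$ preserves $\WJ_\multii(\nu)$.

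With these choices the verification proceeds as follows. For~\ref{Cellitem1}: injectivity and the basis property amount to the statement that $\PD_\multii = \bigsqcup_{s} \{\alpha^\dagger\beta : \alpha,\beta\in\PP_\multii\super{s}\}$, i.e.\ that every $\multii$-Jones-Wenzl link diagram factors uniquely through its ``middle'' as a reflected link pattern glued to a link pattern with the same number $s$ of through-strands --- this is exactly the standard cut-in-half bijection, and that $\PD_\multii$ is a basis is cited in the excerpt from~\cite[lemma~\red{B.1}]{fp0}. For~\ref{Cellitem2}: $\star$ is an anti-involution because reflection reverses the order of horizontal concatenation and is an involution, and $(\alpha^\dagger\beta)^\dagger = \beta^\dagger\alpha = \CellMap(\beta,\alpha)$ is immediate from the diagrammatics. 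For~\ref{Cellitem3}: given $T \in \WJ_\multii(\nu)$, compute $T\cdot\CellMap(\alpha,\beta) = (T\alpha^\dagger)\beta$; acting with $T$ on the left part $\alpha^\dagger$ (equivalently, acting with $T^\dagger$ on $\alpha$ in the standard module $\PS_\multii\super{s}$) either produces link patterns still with $s$ through-strands --- contributing $\sum_{\gamma\in\PP_\multii\super{s}} c_T(\gamma,\alpha)\,\CellMap(\gamma,\beta)$ with coefficients determined by the standard-module action and manifestly independent of $\beta$ --- or produces diagrams with fewer through-strands, which by the order convention lie in $\WJ_\multii(\nu)\super{<s} = \Span\{\CellMap(\delta,\epsilon): \delta,\epsilon\in\PP_\multii\super{r},\ r>s\}$. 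This last inclusion is the crux: one must check that a diagram in which some of the $s$ middle strands get turned back lies in the span of the lower cells, which follows because the Temperley--Lieb concatenation can only decrease the through-strand count, never increase it, and the projector boxes do not create new through-strands.

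The main obstacle I anticipate is not any single verification but rather the careful bookkeeping needed to make the ``factor through the middle'' bijection rigorous in the presence of the composite projector $\WJProj_\multii$: a Jones-Wenzl link diagram is a diagram \emph{after} the projectors have been expanded, so one has to be sure that the set $\PP_\multii\super{s}$ of link patterns (defined as nonzero images $\WJProj_\multii \LP_{\Summed_\multii}\super{s}\setminus\{0\}$) is precisely what pairs up to give $\PD_\multii$, and that no collapsing or linear dependence is introduced by the projectors. Since the excerpt already provides that $\PD_\multii$ is a basis and that the $\PP_\multii\super{s}$ span the standard modules, the cleanest route is to invoke these facts and simply identify the three structural axioms with properties of the standard-module action and the $\dagger$-reflection that were essentially already recorded in section~\ref{RepThSec}; in particular, invariance~\eqref{InvarProp} of the bilinear form is the shadow of axiom~\ref{Cellitem3}. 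Thus the proof is mostly a matter of organizing known pieces into the Graham--Lehrer format, and I would keep it short, deferring any genuinely new combinatorial claims to the companion results of~\cite{fp0} that are quoted above.
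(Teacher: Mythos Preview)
Your approach is essentially the paper's own: the same cell datum $(\DefectSet_\multii,\ \PP_\multii\super{s},\ \text{cut-in-half map},\ \dagger)$, with~\ref{Cellitem1} supplied by the cited basis result from~\cite{fp0}, \ref{Cellitem2} by the symmetry of $\dagger$, and~\ref{Cellitem3} by the fact that concatenation can only decrease the number of through-strands. The one slip is the direction of the partial order. You take the \emph{reverse} of the integer order and then assert that diagrams with \emph{fewer} through-strands lie in $\WJ_\multii(\nu)\super{<s}=\Span\{\CellMap(\delta,\epsilon):r>s\}$; but with your convention that span consists of diagrams with \emph{more} through-strands, so the inclusion fails. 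The fix is simply to use the natural integer order on $\DefectSet_\multii$ (so ``lower'' means fewer defects), exactly as the paper does: then the turn-back terms, having $r<s$ crossing links, land in $\WJ_\multii(\nu)\super{<s}$ as required. With that correction your argument goes through and matches the paper's.
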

\begin{proof}
We verify that $\WJ_\multii(\nu)$ has the following cell datum $\big(\DefectSet_\multii, \PP_\multii, \BarAction \cdot \quad \cdot \BarAction, \star\big)$:
\begin{enumerate}[leftmargin=*]
\itemcolor{red}

\item $\Poset = \DefectSet_\multii$, defined in~\eqref{DefSet2}, is a finite partially ordered set,

\item $\smash{\Cell(s) = \PP_\multii\super{s}}$, defined in~\eqref{ProjsSpDefn2}, for all $s \in \DefectSet_\multii$, are finite sets,

\item  \label{Basistem} 
$\CellMap = \BarAction \cdot \quad \cdot \BarAction \colon 
\smash{\underset{s \, \in \, \DefectSet_\multii}{\bigcup} \PP_\multii\super{s} \times \PP_\multii\super{s}} \longrightarrow \PD_\multii$,
defined in lemma~\ref{WJSandwichLem}
by the assignment
\begin{align} \label{Sandwichmap}
\BarAction \alpha \quad \beta \BarAction
\quad := \quad \vcenter{\hbox{\includegraphics[scale=0.275]{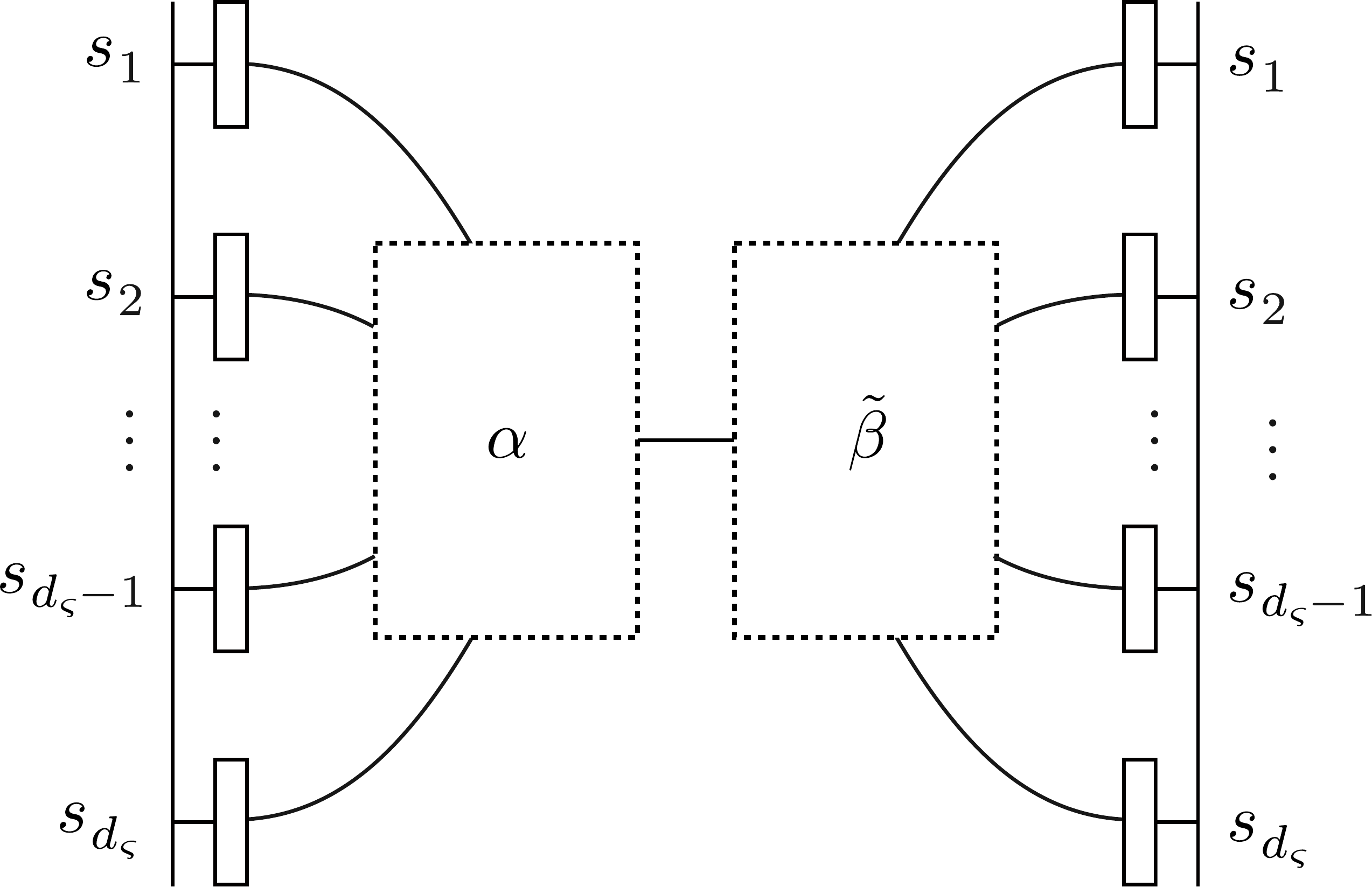} ,}}
\end{align}
where $\tilde{\beta}$ is obtained by reflecting $\beta$ about a vertical axis, 
is a bijection onto the basis $\PD_\multii$ of $\WJ_\multii(\nu)$, and

\item $\star = \dagger$, defined via~\eqref{DaggerRefl}, is a $\bC$-linear anti-involution from $\WJ_\multii(\nu)$ to $\WJ_\multii(\nu)$, 
and by definition, for all $\multii$-Jones-Wenzl link patterns $\alpha, \beta \in \PP_\multii$, we have 
\begin{align}
\BarAction \alpha \quad \beta \BarAction^\dagger = \BarAction \beta \quad \alpha \BarAction .
\end{align}
\end{enumerate}

Properties~\ref{Cellitem1} and~\ref{Cellitem2} for $\WJ_\multii(\nu)$ are immediate from the definitions. 
To check property~\ref{Cellitem3}, we note that
\begin{align}
\WJ_\multii(\nu)\super{< s} = \Span \{\BarAction \delta \quad \epsilon \BarAction \, | \, \delta, \epsilon \in \PP_{\multii}\super{r} \text{ and } r < s \} .
\end{align}
We first consider a $\multii$-Jones-Wenzl tangle $T \in \WJ_\multii(\nu)$.
We expand it in the basis 
$\smash{\big\{ \BarAction \gamma \quad \delta \BarAction \, \big| \, s \in \DefectSet_\multii, \; \gamma, \delta \in \PS_\multii\super{s}\big\}}$
as
\begin{align} \label{Ttangel}
T = \sum_{r \, \in \, \DefectSet_\multii} \sum_{\gamma , \delta \, \in \, \PP_{\multii}\super{r}} 
c_{\gamma,\delta}\super{r} \BarAction \gamma \quad \delta \BarAction ,
\end{align}
for some coefficients $\smash{c_{\gamma,\delta}\super{r}} \in \bC$. 
Then we let $\alpha, \beta \in \smash{\PP_\multii\super{s}}$ for some $s \in \DefectSet_\multii$.
We have
\begin{align}
\label{Tcellaction}
T \BarAction \alpha \quad \beta \BarAction 
& \underset{\eqref{Ttangel}}{\overset{\eqref{wjrecursion}}{=}} 
\sum_{\substack{r \, \in \, \DefectSet_\multii \\ r \geq s}} \sum_{\gamma , \delta \, \in \, \PP_{\multii}\super{r}} 
c_{\gamma,\delta}\super{r} \BarAction \gamma \quad \delta \BarAction \BarAction \alpha \quad \beta \BarAction 
\qquad \big( \textnormal{mod } \WJ_\multii(\nu)\super{< s} \big) ,
%
\end{align}
because by recursion property~\eqref{wjrecursion} of the Jones-Wenzl projectors, 
each tangle $\BarAction \gamma \quad \delta \BarAction \BarAction \alpha \quad \beta \BarAction$
with $\gamma, \delta  \in \smash{\PP_{\multii}\super{r}}$
is a linear combination of $\multii$-Jones-Wenzl link diagrams with at most $\min(r,s)$ crossing links.
Furthermore, the diagrams in~\eqref{Tcellaction} with exactly $s$ crossing links
have necessarily the form $\BarAction \eta \quad \beta \BarAction$ for some $\eta \in \smash{\PP_{\multii}\super{s}}$.
In conclusion, we have 
\begin{align}
T \BarAction \alpha \quad \beta \BarAction 
& \overset{\eqref{Tcellaction}}{=}
\sum_{\eta \, \in \, \PP_{\multii}\super{s}} c_T(\eta,\alpha) \BarAction \eta \quad \beta \BarAction 
\qquad \big( \textnormal{mod } \WJ_\multii(\nu)\super{< s} \big) ,
\end{align}
for some coefficients $c_T(\eta,\alpha) \in \bC$ that depend on $\alpha$ and $T$, but not on $\beta$.
This shows property~\ref{Cellitem3} for $\WJ_\multii(\nu)$.
\end{proof}

We remark that, admitting the fact from~\cite{gl} that the Temperley-Lieb algebra $\TL_{\Summed_\multii}(\nu)$ is cellular,
proposition~\ref{CellPropo} also follows immediately from~\cite[proposition~\red{4.3}]{kxi} and definition~\eqref{WJAdef} 
of the Jones-Wenzl algebra $\WJ_\multii(\nu) := \WJProj_\multii \TL_{\Summed_\multii}(\nu) \WJProj_\multii$.
(For this, one has to note that 
$\WJProj_\multii \in \TL_{\Summed_\multii}(\nu)$ is an idempotent fixed by the involution $T \mapsto T^\dagger$ in $\TL_{\Summed_\multii}(\nu)$.)
However, it is instructive to prove proposition~\ref{CellPropo} by providing with explicit cell datum for $\WJ_\multii(\nu)$.
Moreover, cellularity of the Temperley-Lieb algebra also follows a special case of the above calculation.

\bigskip

With cellular basis 
$\smash{\big\{ \BarAction \alpha \quad \beta \BarAction \, \big| \, s \in \DefectSet_\multii, \; \alpha, \beta \in \PS_\multii\super{s}\big\}}$,
the $\WJ_\multii(\nu)$-action~\eqref{cellmodule} on the cell module $\smash{\PS_{\multii}\super{s}}$ reads
\begin{align}
T . \alpha := \sum_{\eta \, \in \, \PP_{\multii}\super{s}} c_T(\eta,\alpha) \eta ,
\end{align}
for all basis elements $\alpha \in \PP_{\multii}\super{s}$ of the module $\smash{\PS_{\multii}\super{s}}$ 
and for all Jones-Wenzl tangles $T \in \WJ_\multii(\nu)$. One can convince oneself that this action 
coincides with the diagrammatic $\WJ_\multii(\nu)$-action on the standard module $\smash{\PS_{\multii}\super{s}}$.
Also, the bilinear form~\eqref{GLBiform} induced by the cellular structure coincides with the 
diagrammatic bilinear form~\eqref{LSBiForm} restricted to $\smash{\PS_{\multii}\super{s}}$.

\bigskip

As a final remark, we note that there is an alternative cellular basis for $\WJ_\multii(\nu)$ when $\Summed_\multii < \ppmin(q)$
(which holds, e.g., when $\WJ_\multii(\nu)$ is semisimple). This basis is obtained as in the proof of proposition~\ref{CellPropo} 
but replacing $\BarAction \alpha \quad \beta \BarAction$ by $\BarAction \alpha \;\; \ProjBox \;\; \beta \BarAction$, 
where ``$\; \ProjBox \;$'' is a Jones-Wenzl projector box across all of the crossing links in $\BarAction \alpha \quad \beta \BarAction$.
The main virtue of the cellular basis
$\smash{\{\BarAction \alpha \;\; \ProjBox \;\; \beta \BarAction \, | \, s \in \DefectSet_\multii , \; \alpha, \beta \in \PP_{\multii}\super{s} \}}$
is that property~\ref{Cellitem3} holds without any correction terms in $\WJ_\multii(\nu)\super{< s}$: 
indeed, item~\ref{ExtractLemItem} of lemma~\ref{CollectionLem} from appendix~\ref{TLRecouplingSect} gives
for all $\alpha, \beta, \gamma, \delta \in \smash{\PP_{\multii}\super{s}}$ the identity
\begin{align} \label{GLBiformobvious}
\BarAction \alpha \;\; \ProjBox \;\; \beta \BarAction \BarAction \gamma \;\; \ProjBox \;\; \delta \BarAction
\overset{\eqref{ExtractID}}{=}
\BiForm{\beta}{\gamma} \BarAction \alpha \;\; \ProjBox \;\; \delta \BarAction .
\end{align}
It is also obvious from~\eqref{GLBiformobvious} that the bilinear form~\eqref{GLBiform} induced by this cellular structure 
coincides with the diagrammatic bilinear form~\eqref{LSBiForm}.
A downside is that this basis is only well-defined when all of the projector boxes ``$\; \ProjBox \;$''  in all of the basis
elements are well-defined, which happens exactly when $\smax(\multii) = \Summed_\multii < \ppmin(q)$.

\section{Minimal generating sets for the Jones-Wenzl algebra} \label{GeneratorLemProofSect}

The purpose of this section is to prove theorem~\ref{GeneratorThm}:

\GeneratorThm*

We remark that the unit~\eqref{WJCompProj} of $\WJ_\multii(\nu)$
is obtained from generators~\eqref{MasterDiagramsWJ-00} via the relation
\begin{align} \label{AllProjesSumToOne}
\vcenter{\hbox{\includegraphics[scale=0.275]{e-CompositeProjector.pdf}}} 
\quad = \quad 
\sum_{s \, \in \, \DefectSet\sub{\sIndex_i,\sIndex_{i+1}}} \frac{(-1)^s [s+1]}{\ThetaNet(\sIndex_i,\sIndex_{i+1},s)} \,\, \times \,\,
\vcenter{\hbox{\includegraphics[scale=0.275]{e-Generators_3Vertex.pdf} ,}}
\end{align}
for any $i \in \{ 1, 2, \ldots, \np_\multii - 1 \}$.
In~\cite{fp3}, we prove that each diagram~\eqref{MasterDiagramsWJ-00} 
equals a nonzero multiple of a submodule projector 
in a tensor product representation of the Hopf algebra $U_q(\mathfrak{sl}_2)$ (with multiplicative constant as in the above sum).
Relation~\eqref{AllProjesSumToOne} says that summing over all of these projectors gives the identity operator.
Another way to view this relation is a decomposition of the unit~\eqref{WJCompProj} of $\WJ_\multii(\nu)$
into a sum of orthogonal (but not central) idempotents: indeed by identity~\eqref{LoopErasure1} 
from appendix~\ref{TLRecouplingSect}, we have
\begin{align} \label{orthogonalidem}
\hspace*{-3mm}
\vcenter{\hbox{\includegraphics[scale=0.275]{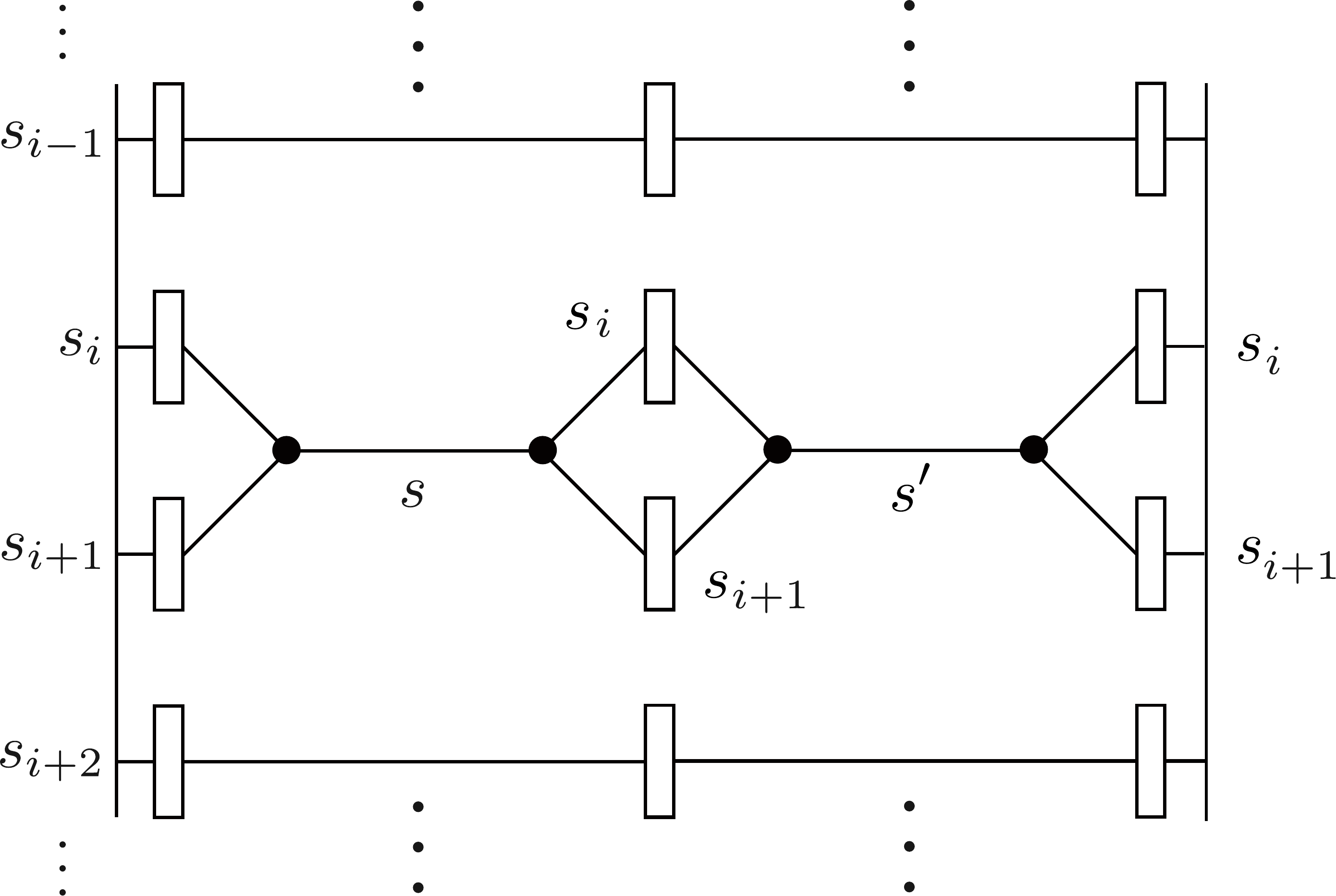}}}
\overset{\eqref{LoopErasure1}}{=} 
\; \delta_{s,s'} \frac{\ThetaNet(\sIndex_i,\sIndex_{i+1},s)}{(-1)^s [s+1]} \,\, \times \,\,
\vcenter{\hbox{\includegraphics[scale=0.275]{e-Generators_3Vertex.pdf} .}} 
\end{align}
Relation~\eqref{AllProjesSumToOne} is a special case of the ``quantum 6j-symbols''~\cite[equation~\red{9.15}, page~\red{99}]{kl}.

\bigskip

We have divided the proof of theorem~\ref{GeneratorThm} 
into several parts in order to clarify its structure and emphasize the main ideas. 
The proof constitutes rather involved diagram calculations presented in this section, but no prerequisites are needed.
Section~\ref{PreliminariesSec} contains some simple observations needed in the proof.
In section~\ref{BaseCaseSec}, we prove theorem~\ref{GeneratorThm} for the case of two projectors, $\np_\multii = 2$. 
Then, in section~\ref{IndStepSec} we construct certain basis tangles of $\WJ_\multii(\nu)$
from the claimed generators 
by induction in the number $\np_\multii$ of projectors.
In section~\ref{TheProofSec}, we finish the proof 
using the work in sections~\ref{BaseCaseSec} and~\ref{IndStepSec}.
Throughout, we fix 
$\multii = (\sIndex_1, \sIndex_2, \ldots,\sIndex_{\np_\multii})$ as in~(\ref{MultiindexNotation},~\ref{ndefn}), with $\Summed_\multii < \ppmin(q)$.

\subsection{Preliminary results} \label{PreliminariesSec}

To begin, we collect useful change of basis operations in lemma~\ref{ManyBasesLem}. For this, we 
denote by $\Defect_\alpha$ the number of defects in a $\multii$-Jones-Wenzl link state $\alpha$, that is,
\begin{align} 
\alpha \in \PS_\multii\super{s} \qquad \Longrightarrow \qquad \Defect_\alpha :=s ,
\end{align} 
we denote $\DefectSet\sub{\alpha,t} := \DefectSet\sub{\Defect_\alpha,t}$,
and we define the index set
\begin{align} 
\label{tRange} 
\mathsf{R}_{\alpha,\beta} 
&:= \left\{ 0, 1, \ldots, \min \Big(\sIndex_{\np_\multii} - \frac{|\Defect_\alpha - \Defect_\beta|}{2} , 
\Defect_\alpha , \Defect_\beta \Big) \right\}, \qquad \text{for all $\alpha, \beta \in \PP_\multii$.}
\end{align} 
We also frequently use the notations 
\begin{align} \label{hats} 
\multii = (\sIndex_1, \sIndex_2, \ldots, \sIndex_{\np_\multii}) 
\qquad \qquad \Longrightarrow \qquad \qquad & 
\begin{cases}
\lds := (\sIndex_1, \sIndex_2, \ldots, \sIndex_{\np_\multii-1}) \quad & \textnormal{and} \quad t\hphantom{'} := \sIndex_{\np_\multii} , \\
\fds := (\sIndex_2, \sIndex_3, \ldots, \sIndex_{\np_\multii})  \quad & \textnormal{and} \quad t' := \sIndex_1 ,
\end{cases} \\
\label{hatsMax} 
\Summed_{\lds} \overset{\eqref{ndefn}}{:=} 
\sIndex_1 + \sIndex_2 + \cdots + \sIndex_{\np_\multii - 1} \overset{\eqref{smaxeq}}{=} \smax(\lds) 
\qquad \qquad & \text{and} \qquad \qquad
\Summed_{\fds} \overset{\eqref{ndefn}}{:=}  
\sIndex_2 + \sIndex_3 + \cdots + \sIndex_{\np_\multii} \overset{\eqref{smaxeq}}{=} \smax(\fds) .
\end{align} 
Now, the following set is a basis for the Jones-Wenzl algebra $\WJ_\multii(\nu)$:
\begin{align} \label{MGenerators} 
\PD0_\multii
 & \; := \quad \left\{ 
\vcenter{\hbox{\includegraphics[scale=0.275]{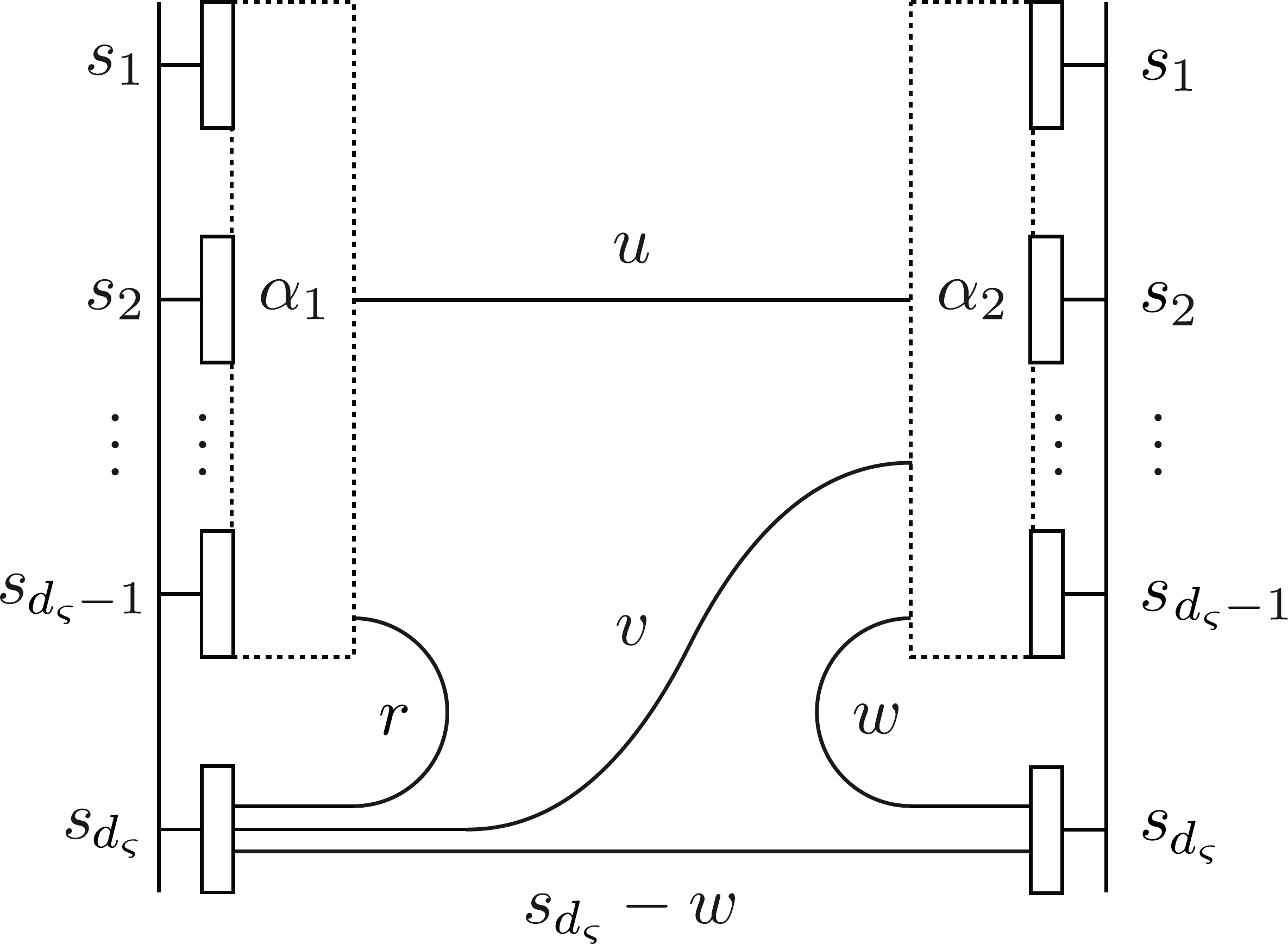}}} \qquad \right|\left .
\quad \vcenter{\hbox{\includegraphics[scale=0.275]{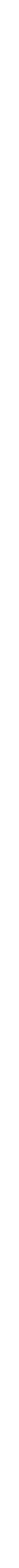}}} 
\begin{array}{l} 
\alpha_1, \alpha_2 \in \PP_{\lds} , \\[10pt]
r \in \mathsf{R}_{\alpha_1,\alpha_2} , \\[10pt] 
u := \min(\Defect_{\alpha_1}, \Defect_{\alpha_2}) - r , \\[10pt] 
v := \dfrac{|\Defect_{\alpha_1} - \Defect_{\alpha_2}|}{2} , \\[10pt] 
w := r + \dfrac{|\Defect_{\alpha_1} - \Defect_{\alpha_2}|}{2} 
\end{array} 
\,\,\, \right\} .
\end{align}

Using lemma~\ref{ChangeOfBasisLem} from appendix~\ref{LemmaApp}
(i.e.,~\cite[lemma~\red{4.4}]{fp0}), we construct other bases for $\WJ_\multii(\nu)$.

\begin{lem} \label{ManyBasesLem} 
Suppose $\Summed_\multii < \ppmin(q)$.
Then each of the following sets is a basis for the Jones-Wenzl algebra $\WJ_\multii(\nu)$:
\begin{align}
\label{InsertTwoBoxes}
\PD1_\multii 
& \; := \quad \left\{ \vcenter{\hbox{\includegraphics[scale=0.275]{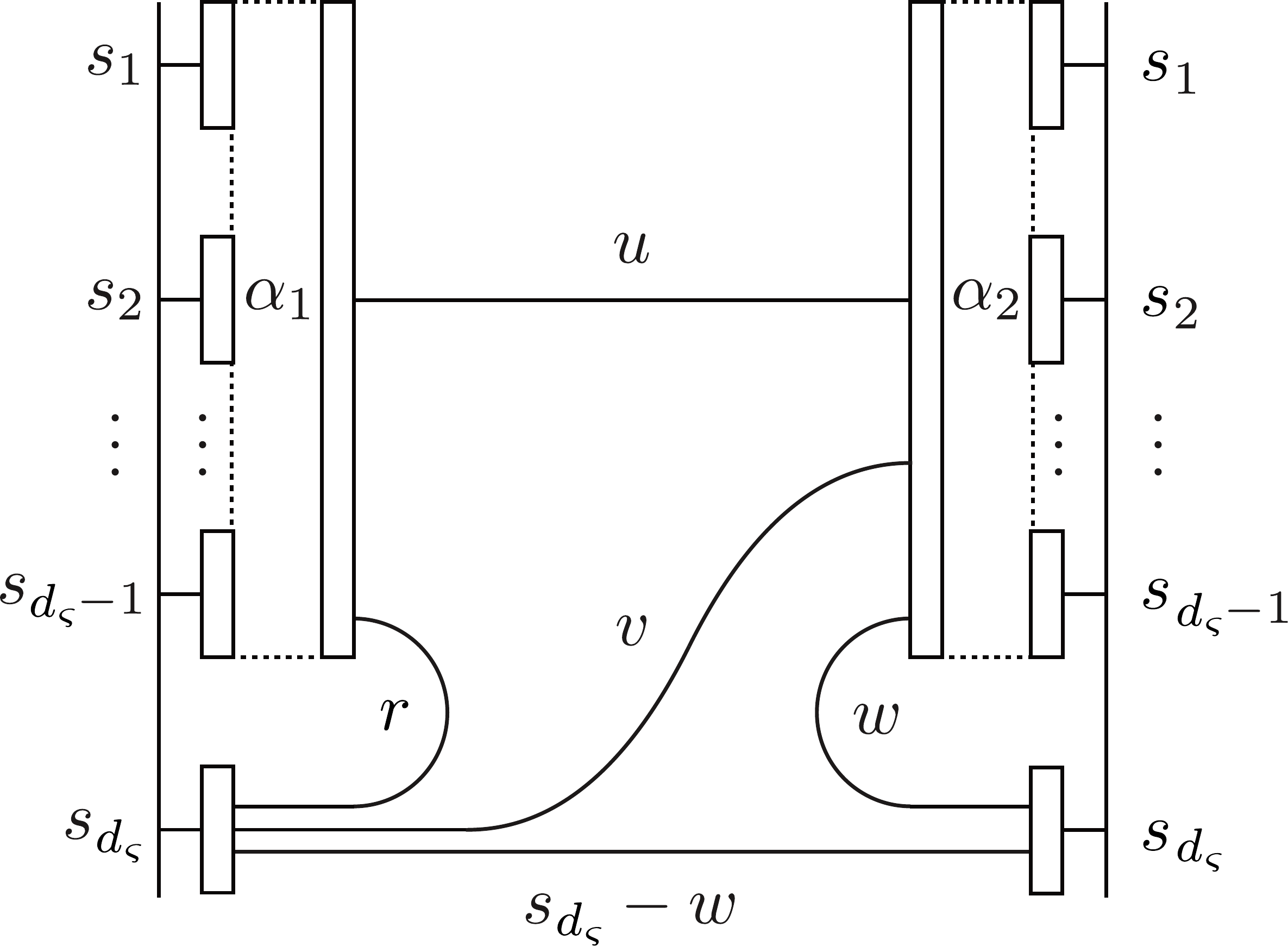}}} \qquad \right|\left.
\quad \vcenter{\hbox{\includegraphics[scale=0.275]{e-Generators38.pdf}}} 
\begin{array}{l} 
\alpha_1, \alpha_2 \in \PP_{\lds} , \\[10pt]
r \in \mathsf{R}_{\alpha_1,\alpha_2} , \\[10pt] 
u := \min(\Defect_{\alpha_1}, \Defect_{\alpha_2})-r , \\[10pt] 
v := \dfrac{|\Defect_{\alpha_1} - \Defect_{\alpha_2}|}{2} , \\[10pt] 
w := r + \dfrac{|\Defect_{\alpha_1} - \Defect_{\alpha_2}|}{2} 
\end{array} 
\,\,\, \right\} , \\[1em]
\label{InsertOneHorizBox}
\PD2_\multii 
& \; := \quad \left\{ 
\vcenter{\hbox{\includegraphics[scale=0.275]{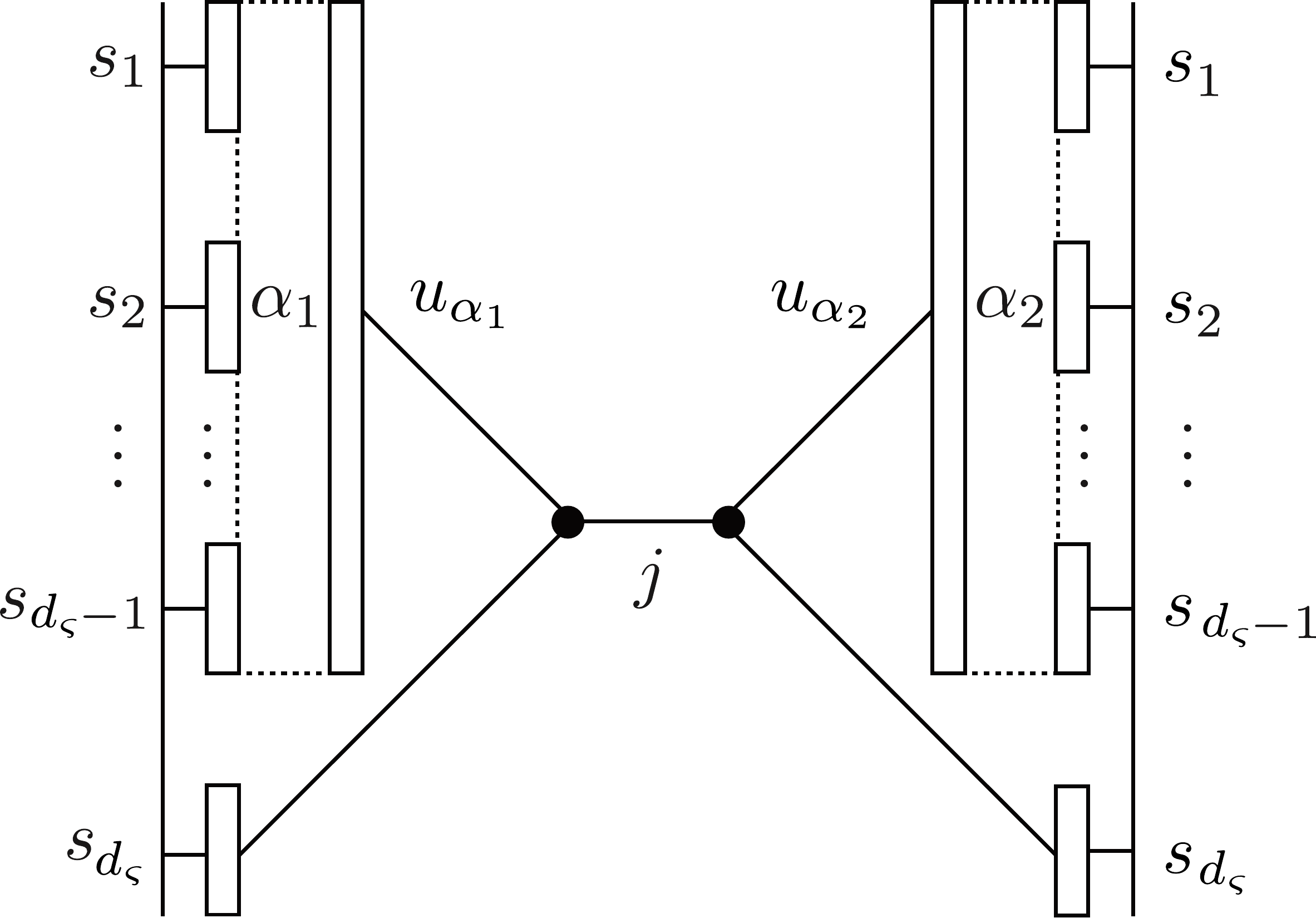}}} \qquad \right|\left.
\quad \vcenter{\hbox{\includegraphics[scale=0.275]{e-Generators38.pdf}}} 
\begin{array}{l} 
\hphantom{u:= \min(\Defect_{\alpha_1}, \Defect_{\alpha_2}) - t,} \\[10pt] 
\alpha_1, \alpha_2 \in \PP_{\lds} , \\[10pt]
j \in \DefectSet\sub{\alpha_1,t} \cap \DefectSet\sub{\alpha_2,t},
\\[10pt] 
\textnormal{with $t = \sIndex_{\np_\multii}$}
\end{array} 
\,\,\, \right\} \\[1em]
\label{InsertOneBox}
\PD3_\multii 
& \; := \quad \left\{\vcenter{\hbox{\includegraphics[scale=0.275]{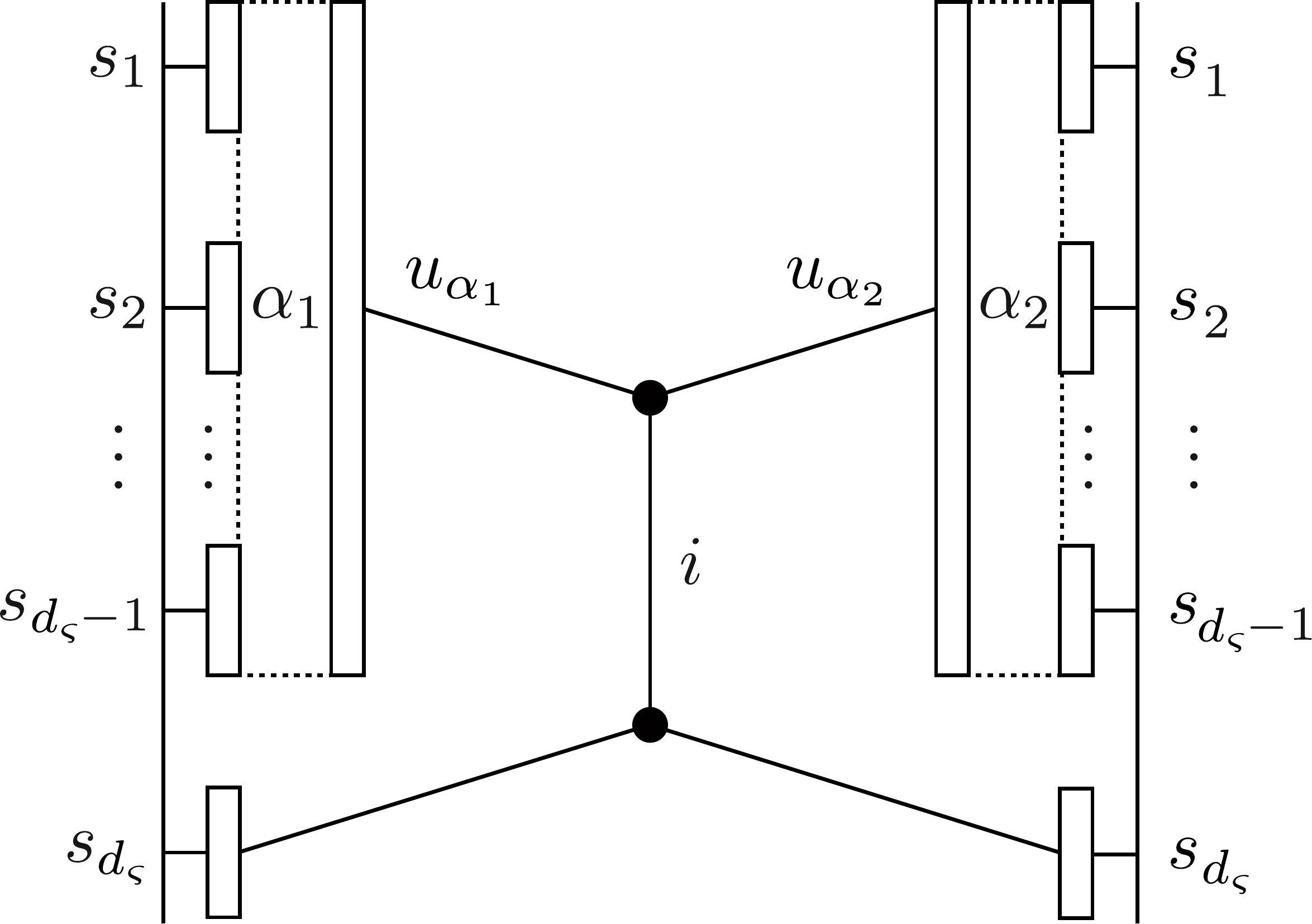}}} \qquad \right|\left.
\quad \vcenter{\hbox{\includegraphics[scale=0.275]{e-Generators38.pdf}}}
\begin{array}{l} 
\hphantom{u := \min(\Defect_{\alpha_1}, \Defect_{\alpha_2}) - t,} \\[10pt] 
\alpha_1, \alpha_2 \in \PP_{\lds} , \\[10pt] 
i \in \DefectSet\sub{\alpha_1,\alpha_2} \cap \DefectSet\sub{t,t},
\\[10pt] 
\textnormal{with $t = \sIndex_{\np_\multii}$}
\end{array} 
\,\,\, \right\} . 
\end{align}
\end{lem}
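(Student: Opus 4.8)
The plan is to derive each of $\PD1_\multii$, $\PD2_\multii$, $\PD3_\multii$ from the basis $\PD0_\multii$ of~\eqref{MGenerators} by applying, in succession, the change-of-basis operations provided by lemma~\ref{ChangeOfBasisLem} of appendix~\ref{LemmaApp} (i.e.,~\cite[lemma~4.4]{fp0}). Each such operation replaces, uniformly across a basis, a local piece of every basis diagram by a linear combination of local pieces governed by an invertible matrix, so that lemma~\ref{ChangeOfBasisLem} certifies that the resulting set is again a basis of $\WJ_\multii(\nu)$; the hypothesis $\Summed_\multii < \ppmin(q)$ enters through the nonvanishing of the $\ThetaNet$-network and tetrahedral evaluations (lemma~\ref{ThetaLem} and appendix~\ref{TLRecouplingSect}) that guarantees these matrices are invertible.

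First I would pass from $\PD0_\multii$ to $\PD1_\multii$. The diagrams of $\PD1_\multii$ arise from those of $\PD0_\multii$ by inserting two Jones-Wenzl projector boxes across the central cable of crossing links, the index set $\{(\alpha_1,\alpha_2,r)\}$ being unchanged. By~\eqref{ProjDecomp}, each inserted box equals the identity tangle plus a linear combination of diagrams carrying turn-back links; performed uniformly, the insertion is therefore realized by a $\bC$-linear endomorphism of $\WJ_\multii(\nu)$ that is unitriangular with respect to the number of crossing links (equivalently, with respect to the partial order on $\DefectSet_\multii$ entering the cellular structure of section~\ref{CellSec}). Such a map is invertible, so lemma~\ref{ChangeOfBasisLem} yields that $\PD1_\multii$ is a basis.

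Next I would pass from $\PD1_\multii$ to $\PD3_\multii$ and to $\PD2_\multii$ by re-coupling the two parallel projector boxes in the two available channels, using the three-vertex recoupling identities of appendix~\ref{TLRecouplingSect}. This is an exact identity with no lower-order corrections: in one channel it rewrites, for each pair $\alpha_1,\alpha_2 \in \PP_{\lds}$, the basis diagram of $\PD1_\multii$ as a sum over the internal index $i \in \DefectSet\sub{\alpha_1,\alpha_2} \cap \DefectSet\sub{t,t}$ of diagrams of the shape occurring in $\PD3_\multii$, and in the other channel as a sum over $j \in \DefectSet\sub{\alpha_1,t} \cap \DefectSet\sub{\alpha_2,t}$ of diagrams of the shape occurring in $\PD2_\multii$; the coefficient matrices are products of $\ThetaNet$-normalizations and quantum $6j$-matrices and are block-diagonal in $(\alpha_1,\alpha_2)$. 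A short admissibility-triangle count gives $|\DefectSet\sub{\alpha_1,\alpha_2} \cap \DefectSet\sub{t,t}| = |\mathsf{R}_{\alpha_1,\alpha_2}| = |\DefectSet\sub{\alpha_1,t} \cap \DefectSet\sub{\alpha_2,t}|$ for every such pair, so each diagonal block is square and the new index sets have the same cardinality as that of $\PD0_\multii$. Invertibility of these blocks, hence the basis property for $\PD2_\multii$ and $\PD3_\multii$, again follows from lemma~\ref{ChangeOfBasisLem}.

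The step I expect to be the main obstacle is the invertibility of the recoupling change-of-basis matrices: one must verify that the $\ThetaNet$-values and tetrahedral networks that assemble them are all nonzero and that the resulting square blocks are nonsingular. This is precisely where the full hypothesis $\Summed_\multii < \ppmin(q)$ is used --- the weaker $\max\multii < \ppmin(q)$ only makes the diagrams well-defined --- and it is the content of lemma~\ref{ChangeOfBasisLem}; the accompanying bijections between the index sets, by contrast, are elementary bookkeeping with the sets $\DefectSet\sub{\cdot,\cdot}$ and $\mathsf{R}_{\cdot,\cdot}$.
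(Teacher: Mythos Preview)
Your first step ($\PD0_\multii \to \PD1_\multii$) matches the paper: inserting projector boxes across the crossing cable is upper-unitriangular by~\eqref{ProjDecomp}, and this is exactly the content of lemma~\ref{ChangeOfBasisLem}.

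Your second step diverges. For $\PD1_\multii \to \PD2_\multii$ and $\PD1_\multii \to \PD3_\multii$ the paper does \emph{not} use $6j$-recoupling; it applies lemma~\ref{ChangeOfBasisLem} a second time, again as a box-insertion map. Concretely, the paper views each $\PD1_\multii$-diagram as a four-leg junction (the two link-state slots $a,c$ and the two bottom projector boxes $b,d$), applies an obvious reshaping isomorphism, inserts a single Jones--Wenzl box across the appropriate cable via lemma~\ref{ChangeOfBasisLem} (upper-unitriangular, hence invertible), and reshapes back. For $\PD2_\multii$ the inserted box has size $j$ across the horizontal crossing cable; for $\PD3_\multii$ the roles of $b$ and $c$ are swapped and a box of size $i$ is inserted. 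The only nontrivial check is that these sizes stay below $\ppmin(q)$, which the paper bounds by $j \le \smax(\lds)+t = \Summed_\multii$ and $i \le \min(2t,2\smax(\lds)) \le \Summed_\multii$.

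Your recoupling route is a legitimate alternative, but there is a mis-citation you should fix: lemma~\ref{ChangeOfBasisLem} does \emph{not} supply invertibility of the $6j$-blocks. That lemma is specifically about box-insertion maps being upper-unitriangular; it says nothing about tetrahedral coefficients. If you keep the recoupling approach, you need a separate argument that each block is nonsingular under $\Summed_\multii < \ppmin(q)$ --- e.g.\ the orthogonality relations for $6j$-symbols in~\cite{kl}, or the observation that both families are known bases of the same four-valent Hom space. Your cardinality identity $|\DefectSet\sub{\alpha_1,\alpha_2}\cap\DefectSet\sub{t,t}| = |\mathsf{R}_{\alpha_1,\alpha_2}| = |\DefectSet\sub{\alpha_1,t}\cap\DefectSet\sub{\alpha_2,t}|$ is correct and necessary for that route. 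The paper's repeated box-insertion is more economical precisely because invertibility comes for free from unitriangularity, with no need to control any network evaluations.
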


\begin{proof} 
First, we label the left and right link states of~\eqref{MGenerators} by $a$ and $c$ 
respectively, and we label the left-bottom and right-bottom projector boxes $b$ and $d$ respectively.  
Then we obtain $\PD1_\multii$~\eqref{InsertTwoBoxes} from~\eqref{MGenerators} by sending
\begin{align} \label{1stmap0} 
\vcenter{\hbox{\includegraphics[scale=0.275]{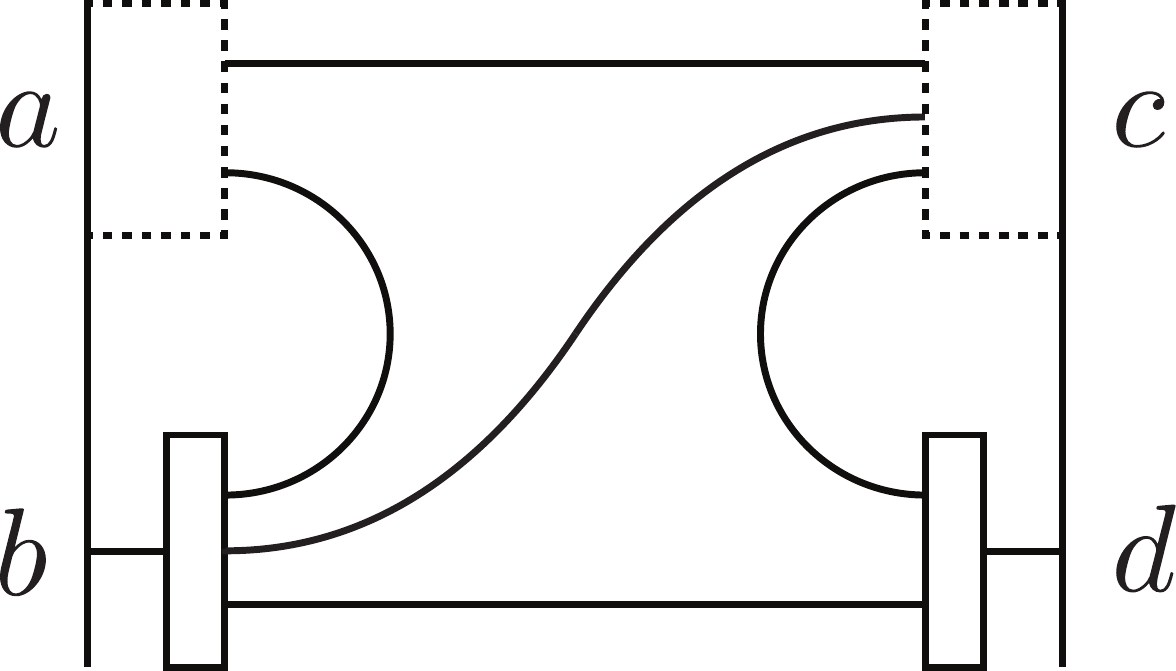}}} \quad
& \qquad \longmapsto \qquad
 \quad \vcenter{\hbox{\includegraphics[scale=0.275]{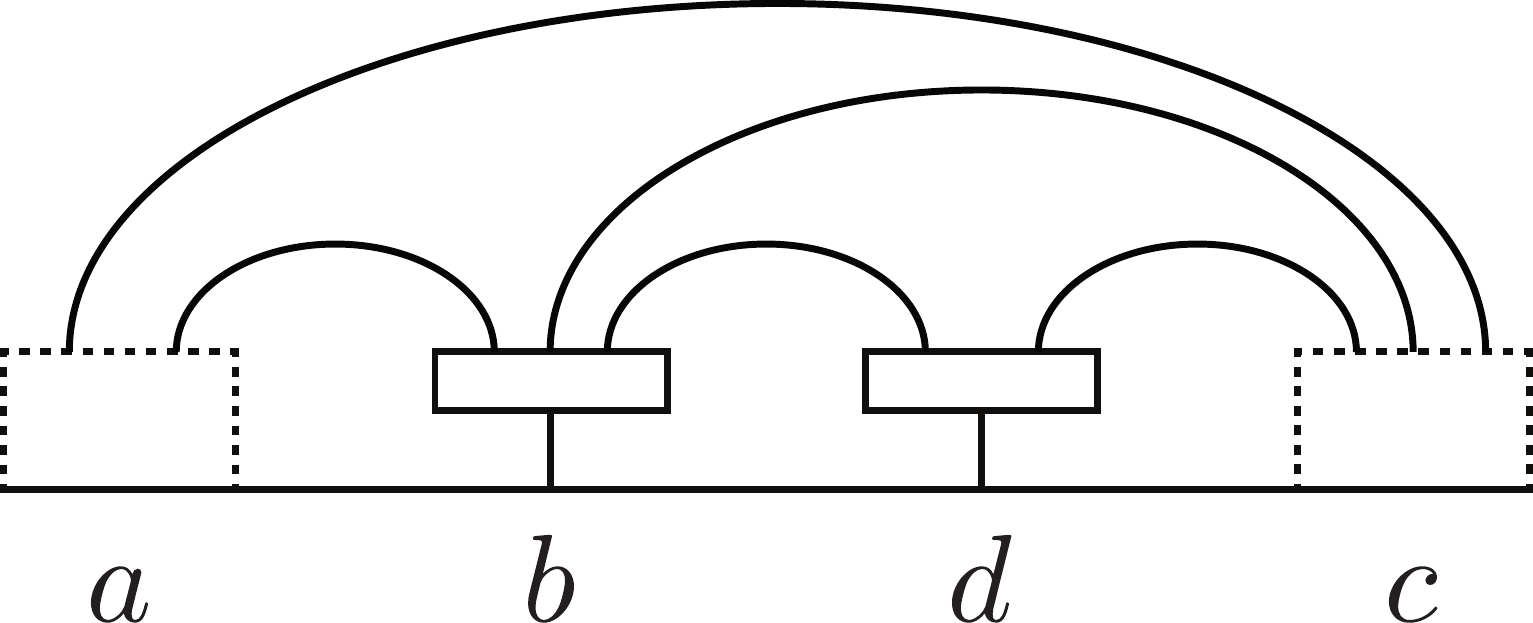}}}  \\[1em]
\label{2ndmap0} 
& \qquad \longmapsto \qquad \quad \vcenter{\hbox{\includegraphics[scale=0.275]{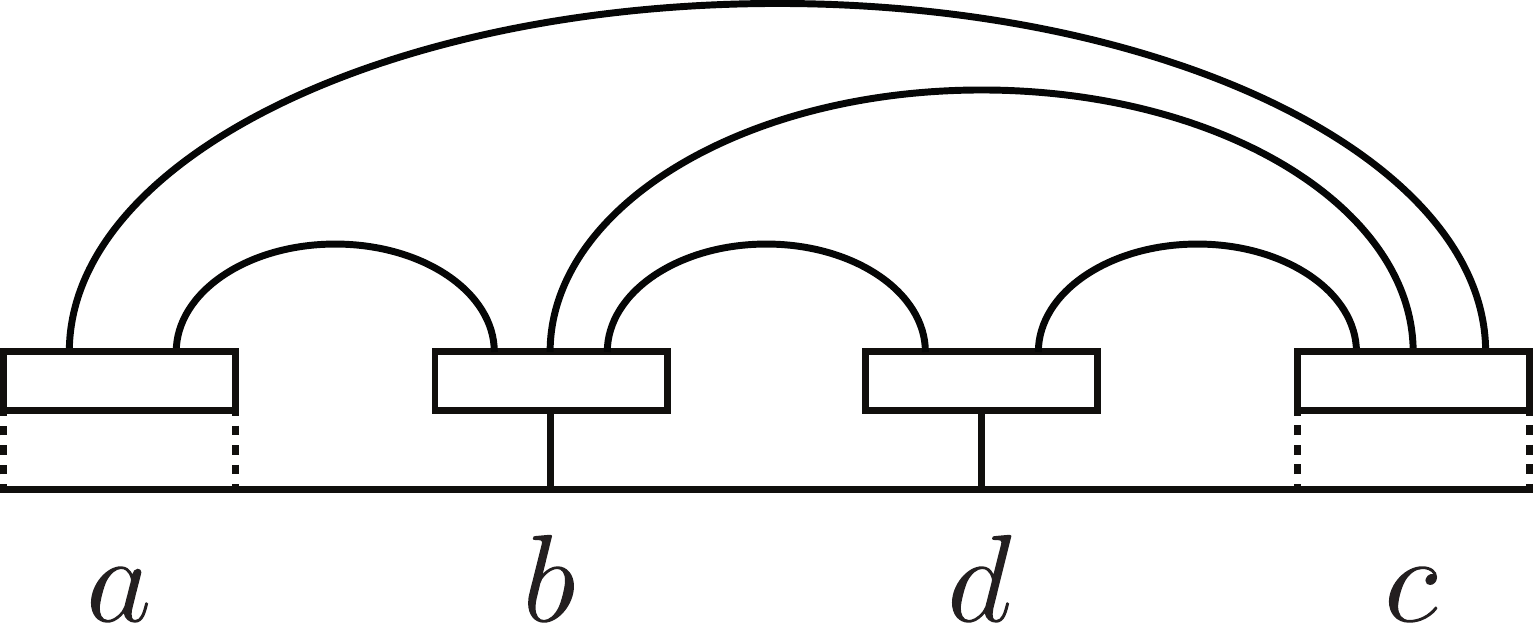}}}  \\[1em]
\label{3rdmap0} 
& \qquad \longmapsto \qquad \quad \vcenter{\hbox{\includegraphics[scale=0.275]{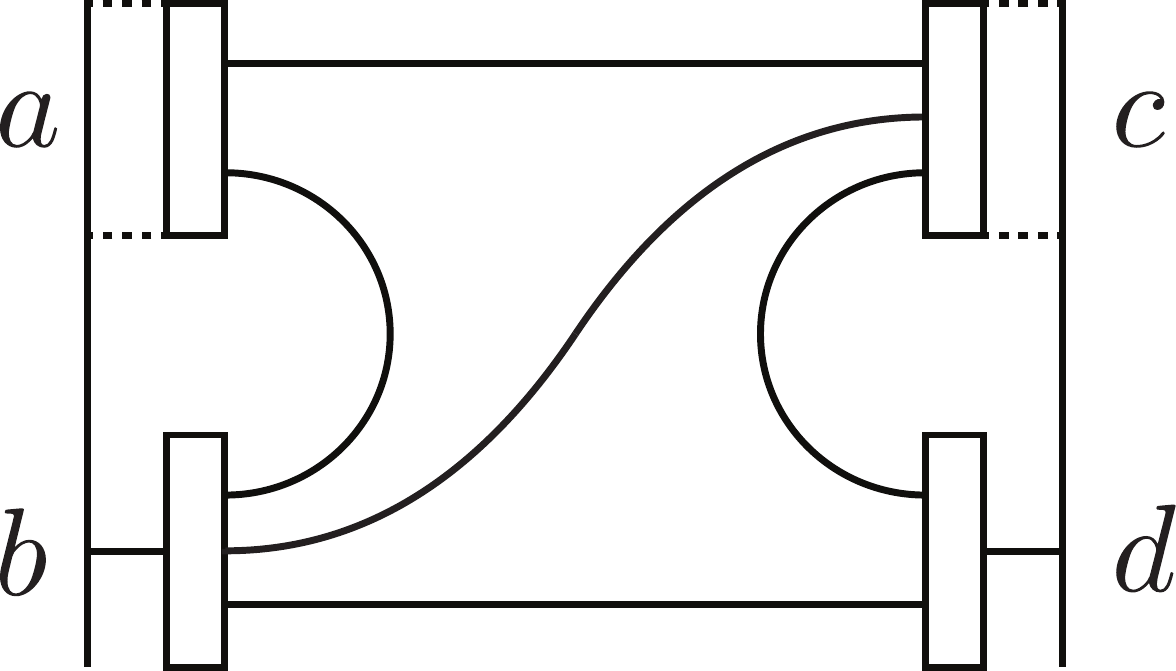} .}} 
\end{align}
The first and third maps are obvious isomorphisms of vector spaces, and lemma~\ref{ChangeOfBasisLem} 
implies that the second map is an isomorphism of vector spaces too.  Because the composition 
sends~\eqref{MGenerators} to $\PD1_\multii$, $\PD1_\multii$ is a basis for $\WJ_\multii(\nu)$.

Next, we obtain $\PD2_\multii$~\eqref{InsertOneHorizBox} by applying this composition of three maps, 
similar to (\ref{1stmap0}--\ref{3rdmap0}), to $\PD1_\multii$:
\begin{align} \label{1stmap} 
\vcenter{\hbox{\includegraphics[scale=0.275]{e-Basis7.pdf}}} \quad
& \qquad \longmapsto \qquad \quad \vcenter{\hbox{\includegraphics[scale=0.275]{e-Trans5.pdf}}}  \\[1em]
\label{2ndmap} 
& \qquad \longmapsto \qquad \quad \vcenter{\hbox{\includegraphics[scale=0.275]{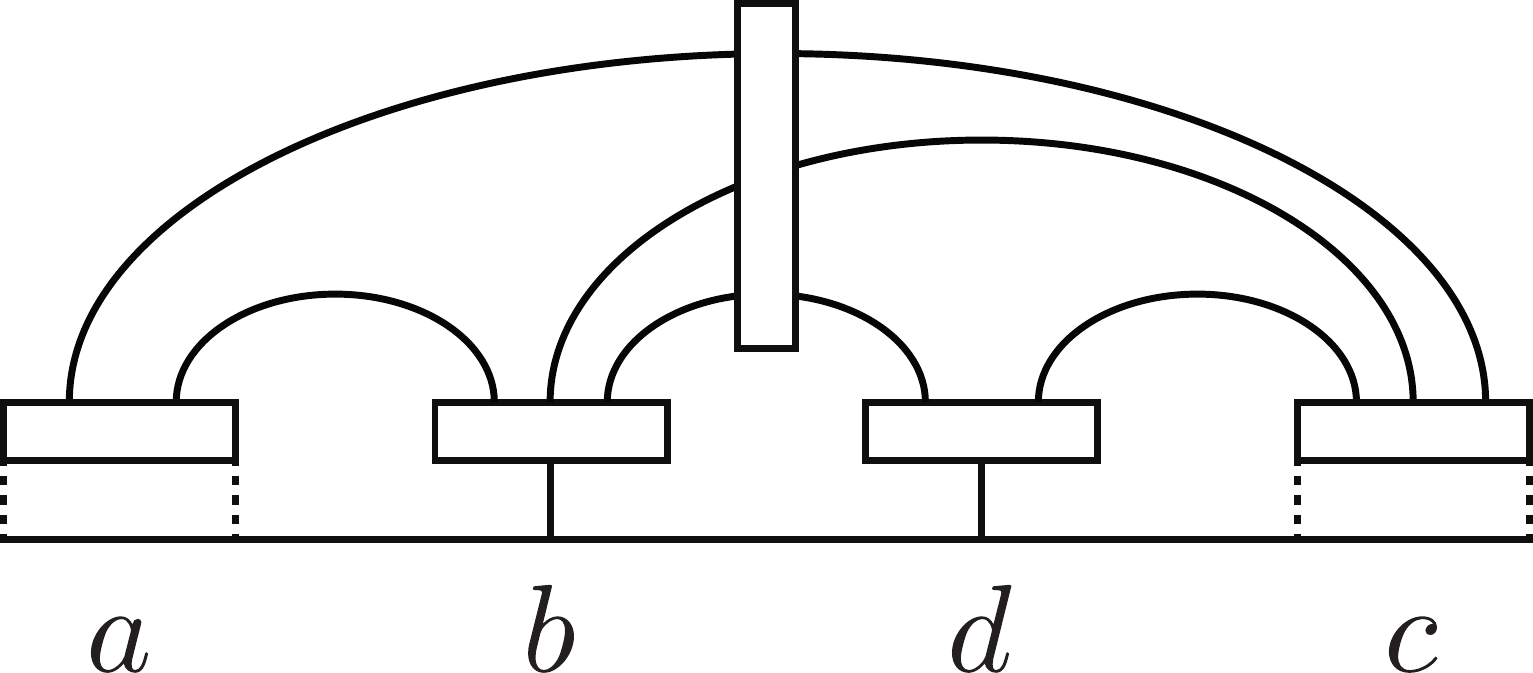}}}  \\[1em]
\label{3rdmap} 
& \qquad \longmapsto \qquad \quad \vcenter{\hbox{\includegraphics[scale=0.275]{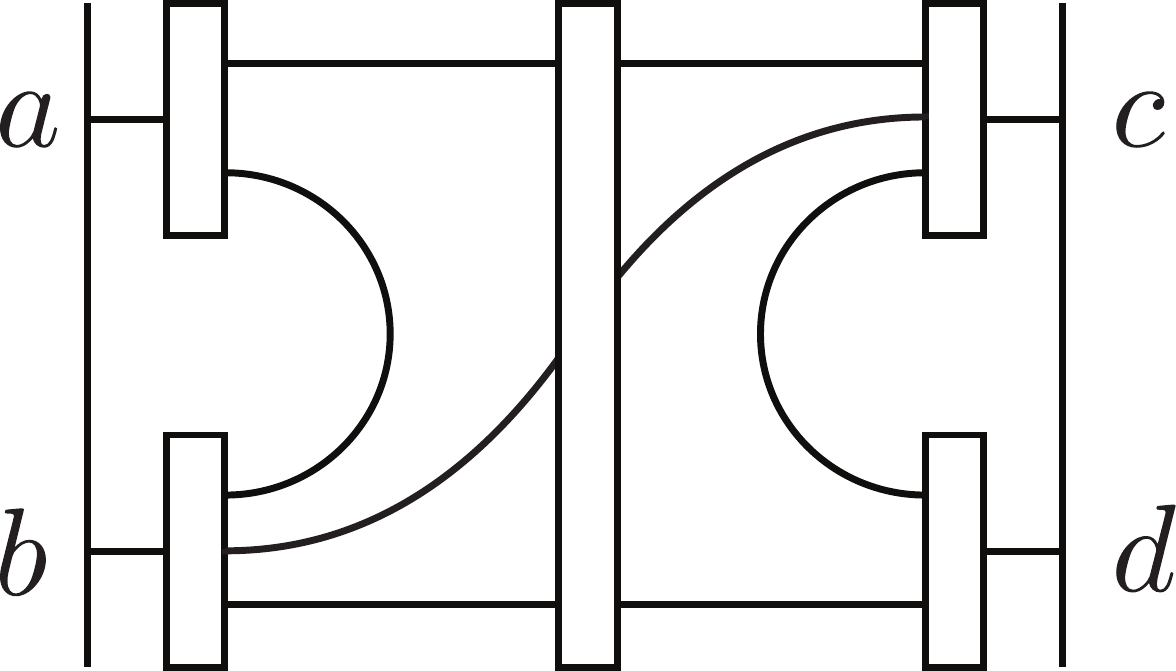}}} 
\quad =  \quad \vcenter{\hbox{\includegraphics[scale=0.275]{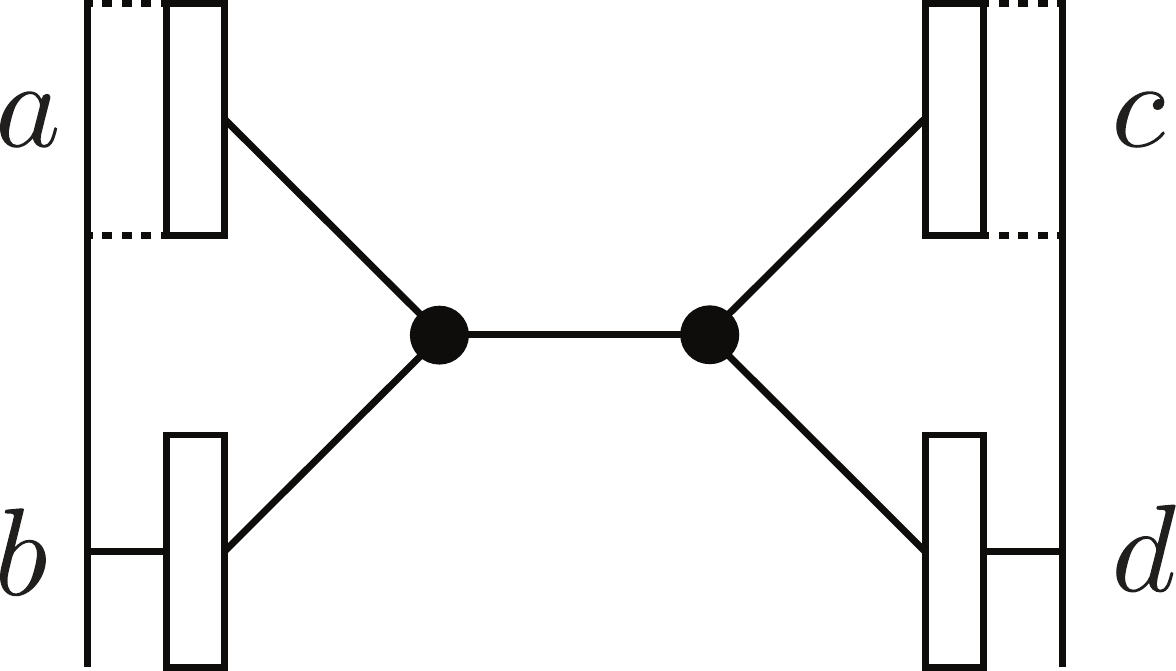} .}} 
\end{align}
We must check that the size $j$ of the new projector box, set vertically across all horizontal crossing links 
in~\eqref{InsertTwoBoxes}, is less than $\ppmin(q)$.  Indeed, with $\lds = (\sIndex_1, \sIndex_2, \ldots, \sIndex_{\np_\multii - 1})$
and $t = \sIndex_{\np_\multii}$, we have $\Summed_\multii = \smax(\smash{\lds}) + t$ and
\begin{align} 
j \leq \max (\DefectSet\sub{\alpha_1,t} \cap \DefectSet\sub{\alpha_2,t}) 
= \min(\Defect_{\alpha_1} + t, \Defect_{\alpha_2} + t) \leq  \smax(\smash{\lds}) + t
= \Summed_\multii < \ppmin(q). 
\end{align} 
Reasoning as in the case of $\PD1_\multii$, we conclude that $\PD2_\multii$ is a basis for $\WJ_\multii(\nu)$.

Finally, we obtain $\PD3_\multii$~\eqref{InsertOneBox} from $\PD1_\multii$ via the procedure of 
the previous paragraph with $b$ and $c$ exchanged.  The size $i$ of the new projector box
satisfies
\begin{align} 
i \leq \max (\DefectSet\sub{\alpha_1,\alpha_2} \cap \DefectSet\sub{t,t}) 
= \min(2 t, \Defect_{\alpha_1} + \Defect_{\alpha_2}) \leq 
\min(2 t, 2 \smax(\smash{\lds})) \leq \smax(\smash{\lds}) + t = \Summed_\multii < \ppmin(q).
\end{align} 
Hence, reasoning as in the case of $\PD2_\multii$, we conclude that $\PD3_\multii$ is a basis for $\WJ_\multii(\nu)$.
\end{proof}

Later, in the proof of corollary~\ref{FinalCor}, we will show that every tangle in the first basis $\PD1_\multii$~\eqref{InsertTwoBoxes}
is a polynomial in the claimed generators~\eqref{EGenerators-00} 
of $\WJ_\multii(\nu)$. 
This is a key step to prove theorem~\ref{GeneratorThm}.
We use the third basis $\PD3_\multii$~\eqref{InsertOneBox} in the proofs of lemmas~\ref{TProductLem} and~\ref{SameSideLem2}.
We include the second basis $\PD2_\multii$~\eqref{InsertOneHorizBox} for completeness. 

In fact, using proposition~\ref{SpecialTProp}, one may explicitly work out the coefficients of the change of basis from $\PD1_\multii$ to $\PD2_\multii$ and 
$\PD3_\multii$.  We leave this to the reader, as we do not use these explicit coefficients in the present article.

\bigskip

In the next lemmas~\ref{ParityLem} and~\ref{MinMaxLem}, we collect further technical results of combinatorial nature. 
Lemma~\ref{ParityLem} is needed in the proofs of lemmas~\ref{MinMaxLem},~\ref{TProductLem} and \ref{SameSideLem2},
and lemma~\ref{MinMaxLem} is needed in the proof of lemma~\ref{2stepLem}. 



\begin{lem} \label{ParityLem} 
For any Jones-Wenzl link pattern $\alpha \in \PP_\multii$, 
the following hold:
\begin{enumerate} 
\itemcolor{red}
\item\label{PaRittt1} All elements of the union $\DefectSet\sub{\alpha,\sIndex_1} \cup \smash{\DefectSet_\fds}$ have the same parity.

\item \label{PaRittt3} 
Upon writing $\alpha$ in the form
\begin{align} \label{WJsubForm} 
\alpha \quad = \quad \vcenter{\hbox{\includegraphics[scale=0.275]{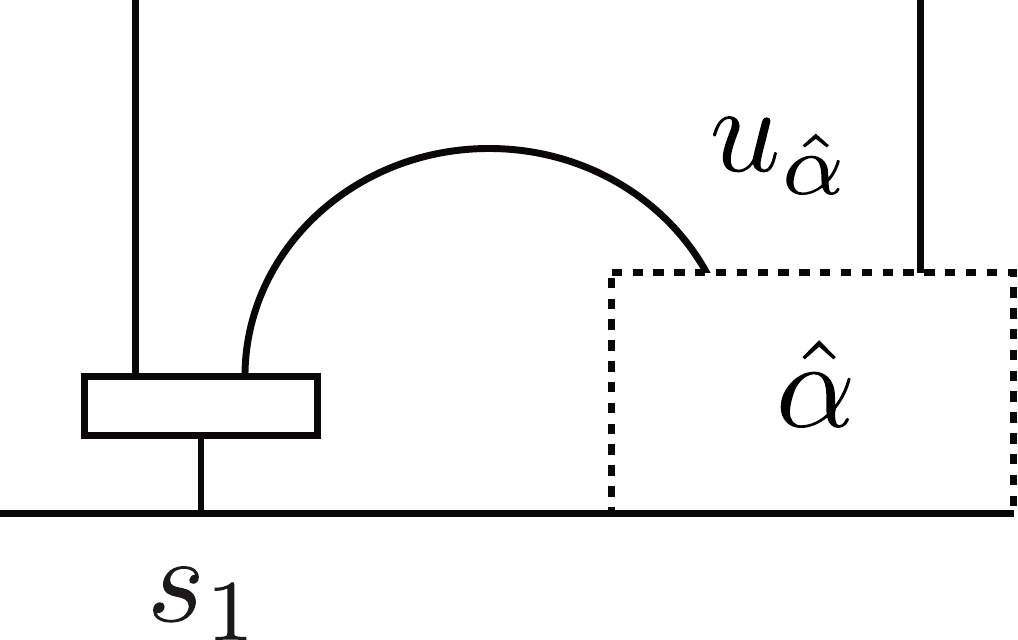} ,}} 
\end{align} 
we have $\Defect_{\hat{\alpha}} \in \DefectSet\sub{\alpha,\sIndex_1} \cap \smash{\DefectSet_\fds}$.

\item\label{PaRittt2} The intersection $\DefectSet\sub{\alpha,\sIndex_1} \cap \smash{\DefectSet_\fds}$ is not empty, and it is given by
\begin{align} \label{EalphSet} 
\DefectSet\sub{\alpha,\sIndex_1} \cap \DefectSet_{\fds} 
= \big\{ \max(\smin(\fds),|\Defect_\alpha - \sIndex_1|), \,
\max(\smin(\fds),|\Defect_\alpha - \sIndex_1|) + 2 , \,
\ldots, \, \min(  \smax(\smash{\fds}), \Defect_\alpha + \sIndex_1) \big\} .
\end{align} 
\end{enumerate}
\end{lem}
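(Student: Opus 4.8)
The plan is to dispatch item~\ref{PaRittt1} by a parity computation, use it to interpret the closed three-vertex data appearing in~\eqref{WJsubForm}, and then read off item~\ref{PaRittt3} from that diagram by a strand count; item~\ref{PaRittt2} will then follow formally from items~\ref{PaRittt1} and~\ref{PaRittt3}. For item~\ref{PaRittt1}, first I would record that every element of $\DefectSet\sub{\alpha,\sIndex_1} = \DefectSet\sub{\Defect_\alpha,\sIndex_1}$ is congruent to $\Defect_\alpha + \sIndex_1$ modulo $2$, while by~\eqref{DefSet2} and~\eqref{smaxeq} every element of $\DefectSet_\fds$ is congruent to $\smax(\fds) = \Summed_\fds$ modulo $2$. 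Since $\alpha \in \PP_\multii$ forces $\Defect_\alpha \in \DefectSet_\multii$, the same two facts give $\Defect_\alpha \equiv \smax(\multii) = \Summed_\multii \pmod 2$, and then, since $\Summed_\multii = \sIndex_1 + \Summed_\fds$ (by~\eqref{ndefn} and~\eqref{hatsMax}), we get $\Defect_\alpha + \sIndex_1 \equiv \Summed_\multii + \sIndex_1 \equiv \Summed_\fds \pmod 2$; hence the union $\DefectSet\sub{\alpha,\sIndex_1} \cup \DefectSet_\fds$ consists of integers of a single parity.

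For item~\ref{PaRittt3} --- the substantive step --- I note first that the sub-pattern $\hat{\alpha}$ in~\eqref{WJsubForm} is itself a $\fds$-Jones-Wenzl link pattern, so $\Defect_{\hat{\alpha}} \in \DefectSet_\fds$ by definition of $\DefectSet_\fds$; it remains to show $\Defect_{\hat{\alpha}} \in \DefectSet\sub{\alpha,\sIndex_1}$. For this I would count strands in~\eqref{WJsubForm} at the region where the cable of size $\sIndex_1$ leaving the first projector box and the cable of size $\Defect_{\hat{\alpha}}$ leaving $\hat{\alpha}$ are joined by planar non-crossing arcs to the $\Defect_\alpha$ defects of $\alpha$. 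Writing $i$ for the number of arcs that join a strand of the first cable to a strand of the $\hat{\alpha}$-cable, property~\eqref{ProjectorID2} forbids any arc joining two strands of the first cable, and (by the way $\hat{\alpha}$ is extracted in~\eqref{WJsubForm}) no arc outside $\hat{\alpha}$ joins two defect strands of $\hat{\alpha}$; hence the remaining $\sIndex_1 - i$ and $\Defect_{\hat{\alpha}} - i$ strands each terminate at a distinct defect of $\alpha$, giving $\Defect_\alpha = (\sIndex_1 - i) + (\Defect_{\hat{\alpha}} - i)$. Thus $i = \tfrac{1}{2}(\sIndex_1 + \Defect_{\hat{\alpha}} - \Defect_\alpha)$, and the constraints $0 \leq i$, $i \leq \sIndex_1$, $i \leq \Defect_{\hat{\alpha}}$ together with $i \in \bZnn$ (its integrality being automatic from item~\ref{PaRittt1}) amount exactly to the triangle inequalities and parity condition defining $\Defect_{\hat{\alpha}} \in \DefectSet\sub{\Defect_\alpha,\sIndex_1}$; equivalently, they are the admissibility of the closed three-vertex~\eqref{3vertex1} with legs $\sIndex_1, \Defect_\alpha, \Defect_{\hat{\alpha}}$. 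This gives $\Defect_{\hat{\alpha}} \in \DefectSet\sub{\alpha,\sIndex_1} \cap \DefectSet_\fds$. I expect the only real obstacle to lie here, namely in justifying rigorously from the geometry of~\eqref{WJsubForm} that no planar arc outside $\hat{\alpha}$ connects two of its defect strands, so that the three multiplicities are genuinely well defined; the remainder is bookkeeping.

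Finally, for item~\ref{PaRittt2}: nonemptiness of $\DefectSet\sub{\alpha,\sIndex_1} \cap \DefectSet_\fds$ is immediate from item~\ref{PaRittt3}, since the decomposition~\eqref{WJsubForm} exists for every $\alpha \in \PP_\multii$. For the explicit description, I would observe that $\DefectSet\sub{\alpha,\sIndex_1}$ is the step-$2$ arithmetic progression running from $|\Defect_\alpha - \sIndex_1|$ to $\Defect_\alpha + \sIndex_1$, and $\DefectSet_\fds$ the step-$2$ progression from $\smin(\fds)$ to $\smax(\fds)$ (by~\eqref{DefSet2}); by item~\ref{PaRittt1} these two progressions lie in the same residue class modulo $2$, so their intersection is the step-$2$ progression from $\max(\smin(\fds), |\Defect_\alpha - \sIndex_1|)$ to $\min(\smax(\fds), \Defect_\alpha + \sIndex_1)$, which is precisely~\eqref{EalphSet}.
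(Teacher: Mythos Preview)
Your proposal is correct and follows essentially the same approach as the paper. The only cosmetic difference is in item~\ref{PaRittt3}: where the paper derives the three triangle inequalities $\Defect_{\hat{\alpha}} \leq \Defect_\alpha + \sIndex_1$, $\Defect_\alpha \leq \sIndex_1 + \Defect_{\hat{\alpha}}$, and $\sIndex_1 \leq \Defect_{\hat{\alpha}} + \Defect_\alpha$ separately (each with its own combinatorial reading of where defects can go), you package them into the single strand-count identity $\Defect_\alpha = (\sIndex_1 - i) + (\Defect_{\hat{\alpha}} - i)$ and read off the same inequalities from $0 \leq i \leq \min(\sIndex_1,\Defect_{\hat{\alpha}})$. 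Your stated concern---that no arc outside $\hat{\alpha}$ joins two of its defect strands---is handled by the very definition of the decomposition~\eqref{WJsubForm}: $\hat{\alpha}$ is the full sub-pattern on the nodes of boxes $2,\ldots,\np_\multii$, so any such turn-back would already belong to $\hat{\alpha}$ rather than lie outside it.
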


\begin{proof} 
We prove items~\ref{PaRittt1}--\ref{PaRittt2} as follows:
\begin{enumerate}[leftmargin=*]
\itemcolor{red}
\item The sets $\DefectSet\sub{\alpha,\sIndex_1}$ and $\smash{\DefectSet_\fds}$ each comprise a sequence of integers that increases in increments of two.  
Explicitly, 
\begin{align} \label{1Sets} 
\DefectSet\sub{\alpha,\sIndex_1} \overset{\eqref{SpecialDefSet}}{=} \big\{ |\Defect_\alpha - \sIndex_1|, |\Defect_\alpha - \sIndex_1| + 2, 
\ldots, \Defect_\alpha + \sIndex_1 \big\}, \qquad \smash{\DefectSet_\fds} \overset{\eqref{DefSet2}}{=} \{ 
\smin(\fds), \smin(\fds) + 2, 
\ldots, \smax(\smash{\fds}) \} ,
\end{align}
so if any two elements from each set~\eqref{1Sets} have the same parity, 
then all elements of the union $\DefectSet\sub{\alpha,\sIndex_1} \cup \smash{\DefectSet_\fds}$ do. 
We write this as $a \sim b$ for $a \in \DefectSet\sub{\alpha,\sIndex_1}$ and $b \in \smash{\DefectSet_\fds}$.
Now, to prove item~\ref{PaRittt1}, we only have to note that $\Defect_\alpha \in \DefectSet_\multii$ implies
\begin{align} 
\Defect_\alpha \sim \Summed_\multii 
\qquad  \qquad \overset{\eqref{ndefn}}{\Longrightarrow} \qquad \qquad 
\Defect_\alpha + \sIndex_1 \sim  2 \sIndex_1 + \sIndex_2 + \dotsm +\sIndex_{\np_\multii} \sim \sIndex_2 + \sIndex_3 + \dotsm +\sIndex_{\np_\multii} 
\overset{\eqref{hatsMax}}{=} \smax(\fds) . 
\end{align} 

\item 
The claim in item~\ref{PaRittt3}
amounts to showing that $\Defect_{\hat{\alpha}} \in \DefectSet\sub{\alpha,\sIndex_1}$, 
because $\Defect_{\hat{\alpha}} \in \smash{\DefectSet_\fds}$ by definition.  
Now, the defects of $\hat{\alpha}$ either are defects of $\alpha$, or attach 
to the nodes of the projector box of size $\sIndex_1$.  Hence, we have
\begin{align} \label{Extreme1} 
\Defect_{\hat{\alpha}} \leq \Defect_\alpha + \sIndex_1, \qquad\qquad \text{with equality achieved for} \qquad
\quad \vcenter{\hbox{\includegraphics[scale=0.275]{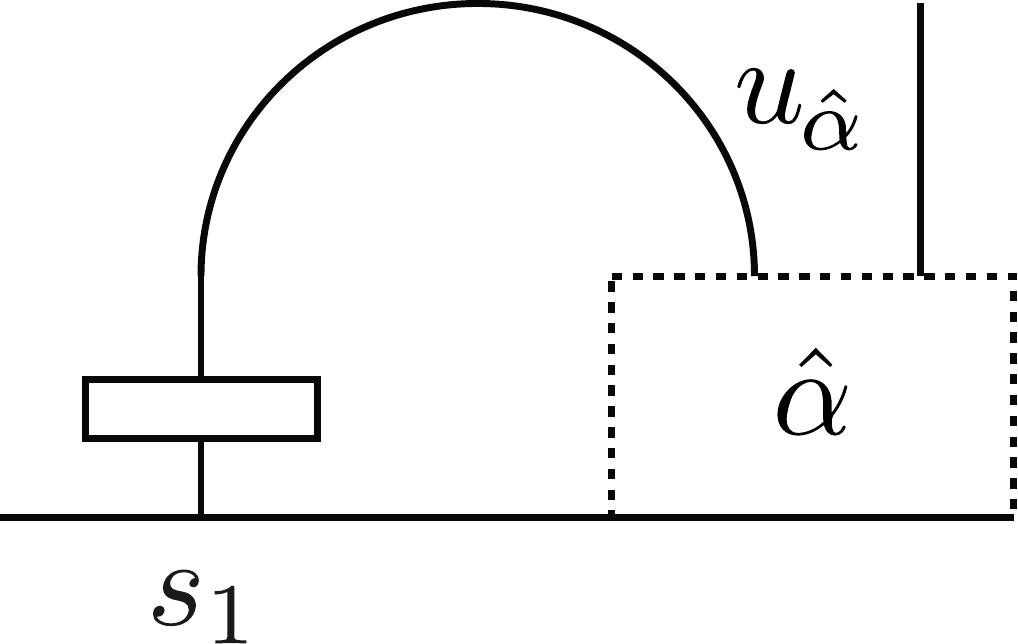} .}} 
\end{align} 
Also, the defects of $\alpha$ either attach to the nodes of the projector box of size $\sIndex_1$, or are themselves defects of $\hat{\alpha}$, so 
\begin{align} \label{Extreme2} 
\Defect_{\alpha} \leq \sIndex_1+ \Defect_{\hat{\alpha}}, \qquad\qquad \text{with equality achieved for} \qquad
\quad \vcenter{\hbox{\includegraphics[scale=0.275]{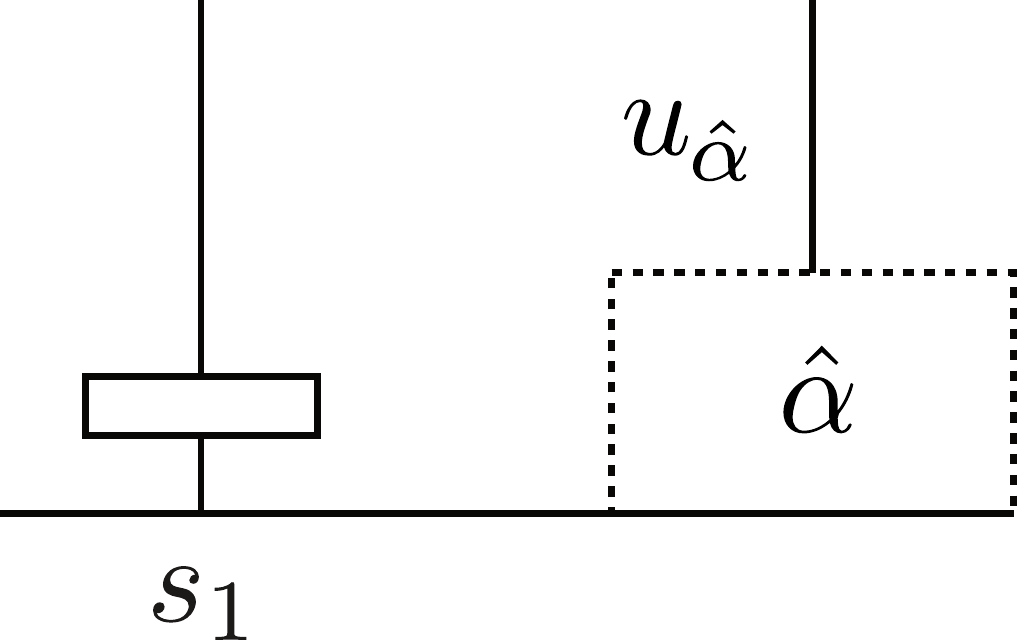} .}} 
\end{align} 
Moreover, both links and defects of $\alpha$ may attach to the projector box of size $\sIndex_1$, and all links that attach to this projector box must 
connect to defects of $\hat{\alpha}$.  Hence, we have
\begin{align} \label{Extreme3} 
\sIndex_1 \leq \Defect_{\hat{\alpha}} + \Defect_{\alpha}, \qquad\qquad \text{with equality achieved for} \qquad
\quad \vcenter{\hbox{\includegraphics[scale=0.275]{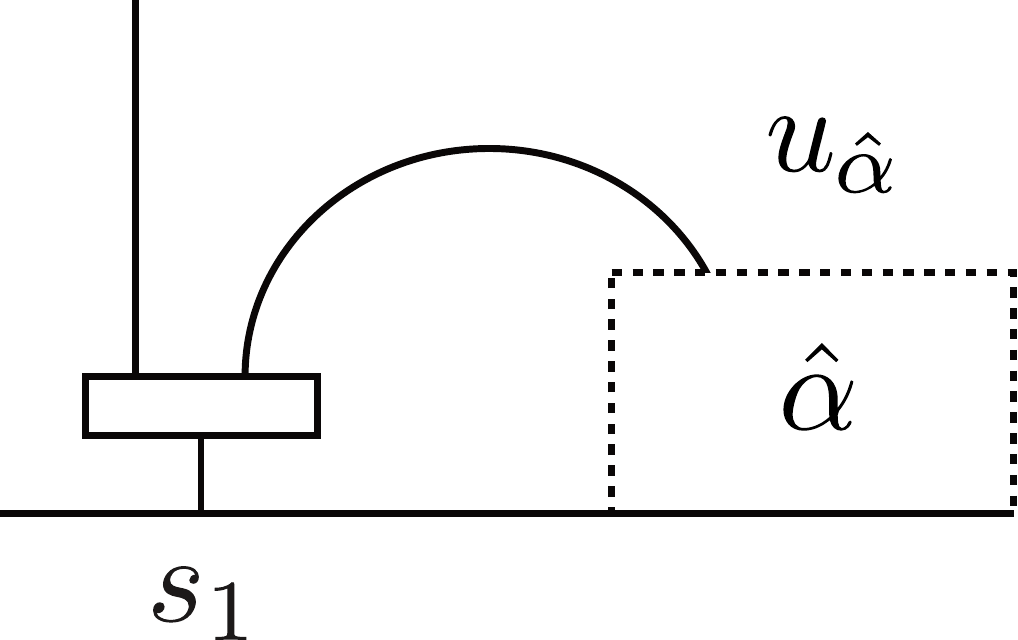} .}}
\end{align} 
Altogether, (\ref{Extreme1}--\ref{Extreme3}) imply that $| \Defect_\alpha - \sIndex_1 | \leq \Defect_{\hat{\alpha}} \leq \Defect_\alpha + \sIndex_1$.  
Moreover, with $\Defect_{\hat{\alpha}} \in \smash{\DefectSet_\fds}$, item~\ref{PaRittt1} says that $\Defect_{\hat{\alpha}}$ has the same parity 
as the elements of $\DefectSet\sub{\alpha,\sIndex_1}$.  
Hence, we conclude from~\eqref{1Sets}  that $\Defect_{\hat{\alpha}} \in \DefectSet\sub{\alpha,\sIndex_1}$, proving item~\ref{PaRittt3}.

\item Expressions~\eqref{1Sets} for the sets $\DefectSet\sub{\alpha,\sIndex_1}$ and $\smash{\DefectSet_{\fds}}$ 
combined with item~\ref{PaRittt3}
imply expression~\eqref{EalphSet} for the intersection $\DefectSet\sub{\alpha,\sIndex_1} \cap \smash{\DefectSet_{\fds}}$.
This proves item~\ref{PaRittt2}.
\end{enumerate}
\end{proof} 


\begin{lem} \label{MinMaxLem}
For all Jones-Wenzl link patterns $\alpha_1,\alpha_2 \in \PP_\multii$ 
such that $\Defect_{\alpha_2} = \Defect_{\alpha_1} + 2$, we have 
\begin{align}
\DefectSet\sub{\alpha_1,\sIndex_1} \cap \DefectSet\sub{\alpha_2,\sIndex_1} \cap \DefectSet_\fds \; \neq \; \emptyset .
\end{align}
\end{lem}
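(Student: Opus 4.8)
The plan is to reduce the statement to the explicit formula for $\DefectSet\sub{\alpha,\sIndex_1} \cap \DefectSet_\fds$ given by item~\ref{PaRittt2} of lemma~\ref{ParityLem}, after which only a short list of elementary inequalities between the endpoints of the sets involved must be checked. Throughout I would abbreviate $d_k := \Defect_{\alpha_k}$ (so that $d_2 = d_1 + 2$), $t' := \sIndex_1$, $m := \smin(\fds)$ and $M := \smax(\fds)$; note that here $\np_\multii \geq 2$ forces $\multii \neq (0)$, hence $\multii \in \bZpos^\#$ has all entries positive, and in particular $t' = \sIndex_1 \geq 1$.

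First I would apply item~\ref{PaRittt2} of lemma~\ref{ParityLem} separately to $\alpha_1$ and to $\alpha_2$, obtaining for $k \in \{1,2\}$
\[
\DefectSet\sub{\alpha_k,\sIndex_1} \cap \DefectSet_\fds \; = \; \big\{ \max(m, |d_k - t'|), \; \max(m, |d_k - t'|) + 2, \; \ldots, \; \min(M, d_k + t') \big\} \; \neq \; \emptyset ,
\]
whose nonemptiness in particular records the inequalities $m \leq M$, $\;m \leq d_k + t'$ and $|d_k - t'| \leq M$ for $k = 1, 2$. Next I would use item~\ref{PaRittt1} of the same lemma, applied to $\alpha_1$ and to $\alpha_2$: since the two corresponding unions share the nonempty common part $\DefectSet_\fds$, all elements of $\DefectSet\sub{\alpha_1,\sIndex_1} \cup \DefectSet\sub{\alpha_2,\sIndex_1} \cup \DefectSet_\fds$ have one and the same parity. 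Combined with the fact that each of $\DefectSet\sub{\alpha_1,\sIndex_1}$, $\DefectSet\sub{\alpha_2,\sIndex_1}$ (by~\eqref{SpecialDefSet}) and $\DefectSet_\fds$ (by~\eqref{DefSet2}) is an arithmetic progression with common difference $2$, this shows that the triple intersection in the statement equals exactly the set of integers of that common parity lying in the closed interval $[L, R]$, where
\[
L := \max\big(m, \; |d_1 - t'|, \; |d_2 - t'|\big) , \qquad R := \min\big(M, \; d_1 + t', \; d_2 + t'\big) = \min(M, \, d_1 + t') ,
\]
the final equality because $d_1 + t' < d_2 + t'$. It therefore suffices to prove $L \leq R$, for then $L$ itself (which shares the common parity) belongs to the triple intersection.

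To prove $L \leq R$, I would compare each of the three lower-endpoint candidates against each of the two upper-endpoint candidates. Five of the six comparisons are immediate: $m \leq M$, $m \leq d_1 + t'$ and $|d_1 - t'| \leq M$ were recorded above; $|d_1 - t'| \leq d_1 + t'$ holds because $d_1, t' \geq 0$; and $|d_2 - t'| \leq M$ was also recorded. The one remaining comparison, $|d_2 - t'| \leq d_1 + t'$, is precisely where the hypothesis $\Defect_{\alpha_2} = \Defect_{\alpha_1} + 2$ together with $t' \geq 1$ is used: from $d_2 = d_1 + 2$ and $t' \geq 1$ one gets $d_2 - t' = d_1 + 2 - t' \leq d_1 + t'$ and $t' - d_2 = t' - d_1 - 2 \leq t' + d_1$, hence $|d_2 - t'| \leq d_1 + t'$. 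This establishes $L \leq R$ and completes the argument. I do not expect any real obstacle here; the only points requiring a little care are this last inequality and the parity bookkeeping, both of which follow directly from lemma~\ref{ParityLem}.
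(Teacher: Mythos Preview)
Your proof is correct and rests on the same key input as the paper's, namely the explicit description of $\DefectSet\sub{\alpha,\sIndex_1}\cap\DefectSet_\fds$ from item~\ref{PaRittt2} of lemma~\ref{ParityLem} together with the parity statement of item~\ref{PaRittt1}. The organization differs slightly: the paper splits into two cases according to whether $\max(\DefectSet\sub{\alpha_1,\sIndex_1}\cap\DefectSet_\fds)$ equals $\smax(\fds)$ or not, and in each case exhibits an explicit element ($\smax(\fds)$ or $\Defect_{\alpha_1}+\sIndex_1$, respectively) of the triple intersection. You instead write the triple intersection uniformly as the integers of the common parity in $[L,R]$ and verify $L\le R$ by a short list of pairwise inequalities, so no case split is needed. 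Both arguments hinge on the same inequality $|d_2-t'|\le d_1+t'$, which uses $d_2=d_1+2$ and $\sIndex_1\ge 1$; your case-free packaging is a mild streamlining of the paper's proof rather than a genuinely different route.
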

\begin{proof}
We recall from item~\ref{PaRittt2} of lemma~\ref{ParityLem} that for any Jones-Wenzl link pattern $\alpha \in \PP_\multii$, we have
\begin{align}
\label{MinMax1b}
 \max (\DefectSet\sub{\alpha,\sIndex_1} \cap \DefectSet_\fds) &\overset{\eqref{EalphSet}}{=} \min( \smax(\smash{\fds}), \Defect_{\alpha} + \sIndex_1), \\
\label{MinMax1a}
 \min (\DefectSet\sub{\alpha,\sIndex_1} \cap \DefectSet_\fds) &\overset{\eqref{EalphSet}}{=} \max \big(\smin(\fds), |\Defect_\alpha - \sIndex_1| \big). 
\end{align}
By~\eqref{MinMax1b}, we clearly have $\max (\DefectSet\sub{\alpha_1,\sIndex_1} \cap \smash{\DefectSet_\fds}) \leq \smax(\smash{\fds})$, 
which leads us to consider two cases:

\begin{enumerate}[leftmargin=*]
\itemcolor{red}
\item \label{ContainIt1SF} $\max (\DefectSet\sub{\alpha_1,\sIndex_1} \cap \smash{\DefectSet_\fds}) = \smax(\smash{\fds})$:  
In this case, we trivially have $\smax(\smash{\fds}) \in \DefectSet\sub{\alpha_1,\sIndex_1} \cap \smash{\DefectSet_\fds}$.  Furthermore,~\eqref{MinMax1b} with $\alpha = \alpha_1$ gives $\smax(\smash{\fds}) \leq \Defect_{\alpha_1} + \sIndex_1$,
and with $\Defect_{\alpha_2} = \Defect_{\alpha_1} + 2$, this also implies that 
\begin{align}
\smax(\smash{\fds}) < \Defect_{\alpha_1} + \sIndex_1 + 2 = \Defect_{\alpha_2} + \sIndex_1 
\qquad \Longrightarrow \qquad 
\smax(\smash{\fds}) \overset{\eqref{MinMax1b}}{=} \max (\DefectSet\sub{\alpha_2,\sIndex_1} \cap \smash{\DefectSet_\fds}) .
\end{align}
We conclude that $\smax(\smash{\fds}) \in \DefectSet\sub{\alpha_2,\sIndex_1} \cap \smash{\DefectSet_\fds}$ too. 
Thus, we have $\smax(\smash{\fds}) \in \DefectSet\sub{\alpha_1,\sIndex_1} \cap \DefectSet\sub{\alpha_2,\sIndex_1} \cap \smash{\DefectSet_\fds}$, so this set is not empty.

\item \label{ContainIt2SF} $\max (\DefectSet\sub{\alpha_1,\sIndex_1} \cap \smash{\DefectSet_\fds}) < \smax(\smash{\fds})$: 
In this case,~\eqref{MinMax1b} with $\alpha = \alpha_1$ gives $\Defect_{\alpha_1} + \sIndex_1 <  \smax(\smash{\fds})$. Thus, we have 
\begin{align} \label{FirstInclus} 
\Defect_{\alpha_1} + \sIndex_1 = \min( \smax(\smash{\fds}), \Defect_{\alpha_1} + \sIndex_1) \overset{\eqref{MinMax1b}}{=} \max (\DefectSet\sub{\alpha_1,\sIndex_1} \cap \DefectSet_\fds) 
\qquad \Longrightarrow \qquad \Defect_{\alpha_1} + \sIndex_1 \in \DefectSet\sub{\alpha_1,\sIndex_1} \cap \DefectSet_\fds . 
\end{align} 
To finish, we prove that $\Defect_{\alpha_1} + \sIndex_1 \in \DefectSet\sub{\alpha_2,\sIndex_1}$. For this purpose, 
by lemma~\ref{ParityLem}, we only need to show that
\begin{align} 
\label{ConseqIneq00} 
\min \DefectSet\sub{\alpha_2,\sIndex_1} \leq \Defect_{\alpha_1} + \sIndex_1 \leq \max \DefectSet\sub{\alpha_2,\sIndex_1} .
\end{align}
First, with $\Defect_{\alpha_2} = \Defect_{\alpha_1} + 2$, we have
\begin{align} 
\Defect_{\alpha_1} + \sIndex_1 < \Defect_{\alpha_1} + \sIndex_1 + 2 = \Defect_{\alpha_2} + \sIndex_1 \overset{\eqref{1Sets}}{=} \max \DefectSet\sub{\alpha_2,\sIndex_1}.
\end{align} 
Second, we observe that
\begin{align} 
\begin{cases} 
\sIndex_1 - \Defect_{\alpha_2} \leq \sIndex_1 \leq \Defect_{\alpha_1} + \sIndex_1 \\
\Defect_{\alpha_2} - \sIndex_1 \leq \Defect_{\alpha_2} - \sIndex_1 + 2 (\sIndex_1 - 1) = \Defect_{\alpha_1} + \sIndex_1 
\end{cases} 
\qquad \Longrightarrow \qquad 
\min \DefectSet\sub{\alpha_2,\sIndex_1} \overset{\eqref{1Sets}}{=} |\Defect_{\alpha_2} - \sIndex_1| \leq \Defect_{\alpha_1} + \sIndex_1.
\end{align} 
Hence, we have 
\begin{align} \label{Conseq} 
\Defect_{\alpha_1} + \sIndex_1 \in \DefectSet\sub{\alpha_2,\sIndex_1} \qquad \overset{\eqref{FirstInclus}}{\Longrightarrow} \qquad 
\Defect_{\alpha_1} + \sIndex_1 \in \DefectSet\sub{\alpha_1,\sIndex_1} \cap \DefectSet\sub{\alpha_2,\sIndex_1} \cap \smash{\DefectSet_\fds} . 
\end{align} 
\end{enumerate}
Together, items~\ref{ContainIt1SF}  and~\ref{ContainIt2SF} show that the intersection $\DefectSet\sub{\alpha_1,\sIndex_1} \cap \DefectSet\sub{\alpha_2,\sIndex_1} \cap \smash{\DefectSet_\fds}$ is never empty.
\end{proof}

We remark that by linearity, lemmas~\ref{ParityLem} and~\ref{MinMaxLem} may be applied to all Jones-Wenzl link states 
$\alpha, \alpha_1, \alpha_2 \in \smash{\PS_\multii\super{s}}$.

As the last preliminary result, we construct new tangles by concatenating a basis tangle of $\WJ_\multii(\nu)$ from
$\PD3_\multii$~\eqref{InsertOneBox} from the left and right with tangles of type appearing later 
in hypothesis~\ref{IndAss1}.
This enables induction in the number $\np_\multii$ of projectors later on in the proofs of lemmas~\ref{SameSideLem2} and~\ref{2stepLem}.
To state this result, we denote by $\one{\{A\}}$ the indicator function on the statement $A$, equaling one if $A$ is true and zero if $A$ is false.
We also use the notation $\flds := (\sIndex_2, \sIndex_3, \ldots, \sIndex_{\np_\multii - 1})$ and $t = \sIndex_{\np_\multii}$.

\begin{lem} \label{TProductLem} 
Suppose $\Summed_\multii < \ppmin(q)$. Let
\begin{align} \label{DefectCond} 
\alpha_1,\alpha_2 \in 
\PP_\lds, \qquad 
\Defect_1 \in \DefectSet\sub{\alpha_1,\sIndex_1} \cap \DefectSet_\flds, \quad \Defect_2 \in \DefectSet\sub{\alpha_2,\sIndex_1} \cap \DefectSet_\flds, 
\qquad  \beta_1,\gamma_1 \in \smash{\PP_\flds\super{\Defect_1}}, \quad \beta_2,\gamma_2 \in \smash{\PP_\flds\super{\Defect_2}} ,
\end{align} 
and, for each index
\begin{align} \label{EDefSet} 
i \in \DefectSet\sub{\Defect_1,\Defect_2} \cap \DefectSet\sub{t,t} 
= \big\{ |\Defect_1-\Defect_2|, |\Defect_1 - \Defect_2| + 2 | \, , \ldots, 
\min(\Defect_1 + \Defect_2, 2 t) \big\},
\end{align} 
let $\smash{T_i \left( 
\begin{smallmatrix}
\alpha_1 & \beta_1 & \gamma_1 \\ 
\alpha_2 & \beta_2 & \gamma_2 
\end{smallmatrix} 
\right)}$ 
denote the following product of three tangles: 
\begin{align} \label{TProduct}  
\vcenter{\hbox{\includegraphics[scale=0.275]{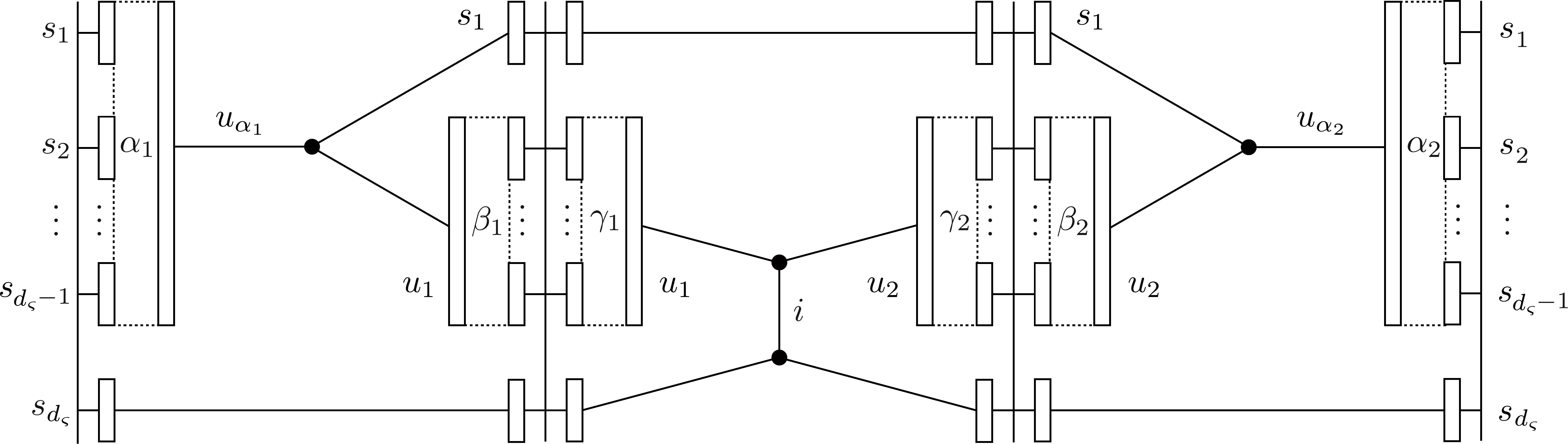} .}} 
\end{align} 
Then we have
\begin{align} \label{TProduct2} T_i 
\begin{pmatrix}
\alpha_1 & \beta_1 & \gamma_1 \\ 
\alpha_2 & \beta_2 & \gamma_2 
\end{pmatrix} 
\; = \; c_i 
\begin{pmatrix}
\alpha_1 & \beta_1 & \gamma_1 \\ 
\alpha_2 & \beta_2 & \gamma_2 
\end{pmatrix}
\,\, \times \,\, \vcenter{\hbox{\includegraphics[scale=0.275]{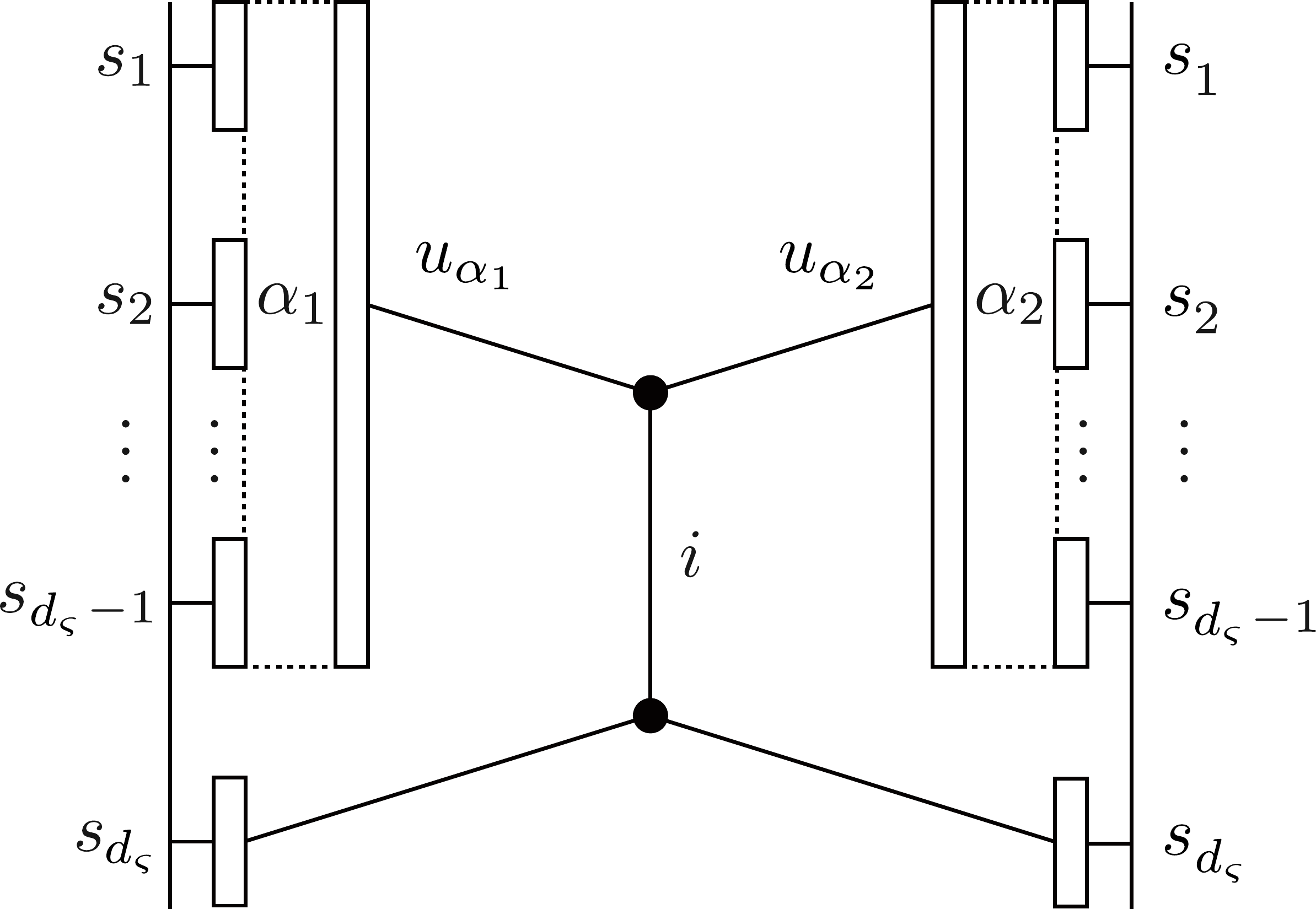} ,}} 
\end{align} 
where the coefficient equals 
\begin{align} \label{TProductCoeff} 
c_i 
\begin{pmatrix}
\alpha_1 & \beta_1 & \gamma_1 \\ 
\alpha_2 & \beta_2 & \gamma_2 
\end{pmatrix} = 
\TetraNet \left[
\begin{array}{lll}
\Defect_1 & \Defect_{\alpha_1} & i \\ 
\Defect_{\alpha_2} & \Defect_2 & \sIndex_1 
\end{array} 
\right] 
\dfrac{\BiForm{\beta_1}{\gamma_1} \BiForm{\beta_2}{\gamma_2}}{\ThetaNet(i,\Defect_{\alpha_1},\Defect_{\alpha_2})}
\; \one{ \left\{i \in \DefectSet\sub{\alpha_1,\alpha_2} \cap \DefectSet\sub{\Defect_1,\Defect_2} \cap \DefectSet\sub{t,t} \right\}}.
\end{align} 
\end{lem}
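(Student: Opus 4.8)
The plan is to derive~\eqref{TProduct2} from the three-tangle product~\eqref{TProduct} by a short sequence of Temperley--Lieb recoupling moves (appendix~\ref{TLRecouplingSect}), keeping track of the scalar factors produced along the way. Throughout, the standing hypothesis $\Summed_\multii < \ppmin(q)$ is what guarantees that every quantum integer, every value of $\ThetaNet$, and every value of $\TetraNet$ that occurs is well defined, and that the relevant $\ThetaNet$-values are nonzero --- this is what makes the loop-erasure identity~\eqref{LoopErasure1} applicable and the division by $\ThetaNet(i,\Defect_{\alpha_1},\Defect_{\alpha_2})$ legitimate, compare the discussion around~\eqref{orthogonalidem}. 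The first step is to collapse the $\flds$-sandwiches: by hypothesis~\eqref{DefectCond}, in the leftmost and rightmost tangles of~\eqref{TProduct} the $\flds$-Jones--Wenzl link patterns $\beta_1,\gamma_1$ (respectively $\beta_2,\gamma_2$), which carry $\Defect_1$ (respectively $\Defect_2$) defects, flank a projector box of that same size exactly as on the left side of~\eqref{GLBiformobvious}. Applying the extraction identity~\eqref{ExtractID} of lemma~\ref{CollectionLem} to each such sub-configuration replaces the sandwich built from $\beta_1,\gamma_1$ by the scalar $\BiForm{\beta_1}{\gamma_1}$ times a single size-$\Defect_1$ box, and likewise for $\beta_2,\gamma_2$. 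This produces the prefactor $\BiForm{\beta_1}{\gamma_1}\BiForm{\beta_2}{\gamma_2}$ and reduces~\eqref{TProduct} to a diagram in which $\alpha_1$, the reflected $\alpha_2$, and the final size-$t$ projector box are unaffected and will reappear unchanged in~\eqref{TProduct2}, while the remaining internal content is a closed network carrying the colors $\sIndex_1,\Defect_1,\Defect_2,\Defect_{\alpha_1},\Defect_{\alpha_2}$ and $i$.

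Next, the tetrahedral move. The admissibility data $\Defect_1\in\DefectSet\sub{\alpha_1,\sIndex_1}$, $\Defect_2\in\DefectSet\sub{\alpha_2,\sIndex_1}$ and $i\in\DefectSet\sub{\Defect_1,\Defect_2}\cap\DefectSet\sub{t,t}$ exhibit the closed network isolated above as a tetrahedron, with its six edge colors arranged as in $\TetraNet\bigl[\begin{smallmatrix}\Defect_1 & \Defect_{\alpha_1} & i\\ \Defect_{\alpha_2} & \Defect_2 & \sIndex_1\end{smallmatrix}\bigr]$, glued along an $i$-colored edge to the skeleton of the target diagram~\eqref{TProduct2}. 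Evaluating this tetrahedron by the recoupling identities of appendix~\ref{TLRecouplingSect} (essentially a single channel-changing move followed by the loop erasure~\eqref{LoopErasure1} that produces the $\ThetaNet$ in the denominator, in the same spirit as~\eqref{orthogonalidem}) turns it into the scalar $\TetraNet\bigl[\begin{smallmatrix}\Defect_1 & \Defect_{\alpha_1} & i\\ \Defect_{\alpha_2} & \Defect_2 & \sIndex_1\end{smallmatrix}\bigr]\big/\ThetaNet(i,\Defect_{\alpha_1},\Defect_{\alpha_2})$ times the diagram~\eqref{TProduct2}. Combined with the prefactor from the first step, this is precisely the coefficient~\eqref{TProductCoeff}, apart from the indicator factor.

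It remains to justify the indicator. Every three-vertex present in~\eqref{TProduct2} and in the intermediate tetrahedron exists only when its three incident colors satisfy the corresponding triangle inequality and parity constraint; these amount to $i\in\DefectSet\sub{\alpha_1,\alpha_2}$ (at the $(\Defect_{\alpha_1},\Defect_{\alpha_2},i)$-vertex), together with $i\in\DefectSet\sub{\Defect_1,\Defect_2}$ and $i\in\DefectSet\sub{t,t}$. Since the latter two already hold by~\eqref{EDefSet}, the only genuine constraint is $i\in\DefectSet\sub{\alpha_1,\alpha_2}$: when it fails, none of the relevant diagrams can be drawn and the product vanishes, which is exactly what the factor $\one\{i\in\DefectSet\sub{\alpha_1,\alpha_2}\cap\DefectSet\sub{\Defect_1,\Defect_2}\cap\DefectSet\sub{t,t}\}$ in~\eqref{TProductCoeff} records. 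The parity halves of all these triangle constraints are automatic: lemma~\ref{ParityLem} forces the colors $\sIndex_1,\Defect_{\alpha_1},\Defect_{\alpha_2},\Defect_1,\Defect_2$ and $i$ to have mutually compatible parities. Assembling the three steps yields~\eqref{TProduct2}.

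\textbf{Expected main obstacle.} The only delicate point is the diagram bookkeeping in the tetrahedral step: one must check that, once the $\flds$-sandwiches have been stripped, the leftover closed network is \emph{exactly} the asserted tetrahedron glued along a single $i$-edge to~\eqref{TProduct2}, with no residual internal loops --- so that the scalar extracted is precisely $\TetraNet[\,\cdots]/\ThetaNet(i,\Defect_{\alpha_1},\Defect_{\alpha_2})$ and nothing more --- and that $\alpha_1$, $\alpha_2$ and the size-$t$ box genuinely pass through as spectators. This is a routine but somewhat intricate chase of the same type as the one behind~\eqref{orthogonalidem}, and it can alternatively be read off from the general change-of-basis computation underlying proposition~\ref{SpecialTProp}.
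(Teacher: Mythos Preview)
Your first step matches the paper exactly: applying~\eqref{ExtractID} to strip the $\flds$-sandwiches and pull out $\BiForm{\beta_1}{\gamma_1}\BiForm{\beta_2}{\gamma_2}$ is precisely what the paper does to reach its intermediate diagram~\eqref{FirstT}.

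The second step diverges in execution. You propose to recognise the residual internal network as ``a tetrahedron glued along an $i$-edge to the target'' and extract the scalar by a channel-changing move plus~\eqref{LoopErasure1}. The paper does not attempt this direct recognition. Instead it expands the simplified tangle~\eqref{FirstT} over the basis $\PD3_\multii$ of lemma~\ref{ManyBasesLem} (equation~\eqref{SecondT}, summing over $j\in\DefectSet\sub{\alpha_1,\alpha_2}\cap\DefectSet\sub{t,t}$), then pairs both sides with an explicit ``dual'' tangle~\eqref{DualDiagram} built from $\alpha_1^\cheque,\alpha_2^\cheque$ (whose existence uses $\smax(\lds)<\ppmin(q)$ and theorem~\ref{BigSSTHM}). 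After two applications of~\eqref{ExtractID} and~\eqref{LoopErasure1} the pairing collapses to a scalar equation: on one side the Tetrahedral network~\eqref{TetraDefinition}, on the other a Theta network, yielding $c_{ik}=\delta_{ik}\,\TetraNet[\cdots]/\ThetaNet(i,\Defect_{\alpha_1},\Defect_{\alpha_2})$ times the bilinear form factors. The Kronecker delta here is what forces the sum~\eqref{SecondT} to a single term and simultaneously produces the indicator in~\eqref{TProductCoeff}: if $i\notin\DefectSet\sub{\alpha_1,\alpha_2}$ the index $i$ simply does not occur in the basis expansion, so the coefficient is zero.

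Your route is morally the same computation --- the ``channel-changing move'' you invoke is the $F$-move whose matrix entries are precisely the $\TetraNet/\ThetaNet$ ratios, and orthogonality via~\eqref{LoopErasure1} is what collapses the sum. But the paper's basis-expansion-plus-dual-pairing packaging sidesteps the bookkeeping obstacle you flag: one never has to verify by hand that the residual network is a clean tetrahedron glued along a single edge, because the dual pairing automatically isolates the correct scalar. Your version would work once made precise, but as written it defers exactly the step you identify as delicate.
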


\begin{proof} 
To begin, we verify that the lemma statement makes sense.  First, we observe that by lemma~\ref{ParityLem} with $\multii \mapsto \lds$, the sets 
$\smash{\DefectSet\sub{\alpha_1,\sIndex_1} \cap \DefectSet_\flds}$ and $\smash{\DefectSet\sub{\alpha_2,\sIndex_1} \cap \DefectSet_\flds}$ are not empty and all elements 
of their union have the same parity.  Therefore, the elements of $\DefectSet\sub{\Defect_1,\Defect_2}$ are even, so the intersection 
$\DefectSet\sub{\Defect_1,\Defect_2} \cap \DefectSet\sub{t,t}$ is not empty. 
Finally, conditions~\eqref{DefectCond} on $\Defect_1$ and $\Defect_2$ guarantee that the three-vertices of the left and right tangles in 
the product~\eqref{TProduct} exist, and the condition 
$i \in  \DefectSet\sub{\Defect_1,\Defect_2} \cap \DefectSet\sub{t,t}$
guarantees that the three-vertices of the middle tangle in this product exist.

Now we compute the product tangle $T_i$ of~\eqref{TProduct}, for all $i$ as in~\eqref{EDefSet}. On the one hand, 
item~\ref{ExtractLemItem} of lemma~\ref{CollectionLem} from appendix~\ref{TLRecouplingSect}
shows that tangle~\eqref{TProduct} simplifies to 
\begin{align} \label{FirstT} 
T_i \left( 
\begin{array}{lll} 
\alpha_1 & \beta_1 & \gamma_1 \\ 
\alpha_2 & \beta_2 & \gamma_2 
\end{array} 
\right)
\; \overset{\eqref{ExtractID}}{=} \;  \, \BiForm{\beta_1}{\gamma_1} \BiForm{\beta_2}{\gamma_2} 
\,\, \times \,\, \vcenter{\hbox{\includegraphics[scale=0.275]{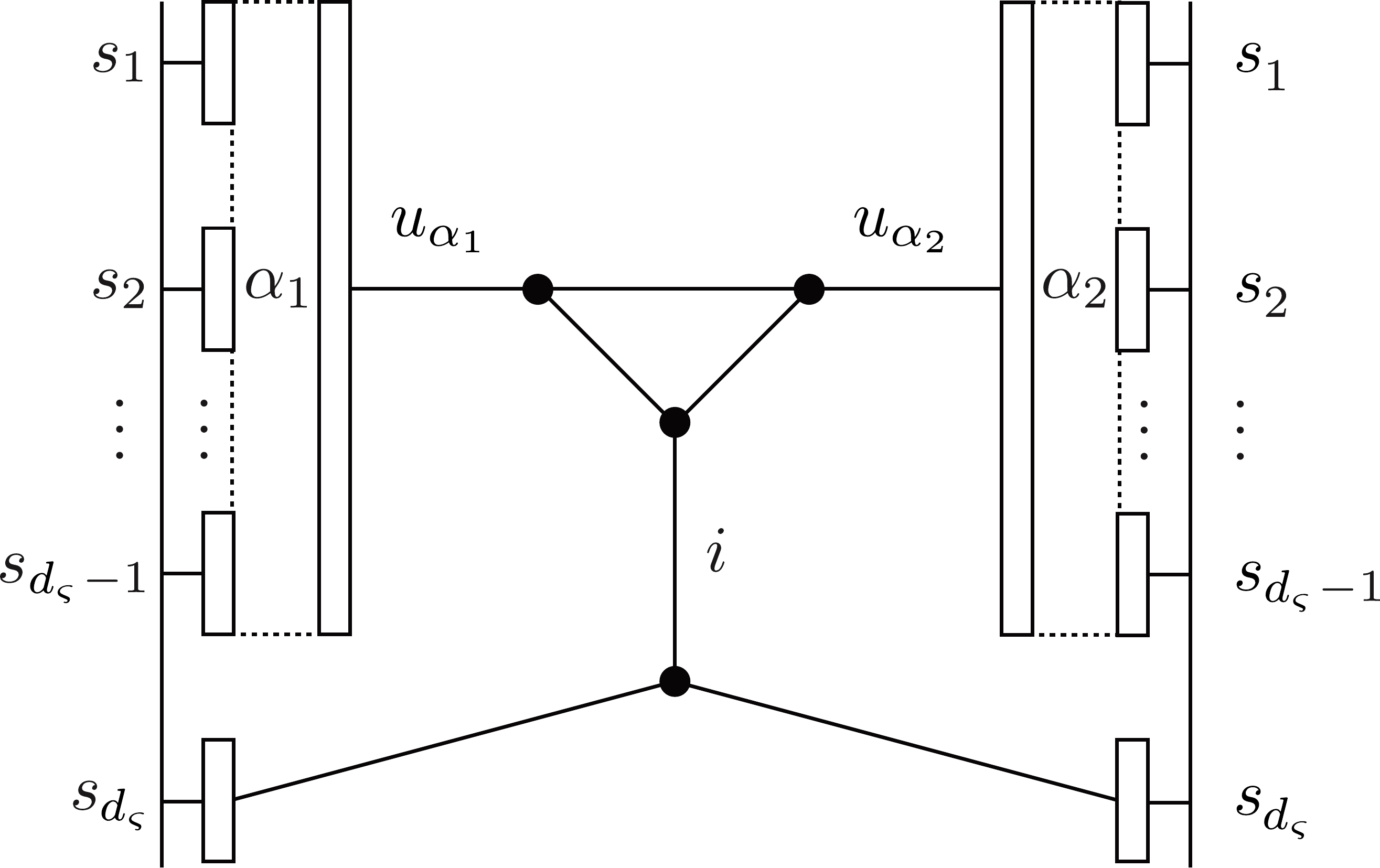} .}} 
\end{align}

On the other hand, we may express the tangle in~\eqref{FirstT} as a linear combination of tangles of 
the basis $\PD3_\multii$~\eqref{InsertOneBox} of lemma~\ref{ManyBasesLem}, with the same left/right link patterns 
$\alpha_1 , \alpha_2 \in \smash{\PP_{\lds}}$ as in this tangle. We then have
\begin{align} \label{SecondT}
T_i \left( 
\begin{array}{lll} 
\alpha_1 & \beta_1 & \gamma_1 \\ 
\alpha_2 & \beta_2 & \gamma_2 
\end{array} \right) \,\,\, 
= \; \sum_{j \, \in \, \DefectSet\sub{\alpha_1,\alpha_2} \cap \, \DefectSet\sub{t,t}}
c_{ij} \left( 
\begin{array}{lll} 
\alpha_1 & \beta_1 & \gamma_1 \\ 
\alpha_2 & \beta_2 & \gamma_2 
\end{array} \right) 
\,\, \times \,\, \vcenter{\hbox{\includegraphics[scale=0.275]{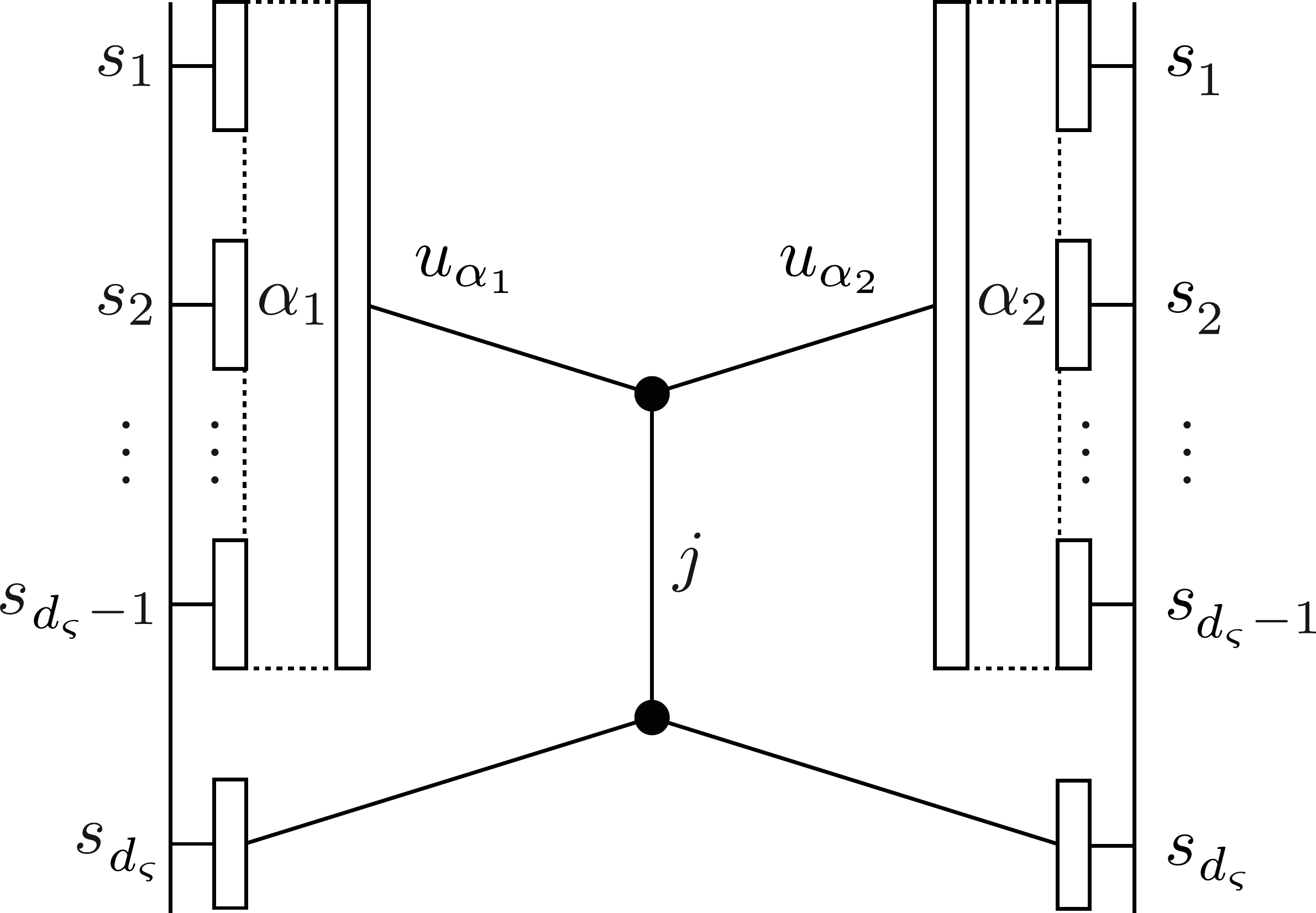} ,}} 
\end{align} 
for some coefficients 
$c_{ij} = \smash{c_{ij} \left( 
\begin{smallmatrix}
\alpha_1 & \beta_1 & \gamma_1 \\ 
\alpha_2 & \beta_2 & \gamma_2 
\end{smallmatrix} 
\right)} \in \bC$, which we find by equating~\eqref{FirstT} with~\eqref{SecondT} and inserting the tangles of either side into the ``dual" tangle 
\begin{align} \label{DualDiagram} 
\vcenter{\hbox{\includegraphics[scale=0.275]{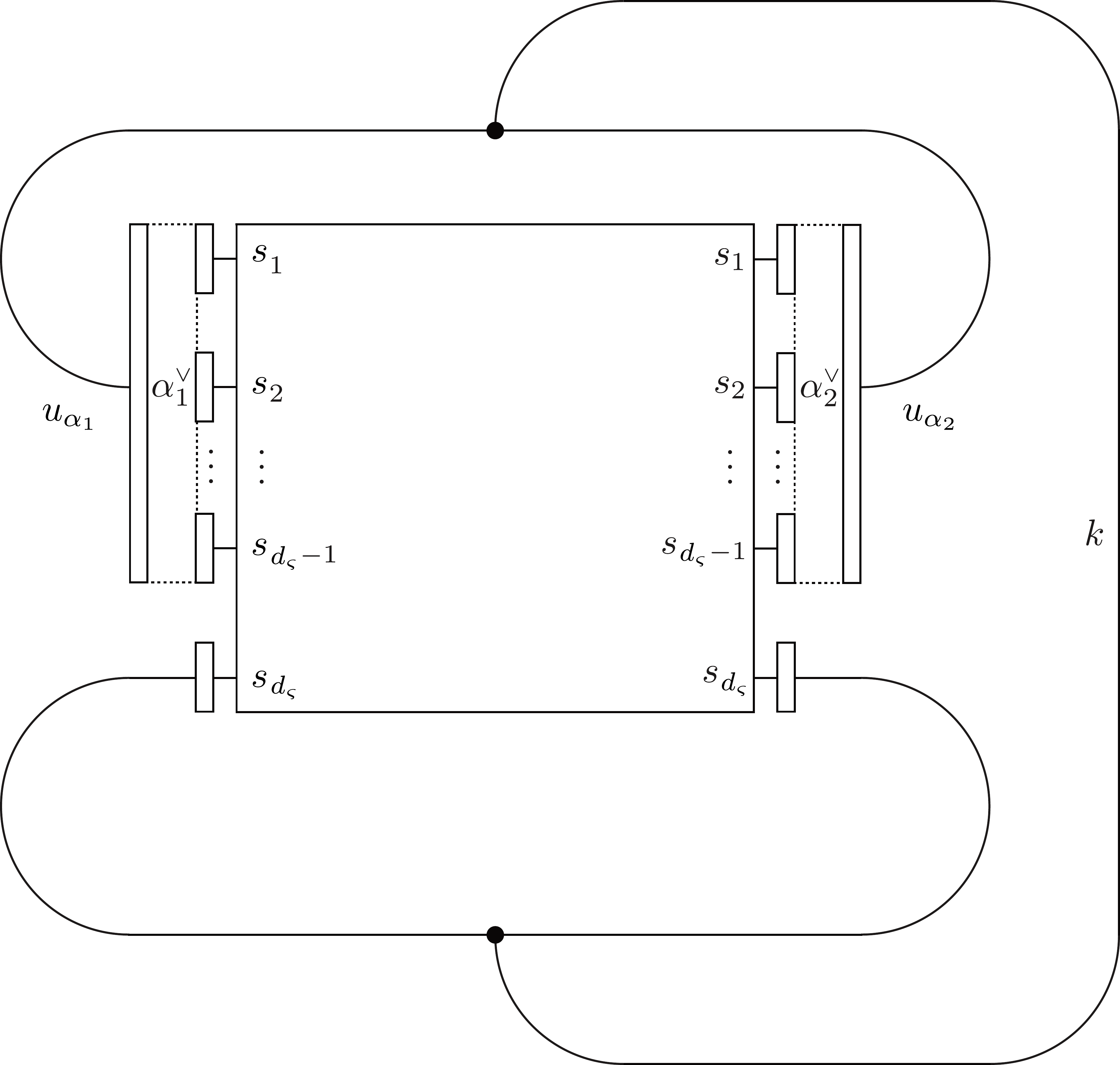} ,}} 
\end{align} 
for each 
\begin{align} \label{EalphSet2} 
k \in \DefectSet\sub{\alpha_1,\alpha_2} \cap \DefectSet\sub{t,t} 
= \big\{ |\Defect_{\alpha_1} - \Defect_{\alpha_2}|, |\Defect_{\alpha_1} - \Defect_{\alpha_2}| + 2, 
\ldots, \min(\Defect_{\alpha_1} + \Defect_{\alpha_2}, 2 t) \big\},
\end{align} 
thereby closing all links into loops.  
(We note that with our assumption $\smax(\smash{\lds}) < \Summed_\multii < \ppmin(q)$, 
theorem~\ref{BigSSTHM} shows that $\rad \smash{\PS_{\lds}} = \{0\}$, so
the dual link states $\alpha_1^\cheque, \alpha_2^\cheque \in \smash{\PS_{\lds}}$~\eqref{DualLS} do exist.) Thus, we arrive with
\begin{align} \label{InsertionResult}
\BiForm{\beta_1}{\gamma_1} 
\BiForm{\beta_2}{\gamma_2} \quad \vcenter{\hbox{\includegraphics[scale=0.275]{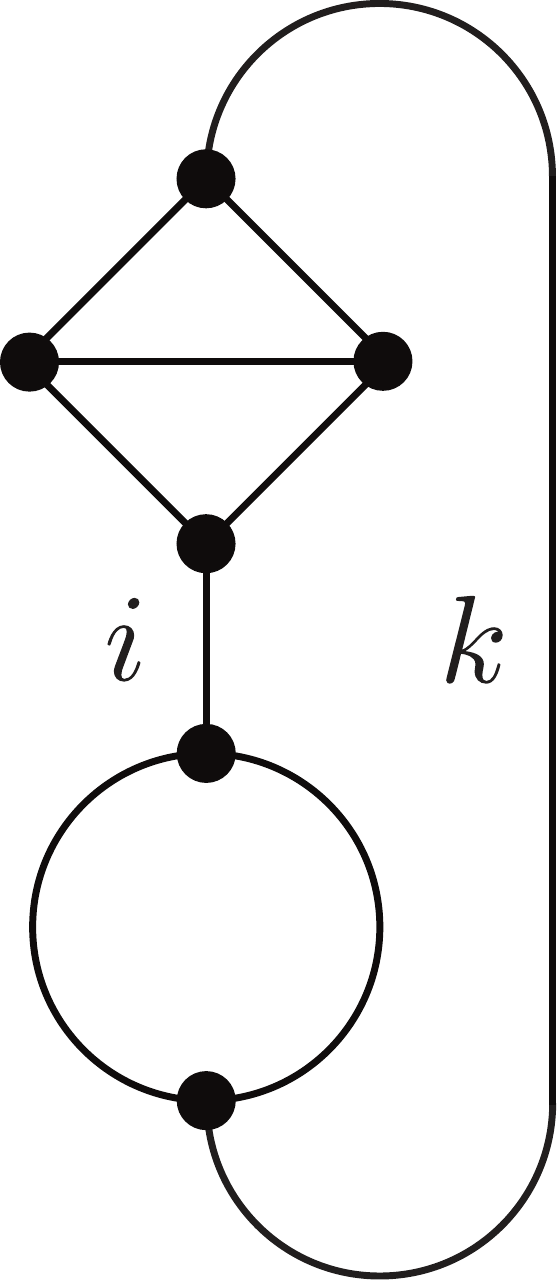}}} 
\quad = \quad  
\sum_{j \, \in \, \DefectSet\sub{\alpha_1,\alpha_2} \cap \, \DefectSet\sub{t,t}}
c_{ij} \left( 
\begin{array}{lll} 
\alpha_1 & \beta_1 & \gamma_1 \\ 
\alpha_2 & \beta_2 & \gamma_2 
\end{array} 
\right) \,\, \times \,\, \vcenter{\hbox{\includegraphics[scale=0.275]{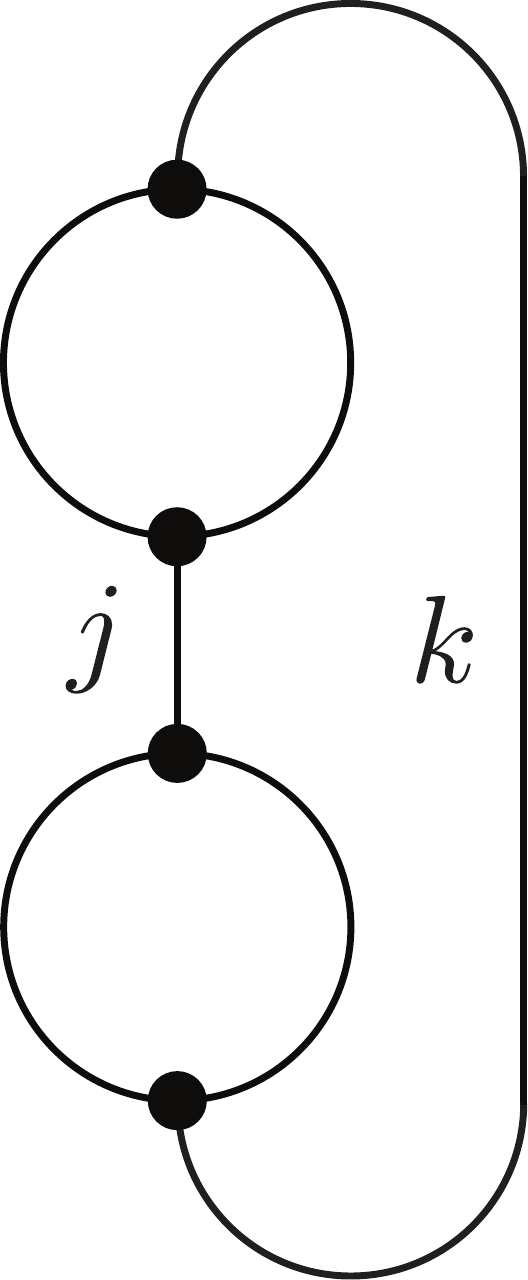} .}}
\end{align} 
(We do not label the sizes of all cables in these networks.  One may infer those sizes from~\eqref{TProduct}.)  
Then, using the simplification rules from
items~\ref{ExtractLemItem} and~\ref{LoopErasureLemItem} of lemma~\ref{CollectionLem},
we delete the lower loop of either network, finding
\begin{align}
& \delta_{ik} \BiForm{\beta_1}{\gamma_1} \BiForm{\beta_2}{\gamma_2} 
\frac{\ThetaNet(i,t,t)}{(-1)^i[i+1]} 
\,\, \times \,\, \vcenter{\hbox{\includegraphics[scale=0.275]{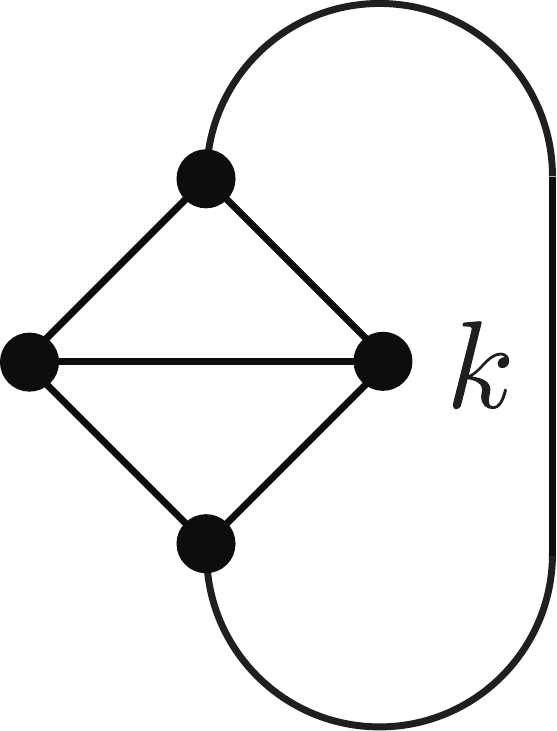}}} 
\\
= &  \; \sum_{j \, \in \, \DefectSet\sub{\alpha_1,\alpha_2} \cap \,\DefectSet\sub{t,t}}
c_{ij} \left( 
\begin{array}{lll}
\alpha_1 & \beta_1 & \gamma_1 \\ 
\alpha_2 & \beta_2 & \gamma_2 
\end{array} 
\right)
\delta_{jk} \frac{\ThetaNet(j,t,t)}{(-1)^j[j+1]}
\,\, \times \,\, \vcenter{\hbox{\includegraphics[scale=0.275]{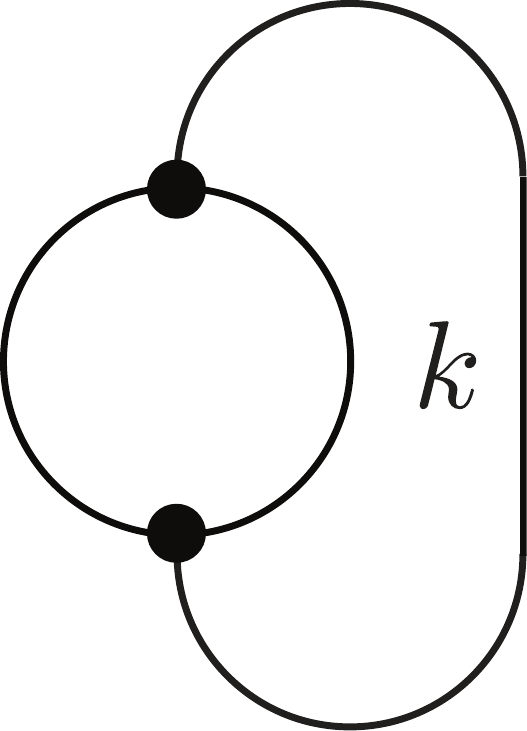} .}}
\end{align}
The left (resp.~right) side has a Tetrahedral~\eqref{TetraDefinition} (resp.~Theta~\eqref{ThetaDefinition}) network. Hence, we have
\begin{align} \label{ciks} 
c_{ik} \left( 
\begin{array}{lll} 
\alpha_1 & \beta_1 & \gamma_1 \\ 
\alpha_2 & \beta_2 & \gamma_2 
\end{array} 
\right) 
= \delta_{ik} \, \TetraNet \left[ 
\begin{array}{lll}
\Defect_1 & \Defect_{\alpha_1} & i \\ 
\Defect_{\alpha_2} & \Defect_2 & \sIndex_1 
\end{array} 
\right] 
\frac{\BiForm{\beta_1}{\gamma_1} \BiForm{\beta_2}{\gamma_2}}{\ThetaNet(i,\Defect_{\alpha_1},\Defect_{\alpha_2})} . 
\end{align} 
After inserting this result into~\eqref{SecondT} and recalling (\ref{EDefSet},~\ref{EalphSet2}), we arrive with 
asserted identities~(\ref{TProduct2},~\ref{TProductCoeff}), where the coefficients are given by
$c_i := c_{ii}$.  This concludes the proof.
\end{proof}  
\subsection{Case of two projectors} \label{BaseCaseSec}

Now we prove item~\ref{GeneratorThmItem1} theorem~\ref{GeneratorThm} for $\multii = (\sIndex_1, \sIndex_2) \in \bZpos^2$.
Our task is to construct all tangles in $\WJ\sub{\sIndex_1, \sIndex_2}(\nu)$ from 
\begin{align} 
\label{EGeneratorsTwonode}
\WJProj\sub{\sIndex_1,\sIndex_2} \quad = \quad \vcenter{\hbox{\includegraphics[scale=0.275]{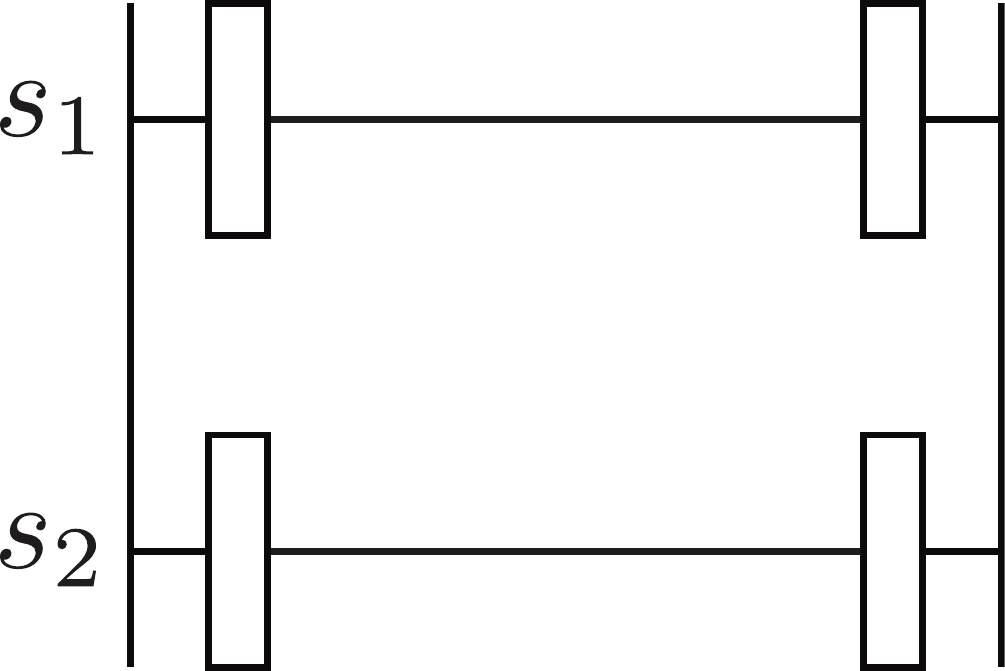}}} 
\qquad \qquad \textnormal{and} \qquad \qquad
\ValGenWJ_1 \quad = \quad \vcenter{\hbox{\includegraphics[scale=0.275]{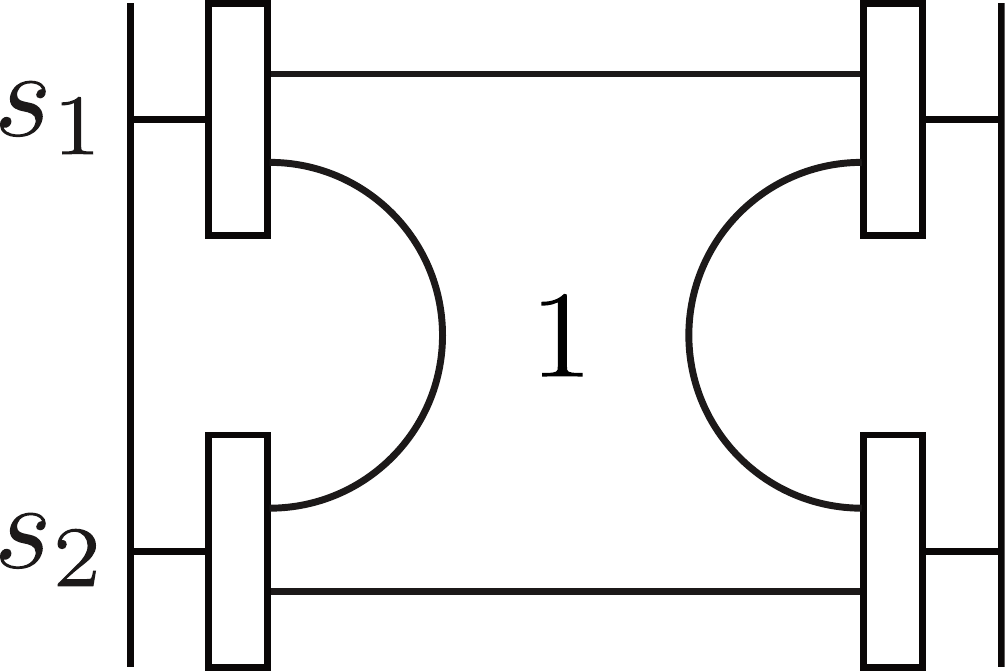} ,}} 
\end{align}  
which constitute the generating set
\begin{align}\label{GensetTwo}
\mathsf{G}\sub{\sIndex_1, \sIndex_2} := \WJProj\sub{\sIndex_1, \sIndex_2} 
\{ \mathbf{1}_{\TL_{\sIndex_1+\sIndex_2}}, \Gen^\TL_{\sIndex_1} \} \WJProj\sub{\sIndex_1, \sIndex_2} .
\end{align}
In fact, in this case we obtain a stronger result: the statement of corollary~\ref{InitialCaseCor} 
implies item~\ref{GeneratorThmItem1} of theorem~\ref{GeneratorThm} with $\multii = (\sIndex_1, \sIndex_2)$ under the weaker assumption 
$\max (\sIndex_1, \sIndex_2) < \ppmin(q)$ instead of $\sIndex_1 + \sIndex_2 < \ppmin(q)$.

\begin{lem} \label{InitialCaseLem} 
Suppose $\max (\sIndex_1, \sIndex_2) < \ppmin(q)$.  
Every tangle of the form
\begin{align} \label{PthDiagram} 
\vcenter{\hbox{\includegraphics[scale=0.275]{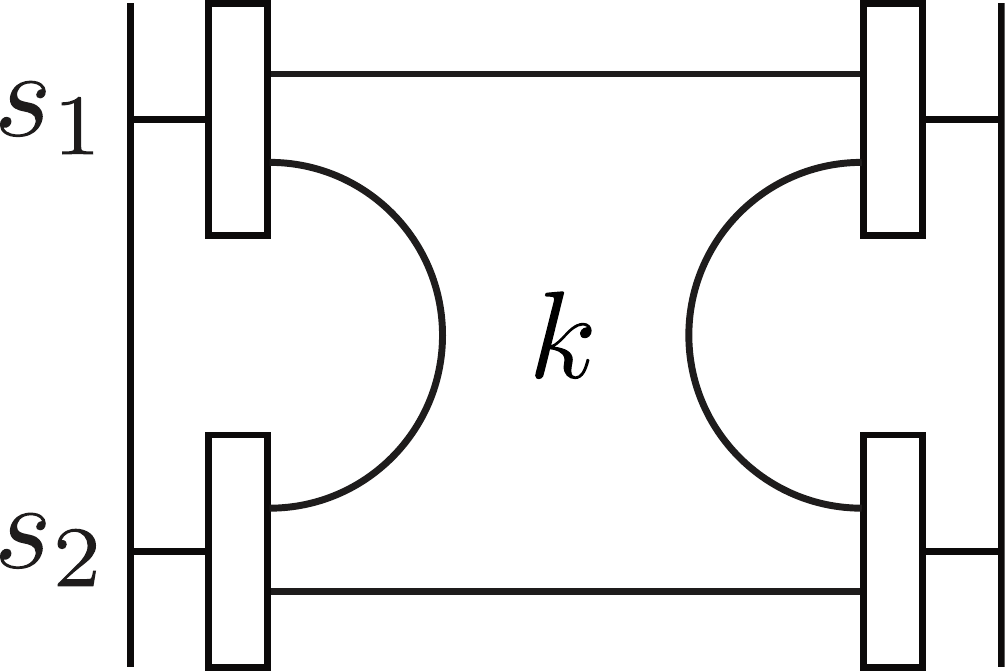} ,}} 
\end{align} 
where $k \in \{0,1,\ldots,\min (\sIndex_1, \sIndex_2)\}$,
is a polynomial in the elements of $\mathsf{G}\sub{\sIndex_1, \sIndex_2}$.
\end{lem}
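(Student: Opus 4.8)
The plan is to build the tangle~\eqref{PthDiagram} with $k$ through-strands between the two projector boxes by induction on $k$, starting from the unit ($k = \min(\sIndex_1,\sIndex_2)$, which is~\eqref{EGeneratorsTwonode}) and decreasing $k$ by one at each step, each time multiplying by the generator $\ValGenWJ_1 = \WJProj\sub{\sIndex_1,\sIndex_2}\Gen_{\sIndex_1}^{\TL}\WJProj\sub{\sIndex_1,\sIndex_2}$. The geometric picture is that $\ValGenWJ_1$ has a single turn-back link at the top of the left box and a single turn-back at the top of the right box, together with $\sIndex_1 + \sIndex_2 - 2$ through strands; concatenating it onto a diagram with $k$ through-strands between the boxes produces a diagram whose number of through-strands between the boxes drops to $k - 2$ or stays the same, plus correction terms. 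So I would first set up notation for the basis $\PD_\multii$ in the two-box case: by~\eqref{SpecialDefSet}, $\DefectSet\sub{\sIndex_1,\sIndex_2} = \{|\sIndex_1-\sIndex_2|, \ldots, \sIndex_1+\sIndex_2\}$, and the diagrams~\eqref{PthDiagram} indexed by $k \in \{0,1,\ldots,\min(\sIndex_1,\sIndex_2)\}$ (together with the empty-turn-back normalization) exhaust, up to the scalars coming from recoupling, a spanning set of $\WJ\sub{\sIndex_1,\sIndex_2}(\nu)$ — in fact these are exactly the tangles $\ValGenMWJ\super{s}_1$ of~\eqref{MasterDiagramsWJ-00} after a change of basis, with $s = |\sIndex_1 - \sIndex_2| + 2(\min(\sIndex_1,\sIndex_2) - k)$ or similar.

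The key computational step is the single-step reduction: I would show that
\[
\ValGenWJ_1 \cdot (\text{diagram with $k$ through-strands})
\]
equals a nonzero scalar times the diagram with $k$ through-strands plus a nonzero scalar times the diagram with $k-2$ through-strands (when $k \geq 2$), using the Jones-Wenzl recursion~\eqref{wjrecursion} applied to one of the projector boxes to open it up, then property~\eqref{ProjectorID2} to kill the terms where the new turn-back of $\ValGenWJ_1$ meets a turn-back already absorbed by the projector, and finally re-closing with~\eqref{ProjectorID1}. The coefficients that appear are ratios of quantum integers $[j]/[j\pm 1]$, and the crucial point — this is where the hypothesis $\max(\sIndex_1,\sIndex_2) < \ppmin(q)$ enters — is that all the quantum integers $[1], [2], \ldots, [\max(\sIndex_1,\sIndex_2)]$ appearing in these coefficients are nonzero, so each reduction can be inverted: the diagram with $k - 2$ through-strands is a $\bC$-linear combination of $\ValGenWJ_1 \cdot (\text{diagram with $k$ through-strands})$ and (the diagram with $k$ through-strands). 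Running the induction downward from $k = \min(\sIndex_1,\sIndex_2)$, we conclude every~\eqref{PthDiagram} is a polynomial in $\mathbf{1}_{\WJ}$ and $\ValGenWJ_1$.

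The main obstacle I anticipate is the bookkeeping in the single-step reduction: correctly tracking which terms survive after expanding one projector box via~\eqref{wjrecursion} and applying~\eqref{ProjectorID2}, and pinning down that the coefficient of the diagram with $k-2$ through-strands is genuinely nonzero (not accidentally a vanishing quantum integer). I would handle this by invoking the recoupling identities collected in appendix~\ref{TLRecouplingSect} — in particular the Theta-network evaluation $\ThetaNet(r,s,t)$ and the loop-erasure identity~\eqref{orthogonalidem} — to compute the relevant coefficient as an explicit ratio of factorials of quantum integers, each factor of which is a $[j]$ with $j \leq \max(\sIndex_1,\sIndex_2) < \ppmin(q)$, hence nonzero. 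An alternative, cleaner route that I would mention is to pass directly to the $\ValGenMWJ\super{s}_1$ basis: since $\ValGenWJ_1 = \WJProj\sub{\sIndex_1,\sIndex_2}\Gen_{\sIndex_1}\WJProj\sub{\sIndex_1,\sIndex_2}$ can be expanded in the $\ValGenMWJ\super{s}_1$ via~\eqref{WordRelationsWJ05}–\eqref{WordRelationsWJ07}, and the $\ValGenMWJ\super{s}_1$ are orthogonal idempotents summing to the unit~\eqref{orthogonalidem}–\eqref{AllProjesSumToOne}, one identifies $\ValGenWJ_1$ with a diagonal element $\sum_s \lambda_s \ValGenMWJ\super{s}_1$ whose eigenvalues $\lambda_s$ are distinct (again because the relevant quantum integers are nonzero); then any diagonal element, in particular each $\ValGenMWJ\super{s}_1$ and hence each~\eqref{PthDiagram}, is a polynomial in $\ValGenWJ_1$ by Lagrange interpolation. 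I would use whichever of these two arguments makes the write-up shortest, but the interpolation version makes the role of $\max(\sIndex_1,\sIndex_2) < \ppmin(q)$ most transparent.
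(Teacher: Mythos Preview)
Your main approach---induction on $k$, each step obtained by multiplying the already-constructed diagram by $\ValGenWJ_1$ and then solving a two-term linear relation---is exactly what the paper does. Two corrections to your bookkeeping, though. First, in the paper $k$ counts the size of the turn-back cable joining the two boxes on each side, so $k=0$ is the unit and $k=1$ is $\ValGenWJ_1$; the induction runs \emph{upward} in $k$. Second, the step size is $1$, not $2$: the product of the $k$-diagram with $\ValGenWJ_1$ yields a linear combination of the $k$-diagram and the $(k{+}1)$-diagram (equations~\eqref{TopLineTangles}--\eqref{BotTangle}), with the coefficient on the new term equal to $[\sIndex_1-k][\sIndex_2-k]/([\sIndex_1][\sIndex_2])$, which is nonzero precisely because $k<\min(\sIndex_1,\sIndex_2)<\ppmin(q)$. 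The paper extracts these coefficients not from the Wenzl recursion~\eqref{wjrecursion} or Theta-network identities but from the explicit Jones--Wenzl coefficient formula of proposition~\ref{SpecialTProp}; your suggested tools would also work but are more roundabout here.

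Your alternative Lagrange-interpolation route is genuinely different from the paper and is a nice idea: writing $\ValGenWJ_1=\sum_s \lambda_s\, \frac{(-1)^s[s+1]}{\ThetaNet(\sIndex_1,\sIndex_2,s)}\ValGenMWJ\super{s}_1$ in the orthogonal-idempotent basis and then interpolating recovers each idempotent (hence each diagram~\eqref{PthDiagram}) as a polynomial in $\ValGenWJ_1$, provided the $\lambda_s$ are pairwise distinct. You should note that this distinctness is not automatic from ``the relevant quantum integers are nonzero'' and would need a short computation (the $\lambda_s$ turn out to be $-[s+1]/([\sIndex_1][\sIndex_2])$ up to reindexing, and these are distinct under $\max(\sIndex_1,\sIndex_2)<\ppmin(q)$). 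The paper does not take this route, but it makes the hypothesis transparent, as you say.
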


\begin{proof}
We prove the claim by induction on $k \geq 0$. 
It is clearly true if $k \in \{0,1\}$.  
Assuming that the claim holds for tangles in~\eqref{PthDiagram} with cables of size $k$ joining the two boxes, 
we form the product
\begin{align} \label{MultiplyDiagrams} 
\vcenter{\hbox{\includegraphics[scale=0.275]{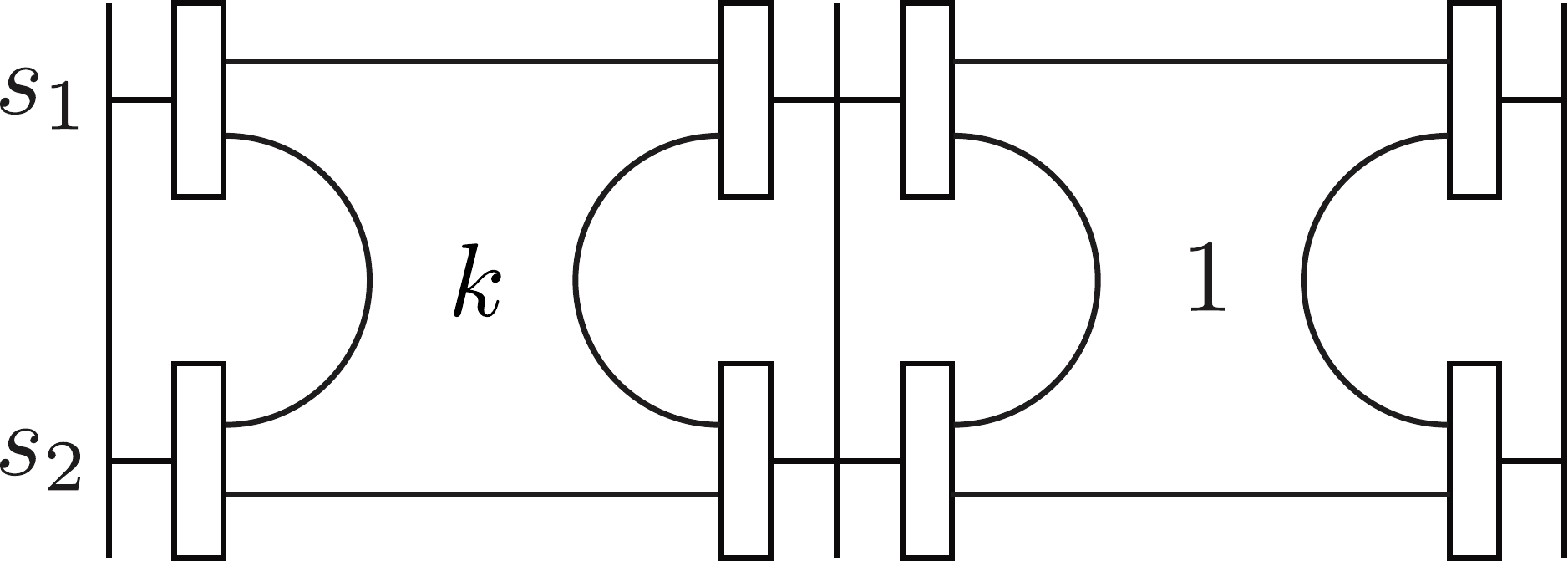}}} \quad 
\overset{\eqref{ProjectorID2}}{=}
\quad \vcenter{\hbox{\includegraphics[scale=0.275]{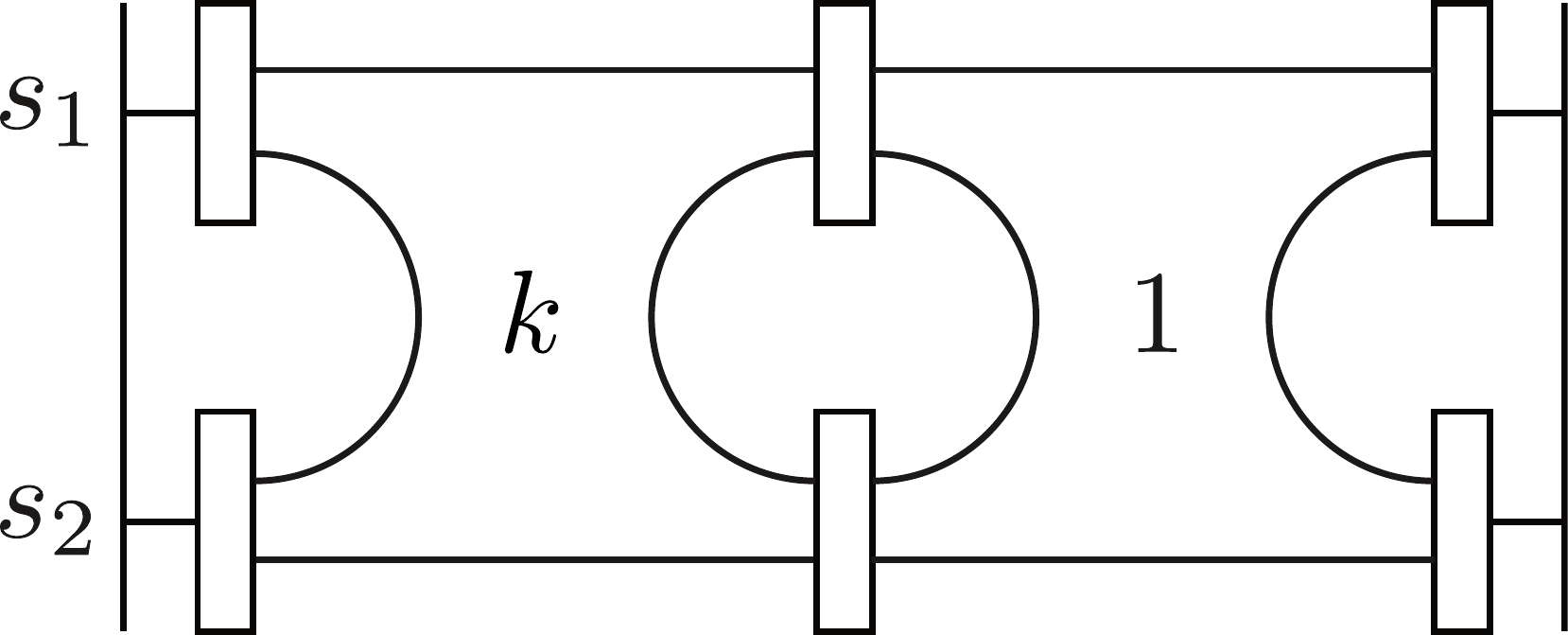} .}} 
\end{align} 
Next, we decompose the upper-middle projector box of~\eqref{MultiplyDiagrams} over all internal link diagrams. 
By rule~\eqref{ProjectorID2}, the only nonvanishing terms of~\eqref{MultiplyDiagrams} with the upper-middle box decomposed are
\begin{align} \label{TwoTerms} 
\vcenter{\hbox{\includegraphics[scale=0.275]{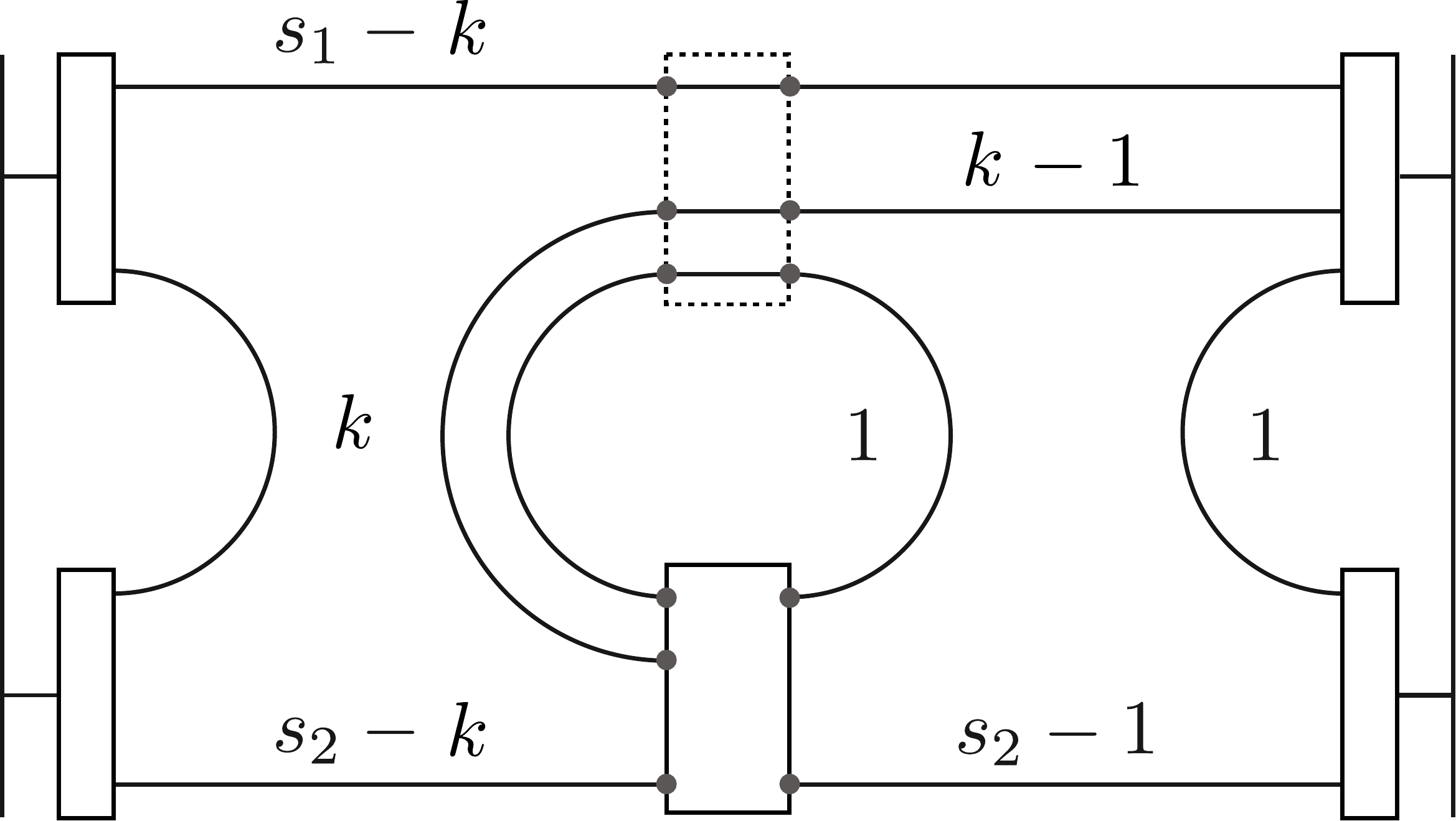}}} 
\quad + \; \frac{[\sIndex_1 - k]}{[\sIndex_1]} \,\, \times \,\, \vcenter{\hbox{\includegraphics[scale=0.275]{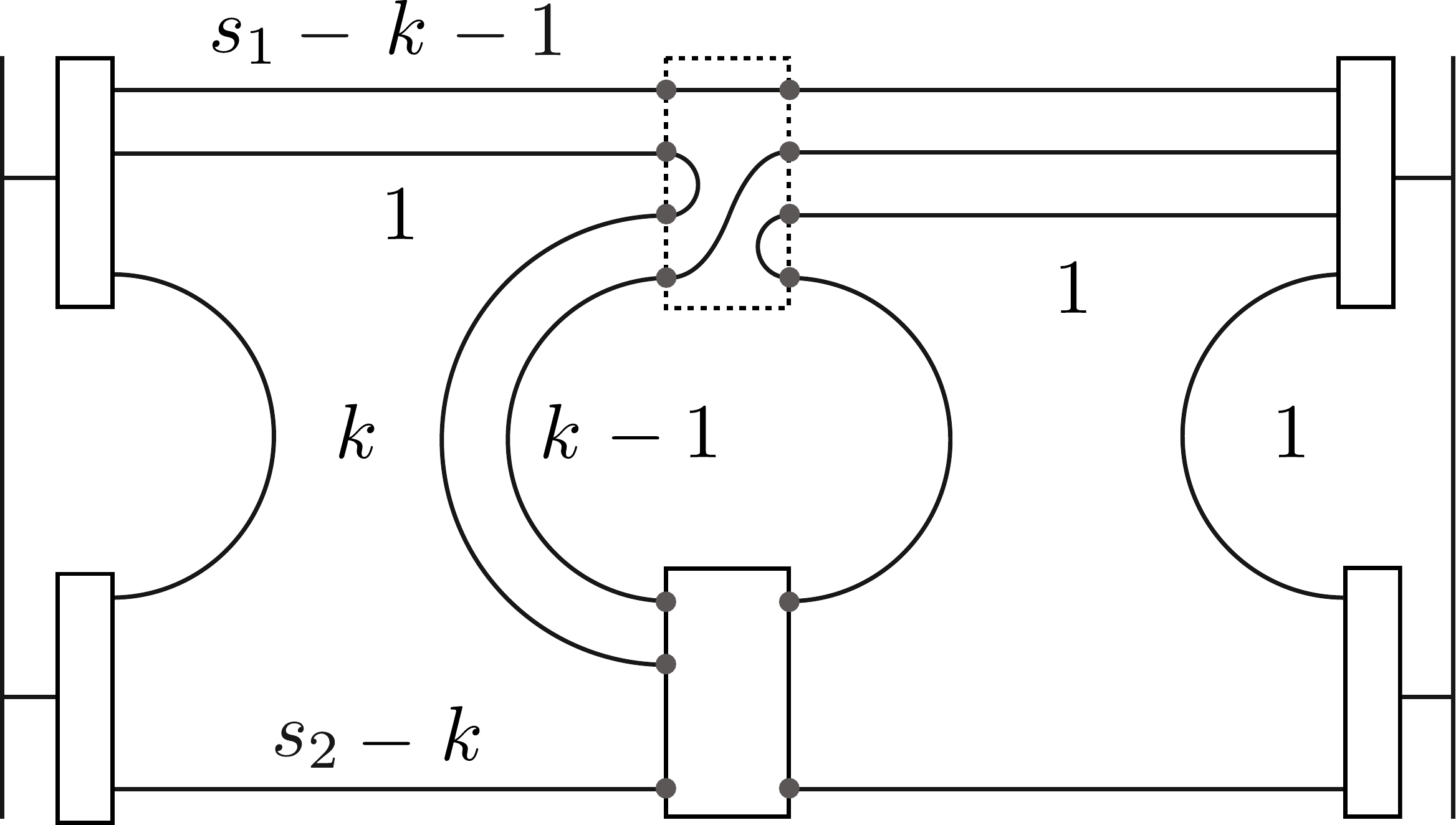} ,}}
\end{align} 
where the coefficient on the second term follows from formula~\eqref{SpecialT} of proposition~\ref{SpecialTProp}.
In the first tangle of~\eqref{TwoTerms}, one loop both enters and exits the bottom-middle box from opposite sides. 
Using identities (\ref{ProjectorID1},~\ref{DeltaTangleGen}), we get
\begin{align} \label{TwoTerms1} 
\hspace*{-3mm}
\vcenter{\hbox{\includegraphics[scale=0.275]{e-Generators115_withs1s2.pdf}}} \quad
= \; - \frac{[\sIndex_2 + 1]}{[\sIndex_2]} \,\, \times \,\, \vcenter{\hbox{\includegraphics[scale=0.275]{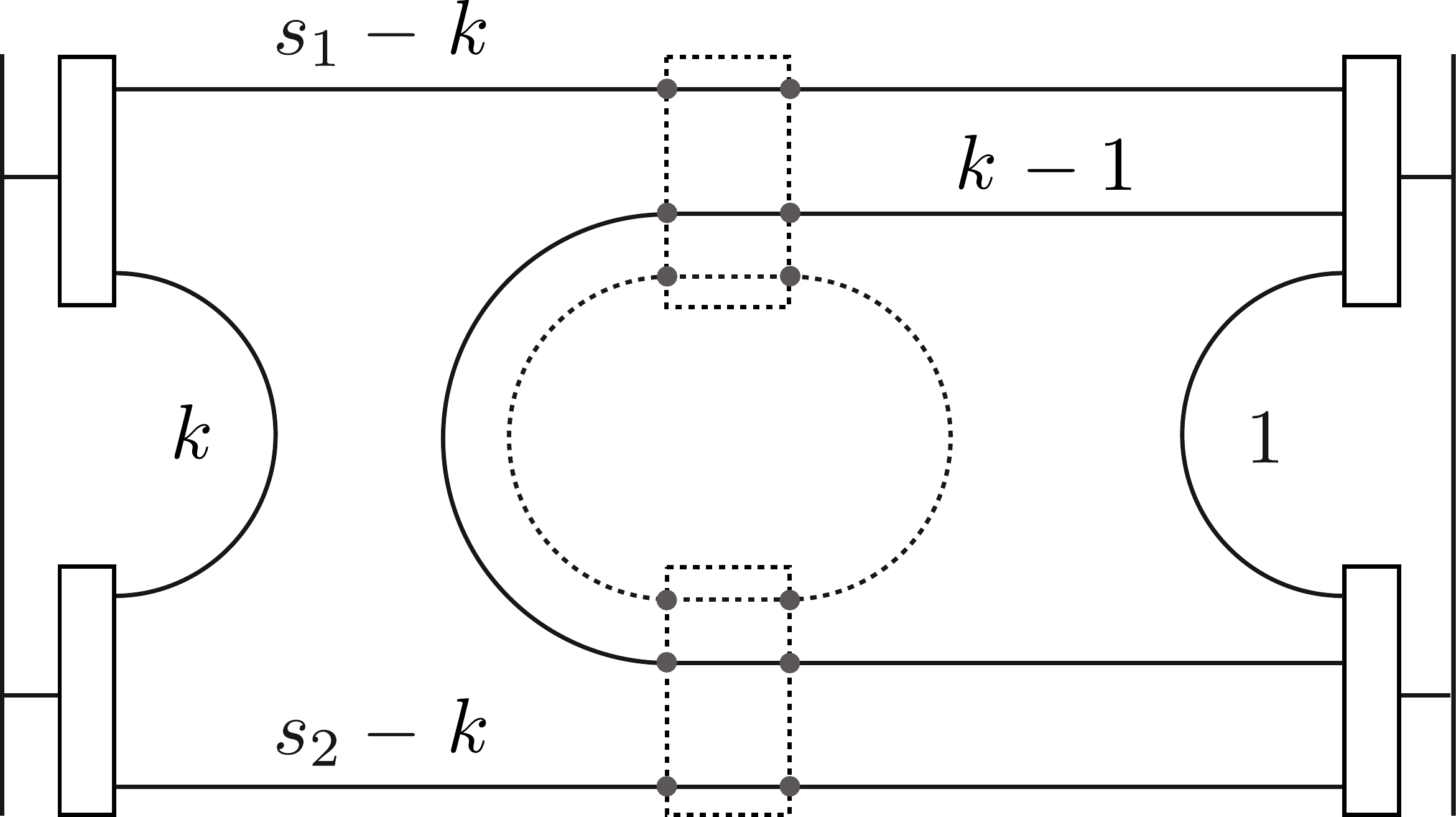} .}} 
\end{align} 
In the second tangle of~\eqref{TwoTerms}, all but two terms vanish after we decompose the lower-middle box.  
Using again~\eqref{SpecialT} from proposition~\ref{SpecialTProp} to find the coefficients of the tangles in the box decomposition, 
these terms are
\begin{align} \label{TwoTerms2} 
\vcenter{\hbox{\includegraphics[scale=0.275]{e-Generators116_withs1s2.pdf}}} \quad
= \; & \; \frac{[\sIndex_2 - k + 1]}{[\sIndex_2]} \,\, \times \,\,
\vcenter{\hbox{\includegraphics[scale=0.275]{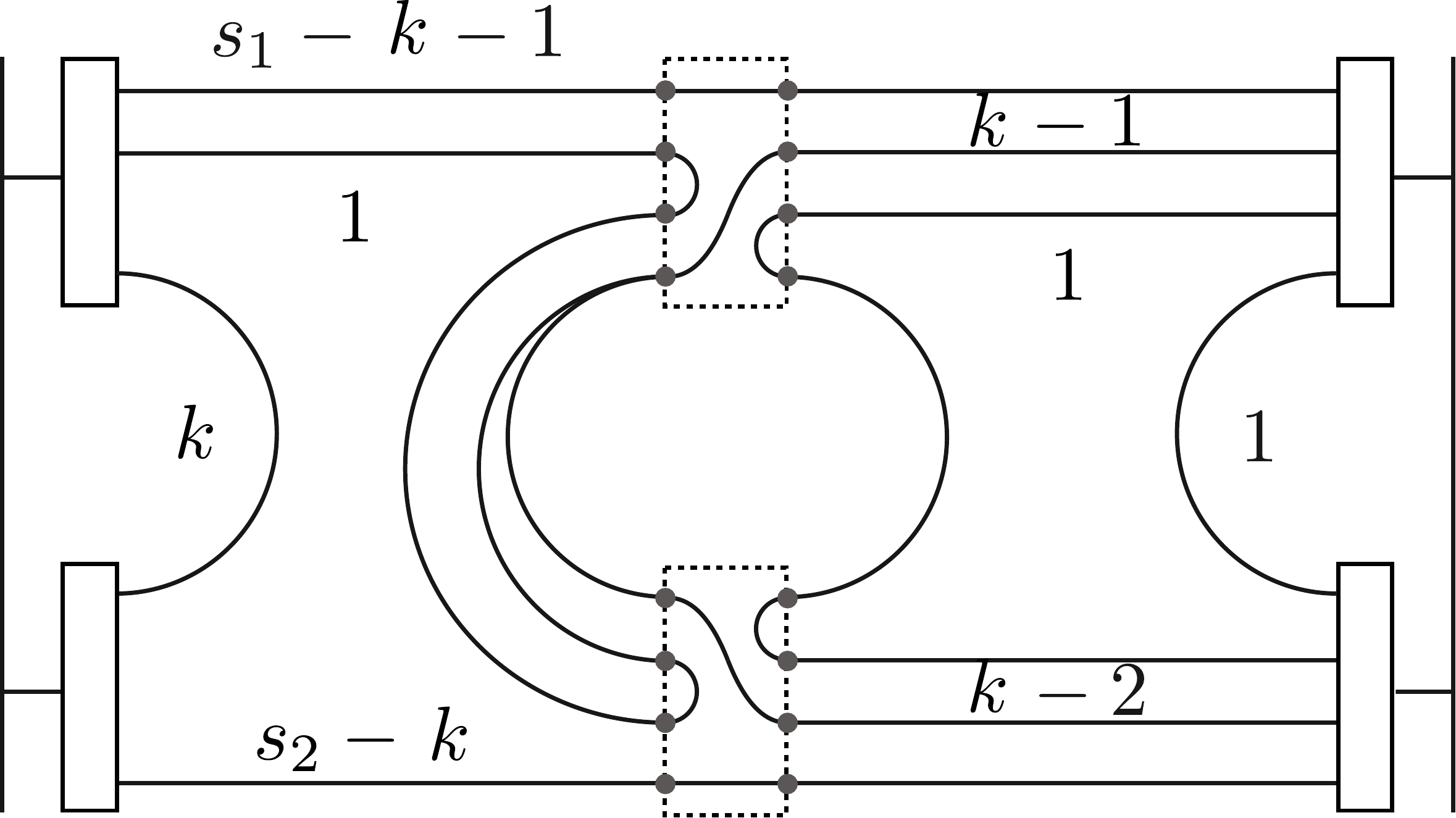}}} \hspace*{3mm} \\ 
\nonumber
\; & \quad + \frac{[\sIndex_2 - k]}{[\sIndex_2]} \,\, \times \,\,
\vcenter{\hbox{\includegraphics[scale=0.275]{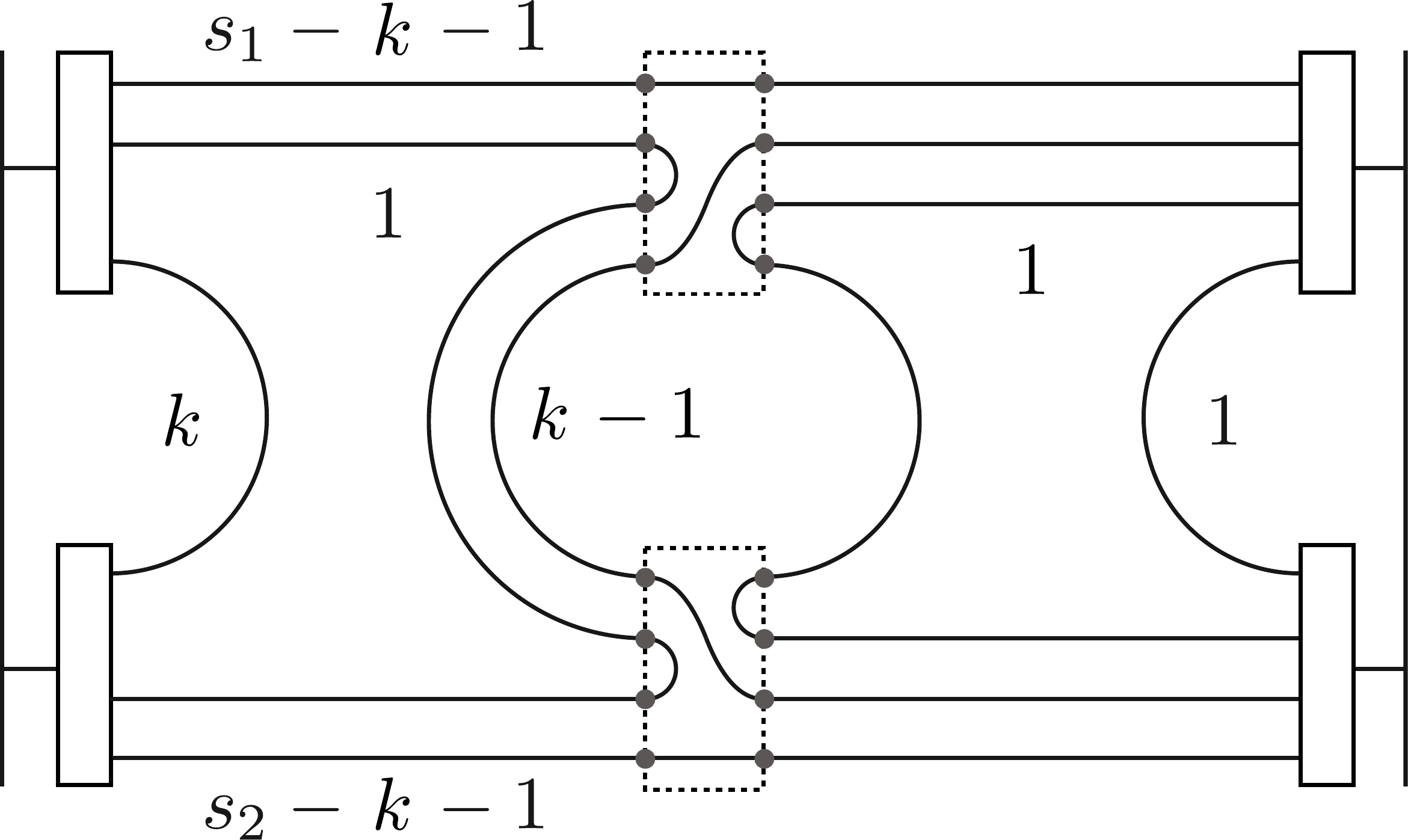} .}}
\end{align}
After combining (\ref{TwoTerms1},~\ref{TwoTerms2}) and using 
the simple identity~\eqref{QintID} from lemma~\ref{CollectionLem}, we find that~\eqref{MultiplyDiagrams} equals
\begin{alignat}{2} 
\label{TopLineTangles} 
\vcenter{\hbox{\includegraphics[scale=0.275]{e-Generators2.pdf}}} \quad
= \; & \; - \frac{[\sIndex_1 + \sIndex_2 - k + 1]}{[\sIndex_1][\sIndex_2]} \,\, \times \,\,
&&  \vcenter{\hbox{\includegraphics[scale=0.275]{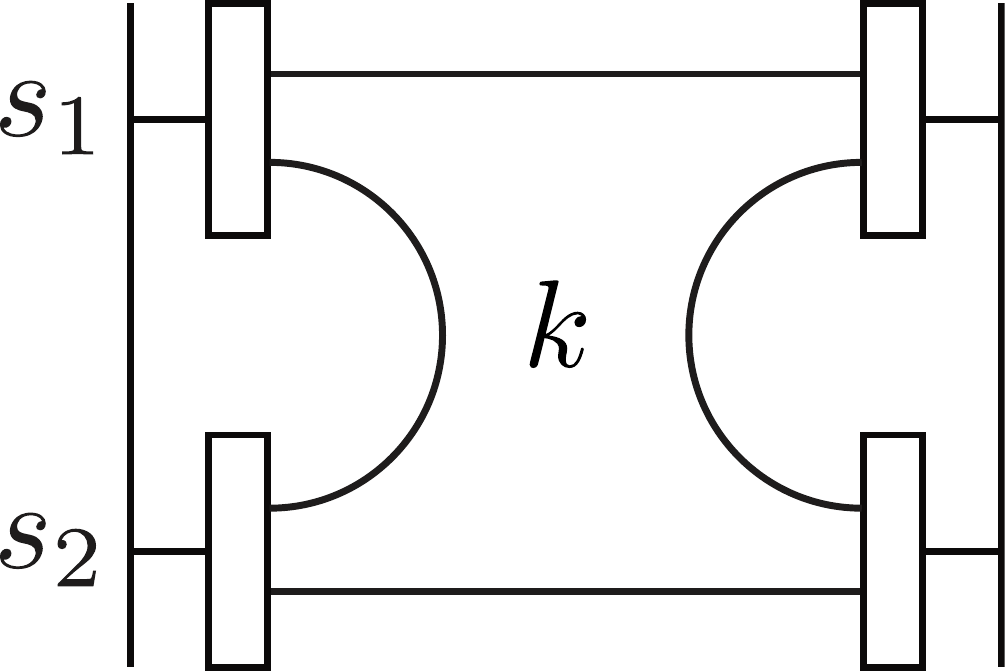}}}  \\[1em]
\label{BotTangle} 
\; & \quad + \frac{[\sIndex_1 - k][\sIndex_2 - k]}{[\sIndex_1][\sIndex_2]} \,\, \times \,\,
&&  \vcenter{\hbox{\includegraphics[scale=0.275]{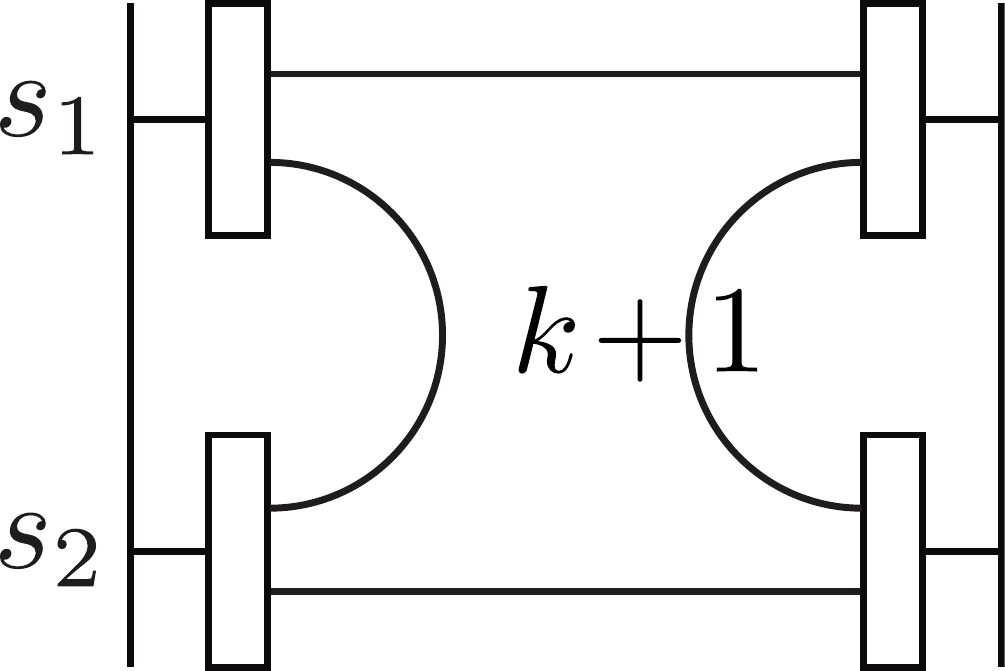} .}}
\end{alignat}
Now, by the induction hypothesis, the two tangles in~\eqref{TopLineTangles} are polynomials in 
the tangle $\WJProj\sub{\sIndex_1,\sIndex_2}\Gen_{\sIndex_1}^{\TL}\WJProj\sub{\sIndex_1,\sIndex_2}$. 
Therefore, so is the tangle in~\eqref{BotTangle}. 
This finishes the induction step and concludes the proof.
\end{proof}

\begin{cor} \label{InitialCaseCor} 
Suppose $\max (\sIndex_1, \sIndex_2) < \ppmin(q)$.  Then the Jones-Wenzl algebra $\WJ\sub{\sIndex_1, \sIndex_2}(\nu)$ is generated by 
the collection $\mathsf{G}\sub{\sIndex_1, \sIndex_2}$~\eqref{GensetTwo} of Jones-Wenzl tangles.  
\end{cor}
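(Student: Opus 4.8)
The plan is to deduce the corollary from Lemma~\ref{InitialCaseLem} together with a description of the canonical basis $\PD\sub{\sIndex_1, \sIndex_2}$ of $\WJ\sub{\sIndex_1, \sIndex_2}(\nu)$. By Lemma~\ref{InitialCaseLem}, the unital subalgebra of $\WJ\sub{\sIndex_1, \sIndex_2}(\nu)$ generated by $\mathsf{G}\sub{\sIndex_1, \sIndex_2}$~\eqref{GensetTwo} already contains every tangle of the form~\eqref{PthDiagram}, for each $k \in \{0,1,\ldots,\min(\sIndex_1,\sIndex_2)\}$. So it suffices to show that these finitely many tangles span $\WJ\sub{\sIndex_1, \sIndex_2}(\nu)$; concretely, I would argue that as $k$ ranges over $\{0,1,\ldots,\min(\sIndex_1,\sIndex_2)\}$ the tangle~\eqref{PthDiagram} runs bijectively through the basis $\PD\sub{\sIndex_1, \sIndex_2}$.

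The main step is a direct enumeration of $\PD\sub{\sIndex_1, \sIndex_2}$ by planarity. Take a non-zero $(\sIndex_1,\sIndex_2)$-Jones--Wenzl link diagram $\WJProj\sub{\sIndex_1,\sIndex_2} T \WJProj\sub{\sIndex_1,\sIndex_2}$ with $T \in \LD_{\sIndex_1+\sIndex_2}$. The left composite projector~\eqref{WJCompProj} is a box of size $\sIndex_1$ sitting above a box of size $\sIndex_2$, and likewise on the right. Every link of $T$ is a crossing link or a left-edge, resp.\ right-edge, turn-back; by property~\eqref{ProjectorID2}, no left turn-back may have both endpoints in the same box, so each left turn-back joins the size-$\sIndex_1$ box to the size-$\sIndex_2$ box, and planarity forces the left turn-backs to be the $k_{\mathrm L}$ innermost nested arcs straddling the interface of the two boxes, for some $0 \le k_{\mathrm L} \le \min(\sIndex_1,\sIndex_2)$; similarly there are $k_{\mathrm R}$ such arcs on the right. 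Counting the remaining (crossing) strands gives $(\sIndex_1 - k_{\mathrm L}) + (\sIndex_2 - k_{\mathrm L}) = (\sIndex_1 - k_{\mathrm R}) + (\sIndex_2 - k_{\mathrm R})$, whence $k_{\mathrm L} = k_{\mathrm R} =: k$, and planarity again forces the crossing strands to join the two size-$\sIndex_1$ boxes to one another and the two size-$\sIndex_2$ boxes to one another. Hence $\WJProj\sub{\sIndex_1,\sIndex_2} T \WJProj\sub{\sIndex_1,\sIndex_2}$ is exactly the tangle~\eqref{PthDiagram} with parameter $k$. Conversely, each such~\eqref{PthDiagram} is a non-zero element of $\PD\sub{\sIndex_1, \sIndex_2}$ (its underlying link diagram occurs with coefficient $1$ when the projector boxes are expanded), and distinct $k$ give Jones--Wenzl link diagrams with distinct numbers $\sIndex_1+\sIndex_2-2k$ of crossing links; since $\PD\sub{\sIndex_1, \sIndex_2}$ is a basis of $\WJ\sub{\sIndex_1, \sIndex_2}(\nu)$~\cite[lemma~\red{B.1}]{fp0}, this identifies the tangles~\eqref{PthDiagram}, $k = 0,\ldots,\min(\sIndex_1,\sIndex_2)$, with the whole of $\PD\sub{\sIndex_1, \sIndex_2}$. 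Combined with Lemma~\ref{InitialCaseLem}, the subalgebra generated by $\mathsf{G}\sub{\sIndex_1, \sIndex_2}$ then contains a basis of $\WJ\sub{\sIndex_1, \sIndex_2}(\nu)$, hence equals it.

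As a cross-check (and an alternative to the enumeration), the same conclusion follows from a dimension count: the tangles~\eqref{PthDiagram} are $\min(\sIndex_1,\sIndex_2)+1$ pairwise distinct basis elements as above, while~\eqref{DimOfWJ} together with the recursion~\eqref{PreRecursion2} — which for two boxes immediately gives $\Dim\sub{\sIndex_1,\sIndex_2}\super{s}=1$ for every $s\in\DefectSet\sub{\sIndex_1, \sIndex_2}$, since $\DefectSet\sub{\sIndex_1}=\{\sIndex_1\}$ — yields $\dim\WJ\sub{\sIndex_1, \sIndex_2}(\nu) = |\DefectSet\sub{\sIndex_1, \sIndex_2}| = \min(\sIndex_1,\sIndex_2)+1$. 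The one point that needs attention is that the hypothesis here is only $\max(\sIndex_1,\sIndex_2)<\ppmin(q)$, not $\sIndex_1+\sIndex_2<\ppmin(q)$; but nothing above uses a projector of size exceeding $\max(\sIndex_1,\sIndex_2)$ — the tangles~\eqref{PthDiagram}, the composite projector~\eqref{WJCompProj}, and the basis statement~\cite[lemma~\red{B.1}]{fp0} are all available under $\max(\sIndex_1,\sIndex_2)<\ppmin(q)$ — so there is no obstruction, and this is precisely the source of the strengthened hypothesis. I expect the only genuine content to be the planarity argument of the second paragraph (equivalently, the equality $\dim\WJ\sub{\sIndex_1, \sIndex_2}(\nu)=\min(\sIndex_1,\sIndex_2)+1$); the rest is formal, given Lemma~\ref{InitialCaseLem}.
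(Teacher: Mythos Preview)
Your argument is correct and follows essentially the same route as the paper: the paper's proof simply asserts in one line that the tangles of Lemma~\ref{InitialCaseLem} form a basis for $\WJ\sub{\sIndex_1,\sIndex_2}(\nu)$, and you have supplied the planarity enumeration (and dimension count) that justifies this assertion. Your observation that only projectors of size at most $\max(\sIndex_1,\sIndex_2)$ are needed is exactly why the weaker hypothesis suffices.
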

\begin{proof}
The collection of tangles in lemma~\ref{InitialCaseLem} is a basis for $\WJ\sub{\sIndex_1, \sIndex_2}(\nu)$, so it follows 
that $\mathsf{G}\sub{\sIndex_1, \sIndex_2}$ generates $\WJ\sub{\sIndex_1, \sIndex_2}(\nu)$.  
\end{proof}
\subsection{Induction step and supporting lemmas} \label{IndStepSec}


In this section, we perform the core of the induction step to prove theorem~\ref{GeneratorThm} for a general multiindex~$\multii$. 
We use induction on the number $\np_\multii$ of projectors, the initial case being the content of the previous section~\ref{BaseCaseSec}. 
The present section contains a series of lemmas, and in the next section~\ref{TheProofSec}, we summarize the proof of theorem~\ref{GeneratorThm}.

\begin{InductAssump} \label{IndAss1}
Let $\np_\multii \geq 3$, and assume that item~\ref{GeneratorThmItem1} of theorem~\ref{GeneratorThm} holds
for any multiindex in the set $\{ (0) \} \cup \bZpos \cup \bZpos^2 \cup \dotsm \cup \bZpos^{\np_\multii - 1}$.
Denoting
\begin{align}
\label{GenSet2} 
\mathsf{G}_\multii &:= 
\WJProj_\multii \{ \mathbf{1}_{\TL_{\Summed_\multii}}, \Gen_{\sIndex_1}^{\TL}, \Gen_{\sIndex_1 + \sIndex_2}^{\TL}, \ldots, \Gen_{\sIndex_1 + \sIndex_2 + \dotsm + \sIndex_{\np_\multii-1}}^{\TL} \}  \WJProj_\multii , \\
\mathsf{G}_\lds &:= 
\WJProj_\multii \{ \mathbf{1}_{\TL_{\Summed_\multii}}, \Gen_{\sIndex_1}^{\TL}, \Gen_{\sIndex_1 + \sIndex_2}^{\TL}, \ldots, \Gen_{\sIndex_1 + \sIndex_2 + \dotsm + \sIndex_{\np_\multii-2}}^{\TL} \}  \WJProj_\multii , \\
\mathsf{G}_\fds &:= 
\WJProj_\multii \{ \mathbf{1}_{\TL_{\Summed_\multii}}, \Gen_{\sIndex_2}^{\TL}, \Gen_{\sIndex_2 + \sIndex_3}^{\TL}, \ldots, \Gen_{\sIndex_1 + \sIndex_2 + \dotsm + \sIndex_{\np_\multii-1}}^{\TL} \}  \WJProj_\multii ,
\end{align}
this assumption implies that $\smash{\WJ_\lds(\nu)} \subset \WJ_\multii(\nu)$ and $\smash{\WJ_\fds(\nu)} \subset \WJ_\multii(\nu) $ are generated by 
the respective collections $\smash{\mathsf{G}_\lds}$ and $\smash{\mathsf{G}_\fds}$ of Jones-Wenzl tangles.  
Specifically, this is equivalent to assuming that all of the tangles
\begin{align} \label{InductDiagrams} 
\vcenter{\hbox{\includegraphics[scale=0.275]{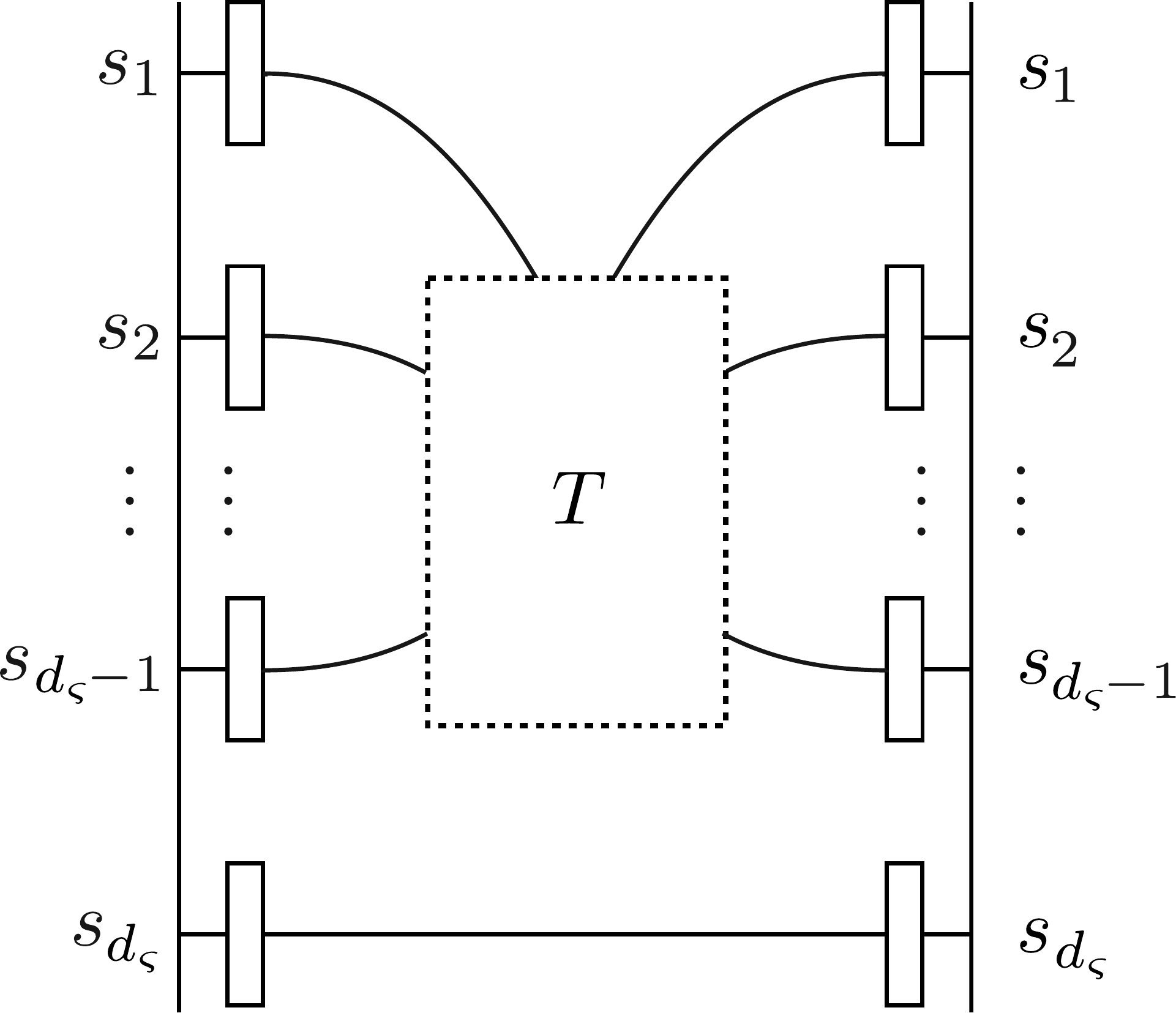}}} \quad
\qquad \qquad \textnormal{and} \qquad \qquad
\quad \vcenter{\hbox{\includegraphics[scale=0.275]{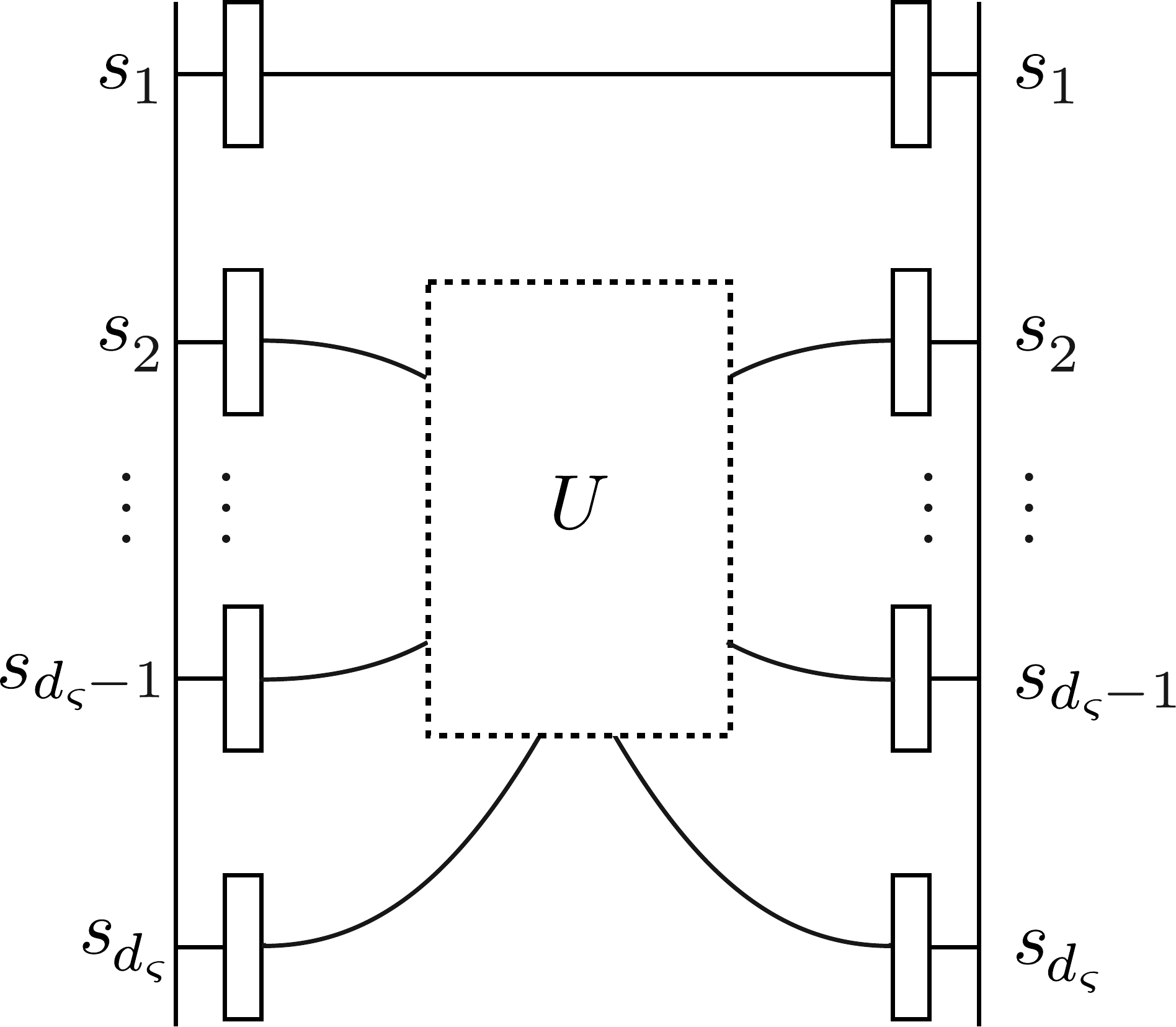} ,}} 
\end{align} 
where $T \in \smash{\TL_{\smax(\lds)}(\nu)}$ and $U \in \smash{\TL_{\smax(\fds)}(\nu)}$, 
are polynomials in the elements of the collections $\smash{\mathsf{G}_{\lds}}$ and $\smash{\mathsf{G}_{\fds}}$, respectively.
\end{InductAssump}

\begin{claim} \label{IndClaim}
Suppose 
$\Summed_\multii < \ppmin(q)$. If induction hypothesis~\ref{IndAss1} holds, 
then the Jones-Wenzl algebra $\WJ_\multii(\nu)$ is generated by the collection $\mathsf{G}_\multii$~\eqref{GenSet2} 
of $\multii$-Jones-Wenzl tangles. 
Equivalently, each tangle of the form
\begin{align} \label{PreMGenerators} 
\vcenter{\hbox{\includegraphics[scale=0.275]{e-GenericTangle_WJ.pdf} ,}} 
\end{align} 
where $T \in \TL_{\Summed_\multii}(\nu)$, is a polynomial in the elements of the collection $\mathsf{G}_\multii$. 
\end{claim}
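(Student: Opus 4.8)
The plan is to reduce, by linearity, to the statement that every element of the basis $\PD1_\multii$ of Lemma~\ref{ManyBasesLem} is a polynomial in the elements of $\mathsf{G}_\multii$ --- this is what Corollary~\ref{FinalCor} will assert, and the claim for a general tangle~\eqref{PreMGenerators} follows from it at once. Fix such a basis tangle, determined by link patterns $\alpha_1,\alpha_2 \in \PP_\lds$, an index $r \in \mathsf{R}_{\alpha_1,\alpha_2}$, and the derived quantities $u,v,w$ as in~\eqref{InsertTwoBoxes}. Diagrammatically it splits, read left to right, into three pieces: a \emph{left piece} assembled from $\alpha_1$ and the first projector box (an $\lds$-type fragment), a \emph{bridge} consisting of a cable of width $w$ that enters the last projector box, of size $t=\sIndex_{\np_\multii}$, and turns back across it, and a \emph{right piece} assembled from the reflection of $\alpha_2$. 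The outer induction is the induction hypothesis~\ref{IndAss1}, on the number $\np_\multii$ of projectors, whose base case $\np_\multii=2$ is Corollary~\ref{InitialCaseCor}. It hands us, for free, that every tangle of $\WJ_\lds(\nu)$ is a polynomial in $\mathsf{G}_\lds$ and every tangle of $\WJ_\fds(\nu)$ is a polynomial in $\mathsf{G}_\fds$; moreover the combined generating set $\mathsf{G}_\lds \cup \mathsf{G}_\fds$ is exactly $\mathsf{G}_\multii$ ($\mathsf{G}_\lds$ supplying $\ValGenWJ_1,\ldots,\ValGenWJ_{\np_\multii-2}$ and $\mathsf{G}_\fds$ the one remaining generator $\ValGenWJ_{\np_\multii-1}$, besides the unit), so both subalgebras are at our disposal. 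In particular, every $\PD1_\multii$-tangle with $w=0$ lies in the embedded subalgebra $\WJ_\lds(\nu)$ and is therefore already a polynomial in $\mathsf{G}_\multii$.

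The work that remains is to build the bridges of width $w \geq 1$, which I would do by an inner induction on $w$, in the spirit of the proof of Lemma~\ref{InitialCaseLem}. One first treats the case $\Defect_{\alpha_1}=\Defect_{\alpha_2}$, where $v=0$ and $w=r$: here the width-$1$ bridge, after left and right multiplication by $\WJ_\lds(\nu)$-tangles (available by the outer induction hypothesis) and use of the decomposition $\mathbf{1}_{\WJ_\multii} = \sum_{s} \frac{(-1)^s[s+1]}{\ThetaNet(\sIndex_{\np_\multii-1},\sIndex_{\np_\multii},s)}\,\ValGenMWJ\super{s}_{\np_\multii-1}$ from~\eqref{AllProjesSumToOne}, reduces to the single generator $\ValGenWJ_{\np_\multii-1}\in\mathsf{G}_\multii$; this reorganization is Lemma~\ref{SameSideLem2}, and its proof uses the basis $\PD3_\multii$ and, to move through the lower projector boxes, the outer induction hypothesis together with Lemma~\ref{TProductLem}. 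For the inductive step $w \to w+1$, one concatenates a width-$w$ bridge with a width-$1$ bridge, decomposes every projector box created by the concatenation via the recursion~\eqref{wjrecursion}, and discards the terms that contain a turn-back into a box by~\eqref{ProjectorID2}; contracting the resulting internal loops and collecting the recoupling weights using items~\ref{ExtractLemItem} and~\ref{LoopErasureLemItem} of Lemma~\ref{CollectionLem} and Lemma~\ref{TProductLem} (whose Tetrahedral and Theta factors are precisely these weights), one finds the product to equal a nonzero scalar --- an explicit ratio of products of quantum integers --- times the width-$(w+1)$ bridge, plus bridges of width $\leq w$. By the inner induction hypothesis the latter are polynomials in $\mathsf{G}_\multii$, hence so is the width-$(w+1)$ bridge. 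This is Lemma~\ref{2stepLem}; reassembling the left piece, the bridge, and the right piece of the chosen basis tangle then gives Corollary~\ref{FinalCor}, and the claim. Granting this, the outer induction closes and establishes item~\ref{GeneratorThmItem1} of theorem~\ref{GeneratorThm}.

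The hard part will be the inductive step $w \to w+1$. First, one must verify that, after the box decompositions and loop contractions, the correction terms really are bridges of width $\leq w$ of the same combinatorial type, so that the inner induction hypothesis applies; this rests on the parity statement and on the nonemptiness of the defect-set intersections $\DefectSet\sub{\alpha,\sIndex_1}\cap\DefectSet_\fds$ and $\DefectSet\sub{\alpha_1,\sIndex_1}\cap\DefectSet\sub{\alpha_2,\sIndex_1}\cap\DefectSet_\fds$ provided by Lemmas~\ref{ParityLem} and~\ref{MinMaxLem}, which is why those combinatorial lemmas are proved first. Second, one must check that the leading coefficient does not vanish; this is exactly where the hypothesis $\Summed_\multii<\ppmin(q)$ is needed instead of the weaker $\max\multii<\ppmin(q)$, since every intermediate Jones--Wenzl box produced along the way --- including those inserted by the change-of-basis maps of Lemma~\ref{ManyBasesLem} --- has size at most $\Summed_\multii$ and must both exist and carry nonvanishing recursion coefficients. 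A secondary, more bookkeeping-type difficulty is keeping the outer induction on $\np_\multii$ and the inner induction on $w$ from interfering, and invoking the outer induction hypothesis for the $\lds$- and $\fds$-halves with the correct embeddings into $\WJ_\multii(\nu)$.
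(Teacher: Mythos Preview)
Your high-level outline is right: reduce to the basis $\PD1_\multii$, use induction hypothesis~\ref{IndAss1} to get $\WJ_{\lds}(\nu)$ and $\WJ_{\fds}(\nu)$ for free, note $\mathsf{G}_{\lds}\cup\mathsf{G}_{\fds}=\mathsf{G}_\multii$, and treat the $w=0$ tangles as lying in $\WJ_{\lds}(\nu)$. The problems are in how you assign the remaining work to the lemmas and in what you leave out.

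First, you have the roles of lemmas~\ref{SameSideLem2} and~\ref{2stepLem} reversed relative to the paper. Lemma~\ref{2stepLem} is \emph{not} the $w\to w+1$ step: its hypothesis is $\Defect_{\alpha_2}=\Defect_{\alpha_1}+2$, and what it produces is a single $\PD3_\multii$-type tangle with a size-$2$ middle box. The actual $r\to r+1$ induction (your $w\to w+1$) is lemma~\ref{BigSameSideLem}, and it is the easy part, proved exactly as in lemma~\ref{InitialCaseLem}. Conversely, lemma~\ref{SameSideLem2} does not handle the width-$1$ bridge in general: it treats only the extremal case $\Defect_{\alpha_1}=\Defect_{\alpha_2}=\smin(\lds)$, where lemma~\ref{DefectLem} forces all defects onto one box and the product in lemma~\ref{TProductLem} can be set up directly.

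Second, and this is the real gap, you have not addressed the genuinely hard step: the width-$1$ bridge with $\Defect_{\alpha_1}=\Defect_{\alpha_2}=\Defect$ for \emph{intermediate} values $\smin(\lds)+2\le\Defect\le\smax(\lds)-2$. Your sentence ``the width-$1$ bridge \ldots\ reduces to the single generator $\ValGenWJ_{\np_\multii-1}$'' does not survive contact with these cases. In the paper this is lemma~\ref{SameSideLem}, and its mechanism is different from what you describe: one multiplies two of the defect-shift tangles from lemma~\ref{2stepLem}/corollary~\ref{2stepCor} (one with $\Defect\to\Defect-2$, one with $\Defect\to\Defect+2$), decomposes the middle boxes, and obtains two linear relations among the sought width-$1$ tangle, a width-$0$ tangle, and a ``double'' width-$1$ tangle; solving this $2\times 2$ system yields the target, and the leading coefficient is $\dfrac{[\Defect][2\sIndex_{\np_\multii}+2]}{[\Defect+2][\sIndex_{\np_\multii}][\sIndex_{\np_\multii}+1]}$, whose nonvanishing needs the extra assumption $\sIndex_{\np_\multii}\le\sIndex_1$ (remark~\ref{FlipRemark} explains why this is harmless). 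Only after this, together with the extremal cases (lemmas~\ref{SameSideLem1} and~\ref{SameSideLem2}), does corollary~\ref{SameSideCor} give the full $w=1$, $v=0$ case.

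Third, you never return to $v>0$. Your ``reassembling the left piece, the bridge, and the right piece'' does not produce the diagonal cable; this needs a further induction on $v$ (lemma~\ref{DifferentSideLem} and corollary~\ref{DifferentSideCor}), carried out by multiplying a width-$w$, slope-$v$ tangle by a carefully chosen $\WJ_{\lds}$-tangle and decomposing the resulting middle box.

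In short: your inner induction on $w$ is the right idea, but its base case $w=1$ is where almost all the work lives and requires the three-way split $\Defect\in\{\smax(\lds),\smin(\lds),\text{intermediate}\}$, with lemma~\ref{2stepLem} feeding the intermediate case rather than driving the $w$-induction; and you still owe the separate induction on $v$ for the diagonal cables.
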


Assuming that induction hypothesis~\ref{IndAss1} holds,
in the next lemmas and corollaries~\ref{SameSideLem1}--\ref{DifferentSideCor}, 
we apply induction on $\np_\multii$ to construct certain simple basis tangles in $\PD1_\multii$~\eqref{InsertTwoBoxes} from the claimed generator
set $\mathsf{G}_\multii$ of $\WJ_\multii(\nu)$.
After this, we prove claim~\ref{IndClaim} in corollary~\ref{FinalCor}, by showing that every tangle in $\PD1_\multii$
is indeed a polynomial in the claimed generator set $\mathsf{G}_\multii$.  
Then, it only remains to conclude in section~\ref{TheProofSec} with the proof of theorem~\ref{GeneratorThm}.

\begin{center}
\bf Constructing simple basis tangles
\end{center}

We begin by constructing the basis tangles of type $\PD1_\multii$~\eqref{InsertTwoBoxes} with $\Defect_{\alpha_1} = \Defect_{\alpha_2}$, 
and $v = 0$, and $r = w  = 1$. 
First, in lemma~\ref{SameSideLem1} (resp.~lemma~\ref{SameSideLem2}) we construct such tangles with maximal (resp.~minimal)
number of crossing links. Later, in lemma~\ref{SameSideLem} and corollary~\ref{SameSideCor}, we construct such tangles with any number of crossing links. 
To prove the latter results, we use certain simple tangles in the basis $\PD3_\multii$~\eqref{InsertOneBox}, treated in lemma~\ref{2stepLem} and corollary~\ref{2stepCor}.

\begin{lem} \label{SameSideLem1} 
Suppose $\Summed_\multii < \ppmin(q)$.  If induction hypothesis~\ref{IndAss1} holds, then
the following tangle is a polynomial in the elements of $\mathsf{G}_\multii$\textnormal{:}
\begin{align} \label{SameSideTangle1} 
\vcenter{\hbox{\includegraphics[scale=0.275]{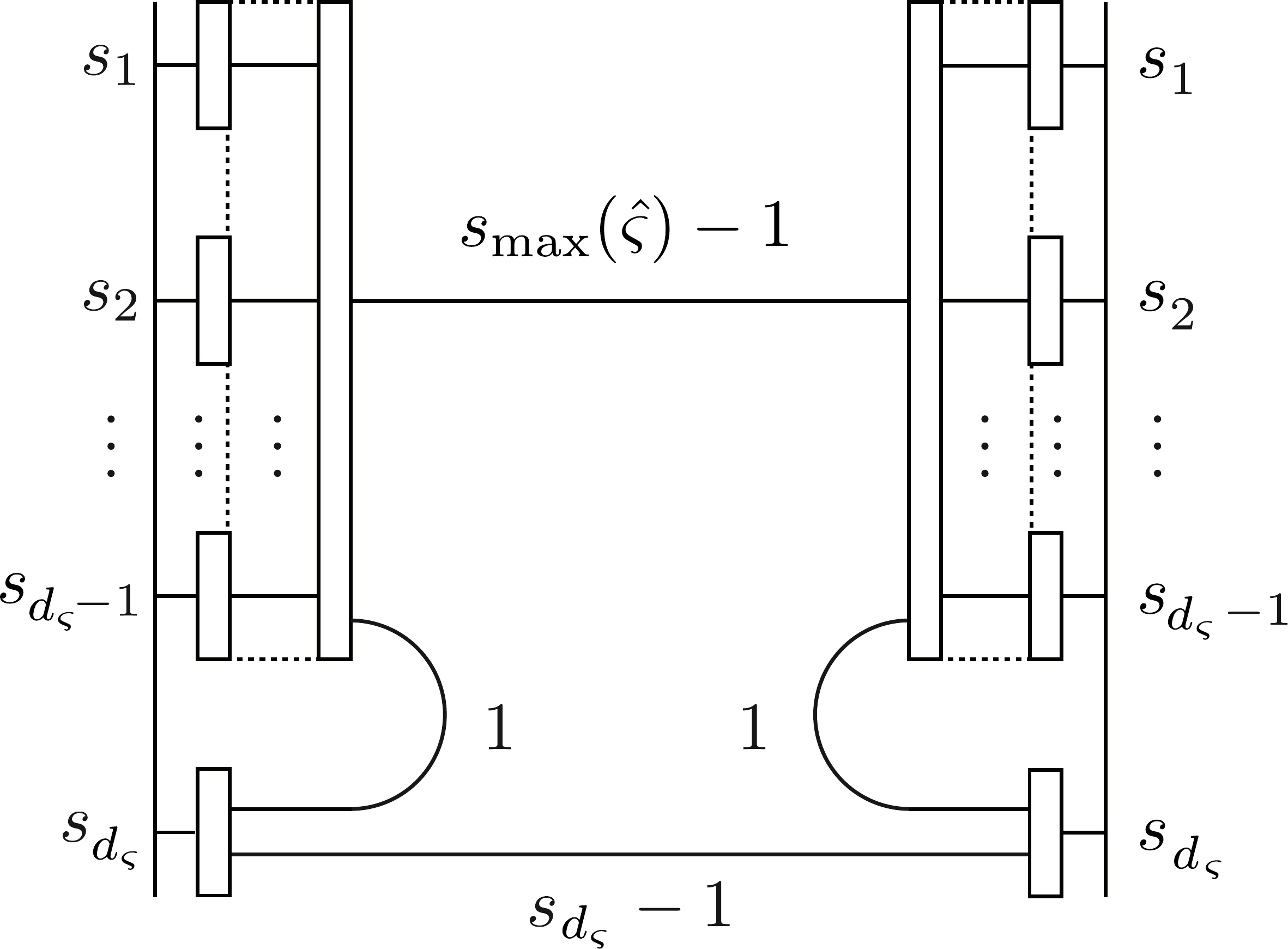} .}} 
\end{align} 
\end{lem}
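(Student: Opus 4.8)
The plan is to reduce the claim to induction hypothesis~\ref{IndAss1} by expressing the tangle~\eqref{SameSideTangle1} in terms of the single ``new'' generator $\ValGenWJ_{\np_\multii-1} = \WJProj_\multii \Gen_{\sIndex_1 + \sIndex_2 + \dotsm + \sIndex_{\np_\multii-1}}^{\TL} \WJProj_\multii$ together with tangles living in the subalgebra $\WJ_\lds(\nu) \subset \WJ_\multii(\nu)$. The crucial bookkeeping observation is that $\mathsf{G}_\lds \subset \mathsf{G}_\multii$: comparing the two generating sets, the only element of $\mathsf{G}_\multii$ that is not already in $\mathsf{G}_\lds$ is precisely $\ValGenWJ_{\np_\multii-1}$, the generator acting across the last two projector boxes. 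Hence, by hypothesis~\ref{IndAss1} --- which in particular makes every tangle of $\WJ_\lds(\nu)$ a polynomial in the elements of $\mathsf{G}_\lds$, and so of $\mathsf{G}_\multii$ --- it will suffice to write~\eqref{SameSideTangle1} as a polynomial in $\ValGenWJ_{\np_\multii-1}$ and in tangles of $\WJ_\lds(\nu)$.

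First I would identify~\eqref{SameSideTangle1} as a basis element of $\PD1_\multii$ from Lemma~\ref{ManyBasesLem}: it is the tangle in which the first $\np_\multii-1$ projector boxes are threaded straight through, i.e.\ $\alpha_1 = \alpha_2$ is the all-defects $\lds$-Jones-Wenzl link pattern, so that $\Defect_{\alpha_1} = \Defect_{\alpha_2} = \smax(\lds) = \Summed_\lds$ and $v = 0$, the number $u = \Summed_\lds - 1$ of straight crossing links is maximal, and the last projector box (of size $t := \sIndex_{\np_\multii}$) is attached to the rest of the tangle by a single link, $r = w = 1$. The construction then mirrors the two-projector computation in the proof of Lemma~\ref{InitialCaseLem}: starting from $\ValGenWJ_{\np_\multii-1}$, decompose the $(\np_\multii-1)$-st projector box (of size $\sIndex_{\np_\multii-1}$) over its internal link diagrams via the expansion~\eqref{ProjDecomp}, and discard every term annihilated by property~\ref{wj2item}, i.e.\ by~\eqref{ProjectorID2}, upon reabsorption into the other copy of $\WJProj_\multii$. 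Only two kinds of term survive: one proportional to~\eqref{SameSideTangle1}, and one (or a short linear combination) whose added turn-back is supported entirely among the strands of the first $\np_\multii-1$ blocks; after resolving the latter with the absorption identity~\eqref{ProjectorID1} and the Theta- and Tet-network evaluations of Appendix~\ref{TLRecouplingSect}, it is a tangle of $\WJ_\lds(\nu)$ --- already available by induction, consistently with~\eqref{SameSideTangle1} being the term of \emph{maximal} crossing-link number. The surviving coefficients are ratios of quantum integers, computed exactly as in Lemma~\ref{InitialCaseLem} from the explicit ``special-$T$'' formula of Proposition~\ref{SpecialTProp}, and the coefficient multiplying~\eqref{SameSideTangle1} has the form $\pm [a]/[b]$ with $0 < a, b \leq \Summed_\multii < \ppmin(q)$, hence nonzero by~\eqref{MinPower}. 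Dividing through then presents~\eqref{SameSideTangle1} as $\ValGenWJ_{\np_\multii-1}$ minus a $\WJ_\lds(\nu)$-tangle, both of which are polynomials in the elements of $\mathsf{G}_\multii$, as desired.

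I expect the main obstacle to be purely diagrammatic bookkeeping: confirming that the box decomposition produces no surviving term other than the scalar multiple of~\eqref{SameSideTangle1} and a genuinely $\lds$-supported correction --- in particular that no further $\PD1_\multii$-tangle with the last projector box nontrivially attached can appear --- and checking that this correction term really does avoid the last box, so that hypothesis~\ref{IndAss1} applies to it. The only non-combinatorial input is the non-vanishing of the leading coefficient, which is immediate from the standing assumption $\Summed_\multii < \ppmin(q)$; and should the manipulation at some point require dual link patterns (as in the proof of Lemma~\ref{TProductLem}), these exist because $\Summed_\lds \leq \Summed_\multii < \ppmin(q)$ forces $\rad \PS_\lds = \{0\}$ by Theorem~\ref{BigSSTHM}.
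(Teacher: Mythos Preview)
The paper's proof is a one-liner: it exhibits~\eqref{SameSideTangle1} directly as a product of three tangles --- the outer two lying in $\WJ_\lds(\nu)$ and the middle one in $\WJ_\fds(\nu)$ --- and applies induction hypothesis~\ref{IndAss1} to each factor, using $\mathsf{G}_\lds \cup \mathsf{G}_\fds = \mathsf{G}_\multii$. No box expansions, no coefficient computations.

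Your route is different and has a gap precisely where you flag it. Taken literally, decomposing the $(\np_\multii-1)$-st projector box inside the single tangle $\ValGenWJ_{\np_\multii-1}$ is vacuous: every nontrivial link diagram in~\eqref{ProjDecomp} carries a turn-back on each side, and the one on the side adjacent to the remaining copy of that same box is killed by~\eqref{ProjectorID2}, so only the identity term survives and you recover $\ValGenWJ_{\np_\multii-1}$. If instead you intend a product-then-decompose manoeuvre in the style of Lemma~\ref{InitialCaseLem}, you have not said which product, and in any case your central claim --- that the correction lands in $\WJ_\lds(\nu)$ --- does not hold: the turn-back of $\Gen^{\TL}_{\Summed_\lds}$ links the $\np_\multii$-th block to the first $\np_\multii-1$ blocks, and that link persists through any expansion of a size-$\sIndex_{\np_\multii-1}$ box, so no surviving term has the last box passing straight through. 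The tangle~\eqref{SameSideTangle1}, as a $\PD1_\multii$ basis element, moreover carries large inserted projector boxes on the $\lds$-defects that a single small-box decomposition cannot manufacture. The ingredient you are discarding is exactly the $\WJ_\fds$ half of induction hypothesis~\ref{IndAss1}; the paper uses it to supply the middle factor, and that is what makes the argument immediate.
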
 

\begin{proof} 
We generate the sought tangle~\eqref{SameSideTangle1} from the following product:
\begin{align}  
\hspace*{-7mm}
\vcenter{\hbox{\includegraphics[scale=0.275]{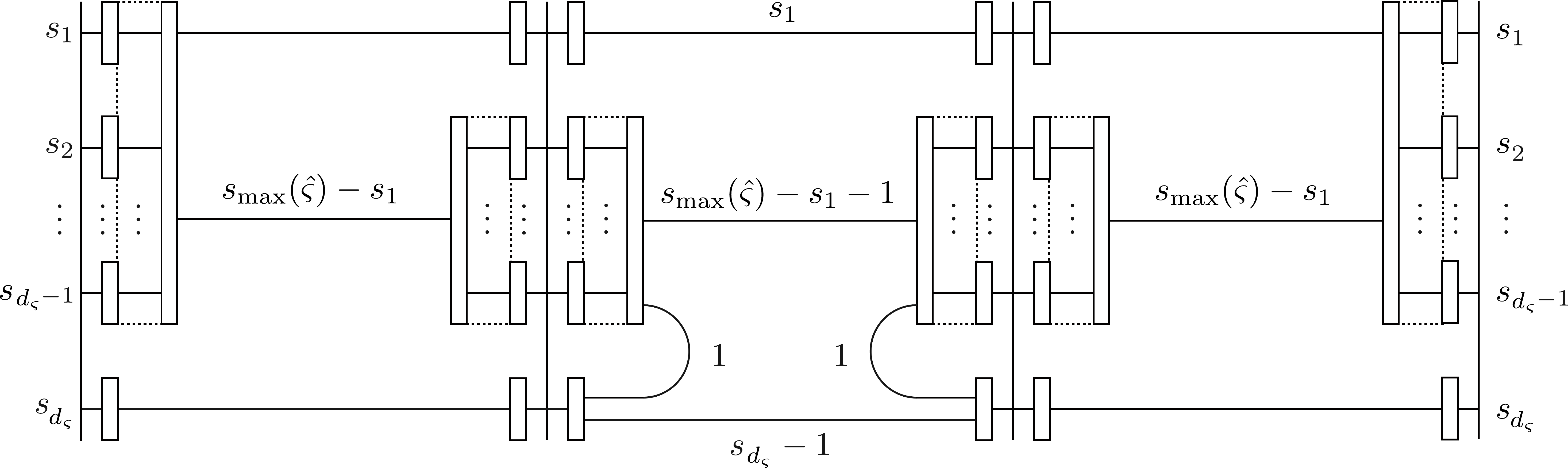} .}}
\end{align}  
By induction hypothesis~\ref{IndAss1}, the left (resp.~middle, resp.~right) tangle of this product is a polynomial in 
the elements of $\smash{\mathsf{G}_\lds}$ (resp.~$\smash{\mathsf{G}_\fds}$, resp.~$\smash{\mathsf{G}_\lds}$).  
With $\smash{\mathsf{G}_{\lds} \cup \mathsf{G}_\mathsf{\fds}} = \mathsf{G}_\multii$, the claim follows.
\end{proof}

\begin{lem} \label{SameSideLem2} 
Suppose $\Summed_\multii < \ppmin(q)$.  If induction hypothesis~\ref{IndAss1} holds, then for all 
Jones-Wenzl link states $\alpha_1, \alpha_2 \in \PS_\lds$, 
the following tangle is a polynomial in the elements of $\mathsf{G}_\multii$\textnormal{:}
\begin{align} \label{SameSideTangle2} 
\vcenter{\hbox{\includegraphics[scale=0.275]{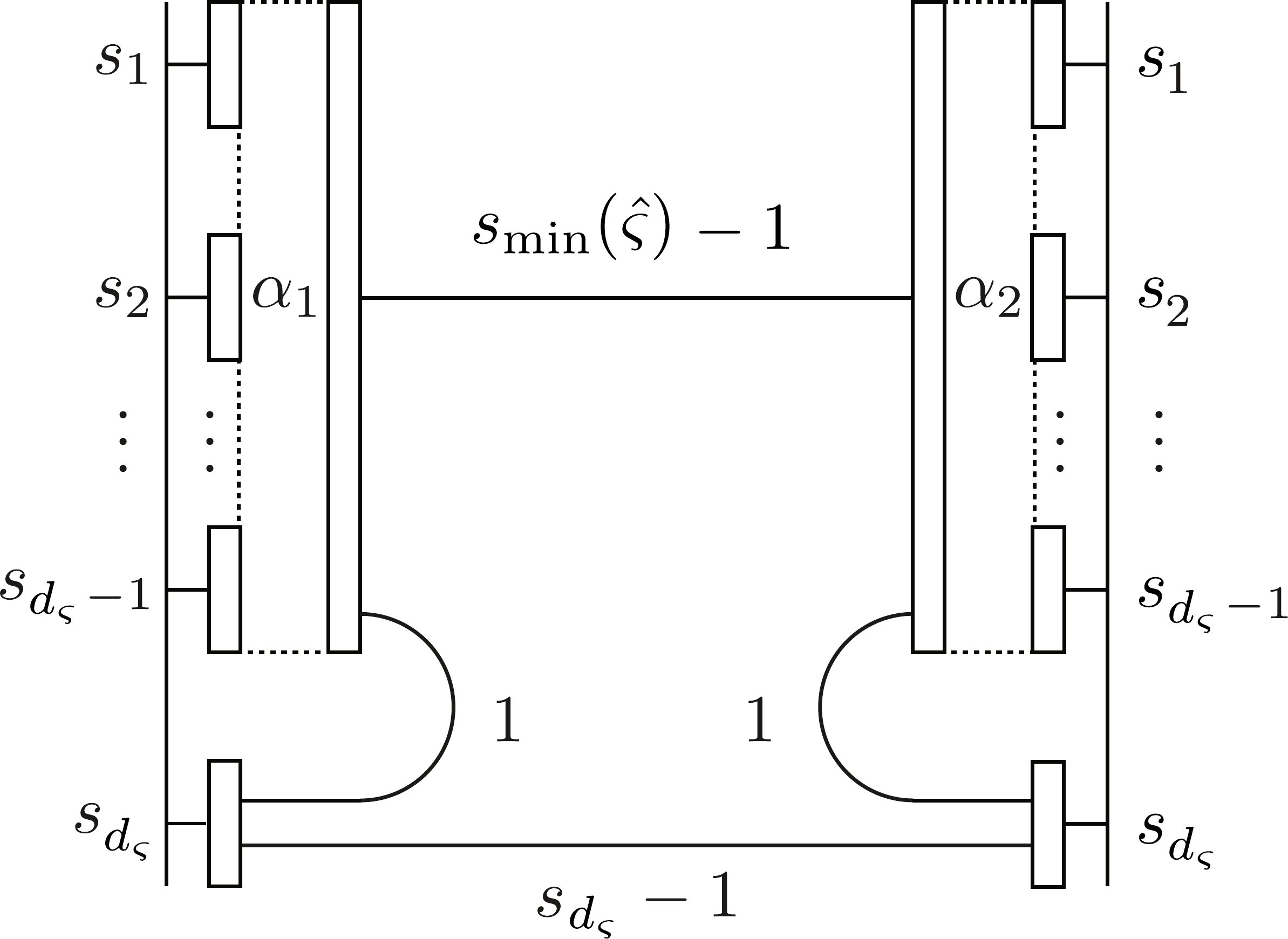} .}} 
\end{align} 
\end{lem}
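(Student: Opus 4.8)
By bilinearity it suffices to construct the tangle~\eqref{SameSideTangle2} when $\alpha_1,\alpha_2 \in \PP_\lds$ are $\lds$-Jones--Wenzl link patterns. The plan is to recognise~\eqref{SameSideTangle2} as a nonzero scalar multiple of the output of the product lemma~\ref{TProductLem}, and then to dispatch the three factors of that product by induction hypothesis~\ref{IndAss1}.

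First I would fix auxiliary data matching lemma~\ref{TProductLem}: defect numbers $\Defect_1 \in \DefectSet\sub{\alpha_1,\sIndex_1} \cap \DefectSet_\flds$ and $\Defect_2 \in \DefectSet\sub{\alpha_2,\sIndex_1} \cap \DefectSet_\flds$ (nonempty by lemma~\ref{ParityLem}), $\flds$-Jones--Wenzl link patterns $\beta_1 = \gamma_1 \in \smash{\PP_\flds\super{\Defect_1}}$ and $\beta_2 = \gamma_2 \in \smash{\PP_\flds\super{\Defect_2}}$, and an index $i \in \DefectSet\sub{\Defect_1,\Defect_2} \cap \DefectSet\sub{t,t}$, all chosen so that the tangle on the right-hand side of~\eqref{TProduct2} is precisely~\eqref{SameSideTangle2} (for the case at hand I expect this to be a direct diagrammatic identification; if a residual passage between the bases $\PD3_\multii$~\eqref{InsertOneBox} and $\PD1_\multii$~\eqref{InsertTwoBoxes} is needed, it is supplied by lemma~\ref{ManyBasesLem} and proposition~\ref{SpecialTProp}). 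Lemma~\ref{TProductLem} then gives
\[
T_i\!\begin{pmatrix}\alpha_1 & \beta_1 & \gamma_1 \\ \alpha_2 & \beta_2 & \gamma_2\end{pmatrix}
\; = \; c_i\!\begin{pmatrix}\alpha_1 & \beta_1 & \gamma_1 \\ \alpha_2 & \beta_2 & \gamma_2\end{pmatrix} \,\times\, \big(\text{the tangle~\eqref{SameSideTangle2}}\big) ,
\]
and the three tangles composing $T_i$ in~\eqref{TProduct} are exactly of the shape~\eqref{InductDiagrams}: the two outer ones lie in $\WJ_\lds(\nu) \subset \WJ_\multii(\nu)$ and the middle one in $\WJ_\fds(\nu) \subset \WJ_\multii(\nu)$. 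By induction hypothesis~\ref{IndAss1} each of them is a polynomial in the elements of $\mathsf{G}_\lds$, respectively $\mathsf{G}_\fds$, so, using $\mathsf{G}_\lds \cup \mathsf{G}_\fds = \mathsf{G}_\multii$ as in the proof of lemma~\ref{SameSideLem1}, the product $T_i$ is a polynomial in the elements of $\mathsf{G}_\multii$.

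The only remaining point — and the step I expect to be the main obstacle — is to verify that the auxiliary data can be chosen so that the coefficient $c_i$ in~\eqref{TProductCoeff} is nonzero; dividing by $c_i$ then finishes the proof. Here the scalars $\BiForm{\beta_j}{\gamma_j} = \BiForm{\beta_j}{\beta_j}$ and the Theta network $\ThetaNet(i,\Defect_{\alpha_1},\Defect_{\alpha_2})$ are automatically nonzero, since $\Summed_\flds \le \Summed_\multii < \ppmin(q)$ forces $\rad \PS_\flds = \{0\}$ by theorem~\ref{BigSSTHM} and the relevant Theta evaluations do not vanish in that range (appendix~\ref{TLRecouplingSect}). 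The delicate part is to exhibit an index $i$ in the triple intersection $\DefectSet\sub{\alpha_1,\alpha_2} \cap \DefectSet\sub{\Defect_1,\Defect_2} \cap \DefectSet\sub{t,t}$ — so that the indicator in~\eqref{TProductCoeff} equals one — for which, in addition, the Tetrahedral network $\TetraNet\!\left[\begin{smallmatrix}\Defect_1 & \Defect_{\alpha_1} & i \\ \Defect_{\alpha_2} & \Defect_2 & \sIndex_1\end{smallmatrix}\right]$ is nonzero, all while still reproducing the tangle~\eqref{SameSideTangle2}. I would arrange this by taking $\Defect_1$ and $\Defect_2$ extremal in their allowed ranges (cf.\ lemma~\ref{MinMaxLem}), so that one face of this tetrahedron degenerates and its evaluation collapses to a product of nonvanishing Theta evaluations. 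The bookkeeping of these parity-and-range constraints — of the same flavour as in lemmas~\ref{ParityLem} and~\ref{MinMaxLem} — is where the real work of the proof lies.
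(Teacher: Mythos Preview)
Your overall strategy---feed suitable data into lemma~\ref{TProductLem}, recognise each of the three factors of~\eqref{TProduct} as a tangle of the shape~\eqref{InductDiagrams}, and invoke induction hypothesis~\ref{IndAss1}---is exactly the paper's strategy.  Two points in your write-up, however, are genuine gaps rather than routine bookkeeping.

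\medskip

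\textbf{The basis passage is not a scalar.}  You write that the right-hand side of~\eqref{TProduct2} should be ``precisely''~\eqref{SameSideTangle2}, hedging that any residual $\PD3_\multii \leftrightarrow \PD1_\multii$ passage is handled by lemma~\ref{ManyBasesLem}.  But the tangle on the right of~\eqref{TProduct2} carries three-vertices and a projector box of size~$i$; it is a $\PD3_\multii$-type element, not the $\PD1_\multii$-type element~\eqref{SameSideTangle2}.  When you expand the size-$2$ box via~\eqref{ExampleProj2}, you get \emph{two} terms: the sought tangle~\eqref{SameSideTangle2} and an extra tangle (coming from the identity summand of $\WJProj\sub{2}$, which closes a loop through the bottom box).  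You must show that this extra term is itself a polynomial in $\mathsf{G}_\multii$---and this does not follow from lemma~\ref{ManyBasesLem}, which is only a linear-algebra statement.  The paper observes that the extra term lies in $\WJ_\lds(\nu)$ and so is handled directly by induction hypothesis~\ref{IndAss1}; see~\eqref{InduD}.

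\medskip

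\textbf{The nonvanishing argument needs a case split.}  Your plan to force the Tetrahedral factor in~\eqref{TProductCoeff} to be nonzero by pushing $\Defect_1,\Defect_2$ to ``extremal'' values is not what the paper does, and it is unclear it can be made to work uniformly.  The paper instead invokes lemma~\ref{DefectLem}: since $\Defect_{\alpha_1}=\Defect_{\alpha_2}=\smin(\lds)$, all defects of $\alpha_1$ (respectively $\alpha_2$) attach to a single projector box, and a three-way case split (both attach to the first box; both attach to a later box; one of each) dictates concrete choices of $\beta_j,\gamma_j$---namely the sub-link-patterns $\alpha_j'$ of~\eqref{WJsubForm} and their \emph{duals} $(\alpha_j')^\cheque$, not $\gamma_j=\beta_j$ as you propose (note $\BiForm{\beta}{\beta}\ne 0$ does not follow from $\rad \PS_\flds=\{0\}$).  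In each case the relevant network is then evaluated explicitly (via lemma~\ref{NetEvalLem}, not a degenerate tetrahedron), and the nonvanishing of the single surviving coefficient $c_2$ is checked by hand using the bound $\Summed_\multii<\ppmin(q)$.
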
 

\begin{proof}  
We note that our assumptions give
\begin{align} \label{AssumptionsGivePbig}
\np_\multii \geq 3 \quad \text{and} \quad \Summed_\multii < \ppmin(q) \qquad \Longrightarrow \qquad
\sIndex_1, \smin(\lds) < \ppmin(q) - 2 \quad \text{and} \quad \ppmin(q) > 3 . 
\end{align}
By linearity, we may also assume that $\alpha_1$ and $\alpha_2$ are link patterns.
The idea of the proof is to generate the sought tangle~\eqref{SameSideTangle2} by forming the product~\eqref{TProduct} of lemma~\ref{TProductLem}
with judiciously chosen link states $\beta_1,\beta_2, \gamma_1, \gamma_2$.

Writing $\alpha_1$ and $\alpha_2$ in the form of~\eqref{WJsubForm}, 
$\alpha_1'$ and $\alpha_2'$ being the corresponding sub-link patterns,
item~\ref{PaRittt3} of lemma~\ref{ParityLem} says that 
$\Defect_{\alpha_1'} \in \DefectSet\sub{\alpha_1,\sIndex_1}$ and $\Defect_{\alpha_2'} \in \DefectSet\sub{\alpha_2,\sIndex_1}$. 
According to lemma~\ref{DefectLem} in appendix~\ref{LemmaApp}, because $\Defect_{\alpha_1} = \Defect_{\alpha_2} = \smin(\lds)$, 
all defects of $\alpha_1$ and $\alpha_2$ attach to a single projector box. 
There are three cases to consider:
\begin{enumerate}[leftmargin=*]
\itemcolor{red}

\item \label{SameSideIt2a} 
\emph{All defects of $\alpha_1$ and $\alpha_2$ attach to the first box}: 
In this case, we have $\Defect_{\alpha_1'} = \sIndex_1 - \smin(\lds) = \Defect_{\alpha_2'}$.
Hence, we may form the product~\eqref{TProduct} of lemma~\ref{TProductLem} with
\begin{align} \label{SubsLinkSt} 
\beta_1 = \alpha_1', \qquad \gamma_1 = (\alpha_1')^\cheque, \qquad 
\beta_2 = (\alpha_2')^\cheque, \qquad \gamma_2 = \alpha_2' ,
\qquad \text{and} \qquad 
\Defect_1 = \Defect_{\alpha_1'}, \qquad 
\Defect_2 = \Defect_{\alpha_2'} 
\end{align} 
(where the dual elements~\eqref{DualLS} exist by theorem~\ref{BigSSTHM}, 
because we assume $\smax(\smash{\flds}) < \Summed_\multii < \ppmin(q)$).
Also, we have
\begin{align} 
\begin{cases} 
\Defect_1 \overset{\eqref{SubsLinkSt}}{=} \Defect_{\alpha_1'} = \sIndex_1 - \smin(\lds) = \Defect_{\alpha_2'} 
\overset{\eqref{SubsLinkSt}}{=} \Defect_2, \\ 
\text{$\sIndex_1 - \smin(\lds) \geq 1$ by item~\ref{DefIt2} of lemma~\ref{DefectLem}}, 
\end{cases} 
\quad &  \Longrightarrow \quad \Defect_1 = \Defect_2, \quad \Defect_1 + \Defect_2 = 2\Defect_1 \geq 2 \\
& \Longrightarrow \quad 2 \in \DefectSet\sub{\Defect_1,\Defect_2} \overset{\eqref{SpecialDefSet}}{=} \{0,2,\ldots,2\Defect_1\} ,
\end{align}
and $2 \in \DefectSet\sub{t,t} = \{0,2,\ldots,2 t\}$ because $t := \sIndex_{\np_\multii} \geq 1$.  
Altogether, we may therefore set
\begin{align} \label{SubsLinkSt2} 
i = 2 \in \DefectSet\sub{\Defect_1,\Defect_2} \cap \DefectSet\sub{t,t} ,
\end{align} 
in addition to~\eqref{SubsLinkSt}, in the product tangle~\eqref{TProduct}.
Then we arrive with
\begin{align} \label{PreUpTangle} 
\hspace*{-5mm}
\vcenter{\hbox{\includegraphics[scale=0.275]{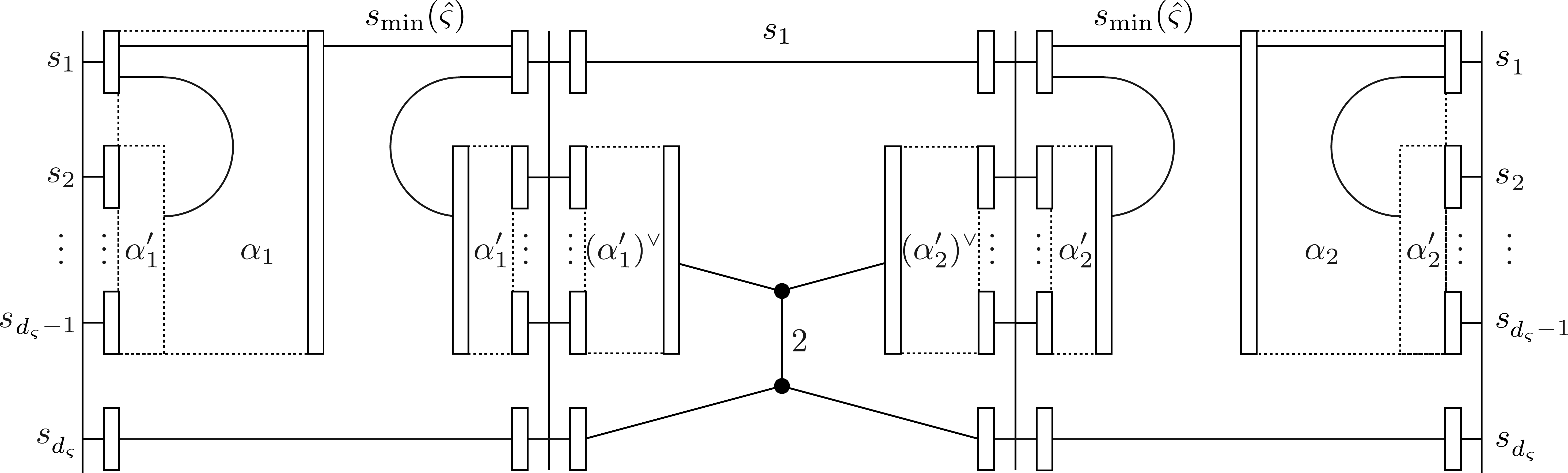} ,}}
\end{align} 
which by 
item~\ref{ExtractLemItem} of lemma~\ref{CollectionLem}
and the identity 
$\big( (\alpha_1')^\cheque \, \big| \, \alpha_1' \big) = 1 = \big( (\alpha_2')^\cheque  \big| \, \alpha_2' \big)$
immediately simplifies to
\begin{align} \label{UpTangle} 
\vcenter{\hbox{\includegraphics[scale=0.275]{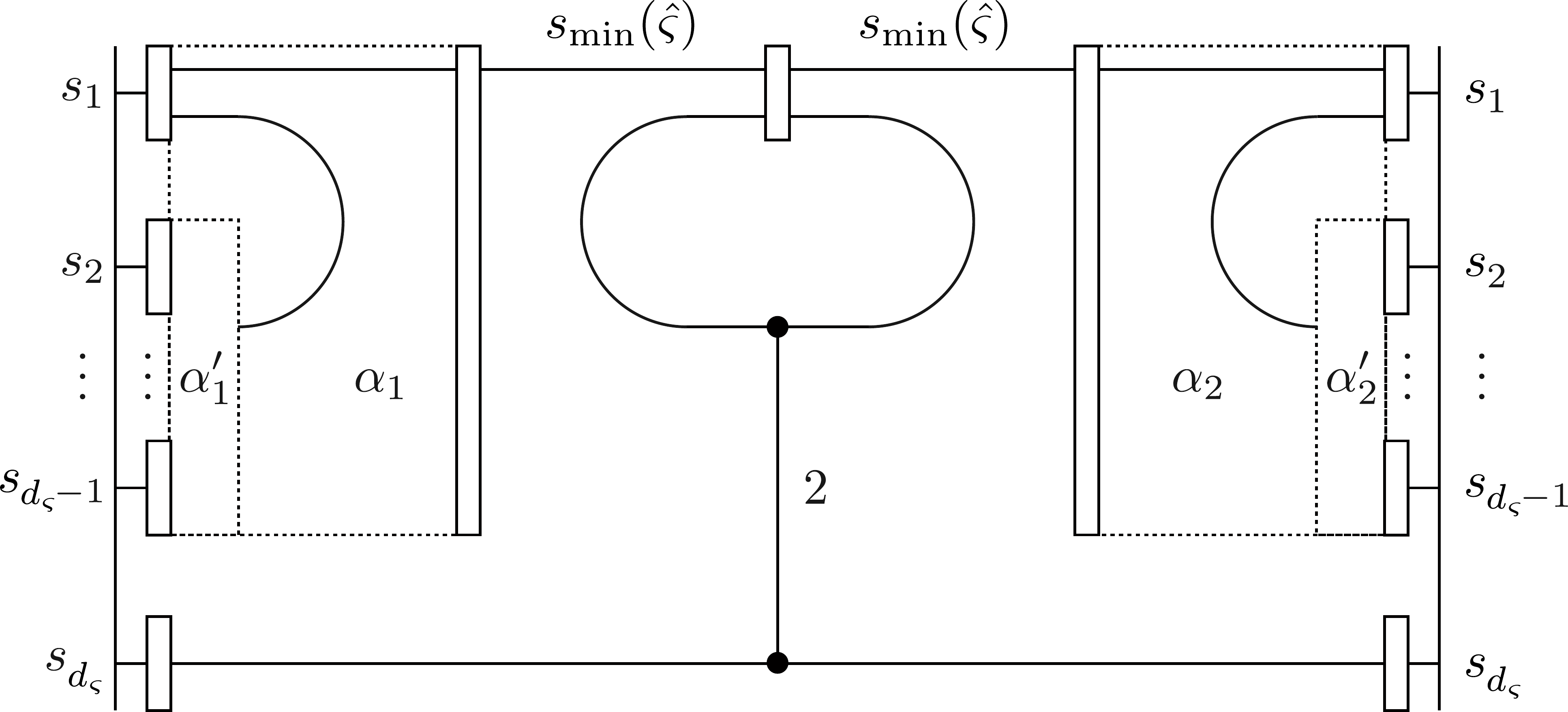} .}} 
\end{align} 
Now with $\Defect = \Defect_{\alpha_1} = \Defect_{\alpha_2} = \smin(\lds)$ and $t := \sIndex_{\np_\multii}$, we expand this tangle over the basis $\PD3_\multii$~\eqref{InsertOneBox}:
\begin{align} \label{UpExpand} 
\eqref{UpTangle} \quad = \quad \sum_{j \, \in \, \DefectSet\sub{\Defect,\Defect} 
\cap \, \DefectSet\sub{t,t}}
c_j \quad
\vcenter{\hbox{\includegraphics[scale=0.275]{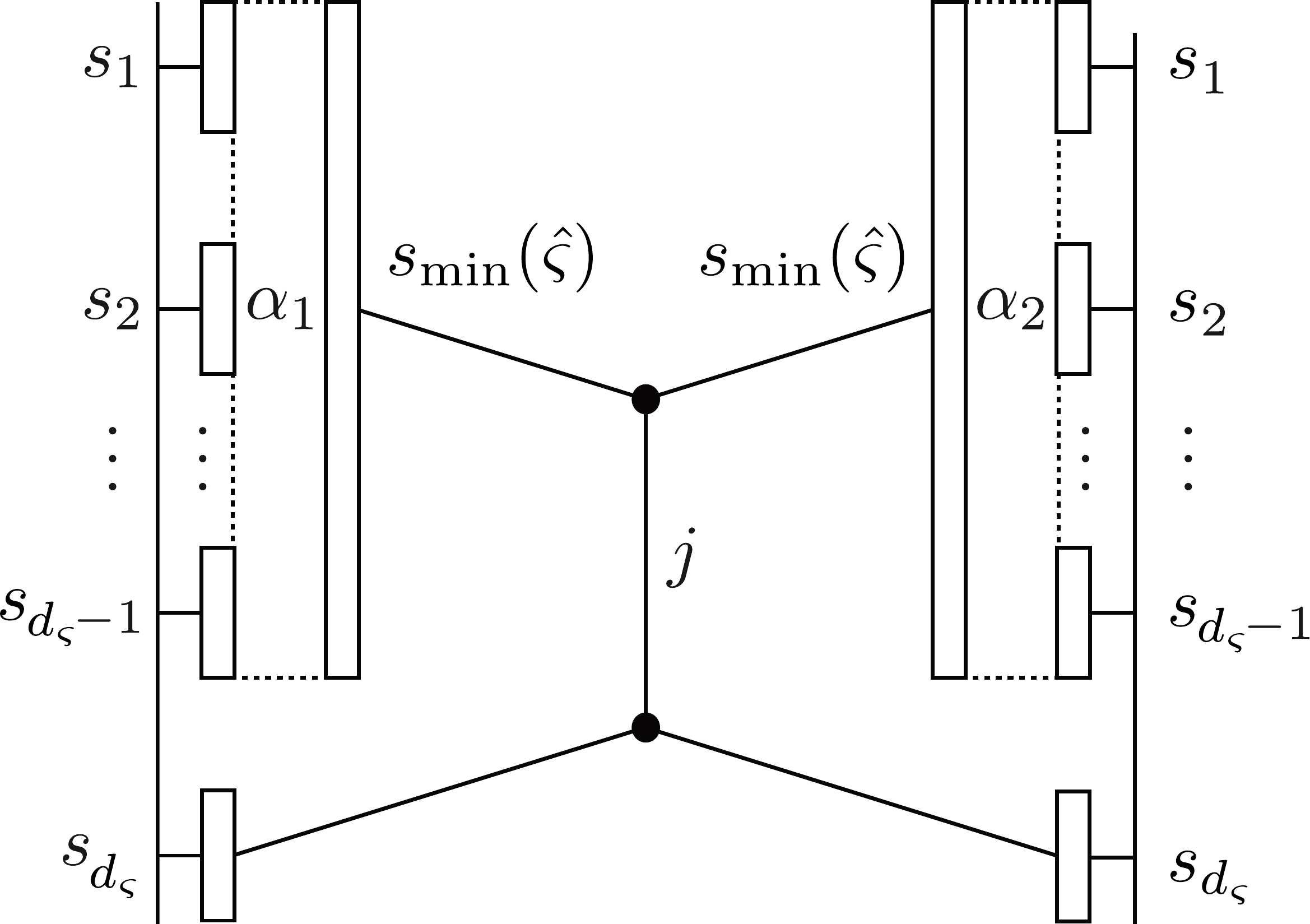} .}} 
\end{align}
To find the coefficients $c_j \in \bC$, we proceed as in the proof of lemma~\ref{TProductLem}.
We insert both sides of~\eqref{UpExpand} into 
the ``dual" tangle~\eqref{DualDiagram}, thereby closing all links into loops.  After simplifying the result,~\eqref{UpExpand} becomes
\begin{align} \label{InsertionResult2}
\vcenter{\hbox{\includegraphics[scale=0.275]{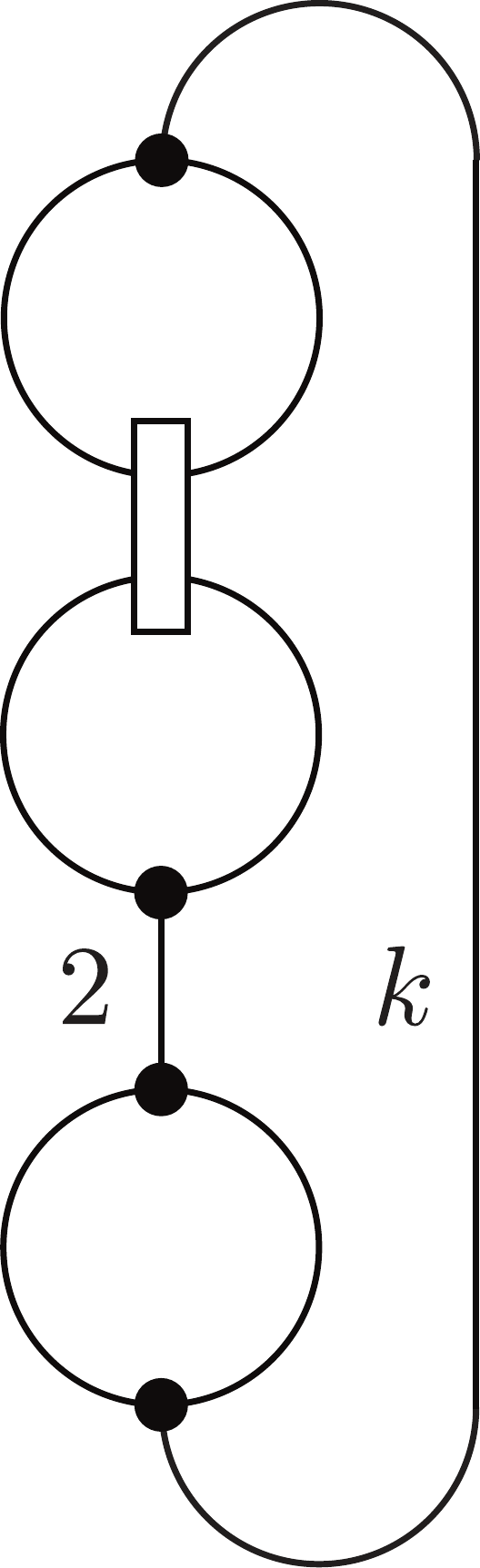}}} 
\quad = \quad \sum_{j \, \in \, \DefectSet\sub{\Defect,\Defect} 
\cap \, \DefectSet\sub{t,t}} 
c_j \,\, \times \,\,
\vcenter{\hbox{\includegraphics[scale=0.275]{e-Generators20.pdf}}} 
\end{align} 
(where we do not label the sizes of all cables; one may infer those sizes from~\eqref{UpExpand}).  
Then, using 
items~\ref{ExtractLemItem} and~\ref{LoopErasureLemItem} of lemma~\ref{CollectionLem}
we delete the lower loop of either network to obtain
\begin{align} \label{resultTanlgess}
\delta_{2,k} 
\frac{\ThetaNet(2,t,t)}{[3]}  \,\, \times \,\,
\vcenter{\hbox{\includegraphics[scale=0.275]{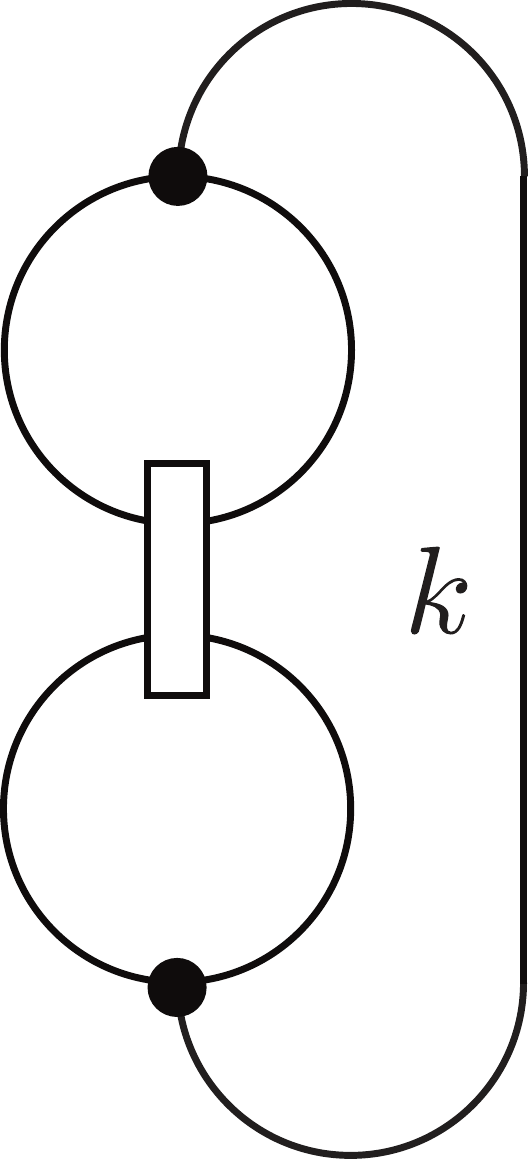}}} \quad = \quad
\sum_{j \, \in \, \DefectSet\sub{\Defect,\Defect} \cap \,\DefectSet\sub{t,t}} 
c_j \,
\delta_{jk} \frac{\ThetaNet(j,t,t)}{(-1)^j[j+1]} 
 \,\, \times \,\, \vcenter{\hbox{\includegraphics[scale=0.275]{e-Generators22.pdf} .}} 
\end{align} 
The network on the left side of~\eqref{resultTanlgess} evaluates to~\eqref{LoopBox} by lemma~\ref{NetEvalLem}, and the network 
on the right side of~\eqref{resultTanlgess} is the Theta network~\eqref{ThetaDefinition}
with $(r,s,t) = (k,\smin(\lds), \smin(\lds))$.
Thus, using lemma~\ref{ThetaLem} we arrive with
\begin{align} \label{coefck} 
c_k = \delta_{2,k} \frac{[\sIndex_1+1]^2}{[3] [\sIndex_1] \, \ThetaNet(2, \smin(\lds), \smin(\lds))}
\overset{\eqref{ThetaFormula1}}{=}
\delta_{2,k} \frac{[\sIndex_1+1]^2 [\smin(\lds)] [ \smin(\lds) - 2]! }{[3] [\sIndex_1] [2] [ \smin(\lds) + 1] [ \smin(\lds) + 2]} .
\end{align} 
By~\eqref{AssumptionsGivePbig}, these coefficients are finite, and only $c_2$ is nonzero.  
After inserting~\eqref{coefck} into~\eqref{UpExpand}, 
using (\ref{ExampleProj2},~\ref{3vertex1}) to decompose the three-vertices on the right side of~\eqref{UpExpand}, and rearranging,~\eqref{UpExpand} 
becomes
\begin{align} 
\label{InduD}
\hspace*{-5mm}
& \vcenter{\hbox{\includegraphics[scale=0.275]{e-Generators134min.pdf}}} 
\quad = \quad 
\nu  \,\, \times \,\, \vcenter{\hbox{\includegraphics[scale=0.275]{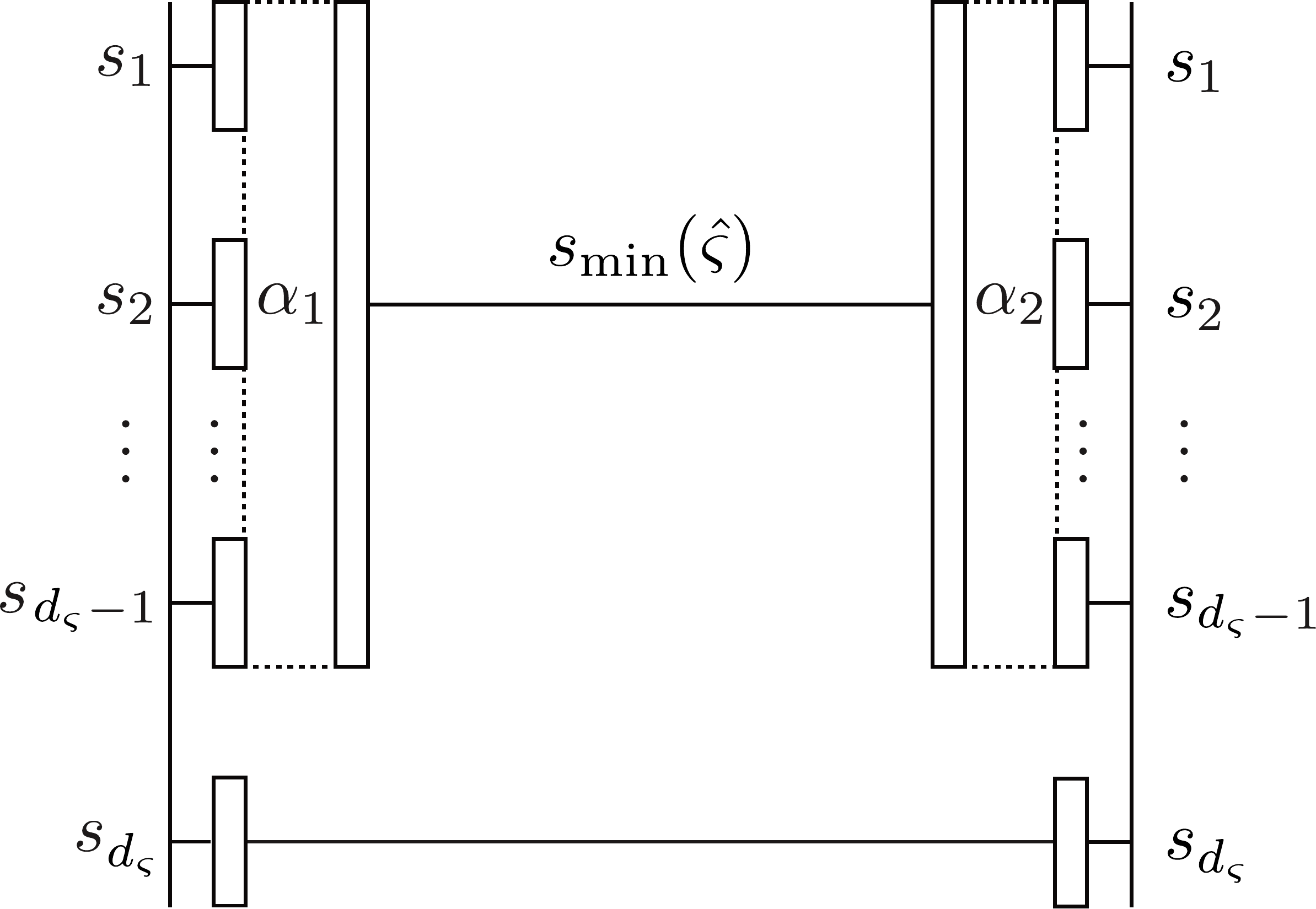}}} \\ 
\nonumber
- \frac{\nu}{c_2}  \,\, \times \,\,  & \vcenter{\hbox{\includegraphics[scale=0.275]{e-Generators135.pdf} .}} 
\end{align}
Now, by induction hypothesis~\ref{IndAss1}, the first tangle on the right side of~\eqref{InduD} is a polynomial in the elements of the collection 
$\smash{\mathsf{G}_\lds}$, and the left (resp.~middle, resp.~right) tangle in the product on the right side is a polynomial in the elements 
of $\smash{\mathsf{G}_\lds}$ (resp.~$\smash{\mathsf{G}_\fds}$, resp.~$\smash{\mathsf{G}_\lds}$).  
With $\smash{\mathsf{G}_{\lds} \cup \mathsf{G}_\mathsf{\fds}} = \mathsf{G}_\multii$, this proves the lemma for this special case.

\item \label{SameSideIt2b} 
\emph{All defects of $\alpha_1$ and $\alpha_2$ attach to the $j$:th and $k$:th box respectively, with $j,k \in \{2,3,\ldots,\np_\multii-1\}$}:  
As in item~\ref{SameSideIt2a}, we form the product~\eqref{TProduct} of lemma~\ref{TProductLem} with substitutions~(\ref{SubsLinkSt},~\ref{SubsLinkSt2}).
To justify this choice, we note that by our assumption that all defects of $\alpha_1$ and $\alpha_2$ attach to the $j$:th and $k$:th box 
we have $\Defect_{\alpha_1'} = \sIndex_1 + \smin(\lds) = \Defect_{\alpha_2'}$, so
\begin{align} 
\begin{cases} 
\Defect_1 \overset{\eqref{SubsLinkSt}}{=} \Defect_{\alpha_1'} = \sIndex_1 + \smin(\lds) 
= \Defect_{\alpha_2'} \overset{\eqref{SubsLinkSt}}{=} \Defect_2, \\ 
\sIndex_1 + \smin(\lds) \geq 2 , 
\end{cases} 
\quad &  \Longrightarrow \quad 
\Defect_1 = \Defect_2, \quad \Defect_1 + \Defect_2 = 2\Defect_1 \geq 2 \\
& \Longrightarrow \quad 2 \in \DefectSet\sub{\Defect_1,\Defect_2} \overset{\eqref{SpecialDefSet}}{=} \{0,2,\ldots,2\Defect_1\} ,
\end{align}
and $2 \in \DefectSet\sub{t,t} = \{0,2,\ldots,2 t\}$ because $t := \sIndex_{\np_\multii} \geq 1$. 
With substitutions~(\ref{SubsLinkSt},~\ref{SubsLinkSt2}) into~\eqref{TProduct}, we arrive with
\begin{align} \label{PreDownTangle}
\hspace*{-5mm}
\vcenter{\hbox{\includegraphics[scale=0.275]{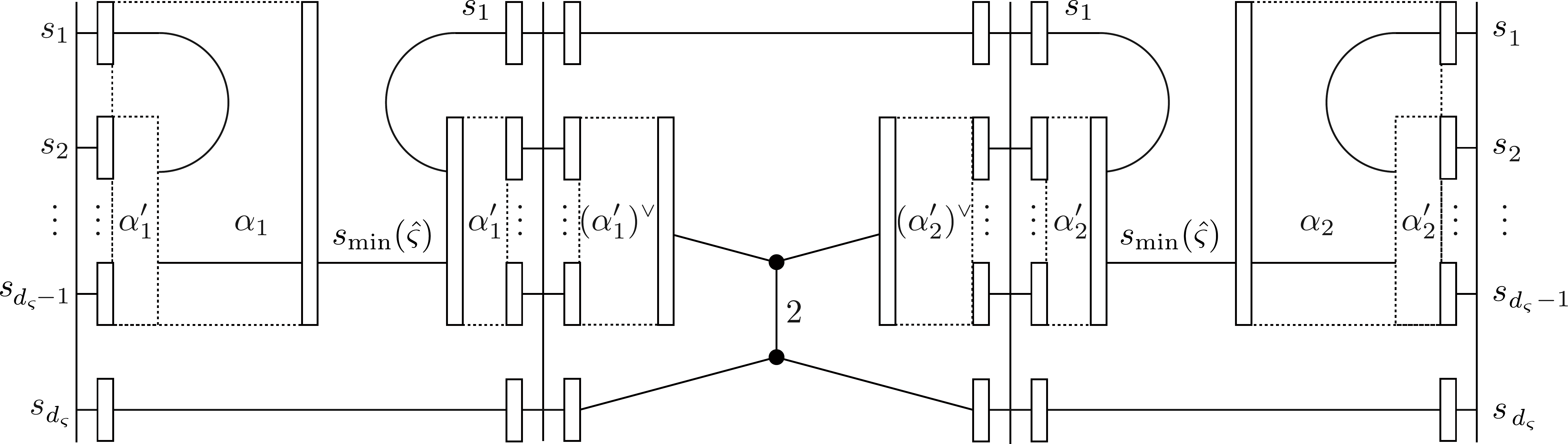} ,}} 
\end{align}  
which by 
item~\ref{ExtractLemItem} of lemma~\ref{CollectionLem}
with $\big( (\alpha_1')^\cheque \, \big| \, \alpha_1' \big) = 1 = \big( (\alpha_2')^\cheque  \big| \, \alpha_2' \big)$
immediately simplifies to
\begin{align} \label{DownTangle} 
\vcenter{\hbox{\includegraphics[scale=0.275]{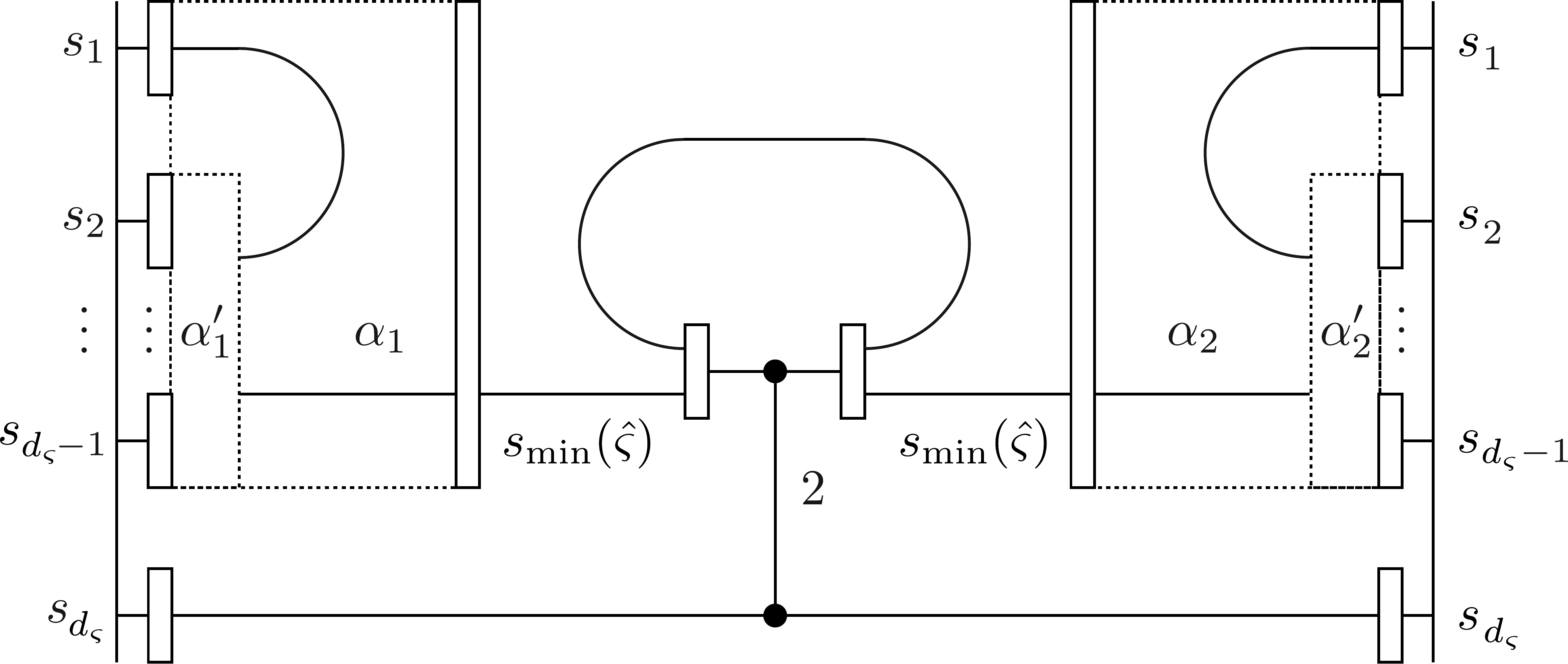} .}} 
\end{align} 
With $\Defect = \Defect_{\alpha_1} = \Defect_{\alpha_2} = \smin(\lds)$ and $t := \sIndex_{\np_\multii}$, 
we expand this tangle over the basis $\PD3_\multii$~\eqref{InsertOneBox}:
\begin{align} \label{DownExpand} 
\eqref{DownTangle} \quad = \quad 
\sum_{j \, \in \, \DefectSet\sub{\Defect,\Defect} \cap \, \DefectSet\sub{t,t}} 
c_j  \,\, \times \,\,
\vcenter{\hbox{\includegraphics[scale=0.275]{e-Generators147.pdf} ,}} 
\end{align}
and to find the coefficients $c_j \in \bC$, we proceed as in item~\ref{SameSideIt2a}. In the present situation, inserting
both sides of~\eqref{DownExpand} into the ``dual" tangle~\eqref{DualDiagram}, 
and simplifying the result gives
\begin{align} \label{InsertionResult3}
\vcenter{\hbox{\includegraphics[scale=0.275]{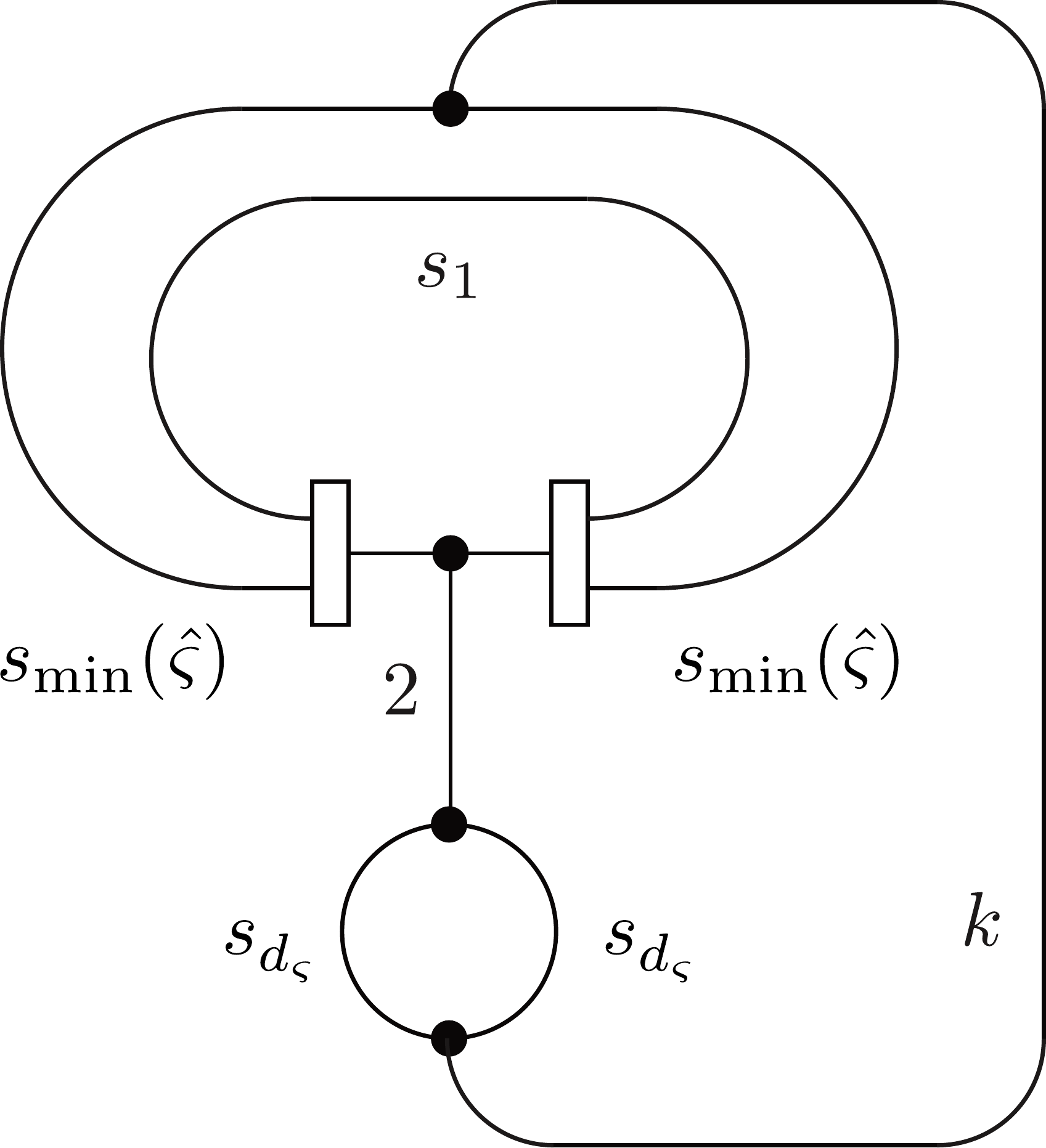}}} \quad = \quad 
\sum_{j \, \in \, \DefectSet\sub{\Defect,\Defect} \cap \, \DefectSet\sub{t,t}} 
c_j 
\,\, \times \,\, \vcenter{\hbox{\includegraphics[scale=0.275]{e-Generators20.pdf} .}} 
\end{align} 
Again, using 
items~\ref{ExtractLemItem} and~\ref{LoopErasureLemItem} of lemma~\ref{CollectionLem},
we delete the lower loop of either network to find
\begin{align} \label{BotSi2}
\delta_{2,k} 
\frac{\ThetaNet(2,t,t)}{[3]}  \,\, \times \,\,
\vcenter{\hbox{\includegraphics[scale=0.275]{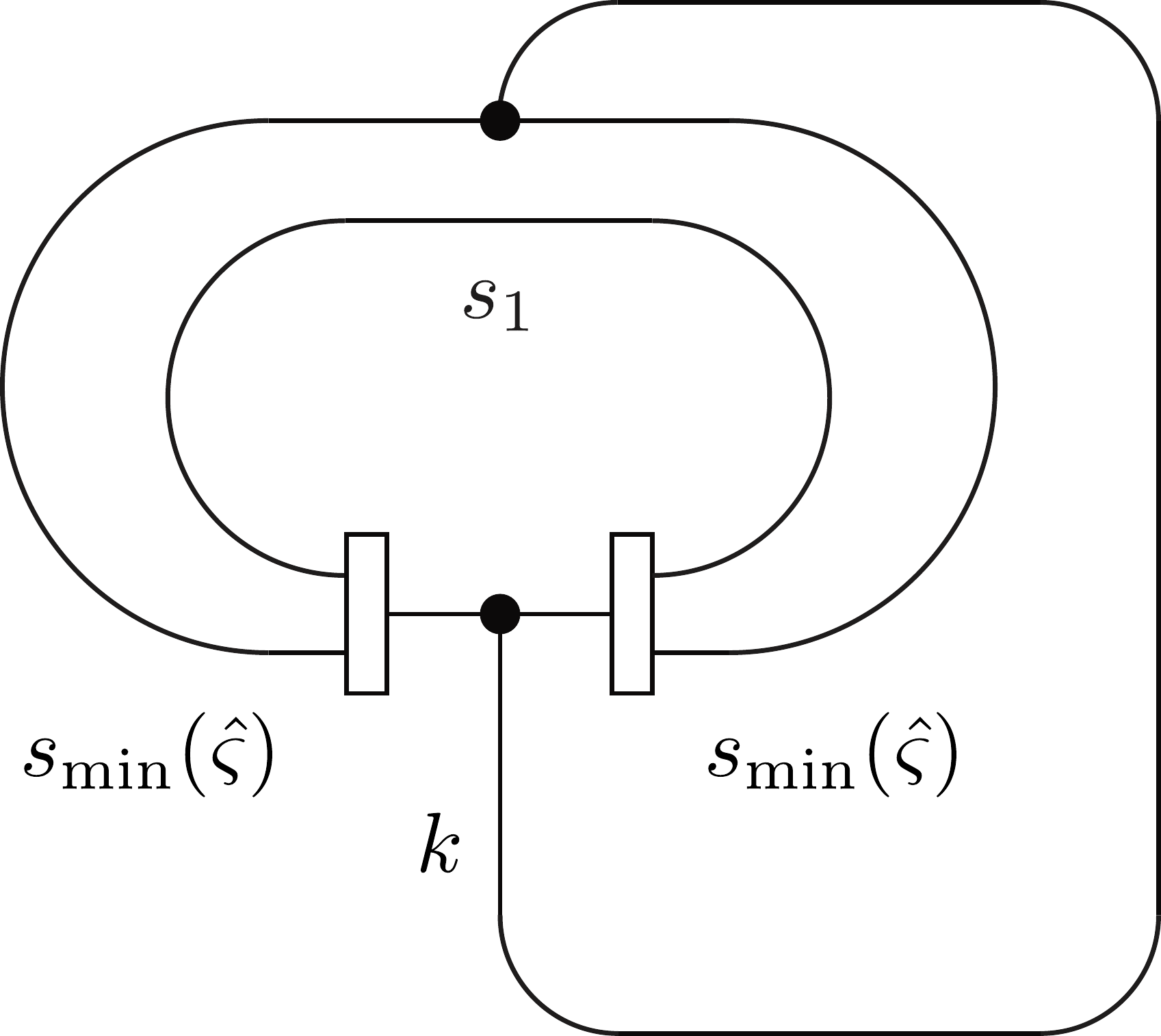}}} \quad = \quad 
\sum_{j \, \in \, \DefectSet\sub{\Defect,\Defect} \cap \, \DefectSet\sub{t,t}} 
c_j \, \delta_{jk} \frac{\ThetaNet(j,t,t)}{(-1)^j[j+1]} 
\,\, \times \,\, \vcenter{\hbox{\includegraphics[scale=0.275]{e-Generators22.pdf} .}}
\end{align} 
The networks on both sides are Theta networks defined in~\eqref{ThetaDefinition}, which respectively evaluate to
$\smash{\ThetaNet(2,\smin(\lds) + \sIndex_1, \smin(\lds) + \sIndex_1)}$ and  $\smash{\ThetaNet(k,\smin(\lds), \smin(\lds))}$.
Thus, we arrive with
\begin{align} \label{coefck2} 
& c_k = \delta_{2,k}\frac{\ThetaNet(2, \smin(\lds) + \sIndex_1, \smin(\lds) + \sIndex_1)}{\ThetaNet(2, \smin(\lds) , \smin(\lds) )} \\
& \; \overset{\eqref{ThetaFormula1}}{=}
\delta_{2,k} (-1)^{\sIndex_1} \frac{ [\smin(\lds) + \sIndex_1 + 1] [\smin(\lds) + \sIndex_1 + 2] [\smin(\lds)] [ \smin(\lds) - 2]! }{ [ \smin(\lds) + 1] [ \smin(\lds) + 2] [ \smin(\lds) + \sIndex_1] [ \smin(\lds) + \sIndex_1 - 2]!} . 
\end{align} 
Using~\eqref{AssumptionsGivePbig} and the fact from~\eqref{PreDownTangle} that 
\begin{align}
\smin(\lds) + \sIndex_1 \leq \smax(\flds) = \Summed_\multii - (\sIndex_{\np_\multii} + \sIndex_1) < \ppmin(q) - 2 ,
\end{align}
we see that these coefficients are finite, and only $c_2$ is nonzero.  
Now, proceeding with (\ref{DownExpand},~\ref{coefck2}) as we do with (\ref{UpExpand},~\ref{coefck}) just beneath~\eqref{coefck} of item~\ref{SameSideIt2a}, 
we finish the proof of the lemma for this special case.

\item \label{SameSideIt2c} 
\emph{The defects of $\alpha_1$ attach to the first box, and the defects of $\alpha_2$ attach to the $i$:th box for some $i \in \{2,3,\ldots,\np_\multii-1\}$ 
(or vice versa)}:  
In this case, we form the product~\eqref{TProduct} with substitutions $i = 2$ and
\begin{align} \label{SubsLinkSt3} 
\beta_1 = \beta_2 = \alpha_1', \qquad \gamma_1 = \gamma_2 = (\alpha_1')^\cheque 
\qquad \Longrightarrow \qquad \Defect_1 = \Defect_{\alpha_1'} = \Defect_2 . 
\end{align} 
These substitutions into~\eqref{TProduct} give
\begin{align} \label{PreUpDownTangle} 
\hspace*{-5mm}
\vcenter{\hbox{\includegraphics[scale=0.275]{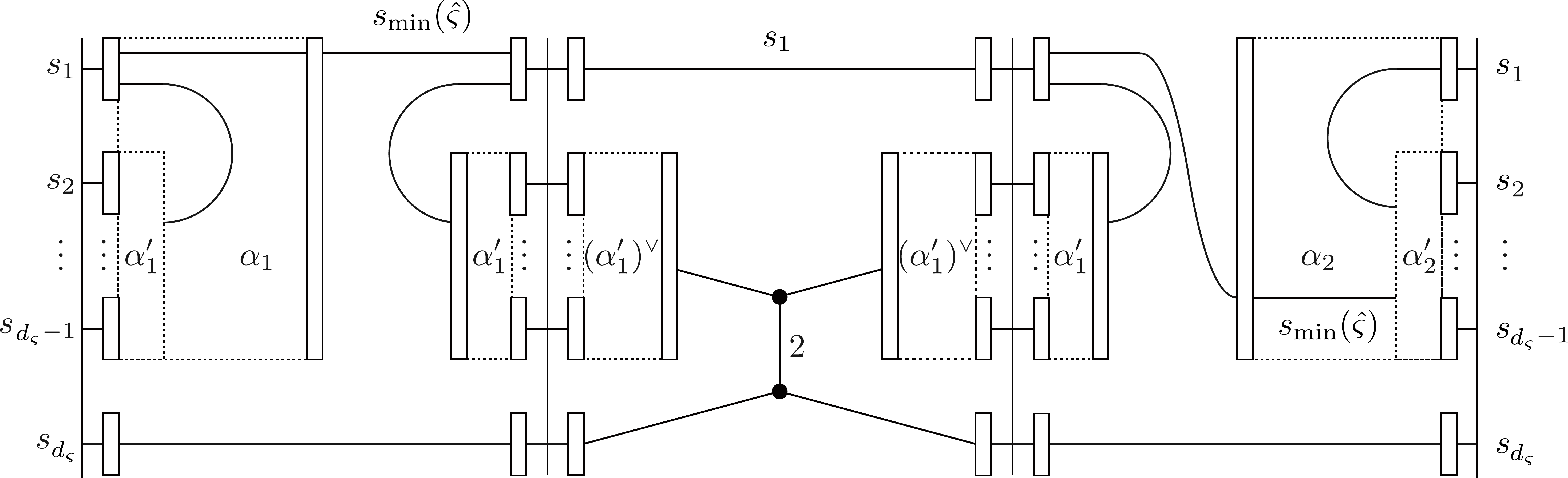} .}}
\end{align} 
Repeating the analysis of item~\ref{SameSideIt2a} above proves the lemma for this special case.
\end{enumerate}
This concludes the proof. 
\end{proof}

Now we construct certain simple basis tangles in $\PD3_\multii$~\eqref{InsertOneBox}
from the claimed generator set $\mathsf{G}_\multii$ of $\WJ_\multii(\nu)$.
We make use of them below in the proof of lemma~\ref{SameSideLem} and corollary~\ref{SameSideCor}.

\begin{lem} \label{2stepLem} 
Suppose $\Summed_\multii < \ppmin(q)$.  If induction hypothesis~\ref{IndAss1} holds, then for all 
Jones-Wenzl link states $\alpha_1, \alpha_2 \in \PS_\lds$
such that $\Defect_{\alpha_2} = \Defect_{\alpha_1} + 2$, the following tangle is a polynomial in the elements of $\mathsf{G}_\multii$\textnormal{:}
\begin{align} \label{2stepTangle} 
\vcenter{\hbox{\includegraphics[scale=0.275]{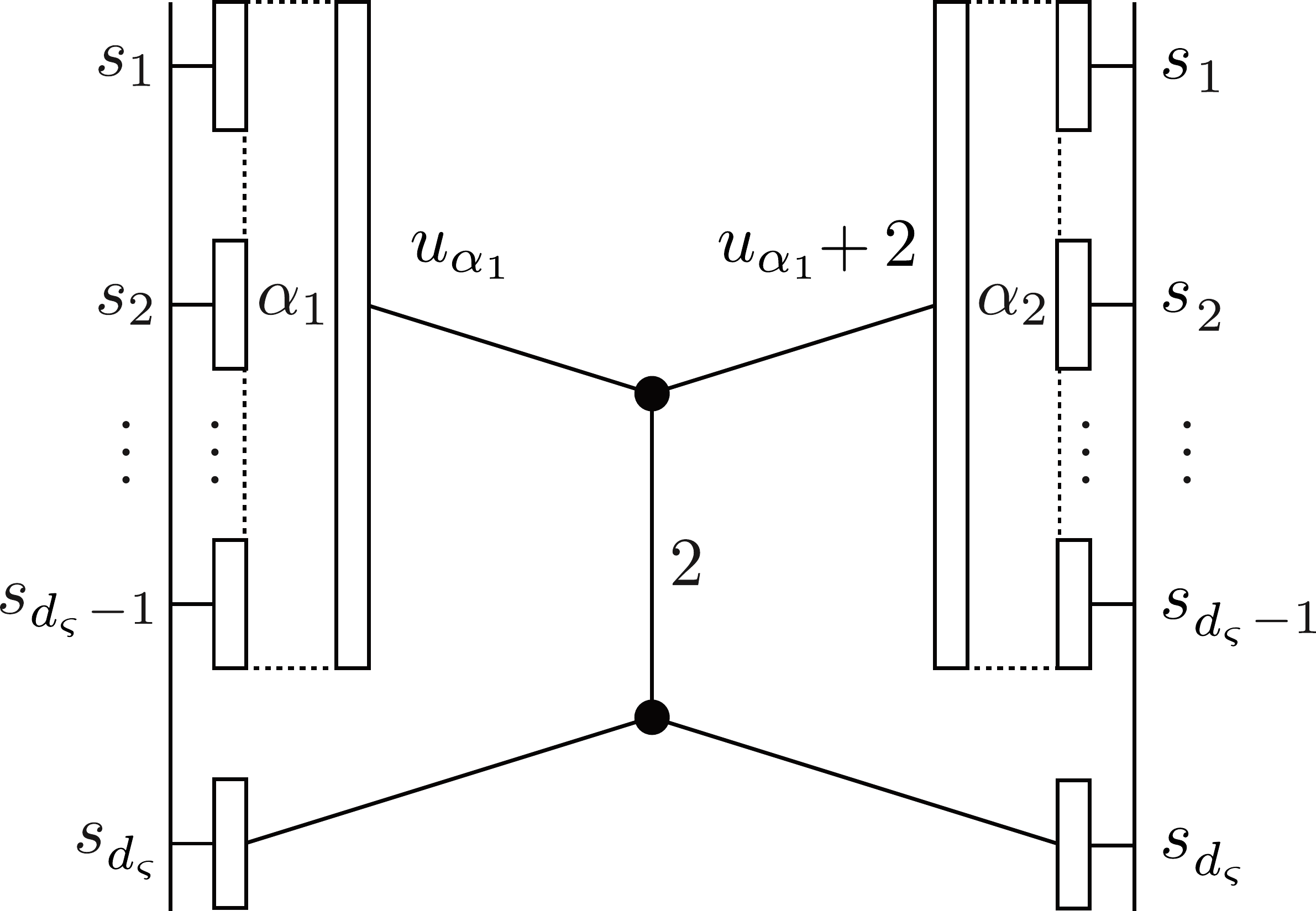}}} \quad 
= \quad \vcenter{\hbox{\includegraphics[scale=0.275]{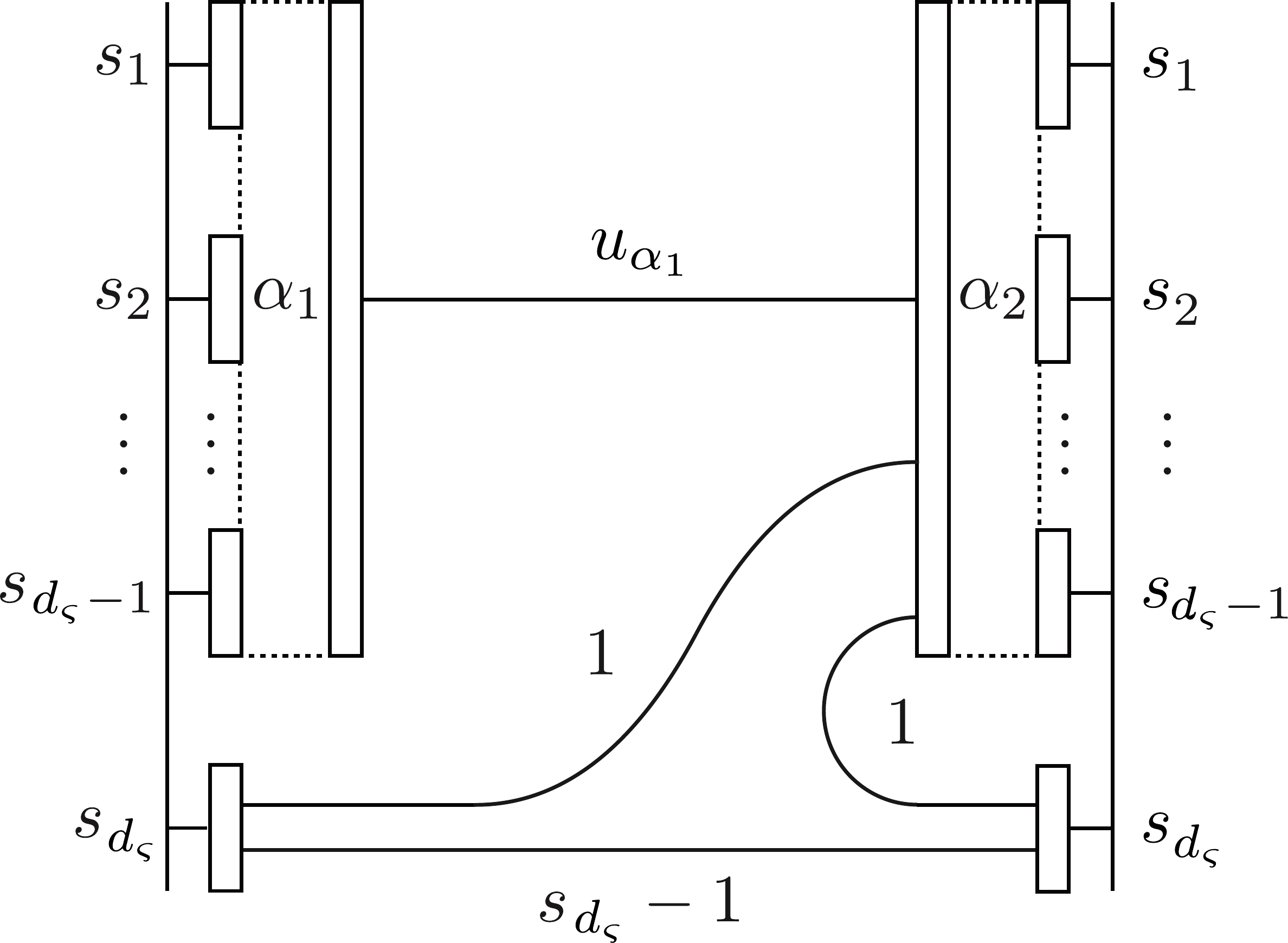} .}} 
\end{align} 
\end{lem}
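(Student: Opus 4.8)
The plan is to produce the tangle in~\eqref{2stepTangle} --- which is one of the basis tangles of $\PD3_\multii$~\eqref{InsertOneBox}, carrying link patterns $\alpha_1,\alpha_2\in\PP_\lds$ at its feet joined through a single projector box, with a cable of size $t=\sIndex_{\np_\multii}$ on the right, the two drawings in~\eqref{2stepTangle} differing only by the recursion~\eqref{wjrecursion} --- up to an explicit nonzero scalar, as the three-tangle product $T_i\!\left(\begin{smallmatrix}\alpha_1&\beta_1&\gamma_1\\ \alpha_2&\beta_2&\gamma_2\end{smallmatrix}\right)$ of lemma~\ref{TProductLem} for a suitable choice of inner data. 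By construction the two outer factors of that product lie in $\WJ_\lds(\nu)$ and the middle factor in $\WJ_\fds(\nu)$, so by induction hypothesis~\ref{IndAss1}, via the descriptions~\eqref{InductDiagrams}, each factor --- and hence the whole product --- is a polynomial in $\mathsf{G}_\lds\cup\mathsf{G}_\fds=\mathsf{G}_\multii$. By bilinearity we may take $\alpha_1,\alpha_2$ to be link patterns; writing them in the recursive form~\eqref{WJsubForm} with sub-link patterns $\alpha_1',\alpha_2'$, item~\ref{PaRittt3} of lemma~\ref{ParityLem} (applied with $\multii\mapsto\lds$) gives $\Defect_{\alpha_1'}\in\DefectSet\sub{\alpha_1,\sIndex_1}\cap\DefectSet_\flds$ and $\Defect_{\alpha_2'}\in\DefectSet\sub{\alpha_2,\sIndex_1}\cap\DefectSet_\flds$, and since $\Defect_{\alpha_2}=\Defect_{\alpha_1}+2$, lemma~\ref{MinMaxLem} (again with $\multii\mapsto\lds$) shows that $\DefectSet\sub{\alpha_1,\sIndex_1}\cap\DefectSet\sub{\alpha_2,\sIndex_1}\cap\DefectSet_\flds\neq\emptyset$; all the three-vertices and projector boxes appearing below are legal because $\Summed_\multii<\ppmin(q)$.

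Concretely, I would choose intermediate defect labels $\Defect_1\in\DefectSet\sub{\alpha_1,\sIndex_1}\cap\DefectSet_\flds$ and $\Defect_2\in\DefectSet\sub{\alpha_2,\sIndex_1}\cap\DefectSet_\flds$ --- taking $\Defect_1=\Defect_2$ equal to a common element guaranteed by lemma~\ref{MinMaxLem} --- so that the box size $i$ of~\eqref{2stepTangle} lies in $\DefectSet\sub{\Defect_1,\Defect_2}\cap\DefectSet\sub{\alpha_1,\alpha_2}\cap\DefectSet\sub{t,t}$, together with inner link patterns $\beta_1,\gamma_1\in\PP_\flds\super{\Defect_1}$ and $\beta_2,\gamma_2\in\PP_\flds\super{\Defect_2}$ taken mutually dual (extending lemma~\ref{TProductLem} bilinearly, the duals existing by theorem~\ref{BigSSTHM} since $\smax(\flds)\leq\Summed_\multii<\ppmin(q)$) so that $\BiForm{\beta_1}{\gamma_1}=\BiForm{\beta_2}{\gamma_2}=1$. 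With these choices, formula~\eqref{TProductCoeff} of lemma~\ref{TProductLem} evaluates the product to $\left(\TetraNet\!\left[\begin{smallmatrix}\Defect_1 & \Defect_{\alpha_1} & i\\ \Defect_{\alpha_2} & \Defect_2 & \sIndex_1\end{smallmatrix}\right]/\ThetaNet(i,\Defect_{\alpha_1},\Defect_{\alpha_2})\right)$ times the target tangle~\eqref{2stepTangle}, the indicator in~\eqref{TProductCoeff} being $1$ by our choice of $i$. If this scalar is nonzero and finite, dividing by it exhibits~\eqref{2stepTangle} as a polynomial in $\mathsf{G}_\multii$. If instead --- as in the proof of lemma~\ref{SameSideLem2} --- the natural product yields a $\PD3_\multii$-expansion with several surviving terms, I would expand the simplified product over the basis $\PD3_\multii$~\eqref{InsertOneBox}, extract the coefficients by inserting both sides into the dual tangle~\eqref{DualDiagram} and applying items~\ref{ExtractLemItem} and~\ref{LoopErasureLemItem} of lemma~\ref{CollectionLem} --- which reduces the coefficients to Theta, Tetrahedral, and loop-box evaluations collected in appendix~\ref{TLRecouplingSect} --- and then solve for~\eqref{2stepTangle}; the correction terms are tangles in $\WJ_\lds(\nu)$ or $\WJ_\fds(\nu)$, polynomials in $\mathsf{G}_\multii$ by induction hypothesis~\ref{IndAss1}. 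A short split into cases according to how the defects of $\alpha_1,\alpha_2$ distribute among the projector boxes, as in lemma~\ref{SameSideLem2}, may be needed to make the choices of $\Defect_1,\Defect_2,\beta_j,\gamma_j$ explicit.

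The main obstacle is ensuring that the scalar one divides by is both finite and nonzero. Finiteness follows from $\Summed_\multii<\ppmin(q)$ together with its consequences $\sIndex_1,\smin(\lds)<\ppmin(q)-2$ and $\ppmin(q)>3$ (cf.~\eqref{AssumptionsGivePbig}), which keep every quantum integer appearing in a denominator away from zero, and nonemptiness of the relevant index intersections is precisely what lemma~\ref{MinMaxLem} supplies. The delicate point is the nonvanishing of the governing Tetrahedral symbol, since $\TetraNet[\cdots]$ can vanish: following the device used in lemma~\ref{SameSideLem2}, I would arrange the auxiliary labels so that one entry of the relevant $\TetraNet$ equals $2$ --- which is possible precisely because $\Defect_{\alpha_2}=\Defect_{\alpha_1}+2$ forces $2$ into the pertinent defect set --- whereupon the explicit recoupling formulas of appendix~\ref{TLRecouplingSect} give a manifestly nonzero value throughout the admissible range $\Summed_\multii<\ppmin(q)$.
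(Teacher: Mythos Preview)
Your proposal is correct and follows essentially the same route as the paper: form the product $T_i$ of lemma~\ref{TProductLem} with $i=2$, choose $\Defect_1=\Defect_2=\Defect$ in the intersection $\DefectSet\sub{\alpha_1,\sIndex_1}\cap\DefectSet\sub{\alpha_2,\sIndex_1}\cap\DefectSet_\flds$ supplied by lemma~\ref{MinMaxLem}, take the inner link states dual to one another so the bilinear forms are~$1$, and read off from~\eqref{TProductCoeff} that the product is a nonzero scalar times the target tangle, that scalar being manifestly nonzero via the special Tetrahedral evaluation~\eqref{SpecialTetNetwork} of lemma~\ref{NetEvalLem}.

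The contingency plans you sketch are unnecessary, however. There is no multi-term $\PD3_\multii$-expansion to worry about: formula~\eqref{TProductCoeff} already contains the Kronecker delta (via the indicator), so the product~\eqref{TProduct} lands on a \emph{single} basis element of $\PD3_\multii$, precisely~\eqref{2stepTangle}. Likewise, no case split on how the defects of $\alpha_1,\alpha_2$ distribute is needed here --- that was special to lemma~\ref{SameSideLem2}, where one had to manufacture link patterns $\beta_j,\gamma_j$ with particular defect counts; in the present lemma any choice of $\Defect$ in the nonempty intersection works uniformly, and the nonvanishing of the coefficient~\eqref{coefficient_for_resol}--\eqref{coefficient_for_resol2} is checked directly from the explicit $q$-factorial formula, using only $\Summed_\multii<\ppmin(q)$.
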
 

\begin{proof}  
It is apparent from~\eqref{3vertex1} that equality in~\eqref{2stepTangle} holds for $\Defect_{\alpha_2} = \Defect_{\alpha_1} + 2$.  
To generate the left tangle of~\eqref{2stepTangle}, we form the product~\eqref{TProduct} of lemma~\ref{TProductLem} as follows: we choose   
\begin{align} \label{Subs} 
\gamma_1 = \beta_1\superscr{\cheque}, \qquad \gamma_2 = \beta_2\superscr{\cheque} 
\end{align} 
(where the dual elements~\eqref{DualLS} exist by theorem~\ref{BigSSTHM}, 
because we assume $\smax(\smash{\flds}) < \Summed_\multii < \ppmin(q)$),
and we set
\begin{align} \label{MysterySet} 
\Defect := \Defect_1 = \Defect_2 \in \DefectSet\sub{\alpha_1,\sIndex_1} \cap \DefectSet\sub{\alpha_2,\sIndex_1} \cap \DefectSet_\flds .
\end{align} 
(Lemma~\ref{MinMaxLem} implies that we can make this choice.)
Then, in~\eqref{TProduct2} we have (with $t := \sIndex_{\np_\multii}$)
\begin{align} \label{MysterySeti} 
i = 2 \in \DefectSet\sub{\Defect_1,\Defect_2}\cap\DefectSet\sub{t,t} = \{0,2,\ldots,2\min(\Defect,t)\} .
\end{align}
Now, by lemma~\ref{TProductLem}, the product~\eqref{TProduct} with these substitutions evaluates to 
\begin{align} \label{Resol} 
\TetraNet \left[
\begin{array}{ccc}
\Defect & \Defect_{\alpha_1} & 2 \\ 
\Defect_{\alpha_2} & \Defect & \sIndex_1 
\end{array} 
\right]\frac{\one{\{2 \in \DefectSet\sub{\alpha_1,\alpha_2} \cap \DefectSet\sub{\Defect,\Defect} \cap 
\DefectSet\sub{t,t}\}}}{\ThetaNet(2,\Defect_{\alpha_1},\Defect_{\alpha_2})}  \,\, \times \,\,
\vcenter{\hbox{\includegraphics[scale=0.275]{e-Generators132.pdf} .}} 
\end{align}  
By (\ref{EDefSet},~\ref{EalphSet2}) with $\Defect_{\alpha_2} = \Defect_{\alpha_1} + 2$, the indicator $\one$ in~\eqref{Resol} 
equals one.  Combining this observation with item~\ref{NetworkIt3} of lemma~\ref{NetEvalLem}
from appendix~\ref{TLRecouplingSect}, we find that the coefficient in~\eqref{Resol} equals
\begin{align} 
\label{coefficient_for_resol}
\TetraNet \left[
\begin{array}{ccc}
\Defect & \Defect_{\alpha_1} & 2 \\ 
\Defect_{\alpha_2} & \Defect & \sIndex_1 
\end{array} 
\right] 
\frac{1}{\ThetaNet(2,\Defect_{\alpha_1},\Defect_{\alpha_2})} 
\overset{\eqref{SpecialTetNetwork}}{=}  \; &
\frac{1}{[\Defect]} \left[\frac{\Defect - \Defect_{\alpha_1} + \sIndex_1}{2}\right]
\frac{\ThetaNet(\Defect,\sIndex_1,\Defect_{\alpha_1}+2)}{\ThetaNet(2,\Defect_{\alpha_1},\Defect_{\alpha_1}+2)} \\
\overset{\eqref{ThetaFormula1}}{=} \; &
\frac{1}{[\Defect]} 
\frac{\left[ \frac{\Defect + \sIndex_1 + \Defect_{\alpha_1}}{2} + 2 \right]! \left[ \frac{\Defect + \sIndex_1 - \Defect_{\alpha_1}}{2} \right]! \left[ \frac{\sIndex_1 + \Defect_{\alpha_1} - \Defect}{2} + 1 \right]! \left[ \frac{\Defect + \Defect_{\alpha_1} - \sIndex_1}{2} + 1 \right]!}{[\Defect]! [\sIndex_1]! [\Defect_{\alpha_1} + 3]!} .
\label{coefficient_for_resol2}
\end{align}
Using the facts that $\sIndex_1, \Defect < \ppmin(q)$ and
\begin{align}
\Defect_{\alpha_1} + 2 \leq \smax(\smash{\lds}) 
\overset{\eqref{hatsMax}}{=} \Summed_\multii - \sIndex_{\np_\multii} < \ppmin(q) - 1
\qquad \Longrightarrow \qquad 
\Defect_{\alpha_1} < \ppmin(q) - 3 ,
\end{align}
we see that the denominator of~\eqref{coefficient_for_resol2} is finite and nonzero, and using the fact from~\eqref{TProduct} that
\begin{align}
\Defect + \sIndex_1 \leq \smax(\smash{\lds}) 
\overset{\eqref{hatsMax}}{=} \Summed_\multii - \sIndex_{\np_\multii} < \ppmin(q) - 1 ,
\end{align}
we see that the numerator of~\eqref{coefficient_for_resol2} is finite and nonzero.
Therefore, the product in~\eqref{TProduct} with substitutions~\eqref{Subs}--\eqref{MysterySeti}
gives the left tangle $T$ of~\eqref{2stepTangle}, up to a nonzero constant.

Now to see that $T$ is a polynomial in the elements of $\mathsf{G}_\multii$, we observe that the left (resp.~middle, resp.~right) tangle 
of the product~\eqref{TProduct} giving $T$ is a polynomial in the elements of the collection $\smash{\mathsf{G}_\lds}$ (resp.~$\smash{\mathsf{G}_\fds}$, 
resp.~$\smash{\mathsf{G}_\lds}$), by induction hypothesis~\ref{IndAss1}.  
With $\smash{\mathsf{G}_{\lds} \cup \mathsf{G}_\mathsf{\fds}} = \mathsf{G}_\multii$, 
it then follows that $T$ is a polynomial in the elements of $\mathsf{G}_\multii$. 
\end{proof}

\begin{cor} \label{2stepCor} 
Suppose $\Summed_\multii < \ppmin(q)$.  If induction hypothesis~\ref{IndAss1} holds, then for all 
Jones-Wenzl link states $\alpha_1, \alpha_2 \in \PS_\lds$
such that $\Defect_{\alpha_1} = \Defect_{\alpha_2} + 2$, 
the following tangle is a polynomial in the elements of $\mathsf{G}_\multii$\textnormal{:}
\begin{align} \label{2stepTangle2} 
\vcenter{\hbox{\includegraphics[scale=0.275]{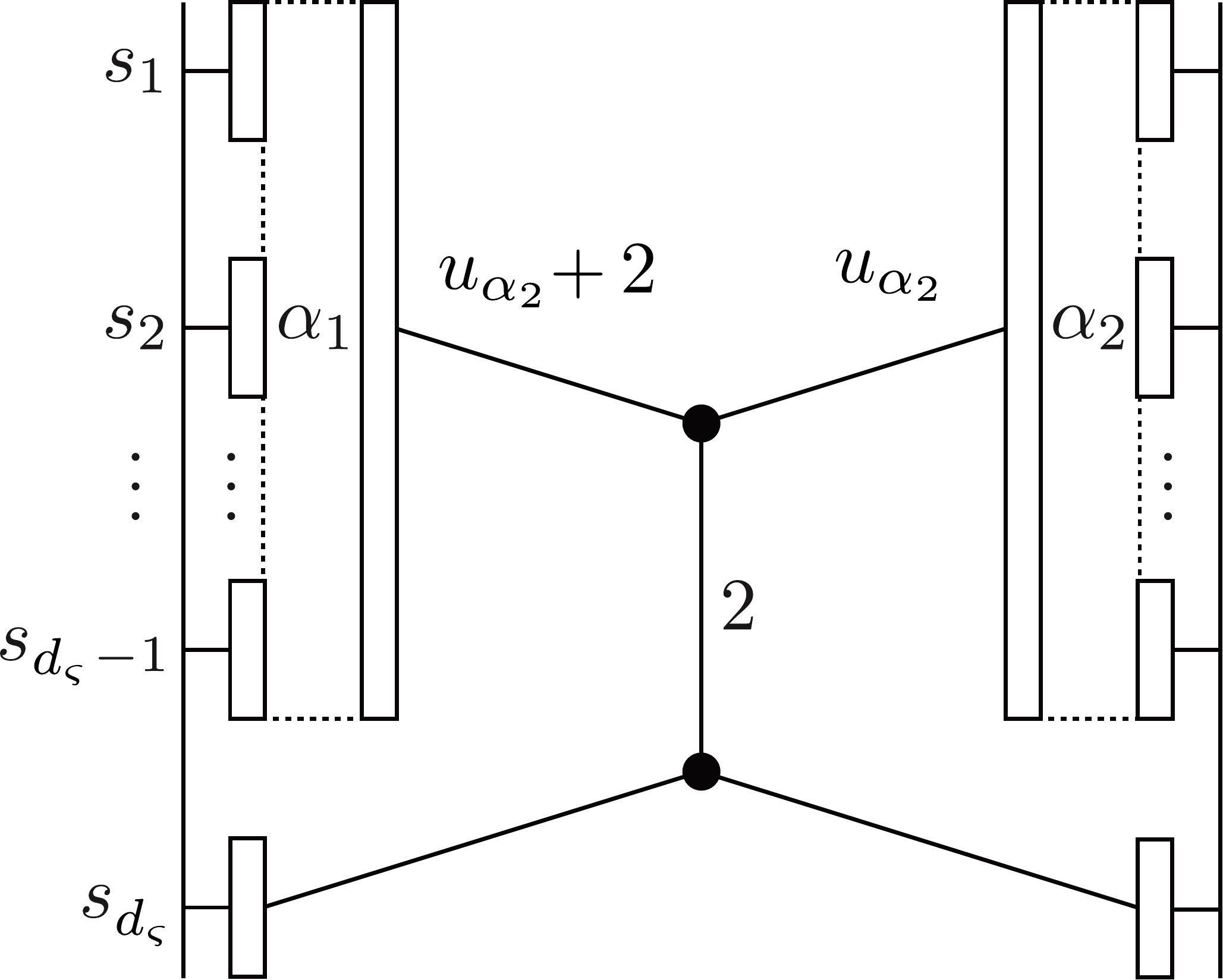}}} \quad
= \quad \vcenter{\hbox{\includegraphics[scale=0.275]{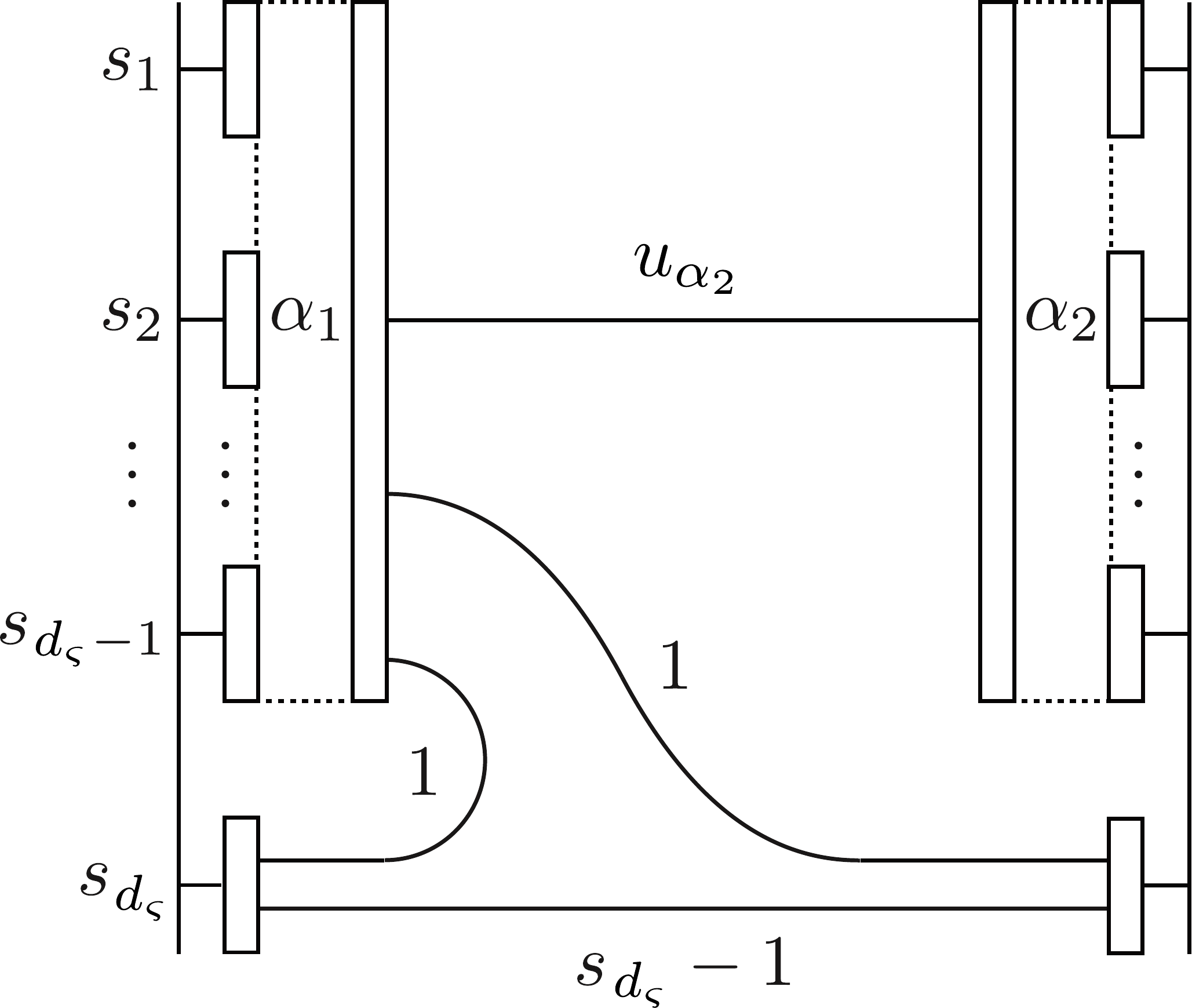} .}} 
\end{align} 
\end{cor}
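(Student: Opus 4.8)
The plan is to derive corollary~\ref{2stepCor} from lemma~\ref{2stepLem} by reflecting every tangle about a vertical axis, i.e.\ by applying the anti-involution $T \mapsto T^\dagger$ of~\eqref{DaggerRefl} introduced in section~\ref{CellSec}. First I would record the elementary observation that $\dagger$ fixes the generating collection $\mathsf{G}_\multii$~\eqref{GenSet2} elementwise. Indeed, the composite projector satisfies $\WJProj_\multii^\dagger = \WJProj_\multii$ (as noted after proposition~\ref{CellPropo}), and each Temperley-Lieb generator $\Gen_{\sIndex_1 + \dotsm + \sIndex_i}^{\TL}$ is symmetric about a vertical axis, hence $\dagger$-invariant; therefore $\big( \WJProj_\multii \Gen_{\sIndex_1 + \dotsm + \sIndex_i}^{\TL} \WJProj_\multii \big)^\dagger = \WJProj_\multii^\dagger \big( \Gen_{\sIndex_1 + \dotsm + \sIndex_i}^{\TL} \big)^\dagger \WJProj_\multii^\dagger = \WJProj_\multii \Gen_{\sIndex_1 + \dotsm + \sIndex_i}^{\TL} \WJProj_\multii$, and likewise $\mathbf{1}_{\WJ_\multii}^\dagger = \WJProj_\multii^\dagger = \mathbf{1}_{\WJ_\multii}$. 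Since $\dagger$ is an anti-automorphism of $\WJ_\multii(\nu)$, it sends any word $g_{i_1} g_{i_2} \dotsm g_{i_k}$ in the elements of $\mathsf{G}_\multii$ to the reversed word $g_{i_k}^\dagger \dotsm g_{i_2}^\dagger g_{i_1}^\dagger = g_{i_k} \dotsm g_{i_2} g_{i_1}$, again a word in $\mathsf{G}_\multii$. By $\bC$-linearity it follows that if a tangle $T$ is a polynomial in the elements of $\mathsf{G}_\multii$, then so is $T^\dagger$.

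Next I would identify the tangle~\eqref{2stepTangle2} of the corollary with the $\dagger$-image of a tangle already covered by lemma~\ref{2stepLem}. Reflection about a vertical axis is a $\bC$-linear bijection of $\PS_\lds$ onto itself that preserves the number of defects, so, writing $\tilde{\alpha}$ for the reflection of $\alpha \in \PS_\lds$, we have $\tilde{\alpha}_1, \tilde{\alpha}_2 \in \PS_\lds$ with $\Defect_{\tilde{\alpha}_j} = \Defect_{\alpha_j}$. Inspecting the diagrams, the tangle~\eqref{2stepTangle2} with link states $\alpha_1,\alpha_2$ is exactly $T^\dagger$, where $T$ denotes the tangle~\eqref{2stepTangle} of lemma~\ref{2stepLem} with its first and second link states taken to be $\tilde{\alpha}_2$ and $\tilde{\alpha}_1$, respectively. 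The corollary's hypothesis $\Defect_{\alpha_1} = \Defect_{\alpha_2} + 2$ translates into $\Defect_{\tilde{\alpha}_1} = \Defect_{\tilde{\alpha}_2} + 2$, which is precisely the hypothesis of lemma~\ref{2stepLem} for this choice; hence $T$ is a polynomial in the elements of $\mathsf{G}_\multii$. Applying the first paragraph, $T^\dagger$, which is the tangle~\eqref{2stepTangle2}, is then a polynomial in the elements of $\mathsf{G}_\multii$ as well, which is the assertion of the corollary.

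I expect the only point demanding care to be the purely diagrammatic bookkeeping in the second paragraph, namely verifying that reflecting the three-vertex and cable picture of~\eqref{2stepTangle} about a vertical axis, together with the relabeling $(\alpha_1,\alpha_2) \mapsto (\tilde{\alpha}_2,\tilde{\alpha}_1)$, reproduces the picture of~\eqref{2stepTangle2} verbatim. This is immediate from the left--right symmetry built into the Temperley-Lieb diagrammatics, but it should be spelled out so that the reduction to lemma~\ref{2stepLem} is unambiguous; beyond this, no further computation is needed, and in particular the proof of the corollary requires neither induction hypothesis~\ref{IndAss1} nor lemma~\ref{TProductLem} directly, only lemma~\ref{2stepLem} itself.
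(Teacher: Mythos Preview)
Your approach is the paper's: the proof there is the single sentence ``We obtain this result by vertically reflecting the tangles of lemma~\ref{2stepLem} and exchanging $\alpha_1$ and $\alpha_2$,'' which amounts precisely to your use of the anti-involution $\dagger$ together with the (correct) observation that $\mathsf{G}_\multii$ is fixed elementwise by $\dagger$.

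One point in your bookkeeping is not quite right, however. You assert that reflection about a vertical axis is a bijection of $\PS_{\lds}$ onto itself, but by~\eqref{Flip} the map $\alpha \mapsto \tilde{\alpha}$ sends $\PS_{\lds}$ to $\PS_{(\sIndex_{\np_\multii-1}, \ldots, \sIndex_2, \sIndex_1)}$, and these spaces coincide only when $\lds$ happens to be palindromic. The repair is that the tildes are simply unnecessary: in the diagrams~\eqref{2stepTangle} and~\eqref{2stepTangle2} the right-hand link state already appears in reflected form (this is the convention behind~\eqref{SandwichmapGen}), so applying $\dagger$ to the tangle~\eqref{2stepTangle} with link states $(\alpha_2,\alpha_1)$ yields exactly~\eqref{2stepTangle2} with link states $(\alpha_1,\alpha_2)$, with no further reflection of the $\alpha_j$ needed. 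The hypothesis of lemma~\ref{2stepLem} for the pair $(\alpha_2,\alpha_1)$ then reads $\Defect_{\alpha_1} = \Defect_{\alpha_2} + 2$, which is precisely the corollary's assumption, and the reduction goes through as you intend.
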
 

\begin{proof} 
We obtain this result by vertically reflecting the tangles of lemma~\ref{2stepLem} and exchanging $\alpha_1$ and $ \alpha_2$.
\end{proof}

Now we are ready to construct basis tangles of type $\PD1_\multii$~\eqref{InsertTwoBoxes} with $\Defect_{\alpha_1} = \Defect_{\alpha_2}$, 
and $v = 0$, and $r = w  = 1$, having any number of crossing links. 
The next lemma~\ref{SameSideLem} and corollary~\ref{SameSideCor} thus generalize lemmas~\ref{SameSideLem1} and~\ref{SameSideLem2}.

\begin{lem} \label{SameSideLem} 
Suppose $\Summed_\multii < \ppmin(q)$ and $\sIndex_{\np_\multii} \leq \sIndex_1$. 
If induction hypothesis~\ref{IndAss1} holds, then for all 
Jones-Wenzl link states $\alpha_1, \alpha_2 \in \PS_\lds$ such that
\begin{align} \label{SameSideLemDefectNumber}
\smin(\lds) + 2 \leq \, \Defect := \Defect_{\alpha_1} = \Defect_{\alpha_2} \, \leq \smax(\lds) - 2 ,
\end{align}
the following tangle is a polynomial in the elements of $\mathsf{G}_\multii$\textnormal{:}
\begin{align} \label{SameSideTangle} 
\vcenter{\hbox{\includegraphics[scale=0.275]{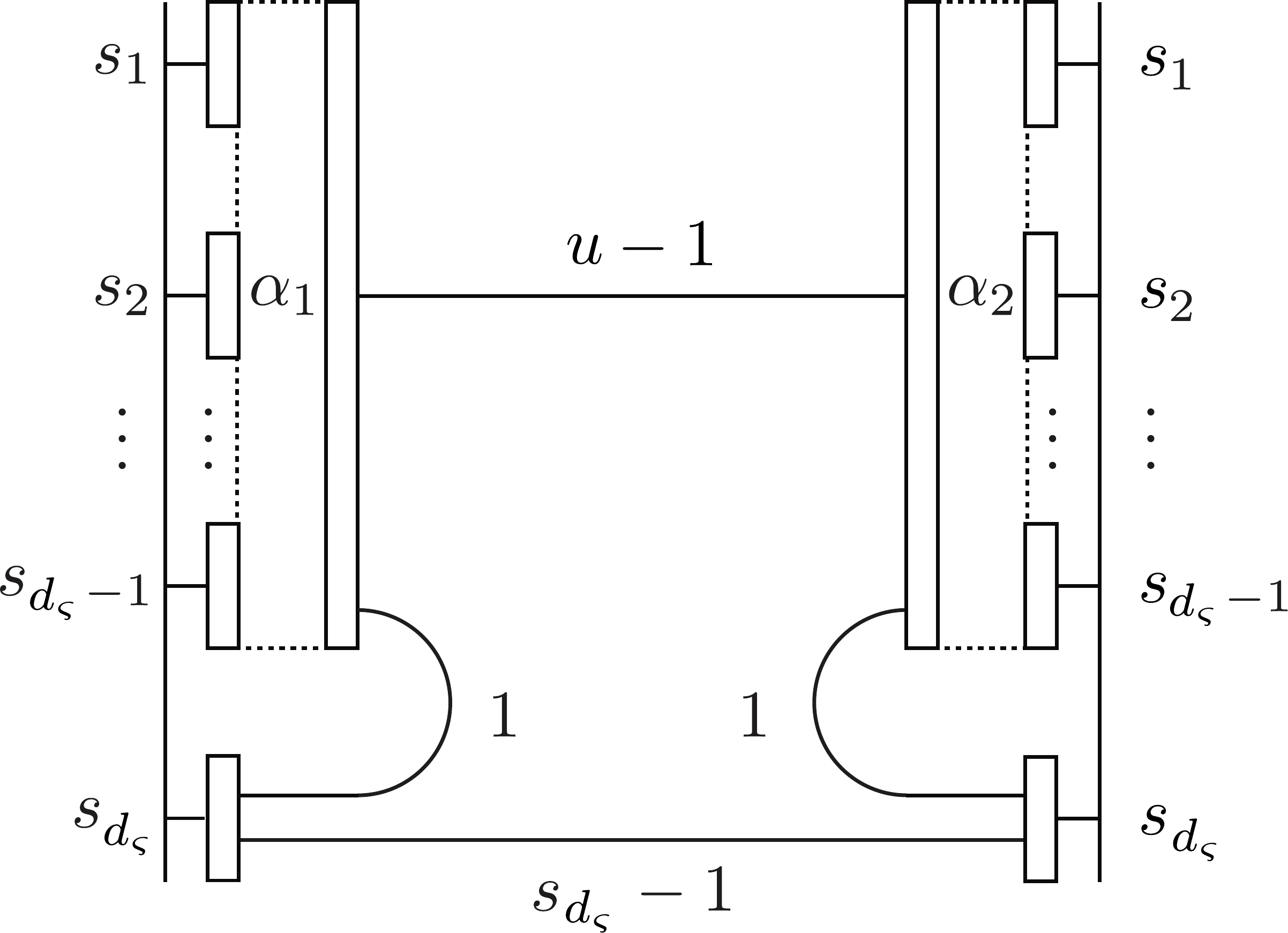} .}} 
\end{align} 
\end{lem}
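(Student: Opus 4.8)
The plan is to prove the lemma by a direct diagram computation, extending the argument of lemma~\ref{SameSideLem2} from the minimal defect count $\smin(\smash{\lds})$ to the general value $\Defect$. By bilinearity we may assume $\alpha_1,\alpha_2 \in \PP_\lds$ are link patterns. As in lemma~\ref{SameSideLem2}, the tangle in~\eqref{SameSideTangle} will be produced from a product~\eqref{TProduct} of three tangles of the type treated in lemma~\ref{TProductLem} --- two outer ones separating off the first projector (of size $\sIndex_1$) and the last projector (of size $t := \sIndex_{\np_\multii}$), and a middle one lying in $\WJ_\fds(\nu)$ --- so that induction hypothesis~\ref{IndAss1} will make each factor a polynomial in the elements of $\mathsf{G}_\lds$ or $\mathsf{G}_\fds$, hence of $\mathsf{G}_\multii = \mathsf{G}_\lds \cup \mathsf{G}_\fds$.

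First we fix the auxiliary data for the product~\eqref{TProduct}: using item~\ref{PaRittt3} of lemma~\ref{ParityLem} (and lemma~\ref{MinMaxLem} where needed) we pick $\Defect_1 = \Defect_2 =: \Defect' \in \DefectSet\sub{\alpha_1,\sIndex_1} \cap \DefectSet\sub{\alpha_2,\sIndex_1} \cap \DefectSet_\flds$, link states $\beta_1,\beta_2 \in \smash{\PP_\flds\super{\Defect'}}$ with their duals $\gamma_i = \beta_i\superscr{\cheque}$ (which exist by theorem~\ref{BigSSTHM} and~\eqref{DualLS}, since $\smax(\smash{\flds}) \leq \Summed_\multii < \ppmin(q)$) so that $\BiForm{\beta_i}{\gamma_i} = 1$, and $i = 2 \in \DefectSet\sub{\Defect',\Defect'} \cap \DefectSet\sub{t,t}$. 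The hypotheses $\smin(\smash{\lds}) + 2 \leq \Defect \leq \smax(\smash{\lds}) - 2$ and $\sIndex_{\np_\multii} \leq \sIndex_1$ are exactly what make such a choice possible: they keep the auxiliary defect counts inside $\DefectSet_\lds$ and $\DefectSet_\flds$ and ensure that the relevant three-vertices and the size-$2$ box across the last $t$ strands exist. Evaluating the product~\eqref{TProduct} by lemma~\ref{TProductLem} and collapsing the matching sandwiches with item~\ref{ExtractLemItem} of lemma~\ref{CollectionLem} reduces it to a single $\WJ_\multii(\nu)$-tangle of the form on the right of~\eqref{DownTangle}.

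We then expand that tangle over the basis $\PD3_\multii$~\eqref{InsertOneBox} and determine the coefficients as in lemma~\ref{SameSideLem2}: insert both sides into the dual tangle~\eqref{DualDiagram}, close all links into loops, delete the loops via items~\ref{ExtractLemItem} and~\ref{LoopErasureLemItem} of lemma~\ref{CollectionLem}, and evaluate the resulting $\TetraNet$ and $\ThetaNet$ networks with lemma~\ref{NetEvalLem} and lemma~\ref{ThetaLem}. We expect exactly one term --- the one of box-size $j = 2$ --- to survive, with coefficient a ratio of quantum-integer factorials whose arguments are all bounded by $\smax(\smash{\lds}) \overset{\eqref{hatsMax}}{=} \Summed_\multii - \sIndex_{\np_\multii} < \ppmin(q)$, hence finite and nonzero. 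Solving the resulting identity for the tangle in~\eqref{SameSideTangle} then expresses it as a linear combination of: the product~\eqref{TProduct}, which is a polynomial in $\mathsf{G}_\multii$ by the previous paragraph; and one further basis tangle in which the last projector is passive (it acts as the identity on its $t$ strands), so that it lies in $\WJ_\lds(\nu) \subset \WJ_\multii(\nu)$ and is a polynomial in $\mathsf{G}_\lds \subseteq \mathsf{G}_\multii$ by induction hypothesis~\ref{IndAss1}. This proves the lemma.

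The main obstacle is the second step: pinning down which $\PD3_\multii$-expansion term carries the target tangle~\eqref{SameSideTangle} and checking that its coefficient neither vanishes nor diverges. This requires a judicious choice of $\beta_1,\beta_2$ so that the admissibility indicator $\one\{\cdots\}$ in the coefficient formula of lemma~\ref{TProductLem} equals $1$, together with the same quantum-integer bookkeeping already carried out in lemmas~\ref{SameSideLem2} and~\ref{2stepLem}; the bounds $\smin(\smash{\lds})+2 \le \Defect \le \smax(\smash{\lds})-2$ and $\sIndex_{\np_\multii} \le \sIndex_1$ are precisely what keeps every index occurring in the $\TetraNet$ and $\ThetaNet$ evaluations inside $\{0,1,\ldots,\ppmin(q)-1\}$.
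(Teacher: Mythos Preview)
Your proposal takes a different route from the paper and has a genuine gap at the key step.

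The paper does \emph{not} extend the argument of lemma~\ref{SameSideLem2} to general $\Defect$. Instead it leverages lemma~\ref{2stepLem} and corollary~\ref{2stepCor} (already proven), which produce tangles with $\Defect_{\alpha_2}=\Defect_{\alpha_1}\pm 2$. It multiplies two such tangles --- one pair using an auxiliary link pattern at defect level $\Defect-2$, another pair at level $\Defect+2$ --- and decomposes the middle projector boxes via proposition~\ref{SpecialTProp}. This yields two linear equations in the target tangle~\eqref{SameSideTangle} and one other unknown tangle; solving the $2\times 2$ system isolates the target. The determinant of the system is simplified with identity~\eqref{QintID} to $\frac{[\Defect][2\sIndex_{\np_\multii}+2]}{[\Defect+2][\sIndex_{\np_\multii}][\sIndex_{\np_\multii}+1]}$, and it is precisely the hypothesis $\sIndex_{\np_\multii}\le \sIndex_1$ (together with~\eqref{SameSideLemDefectNumber}) that forces $2\sIndex_{\np_\multii}+2<\ppmin(q)$ and hence nonvanishing.

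The gap in your approach is the nonvanishing of the coefficient $c_2$. Applying lemma~\ref{TProductLem} with $\Defect_1=\Defect_2=\Defect'$ and $i=2$ gives, by~\eqref{TProductCoeff},
\[
c_2 \;=\; \TetraNet\!\left[\begin{array}{ccc}\Defect' & \Defect & 2 \\ \Defect & \Defect' & \sIndex_1\end{array}\right]\Big/\ThetaNet(2,\Defect,\Defect).
\]
Item~\ref{NetworkIt3} of lemma~\ref{NetEvalLem} only evaluates $\TetraNet[A,B,2;B{+}2,A,F]$, which is exactly the asymmetric case $\Defect_{\alpha_2}=\Defect_{\alpha_1}+2$ exploited in lemma~\ref{2stepLem}; your symmetric case $D=B$ is not covered. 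More importantly, your claim that the coefficient is ``a ratio of quantum-integer factorials whose arguments are all bounded'' is not correct for Tetrahedral networks: unlike $\ThetaNet$ (lemma~\ref{ThetaLem}), the general $\TetraNet$ is an alternating \emph{sum} of such ratios and can vanish even when every $3$-vertex is admissible and every label is below $\ppmin(q)$. You have not shown that some admissible choice of $\Defect'\in\DefectSet\sub{\Defect,\sIndex_1}\cap\DefectSet_{\flds}$ makes this Tet nonzero, and the paper provides no tool for this. In lemma~\ref{SameSideLem2} this problem did not arise because $\Defect=\smin(\smash{\lds})$ forces all defects onto a single box (lemma~\ref{DefectLem}), which collapses the relevant network to a $\ThetaNet$ or the special form~\eqref{LoopBox}; that structural simplification is lost for general $\Defect$.
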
 

\begin{proof}
For later use, we note that if $\np_\multii = 3$, then we must have $\sIndex_2 > 1$;
otherwise, we have $\smash{\lds} = (\sIndex_1, \sIndex_2) = (\sIndex_1,1)$ and
\begin{align} 
\Defect:= \Defect_{\alpha_1} = \Defect_{\alpha_2} \in \DefectSet_\lds \qquad 
\text{where $\DefectSet_\lds = \DefectSet\sub{\sIndex_1,1} \overset{\eqref{SpecialDefSet}}{=} \{\sIndex_1\pm1\},$} 
\end{align} 
so either $\Defect = \sIndex_1 - 1 = \smin(\smash{\lds})$, or $\Defect = \sIndex_1 + 1 = \smax(\smash{\lds})$, 
contradicting condition~\eqref{SameSideLemDefectNumber} on $\Defect$. 
We also observe that this condition guarantees that
\begin{align} 
\Defect \pm 2 \in \DefectSet_\lds \qquad \Longrightarrow \qquad \smash{\PP_\lds\super{\Defect\pm2}} \neq \emptyset, 
\end{align} 
which we frequently use in this proof.  As usual, by linearity, we assume that $\alpha_1$ and $\alpha_2$ are Jones-Wenzl link patterns.

First, we take the tangle~\eqref{2stepTangle} of lemma~\ref{2stepLem}, $\alpha_1$ in~\eqref{2stepTangle} being 
the link pattern consisting of $\Defect - 2$ defects, and its vertical reflection with the replacement $\alpha_2 \mapsto \alpha_1$.
Multiplying them, with the latter on the left, we obtain 
\begin{align} \label{Mirror1.5} 
\vcenter{\hbox{\includegraphics[scale=0.275]{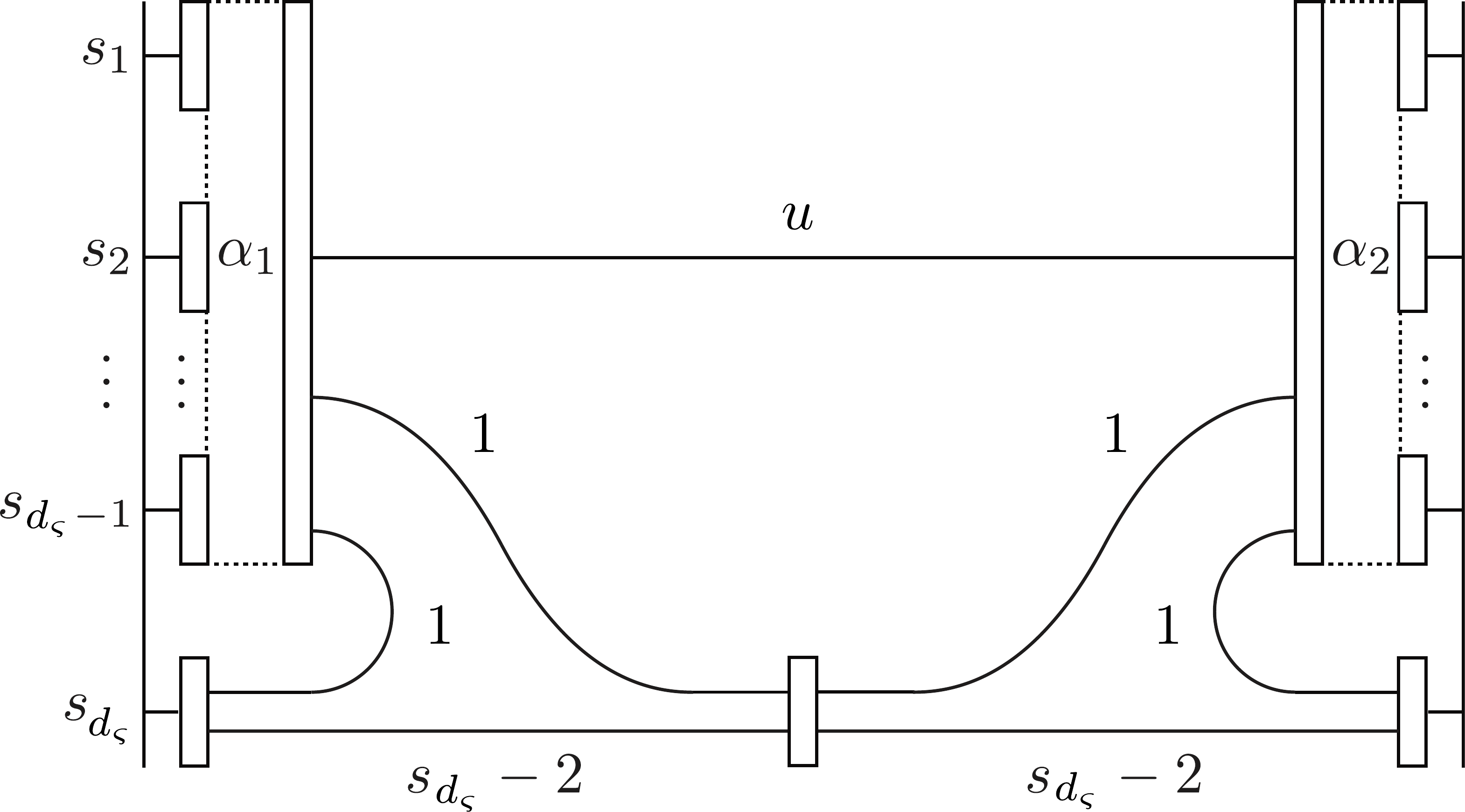} .}} 
\end{align} 
Next, we decompose the middle projector box of this result.  
By property~\eqref{ProjectorID2}, only two tangles in this decomposition, each of the form~\eqref{SpecialTDiag} 
with $j = k = 0$ and $i \in \{0,1\}$, give nonzero terms.  Using~\eqref{SpecialT}, we see that~\eqref{Mirror1.5} equals
\begin{align} \label{Sys1}
\vcenter{\hbox{\includegraphics[scale=0.275]{e-Generators134General.pdf}}} 
\; + \quad \frac{[\sIndex_{\np_\multii}-1]}{[\sIndex_{\np_\multii}]} \,\, \times \,\,
\vcenter{\hbox{\includegraphics[scale=0.275]{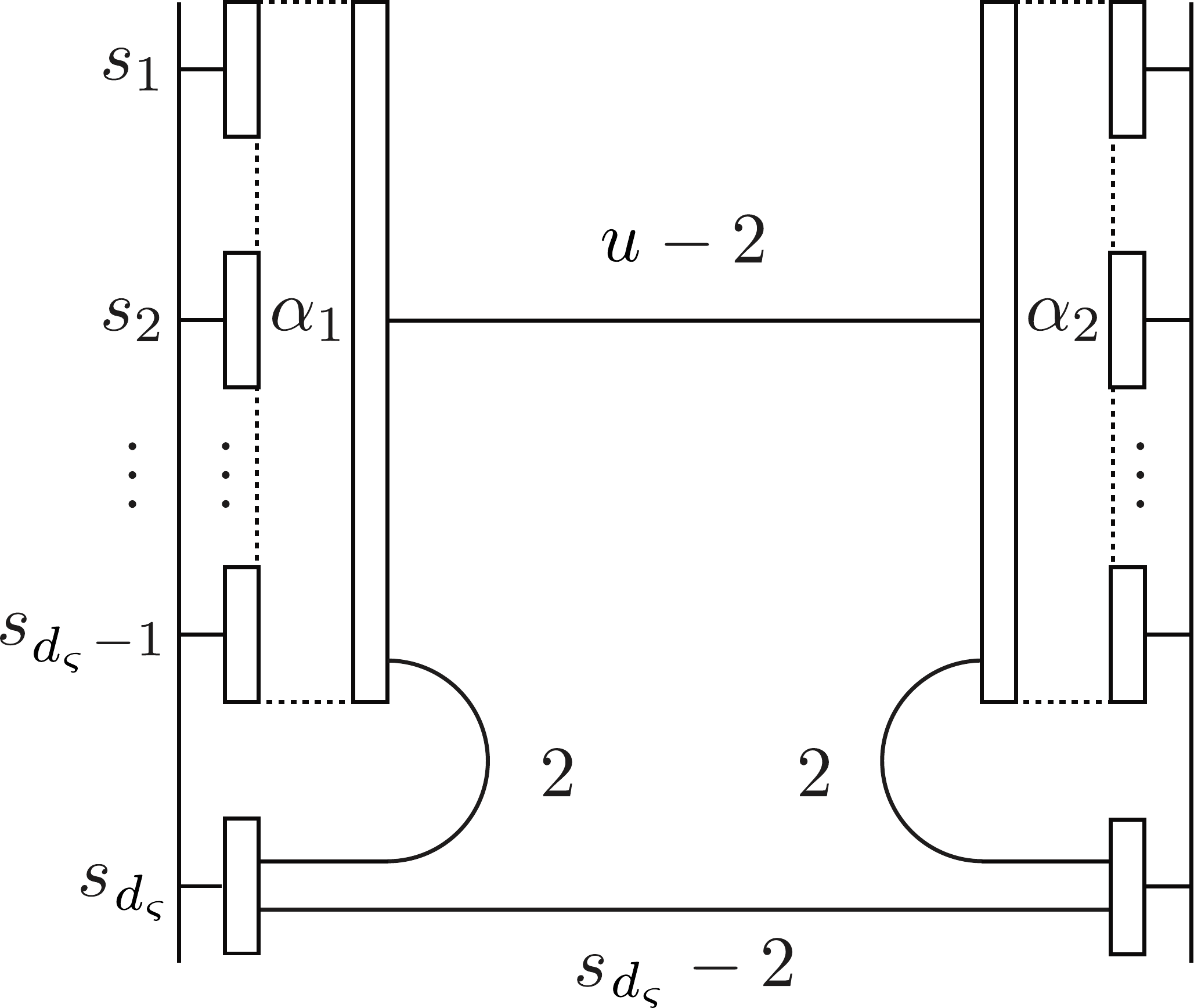} .}} 
\end{align}

Second, we take the tangle~\eqref{2stepTangle} of lemma~\ref{2stepLem}, $\alpha_2$ in~\eqref{2stepTangle} being 
the link pattern consisting of $\Defect + 2$ defects, and its vertical reflection with the replacement $\alpha_1 \mapsto \alpha_2$.
Multiplying them, now with the latter on the right, we obtain
\begin{align} 
\label{Mirror2NoAlpha} 
\vcenter{\hbox{\includegraphics[scale=0.275]{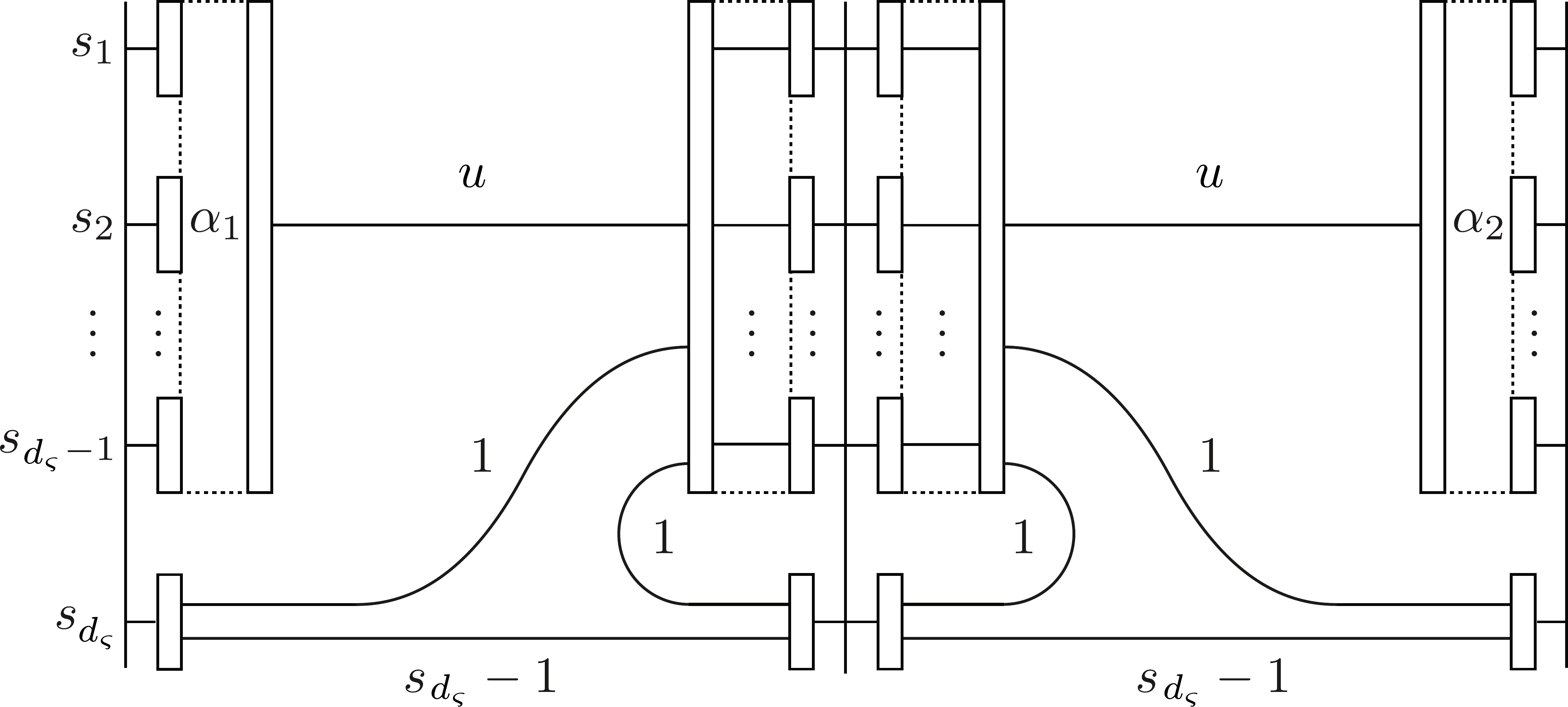} .}}
\end{align} 
Again, we decompose the bottom-middle projector box of this product.  As before, we have exactly two nonzero terms: 
\begin{align} 
\label{MirrorDec1} 
& \vcenter{\hbox{\includegraphics[scale=0.275]{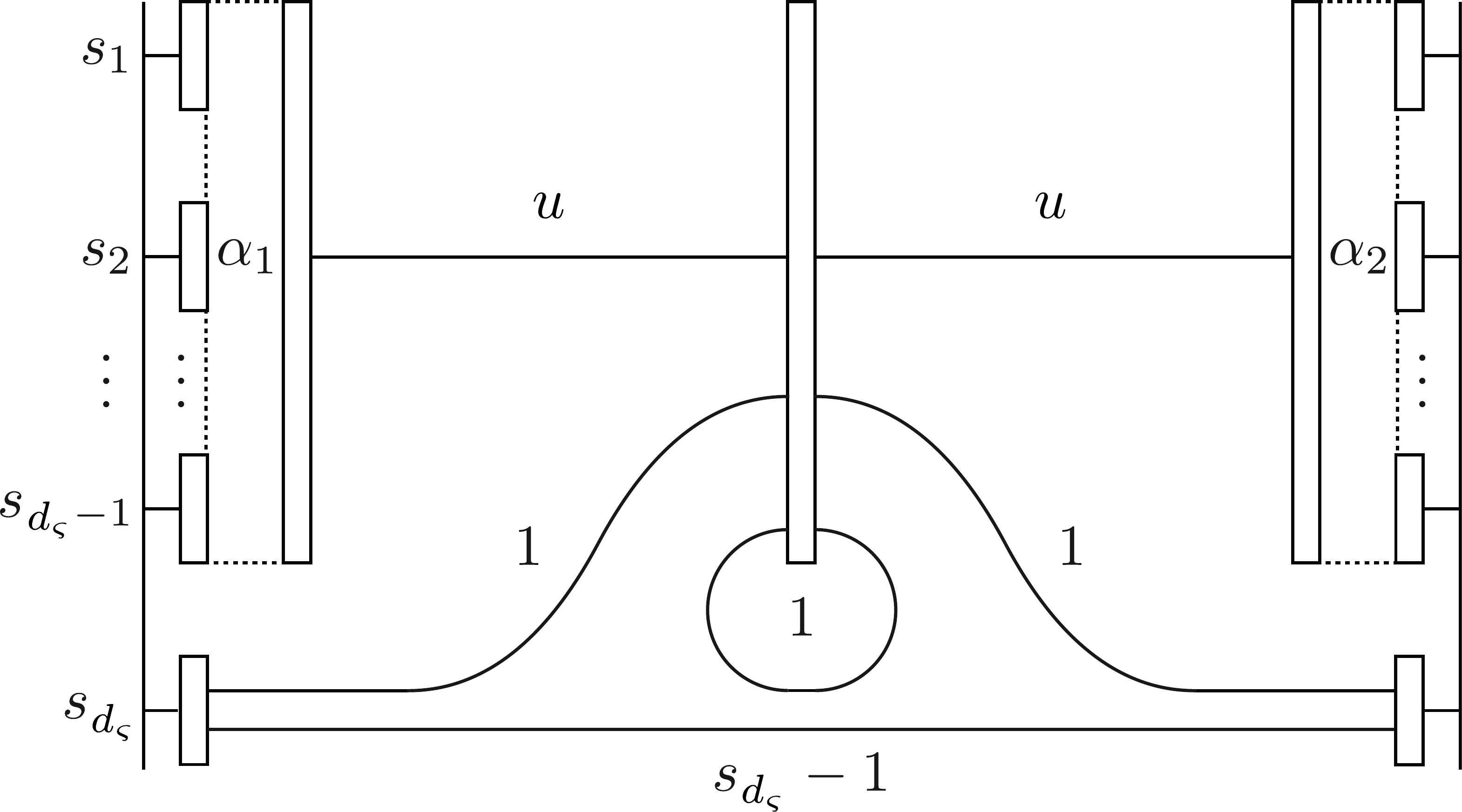}}} \\[1em]
\label{MirrorDec2} 
+ \quad \frac{[\sIndex_{\np_\multii}-1]}{[\sIndex_{\np_\multii}]} \,\, \times \,\, & \vcenter{\hbox{\includegraphics[scale=0.275]{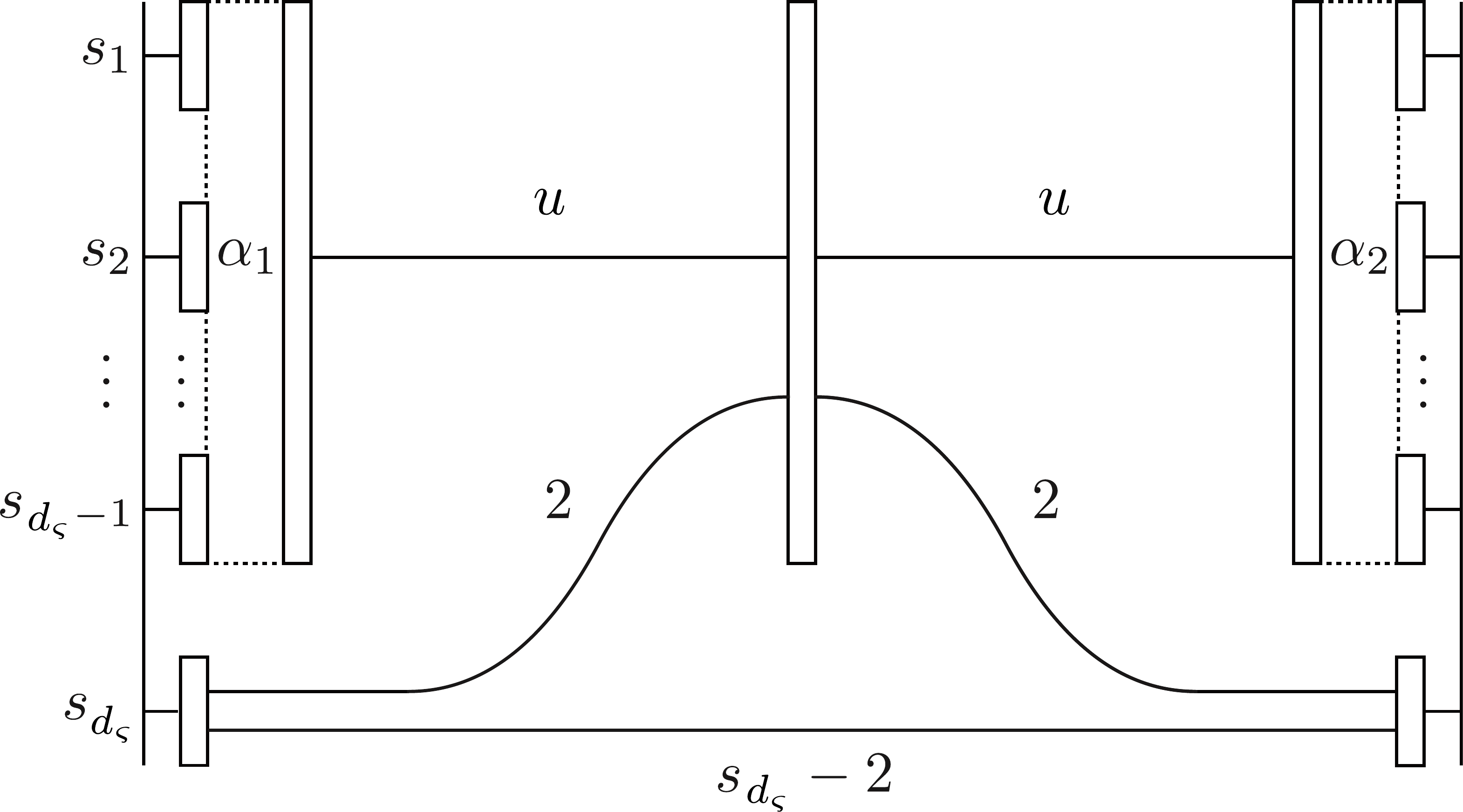} .}} 
\end{align}
Next, we use identity~\eqref{DeltaTangleGen} from 
lemma~\ref{CollectionLem}
to remove the loop passing through the middle projector box of~\eqref{MirrorDec1}, and we decompose that 
box. Once again, there are only two nonvanishing terms.  Using~\eqref{SpecialT}, we find that
\begin{alignat}{2} \label{strt1} 
\eqref{MirrorDec1} \quad
\overset{\eqref{SpecialT}}{=} \; & \quad - \frac{[\Defect+3]}{[\Defect+2]} \,\, \times \,\, && 
\vcenter{\hbox{\includegraphics[scale=0.275]{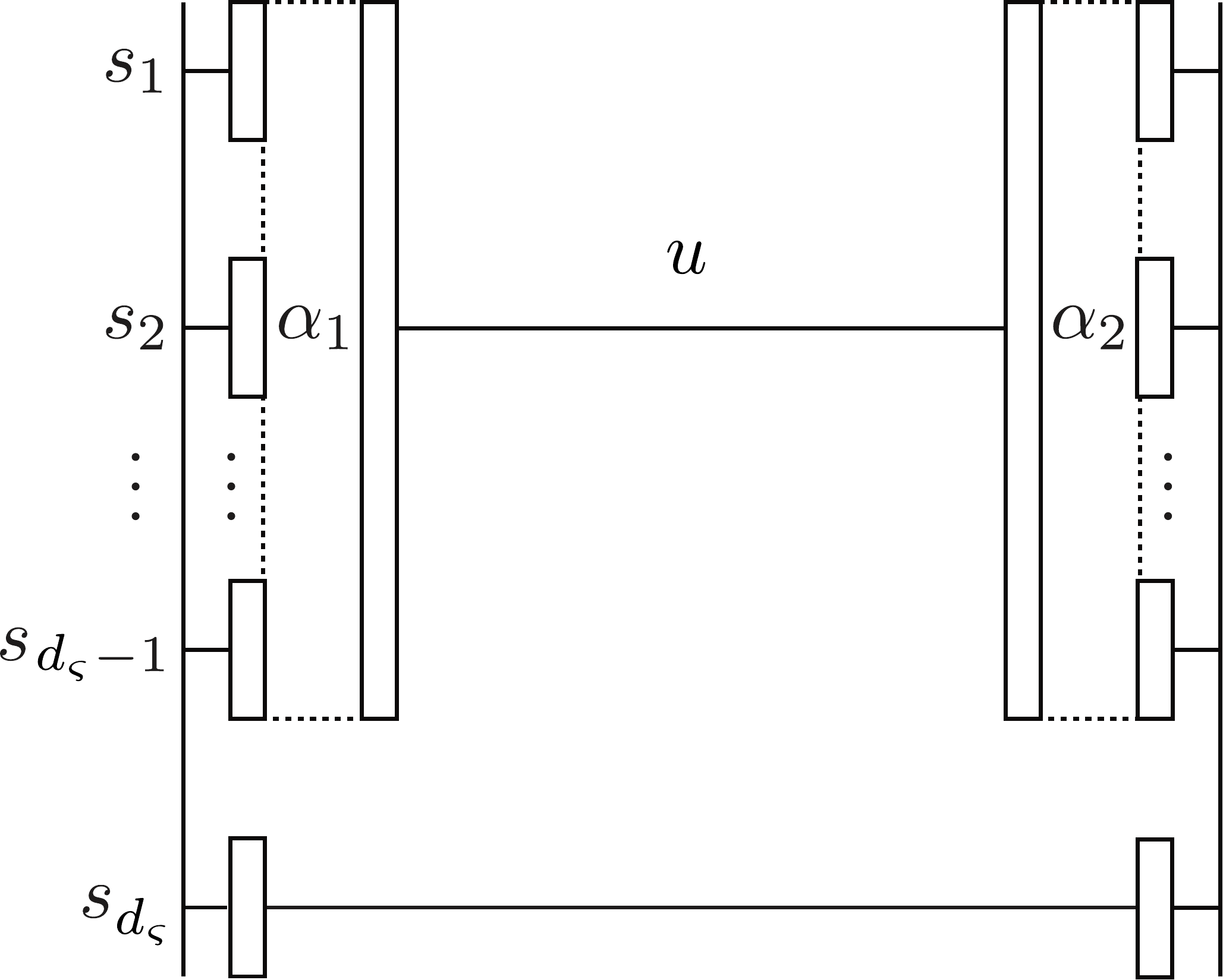}}} \\[1em] 
\label{end1} 
\; &  \; -\frac{[\Defect+3][\Defect]}{[\Defect+2][\Defect+1]} \,\, \times \,\, 
&& \vcenter{\hbox{\includegraphics[scale=0.275]{e-Generators134General.pdf} .}} 
\end{alignat}
Then, we decompose the middle projector box in the tangle of~\eqref{MirrorDec2}, obtaining exactly three nonzero terms:
\begin{alignat}{2}
\label{strt2}
\eqref{MirrorDec2} \quad
\overset{\eqref{SpecialT}}{=} \; & \quad \quad \quad \frac{[\sIndex_{\np_\multii}-1]}{[\sIndex_{\np_\multii}]} \,\, \times \,\, && \vcenter{\hbox{\includegraphics[scale=0.275]{e-Generators140General.pdf}}} \\[1em] 
 \; &  \; + \quad \frac{[\sIndex_{\np_\multii}-1][\Defect][2]}{[\sIndex_{\np_\multii}][\Defect+2]} \,\, \times \,\, && \vcenter{\hbox{\includegraphics[scale=0.275]{e-Generators134General.pdf}}} \\[1em] 
\label{end2} 
 \; &  \; + \quad \frac{[\sIndex_{\np_\multii}-1][\Defect][\Defect-1]}{[\Defect+2][\Defect+1]} \,\, \times \,\, && \vcenter{\hbox{\includegraphics[scale=0.275]{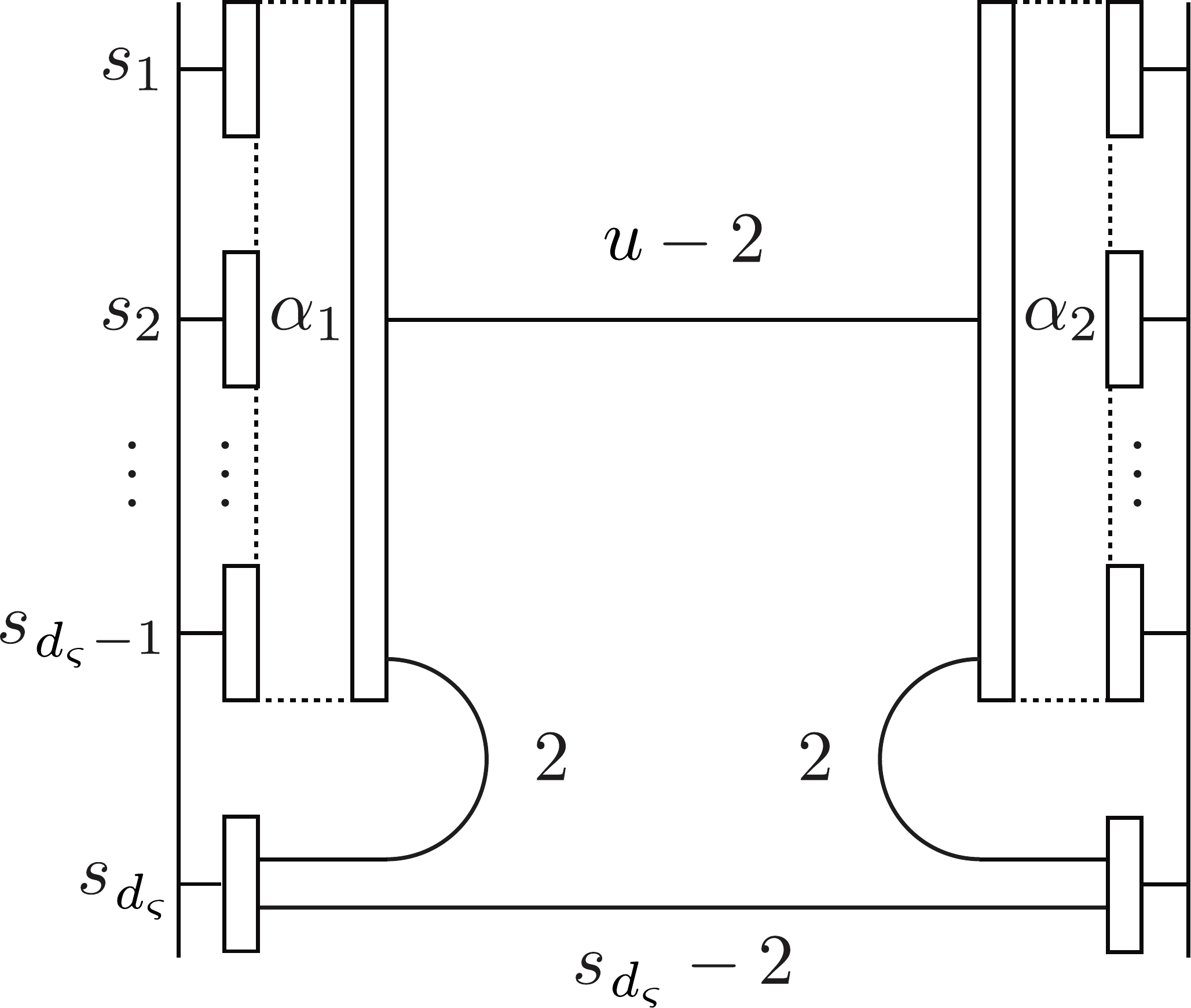} .}}
\end{alignat}
After replacing~\eqref{MirrorDec1} 
by (\ref{strt1},~\ref{end1}) and 
\eqref{MirrorDec2} by (\ref{strt2}--\ref{end2}), we obtain 
\begin{alignat}{2} 
\label{Sys2} 
\eqref{Mirror2NoAlpha} \quad 
= \; & \quad \quad \left(\frac{[\sIndex_{\np_\multii}-1]}{[\sIndex_{\np_\multii}]} - \frac{[\Defect+3]}{[\Defect+2]} \right) \,\, \times \,\,
&& \vcenter{\hbox{\includegraphics[scale=0.275]{e-Generators140General.pdf}}} \\[1em]
\label{Sys3} 
+ \; & \left( \frac{[\sIndex_{\np_\multii}-1][2][\Defect]}{[\sIndex_{\np_\multii}][\Defect+2]} - \frac{[\Defect+3][\Defect]}{[\Defect+2][\Defect+1]} \right) \,\, \times \,\,
&& \vcenter{\hbox{\includegraphics[scale=0.275]{e-Generators134General.pdf}}} \\[1em]
\label{Sys4} 
\; & + \quad \quad \quad \frac{[\sIndex_{\np_\multii}-1][\Defect][\Defect-1]}{[\sIndex_{\np_\multii}][\Defect+2][\Defect+1]} \,\, \times \,\,
&& \vcenter{\hbox{\includegraphics[scale=0.275]{e-Generators146.pdf} .}}
\end{alignat}

Now, to finish, we solve the tangle in~\eqref{Sys3} from the two equations~\eqref{Sys1} and (\ref{Sys2}--\ref{Sys4}):
\begin{align}
\label{Solved} 
\; & \bigg(  \frac{[\Defect][\Defect-1]}{[\Defect+2][\Defect+1]} 
+ \frac{[\Defect+3][\Defect]}{[\Defect+2][\Defect+1]} - \frac{[\sIndex_{\np_\multii}-1][2][\Defect]}{[\sIndex_{\np_\multii}][\Defect+2]} \bigg) \,\, \times \,\,
\vcenter{\hbox{\includegraphics[scale=0.275]{e-Generators134General.pdf}}} 
\end{align} 
\begin{alignat}{2}
= \; & \quad \left(\frac{[\sIndex_{\np_\multii}-1]}{[\sIndex_{\np_\multii}]} - \frac{[\Defect+3]}{[\Defect+2]}\right) \,\, \times \,\,
&&\vcenter{\hbox{\includegraphics[scale=0.275]{e-Generators140General.pdf}}} \\[1em]
\; & + \quad \frac{[\Defect][\Defect-1]}{[\Defect+2][\Defect+1]} \,\, \times \,\,
&& \vcenter{\hbox{\includegraphics[scale=0.275]{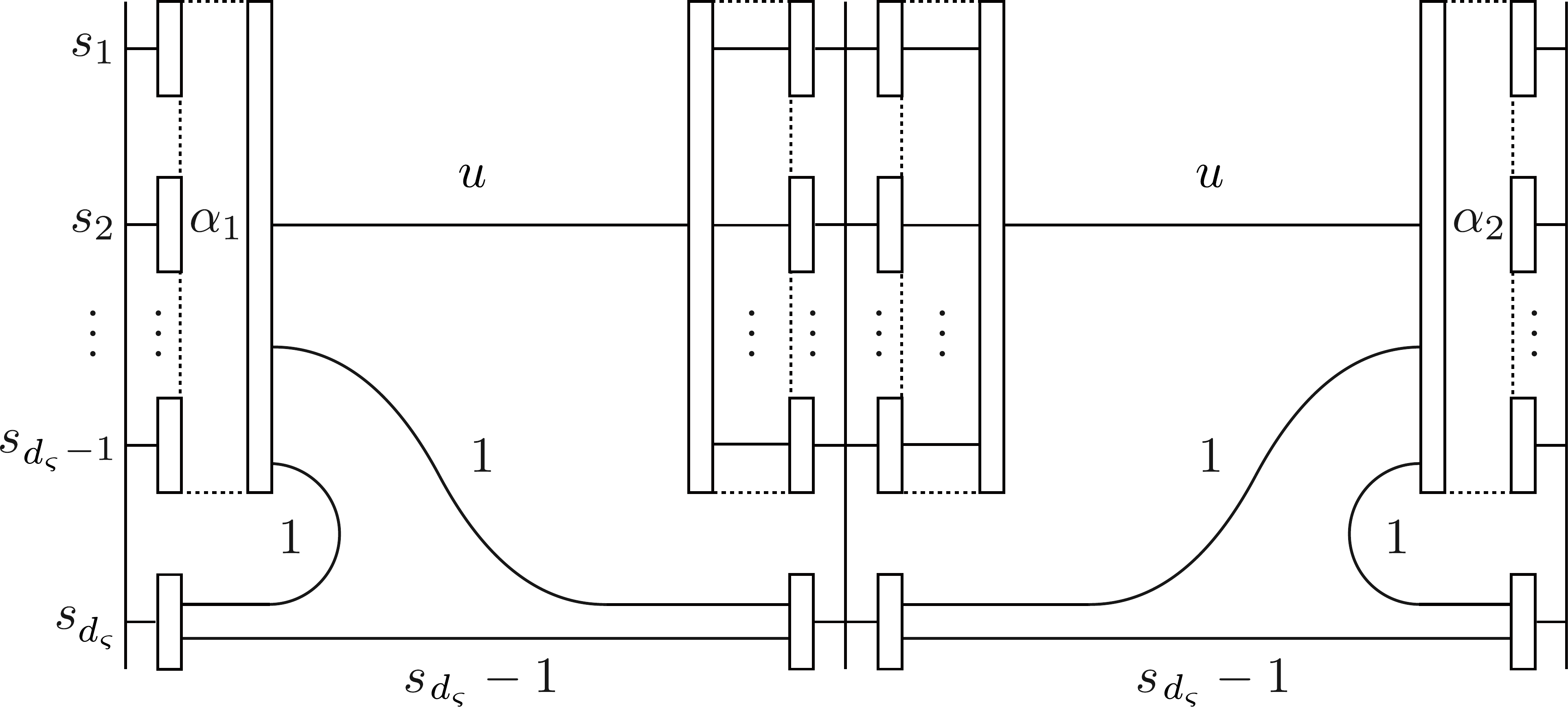}}} \\[1em]
\; & \quad \quad \quad \quad - 1 \,\, \times \,\, && \vcenter{\hbox{\includegraphics[scale=0.275]{e-Generators148.pdf} .}} 
\end{alignat} 
By induction hypothesis~\ref{IndAss1}, lemma~\ref{2stepLem}, and corollary~\ref{2stepCor}, all tangles on the right side are polynomials 
in the elements of $\mathsf{G}_\multii$.  
Using identity~\eqref{QintID} from lemma~\ref{CollectionLem},
we simplify the coefficient on the left side of~\eqref{Solved} to
\begin{align} \label{CoefSimpl} 
\frac{[\Defect][\Defect-1]}{[\Defect+2][\Defect+1]} + \frac{[\Defect+3][\Defect]}{[\Defect+2][\Defect+1]} 
- \frac{[\sIndex_{\np_\multii}-1][2][\Defect]}{[\sIndex_{\np_\multii}][\Defect+2]} 
\overset{\eqref{QintID}}{=} \frac{[\Defect][2 \sIndex_{\np_\multii} + 2]}{[\Defect+2][\sIndex_{\np_\multii}][\sIndex_{\np_\multii}+1]} . 
\end{align} 
Finally, provided that this coefficient does not vanish, we may
conclude that the tangle on the left side of~\eqref{Solved} is indeed
a polynomial in the elements of $\mathsf{G}_\multii$ too, thus proving the lemma.  
To show that this is the case, we observe that
\begin{align} \label{MidIne} 
\begin{cases}  
\text{$\sIndex_j \geq 1$ for all $j = 1,2,\ldots,\np_\multii$,} \\
\text{$\sIndex_2 > 1$ if $\np_\multii = 3$ by the first paragraph of this proof,}
\end{cases}  
\qquad \Longrightarrow \qquad \sIndex_1 + \sIndex_{\np_\multii} + 2 \leq \Summed_\multii < \ppmin(q) . 
\end{align} 
Therefore, using the assumption $\sIndex_{\np_\multii} \leq \sIndex_1$ of the lemma, we see that
\begin{align} 
2\sIndex_{\np_\multii} + 2 \leq \sIndex_1 + \sIndex_{\np_\multii} + 2 &\overset{\eqref{MidIne}}{<} \ppmin(q) 
\qquad \Longrightarrow \qquad [2 \sIndex_{\np_\multii} + 2] \neq 0 
\quad \text{and} \quad [\sIndex_{\np_\multii}], [\sIndex_{\np_\multii}+1] \neq 0 .
\end{align}
Also, assumption~\eqref{SameSideLemDefectNumber} shows that 
\begin{align}
\Defect \leq \smax(\lds) - 2 = \Summed_\multii - \sIndex_{\np_\multii} - 2 < \ppmin(q) - 3 
\qquad \Longrightarrow \qquad [\Defect] \neq 0 \quad \text{and} \quad [\Defect+2] \neq 0 .
\end{align}
We conclude that the coefficient~\eqref{CoefSimpl} indeed does not vanish (nor blow up).  This proves the lemma.
\end{proof}

\begin{remark} \label{FlipRemark}
The assumption $\sIndex_{\np_\multii} \leq \sIndex_1$ in 
lemma~\ref{SameSideLem} is not restrictive.  Indeed, if $\sIndex_1 \leq \sIndex_{\np_\multii}$ instead, then we 
may horizontally flip all tangles and repeat our work with $\smash{\lds} \mapsto \smash{\fds}$ and $\sIndex_{\np_\multii} \mapsto \sIndex_1$ for the flipped tangles. 
\end{remark}

\begin{cor} \label{SameSideCor} 
Suppose $\Summed_\multii < \ppmin(q)$. 
If induction hypothesis~\ref{IndAss1} holds, then for all 
Jones-Wenzl link states $\alpha_1, \alpha_2 \in \PS_\lds$,
the following tangle is a polynomial in the elements of $\mathsf{G}_\multii$\textnormal{:}
\begin{align} 
\vcenter{\hbox{\includegraphics[scale=0.275]{e-Generators134General.pdf} .}} 
\end{align} 
\end{cor}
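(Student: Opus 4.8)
The plan is to deduce Corollary~\ref{SameSideCor} from the three results already established for this family of tangles --- Lemma~\ref{SameSideLem1} (the extreme case $\Defect = \smax(\lds)$), Lemma~\ref{SameSideLem2} (the extreme case $\Defect = \smin(\lds)$), and Lemma~\ref{SameSideLem} (all intermediate values of $\Defect$) --- by a short case analysis on the common defect number. First I would use bilinearity in $\alpha_1$ and $\alpha_2$ of the tangle in the statement to reduce to the case where $\alpha_1, \alpha_2 \in \PP_\lds$ are Jones-Wenzl link patterns, exactly as in the proofs of Lemmas~\ref{SameSideLem1}--\ref{SameSideLem}. This tangle is drawn so that the two link patterns are joined through one common cable of crossing links; hence it vanishes (acquiring a turn-back link at one of the $\lds$-projector boxes) unless $\Defect_{\alpha_1} = \Defect_{\alpha_2}$, and in that degenerate situation there is nothing to prove. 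So I may set $\Defect := \Defect_{\alpha_1} = \Defect_{\alpha_2}$, which by~\eqref{DefSet2} lies in $\DefectSet_\lds = \{\smin(\lds), \smin(\lds)+2, \ldots, \smax(\lds)\}$.

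Next I would distinguish three cases according to the position of $\Defect$ in $\DefectSet_\lds$. If $\Defect = \smax(\lds) = \Summed_\lds$, then every node of the two $\lds$-projector boxes carrying $\alpha_1$ and $\alpha_2$ must be a defect, so $\alpha_1$ and $\alpha_2$ are both the unique link pattern with no internal links, and the tangle is precisely the tangle~\eqref{SameSideTangle1} of Lemma~\ref{SameSideLem1}; it is therefore a polynomial in the elements of $\mathsf{G}_\multii$. If $\Defect = \smin(\lds)$, the tangle is precisely the tangle~\eqref{SameSideTangle2} of Lemma~\ref{SameSideLem2}, and the claim follows from that lemma. Finally, if $\smin(\lds) < \Defect < \smax(\lds)$ --- equivalently $\smin(\lds)+2 \le \Defect \le \smax(\lds)-2$ by parity --- then when $\sIndex_{\np_\multii} \le \sIndex_1$ the claim is exactly Lemma~\ref{SameSideLem}, and when $\sIndex_1 < \sIndex_{\np_\multii}$ I would invoke Remark~\ref{FlipRemark}: the hypothesis $\sIndex_{\np_\multii} \le \sIndex_1$ of Lemma~\ref{SameSideLem} is removable by horizontally reflecting every tangle in that proof and repeating the argument with $\lds \leftrightarrow \fds$ and $\sIndex_{\np_\multii} \leftrightarrow \sIndex_1$, the generating set $\mathsf{G}_\multii$ being unaffected up to this relabeling. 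In all cases the tangle is a polynomial in $\mathsf{G}_\multii$, which is the assertion.

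Because the substantive diagrammatic computations have all been carried out in Lemmas~\ref{SameSideLem1}, \ref{SameSideLem2}, and~\ref{SameSideLem}, I do not anticipate a real obstacle. The only points requiring a little care are: (i) confirming that the three cases are exhaustive, in particular the edge case where $\DefectSet_\lds$ has exactly two elements, so that the middle range $\{\smin(\lds)+2, \ldots, \smax(\lds)-2\}$ handled by Lemma~\ref{SameSideLem} is empty and Lemmas~\ref{SameSideLem1} and~\ref{SameSideLem2} alone suffice (note that $\np_\multii \ge 3$ together with $\sIndex_j \ge 1$ for all $j$ forces $\smin(\lds) < \smax(\lds)$ via~\eqref{SpecialDefSet}); and (ii) checking that the horizontal-flip reduction of Remark~\ref{FlipRemark} genuinely moves the case $\sIndex_1 < \sIndex_{\np_\multii}$ into the scope of Lemma~\ref{SameSideLem} without changing which tangles are polynomials in $\mathsf{G}_\multii$.
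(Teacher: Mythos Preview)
Your proposal is correct and follows essentially the same approach as the paper: the paper's proof is a one-line invocation of Lemmas~\ref{SameSideLem1}, \ref{SameSideLem2}, \ref{SameSideLem} and Remark~\ref{FlipRemark}, and your case analysis on the common defect number~$\Defect$ is precisely the unpacking of that sentence. Your extra care in point~(i) about the edge case where the intermediate range is empty, and in reducing to link patterns by bilinearity, is appropriate and matches how the paper handles analogous reductions elsewhere.
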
 

\begin{proof} 
This immediately follows from combining lemmas~\ref{SameSideLem1},~\ref{SameSideLem2},
and~\ref{SameSideLem} and remark~\ref{FlipRemark}.
\end{proof} 
\begin{center}
\bf Constructing basis tangles without diagonal cables
\end{center}

We continue by constructing the basis tangles of type $\PD1_\multii$~\eqref{InsertTwoBoxes} with $\Defect_{\alpha_1} = \Defect_{\alpha_2}$, 
and $v = 0$, and $r = w \geq 2$.

\begin{lem} \label{BigSameSideLem} 
Suppose $\Summed_\multii < \ppmin(q)$. 
If induction hypothesis~\ref{IndAss1} holds, then for all 
Jones-Wenzl link states $\alpha_1, \alpha_2 \in \PS_\lds$,
the following tangle is a polynomial in the elements of $\mathsf{G}_\multii$\textnormal{:}
\begin{align} \label{BigSameSideTangle} 
\vcenter{\hbox{\includegraphics[scale=0.275]{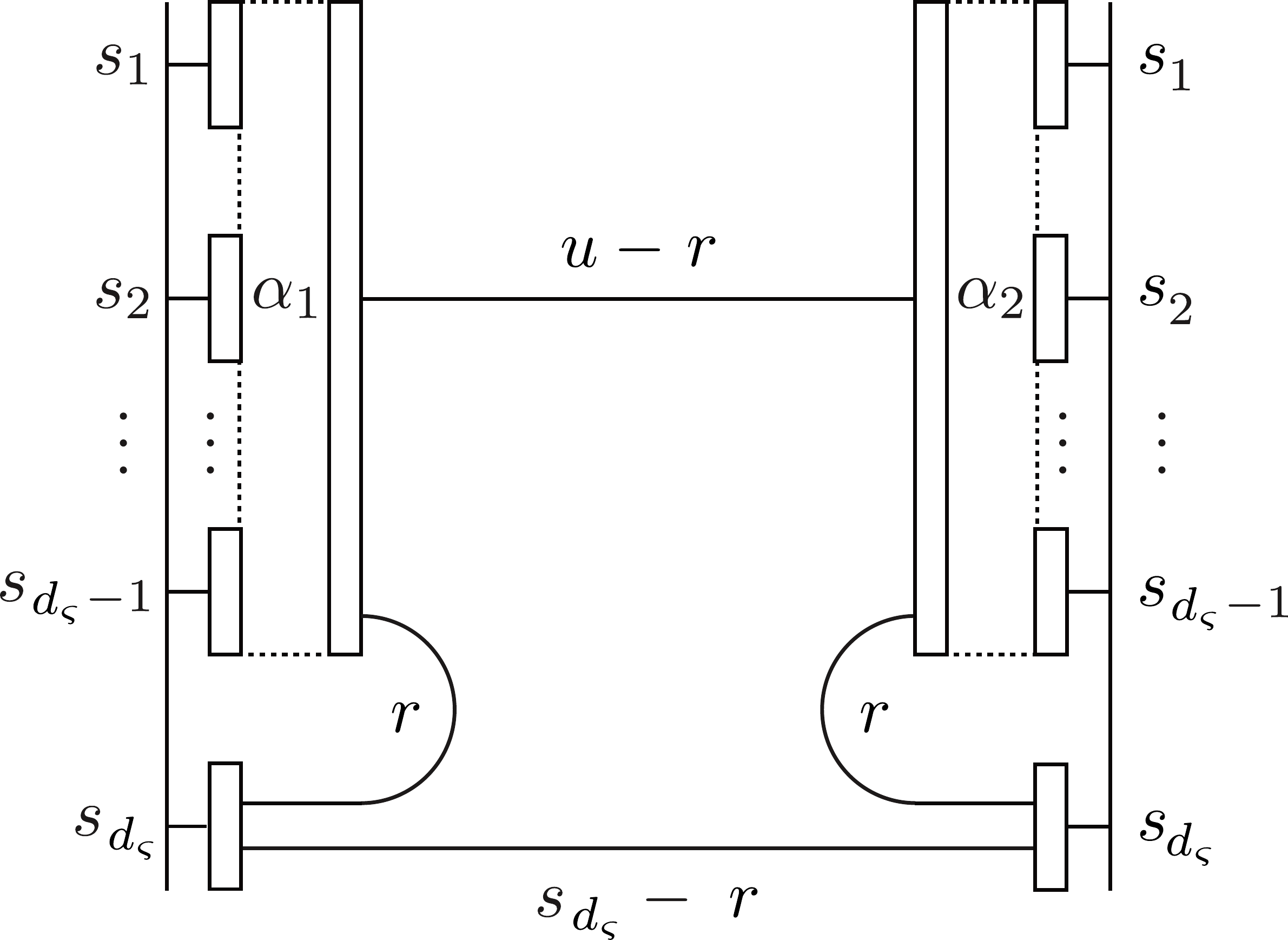} ,}} 
\end{align} 
where $r \in \{0,1,\ldots, \min(\sIndex_{\np_\multii}, \Defect)\}$.
\end{lem}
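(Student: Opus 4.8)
The plan is to induct on $r \in \{0,1,\ldots,\min(\sIndex_{\np_\multii},\Defect)\}$, where $\Defect := \Defect_{\alpha_1} = \Defect_{\alpha_2}$. The base cases $r \in \{0,1\}$ are exactly the content of corollary~\ref{SameSideCor} (for $r=1$, after absorbing the trivial fact that zero crossing links through the last box is handled by the same corollary with $v=0$, $w=0$; for $r=1$ the tangle~\eqref{BigSameSideTangle} is the $w=r=1$ tangle of corollary~\ref{SameSideCor}). So I would state these two base cases and then carry out the induction step, which is modeled on the argument of lemma~\ref{InitialCaseLem} in the two-projector case: there, one multiplies a tangle carrying a cable of size $k$ between two boxes by a generator to push the cable size up to $k+1$, and the extra projector-box decompositions produce, besides the target, only terms with strictly smaller cable size that are already handled by the induction hypothesis.

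Concretely, first I would assume the statement for all cable sizes up to some $r \geq 1$, and form a product of three tangles: the tangle~\eqref{BigSameSideTangle} with cable size $r$ between the last two projector boxes (a polynomial in $\mathsf{G}_\multii$ by the induction hypothesis), times $\ValGenWJ_{\np_\multii-1} = \WJProj_\multii \Gen^{\TL}_{\sIndex_1+\dotsm+\sIndex_{\np_\multii-1}}\WJProj_\multii$, times again a tangle of the same shape with cable size $r$ (or, as in lemma~\ref{InitialCaseLem}, one forms the suitable two-fold product). Diagrammatically this produces a configuration with a single projector box of size $\sIndex_{\np_\multii-1}$ sandwiched between two cables joining it to the last box, which I then decompose over internal link diagrams using property~\eqref{ProjectorID2}: only the terms corresponding to tangles of the shape~\eqref{SpecialTDiag} survive, and their coefficients are given by formula~\eqref{SpecialT} of proposition~\ref{SpecialTProp}. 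Exactly as in~\eqref{TwoTerms}--\eqref{BotTangle}, one obtains a linear combination in which one term is~\eqref{BigSameSideTangle} with cable size $r+1$, one or two terms are~\eqref{BigSameSideTangle} with cable size $\leq r$ (and possibly a $\Defect_{\alpha_1}$-modified link state, which is still an element of $\PS_\lds$, hence covered by the induction hypothesis together with corollary~\ref{SameSideCor} and lemma~\ref{2stepLem}), and possibly a ``loop through a box'' term that is removed using identity~\eqref{DeltaTangleGen} from lemma~\ref{CollectionLem}. Solving the resulting linear relation for the $r+1$ term expresses it as a $\bQ(q)$-combination of tangles already known to be polynomials in $\mathsf{G}_\multii$, provided the coefficient of the $r+1$ term is nonzero.

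The main obstacle is precisely that nonvanishing check on the leading coefficient, together with keeping the bookkeeping of the auxiliary terms correct. As in lemma~\ref{InitialCaseLem} the coefficient of the target will be a ratio of quantum integers of the form $[\sIndex_{\np_\multii-1}-r]\,[\dotsm]/[\sIndex_{\np_\multii-1}]\,[\dotsm]$ (or an analogous expression involving $[2r+2]$ as in~\eqref{CoefSimpl}); under the standing hypothesis $\Summed_\multii < \ppmin(q)$ all the quantum integers appearing in the arguments are bounded in absolute value below $\ppmin(q)$ — indeed $r \leq \sIndex_{\np_\multii} < \ppmin(q)$ and $\sIndex_{\np_\multii-1} + r \leq \sIndex_{\np_\multii-1} + \sIndex_{\np_\multii} \leq \Summed_\multii < \ppmin(q)$ — so they neither vanish nor blow up, exactly the type of estimate used at the end of lemmas~\ref{SameSideLem2} and~\ref{SameSideLem}. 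I would isolate this as a short arithmetic sublemma at the end. Once the leading coefficient is controlled, the induction closes and the lemma follows; I expect the write-up to be a few pages of diagrammatic manipulation with no conceptually new ingredient beyond what already appears in lemmas~\ref{InitialCaseLem} and~\ref{SameSideLem}.
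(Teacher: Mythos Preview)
Your approach is essentially the paper's: induct on $r$, with $r=0$ and $r=1$ as base cases, and for the step mimic the product-and-decompose argument of lemma~\ref{InitialCaseLem}. Two small corrections: the $r=0$ base case is not corollary~\ref{SameSideCor} (which only produces the $r=w=1$ tangle) but induction hypothesis~\ref{IndAss1} directly, since with $r=0$ the tangle~\eqref{BigSameSideTangle} lies in $\smash{\WJ_\lds(\nu)}$; and in the coefficient bookkeeping the sizes that play the role of $\sIndex_1,\sIndex_2$ from lemma~\ref{InitialCaseLem} are $\Defect$ and $\sIndex_{\np_\multii}$ (the sizes of the two middle boxes being decomposed), not $\sIndex_{\np_\multii-1}$, so your nonvanishing check should read $r < \min(\Defect,\sIndex_{\np_\multii}) \leq \Summed_\multii < \ppmin(q)$.
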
 
\begin{proof} 
By remark~\ref{FlipRemark}, we may assume that $\sIndex_{\np_\multii} \leq \sIndex_1$ without loss of generality.
We prove the claim by induction on 
$r \geq 0$. 
Induction hypothesis~\ref{IndAss1} gives the initial case $r = 0$, and corollary~\ref{SameSideCor} the case $r=1$.
Assuming then that the claim holds if the cables joining the upper and lower boxes in~\eqref{BigSameSideTangle} 
have size $r$, we form the product
\begin{align} \label{ProdOfTangless} 
\vcenter{\hbox{\includegraphics[scale=0.275]{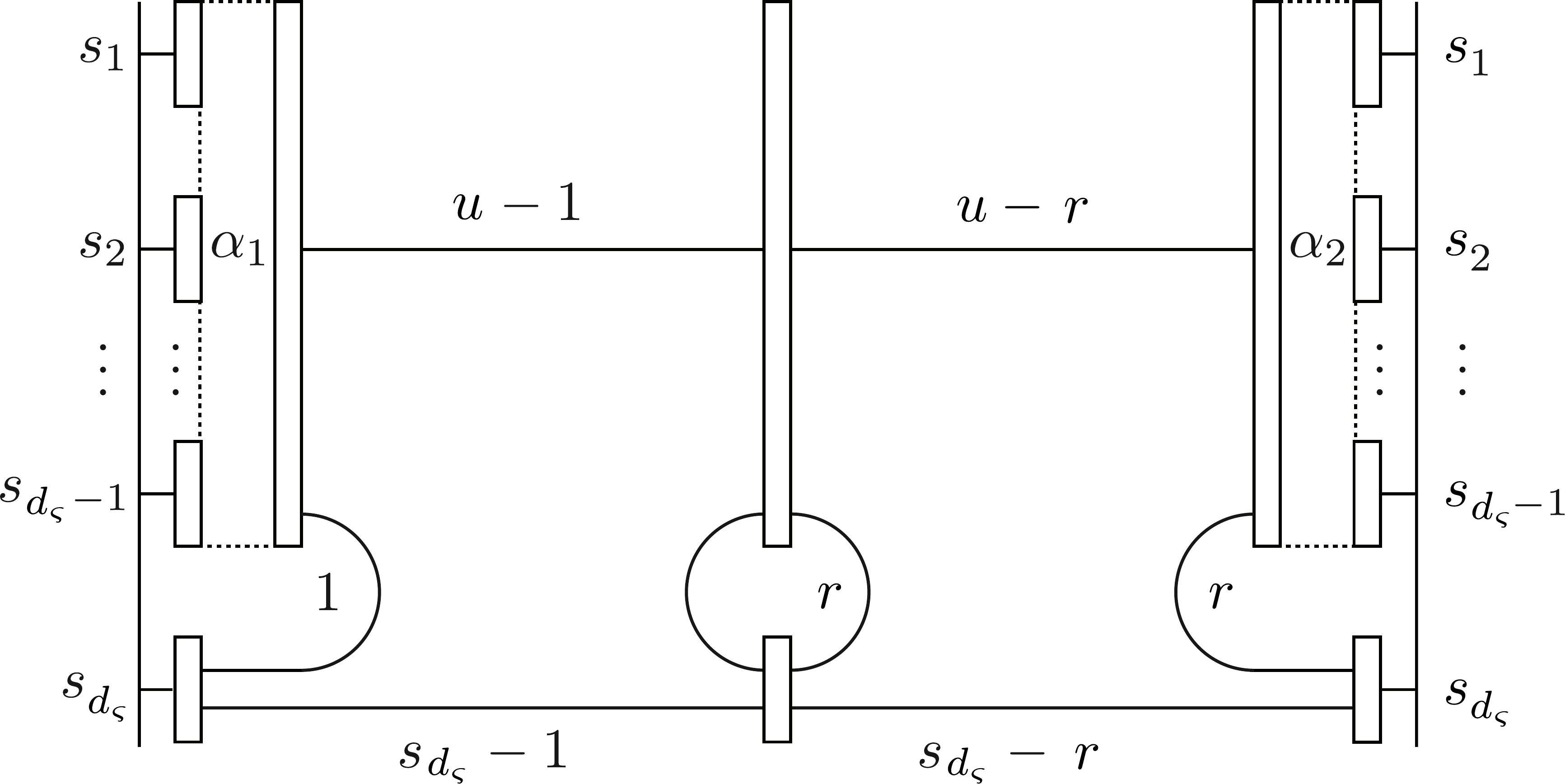} ,}} 
\end{align} 
and proceed exactly as in the proof of lemma~\ref{InitialCaseLem}, starting from~\eqref{MultiplyDiagrams}.  
\end{proof}

\begin{center}
\bf Constructing basis tangles with diagonal cables
\end{center}

In the next lemma~\ref{DifferentSideLem} and corollary~\ref{DifferentSideCor}, 
we finally construct general basis tangles of type $\PD1_\multii$~\eqref{InsertTwoBoxes}.

\begin{lem} \label{DifferentSideLem} 
Suppose $\Summed_\multii < \ppmin(q)$. 
If induction hypothesis~\ref{IndAss1} holds, then for all 
Jones-Wenzl link states $\alpha_1, \alpha_2 \in \PS_\lds$ 
such that $\Defect_{\alpha_1} \leq \Defect_{\alpha_2}$, the following tangle is a polynomial in the elements of $\mathsf{G}_\multii$\textnormal{:}
\begin{align} \label{DifferentSideTangle} 
\vcenter{\hbox{\includegraphics[scale=0.275]{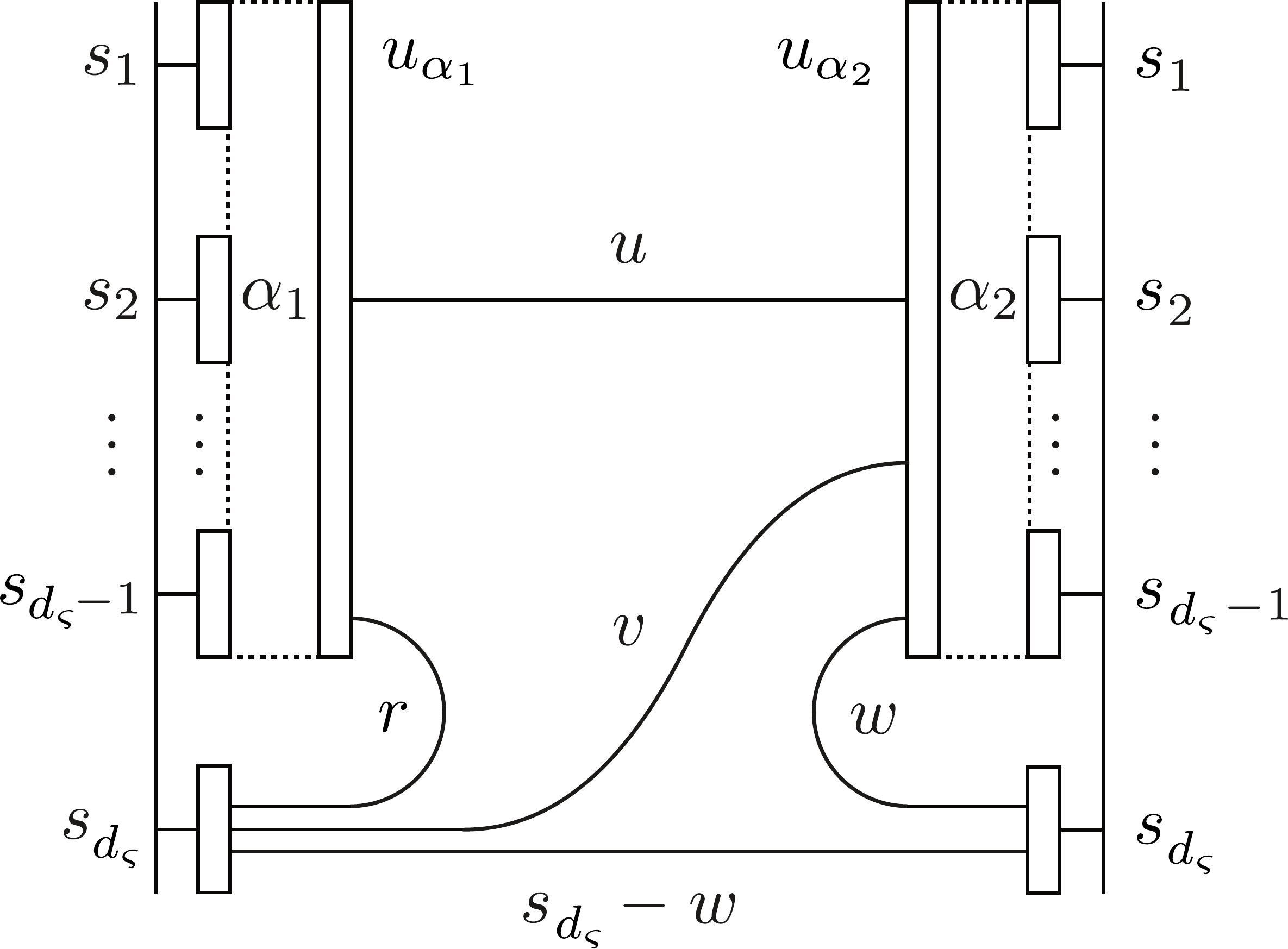} ,}}
\end{align}
where $r \in \mathsf{R}_{\alpha_1,\alpha_2}$,
$\Defect_{\alpha_2} = \Defect_{\alpha_1} + 2v$, $u + r = \Defect_{\alpha_1}$, and $w = r + v$. 
\end{lem}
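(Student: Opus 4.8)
The plan is to prove the lemma by induction on $v \geq 0$, where $2v = \Defect_{\alpha_2} - \Defect_{\alpha_1}$, building up the "diagonal cables" one at a time from the straight-cable tangles already produced in lemma~\ref{BigSameSideLem}. By linearity I may assume $\alpha_1, \alpha_2$ are Jones-Wenzl link patterns, and (possibly after a horizontal flip as in remark~\ref{FlipRemark}) that $\sIndex_{\np_\multii} \leq \sIndex_1$. The base case $v = 0$ forces $\Defect_{\alpha_1} = \Defect_{\alpha_2}$ and $w = r$, which is exactly lemma~\ref{BigSameSideLem}; it already supplies tangle~\eqref{DifferentSideTangle} for every admissible $r \in \{0,1,\ldots,\min(\sIndex_{\np_\multii},\Defect_{\alpha_1})\}$.

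For the inductive step I assume the claim for $v-1$ — i.e. for every pair of link patterns in $\PS_\lds$ whose defect numbers differ by $2(v-1)$, and every admissible cable size — and I want tangle~\eqref{DifferentSideTangle} with parameters $(r,u,v,w)$, $u+r = \Defect_{\alpha_1}$, $w = r+v$, $r \in \mathsf{R}_{\alpha_1,\alpha_2}$. I would form a product of three tangles, each already a polynomial in the elements of $\mathsf{G}_\multii$: on the left a $(v-1)$-diagonal-cable tangle of type~\eqref{DifferentSideTangle} (with $\alpha_1$ on the left, a suitable link pattern with $\Defect_{\alpha_1}+2v-2$ defects on the right furnished by the induction hypothesis, and the same cable size $r$), and on the right a suitable "two-step" tangle of lemma~\ref{2stepLem} / corollary~\ref{2stepCor} — which raises the number of defects by two — together with a small cap feeding into $\alpha_2$. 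As in the proofs of lemmas~\ref{SameSideLem},~\ref{SameSideLem2} and~\ref{2stepLem}, I would then decompose the interior projector box of this product over link diagrams; by property~\eqref{ProjectorID2} only a handful of terms survive, with coefficients given by proposition~\ref{SpecialTProp} as explicit ratios of quantum integers. The surviving terms are the desired tangle~\eqref{DifferentSideTangle} with parameters $(r,u,v,w)$, plus tangles with strictly fewer diagonal cables and/or a strictly smaller cable size, all of which are polynomials in $\mathsf{G}_\multii$ by the induction hypothesis on $v$, by lemma~\ref{BigSameSideLem}, and by induction hypothesis~\ref{IndAss1}. Rearranging to solve for tangle~\eqref{DifferentSideTangle} then completes the step, provided its coefficient is nonzero.

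I expect the main obstacle to be twofold, exactly as in the earlier lemmas of this section. First, the diagrammatic bookkeeping: one must check that the product closes up consistently (the cable sizes match, the relevant three-vertices exist, and no turn-back path is created that would annihilate the tangle via~\eqref{ProjectorID2}), which is where the admissibility constraints $r \in \mathsf{R}_{\alpha_1,\alpha_2}$ and $w = r+v \leq \sIndex_{\np_\multii}$, together with the parity lemma~\ref{ParityLem} and lemma~\ref{MinMaxLem}, enter. Second, and more delicate, one must show that the coefficient multiplying tangle~\eqref{DifferentSideTangle} in the box decomposition neither vanishes nor blows up; this is a short rational expression in quantum integers $[k]$ arising from the recoupling coefficients of proposition~\ref{SpecialTProp}, a Tetrahedral network via lemma~\ref{2stepLem}, and a Theta network via lemma~\ref{ThetaLem}, and — as in lemmas~\ref{SameSideLem} and~\ref{2stepLem} — its nonvanishing will follow from $\Summed_\multii < \ppmin(q)$, since every index that occurs is bounded by $\Summed_\multii$ (for instance $\Defect_{\alpha_2} = \Defect_{\alpha_1} + 2v \leq \smax(\lds) = \Summed_\multii - \sIndex_{\np_\multii}$ and $w \leq \sIndex_{\np_\multii} \leq \sIndex_1$), hence lies strictly below $\ppmin(q)$, so the corresponding quantum integers are nonzero.

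Once the lemma is in hand, corollary~\ref{DifferentSideCor} — which removes the restriction $\Defect_{\alpha_1} \leq \Defect_{\alpha_2}$ — will follow by vertically reflecting all tangles and exchanging $\alpha_1$ and $\alpha_2$, just as corollary~\ref{2stepCor} follows from lemma~\ref{2stepLem}. Together with corollary~\ref{SameSideCor} and lemma~\ref{BigSameSideLem} this exhibits every tangle of the basis $\PD1_\multii$~\eqref{InsertTwoBoxes} as a polynomial in $\mathsf{G}_\multii$, which is precisely the input required for corollary~\ref{FinalCor}, hence for claim~\ref{IndClaim}, and ultimately for theorem~\ref{GeneratorThm}.
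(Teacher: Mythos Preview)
Your overall strategy---induction on $v \geq 0$ with the base case supplied by lemma~\ref{BigSameSideLem}, then building up the diagonal cable one strand at a time by forming a product and decomposing a projector box---is exactly the paper's approach. However, your inductive step is more elaborate than it needs to be, and the vagueness there hides a real simplification you are missing.

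The paper does \emph{not} use lemma~\ref{2stepLem} (or corollary~\ref{2stepCor}) in this proof at all. Instead, it multiplies two tangles that are each already polynomials in $\mathsf{G}_\multii$ by the induction hypothesis on $v$ (one of them the tangle~\eqref{DifferentSideTangle} with diagonal size $v$). After absorbing adjacent projectors via~\eqref{ProjectorID1}, the interface is a single projector box of size $\sIndex_{\np_\multii}$. Upon decomposing it, property~\eqref{ProjectorID2} forces \emph{exactly one} term to survive---the one of shape~\eqref{SpecialTDiag} with $i=1$, $j=r+v-1$, $k=0$, $m=\sIndex_{\np_\multii}-r-v-1$---and proposition~\ref{SpecialTProp} gives its coefficient as $[\sIndex_{\np_\multii}-r-v]/[\sIndex_{\np_\multii}]$. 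That single surviving term \emph{is} the tangle with diagonal size $v+1$, so there is no rearranging, no side terms to control, and the nonvanishing check reduces to $\sIndex_{\np_\multii}-r-v < \sIndex_{\np_\multii} < \ppmin(q)$.

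Your proposed route---three tangles, a two-step piece, ``a handful'' of surviving terms, then solving for the desired one---might be made to work, but as written it is underspecified: you do not say precisely which terms survive, and your appeal to ``the induction hypothesis on $v$'' for the side terms is not obviously sufficient if some of them have the same $v$ but different $r$ (your induction is on $v$ alone, not on the pair $(v,r)$). The paper sidesteps all of this by choosing the product so that there is nothing to rearrange. I would recommend looking for that cleaner product before committing to the more involved bookkeeping you sketch.
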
 

\begin{proof} 
By remark~\ref{FlipRemark}, we may assume that $\sIndex_{\np_\multii} \leq \sIndex_1$ without loss of generality.
We prove the claim by induction on 
$v \geq 0$.  Lemma~\ref{BigSameSideLem} gives the initial case $v=0$.  
Then, assuming that the claim holds if the diagonal cable in~\eqref{DifferentSideTangle} 
has size $v$, we form the product
\begin{align} \label{ProdOfTangles3} 
& \vcenter{\hbox{\includegraphics[scale=0.275]{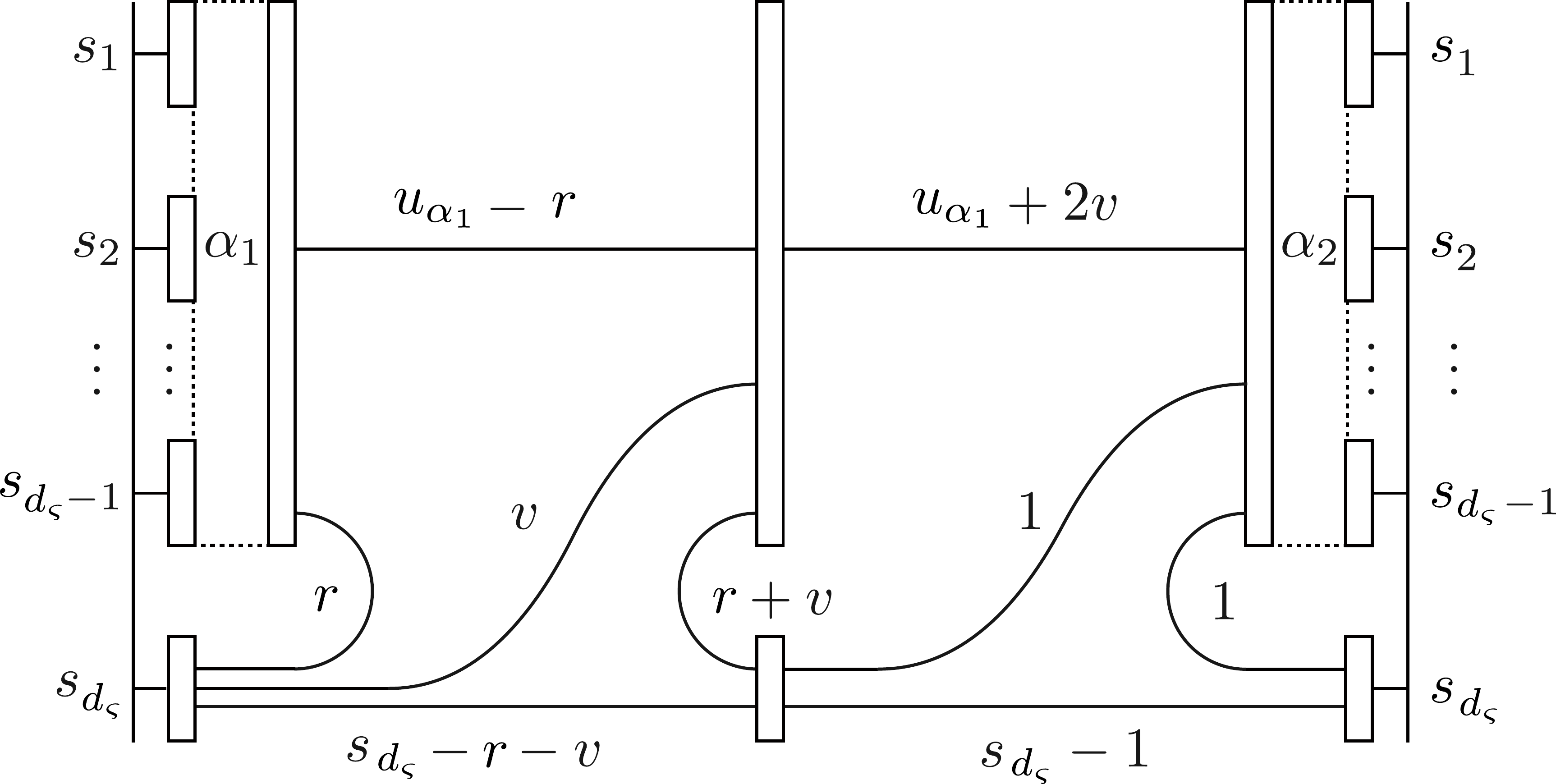}}} \\[1em] 
\label{ProdOfTangles4} 
\overset{\eqref{ProjectorID1}}{=} \quad &\vcenter{\hbox{\includegraphics[scale=0.275]{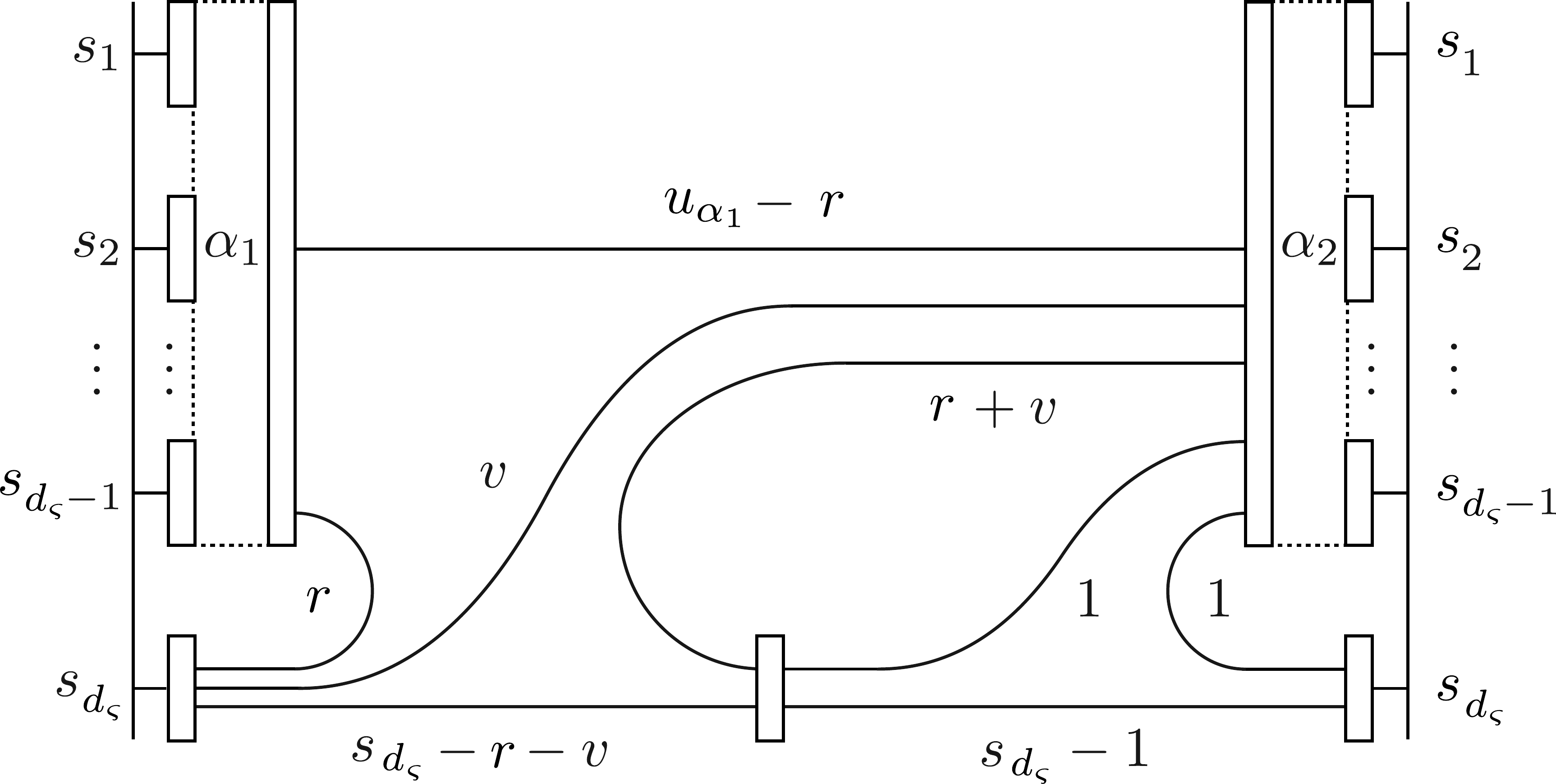} .}} 
\end{align} 
We decompose the middle projector box.
By property~\eqref{ProjectorID2}, only one tangle in this decomposition is nonzero: the one of the form~\eqref{SpecialTDiag} with 
$i = 1$, $j = r + v - 1$, $m = \sIndex_{\np_\multii} - r - v - 1$, and $k = 0$.
Hence, using~\eqref{SpecialT}, we find that 
\begin{align} \label{ProdOfTangles5} 
\eqref{ProdOfTangles4} \quad \overset{\eqref{SpecialT}}{=} 
\quad \frac{[\sIndex_{\np_\multii} - r - v]}{[\sIndex_{\np_\multii}]} \,\, \times \,\,
\vcenter{\hbox{\includegraphics[scale=0.275]{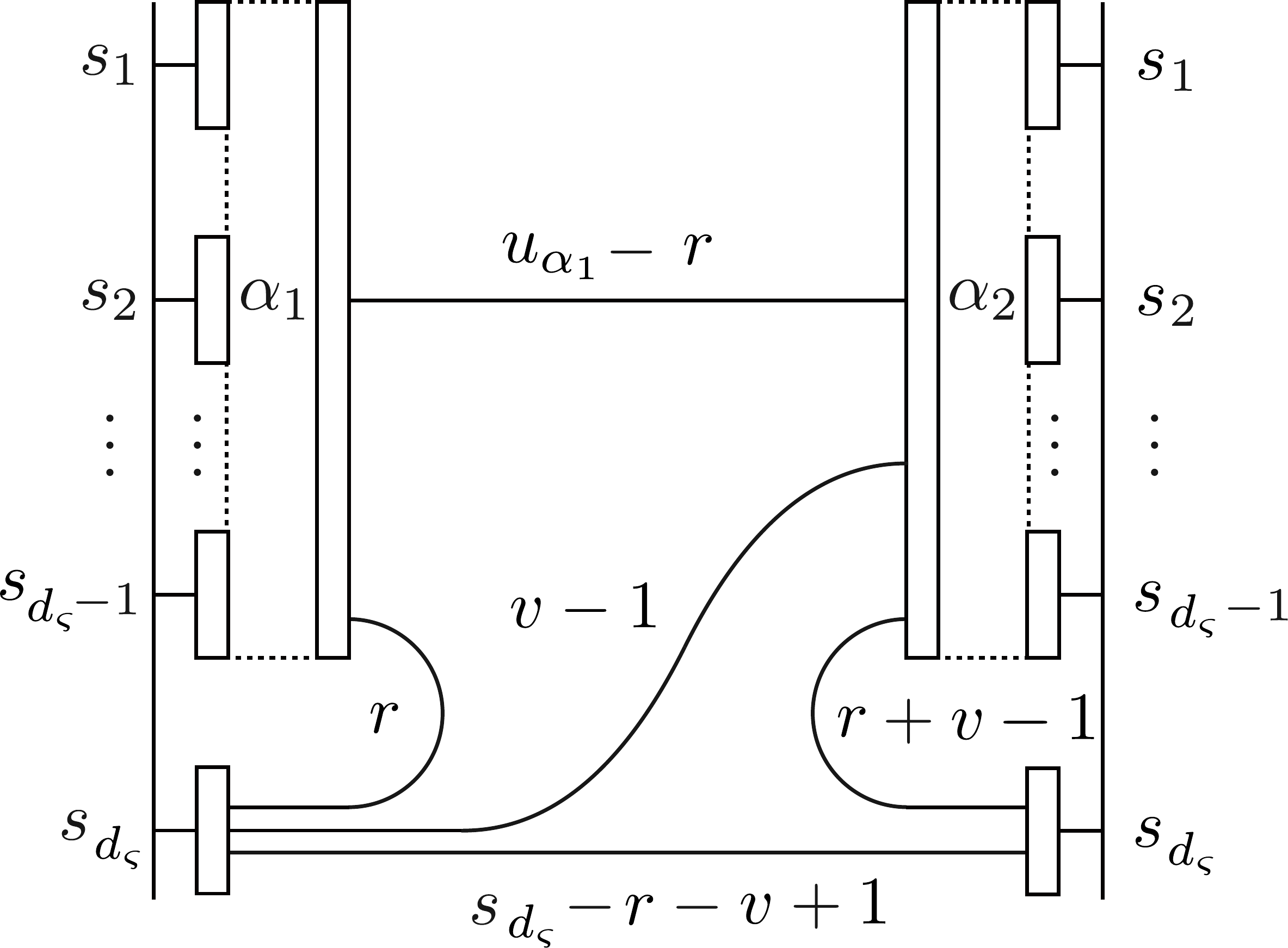} .}} 
\end{align} 
The tangles~(\ref{ProdOfTangles3}--\ref{ProdOfTangles4}) are all equal and, by the induction hypothesis applied to~\eqref{ProdOfTangles3}, they
equal a product of tangles that are polynomials in the elements of $\mathsf{G}_\multii$. Also, with 
$\sIndex_{\np_\multii} - r - v < \sIndex_{\np_\multii}  < \ppmin(q)$,
the coefficient 
in~\eqref{ProdOfTangles5} does not vanish (or blow up).
Hence, the tangle in~\eqref{ProdOfTangles5} 
is such a polynomial too. 
This finishes the induction step. 
\end{proof}

\begin{cor} \label{DifferentSideCor} 
Suppose $\Summed_\multii < \ppmin(q)$. 
If induction hypothesis~\ref{IndAss1} holds, then for all 
Jones-Wenzl link states $\alpha_1, \alpha_2 \in \PS_\lds$
such that $\Defect_{\alpha_2} \leq \Defect_{\alpha_1}$, the following tangle is a polynomial in the elements of $\mathsf{G}_\multii$\textnormal{:}
\begin{align} \label{FlipDifferentSideTangle} 
\vcenter{\hbox{\includegraphics[scale=0.275]{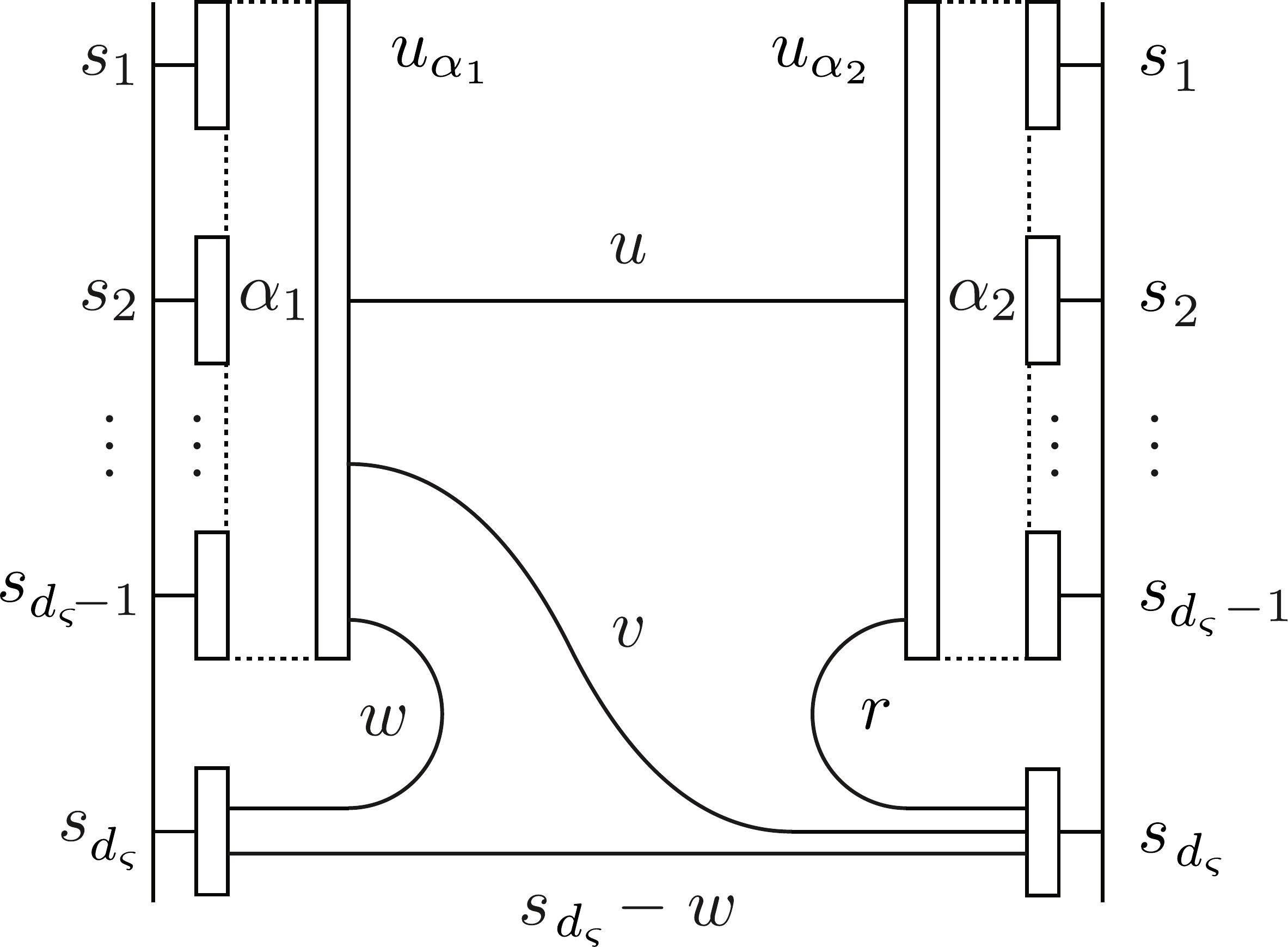} ,}} 
\end{align}
where $r \in \mathsf{R}_{\alpha_1,\alpha_2}$,
$\Defect_{\alpha_1} = \Defect_{\alpha_2} + 2v$, $u + r = \Defect_{\alpha_2}$, and $w = r + v$. 
\end{cor}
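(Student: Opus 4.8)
The plan is to obtain this statement from Lemma \ref{DifferentSideLem} by a single symmetry operation, exactly as Corollary \ref{2stepCor} was obtained from Lemma \ref{2stepLem}. First I would observe that vertical reflection is an anti-automorphism of the Temperley-Lieb algebra $\TL_{\Summed_\multii}(\nu)$ which fixes every Jones-Wenzl projector (since each $\WJProj\sub{s}$ is symmetric under vertical reflection), hence it restricts to an anti-automorphism of $\WJ_\multii(\nu)$ that fixes the unit $\WJProj_\multii$ and permutes the generating set $\mathsf{G}_\multii$ of \eqref{GenSet2}: each generator $\WJProj_\multii \Gen_j^{\TL} \WJProj_\multii$ is itself fixed by vertical reflection. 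Consequently, if a tangle $T$ is a polynomial in the elements of $\mathsf{G}_\multii$, then so is its vertical reflection $T^{\mathrm{vref}}$ (a polynomial with the order of factors reversed). Next I would note that vertically reflecting the tangle \eqref{DifferentSideTangle} of Lemma \ref{DifferentSideLem} and simultaneously swapping the roles of $\alpha_1$ and $\alpha_2$ produces exactly the tangle \eqref{FlipDifferentSideTangle}: the diagonal cable of size $v$ that runs from the $\Defect_{\alpha_1}$-side in \eqref{DifferentSideTangle} becomes a diagonal cable running from the $\Defect_{\alpha_2}$-side, the parallel cable of size $r$ is unchanged, the parameters satisfy $u+r=\Defect_{\alpha_2}$ and $w=r+v$ with $\Defect_{\alpha_1}=\Defect_{\alpha_2}+2v$, and the condition $\Defect_{\alpha_1}\le\Defect_{\alpha_2}$ becomes $\Defect_{\alpha_2}\le\Defect_{\alpha_1}$. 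Since $r\in\mathsf{R}_{\alpha_1,\alpha_2}=\mathsf{R}_{\alpha_2,\alpha_1}$ by the symmetry of definition \eqref{tRange} in $\alpha_1,\alpha_2$, all hypotheses transfer.

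There is essentially no obstacle here; the only thing requiring a moment's care is confirming that the diagrammatic reflection really matches the picture \eqref{FlipDifferentSideTangle} and that the index constraints are preserved under the exchange $\alpha_1\leftrightarrow\alpha_2$, which is routine from the explicit form of $\mathsf{R}_{\alpha_1,\alpha_2}$ in \eqref{tRange}. I would write the proof in two sentences, in complete parallel with the proof of Corollary \ref{2stepCor}:

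\begin{proof}
We obtain this result by vertically reflecting the tangles of lemma~\ref{DifferentSideLem} and exchanging $\alpha_1$ and $\alpha_2$. Indeed, vertical reflection fixes each Jones-Wenzl projector and each generator in the collection $\mathsf{G}_\multii$~\eqref{GenSet2}, so it sends a polynomial in the elements of $\mathsf{G}_\multii$ to another such polynomial; and the index set $\mathsf{R}_{\alpha_1,\alpha_2}$ defined in~\eqref{tRange} is symmetric in $\alpha_1,\alpha_2$, so the hypotheses of lemma~\ref{DifferentSideLem} transfer to those of the present corollary.
\end{proof}
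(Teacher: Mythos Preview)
Your proposal is correct and takes essentially the same approach as the paper: the paper's proof is the single sentence ``We obtain this result by vertically reflecting tangle~\eqref{DifferentSideTangle} of lemma~\ref{DifferentSideLem} and exchanging $\alpha_1$ and $\alpha_2$.'' Your additional remarks about vertical reflection fixing each element of $\mathsf{G}_\multii$ and the symmetry of $\mathsf{R}_{\alpha_1,\alpha_2}$ simply make explicit what the paper leaves implicit.
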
 

\begin{proof} 
We obtain this result by vertically reflecting tangle~\eqref{DifferentSideTangle} of lemma~\ref{DifferentSideLem} 
and exchanging $\alpha_1$ and $ \alpha_2$.
\end{proof}

\subsection{Finishing the proof of the generator theorem} \label{TheProofSec}

Now we are ready to finish the induction step, i.e., claim~\ref{IndClaim}, and then the proof of theorem~\ref{GeneratorThm}.

\begin{cor} \label{FinalCor} 
Suppose $\Summed_\multii < \ppmin(q)$. 
If induction hypothesis~\ref{IndAss1} holds,
then claim~\ref{IndClaim} holds. 
\end{cor}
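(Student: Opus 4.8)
The plan is to derive Claim~\ref{IndClaim} by combining the basis $\PD1_\multii$ of Lemma~\ref{ManyBasesLem} with the constructions carried out in the previous subsections. First I would recall that, by Lemma~\ref{ManyBasesLem} (using the standing assumption $\Summed_\multii < \ppmin(q)$), the set $\PD1_\multii$ defined in~\eqref{InsertTwoBoxes} is a basis for the Jones-Wenzl algebra $\WJ_\multii(\nu)$. A generic element of $\PD1_\multii$ is determined by a pair of Jones-Wenzl link patterns $\alpha_1, \alpha_2 \in \PP_\lds$ together with an integer $r \in \mathsf{R}_{\alpha_1,\alpha_2}$, the remaining data being $u = \min(\Defect_{\alpha_1}, \Defect_{\alpha_2}) - r$, $v = |\Defect_{\alpha_1} - \Defect_{\alpha_2}|/2$, and $w = r + v$. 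I would then observe that any such basis tangle is, up to relabeling $\alpha_1 \leftrightarrow \alpha_2$, precisely a tangle of the form treated in Lemma~\ref{DifferentSideLem} (in the case $\Defect_{\alpha_1} \leq \Defect_{\alpha_2}$) or Corollary~\ref{DifferentSideCor} (in the case $\Defect_{\alpha_2} \leq \Defect_{\alpha_1}$): the constraints imposed there, namely $\Defect_{\alpha_2} = \Defect_{\alpha_1} + 2v$, $u + r = \Defect_{\alpha_1}$, $w = r + v$ (and the vertically reflected version), coincide with the defining relations in~\eqref{InsertTwoBoxes}. Since we assume that induction hypothesis~\ref{IndAss1} holds, Lemma~\ref{DifferentSideLem} and Corollary~\ref{DifferentSideCor} apply, so every element of $\PD1_\multii$ is a polynomial in the elements of $\mathsf{G}_\multii$~\eqref{GenSet2}.

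The second and final step is a short algebraic conclusion. The subalgebra of $\WJ_\multii(\nu)$ generated by $\mathsf{G}_\multii$ — a legitimate unital subalgebra, since $\mathbf{1}_{\WJ_\multii} = \WJProj_\multii \in \mathsf{G}_\multii$ and $\mathsf{G}_\multii \subset \WJ_\multii(\nu)$ — contains every polynomial in the elements of $\mathsf{G}_\multii$, hence by the first step contains the whole basis $\PD1_\multii$, hence contains $\Span \PD1_\multii = \WJ_\multii(\nu)$. Therefore this subalgebra equals $\WJ_\multii(\nu)$, which is exactly Claim~\ref{IndClaim}. The equivalent formulation stated there — that every tangle $\WJProj_\multii T \WJProj_\multii$ with $T \in \TL_{\Summed_\multii}(\nu)$ is a polynomial in $\mathsf{G}_\multii$ — then follows at once, since any such tangle lies in $\WJ_\multii(\nu) = \Span \PD1_\multii$ and each basis element has just been exhibited as such a polynomial.

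I do not expect a genuine obstacle here: this corollary is a packaging step that assembles Lemmas~\ref{SameSideLem1}, \ref{SameSideLem2}, \ref{SameSideLem}, \ref{BigSameSideLem}, \ref{2stepLem}, \ref{DifferentSideLem} and Corollaries~\ref{SameSideCor}, \ref{2stepCor}, \ref{DifferentSideCor}. The only points demanding attention are bookkeeping ones: checking that the parametrization of $\PD1_\multii$ in~\eqref{InsertTwoBoxes} matches the hypotheses of Lemma~\ref{DifferentSideLem} and Corollary~\ref{DifferentSideCor} case by case, and recalling that Remark~\ref{FlipRemark} removes the auxiliary assumption $\sIndex_{\np_\multii} \leq \sIndex_1$ used inside some of those lemmas (via the horizontal flip $\lds \leftrightarrow \fds$, $\sIndex_{\np_\multii} \leftrightarrow \sIndex_1$). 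Once these are verified, Claim~\ref{IndClaim} follows, completing the induction step.
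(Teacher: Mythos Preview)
Your proposal is correct and follows essentially the same approach as the paper's proof: both show that every element of the basis $\PD1_\multii$ is a polynomial in $\mathsf{G}_\multii$ via Lemma~\ref{DifferentSideLem} and Corollary~\ref{DifferentSideCor}, then invoke Lemma~\ref{ManyBasesLem} to conclude. Your version is more explicit about matching the parameters and about the subalgebra argument, but the content is the same.
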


\begin{proof} 
By lemma~\ref{DifferentSideLem} and corollary~\ref{DifferentSideCor}, every tangle in the collection $\PD1_\multii$~\eqref{InsertTwoBoxes} is a polynomial in the 
elements of $\mathsf{G}_\multii$.   Because this set is a basis for $\WJ_\multii(\nu)$ by lemma~\ref{ManyBasesLem}, this property linearly extends from the basis elements 
to all tangles in $\WJ_\multii(\nu)$.  This proves claim~\ref{IndClaim}.
\end{proof}

\begin{proof}[Proof of theorem~\ref{GeneratorThm}] 
We prove items~\ref{GeneratorThmItem1} and~\ref{GeneratorThmItem2} as follows:
\begin{enumerate}[leftmargin=*]
\itemcolor{red}

\item
That $\mathsf{G}_\multii$ generates $\WJ_\multii(\nu)$ immediately follows by induction on 
$\np_\multii \geq 1$: corollary~\ref{InitialCaseCor} gives 
the initial case $\np_\multii = 2$, and corollary~\ref{FinalCor} (i.e., claim~\ref{IndClaim}) completes the induction step.  

\item
To prove that also the collection of tangles of the form 
\begin{align} \label{ClaimedGenerators}
\vcenter{\hbox{\includegraphics[scale=0.275]{e-Generators_3Vertex.pdf} ,}}
\end{align} 
where $s \in \DefectSet\sub{\sIndex_i,\sIndex_{i+1}}$ and $i \in \{ 1, 2, \ldots, \np_\multii - 1 \}$, generates $\WJ_\multii(\nu)$, 
we use definition~\eqref{3vertex1}, 
properties (\ref{ProjectorID2},~\ref{ProjectorID1}), and proposition~\ref{SpecialTProp} to 
obtain the following upper-triangular system of equations for each $i \in \{1,2,\ldots,\np_\multii - 1\}$:
\begin{align} \label{UpperTri} 
\vcenter{\hbox{\includegraphics[scale=0.275]{e-Generators_3Vertex.pdf}}} \quad
=  & \; \quad \sum_{k=\frac{1}{2}(\sIndex_i+\sIndex_{i+1}-s)}^{\min(\sIndex_i,\sIndex_{i+1})} \text{coef}_{s,k} 
\,\, \times \,\, \vcenter{\hbox{\includegraphics[scale=0.275]{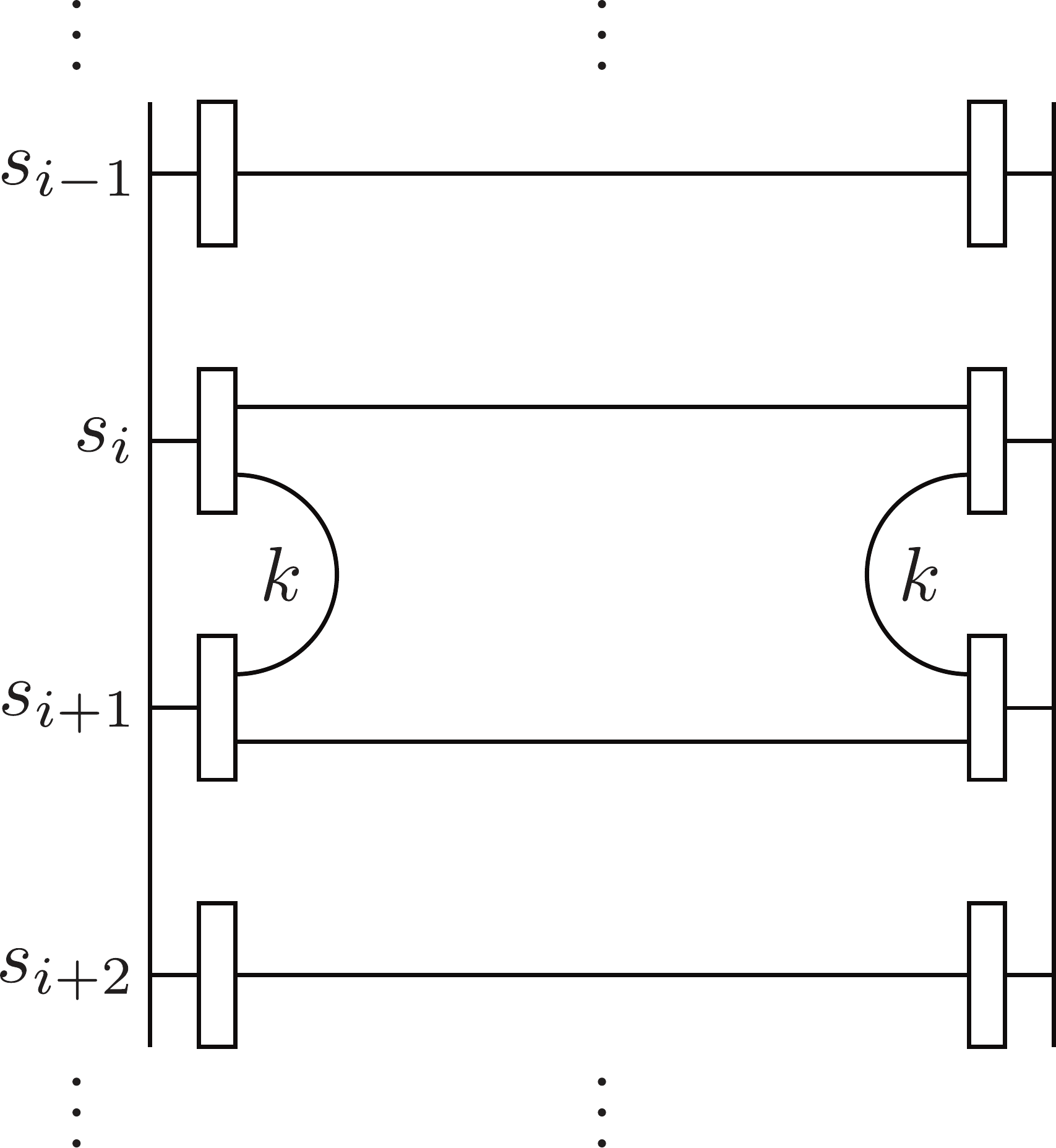}, }}  \\[1em]
\text{where} \qquad \quad \text{coef}_{s,k} := & \; 
\frac{\left[\frac{\sIndex_i+s-\sIndex_{i+1}}{2}\right]!\left[\frac{\sIndex_{i+1}+s-\sIndex_i}{2}\right]!\left[\frac{\sIndex_i+\sIndex_{i+1}+s}{2}-k\right]!}{[s]!\left[k-\frac{\sIndex_i+\sIndex_{i+1}-s}{2}\right]![\sIndex_i-k]![\sIndex_{i+1}-k]!} . 
\end{align}
Now, with $s \in \DefectSet\sub{\sIndex_i,\sIndex_{i+1}}$ and $\Summed_\multii < \ppmin(q)$, we see that
all of  the coefficients $\text{coef}_{s,k}$ are finite and none of the coefficients 
of the diagonal terms, i.e., terms with $k = \smash{\frac{1}{2}(\sIndex_i+\sIndex_{i+1}-s)}$, vanish.
Therefore, the system~\eqref{UpperTri} is invertible for each $i \in \{1,2,\ldots,\np_\multii - 1\}$. 
Because the collection of tangles appearing on the right side of~\eqref{UpperTri} generates $\WJ_\multii(\nu)$, 
so too does the collection of tangles of the form~\eqref{ClaimedGenerators}, appearing on the left side of~\eqref{UpperTri}.
\end{enumerate}
To finish, both generating sets~(\ref{EGenerators-00},~\ref{MasterDiagramsWJ-00}) (with the unit~\eqref{WJCompProj} added to the first one) 
are minimal because if we remove an element from either of them, then we cannot generate all of the Jones-Wenzl tangles in $\WJ_\multii(\nu)$.
\end{proof}

\section{Relations in the Jones-Wenzl algebra} \label{RelationLemProofSect}

It is well-known~\cite{vj, lk, rsa} that the Temperley-Lieb algebra of diagrams is isomorphic 
to the abstract associative algebra with generating set 
$\{ \mathbf{1}, \Gen_1, \Gen_2, \ldots, \Gen_{n-1} \}$, where $\mathbf{1}$ 
is the unit and the other generators satisfy exclusively the relations~(\ref{WordRelations1}--\ref{WordRelations3}).
We readily note the resemblance of the generating set~\eqref{EGenerators-00} 
of the Jones-Wenzl algebra $\WJ_\multii(\nu)$ with the $\TL_n(\nu)$-generators~\eqref{ExtMe}.
For a general multiindex $\multii = (\sIndex_1, \sIndex_2,\ldots, \sIndex_{\np_\multii}) \in \smash{\bZpos^\#}$,
the number of generators of type~\eqref{EGenerators-00} for the algebra $\WJ_\multii(\nu)$ equals 
\begin{align}
\sum_{i = 1}^{\np_\multii-1} (\# \DefectSet\sub{\sIndex_i,\sIndex_{i+1}} - 1) \overset{\eqref{SpecialDefSet}}{=}
\sum_{i = 1}^{\np_\multii-1} \min(\sIndex_i,\sIndex_{i+1}) ,
\end{align}
which is usually different from the number $\np_\multii-1$ of generators of type~\eqref{ExtMe} for $\TL_{\np_\multii}(\nu)$.
However, analogues and generalizations of the Temperley-Lieb relations can be obtained for generators~\eqref{EGenerators-00} as well.

In this section, we find some relations in the Jones-Wenzl algebra
and discuss special cases when all of the relations are known. 
In the end of this section, we prove proposition~\ref{RelationProp}.

\subsection{Case of two projectors} \label{TwoNodeCase}

In this section, we strengthen theorem~\ref{GeneratorThm} for $\multii = (\sIndex_1,\sIndex_2)$.

\begin{theorem} \label{GeneratorThmTwoNode}
Suppose $\max (\sIndex_1, \sIndex_2) < \ppmin(q)$. Then the Jones-Wenzl algebra $\WJ\sub{\sIndex_1,\sIndex_2}(\nu)$
has the following presentations in terms of generators and relations:
\begin{enumerate} 
\itemcolor{red}
\item \label{GeneratorThmTwoNodeItem1}
It has the two generators~\eqref{EGeneratorsTwonode}, 
which satisfy exclusively the relations
\begin{align} 
\label{EGeneratorsTwonodeUnitrel0}
\WJProj\sub{\sIndex_1,\sIndex_2}^2 & = \WJProj\sub{\sIndex_1,\sIndex_2} , \\
\label{EGeneratorsTwonodeUnitrel1}
\WJProj\sub{\sIndex_1,\sIndex_2} \ValGenWJ_1 & = \ValGenWJ_1 = \ValGenWJ_1 \WJProj\sub{\sIndex_1,\sIndex_2}  ,\\
\label{EGeneratorsTwonodeRel}
\ValGenWJ_1^{\min (\sIndex_1,\sIndex_2) + 1} 
& = - \frac{[\max (\sIndex_1,\sIndex_2)+1]}{[\sIndex_1][\sIndex_2]} \, \ValGenWJ_1^{\min (\sIndex_1,\sIndex_2)} .
\end{align}
Relations~(\ref{EGeneratorsTwonodeUnitrel0},~\ref{EGeneratorsTwonodeUnitrel1}) imply that 
$\WJProj\sub{\sIndex_1,\sIndex_2} = \mathbf{1}_{\WJ\sub{\sIndex_1,\sIndex_2}}$ is the unit in $\WJ\sub{\sIndex_1,\sIndex_2}(\nu)$.

\item \label{GeneratorThmTwoNodeItem2}
It has the $\min (\sIndex_1,\sIndex_2) + 1$ generators 
\begin{align}  \label{MasterDiagramsWJTwonode}
\ValGenMWJ\super{s}_1 \quad = \quad \vcenter{\hbox{\includegraphics[scale=0.275]{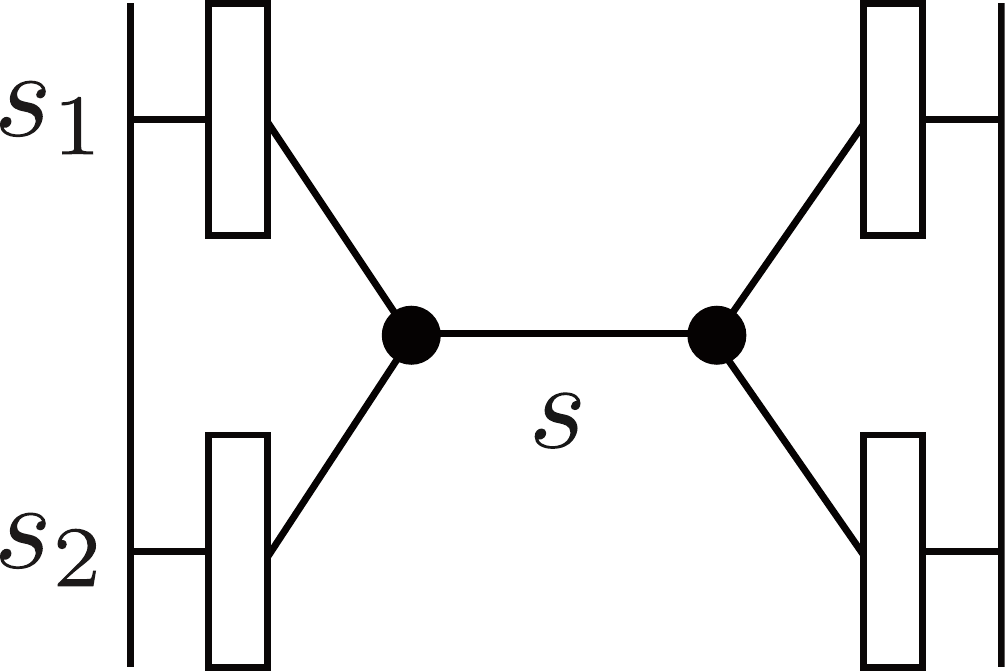} ,}}  
\end{align}  
with $s \in \DefectSet\sub{\sIndex_1, \sIndex_2} = \{ |\sIndex_1 -\sIndex_2|, |\sIndex_1 -\sIndex_2| + 2 , \ldots , \sIndex_1 + \sIndex_2 \}$,
%
which satisfy exclusively the relations
\begin{align} 
\label{MasterDiagramsWJTwonodeRel}
\ValGenMWJ\super{s}_1 \ValGenMWJ\super{s'}_1 
& =
\delta_{s, s'} \frac{ \ThetaNet(\sIndex_1,\sIndex_2,s) }{(-1)^s [s+1]} \, \ValGenMWJ\super{s}_1 , \qquad 
\textnormal{for all } s, s' \, \in \, \DefectSet\sub{\sIndex_1,\sIndex_2} .
\end{align}
The unit in $\WJ\sub{\sIndex_1,\sIndex_2}(\nu)$ is given by
\begin{align}
\label{MasterDiagramsWJTwonodeUnitrel}
\mathbf{1}_{\WJ\sub{\sIndex_1,\sIndex_2}}
& = \sum_{s \, \in \, \DefectSet\sub{\sIndex_1,\sIndex_2}} \frac{(-1)^s [s+1]}{\ThetaNet(\sIndex_1,\sIndex_2,s)} \, \ValGenMWJ\super{s}_1  .
\end{align}
\end{enumerate}
In particular, each of the following sets forms a basis for $\WJ\sub{\sIndex_1,\sIndex_2}(\nu)$:
\begin{align}
\label{Basis1Twonode}
& \big\{ \mathbf{1}_{\WJ\sub{\sIndex_1,\sIndex_2}} \big\}  \cup \big\{ \ValGenWJ_1^s \, \big| \, s = 1,2,\ldots,\min (\sIndex_1,\sIndex_2)\big\} , \\
\label{Basis2Twonode}
& \big\{ \ValGenMWJ\super{s} \, \big| \, s = |\sIndex_1 - \sIndex_2|, \, |\sIndex_1 - \sIndex_2| + 2 , \,\ldots ,\,\sIndex_1 + \sIndex_2 \big\} , \\
\label{Basis3Twonode}
& \big\{ \mathbf{1}_{\WJ\sub{\sIndex_1,\sIndex_2}} \big\}  \cup \big\{
 \ValGenMWJ\super{s} \, \big| \, s = |\sIndex_1 - \sIndex_2|, \, |\sIndex_1 - \sIndex_2| + 2 , \ldots , \, \sIndex_1 + \sIndex_2 - 2 \big\} .
\end{align}

\end{theorem}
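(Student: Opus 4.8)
The plan is to prove Theorem~\ref{GeneratorThmTwoNode} in two halves, corresponding to its items~\ref{GeneratorThmTwoNodeItem1} and~\ref{GeneratorThmTwoNodeItem2}, and then deduce the three basis statements. Throughout, write $p := \min(\sIndex_1,\sIndex_2)$ and recall from Corollary~\ref{InitialCaseCor} that $\WJ\sub{\sIndex_1,\sIndex_2}(\nu)$ is generated by $\WJProj\sub{\sIndex_1,\sIndex_2}$ and $\ValGenWJ_1$, and that $\dim\WJ\sub{\sIndex_1,\sIndex_2}(\nu) = \sum_{s\in\DefectSet\sub{\sIndex_1,\sIndex_2}} 1^2 = p+1$ by~\eqref{DimOfWJ} together with $\DefectSet\sub{s,t}$ from~\eqref{SpecialDefSet} (each standard module $\PS\sub{\sIndex_1,\sIndex_2}\super{s}$ being one-dimensional, spanned by the single Jones--Wenzl link pattern joining the two boxes by $\tfrac12(\sIndex_1+\sIndex_2-s)$ cables). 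So the target is a $(p+1)$-dimensional algebra, and the whole theorem amounts to identifying it concretely.

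For item~\ref{GeneratorThmTwoNodeItem1}: relations~\eqref{EGeneratorsTwonodeUnitrel0} and~\eqref{EGeneratorsTwonodeUnitrel1} are immediate from~\eqref{WJunit} and property~\ref{wj1item}. For the key relation~\eqref{EGeneratorsTwonodeRel}, I would first observe that $\ValGenWJ_1^k$ is precisely the tangle~\eqref{PthDiagram} of Lemma~\ref{InitialCaseLem} with a cable of size $k$ joining the two boxes (up to known nonzero scalars) — more precisely, I would track the diagrammatic computation already carried out in the proof of Lemma~\ref{InitialCaseLem}: the identity \eqref{TopLineTangles}--\eqref{BotTangle} expresses $\ValGenWJ_1 \cdot (\text{cable of size }k)$ as a linear combination of the cable of size $k-1$ and the cable of size $k+1$, with explicit quantum-integer coefficients. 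Iterating this downward-recursion from $k = p$ (where the ``cable of size $k+1$'' term is forced to vanish because $k+1 > p = \min(\sIndex_1,\sIndex_2)$, since one cannot route more than $\min(\sIndex_1,\sIndex_2)$ parallel links between boxes of sizes $\sIndex_1,\sIndex_2$), the $\ValGenWJ_1^{p+1}$ term collapses and one reads off directly that $\ValGenWJ_1^{p+1} = c\,\ValGenWJ_1^{p}$ for the scalar $c$ appearing as the coefficient of~\eqref{TopLineTangles} at $k=p$, which is $-[\sIndex_1+\sIndex_2-p+1]/([\sIndex_1][\sIndex_2]) = -[\max(\sIndex_1,\sIndex_2)+1]/([\sIndex_1][\sIndex_2])$. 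That these are \emph{all} the relations then follows by a dimension count: the abstract algebra $\mathsf{A}$ presented by~\eqref{EGeneratorsTwonodeUnitrel0}--\eqref{EGeneratorsTwonodeRel} is spanned by $\{\mathbf 1, \ValGenWJ_1, \ldots, \ValGenWJ_1^{p}\}$ (every higher power reduces via~\eqref{EGeneratorsTwonodeRel}, and the first two relations make $\WJProj\sub{\sIndex_1,\sIndex_2}$ a two-sided unit), so $\dim\mathsf{A} \le p+1$; the surjection $\mathsf{A} \twoheadrightarrow \WJ\sub{\sIndex_1,\sIndex_2}(\nu)$ from Corollary~\ref{InitialCaseCor} together with $\dim\WJ\sub{\sIndex_1,\sIndex_2}(\nu) = p+1$ forces it to be an isomorphism.

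For item~\ref{GeneratorThmTwoNodeItem2}: relation~\eqref{MasterDiagramsWJTwonodeRel} is exactly~\eqref{orthogonalidem} restricted to $\np_\multii = 2$ (equivalently, Lemma~\ref{CollectionLem}\,\ref{ExtractLemItem} together with the Theta-network evaluation of Lemma~\ref{ThetaLem}), so the $\ValGenMWJ\super{s}_1$ are mutually orthogonal idempotents up to the scalars $\ThetaNet(\sIndex_1,\sIndex_2,s)/((-1)^s[s+1])$, which are finite and nonzero under $\max(\sIndex_1,\sIndex_2) < \ppmin(q)$ (so each rescaled $\ValGenMWJ\super{s}_1$ is a genuine nonzero idempotent). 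Relation~\eqref{MasterDiagramsWJTwonodeUnitrel} is~\eqref{AllProjesSumToOne} at $\np_\multii = 2$. Since there are $|\DefectSet\sub{\sIndex_1,\sIndex_2}| = p+1$ such orthogonal idempotents summing to the unit, and each spans a one-sided piece, the $p+1$ elements $\ValGenMWJ\super{s}_1$ are linearly independent, hence a basis of the $(p+1)$-dimensional algebra; and they generate it since their sum is the unit. That~\eqref{MasterDiagramsWJTwonodeRel} is the complete set of relations again follows from the dimension count: the abstract algebra on generators $e_s$ ($s\in\DefectSet\sub{\sIndex_1,\sIndex_2}$) with relations $e_s e_{s'} = \delta_{s,s'}\lambda_s e_s$, $\lambda_s \ne 0$, is a product of $p+1$ copies of $\bC$, of dimension exactly $p+1$, and the surjection onto $\WJ\sub{\sIndex_1,\sIndex_2}(\nu)$ is therefore an isomorphism. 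Finally, the three basis claims: \eqref{Basis2Twonode} is what we just established; \eqref{Basis1Twonode} follows from item~\ref{GeneratorThmTwoNodeItem1}, since $\{\mathbf 1, \ValGenWJ_1, \ldots, \ValGenWJ_1^{p}\}$ spans a $(p+1)$-dimensional algebra and hence is a basis; and \eqref{Basis3Twonode} follows from~\eqref{Basis2Twonode} via~\eqref{MasterDiagramsWJTwonodeUnitrel}, which lets one trade $\ValGenMWJ\super{\sIndex_1+\sIndex_2}_1$ (the top index) for $\mathbf 1$ minus a combination of the others — a unitriangular change of basis.

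The main obstacle I anticipate is pinning down relation~\eqref{EGeneratorsTwonodeRel} cleanly, i.e.\ verifying that $\ValGenWJ_1^k$ really equals the cable-of-size-$k$ tangle of Lemma~\ref{InitialCaseLem} up to an explicitly computable scalar and that the recursion~\eqref{TopLineTangles}--\eqref{BotTangle} terminates with exactly the stated coefficient at $k = p$; this is a bookkeeping exercise in quantum integers and the projector decomposition formula~\eqref{SpecialT} of Proposition~\ref{SpecialTProp}, but it is the one place where a sign or a quantum-integer factor could easily go astray, and it is essential for identifying the scalar $-[\max(\sIndex_1,\sIndex_2)+1]/([\sIndex_1][\sIndex_2])$ precisely. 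Everything else is either a direct quotation of an already-proved diagrammatic identity or a soft dimension-counting argument built on $\dim\WJ\sub{\sIndex_1,\sIndex_2}(\nu) = \min(\sIndex_1,\sIndex_2)+1$.
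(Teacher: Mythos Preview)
Your strategy is essentially the paper's: compute $\dim\WJ\sub{\sIndex_1,\sIndex_2}(\nu)=\min(\sIndex_1,\sIndex_2)+1$ from~\eqref{DimOfWJ}, invoke Lemma~\ref{InitialCaseLem} for generation and its recursion for relation~\eqref{EGeneratorsTwonodeRel}, use~\eqref{orthogonalidem} and~\eqref{AllProjesSumToOne} for item~\ref{GeneratorThmTwoNodeItem2} (the paper cites~\eqref{LoopErasure1} directly and proves~\eqref{MasterDiagramsWJTwonodeUnitrel} by checking that the right side acts as a two-sided unit on each $\ValGenMWJ\super{s}_1$), and close both items with the same dimension count you describe. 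Your arguments for the three bases~\eqref{Basis1Twonode}--\eqref{Basis3Twonode} are likewise the paper's, just made explicit.

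One caution on your derivation of~\eqref{EGeneratorsTwonodeRel}. The recursion~\eqref{TopLineTangles}--\eqref{BotTangle} expresses $\ValGenWJ_1\cdot(\text{cable-}k)$ in terms of the cable-$k$ and cable-$(k{+}1)$ tangles (not $k{-}1$ and $k{+}1$ as you wrote), and $\ValGenWJ_1^{\,k}$ is \emph{not} a scalar multiple of the cable-$k$ tangle for $k\ge 2$: the triangular change of basis between $\{\ValGenWJ_1^{\,j}\}_{j\le k}$ and $\{\text{cable-}j\}_{j\le k}$ has genuine lower-order terms. So what the recursion delivers at $k=p:=\min(\sIndex_1,\sIndex_2)$ is the eigenvector statement $\ValGenWJ_1\cdot(\text{cable-}p)=a_p\,(\text{cable-}p)$ with $a_p=-[\max(\sIndex_1,\sIndex_2)+1]/([\sIndex_1][\sIndex_2])$, and this does \emph{not} by itself yield $\ValGenWJ_1^{\,p+1}=a_p\,\ValGenWJ_1^{\,p}$. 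The paper is just as terse here (it says only that ``the calculation leading to~\eqref{TopLineTangles}\ldots gives~\eqref{EGeneratorsTwonodeRel}''), so you are faithfully following its outline; but since you spell the step out, be aware that your ``one reads off directly'' is not actually direct --- passing from the cable-recursion to a clean relation among powers of $\ValGenWJ_1$ requires unwinding the full triangular change of basis, and your own closing paragraph is right to flag exactly this bookkeeping as the delicate point.
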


\begin{proof} 
We first record that~(\ref{SpecialDefSet},~\ref{DimOfWJ},~\ref{PreRecursion2}) 
give the dimension of $\WJ\sub{\sIndex_1,\sIndex_2}(\nu)$:
\begin{align} \label{DimOfWJTwoNode}
\dim \WJ\sub{\sIndex_1,\sIndex_2}(\nu) \overset{\eqref{DimOfWJ}}{=} 
\sum_{s \, \in \, \DefectSet\sub{\sIndex_1,\sIndex_2}} \big(\Dim\sub{\sIndex_1,\sIndex_2}\super{s}\big)^2 
\overset{\eqref{PreRecursion2}}{=} \# \DefectSet\sub{\sIndex_1,\sIndex_2}
\overset{\eqref{SpecialDefSet}}{=} \min (\sIndex_1,\sIndex_2) + 1 .
\end{align}
Then, we prove items~\ref{GeneratorThmTwoNodeItem1}--\ref{GeneratorThmTwoNodeItem2} as follows:
\begin{enumerate}[leftmargin=*]
\itemcolor{red}
\item Lemma~\ref{InitialCaseLem} shows that~\eqref{EGeneratorsTwonode} forms a minimal generating set for $\WJ\sub{\sIndex_1,\sIndex_2}(\nu)$.
Relations~(\ref{EGeneratorsTwonodeUnitrel0}--\ref{EGeneratorsTwonodeUnitrel1}) follow from idempotent property~\eqref{ProjectorID0}.
The calculation leading to~\eqref{TopLineTangles} in the proof of lemma~\ref{InitialCaseLem} gives the other relations~\eqref{EGeneratorsTwonodeRel}. 

\item 
The upper-triangular system of equations found in~\eqref{UpperTri} in the end of section~\ref{TheProofSec}
enables to write every tangle in $\WJ\sub{\sIndex_1,\sIndex_2}(\nu)$ as a polynomial in the elements of the generating set~\eqref{MasterDiagramsWJTwonode}.
(The technical lemma~\ref{FiniteAndNonzeroLem} in appendix~\ref{TLRecouplingSect} shows that the coefficients in this expression are finite.)
Identity~\eqref{LoopErasure1} from lemma~\ref{CollectionLem} in appendix~\ref{TLRecouplingSect} gives relations~\eqref{MasterDiagramsWJTwonodeRel}.
(The technical lemma~\ref{FiniteAndNonzeroLem2} in appendix~\ref{TLRecouplingSect} shows that the coefficients in these relations are finite.)


Finally, we prove identity~\eqref{MasterDiagramsWJTwonodeUnitrel}. 
Because the left side of~\eqref{MasterDiagramsWJTwonodeUnitrel} is the unique unit element in $\WJ\sub{\sIndex_1,\sIndex_2}(\nu)$, 
we only need to prove that the right side of~\eqref{MasterDiagramsWJTwonodeUnitrel} is a unit in $\WJ\sub{\sIndex_1,\sIndex_2}(\nu)$ too. 
For this, it suffices to show that multiplication by it, either from the left, or from the right, 
of the generator tangles $\smash{\ValGenMWJ\super{s}_1}$ leaves them invariant. Indeed, we have 
\begin{align}
\left( \sum_{s \, \in \, \DefectSet\sub{\sIndex_1,\sIndex_2}} \frac{(-1)^s [s+1]}{\ThetaNet(\sIndex_1,\sIndex_2,s)} \, \ValGenMWJ\super{s}_1 \right)
\ValGenMWJ\super{s'}_1
\overset{\eqref{orthogonalidem}}{=} \ValGenMWJ\super{s'}_1 
\overset{\eqref{orthogonalidem}}{=} \ValGenMWJ\super{s'}_1 
\left( \sum_{s \, \in \, \DefectSet\sub{\sIndex_1,\sIndex_2}} \frac{(-1)^s [s+1]}{\ThetaNet(\sIndex_1,\sIndex_2,s)} \, \ValGenMWJ\super{s}_1 \right) ,
\end{align}
for all $s' \in \DefectSet\sub{\sIndex_1,\sIndex_2}$. This proves identity~\eqref{MasterDiagramsWJTwonodeUnitrel}.  
\end{enumerate}
From the explicit dimension~\eqref{DimOfWJTwoNode} of $\WJ\sub{\sIndex_1,\sIndex_2}(\nu)$
it is now clear that generators~\eqref{EGeneratorsTwonode} (resp.~\eqref{MasterDiagramsWJTwonode}) satisfy no other relations
and that each of the collections~(\ref{Basis1Twonode},~\ref{Basis2Twonode},~\ref{Basis3Twonode}) 
is a basis for $\WJ\sub{\sIndex_1,\sIndex_2}(\nu)$. This concludes the proof.
\end{proof}

We note that identity~\eqref{MasterDiagramsWJTwonodeUnitrel} is also a special case of~\cite[equation~\red{9.15}, page~\red{99}]{kl}.

\subsection{Case of three projectors where two indices equal one} \label{ThreeNodeCase11}

In this section, we consider $\multii = (\sIndex_1,\sIndex_2, \sIndex_3)$.
We further restrict two of the indices in $\multii$ to equal one.
In this special case, generators~\eqref{MasterDiagramsWJ-00} are particularly simple:
\begin{enumerate}[leftmargin=*]
\itemcolor{red}
\item \label{case1gen} If $\sIndex_1 = \sIndex_2 = 1$, then generators~\eqref{MasterDiagramsWJ-00} are 
\begin{align}
\label{Gen3nodefirst1}
\ValGenMWJ\super{0}_1 \quad  
= & \quad \vcenter{\hbox{\includegraphics[scale=0.275]{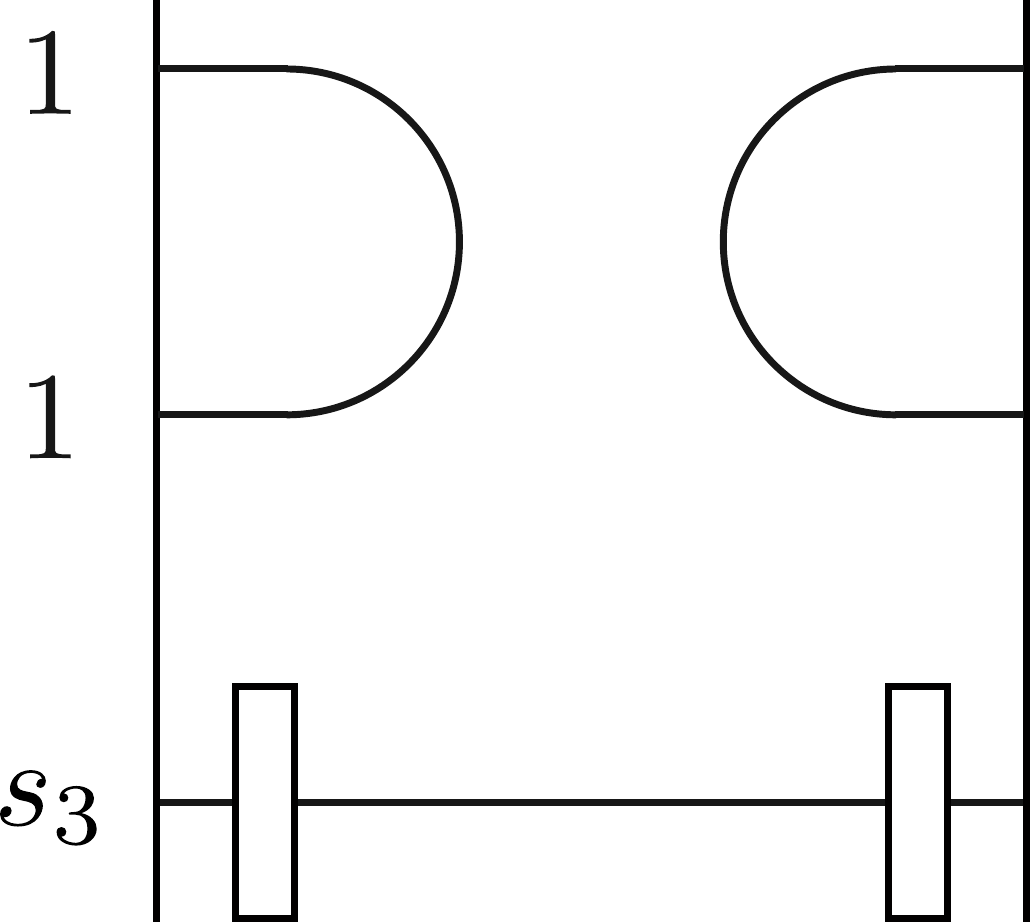} ,}} 
&& \ValGenMWJ\super{2}_1 \hphantom{{.}^{+1}} \quad
= \quad \vcenter{\hbox{\includegraphics[scale=0.275]{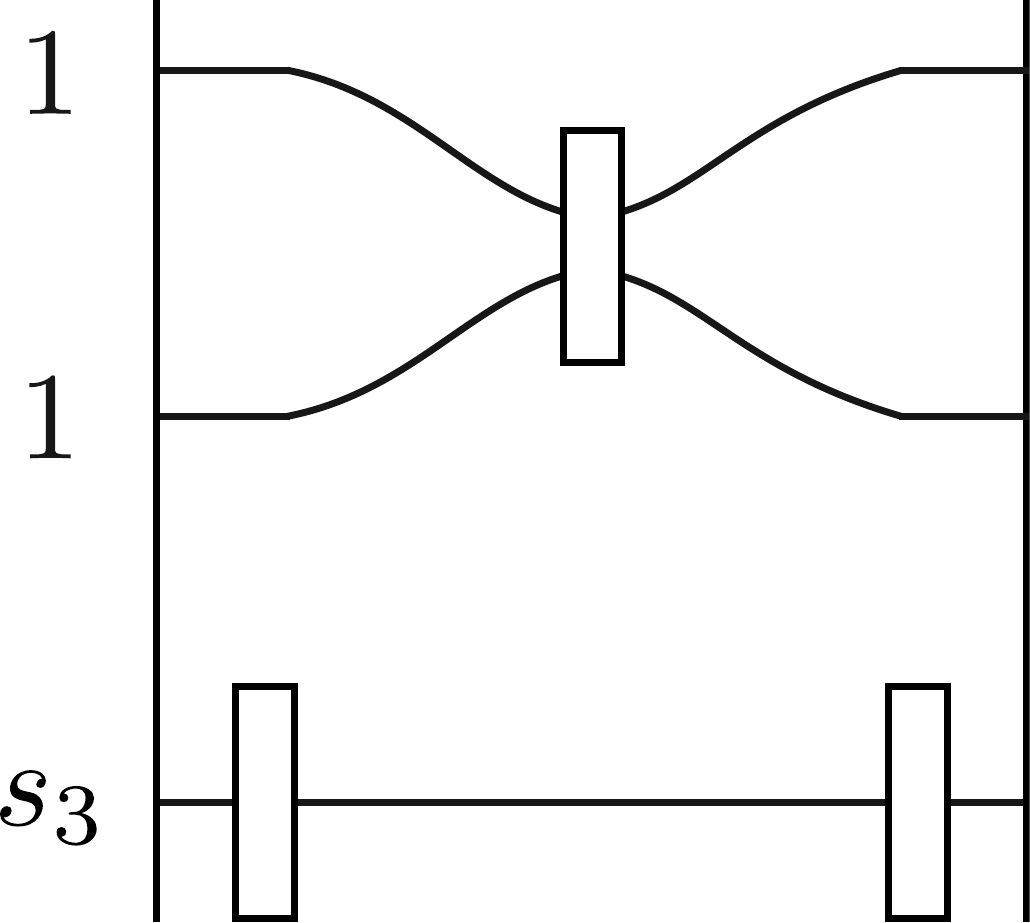} ,}} \\[1em]
\label{Gen3nodefirst2}
\ValGenMWJ\super{\sIndex_3-1}_2 \quad
= & \quad \vcenter{\hbox{\includegraphics[scale=0.275]{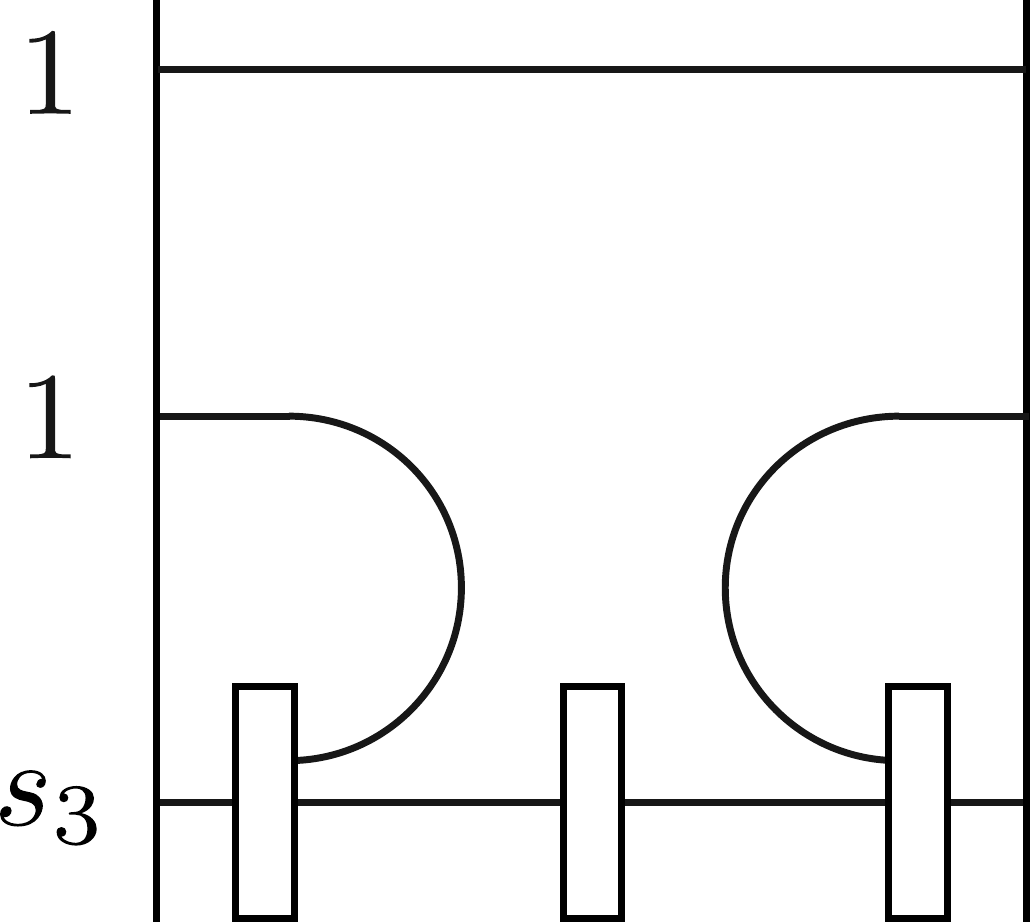} ,}}
&& \ValGenMWJ\super{\sIndex_3+1}_2 \quad
= \quad \vcenter{\hbox{\includegraphics[scale=0.275]{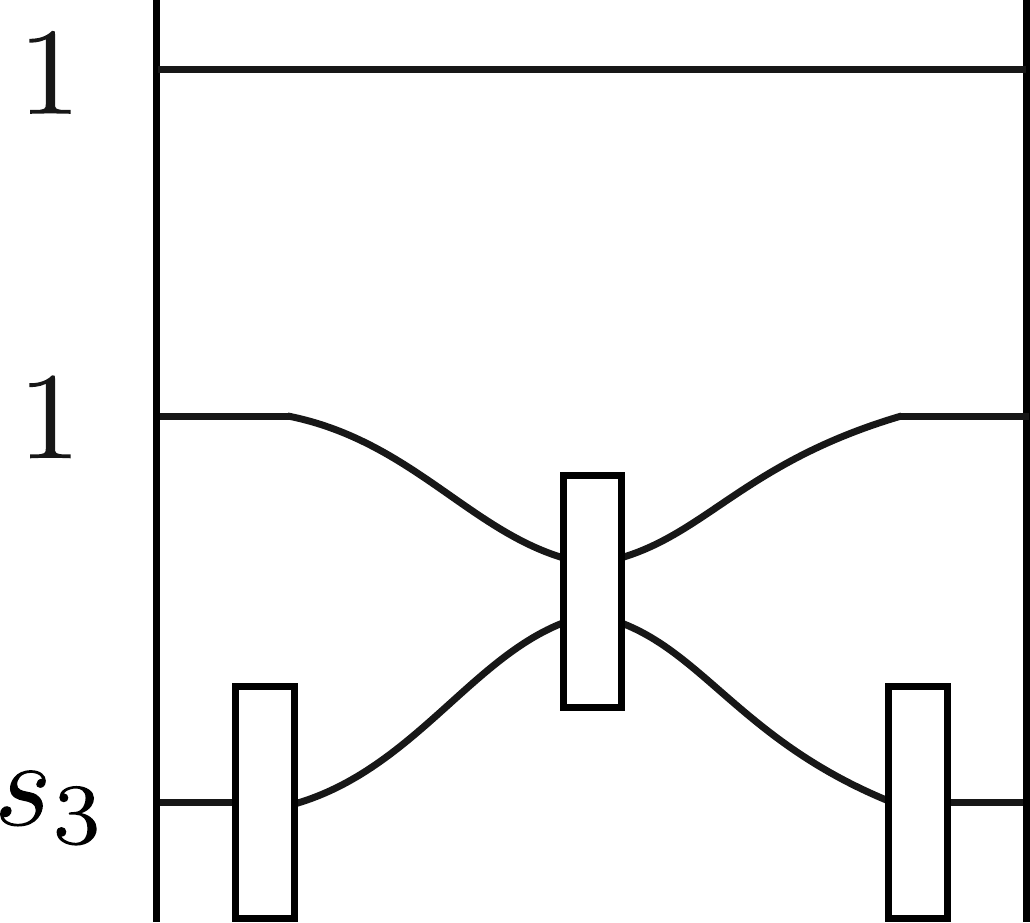} .}}
\end{align}

\item If $\sIndex_1 = \sIndex_3 = 1$, then generators~\eqref{MasterDiagramsWJ-00} are 
\begin{align} 
\label{Gen3nodemiddle1}
& \ValGenMWJ\super{\sIndex_2-1}_1 \quad  
= \quad \vcenter{\hbox{\includegraphics[scale=0.275]{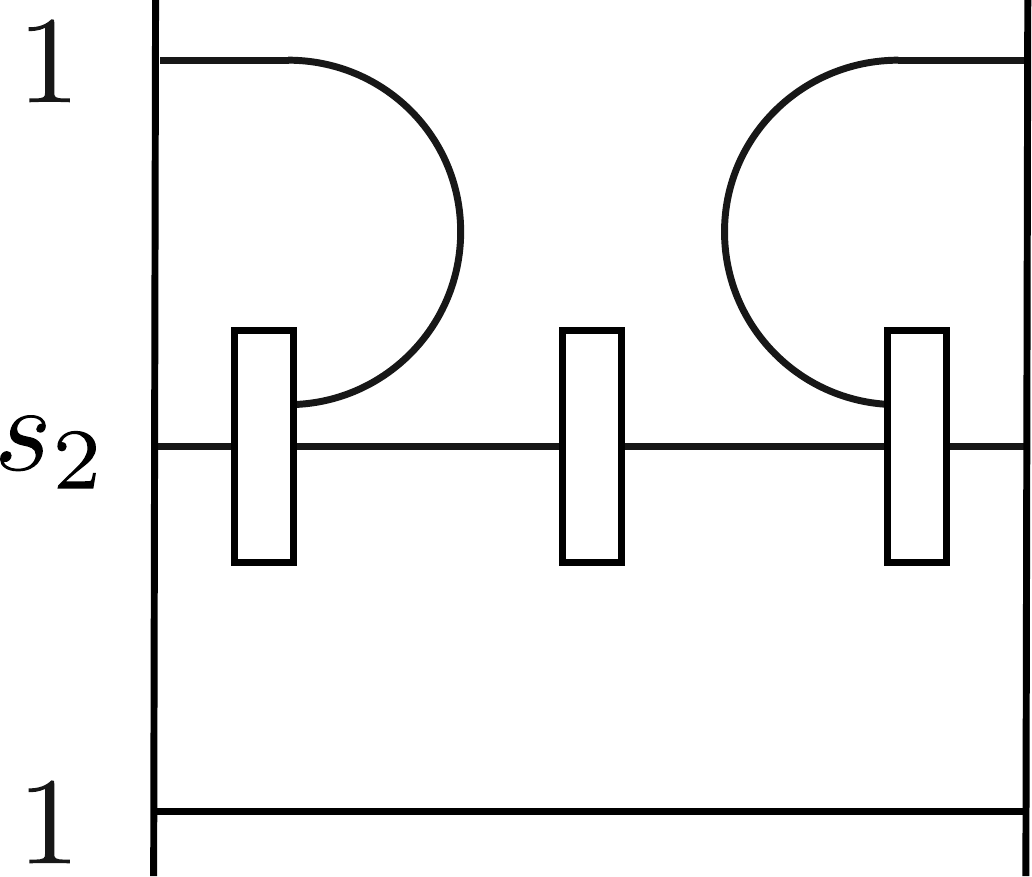} ,}} 
&& \ValGenMWJ\super{\sIndex_2+1}_1 \quad
= \quad \vcenter{\hbox{\includegraphics[scale=0.275]{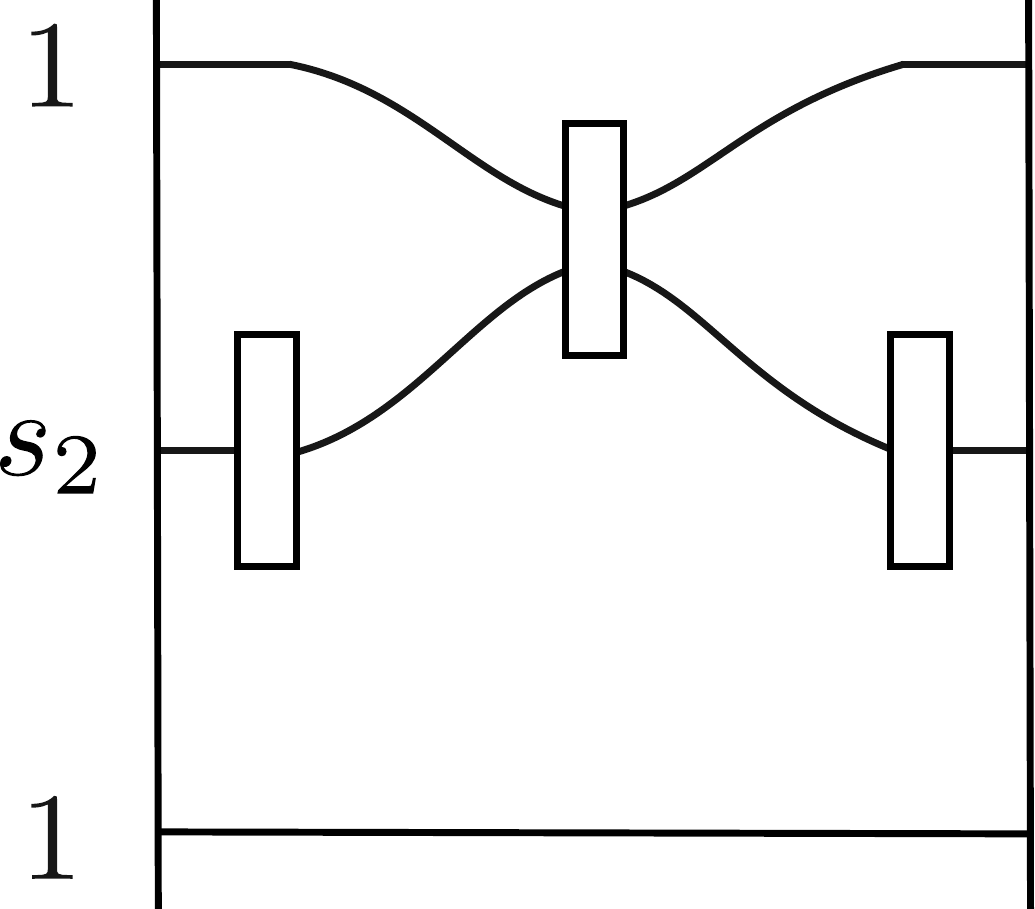} ,}} \\[1em]
\label{Gen3nodemiddle2}
& \ValGenMWJ\super{\sIndex_2-1}_2 \quad
= \quad \vcenter{\hbox{\includegraphics[scale=0.275]{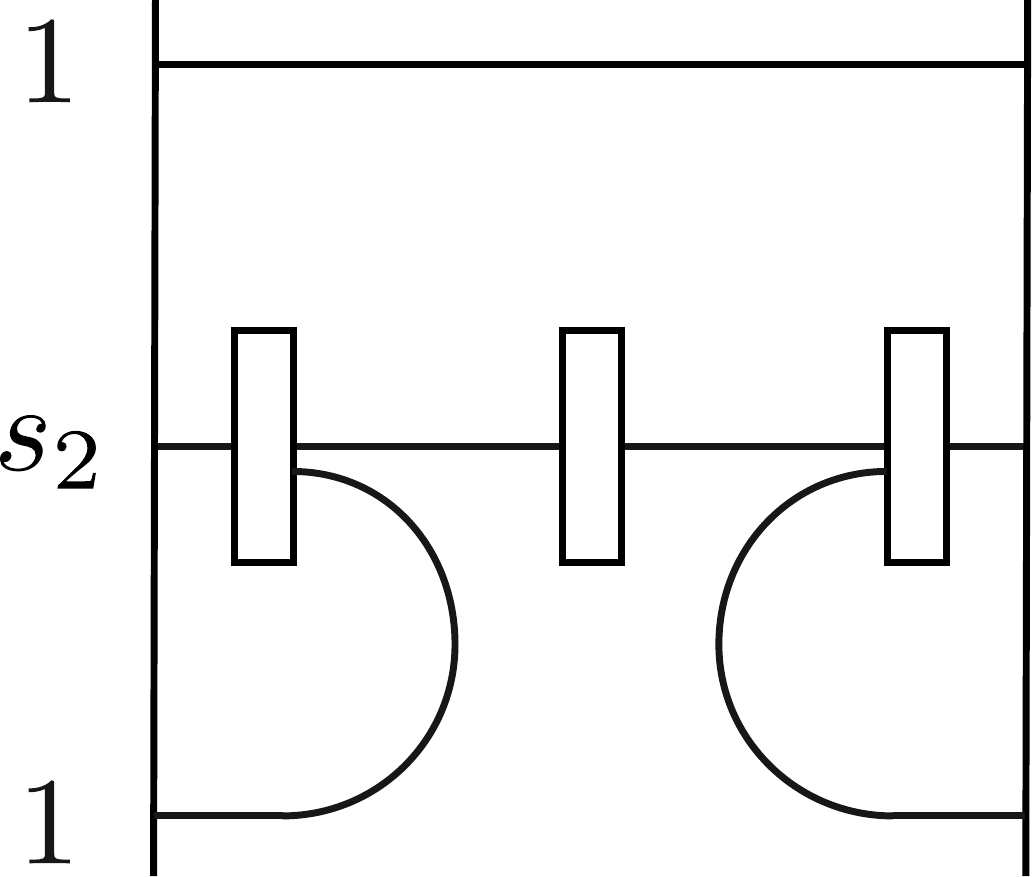} ,}}
&& \ValGenMWJ\super{\sIndex_2+1}_2 \quad
= \quad \vcenter{\hbox{\includegraphics[scale=0.275]{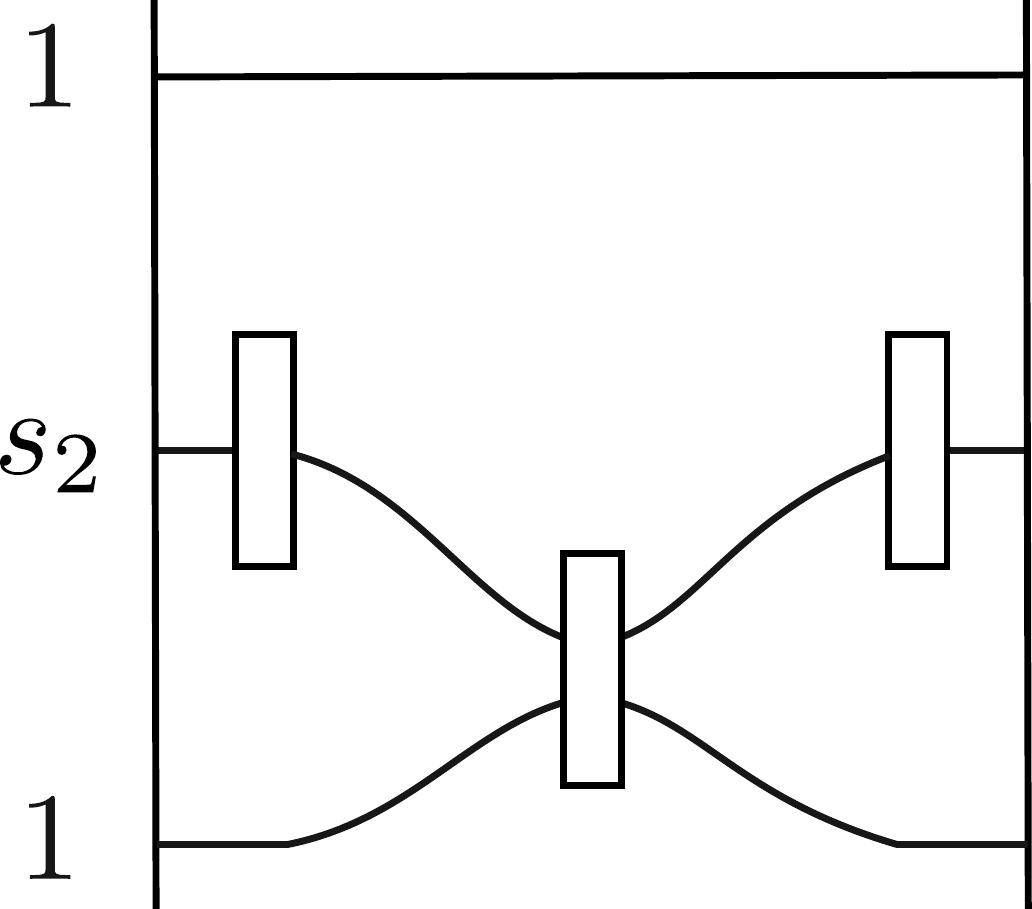} .}}
\end{align}


\item If $\sIndex_2 = \sIndex_3 = 1$, then generators~\eqref{MasterDiagramsWJ-00} are similar to those in case~\ref{case1gen},
but reflected with across a horizontal axis.
\end{enumerate}

We give a complete set of relations for these generators.

\begin{prop} \label{GeneratorThmThreeNode}
Suppose $\multii = (\sIndex_1,\sIndex_2, \sIndex_3) \in \bZpos^3$ and $\Summed_\multii < \ppmin(q)$.
If exactly two of the indices 
in $\multii$
equal one, then the Jones-Wenzl algebra $\WJ\sub{\sIndex_1,\sIndex_2,\sIndex_3}(\nu)$
has the following presentation in terms of generators and relations: 
it has four generators
\begin{align} \label{GeneratorThmThreeNodeGensabs}
\big\{ \ValGenMWJ\super{s}_1 \, \big| \, s \in \DefectSet\sub{\sIndex_1,\sIndex_2}\big\}
\cup
\big\{ \ValGenMWJ\super{s}_2 \, \big| \, s \in \DefectSet\sub{\sIndex_2,\sIndex_3} \big\}
\end{align}
which satisfy exclusively the following relations:
\begin{align}
\label{MasterDiagramsWJThreenodeUnitrel}
\mathbf{1}_{\WJ\sub{\sIndex_1,\sIndex_2,\sIndex_3}} 
& = \sum_{s \, \in \, \DefectSet\sub{\sIndex_i,\sIndex_{i+1}}} \frac{(-1)^s [s+1]}{\ThetaNet(\sIndex_i,\sIndex_{i+1},s)} \, \ValGenMWJ\super{s}_i , \\
\label{MasterDiagramsWJThreenodeSquareRel}
\ValGenMWJ\super{s}_i \ValGenMWJ\super{s'}_i 
& =
\delta_{s, s'} \frac{ \ThetaNet(\sIndex_i,\sIndex_{i+1},s) }{(-1)^s [s+1]} \, \ValGenMWJ\super{s}_i , \qquad \textnormal{for all } s, s' \, \in \, \DefectSet\sub{\sIndex_i,\sIndex_{i+1}} ,
\end{align}
for $i=1,2$, and
\begin{enumerate} 
\itemcolor{red}
\item \label{MasterDiagramsWJThreenodeItem1}
if $\sIndex_1 = \sIndex_2 = 1$, 
\begin{align} \label{MasterDiagramsWJThreenodeRel1}
\ValGenMWJ\super{0}_1 \ValGenMWJ\super{\sIndex_3-1}_2 \ValGenMWJ\super{0}_1 & = \ValGenMWJ\super{0}_1 ,
\end{align}

\item \label{MasterDiagramsWJThreenodeItem2}
if $\sIndex_1 = \sIndex_3 = 1$, 
\begin{align} \label{MasterDiagramsWJThreenodeRel2}
\ValGenMWJ\super{\sIndex_2-1}_1 \ValGenMWJ\super{\sIndex_2-1}_2 \ValGenMWJ\super{\sIndex_2-1}_1
- \ValGenMWJ\super{\sIndex_2-1}_2 \ValGenMWJ\super{\sIndex_2-1}_1 \ValGenMWJ\super{\sIndex_2-1}_2
& = \frac{1}{[\sIndex_2]^2}  \, \big( \ValGenMWJ\super{\sIndex_2-1}_1 - \ValGenMWJ\super{\sIndex_2-1}_2 \big) ,
\end{align}

\item \label{MasterDiagramsWJThreenodeItem3}
if $\sIndex_2 = \sIndex_3 = 1$, 
\begin{align} \label{MasterDiagramsWJThreenodeRel3}
\ValGenMWJ\super{0}_2 \ValGenMWJ\super{\sIndex_1-1}_1 \ValGenMWJ\super{0}_2 & = \ValGenMWJ\super{0}_2 .
\end{align}
\end{enumerate} 
In particular, each of the following sets forms a basis for $\WJ\sub{\sIndex_1,\sIndex_2,\sIndex_3}(\nu)$:
\begin{align} 
\label{Basis1Threenode}
& \big\{ \mathbf{1}_{\WJ\sub{\sIndex_1,\sIndex_2,\sIndex_3}}  , \; \ValGenMWJ\super{0}_1 , \; \ValGenMWJ\super{\sIndex_3-1}_2 , \;
\ValGenMWJ\super{0}_1 \ValGenMWJ\super{\sIndex_3-1}_2 , \;
\ValGenMWJ\super{\sIndex_3-1}_2 \ValGenMWJ\super{0}_1 , \;
\ValGenMWJ\super{\sIndex_3-1}_2 \ValGenMWJ\super{0}_1 \ValGenMWJ\super{\sIndex_3-1}_2  \big\} , \quad && \textnormal{if } \sIndex_1 = \sIndex_2 = 1 , \\
\label{Basis2Threenode}
& \big\{ \mathbf{1}_{\WJ\sub{\sIndex_1,\sIndex_2,\sIndex_3}}  , \; \ValGenMWJ\super{\sIndex_2-1}_1 , \; \ValGenMWJ\super{\sIndex_2-1}_2 , \;
\ValGenMWJ\super{\sIndex_2-1}_1 \ValGenMWJ\super{\sIndex_2-1}_2 , \;
\ValGenMWJ\super{\sIndex_2-1}_2 \ValGenMWJ\super{\sIndex_2-1}_1 , \;
\ValGenMWJ\super{\sIndex_2-1}_2 \ValGenMWJ\super{\sIndex_2-1}_1 \ValGenMWJ\super{\sIndex_2-1}_2 \big\} , \quad && \textnormal{if } \sIndex_1 = \sIndex_3 = 1 , \\
\label{Basis3Threenode}
& \big\{ \mathbf{1}_{\WJ\sub{\sIndex_1,\sIndex_2,\sIndex_3}}  , \; \ValGenMWJ\super{\sIndex_2-1}_1 , \; \ValGenMWJ\super{\sIndex_2-1}_2 , \;
\ValGenMWJ\super{\sIndex_2-1}_1 \ValGenMWJ\super{\sIndex_2-1}_2 , \;
\ValGenMWJ\super{\sIndex_2-1}_2 \ValGenMWJ\super{\sIndex_2-1}_1 , \;
\ValGenMWJ\super{\sIndex_2-1}_2 \ValGenMWJ\super{\sIndex_2-1}_1 \ValGenMWJ\super{\sIndex_2-1}_2  \big\} , \quad && \textnormal{if } \sIndex_2 = \sIndex_3 = 1 .
\end{align}
\end{prop}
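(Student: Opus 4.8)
The plan is to pit the diagram algebra $\WJ\sub{\sIndex_1,\sIndex_2,\sIndex_3}(\nu)$, whose dimension is computable, against the abstract associative unital algebra $\mathsf{B}$ presented by the four generators~\eqref{GeneratorThmThreeNodeGensabs} subject to the stated relations, and to identify the two via a dimension count. First I would observe that, because exactly two of $\sIndex_1,\sIndex_2,\sIndex_3$ equal one, each two--box set $\DefectSet\sub{\sIndex_i,\sIndex_{i+1}}$ ($i=1,2$) has exactly two elements by~\eqref{SpecialDefSet}, so the four tangles~\eqref{GeneratorThmThreeNodeGensabs} are precisely the generating set of item~\ref{GeneratorThmItem2} of theorem~\ref{GeneratorThm}; since $\Summed_\multii<\ppmin(q)$, that theorem applies and these four tangles generate $\WJ\sub{\sIndex_1,\sIndex_2,\sIndex_3}(\nu)$. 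Next I would compute $\dim\WJ\sub{\sIndex_1,\sIndex_2,\sIndex_3}(\nu)$ from~\eqref{DimOfWJ} and the recursion~\eqref{PreRecursion2}: in each of the three cases one finds $\DefectSet_\multii$ has three elements with dimensions $\Dim_\multii\super{s}$ equal to $1,2,1$, whence $\dim\WJ\sub{\sIndex_1,\sIndex_2,\sIndex_3}(\nu)=1^2+2^2+1^2=6$, which matches the cardinality of each of the sets~\eqref{Basis1Threenode},~\eqref{Basis2Threenode},~\eqref{Basis3Threenode}.

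The second step is to verify that the listed relations hold in $\WJ\sub{\sIndex_1,\sIndex_2,\sIndex_3}(\nu)$. The unit relations~\eqref{MasterDiagramsWJThreenodeUnitrel} for $i=1,2$ are exactly the special instances~\eqref{AllProjesSumToOne} of the quantum $6j$--symbol identity, and the (scalar--)idempotency and orthogonality relations~\eqref{MasterDiagramsWJThreenodeSquareRel} are exactly~\eqref{orthogonalidem}; the assumption $\Summed_\multii<\ppmin(q)$ ensures the scalars $\ThetaNet(\sIndex_i,\sIndex_{i+1},s)/((-1)^s[s+1])$ there are finite and nonzero, so each $\ValGenMWJ\super{s}_i$ is a nonzero multiple of a genuine idempotent. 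The case--specific relations~\eqref{MasterDiagramsWJThreenodeRel1}--\eqref{MasterDiagramsWJThreenodeRel3} I would establish by direct diagram calculation: decompose the three--vertices via~\eqref{3vertex1} and simplify the resulting networks using the loop-- and bubble--removal identities of lemma~\ref{CollectionLem} in appendix~\ref{TLRecouplingSect}, in the same manner as the two--box computations in lemma~\ref{InitialCaseLem} and theorem~\ref{GeneratorThmTwoNode}. (For the $\sIndex_1=\sIndex_3=1$ case, one may alternatively derive~\eqref{MasterDiagramsWJThreenodeRel2} from the Temperley--Lieb--type relation for generators~\eqref{EGenerators-00} using the invertible triangular change of generators~\eqref{UpperTri}.)

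The crux is the third step, bounding $\dim\mathsf{B}\le 6$ by showing the six monomials in the corresponding basis set span $\mathsf{B}$. Using the two instances $i=1,2$ of the unit relation~\eqref{MasterDiagramsWJThreenodeUnitrel} I would eliminate the two ``larger'' generators (the $\ValGenMWJ\super{s}_i$ with the bigger value of $s$), writing each as a linear combination of $\mathbf{1}_{\mathsf{B}}$ and the corresponding ``smaller'' generator; call the two smaller generators $x$ and $y$. By~\eqref{MasterDiagramsWJThreenodeSquareRel} one has $x^2=\lambda x$, $y^2=\mu y$ with $\lambda,\mu\neq0$, so any word in $x,y$ reduces to an alternating word, and the case--specific relation supplies the final collapse. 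In the cases $\sIndex_1=\sIndex_2=1$ and $\sIndex_2=\sIndex_3=1$, relations~\eqref{MasterDiagramsWJThreenodeRel1},~\eqref{MasterDiagramsWJThreenodeRel3} say that the relevant triple product is proportional to a single generator, which kills every alternating word of length at least four and leaves the span of $\{\mathbf{1},x,y,xy,yx,yxy\}$. In the case $\sIndex_1=\sIndex_3=1$, relation~\eqref{MasterDiagramsWJThreenodeRel2} expresses $xyx$ through $yxy$ and lower--order terms; iterating, every alternating word of length at least four reduces to the same span $\{\mathbf{1},x,y,xy,yx,yxy\}$ (here neither $xyx$ nor $yxy$ is a combination of shorter words, so the length--three layer is one--dimensional). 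One checks this six--element set is closed under multiplication modulo its own linear span, so it spans $\mathsf{B}$. Hence $\dim\mathsf{B}\le6$; combined with the surjection $\mathsf{B}\twoheadrightarrow\WJ\sub{\sIndex_1,\sIndex_2,\sIndex_3}(\nu)$ from steps one and two and with $\dim\WJ\sub{\sIndex_1,\sIndex_2,\sIndex_3}(\nu)=6$, this forces $\dim\mathsf{B}=6$ and the surjection to be an isomorphism. In particular the listed relations are complete, and the six spanning monomials, being six elements spanning a six--dimensional algebra, form a basis --- which gives~\eqref{Basis1Threenode},~\eqref{Basis2Threenode},~\eqref{Basis3Threenode}.

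I expect the rewriting in step three to be the main obstacle, because in the $\sIndex_1=\sIndex_3=1$ case one must track that $xyx$ and $yxy$ are interchangeable only modulo lower--order terms (not reducible to shorter words), so the argument has to keep exactly one of them in the spanning set, and one must honestly confirm multiplicative closure of the six chosen monomials modulo span so that no longer word escapes the reduction. The case $\sIndex_2=\sIndex_3=1$ moreover reduces to $\sIndex_1=\sIndex_2=1$ via the horizontal reflection identifying $\WJ\sub{\sIndex_1,1,1}(\nu)$ with $\WJ\sub{1,1,\sIndex_1}(\nu)$, so only two genuinely distinct computations are needed.
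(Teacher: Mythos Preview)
Your proposal is correct and follows essentially the same approach as the paper: invoke theorem~\ref{GeneratorThm} for generation, compute $\dim\WJ\sub{\sIndex_1,\sIndex_2,\sIndex_3}(\nu)=6$ via~\eqref{DimOfWJ}--\eqref{PreRecursion2}, verify the relations by direct diagram calculations (the paper uses proposition~\ref{SpecialTProp} and lemma~\ref{CollectionLem} exactly as you describe), and conclude by a dimension argument. The only difference is that you spell out the word-reduction bounding $\dim\mathsf{B}\le 6$ explicitly, whereas the paper compresses this into the phrase ``it is now clear that generators~\eqref{GeneratorThmThreeNodeGensabs} satisfy no other relations''; your more careful treatment of the $\sIndex_1=\sIndex_3=1$ case (where $xyx$ and $yxy$ differ only by lower-order terms) is a genuine improvement in rigor over the paper's terse conclusion.
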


\begin{proof}
We first find the dimension of $\WJ\sub{\sIndex_1,\sIndex_2,\sIndex_3}(\nu)$.
Because two of the indices $\sIndex_1, \sIndex_2, \sIndex_3$ equal one, we have
\begin{align} \label{DefSetThreeNode}
\DefectSet\sub{\sIndex_1,\sIndex_2,\sIndex_3} = \{ s-2, s, s+2 \} ,
\end{align}
with $s = \max(\sIndex_1,\sIndex_2,\sIndex_3)$. 
Therefore, a simple calculation with~(\ref{SpecialDefSet},~\ref{DimOfWJ},~\ref{PreRecursion2}) gives
\begin{align} \label{DimOfWJThreeNode}
\dim \WJ\sub{\sIndex_1,\sIndex_2,\sIndex_3}(\nu) 
= 6.
\end{align}

Item~\ref{GeneratorThmItem2} of theorem~\ref{GeneratorThm} shows that the asserted generators do generate 
the Jones-Wenzl algebra $\WJ\sub{\sIndex_1,\sIndex_2,\sIndex_3}(\nu)$. Next, we prove the asserted relations for the generators.
Relation~\eqref{MasterDiagramsWJThreenodeSquareRel} follows already from proposition~\ref{GeneratorThmTwoNode}. 
Relation~\eqref{MasterDiagramsWJThreenodeUnitrel} follows from~\cite[equation~\red{9.15}, page~\red{99}]{kl},
or can also be proven easily by a direct calculation.

For the other relations, relation~\eqref{MasterDiagramsWJThreenodeRel1}  in case~\ref{MasterDiagramsWJThreenodeItem1}
is straightforward:
\begin{align}
\ValGenMWJ\super{0}_1 \ValGenMWJ\super{\sIndex_3-1}_2 \ValGenMWJ\super{0}_1 \quad
\overset{(\ref{Gen3nodefirst1}-\ref{Gen3nodefirst2})}{=} \; & \quad \vcenter{\hbox{\includegraphics[scale=0.275]{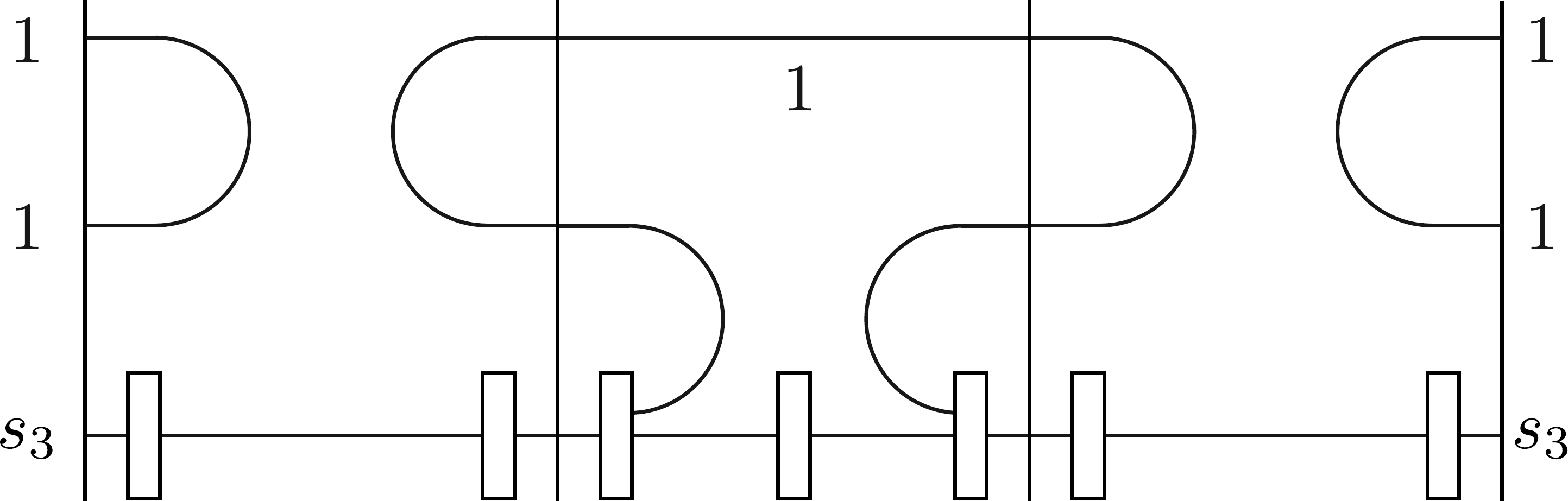}}} 
\quad \overset{\eqref{ProjectorID1}}{=} \quad \ValGenMWJ\super{0}_1 ,
\end{align}
and case~\ref{MasterDiagramsWJThreenodeItem3} is similar. 
In the remaining case~\ref{MasterDiagramsWJThreenodeItem2}, we first expand the product
$\ValGenMWJ\super{\sIndex_2-1}_1 \ValGenMWJ\super{\sIndex_2-1}_2 \ValGenMWJ\super{\sIndex_2-1}_1$ as
\begin{align}
\; &  \ValGenMWJ\super{\sIndex_2-1}_1 \ValGenMWJ\super{\sIndex_2-1}_2 \ValGenMWJ\super{\sIndex_2-1}_1 
\quad \overset{(\ref{Gen3nodemiddle1}-\ref{Gen3nodemiddle2})}{=} \quad \vcenter{\hbox{\includegraphics[scale=0.275]{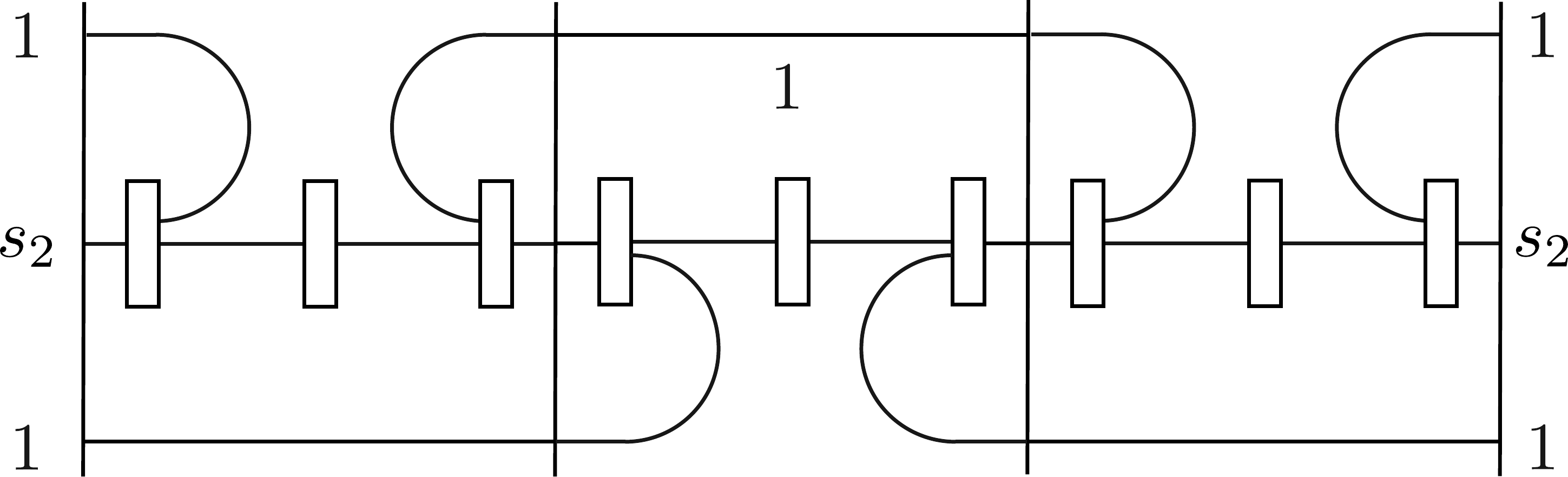}}} \\[1em]
\quad \underset{\eqref{SpecialT}}{\overset{\eqref{ProjectorID1}}{=}} \; & \quad
\vcenter{\hbox{\includegraphics[scale=0.275]{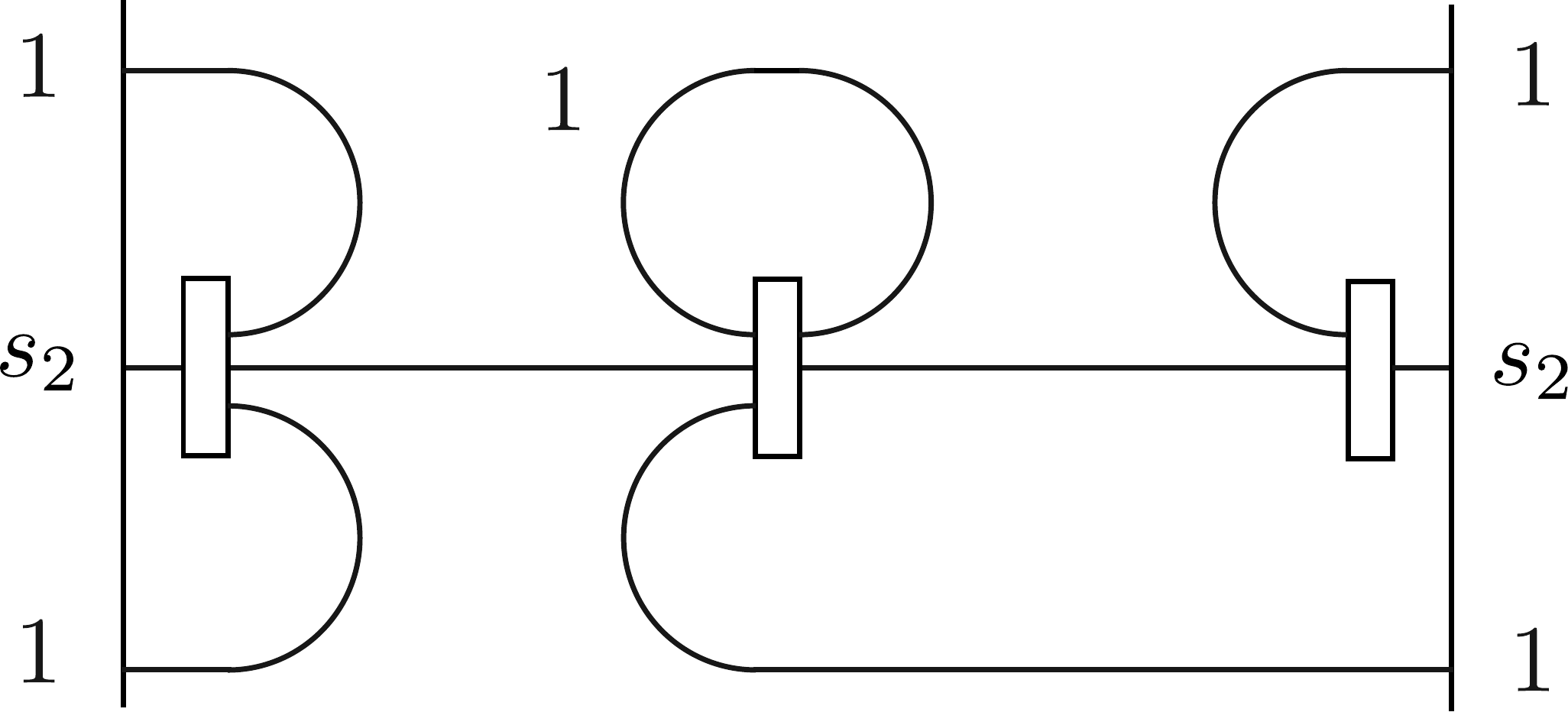}}} \quad + \quad
\frac{1}{[\sIndex_2]} \,\, \times \,\, \vcenter{\hbox{\includegraphics[scale=0.275]{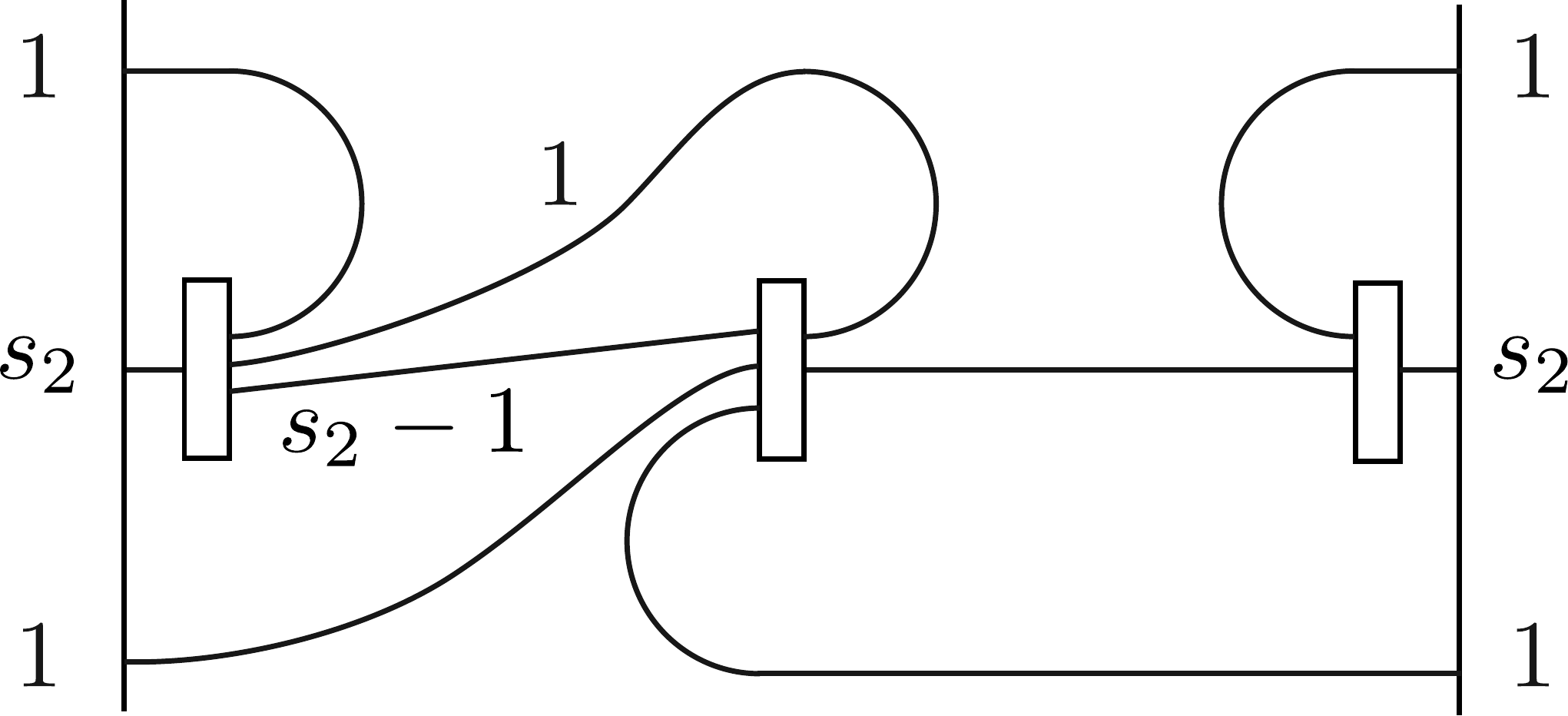} ,}}
\label{collection1}
\end{align}
where we 
used proposition~\ref{SpecialTProp} from appendix~\ref{WJProjApp} to find the coefficient of the second term.
Then, we use identity~\eqref{DeltaTangleGen} from lemma~\ref{CollectionLem} to simplify the first term in~\eqref{collection1},
and decompose in the second term in~\eqref{collection1} the remaining box into its internal link diagrams, 
with coefficients from proposition~\ref{SpecialTProp}. Collecting all of the terms, we obtain
\begin{align}
\frac{1}{[\sIndex_2]^2} \,\, \times \,\, \vcenter{\hbox{\includegraphics[scale=0.275]{e-Relations3node5.pdf}}} \quad + \quad
\left( -\frac{[\sIndex_2+1]}{[\sIndex_2]} + \frac{[2]}{[\sIndex_2][\sIndex_2+1]} \right) 
\,\, \times \,\, \vcenter{\hbox{\includegraphics[scale=0.275]{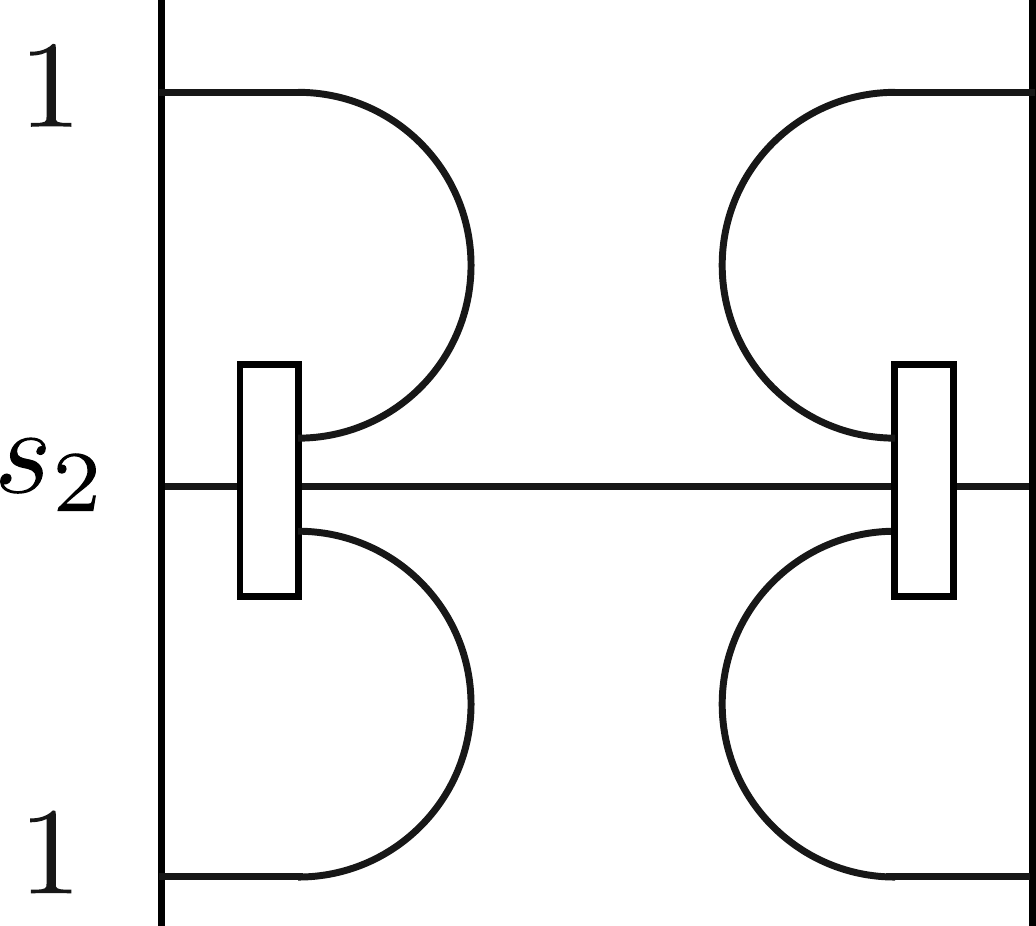} .}}
\label{collection2}
\end{align}
Expanding 
$\ValGenMWJ\super{\sIndex_2-1}_2 \ValGenMWJ\super{\sIndex_2-1}_1 \ValGenMWJ\super{\sIndex_2-1}_2$
similarly, we obtain asserted relation~\eqref{MasterDiagramsWJThreenodeRel2}:
\begin{align}
\nonumber
\; & \ValGenMWJ\super{\sIndex_2-1}_1 \ValGenMWJ\super{\sIndex_2-1}_2 \ValGenMWJ\super{\sIndex_2-1}_1
- \ValGenMWJ\super{\sIndex_2-1}_2 \ValGenMWJ\super{\sIndex_2-1}_1 \ValGenMWJ\super{\sIndex_2-1}_2  \\[1em]
\quad \overset{(\ref{collection1}-\ref{collection2})}{=} \; & \quad \frac{1}{[\sIndex_2]^2} \,\, \times \,\, \vcenter{\hbox{\includegraphics[scale=0.275]{e-Relations3node5.pdf}}} \quad - \quad
\frac{1}{[\sIndex_2]^2} \,\, \times \,\, \vcenter{\hbox{\includegraphics[scale=0.275]{e-Relations3node7.pdf} .}} 
\end{align}

Finally, because $\dim \WJ\sub{\sIndex_1,\sIndex_2,\sIndex_3}(\nu) = 6$ by~\eqref{DimOfWJThreeNode},
it is now clear that generators~\eqref{GeneratorThmThreeNodeGensabs} satisfy no other relations
and that each of the collections~(\ref{Basis1Threenode},~\ref{Basis2Threenode},~\ref{Basis3Threenode}) 
is a basis for $\WJ\sub{\sIndex_1,\sIndex_2,\sIndex_3}(\nu)$. This concludes the proof.
\end{proof}

\begin{proof}[Proof of proposition~\ref{RelationProp}]
Relations~\eqref{WordRelationsWJ04} and~\eqref{WordRelationsWJ06} 
are obvious from the definition of tangle multiplication. 
Relation~\eqref{WordRelationsWJ05} follows from~\eqref{MasterDiagramsWJTwonodeRel} 
in theorem~\ref{GeneratorThmTwoNode}, and relation~\eqref{WordRelationsWJ07}, e.g., from~\cite[equation~\red{9.15}, page~\red{99}]{kl}.
The other asserted relations 
are consequences of proposition~\ref{GeneratorThmThreeNode}:
If $\sIndex_1 = \sIndex_2 = 1$, then $\smash{\ValGenMWJ\super{0}_1} = \ValGenWJ_1$ and 
$\smash{\ValGenMWJ\super{\sIndex_3-1}_2} = \ValGenWJ_2$,
and a similar observation holds if $\sIndex_2 = \sIndex_3 = 1$.
Also, if $\sIndex_1 = \sIndex_3 = 1$, then
$\smash{\ValGenMWJ\super{\sIndex_2-1}_1} = \ValGenWJ_1$ and $\smash{\ValGenMWJ\super{\sIndex_2-1}_2} = \ValGenWJ_2$.
Therefore, asserted relations~(\ref{WordRelationsWJ01}--\ref{WordRelationsWJ03})
indeed follow from proposition~\ref{GeneratorThmThreeNode}. 
This proves proposition~\ref{RelationProp}.
\end{proof}

%

%

\bigskip
\bigskip

\appendixpage

\begin{appendices}
\renewcommand{\thesection}{\Alph{section}}
\renewcommand{\thesubsection}{\arabic{subsection}}
\renewcommand{\thesubsubsection}{\Alph{subsubsection}}

\section{Coefficients of the Jones-Wenzl projector} \label{WJProjApp}

In this appendix, we derive formulas for the coefficients of the Jones-Wenzl projector in expansion~\eqref{ProjDecomp}:
\begin{align}  \label{WJExpansion} 
\WJProj\sub{n} \overset{\eqref{ProjDecomp}}{=} \sum_{T \, \in \, \LD_n} (\text{coef}_T) \, T. 
\end{align}  
In corollary~\ref{WJCoeffCor}, we determine the coefficients $\text{coef}_T$ of this expansion and in proposition~\ref{SpecialTProp},
we give explicit formulas in important special cases used frequently in this article. 

Using the basis $\smash{\LP_m\super{s}}$ of $(m,s)$-link patterns, we define the Gram matrix $\smash{\Gram_m\super{s}}$ 
of the bilinear form~\eqref{LSBiForm} on $\smash{\LS_n\super{s}}$ as
\begin{align}\label{GramMatrix2}
[ \Gram_m\super{s} ]_{\alpha, \beta} := \BiForm{\alpha}{\beta} , \quad 
\text{for all $\alpha, \beta \in \LP_m\super{s}$.}
\end{align}
When $s = 0$ and $m = 2n$ is even, it is customary to call the Gram matrix $\smash{\Gram_{2n}\super{0}}$ the \emph{meander matrix}~\cite{fgg}.
We will derive formulas for certain entries of the inverse of this meander matrix.
From these formulas, we obtain the coefficients of the Jones-Wenzl projector.
The former appear to us to be new, but the latter have been derived already in the article~\cite{sm} of S.~Morrison.
We also mention that in~\cite[corollary~\red{3.7}]{gl2}, J.~Graham and G.~Lehrer obtain a closed formula for the Jones-Wenzl 
projector in the exceptional cases when $q$ in~\eqref{fugacity} is a root of unity.

In lemma~\ref{InvMeandLem}, we show that the coefficients  $\text{coef}_T$ in~\eqref{WJExpansion}
are given by certain entries of the inverse of the 
meander matrix $\smash{\Gram_{2n}\super{0}}$. 
We determine these matrix entries in lemma~\ref{ExplicitLem}.

\subsection{Preliminary observations}

As a special case of~\cite[lemma~\red{2.6}]{fp0} 
the linear extension of the following map is an isomorphism of vector spaces from 
the Temperley-Lieb algebra $\TL_n(\nu)$ to the space $\smash{\LS_{2n}\super{0}}$ of link states without defects:

\begin{align}  \label{TLBij} 
T \quad = \quad \vcenter{\hbox{\includegraphics[scale=0.275]{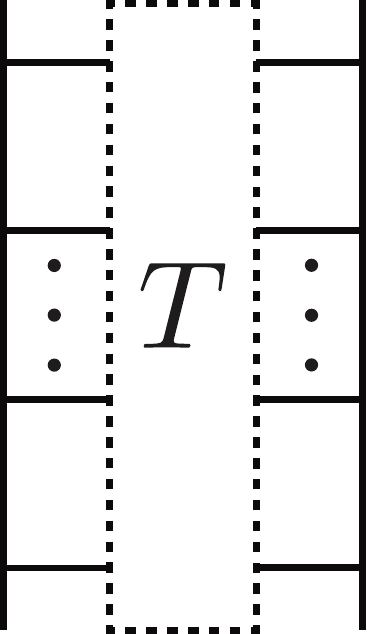}}}
\quad \qquad  \longmapsto \qquad \quad
\alpha_T \quad := \quad \raisebox{-25pt}{\includegraphics[scale=0.275]{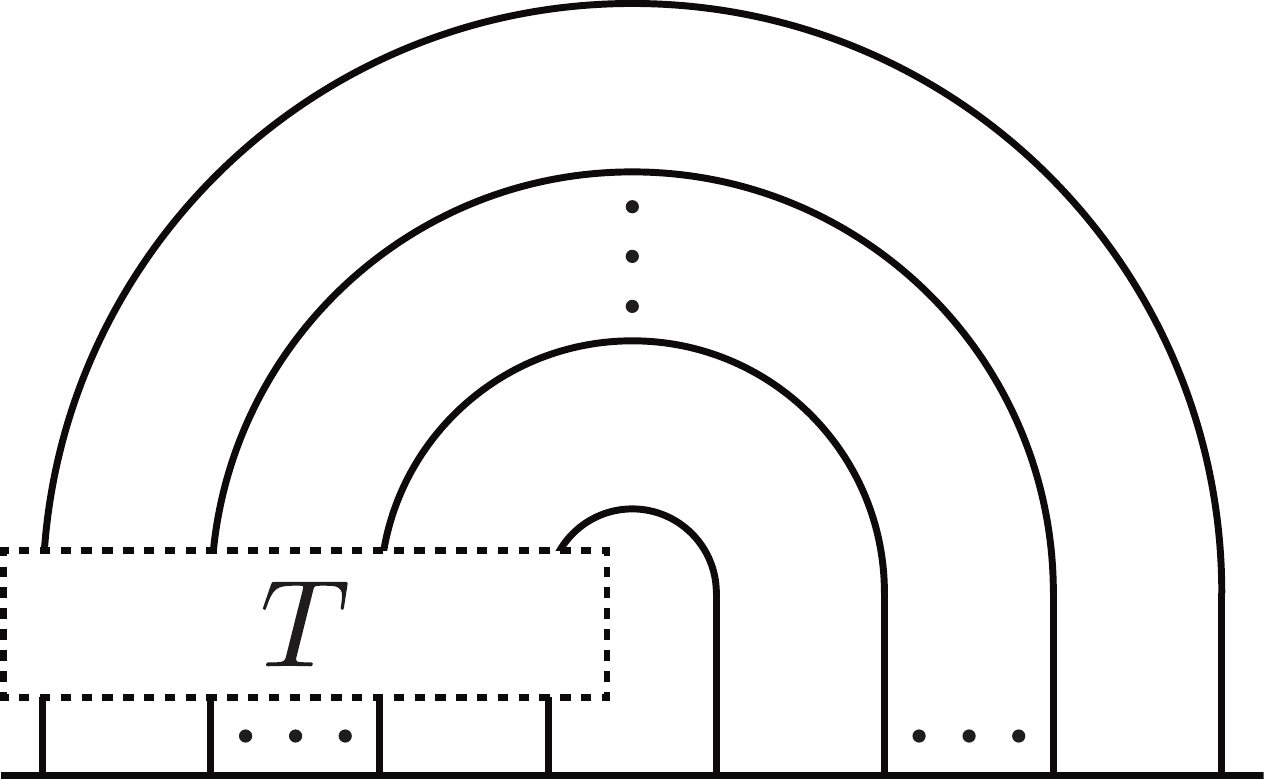} .}
\end{align}  

To begin, we prove that the following \emph{rainbow link pattern} behaves very nicely under the isomorphism~\eqref{TLBij}:
\begin{align}  \label{CapID} 
\Cap_n := \alpha_{\mathbf{1}_{\TL_n}} \quad = \quad \vcenter{\hbox{\includegraphics[scale=0.275]{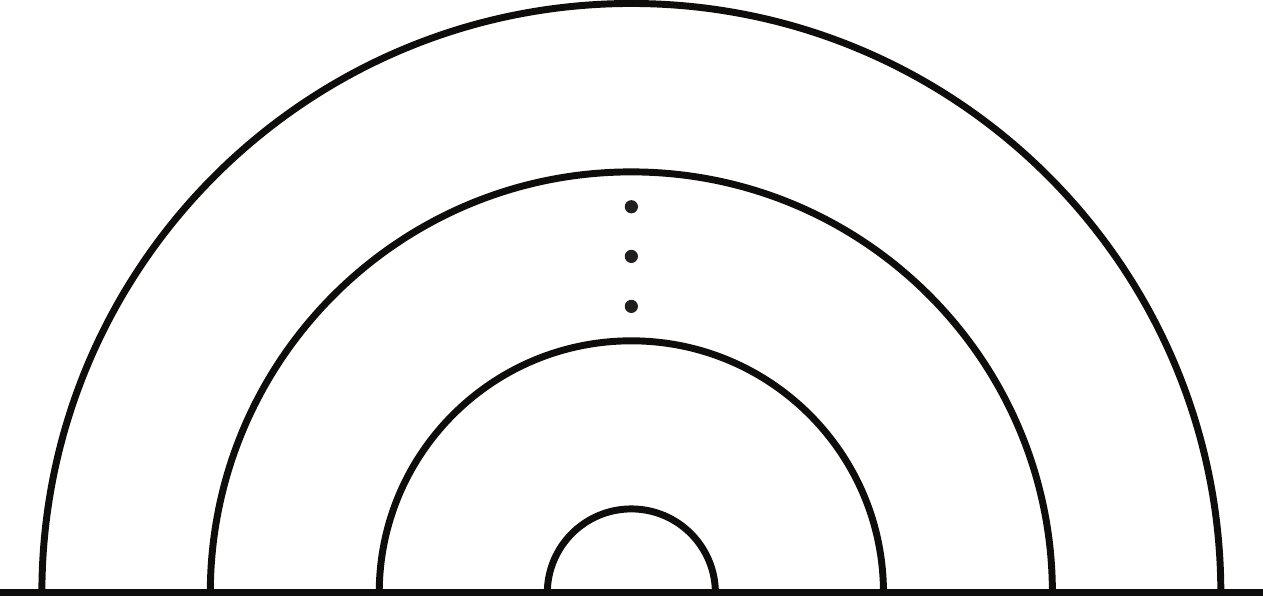} .}} 
\end{align}   

\begin{lem} \label{CapDualLem} 
Suppose $n+1 < \ppmin(q)$.  Then $\rad \LS_{2n} = \{0\}$, and the bijection~\eqref{TLBij} sends 
$\WJProj\sub{n}$ to $ (-1)^n[n+1] \Cap_n\superscr{\cheque}$.  Equivalently, we have
\begin{align}  \label{NoDiagExpressions} 
(-1)^n[n+1]\Cap_n\superscr{\cheque} = (\WJProj\sub{n} \otimes \mathbf{1}_{\TL_n}) \Cap_n 
= ( \mathbf{1}_{\TL_n} \otimes \WJProj\sub{n} ) \Cap_n = ( \WJProj\sub{n} \otimes \WJProj\sub{n} ) \Cap_n , 
\end{align}  
or in terms of diagrams,
\begin{align} 
\label{DiagExpressions-0}
\raisebox{-15pt}{$(-1)^n [n+1]$} \quad \vcenter{\hbox{\includegraphics[scale=0.275]{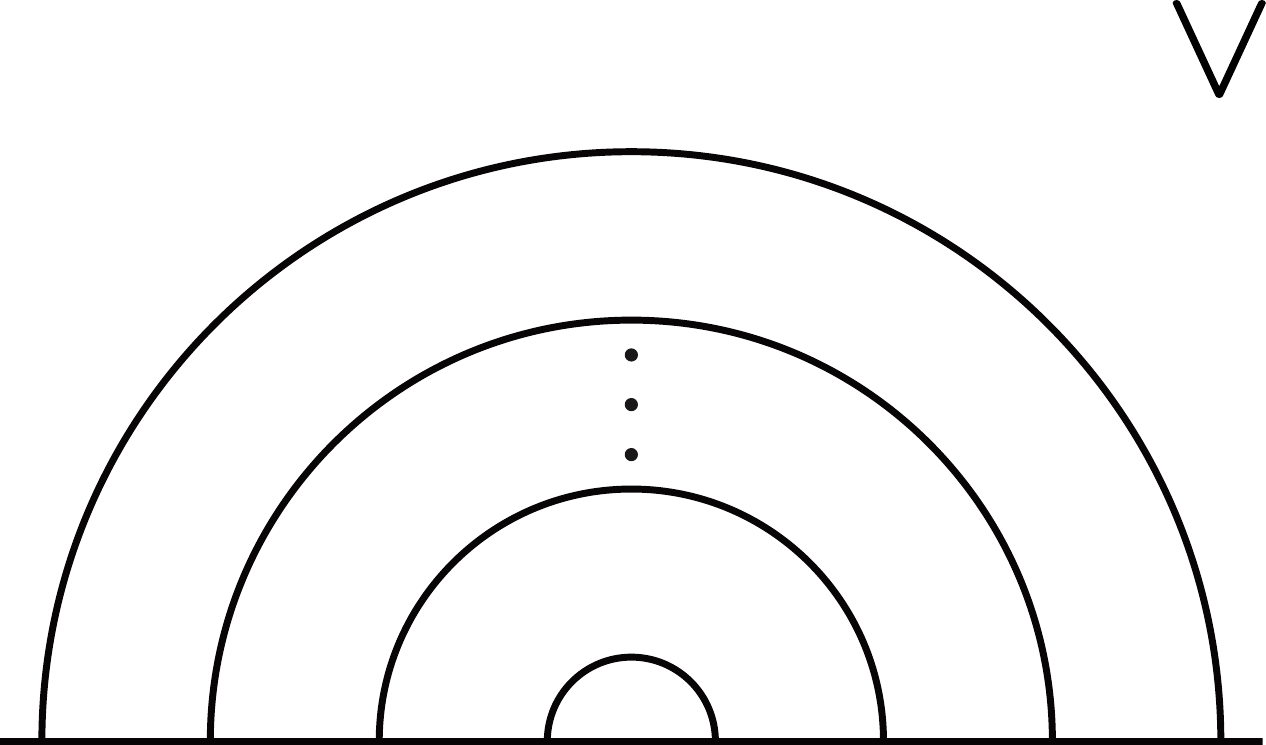}}} 
& \;\quad \raisebox{-15pt}{$=$} \quad\quad \vcenter{\hbox{\includegraphics[scale=0.275]{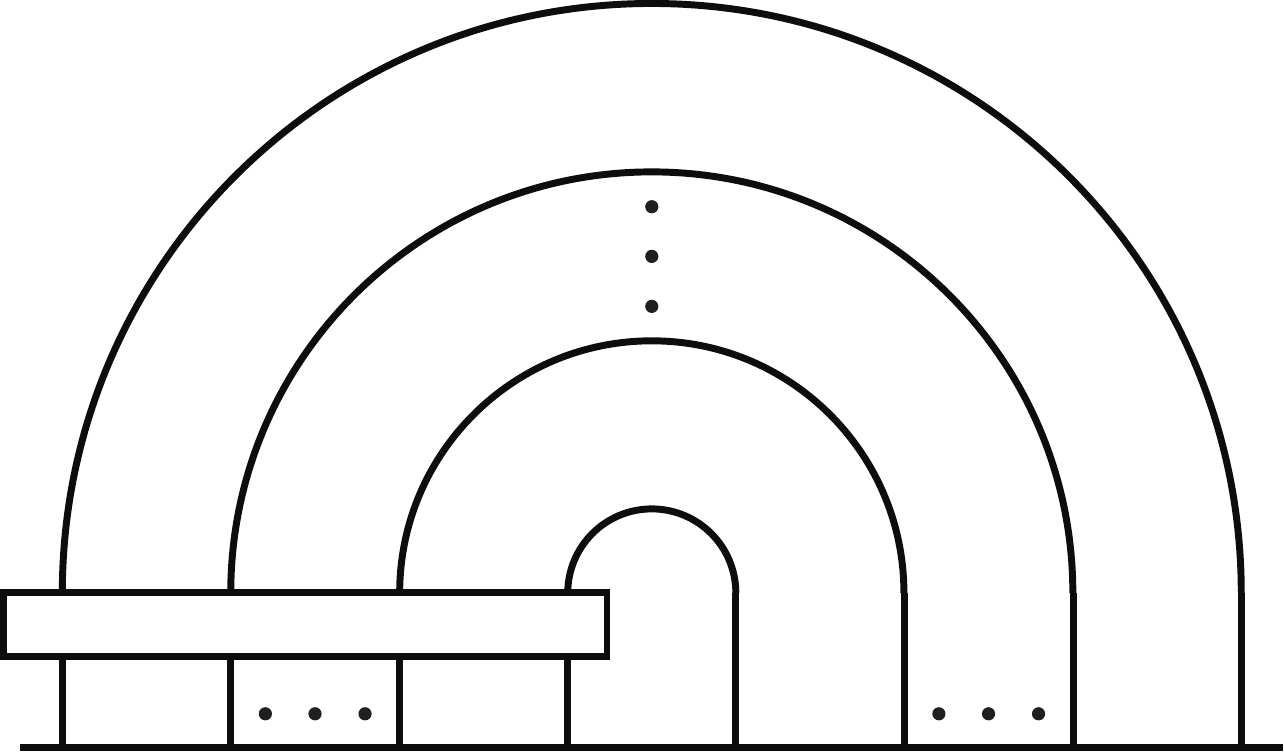}}} \\[1em]
\label{DiagExpressions-1} 
& \quad\quad \raisebox{-15pt}{$=$}  \quad\quad \vcenter{\hbox{\includegraphics[scale=0.275]{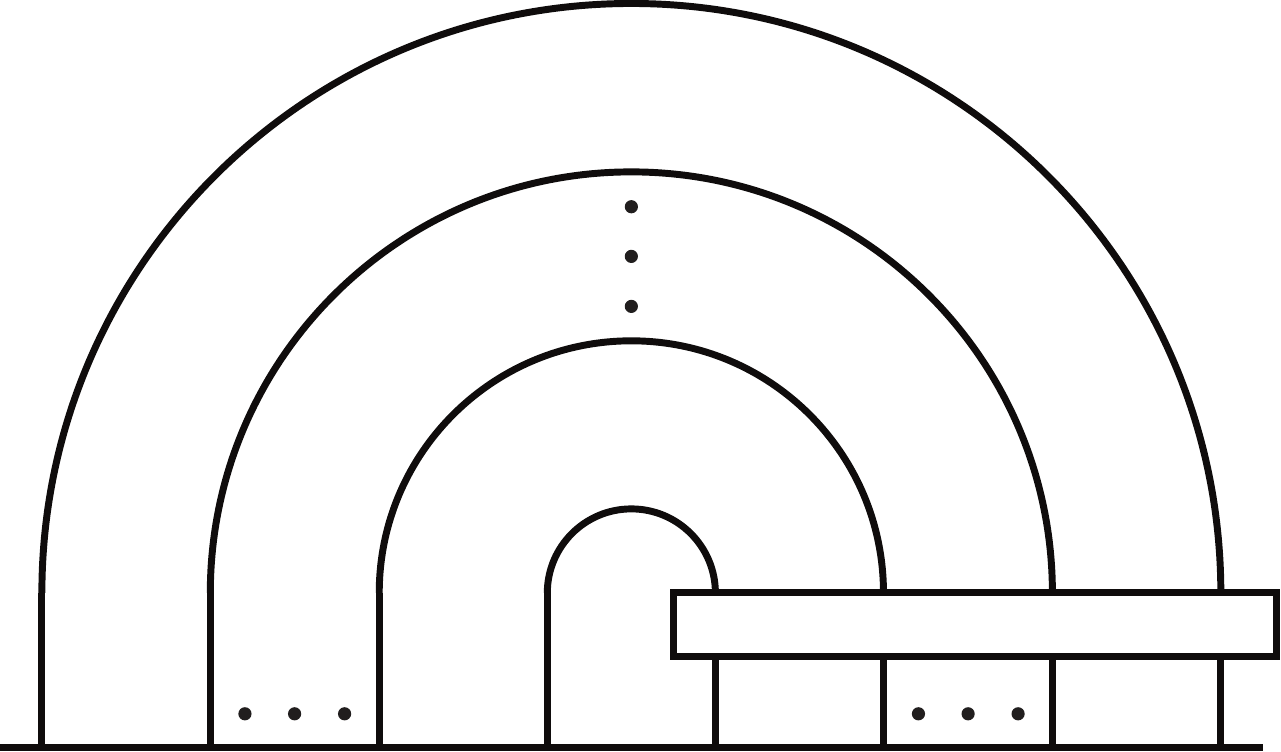}}} 
\quad\quad \raisebox{-15pt}{$=$}  \quad\quad \vcenter{\hbox{\includegraphics[scale=0.275]{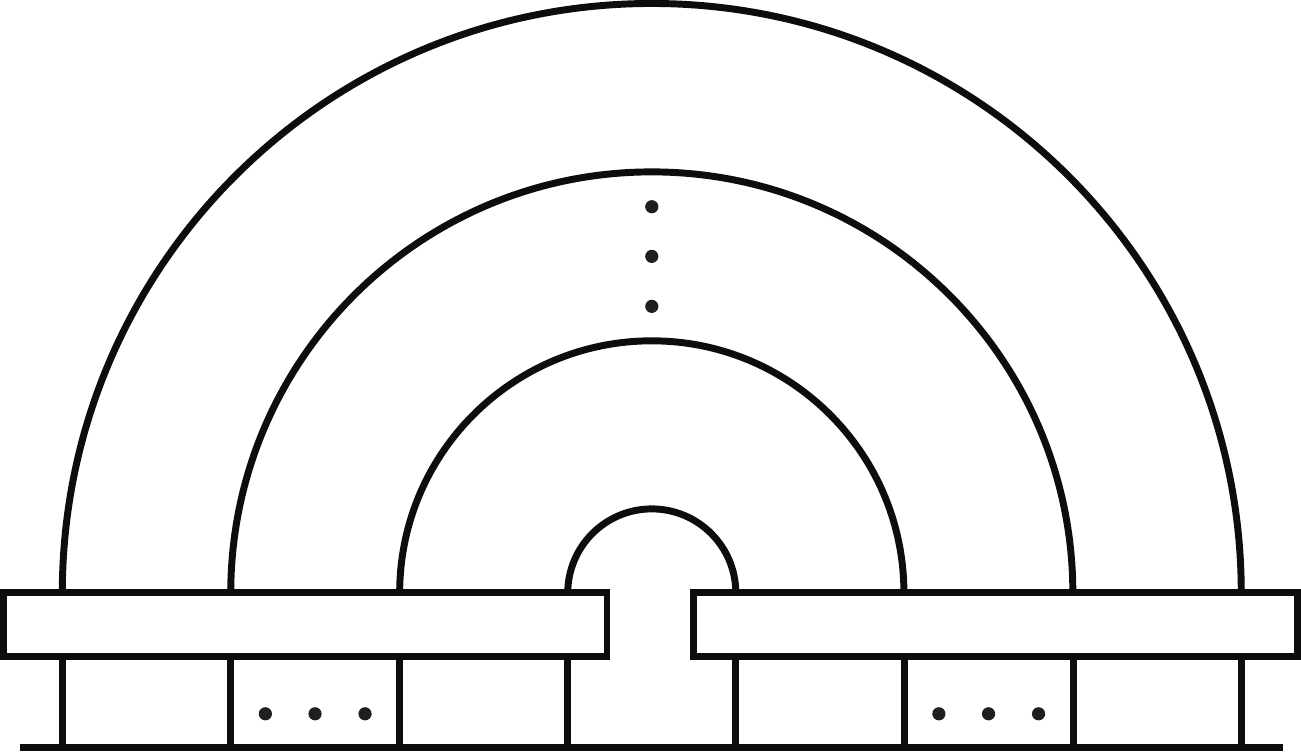} .}} 
\end{align}
\end{lem}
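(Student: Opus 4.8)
The plan is to verify the three identities in~\eqref{NoDiagExpressions} directly and then deduce the statement about the bijection~\eqref{TLBij}. For the latter the point is the elementary identity $\alpha_T = (T \otimes \mathbf{1}_{\TL_n})\,\Cap_n$, valid for every $T \in \TL_n(\nu)$ and immediate from the definitions of~\eqref{TLBij} and of the $\TL_{2n}(\nu)$-action on link states (and consistent with $\alpha_{\mathbf{1}_{\TL_n}} = \Cap_n$); taking $T = \WJProj\sub{n}$ turns ``the bijection sends $\WJProj\sub{n}$ to $(-1)^n[n+1]\Cap_n^\cheque$'' into the first equality of~\eqref{NoDiagExpressions}.

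First I would record the non-degeneracy needed to even speak of $\Cap_n^\cheque$: under $n+1 < \ppmin(q)$ the Gram matrix of the defect-free standard module $\LS_{2n}\super{0}$ has nonzero determinant, since that determinant is (up to sign) a product of quantum integers $[k]$ with $2 \leq k \leq n+1$, none of which vanishes; together with the orthogonality of standard modules with distinct defect numbers and the representation theory recalled in Section~\ref{RepThSec} (and~\cite{rsa,fp0}), this gives $\rad \LS_{2n} = \{0\}$. In particular the form on $\LS_{2n}\super{0}$ is non-degenerate, so $\Cap_n^\cheque$ is well defined and an element $x \in \LS_{2n}\super{0}$ is pinned down by the scalars $\BiForm{x}{\beta}$, with $\beta$ running over the link patterns $\LP_{2n}\super{0}$.

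Write $\Delta_n := (-1)^n[n+1]$ and let $Y$ be any one of $\WJProj\sub{n} \otimes \mathbf{1}_{\TL_n}$, $\mathbf{1}_{\TL_n} \otimes \WJProj\sub{n}$, $\WJProj\sub{n} \otimes \WJProj\sub{n}$ in $\TL_{2n}(\nu)$. I claim $\BiForm{Y\Cap_n}{\beta} = \Delta_n\,\delta_{\Cap_n,\beta}$ for all $\beta \in \LP_{2n}\super{0}$, which by non-degeneracy yields $Y\Cap_n = \Delta_n\,\Cap_n^\cheque$, i.e.\ exactly~\eqref{NoDiagExpressions} (and, re-drawn,~\eqref{DiagExpressions-0}--\eqref{DiagExpressions-1}). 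The combinatorial heart is that a link pattern $\beta \in \LP_{2n}\super{0}$ different from the rainbow $\Cap_n$ must have an arc with both endpoints among the first $n$ nodes and an arc with both endpoints among the last $n$ nodes: if $\beta$ has no arc joining two of the first $n$ nodes, planarity forces node $i$ to be joined to node $2n+1-i$ for each $i \leq n$, so $\beta = \Cap_n$, and the statement for the last $n$ nodes is the mirror image. Hence, by property~\ref{wj2item} (a Jones--Wenzl projector annihilates turn-backs), $Y\beta = 0$ for $\beta \neq \Cap_n$, and likewise $Y^\dagger\beta = 0$, since $Y^\dagger$ --- using $\WJProj\sub{n}^\dagger = \WJProj\sub{n}$, which holds by the uniqueness in~\ref{wj1item}--\ref{wj2item} --- again carries a Jones--Wenzl box across $n$ of the $2n$ strands. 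Invariance~\eqref{InvarProp} then gives $\BiForm{Y\Cap_n}{\beta} = \BiForm{\Cap_n}{Y^\dagger\beta} = 0$ when $\beta \neq \Cap_n$. For $\beta = \Cap_n$, reflecting and stacking produces a network $\Cap_n \BarAction (Y\Cap_n)$ in which the two rainbows close the Jones--Wenzl box(es) of $Y$ exactly into the Markov (quantum) trace of $\WJProj\sub{n}$, with no spare loops; thus $\BiForm{\Cap_n}{Y\Cap_n} = \operatorname{tr}(\WJProj\sub{n}) = \Delta_n$, the standard closed loop value of the Jones--Wenzl projector (a short induction from the recursion~\eqref{wjrecursion} using $[2][k] = [k+1]+[k-1]$, or~\cite{kl}).

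I expect the main obstacle to be purely diagrammatic bookkeeping: fixing the conventions for the $\TL_{2n}(\nu)$-action on $\LS_{2n}\super{0}$, for the reflection $\dagger$, and for the bending map~\eqref{TLBij}, so that one is sure (i) $\alpha_T = (T\otimes\mathbf{1}_{\TL_n})\Cap_n$ holds literally, and (ii) the closure $\Cap_n \BarAction (Y\Cap_n)$ really is $\operatorname{tr}(\WJProj\sub{n})$ with no stray factor of $\nu$. The representation-theoretic input and the evaluation $\operatorname{tr}(\WJProj\sub{n}) = (-1)^n[n+1]$ are standard.
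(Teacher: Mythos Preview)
Your proof is correct and follows essentially the same approach as the paper's: both arguments establish non-degeneracy so that $\Cap_n^\cheque$ exists, then show $\BiForm{Y\Cap_n}{\beta}=0$ for $\beta\neq\Cap_n$ via property~\ref{wj2item} (a link of $\beta$ enters and exits the same projector box), and finally evaluate $\BiForm{Y\Cap_n}{\Cap_n}=(-1)^n[n+1]$ as the closed loop value of $\WJProj\sub{n}$ (the paper cites~\eqref{DeltaTangleGen} for this). The only organizational difference is that the paper first uses idempotency~\ref{wj1item} together with sliding the projector around the cap to show the three expressions coincide and then computes with the $\WJProj\sub{n}\otimes\WJProj\sub{n}$ version, whereas you treat all three $Y$ uniformly by proving the stronger combinatorial fact that any $\beta\neq\Cap_n$ has an arc inside \emph{each} half; both routes are equally short.
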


\begin{proof} 
That $\rad \smash{\LS_{2n}\super{0}} = \{0\}$ follows from~\cite[proposition~\red{5.3}]{rsa}
(see also~\cite[corollary~\red{5.1}]{fp0})
With the radical of $\smash{\LS_{2n}\super{0}}$ trivial, the bilinear form is nondegenerate, 
so the dual link pattern $\smash{\Cap_n\superscr{\cheque}}$ is well-defined.

Now, by property~\eqref{ProjectorID0} of the Jones-Wenzl projectors, it is evident
that all three diagrams in~(\ref{DiagExpressions-0}--\ref{DiagExpressions-1}), 
and hence all three expressions in~\eqref{NoDiagExpressions}, are equivalent.  Also, if $\alpha \in \smash{\LP_{2n}\super{0}}$ but $\alpha \neq \Cap_n$, then the network $( \WJProj\sub{n} \otimes \WJProj\sub{n} ) \Cap_n \BarAction \alpha$ must have a link touching two nodes of a projector box and therefore vanishes.  For example,
\begin{align}  
\vcenter{\hbox{\includegraphics[scale=0.275]{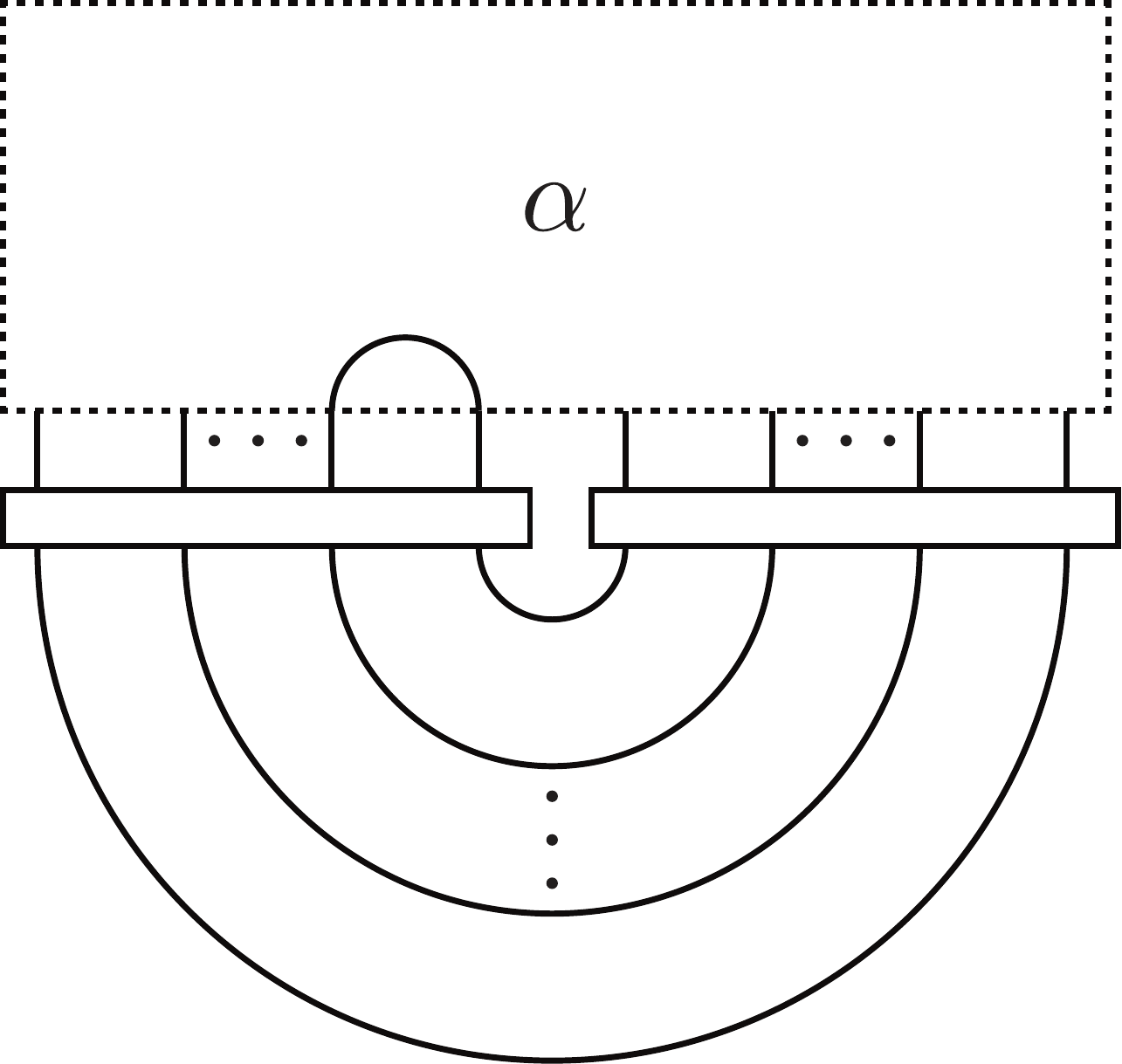}}} \quad = \quad 0 . 
\end{align}  
Therefore, we have $\BiForm{(\WJProj\sub{n} \otimes \WJProj\sub{n} ) \Cap_n}{\alpha} = 0$ if $\alpha \in \smash{\LP_{2n}\super{0}} \setminus \{ \Cap_n \}$.  
Finally, it immediately follows from~\eqref{DeltaTangleGen} of 
lemma~\ref{CollectionLem} that 
$\BiForm{(\WJProj\sub{n} \otimes \mathbf{1}_{\TL_n} ) \Cap_n}{\Cap_n} = (-1)^n[n+1]$.  
Hence, the claim that $\Cap_n\superscr{\cheque}$ is given by~\eqref{NoDiagExpressions} follows.
\end{proof}

\begin{lem} \label{InvMeandLem} 
Suppose $n + 1 < \ppmin(q)$.  Then the 
meander
matrix $\smash{\Gram_{2n}\super{0}}$ is invertible, and we have
\begin{align} \label{ProjExplicit1} 
\WJProj\sub{n} &= \sum_{T \, \in \, \LD_n} (-1)^n[n+1]\Big[\big(\Gram_{2n}\super{0}\big)^{-1}\Big]_{\Cap_n,\alpha_T} T \\
\label{ProjExplicit2} 
&= \sum_{T \, \in \, \LD_n} \left(\frac{\big[\big(\Gram_{2n}\super{0}\big)^{-1}\big]_{\Cap_n,\alpha_T}}{\big[\big(\Gram_{2n}\super{0}\big)^{-1}\big]_{\Cap_n,\Cap_n}}\right) T.
\end{align}
\end{lem}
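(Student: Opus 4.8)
The plan is to deduce both assertions directly from Lemma~\ref{CapDualLem} together with the elementary dictionary between dual bases and inverse Gram matrices. First I would dispose of invertibility: since $n+1 < \ppmin(q)$, Lemma~\ref{CapDualLem} gives $\rad \LS_{2n} = \{0\}$, and because link patterns with different numbers of defects are orthogonal, this forces $\rad \LS_{2n}\super{0} = \{0\}$ as well. Thus the symmetric bilinear form $\BiForm{\cdot}{\cdot}$ restricted to the finite-dimensional space $\LS_{2n}\super{0}$ is nondegenerate, which is equivalent to invertibility of its matrix $\Gram_{2n}\super{0}$ in the basis $\LP_{2n}\super{0}$. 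This already yields the first claim of the lemma.

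Next I would record the standard relation between the dual link pattern and the inverse meander matrix. Writing $\Cap_n^{\cheque} \in \LS_{2n}\super{0}$ in the basis $\LP_{2n}\super{0}$ as $\Cap_n^{\cheque} = \sum_{\alpha \, \in \, \LP_{2n}\super{0}} c_\alpha \, \alpha$, the defining property $\BiForm{\Cap_n^{\cheque}}{\beta} = \delta_{\Cap_n, \beta}$ of~\eqref{DualLS} becomes $\sum_\alpha c_\alpha [\Gram_{2n}\super{0}]_{\alpha,\beta} = \delta_{\Cap_n,\beta}$ for all $\beta \in \LP_{2n}\super{0}$, i.e.\ $c_\alpha = \big[(\Gram_{2n}\super{0})^{-1}\big]_{\Cap_n,\alpha}$. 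Hence
\begin{align}
\Cap_n^{\cheque} = \sum_{\alpha \, \in \, \LP_{2n}\super{0}} \big[(\Gram_{2n}\super{0})^{-1}\big]_{\Cap_n,\alpha} \, \alpha .
\end{align}

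Finally I would transfer this back through the isomorphism~\eqref{TLBij}. By its construction, $T \mapsto \alpha_T$ restricts to a bijection of the basis $\LD_n$ of $\TL_n(\nu)$ onto the basis $\LP_{2n}\super{0}$ of $\LS_{2n}\super{0}$, with $\alpha_{\mathbf{1}_{\TL_n}} = \Cap_n$ by~\eqref{CapID}; and by Lemma~\ref{CapDualLem} it sends $\WJProj\sub{n}$ to $(-1)^n[n+1]\,\Cap_n^{\cheque}$. Expanding $\WJProj\sub{n} = \sum_{T \, \in \, \LD_n} (\text{coef}_T)\, T$ as in~\eqref{ProjDecomp} and applying the isomorphism gives $(-1)^n[n+1]\,\Cap_n^{\cheque} = \sum_{T} (\text{coef}_T)\,\alpha_T$; comparing with the displayed expansion of $\Cap_n^{\cheque}$ re-indexed along the bijection $T \leftrightarrow \alpha_T$, and using that the $\alpha_T$ form a basis, yields $\text{coef}_T = (-1)^n[n+1]\,\big[(\Gram_{2n}\super{0})^{-1}\big]_{\Cap_n,\alpha_T}$, which is~\eqref{ProjExplicit1}. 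For~\eqref{ProjExplicit2} I would specialize $T = \mathbf{1}_{\TL_n}$: since $\text{coef}_{\mathbf{1}_{\TL_n}} = 1$ by~\ref{wj1item} (cf.~\eqref{ProjDecomp}) and $\alpha_{\mathbf{1}_{\TL_n}} = \Cap_n$, formula~\eqref{ProjExplicit1} gives $1 = (-1)^n[n+1]\,\big[(\Gram_{2n}\super{0})^{-1}\big]_{\Cap_n,\Cap_n}$, hence $(-1)^n[n+1] = 1\big/\big[(\Gram_{2n}\super{0})^{-1}\big]_{\Cap_n,\Cap_n}$, and substituting this into~\eqref{ProjExplicit1} produces~\eqref{ProjExplicit2}. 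I do not expect a genuine obstacle here: the substance of the lemma is carried entirely by Lemma~\ref{CapDualLem}, and the only points demanding a little care are verifying that~\eqref{TLBij} restricts to a bijection of the relevant diagram bases (so that the coefficient comparison is legitimate) and that its inverse sends $\Cap_n$ back to $\mathbf{1}_{\TL_n}$.
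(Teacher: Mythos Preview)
Your proposal is correct and follows essentially the same approach as the paper: both use Lemma~\ref{CapDualLem} for invertibility, express $\Cap_n^{\cheque}$ via the inverse Gram matrix, transport through the bijection~\eqref{TLBij}, and then specialize to $T=\mathbf{1}_{\TL_n}$ to pass from~\eqref{ProjExplicit1} to~\eqref{ProjExplicit2}. Your write-up is slightly more explicit about the dual-basis computation, but there is no substantive difference.
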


\begin{proof} We recall that $\smash{\Gram_{2n}\super{0}}$ is invertible if and only if $\rad \smash{ \LS_{2n}\super{0}} = \{0\}$, and the latter condition holds if 
$n +1 < \ppmin(q)$ according to lemma~\ref{CapDualLem}.  Therefore, $\smash{\Gram_{2n}\super{0}}$ is invertible if $n + 1 < \ppmin(q)$.

Next, we prove~\eqref{ProjExplicit1}. 
Using the nondegenerate bilinear form on $ \smash{\LP_{2n}\super{0}}$
and definition~\eqref{GramMatrix2} we see that for all link patterns $\alpha, \beta \in \smash{\LP_{2n}\super{0}}$, the following two equations are both true and equivalent:
\begin{align}  \label{DualEqns} 
\alpha = \sum_{\beta \, \in \, \LP_{2n}\super{0}} \big[ \Gram_{2n}\super{0} \big]_{\alpha,\beta} \, \beta\superscr{\cheque} \qquad \Longleftrightarrow \qquad \beta\superscr{\cheque} = \sum_{\alpha \, \in \, \LP_{2n}\super{0}} \big[ \big( \Gram_{2n}\super{0} \big)^{-1} \big]_{\beta,\alpha} \alpha .
\end{align}   
After substituting $\beta = \Cap_n$ into the second equation of~\eqref{DualEqns}, multiplying both sides of it by $(-1)^n[n+1]$ and indexing the link patterns in $\smash{\LP_{2n}\super{0}}$ by the link diagrams in $\LD_n$ via the bijection~\eqref{TLBij}, we arrive with
\begin{align}  \label{InterResult} 
(-1)^n[n+1]\Cap_n\superscr{\cheque} = \sum_{T \, \in \, \LD_n} (-1)^n[n+1]\big[ \big( \Gram_{2n}\super{0} \big)^{-1} \big]_{\Cap_n, \alpha_T}\, \alpha_T. 
\end{align}  
Now, lemma~\ref{CapDualLem} says that the bijection~\eqref{TLBij} sends $\WJProj\sub{n}$ to $(-1)^n[n+1] \Cap_n$.  Thus after applying the inverse of this bijection~\eqref{TLBij} to both sides of~\eqref{InterResult}, we arrive with the sought result~\eqref{ProjExplicit1}.  

Finally, to obtain~\eqref{ProjExplicit2} from~\eqref{ProjExplicit1}, we examine the term in the sum of~\eqref{ProjExplicit1} with link diagram $T = \id_{\TL_n(\nu)}$.  According to~\eqref{ProjDecomp}, the coefficient of this term necessarily equals one.  Combining this fact with~\eqref{CapID}, we find  
\begin{align}  \label{DiagFormula0} 
(-1)^n[n+1]\big[\big(\Gram_{2n}\super{0}\big)^{-1}\big]_{\Cap_n,\Cap_n} = 1. 
\end{align}  
Using this identity, we immediately obtain~\eqref{ProjExplicit2} from~\eqref{ProjExplicit1}. This finishes the proof.
\end{proof}

\begin{cor} \label{DiagFormulaCor}
Suppose $n + 1 < \ppmin(q)$. Then we have 
\begin{align}  \label{DiagFormula} 
\big[\big(\Gram_{2n}\super{0}\big)^{-1}\big]_{\Cap_n,\Cap_n} = \frac{(-1)^n}{[n+1]}. 
\end{align}  
\end{cor}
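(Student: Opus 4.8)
The plan is to derive Corollary~\ref{DiagFormulaCor} as an immediate consequence of identity~\eqref{DiagFormula0} established within the proof of Lemma~\ref{InvMeandLem}. Recall that~\eqref{DiagFormula0} reads $(-1)^n[n+1]\big[\big(\Gram_{2n}\super{0}\big)^{-1}\big]_{\Cap_n,\Cap_n} = 1$. Since the hypothesis $n + 1 < \ppmin(q)$ guarantees (via Lemma~\ref{CapDualLem}) that $\rad \smash{\LS_{2n}\super{0}} = \{0\}$, the Gram matrix $\smash{\Gram_{2n}\super{0}}$ is invertible, so the left-hand side is well-defined, and moreover $[n+1] = [n+1]_q \neq 0$ because $n+1 < \ppmin(q)$ means $q$ is not a primitive root of unity of the relevant order (quantum integers $[k]$ vanish only when the order divides $k$). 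Hence we may divide both sides of~\eqref{DiagFormula0} by $(-1)^n[n+1]$, using $(-1)^{-n} = (-1)^n$, to obtain exactly~\eqref{DiagFormula}.

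Concretely, I would write: assuming $n+1 < \ppmin(q)$, the matrix $\smash{\Gram_{2n}\super{0}}$ is invertible by Lemma~\ref{InvMeandLem}, so its diagonal entry $\smash{\big[\big(\Gram_{2n}\super{0}\big)^{-1}\big]_{\Cap_n,\Cap_n}}$ makes sense. Identity~\eqref{DiagFormula0}, which was proved in the course of proving~\eqref{ProjExplicit2} by inspecting the coefficient of the term $T = \id_{\TL_n(\nu)}$ in the Jones-Wenzl expansion~\eqref{ProjExplicit1} and comparing with~\eqref{ProjDecomp}, then gives the result upon dividing by $(-1)^n[n+1] \neq 0$.

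There is essentially no obstacle here: the corollary is a restatement/rearrangement of a formula already obtained inside the preceding lemma's proof, and the only thing to check is that the quantum integer $[n+1]$ is nonzero, which follows directly from the standing hypothesis $n+1 < \ppmin(q)$ together with the definition~\eqref{MinPower} of $\ppmin(q)$ and the formula~\eqref{Qinteger} for $[k]$. I would present this as a two-sentence proof.

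\begin{proof}
Since $n + 1 < \ppmin(q)$, lemma~\ref{InvMeandLem} shows that the meander matrix $\smash{\Gram_{2n}\super{0}}$ is invertible, so its diagonal entry $\smash{\big[\big(\Gram_{2n}\super{0}\big)^{-1}\big]_{\Cap_n,\Cap_n}}$ is well-defined, and the hypothesis together with~\eqref{MinPower} and~\eqref{Qinteger} gives $[n+1] \neq 0$. Dividing both sides of identity~\eqref{DiagFormula0} (obtained in the proof of lemma~\ref{InvMeandLem}) by $(-1)^n [n+1]$ and using $(-1)^{-n} = (-1)^n$ yields~\eqref{DiagFormula}.
\end{proof}
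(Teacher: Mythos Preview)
Your proof is correct and matches the paper's approach exactly: the paper's proof is the single sentence ``This follows immediately from~\eqref{DiagFormula0},'' and you have simply spelled out why the division is legitimate.
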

\begin{proof}
This follows immediately from~\eqref{DiagFormula0}.
\end{proof}

\subsection{Formulas for entries of the inverse meander matrix: ideas}

In light of~\eqref{ProjExplicit2}, we may determine the coefficients 
of the Jones-Wenzl projector~\eqref{WJExpansion}
by computing the entries of the inverse of the meander matrix $\smash{\Gram_{2n}\super{0}}$ with row index $\Cap_n$.  
In our method, we assume that $q = e^{4\pi \ii/\kappa}$ for some irrational $\kappa \in (4,8)$.
Because the entries of $\smash{\Gram_{2n}\super{0}}$ are analytic functions of $q \in \bC^\times$ with $n+1 < \ppmin(q)$,
we do not lose generality with this assumption.

For any $\kappa \in (4,8)$, any points $x_1 < x_2 < \ldots < x_{2n}$, 
and any collection $\{ \Gamma_1, \Gamma_2, \ldots, \Gamma_{n-1} \}$ of contours in $\bC$, we define a Coulomb gas integral function by
\begin{multline} \label{eulerintegral}
\mathcal{J}\Big(\Gamma_1,\Gamma_2,\ldots,\Gamma_{n-1}\,\Big|\,x_1, x_2, \ldots, x_{2n}\Big) 
= \int_{\Gamma_{n-1}}{\rm d}u_{n-1}\int_{\Gamma_{n-2}}{\rm d}u_{n-2}\dotsm\\
\dotsm\,\,\int_{\Gamma_2}{\rm d}u_2\,\,\int_{\Gamma_1}{\rm d}u_1\,\Bigg(\prod_{l=1}^{2n-1}\prod_{m=1}^{n-1}(x_l-u_m)^{-4/\kappa}\Bigg)
\Bigg( \prod_{m=1}^{n-1} (x_{2n}-u_m)^{12/\kappa-2} 
\Bigg) \Bigg(\prod_{r<s}^{n-1}(u_r - u_s)^{8/\kappa}\Bigg) ,
\end{multline}
where the branch of the multivalued integrand is specified later.

We use two classes of functions employing Coulomb gas integrals~\eqref{eulerintegral}. 
Both classes are linearly independent collections indexed by link patterns, and we denote them by $\smash{\big\{\sF_\alpha \,|\, \alpha \in \LP_{2n}\super{0} \big\}}$
and $\smash{\big\{\Pi_\alpha \,|\, \alpha \in \LP_{2n}\super{0} \big\}}$. To define the function $\sF_\alpha$,
we enumerate all links of the link pattern $\alpha \in \smash{\LP_{2n}\super{0}}$ from one to $n$ and in such a way that the 
link anchored to the $2n$:th node of $\alpha$, is always the $n$:th link. 
Then, the function $\sF_\alpha$ is defined by~\cite[definition~\red{4}]{sfk3}
\begin{align}\label{Fsoln} 
\sF_\alpha( x_1, x_2, \ldots, x_{2n}) = 
\const \times \mathcal{J}\Big( \Gamma_m = \text{$m$:th link of $\alpha$, for all $m \in \{1,2,\ldots,n-1\}$} \,\Big|\,x_1, x_2, \ldots, x_{2n}\Big) , 
\end{align}
where the constant is finite and independent of $\alpha$ (but depends on $\kappa$), and where the branch of the multivalued integrand in~\eqref{eulerintegral}
is fixed in such a way that the integrand is real-valued and positive when
\begin{align}  \label{BranchChoice}
x_1 < x_2 < \cdots < x_{n+1} < u_1 < x_{n+2} < u_2 < \cdots < x_{2n-1} < u_{n-1} < x_{2n} .
\end{align}
The functions $\Pi_\alpha$ in the second aforementioned class are called 
``connectivity weights" or "multiple SLE pure partition functions."
They are related to the functions $\sF_\alpha$ via the meander matrix~\cite[definition~\red{4}]{sfk4}: 
\begin{align}\label{SolvePi} 
\sF_\alpha = \sum_{\beta \, \in \, \smash{\LP_{2n}\super{0}}} \big[\Gram_{2n}\super{0}\big]_{\alpha,\beta} \Pi_\beta \quad \text{ for all $\alpha \in \LP_{2n}\super{0}$}. 
\end{align}
With $\kappa$ irrational, we have $\ppmin(q) = \infty$, so the matrix $\smash{\Gram_{2n}\super{0}}$ is invertible and we may solve this system 
for the rainbow link pattern connectivity weight $\Pi_{\Cap_n}$, finding 
\begin{align}\label{InvertFPi} 
\Pi_{\Cap_n} = \sum_{\alpha \, \in \, \smash{\LP_{2n}\super{0}}} \big[\big( \Gram_{2n}\super{0} \big)^{-1}\big]_{\Cap_n, \alpha} \sF_\alpha. 
\end{align}
These coefficients of the $\sF_\alpha$ on the right side also appear in the formula~\eqref{ProjExplicit2} for the Jones-Wenzl projector coefficients.

We may use~\eqref{InvertFPi} and the explicit formula~\cite[equation~(\red{56})]{fsk}
for the rainbow connectivity weight $\Pi_{\Cap_n}$ 
to solve for all coefficients that appear on the right side of~\eqref{InvertFPi}.  
We give these coefficients in lemma~\ref{ExplicitLem}. Then, upon inserting the result into~\eqref{ProjExplicit2}, in corollary~\ref{WJCoeffCor}
we arrive with explicit formulas for all Jones-Wenzl projector coefficients in~\eqref{WJExpansion}.

To begin, we let $\mathscr{P}(A,B)$ denote the ``Pochhammer contour" surrounding two simply connected regions $A,B \subset \bC$:
\begin{align}\label{Poch} 
\mathscr{P}(A,B) \quad = \quad \vcenter{\hbox{\includegraphics[scale=0.275]{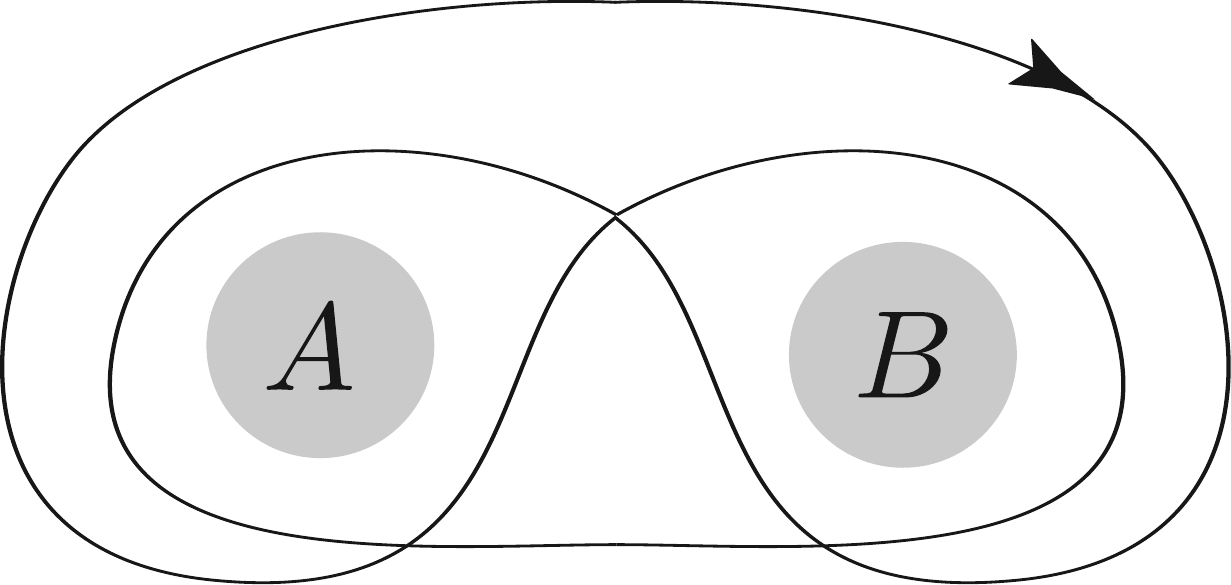} .}} 
\end{align}
We note that
$\mathscr{P}(A,B)$ is a loop on the universal cover of the set $\bC \setminus (A \cup B)$. 
Now, denoting $\Gamma_0 := \{x_{2n}\}$, the formula for $\Pi_{\Cap_n}$ 
according to~\cite[equation~(\red{56})]{fsk} reads
\begin{align}\label{PiSoln} 
\Pi_{\Cap_n}(x_1, x_2, \ldots, x_{2n}) = 
\const \times \mathcal{J}\Big( \Gamma_m = \mathscr{P}(x_{2n - m}, \Gamma_{m-1}), \; \text{for all $m \in \{1,2,\ldots,n-1\}$} \,\Big|\, x_1, x_2, \ldots, x_{2n} \Big) ,  
\end{align}
where again, the $\kappa$-dependent constant is irrelevant to our purpose. 
In our definition~\eqref{PiSoln} of $\Pi_{\Cap_n}$,
the branch of the multivalued integrand in~\eqref{eulerintegral} is fixed in such a way that the integrand is real-valued and positive when
all the integration variables $u_1, u_2, \ldots, u_{n-1}$ are at the points indicated by red dots, as in~\eqref{BranchChoice}:
\begin{align}
\vcenter{\hbox{\includegraphics[scale=0.275]{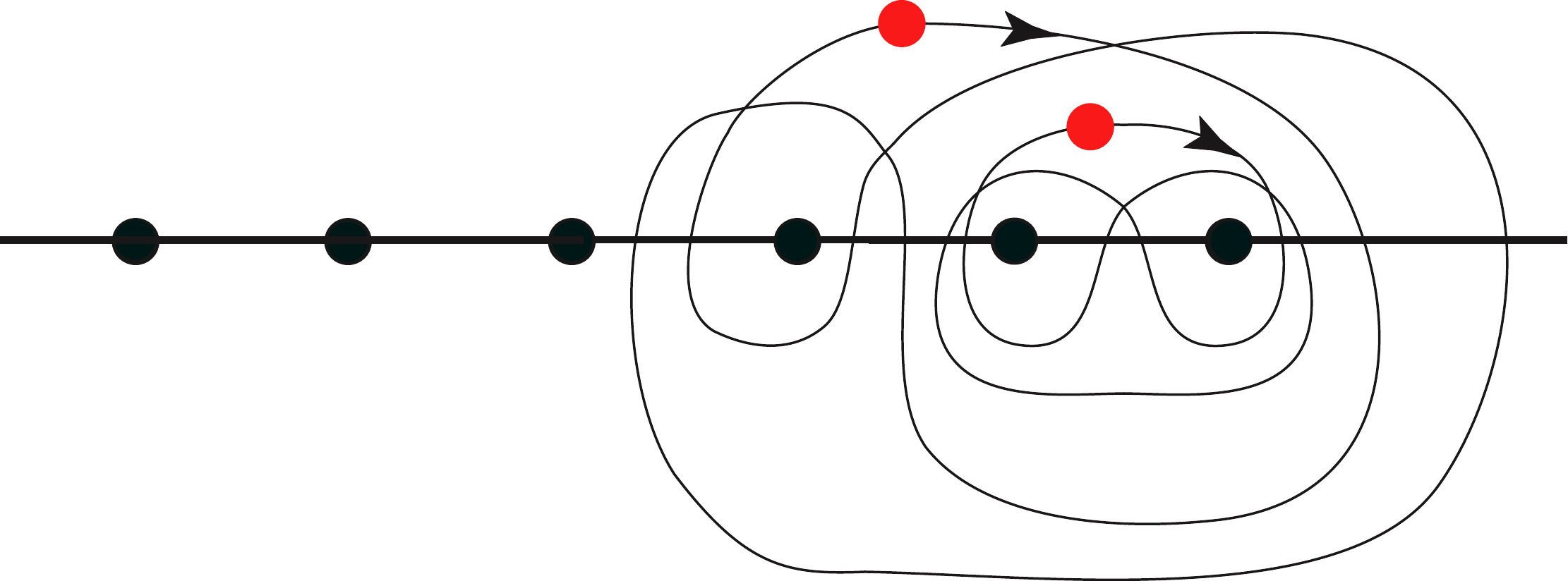} .}} 
\end{align}
Combining (\ref{Fsoln},~\ref{InvertFPi},~\ref{PiSoln}), we infer that for some nonzero $\kappa$-dependent constant, we have
\begin{align} \label{Combined}  
\mathcal{J}\Big( \Gamma_m = \mathscr{P}(x_{2n - m}, \Gamma_{m-1}), \; \text{for all $m \in \{1,2,\ldots,n-1\}$} \Big) 
= & \; \const \times \sum_{\mathclap{\alpha \, \in \, \LP_{2n}\super{0}}} \,\, \big[\big( \Gram_{2n}\super{0} \big)^{-1}\big]_{\Cap_n, \alpha} 
\sF_\alpha . 
\end{align}

Our goal in this appendix 
is to find a formula in terms of $q$ for all entries of the inverse of the meander matrix $\smash{\Gram_{2n}\super{0}}$ that appear in~\eqref{Combined}.
The text between equations~(\red{48}--\red{51}) of~\cite[section~\red{II E}]{fsk} details how to do this, and we survey the highlights here.  
The idea is to deform the Pochhammer contours $\mathscr{P}(x_{2n - m}, \Gamma_{m-1})$ in~\eqref{Combined} and write the integrals~\eqref{eulerintegral} 
around them in terms of the integrals appearing in~\eqref{Combined} with some (nonzero) multiplicative factors depending on $\kappa$ (i.e.,~$q$).
These multiplicative factors arise from the multi-valuedness of the integrand of $\mathcal{J}$ in~\eqref{eulerintegral} when deforming the contours.
The overall multiplicative constants $"\const"$ that were not specified above are fixed by formula~\eqref{DiagFormula} of corollary~\eqref{DiagFormulaCor}.
Collecting all of the constants, we finally obtain the sought entries of the inverse meander matrix.

\subsection{Formulas for entries of the inverse meander matrix: method}

We begin by writing
the outermost Pochhammer contour $\Gamma_{n-1}$, which surrounds the point $x_{n+1}$ and the other integration contours 
$\Gamma_1$, $\Gamma_2, \ldots, \Gamma_{n-2}$ (represented here by a gray disc), in the following form:
\begin{align}\label{paste} 
\vcenter{\hbox{\includegraphics[scale=0.275]{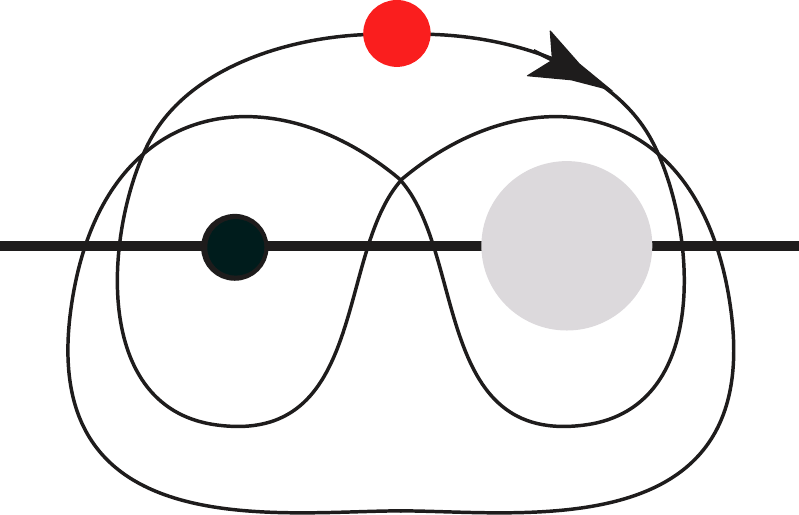}}} \quad = \quad 
(1 - q^{-2}) \quad 
\vcenter{\hbox{\includegraphics[scale=0.275]{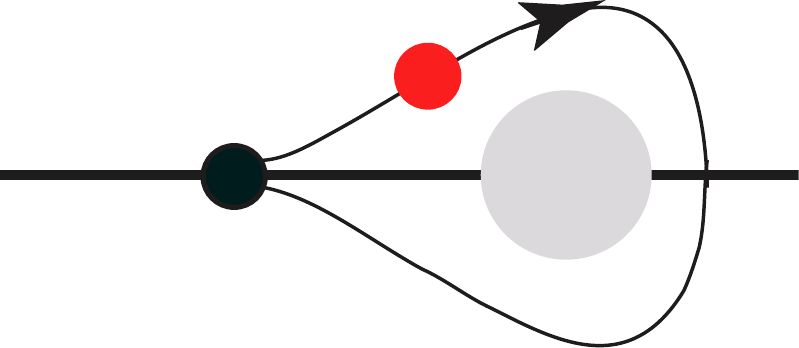} ,}} 
\end{align}  
where the red dot represents the choice of branch of the multivalued integrand of $\mathcal{J}$ in~\eqref{eulerintegral}.
The phase factor in the second term comes from the integration variable $u_{n-1}$ winding around the branch cut at $x_{n+1}$ 
(there is also a branch cut at 
the gray disc, but $u_{n-1}$ winds around it both in the clockwise and 
in the counter-clockwise direction, so the phase factors cancel).

On the other hand, we consider a clockwise-oriented simple loop $\Gamma$ surrounding the interval $[x_1,x_{2n}]$ and the integration contours 
$\Gamma_1$, $\Gamma_2, \ldots, \Gamma_{n-2}$. Because the residue of the integrand of $\mathcal{J}$ in~\eqref{eulerintegral} at infinity equals zero,
the integration around $\Gamma$ gives zero: 
\begin{align}\label{lrloop} 
\vcenter{\hbox{\includegraphics[scale=0.275]{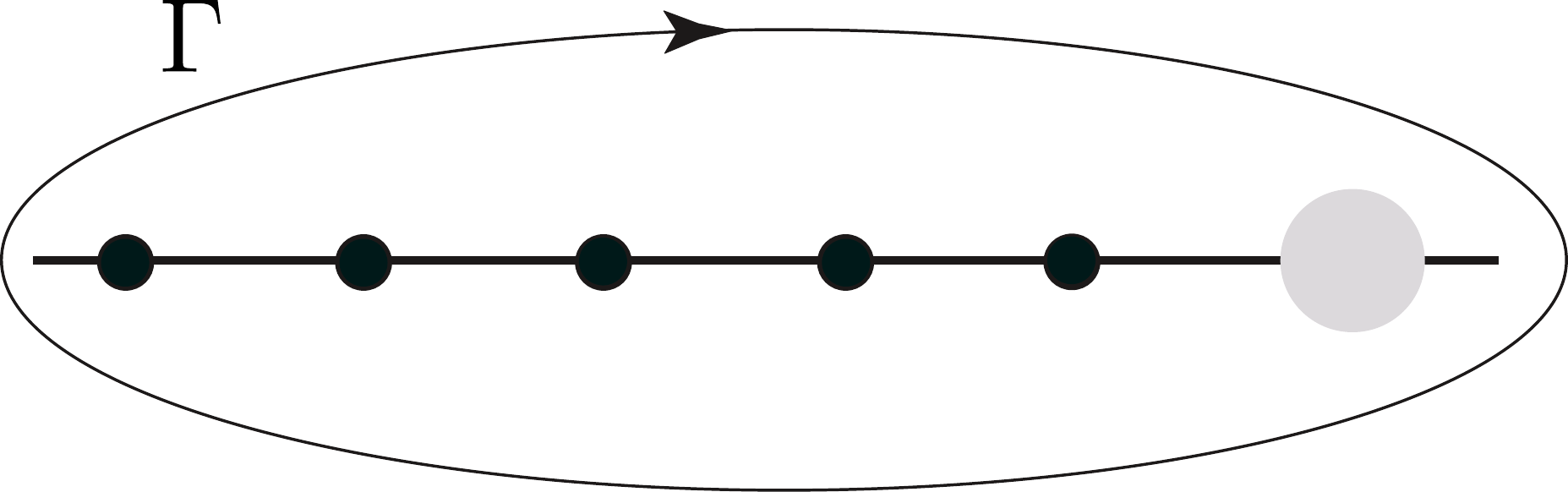}}} 
\quad = \quad 0 .
\end{align}
Pinching $\Gamma$ to itself at the point $x_{n+1}$ and dividing
it into a left half $\Gamma_l$ and a right half $\Gamma_r$ at the pinch point, we can write~\eqref{lrloop} in the form
\begin{align}
0 \quad \overset{\eqref{lrloop}}{=} & \; \quad \vcenter{\hbox{\includegraphics[scale=0.275]{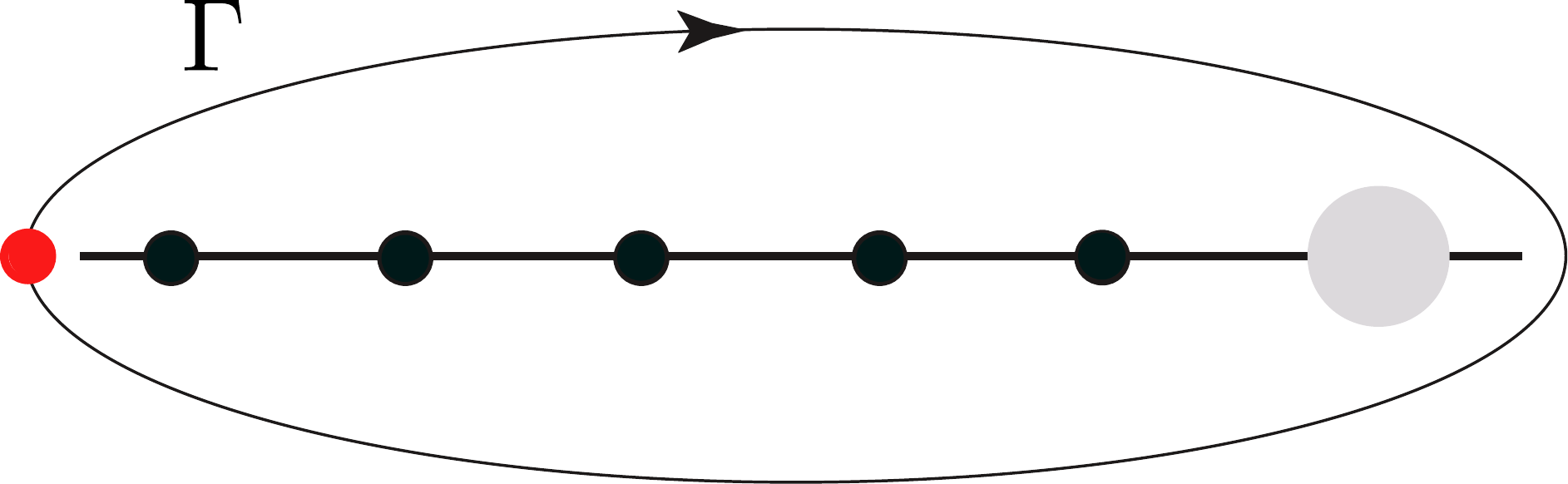}}} \\[1em] 
= & \; \quad
\vcenter{\hbox{\includegraphics[scale=0.275]{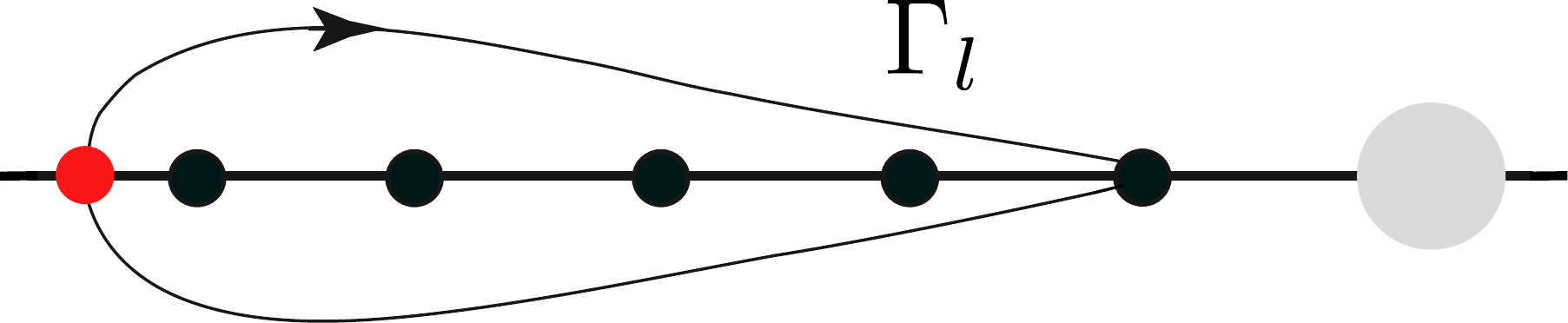}}} \quad \quad + \quad \quad \quad \;
q^{n+1} \quad \vcenter{\hbox{\includegraphics[scale=0.275]{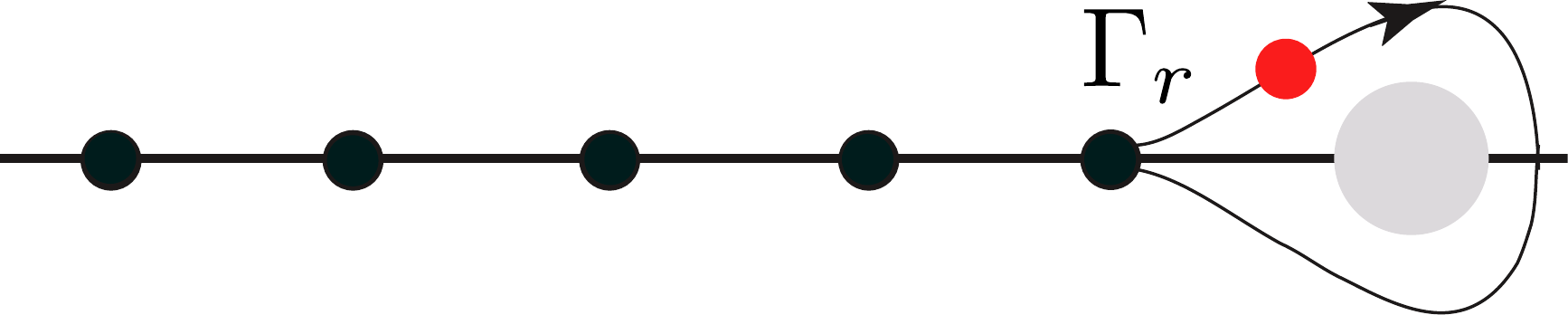}}} \\[1em]  
= & \; \quad
\vcenter{\hbox{\includegraphics[scale=0.275]{e-lrloop3.pdf}}} \quad \quad + \quad \quad
\frac{q^{n+2}}{q - q^{-1}} \quad \vcenter{\hbox{\includegraphics[scale=0.275]{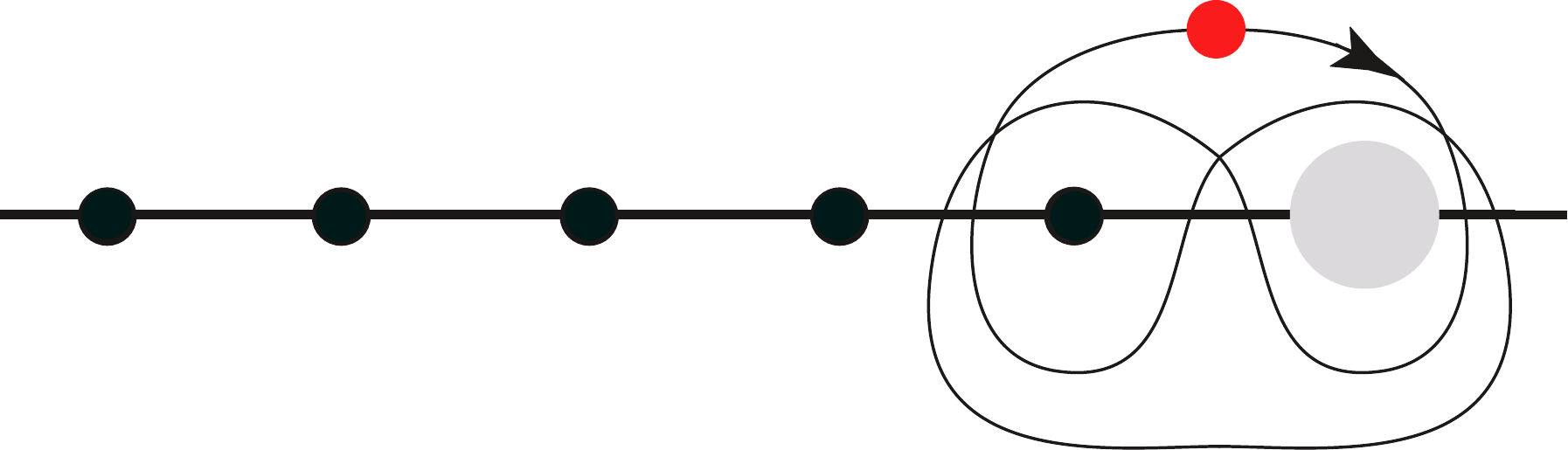} .}} 
\label{lrloop2} 
\end{align}

Next, we decompose $\Gamma_l$ into a linear combination of contours $[x_j,x_{j+1}]^+$, for $j \in \{1,2,\ldots,n-1\}$, 
which are links in the upper half-plane joining the points $x_j$ to $x_{j+1}$:
\begin{align}
[x_j,x_{j+1}]^+ \quad := \quad \vcenter{\hbox{\includegraphics[scale=0.275]{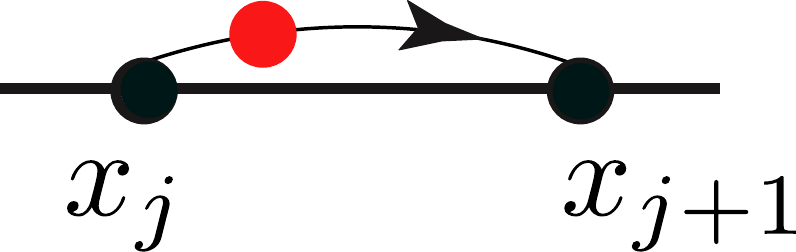} .}} 
\end{align}
We obtain
\begin{align} \label{DecompositionOfLeftLoop}
\vcenter{\hbox{\includegraphics[scale=0.275]{e-lrloop3.pdf}}} \quad
= \quad  (q-q^{-1}) \sum_{j = 1}^n \,[j]\, \quad \vcenter{\hbox{\includegraphics[scale=0.275]{e-segment.pdf} ,}} 
\end{align}
where $[j]$ is the $j$:th quantum integer~\eqref{Qinteger}.
Including the overall multiplicative factor $(q-q^{-1})$ into the (yet unspecified) constant, we write~\eqref{Combined} as
\begin{align} 
\label{Combined1} 
& \; \mathcal{J}\Big(\text{$\Gamma_m = \mathscr{P}(x_{2n - m}, \Gamma_{m-1})$ for all $m \in \{1,2,\ldots,n-1\}$} \Big) \\
\label{Combined2} 
= & \; \const \times 
\sum_{j = 1}^n 
\,[j] \, \mathcal{J}\Big(\text{$\Gamma_m = \mathscr{P}(x_{2n - m}, \Gamma_{m-1})$ for all $m \in \{1,2,\ldots,n-2\}$, and } \Gamma_{n-1} = [x_j,x_{j+1}]^+ \Big) ,
\end{align}
where the $\kappa$-dependent nonzero constant includes not only the factor $(q-q^{-1})$ from~\eqref{DecompositionOfLeftLoop} but also 
the further phase factor $(-q^{-n-2} (q-q^{-1}))$ 
from~\eqref{lrloop2}.

Now, to find the Jones-Wenzl projector coefficients, we repeat this process for each term in~\eqref{Combined2}: 
\begin{itemize}[leftmargin=*]
\item We write the outermost contour $\Gamma_{n-2}$ in the form~\eqref{paste}.

\item Via~(\ref{paste},~\ref{lrloop2}), we identify it with the right half $\Gamma_r$ of a simple clockwise-oriented loop $\Gamma$ surrounding the interval $[x_1, x_{2n}]$ and 
the contours $\Gamma_1$, $\Gamma_2, \ldots, \Gamma_{n-3}$ and $[x_j,x_{j+1}]^+$.

\item With the residue of the integrand of $\mathcal{J}$ in~\eqref{eulerintegral} at infinity equaling zero, the integration around $\Gamma$ gives zero~\eqref{lrloop},
and we may thus replace the integration around $\Gamma_r$ by integration around the left half $\Gamma_l$ of $\Gamma$.

\item We decompose $\Gamma_l$ into a linear combination of link-type contours, analogously to~\eqref{DecompositionOfLeftLoop}. 
We make use of~\cite[lemma~\red{10}]{sfk3}, which implies that the decomposition of $\Gamma_l$ in the $j$:th term 
in~\eqref{Combined2} takes place as if the contour $[x_j,x_{j+1}]^+$ and its endpoints are 
invisible:
thus, only links $[x_k,x_{k+1}]^+$ for $k \in \{1,2,\ldots, j-2, j+3,\ldots,n+2 \}$ and $[x_{j-1},x_{j+2}]^+$ appear in this decomposition.
We clarify this below in section~\ref{ExampleSubSec} with an example. 
\end{itemize}

After repeating the above process for all Pochhammer contours of~\eqref{Combined}, we finally arrive with an equation of the form~\eqref{Combined}.
Using formula~\eqref{DiagFormula} of corollary~\eqref{DiagFormulaCor} to fix the overall multiplicative constant, 
we can identify the entries of the inverse of the meander matrix $\smash{\Gram_{2n}\super{0}}$ as explicit formulas in terms of 
$\kappa$ (i.e.,~$q$). We give the result of this procedure in lemma~\ref{ExplicitLem}, but first, we present an example for illustration.

\subsection{Formulas for entries of the inverse meander matrix: examples} \label{ExampleSubSec}

The contour deformation recipe given above
is somewhat vague, and we invite the reader to investigate the details.  To clarify some of this vagueness, 
we give an example of this recipe's successful implementation.  The example is the simplest one that illustrates all features of this recipe: the case with $n = 3$.  
Then,~\eqref{Combined} reads
\begin{align} 
\label{Combined3}  
\mathcal{J}\Big( \Gamma_1 = \mathscr{P}(x_5, x_6), \;\; \Gamma_2 = \mathscr{P}(x_4, \Gamma_1) \Big) 
= & \; \const \times
\sum_{\mathclap{\alpha \, \in \, \LP_6\super{0}}} \,\, \big[\big( \Gram_6\super{0} \big)^{-1}\big]_{\Cap_3, \alpha} \sF_\alpha . 
\end{align}
To determine the coefficients 
in~\eqref{Combined3}, we begin by deforming
the Pochhammer contour $\Gamma_2 = \mathscr{P}(x_4, \Gamma_1)$. 
According to~(\ref{Combined1}--\ref{Combined2})
we obtain
\begin{align}
\label{Combined4} 
\mathcal{J}\Big( \Gamma_1 = \mathscr{P}(x_5, x_6), \;\; \Gamma_2 = \mathscr{P}(x_4, \Gamma_1) \Big) =
\const \times 
\sum_{j = 1}^{3} \,[j]  \, \mathcal{J}\Big(\Gamma_1 = \mathscr{P}(x_5, x_6), \;\; \Gamma_2 = [x_j,x_{j+1}]^+ \Big) .
\end{align}

Next, we deform the Pochhammer contour $\Gamma_1 = \mathscr{P}(x_5, x_6)$ in each term on the 
right side of~\eqref{Combined4}. 
To arrive with a formula analogous to~(\ref{Combined1}--\ref{Combined2}), we first write $\Gamma_1$ in the form~\eqref{paste} and, 
as in~(\ref{lrloop}--\ref{lrloop2}),
identify it with a constant multiple of the left half $\Gamma_l$ of a simple clockwise-oriented loop $\Gamma$ surrounding the interval $[x_1, x_{6}]$ and 
the contours $[x_j,x_{j+1}]^+$. Then, we decompose $\Gamma_l$ into a linear combination of link-type contours. 
By \cite[lemma~\red{10}]{sfk3}, for the $j$:th term on the right side of~\eqref{Combined4}, the decomposition of $\Gamma_l$ 
takes place as if the contour $[x_j,x_{j+1}]^+$ and its endpoints are 
invisible. 
The terms $j=1,2,3$ are the following (including a factor $(q - q^{-1})$ into the overall multiplicative constants):
\begin{enumerate}[leftmargin=*]
\itemcolor{red}
\item For the $j = 1$ term, we obtain
\begin{align}
\label{Combined5} 
\mathcal{J}\Big(\Gamma_1 = \mathscr{P}(x_5, x_6), \;\; \Gamma_2 = [x_1,x_2]^+ \Big) 
= \const \times 
\sum_{k=1}^2 \,[k] \, \mathcal{J}\Big(\Gamma_1 = [x_{k+2}, x_{k+3}]^+, \;\; \Gamma_2 = [x_1,x_2]^+ \Big) .
\end{align}

\item For the $j = 2$ term, we obtain
\begin{align} 
& \; \mathcal{J}\Big(\Gamma_1 = \mathscr{P}(x_5, x_6), \Gamma_2 = [x_2,x_3]^+ \Big) \\
= & \; \const \times 
\Bigg(
[1] \,  \mathcal{J}\Big(\Gamma_1 = [x_1, x_4]^+, \Gamma_2 = [x_2,x_3]^+ \Big) 
+ [2] \, \mathcal{J}\Big(\Gamma_1 = [x_4, x_5]^+, \Gamma_2 = [x_2,x_3]^+ \Big) \Bigg),
\end{align}
where the contour $[x_1, x_4]^+$ arcs over the contour $[x_2, x_3]^+$ in the upper half-plane.  

\item The $j = 3$ term is similar to the $j = 2$ term.  We 
obtain
\begin{align} 
& \; \mathcal{J}\Big(\Gamma_1 = \mathscr{P}(x_5, x_6), \Gamma_2 = [x_3,x_4]^+ \Big) \\
\label{Combined7}  
= & \; \const \times 
\Bigg(
[1] \, \mathcal{J}\Big(\Gamma_1 = [x_1, x_2]^+, \Gamma_2 = [x_3,x_4]^+ \Big) 
+ [2] \, \mathcal{J}\Big(\Gamma_1 = [x_2, x_5]^+, \Gamma_2 = [x_3,x_4]^+ \Big)  \Bigg),
\end{align}
where again, the contour $[x_2, x_5]^+$ arcs over the contour $[x_3, x_4]^+$ in the upper half-plane.  
\end{enumerate}
Inserting~(\ref{Combined5}--\ref{Combined7}) into~\eqref{Combined4} and recalling~\eqref{Fsoln}, we obtain
\begin{align} 
& \; \mathcal{J}\Big( \Gamma_1 = \mathscr{P}(x_5, x_6), \;\; \Gamma_2 = \mathscr{P}(x_4, \Gamma_1) \Big) \\
=  & \;  \const \times 
\left( [2]  [3]  \, \sF_{\hbox{\includegraphics[scale=0.2]{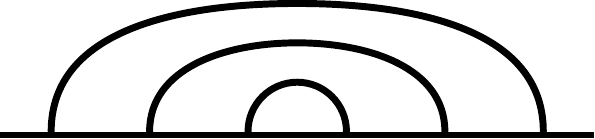}}} 
+ [2]  [2]  \, \sF_{\hbox{\includegraphics[scale=0.2]{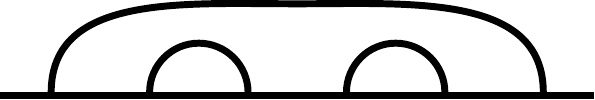}}}
+ [2]  \, \sF_{\hbox{\includegraphics[scale=0.2]{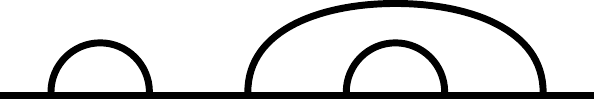}}}
+ [2]  \, \sF_{\hbox{\includegraphics[scale=0.2]{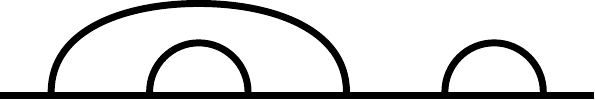}}}
+ ([1]  + [3] ) \, \sF_{\hbox{\includegraphics[scale=0.2]{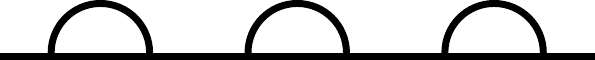}}} \right) .
\end{align}
After identifying this final result with the target equation~\eqref{Combined3}, we find the entries in the row of the inverse of the meander matrix $\smash{\Gram_6\super{0}}$ indexed by 
$\smash{\Cap_3 = \hbox{\includegraphics[scale=0.2]{e-LP7.pdf}}}$, up to a common multiplicative constant:
\begin{align}  
\label{Top1}
\big[\big(\Gram_6\super{0}\big)^{-1}\big]_{\hbox{\includegraphics[scale=0.2]{e-LP7.pdf}}, \hbox{\includegraphics[scale=0.2]{e-LP7.pdf}}}
& = \const \times [2]  [3]  \\
\big[\big(\Gram_6\super{0}\big)^{-1}\big]_{\hbox{\includegraphics[scale=0.2]{e-LP7.pdf}}, \; \hbox{\includegraphics[scale=0.2]{e-LP6.pdf}}}
& = \const \times [2]  [2]  \\
\big[\big(\Gram_6\super{0}\big)^{-1}\big]_{\hbox{\includegraphics[scale=0.2]{e-LP7.pdf}}, \; \hbox{\includegraphics[scale=0.2]{e-LP5.pdf}}}
& = \const \times [2]  \\
\big[\big(\Gram_6\super{0}\big)^{-1}\big]_{\hbox{\includegraphics[scale=0.2]{e-LP7.pdf}}, \; \hbox{\includegraphics[scale=0.2]{e-LP4.pdf}}}
& = \const \times [2]  \\
\big[\big(\Gram_6\super{0}\big)^{-1}\big]_{\hbox{\includegraphics[scale=0.2]{e-LP7.pdf}}, \; \hbox{\includegraphics[scale=0.2]{e-LP3.pdf}}}
& = \const \times ([1]  + [3] ) .
\label{Top5}
\end{align}
Finally, we use~\eqref{DiagFormula} from corollary~\ref{DiagFormulaCor}
with $n = 3$ to solve for the overall multiplicative constant:
\begin{align}
- \frac{1}{[4] } \overset{\eqref{DiagFormula}}{=} \big[\big(\Gram_6\super{0}\big)^{-1}\big]_{ \Cap_3, \Cap_3} 
= \big[\big(\Gram_6\super{0}\big)^{-1}\big]_{\hbox{\includegraphics[scale=0.2]{e-LP7.pdf}}, \; \hbox{\includegraphics[scale=0.2]{e-LP7.pdf}}}
\overset{\eqref{Top1}}{=} \const \times [3]  [2]  \qquad \Longrightarrow \qquad 
\const = -\frac{1}{[4] !} . 
\end{align}
Inserting this normalization into (\ref{Top1}--\ref{Top5}) finally gives us explicit formulas in terms of $q$ of all five entries in the row of the inverse of the meander matrix $\smash{\Gram_6\super{0}}$ indexed by $\smash{\Cap_3 = \hbox{\includegraphics[scale=0.2]{e-LP7.pdf}}}$:
\begin{align}  
\big[\big(\Gram_6\super{0}\big)^{-1}\big]_{\hbox{\includegraphics[scale=0.2]{e-LP7.pdf}}, \; \hbox{\includegraphics[scale=0.2]{e-LP7.pdf}}}
& = - \frac{1}{[4] } \\
\big[\big(\Gram_6\super{0}\big)^{-1}\big]_{\hbox{\includegraphics[scale=0.2]{e-LP7.pdf}}, \; \hbox{\includegraphics[scale=0.2]{e-LP6.pdf}}}
& = - \frac{[2] }{[3]  [4] } \\
\big[\big(\Gram_6\super{0}\big)^{-1}\big]_{\hbox{\includegraphics[scale=0.2]{e-LP7.pdf}}, \; \hbox{\includegraphics[scale=0.2]{e-LP5.pdf}}}
& = - \frac{1}{[3]  [4] } \\
\big[\big(\Gram_6\super{0}\big)^{-1}\big]_{\hbox{\includegraphics[scale=0.2]{e-LP7.pdf}}, \; \hbox{\includegraphics[scale=0.2]{e-LP4.pdf}}}
& = - \frac{1}{[3]  [4] } \\
\big[\big(\Gram_6\super{0}\big)^{-1}\big]_{\hbox{\includegraphics[scale=0.2]{e-LP7.pdf}}, \; \hbox{\includegraphics[scale=0.2]{e-LP3.pdf}}}
& = - \frac{1}{[4] !} ([1]  + [3] ).
\end{align}

\subsection{Formulas for entries of the inverse meander matrix: general case}

Generalizing the above
work to arbitrary $n \in \bZpos$, we obtain an explicit formula in terms of $q$ of all entries in the row of the inverse of 
the meander matrix $\smash{\Gram_{2n}\super{0}}$ indexed by $\Cap_n$.  To write this formula, we use the following recipe:

\begin{recipe} \label{recipe} 
For a link pattern $\alpha \in \smash{\LP_{2n}\super{0}}$, we let $a_i,b_i \in \{1,2,\ldots,2 n\}$ respectively denote the label of the left and right endpoint of the $i$:th 
link of $\alpha$. We enumerate the links of $\alpha$ from $1$ to $n$ in such a way that 
\begin{enumerate} 
\itemcolor{red}
\item \label{EnumIt2} for all $i,j \in \{1,2,\ldots,n\}$, if $i < j$, then the $i$:th link does not nest the $j$:th link, and
\item \label{EnumIt1} $b_i \leq i+n$ for all $i \in \{1,2,\ldots,n\}$ 
or equivalently by item~\ref{EnumIt2},  
$a_i \leq i + n - 1$
for all $i \in \{1,2,\ldots,n\}$.
\end{enumerate}
We denote $\vartheta = (a_1, a_2,\ldots, a_n)$, and we let
\begin{align}  \mathsf{S}(\alpha) = \big\{ \vartheta \in \bZnn^n \, \big| \, \text{$\vartheta$ arises from an enumeration of the links of $\alpha$ satisfying items~\ref{EnumIt2} and~\ref{EnumIt1}} \big\}.
\end{align}  
Finally, for each $\vartheta \in \mathsf{S}(\alpha)$, we let $\gamma(\vartheta)$ be a multiindex with $n$ entries and whose $i$:th entry is given by 
\begin{align}  
\gamma(\vartheta)_i = 2\times \# \big\{1 \leq j \leq i-1 \, \big| \, a_j < a_i \big\} . 
\end{align}  
\end{recipe}

For example, the rainbow link pattern $\Cap_n$ has only one enumeration (here, circled numbers indicate the 
label of the link while uncircled numbers still indicate the size of a cable):
\begin{align}  
\vcenter{\hbox{\includegraphics[scale=0.275]{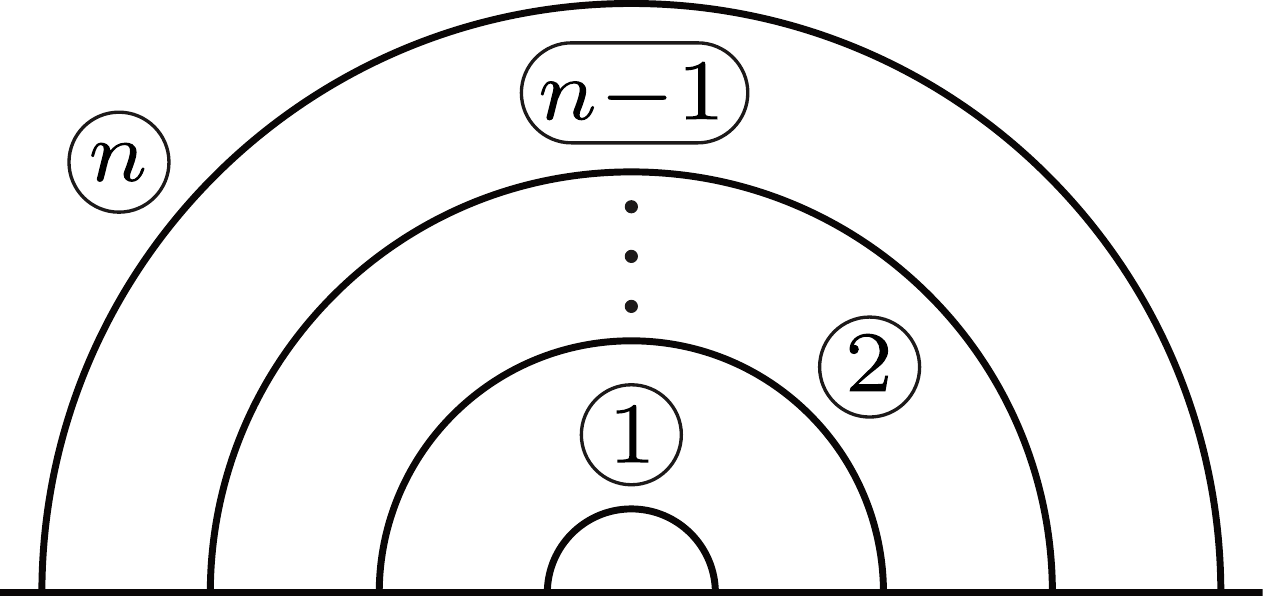}}} \quad 
\qquad \qquad \Longrightarrow \qquad \qquad
\mathsf{S}(\Cap_n) = \{ (n,n-1,n-2,\ldots,2,1) \} . 
\end{align}  
Other link patterns may have several different enumerations.  For example, the links of the following link pattern $\alpha \in \smash{\LS_8\super{0}}$ 
may be enumerated according to recipe~\ref{recipe} in and only in any one of the following three ways:
\begin{gather} 
\vcenter{\hbox{\includegraphics[scale=0.275]{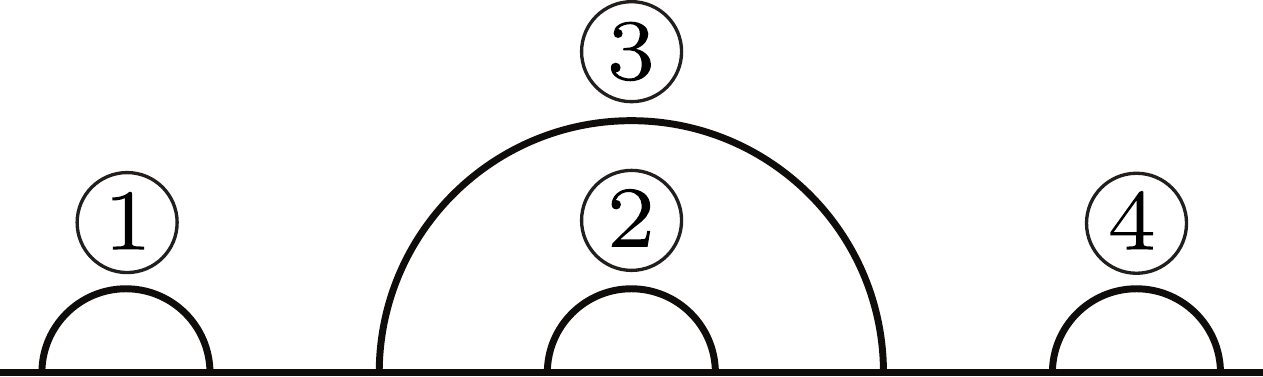} ,}}  \qquad \qquad
\vcenter{\hbox{\includegraphics[scale=0.275]{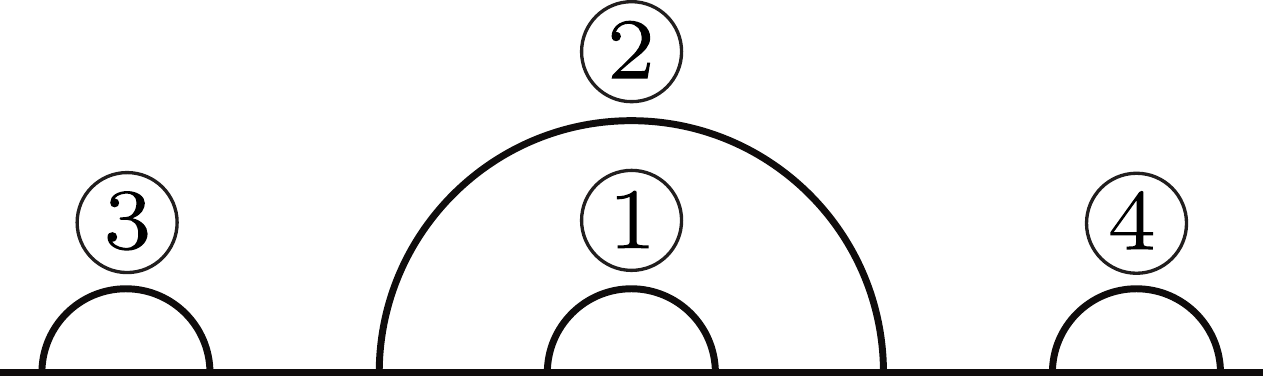} ,}} \qquad \qquad
\vcenter{\hbox{\includegraphics[scale=0.275]{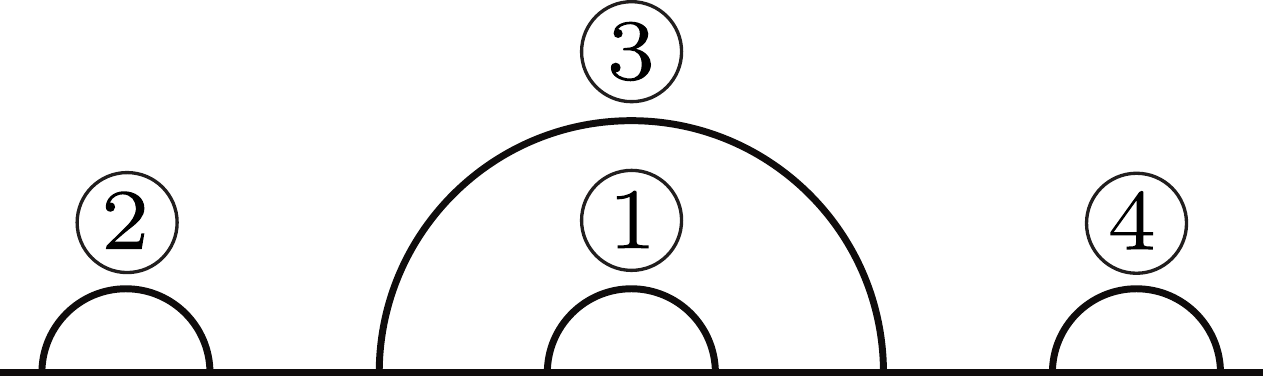}}} \\[1em]
\Longrightarrow \qquad \mathsf{S}(\alpha) = \{ (1,4,3,7), (4,3,1,7), (4,1,3,7) \} . 
\end{gather}

\begin{lem} \label{ExplicitLem} Suppose $n + 1 < \ppmin(q)$, and recall the notations and definitions of recipe~\ref{recipe}.  Then the entries of the inverse of the meander matrix $\smash{\Gram_{2n}\super{0}}$ along the row indexed by $\Cap_n$ are given by the formula
\begin{align}  \label{InvG} 
\big[\big( \Gram_{2n}\super{0} \big)^{-1}\big]_{\Cap_n, \alpha} 
= \frac{(-1)^n}{[n+1]!} \sum_{\vartheta \, \in \, \mathsf{S}(\alpha)} \prod_{ \vphantom{\vartheta \, \in \, \mathsf{S}(\alpha)} i = 1}^{n} \, 
[a_i - \gamma(\vartheta)_i] . 
\end{align}  
\end{lem}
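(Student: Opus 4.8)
The plan is to prove formula~\eqref{InvG} by following the contour-deformation recipe sketched in the text between equations~\eqref{Combined1} and~\eqref{Top5}, but carried out systematically for general $n$ rather than on the $n=3$ example. The starting point is identity~\eqref{Combined}, which equates the Coulomb-gas integral $\mathcal{J}$ around the nested Pochhammer contours $\mathscr{P}(x_{2n-m},\Gamma_{m-1})$ with a linear combination $\sum_\alpha \big[(\Gram_{2n}\super{0})^{-1}\big]_{\Cap_n,\alpha}\sF_\alpha$, valid for irrational $\kappa\in(4,8)$ (hence $\ppmin(q)=\infty$, so the meander matrix is invertible), and then extended to all $q$ with $n+1<\ppmin(q)$ by analyticity. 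The goal is to expand the left-hand side into the basis $\{\sF_\alpha\}$ explicitly and read off the coefficients.

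First I would set up the induction on the number of Pochhammer contours to be deformed, i.e.\ on the index $m$ running from $n-1$ down to $1$. The single-contour deformation step is the content of~\eqref{paste}--\eqref{DecompositionOfLeftLoop}: writing the outermost Pochhammer contour $\Gamma_m$ via~\eqref{paste}, identifying it with the right half $\Gamma_r$ of a large clockwise loop $\Gamma$ around $[x_1,x_{2n}]$ (whose integral vanishes by the residue-at-infinity argument~\eqref{lrloop}), replacing $\Gamma_r$ by $-\Gamma_l$, and decomposing $\Gamma_l$ into link-type contours $[x_j,x_{j+1}]^+$ with coefficients $[j]$ as in~\eqref{DecompositionOfLeftLoop}. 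The crucial bookkeeping ingredient is~\cite[lemma~\red{10}]{sfk3}: when some contours have already been turned into links $[x_j,x_{j+1}]^+$, the decomposition of $\Gamma_l$ proceeds ``as if'' those links and their endpoints were invisible. Tracking this invisibility is exactly what produces the shifted quantum integers $[a_i-\gamma(\vartheta)_i]$: the contour that becomes the $i$:th link of $\alpha$ is attached at a position whose effective index, after deleting the $\gamma(\vartheta)_i$ endpoints of earlier links lying to its left, equals $a_i-\gamma(\vartheta)_i$. I would make this precise by showing that, after all $n-1$ deformations, each term in the expansion is indexed by a choice of enumeration of the links of the resulting pattern $\alpha$ satisfying items~\ref{EnumIt2} and~\ref{EnumIt1} of recipe~\ref{recipe} (item~\ref{EnumIt1}, the bound $b_i\le i+n$, records that the $i$:th contour deformed is a ``left-half'' link that cannot reach past the contours not yet deformed), with weight $\prod_{i=1}^n[a_i-\gamma(\vartheta)_i]$.

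Next I would collect terms: summing over all enumerations $\vartheta\in\mathsf{S}(\alpha)$ that yield the same pattern $\alpha$, and invoking $\sF_\alpha=\sum_\beta[\Gram_{2n}\super{0}]_{\alpha,\beta}\Pi_\beta$ and~\eqref{InvertFPi}, gives $\big[(\Gram_{2n}\super{0})^{-1}\big]_{\Cap_n,\alpha}=\const\times\sum_{\vartheta\in\mathsf{S}(\alpha)}\prod_{i=1}^n[a_i-\gamma(\vartheta)_i]$ for a single overall $\kappa$-dependent constant (all the accumulated factors $(1-q^{-2})$, $q^{n+1}$, $q-q^{-1}$, $-q^{-n-2}(q-q^{-1})$ etc.\ from each step are $\alpha$-independent and absorbed into it). The constant is then pinned down by specializing $\alpha=\Cap_n$: by the example computation, $\mathsf{S}(\Cap_n)=\{(n,n-1,\dots,1)\}$ with $\gamma(\vartheta)=(0,0,\dots,0)$, so the sum equals $[n][n-1]\cdots[1]=[n]!$; comparing with $\big[(\Gram_{2n}\super{0})^{-1}\big]_{\Cap_n,\Cap_n}=(-1)^n/[n+1]$ from corollary~\ref{DiagFormulaCor} forces $\const=(-1)^n/[n+1]!$, giving exactly~\eqref{InvG}. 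One must also check finiteness and nonvanishing of the relevant quantum integers: since $a_i-\gamma(\vartheta)_i\le a_i\le n$ by construction and $n+1<\ppmin(q)$, every $[a_i-\gamma(\vartheta)_i]$ and $[n+1]!$ is finite and $[n+1]!\ne0$, so the analytic-continuation argument from irrational $\kappa$ to general admissible $q$ goes through.

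The main obstacle is the combinatorial heart of the inductive step: giving a clean, rigorous proof that the iterated deformation produces precisely the set $\mathsf{S}(\alpha)$ of enumerations and precisely the weights $[a_i-\gamma(\vartheta)_i]$, including verifying that the ``arcing over'' phenomenon (contours like $[x_1,x_4]^+$ arcing over an already-placed $[x_2,x_3]^+$, as in the $n=3$ example) is correctly accounted for by the invisibility rule of~\cite[lemma~\red{10}]{sfk3} and corresponds to the non-nesting condition~\ref{EnumIt2}. This requires care with the order in which contours are deformed versus the order in which links are enumerated (they are reverse to each other), and with the fact that different deformation histories can land on the same link pattern, which is exactly why $\mathsf{S}(\alpha)$ can have several elements. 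Everything else --- the single-contour moves~\eqref{paste}--\eqref{lrloop2}, the residue-at-infinity vanishing, the normalization via corollary~\ref{DiagFormulaCor}, and the passage to general $q$ --- is routine given the cited results.
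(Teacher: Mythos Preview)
Your proposal is correct and follows exactly the approach the paper takes: the paper's own proof is in fact only a sketch (``By continuing our work from~\eqref{Combined3}\ldots We leave the details for the reader''), and your outline fills in precisely the combinatorial bookkeeping the paper omits, with the same normalization argument via $\alpha=\Cap_n$ and corollary~\ref{DiagFormulaCor}. If anything, your write-up is more complete than the paper's, and your identification of the ``main obstacle'' (matching the iterated deformation histories to the enumerations in $\mathsf{S}(\alpha)$ with the correct weights) is exactly the step the authors leave to the reader.
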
 

\begin{proof} By continuing our work from~\eqref{Combined3}, we eventually find the following formula for the coefficients $\smash{\big[\big( \Gram_{2n}\super{0} \big)^{-1}\big]_{\Cap_n, \alpha}}$ appearing in~\eqref{Combined3}, 
\begin{align}  
\big[\big( \Gram_{2n}\super{0} \big)^{-1}\big]_{\Cap_n, \alpha} 
= 
\const \sum_{\vartheta \, \in \, \mathsf{S}(\alpha)} \prod_{  \vphantom{\vartheta \, \in \, \mathsf{S}(\alpha)} i = 1}^{n} \, 
[a_i - \gamma(\vartheta)_i]. 
\end{align}  
We leave the details for the reader.
To find the constant, 
we set $\alpha = \Cap_n$ and use~\eqref{DiagFormula} from corollary~\ref{DiagFormulaCor}.
\end{proof}

\subsection{Formulas for coefficients of the Jones-Wenzl projector}

Using lemma~\ref{ExplicitLem}, we recover the following formula~\cite{sm} for the Jones-Wenzl projector coefficients in~\eqref{WJExpansion}: 

\begin{cor} \label{WJCoeffCor}
\textnormal{\cite[proposition~\red{5.1}]{sm}} 
Suppose $n < \ppmin(q)$.  Then for all $T \in \LD_n$, we have 
\begin{align}  \label{WJCoeff} 
\textnormal{coef}_T 
= \frac{1}{[n]!} \sum_{\vartheta \, \in \, \mathsf{S}(\alpha_T)} \prod_{  \vphantom{\vartheta \, \in \, \mathsf{S}(\alpha_T)} i = 1}^n \, 
[a_i - \gamma(\vartheta)_i]. 
\end{align}  
\end{cor}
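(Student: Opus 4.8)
With lemma~\ref{InvMeandLem}, lemma~\ref{ExplicitLem}, and corollary~\ref{DiagFormulaCor} already in hand, corollary~\ref{WJCoeffCor} is within reach: the plan is to combine these three formulas into \eqref{WJCoeff}, thereby recovering \cite[proposition~\red{5.1}]{sm}. The only genuine subtlety is a mismatch of hypotheses --- those three results all assume $n+1<\ppmin(q)$, whereas the corollary asserts the identity under the weaker hypothesis $n<\ppmin(q)$ --- so the argument will have two stages.

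First I would prove \eqref{WJCoeff} in the range $n+1<\ppmin(q)$. For $T\in\LD_n$, equation \eqref{ProjExplicit2} of lemma~\ref{InvMeandLem} expresses the coefficient $\textnormal{coef}_T$ in \eqref{WJExpansion} as the ratio of $\big[\big(\Gram_{2n}\super{0}\big)^{-1}\big]_{\Cap_n,\alpha_T}$ to $\big[\big(\Gram_{2n}\super{0}\big)^{-1}\big]_{\Cap_n,\Cap_n}$. I substitute the closed formula \eqref{InvG} of lemma~\ref{ExplicitLem} into the numerator, and into the denominator the value $\big[\big(\Gram_{2n}\super{0}\big)^{-1}\big]_{\Cap_n,\Cap_n}=(-1)^n/[n+1]$ from corollary~\ref{DiagFormulaCor} --- which is also just the $\alpha=\Cap_n$ case of \eqref{InvG}, since $\mathsf{S}(\Cap_n)$ is a singleton with $\gamma\equiv 0$ and $\prod_{i=1}^n[a_i]=[n]!$. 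The signs $(-1)^n$ cancel and $[n+1]/[n+1]!=1/[n]!$, leaving
\[
\textnormal{coef}_T \;=\; \frac{1}{[n]!}\sum_{\vartheta\,\in\,\mathsf{S}(\alpha_T)}\;\prod_{i=1}^{n}[a_i-\gamma(\vartheta)_i],
\]
which is exactly \eqref{WJCoeff}.

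It then remains to relax the hypothesis from $n+1<\ppmin(q)$ to $n<\ppmin(q)$. Here I would argue that both sides of \eqref{WJCoeff} are rational functions of $q$ that agree on an infinite set, hence agree in $\bQ(q)$, and therefore agree wherever both are defined. The left side is rational in $q$: by induction on $s$, the recursion \eqref{wjrecursion} builds $\WJProj\sub{s+1}$ from $\WJProj\sub{s}$ using only ring operations and division by the quantum integer $[s+1]$, so each coefficient $\textnormal{coef}_T$ lies in $\bQ(q)$ with poles confined to the zeros of $[2][3]\cdots[n]$; and $[2]\cdots[n]\neq 0$ precisely when $n<\ppmin(q)$, so $\textnormal{coef}_T$ is finite on all of $\{q\in\bC^\times: n<\ppmin(q)\}$. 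The right side is a Laurent polynomial in $q$ divided by $[n]!=[1][2]\cdots[n]$, hence likewise well-defined and finite exactly on $\{q: n<\ppmin(q)\}$. These two rational functions coincide for every $q$ with $n+1<\ppmin(q)$ --- an infinite set, as it contains every $q$ that is not a root of unity --- so their difference is a rational function with infinitely many zeros and thus vanishes identically. Consequently \eqref{WJCoeff} holds at every $q$ with $n<\ppmin(q)$.

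The sign and quantum-integer bookkeeping of the first stage is routine. The one point that requires care is the passage from $n+1<\ppmin(q)$ to $n<\ppmin(q)$, i.e.\ verifying that neither side of \eqref{WJCoeff} acquires a pole at the finitely many $q$ with $\ppmin(q)=n+1$; this is precisely what is secured by identifying the pole locus both of $\textnormal{coef}_T(q)$ and of $1/[n]!$ with $\{q\in\bC^\times:\ppmin(q)\le n\}$.
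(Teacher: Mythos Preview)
Your proposal is correct and follows essentially the same approach as the paper: combine lemma~\ref{InvMeandLem} with lemma~\ref{ExplicitLem} to obtain \eqref{WJCoeff} under the stronger hypothesis $n+1<\ppmin(q)$, then extend to $n<\ppmin(q)$ by an analyticity/rational-function argument. The only cosmetic differences are that the paper uses \eqref{ProjExplicit1} directly (absorbing the factor $(-1)^n[n+1]$) rather than the ratio form \eqref{ProjExplicit2}, and that your continuation argument is spelled out in more detail than the paper's one-line appeal to analyticity of $\textnormal{coef}_T$.
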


\begin{proof}  
For $n + 1 < \ppmin(q)$, formula~\eqref{WJCoeff} follows from combining~\eqref{ProjExplicit1} of lemma~\ref{InvMeandLem} with~\eqref{InvG} of lemma~\ref{ExplicitLem}.  
Because the coefficients $\textnormal{coef}_T$ are analytic functions of $q$ away from the poles $\{q \in \bC^\times \,| \, n < \ppmin(q) \}$, this formula 
extends continuously to cases with $n + 1 = \ppmin(q) $ too.  (Alternatively,~\cite{sm} gives the original proof of this corollary.)
\end{proof}

To finish, we give explicit formulas for certain important special cases of the Jones-Wenzl projector coefficients.
These formulas are needed frequently in the present article. To derive them, we use the following simple identity.

\begin{lem} \label{SumFormulaLem} 
For all $b,k \in \bZnn$, and $ i \in \{1,2,\ldots, b-1\}$, we have the sum formula
\begin{align}  \label{SumFormula} 
\sum_{a = 0}^ k \frac{[a+i-1]! [b - i + a -1]!}{[a]! [a + b]!} = \frac{1}{[i] [b - i]} \frac{[i+k]! [b + k -i]!}{[k]! [b + k]! }. 
\end{align}  
\end{lem}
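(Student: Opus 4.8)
The final statement to prove is Lemma~\ref{SumFormulaLem}, the elementary quantum-integer sum formula
\[
\sum_{a=0}^{k}\frac{[a+i-1]!\,[b-i+a-1]!}{[a]!\,[a+b]!}
= \frac{1}{[i][b-i]}\,\frac{[i+k]!\,[b+k-i]!}{[k]!\,[b+k]!}.
\]

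\medskip

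\textbf{Approach.} The plan is to prove this by induction on $k$, which reduces the whole identity to a single quantum-integer algebra check at the induction step. The base case $k=0$ is immediate: the left side is the single term $\frac{[i-1]!\,[b-i-1]!}{[0]!\,[b]!}$, and the right side is $\frac{1}{[i][b-i]}\cdot\frac{[i]!\,[b-i]!}{[0]!\,[b]!}$, and these agree since $[i]!=[i][i-1]!$ and $[b-i]!=[b-i][b-i-1]!$. For the inductive step, assume the formula holds for some $k\geq 0$; then the sum up to $k+1$ equals the right-hand side at $k$ plus the new term $a=k+1$, namely $\frac{[k+i]!\,[b-i+k]!}{[k+1]!\,[k+1+b]!}$. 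So I must verify
\[
\frac{1}{[i][b-i]}\frac{[i+k]!\,[b+k-i]!}{[k]!\,[b+k]!}
+\frac{[k+i]!\,[b+k-i]!}{[k+1]!\,[k+1+b]!}
=\frac{1}{[i][b-i]}\frac{[i+k+1]!\,[b+k+1-i]!}{[k+1]!\,[b+k+1]!}.
\]
Factoring out the common quantity $\dfrac{[i+k]!\,[b+k-i]!}{[k+1]!\,[b+k+1]!}$ from all three terms, this collapses to the scalar identity
\[
\frac{[k+1][b+k+1]}{[i][b-i]}+1=\frac{[i+k+1][b+k+1-i]}{[i][b-i]},
\]
i.e.\ $[k+1][b+k+1]+[i][b-i]=[i+k+1][b+k+1-i]$.

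\medskip

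\textbf{Key step.} The heart of the proof is thus the quadratic quantum-integer identity
\[
[i+k+1]\,[b+k+1-i]-[k+1]\,[b+k+1]=[i]\,[b-i].
\]
I would verify this using the standard product-to-sum formula for quantum integers, $[m][n]=\tfrac{1}{(q-q^{-1})^2}\big(q^{m+n}+q^{-m-n}-q^{m-n}-q^{-m+n}\big)$, or equivalently the bilinear identity $[m][n]-[m-1][n-1]=[m+n-1]$ together with telescoping. Concretely, writing $A=[i+k+1][b+k+1-i]$ and $B=[k+1][b+k+1]$: both products have the same ``sum'' argument $b+2k+2$, so in the $q^{\pm(m+n)}$ expansion the terms $q^{\pm(b+2k+2)}$ cancel in $A-B$, leaving only the ``difference'' terms. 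The difference argument for $A$ is $(i+k+1)-(b+k+1-i)=2i-b$ and for $B$ is $(k+1)-(b+k+1)=-b$, so
\[
A-B=\frac{-1}{(q-q^{-1})^2}\Big(q^{2i-b}+q^{b-2i}-q^{-b}-q^{b}\Big)
=\frac{1}{(q-q^{-1})^2}\big(q^{i}-q^{-i}\big)\big(q^{b-i}-q^{-(b-i)}\big)=[i][b-i],
\]
where the middle step is the elementary factorization $q^{b}+q^{-b}-q^{2i-b}-q^{b-2i}=(q^i-q^{-i})(q^{b-i}-q^{i-b})$. This is routine but is the only place real computation happens; I would present it in a couple of lines.

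\medskip

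\textbf{Main obstacle.} There is no serious obstacle here — the lemma is genuinely elementary. The only things to be careful about are bookkeeping: making sure the factorials in the induction step are manipulated so that a single common factor can be pulled out cleanly (so that $[i+k]!=[i+k][i+k-1]!$ etc.\ line up), and making sure the scalar identity is stated with the correct arguments. One should also note that no vanishing-denominator issue arises because the statement is a polynomial (Laurent-polynomial) identity in $q$ — it holds as a formal identity, independent of any root-of-unity condition — so it can be invoked freely wherever it is used in the body of the paper. I would therefore keep the write-up short: state the base case, reduce the inductive step to the displayed scalar identity, and prove that identity by the product-to-sum expansion above.
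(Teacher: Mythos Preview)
Your proof is correct and follows essentially the same approach as the paper: induction on $k$, with the inductive step reducing to the scalar identity $[k+1][b+k+1]+[i][b-i]=[i+k+1][b+k+1-i]$. The paper's write-up is terser and leaves that final quantum-integer identity implicit in a chain of equalities, whereas you spell it out via the product-to-sum expansion; otherwise the arguments coincide.
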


\begin{proof}  
We prove the claim for any $b \in \bZnn$ by induction on 
$k \geq 0$. It is 
easy to see that~\eqref{SumFormula} holds for $k = 0$.  
Now, we assume that~\eqref{SumFormula} holds with $k = l-1$ for some $l \in \bZpos$.  Denoting by $S_l$ the left side of~\eqref{SumFormula}, we calculate
\begin{align} 
S_l &= S_{l-1} + \frac{[l+i-1]![b-i+l-1]!}{[l]![b+l]!} \\
&= \frac{1}{[i] [b - i]} \frac{[l + i - 1]! [b + l - i - 1]!}{[l - 1]! [b + l - 1]!} 
+ \frac{[l+i-1]![b-i+l-1]!}{[l]![b+l]!} \\
&= \frac{1}{[i] [b - i]} \frac{[l + i - 1]! [b + l - i - 1]!}{[l]! [b + l]!}  
\big( [l][b+l] + [i][b-i] \big) \\
&= \frac{1}{[i] [b - i]} \frac{[l + i]! [b + l -i]!}{[l]! [b + l]! }. 
\end{align}
This proves formula~\eqref{SumFormula} with $k = l$ and concludes the proof.
\end{proof}

\begin{lem} \label{LinkPattLem} 
Suppose $n < \ppmin(q)$. Let $\alpha \in \smash{\LP_{2n}\super{0}}$ be the link pattern
\begin{align}  \label{SpecialLinkPattern} 
\alpha \quad = \quad \vcenter{\hbox{\includegraphics[scale=0.275]{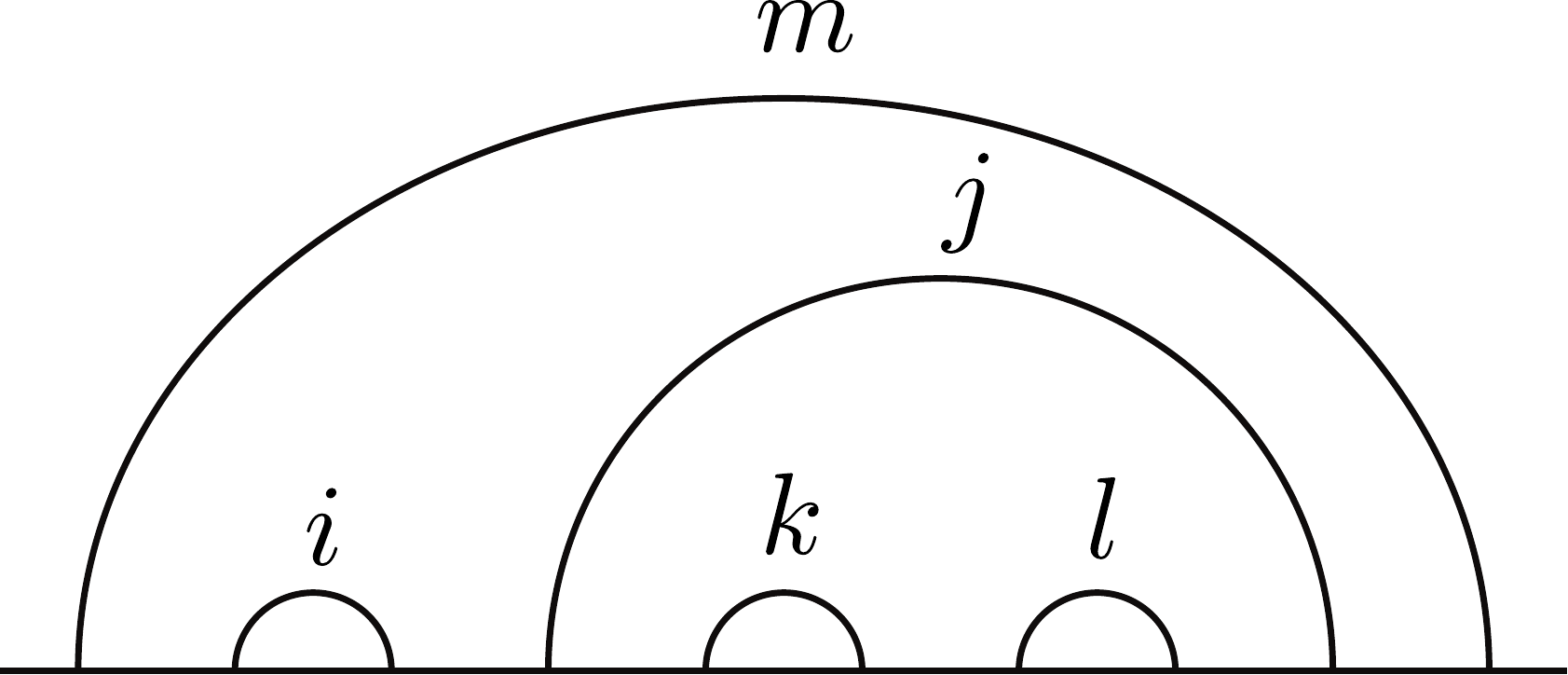} ,}} 
\end{align}  
with $i + j + k + l + m = n$. Then we have 
\begin{align}  \label{InvG2} 
\big[\big( \Gram_{2n}\super{0} \big)^{-1}\big]_{\Cap_n, \alpha} = 
\frac{(-1)^n}{[n+1]} \frac{ [i+k]![i+m]![i+j+k+m]![j+l+m]!}{[i]![k]![i+j+m]![m]!} . 
\end{align}  
\end{lem}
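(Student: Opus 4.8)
The plan is to apply Lemma~\ref{ExplicitLem} to the specific link pattern $\alpha$ of~\eqref{SpecialLinkPattern} and then evaluate the resulting combinatorial sum in closed form. By Lemma~\ref{ExplicitLem},
\begin{align*}
\big[\big( \Gram_{2n}\super{0} \big)^{-1}\big]_{\Cap_n, \alpha}
= \frac{(-1)^n}{[n+1]!} \sum_{\vartheta \, \in \, \mathsf{S}(\alpha)} \prod_{i = 1}^{n} [a_i - \gamma(\vartheta)_i] ,
\end{align*}
so it suffices to show that the sum on the right equals $[n]! \, [i+k]! \, [i+m]! \, [i+j+k+m]! \, [j+l+m]! \,/\, ([i]! \, [k]! \, [i+j+m]! \, [m]!)$; the prefactor $1/[n+1]!$ then combines with the $[n]!$ to produce the $1/[n+1]$ of~\eqref{InvG2}. (Equivalently, via~\eqref{ProjExplicit2} and Corollary~\ref{DiagFormulaCor}, the assertion is a formula for the Jones--Wenzl projector coefficient $\textnormal{coef}_T$ of the Temperley--Lieb diagram $T$ with $\alpha_T = \alpha$.)

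First I would describe the set $\mathsf{S}(\alpha)$. The pattern $\alpha$ is built from five blocks of mutually parallel arcs of sizes $i,j,k,l,m$, nested in a fixed staircase configuration, so the non-nesting condition~\ref{EnumIt2} of recipe~\ref{recipe} forces the enumeration to list, within each maximal chain of nested blocks, the arcs of an inner block before those of the block containing it; the only remaining freedom is how the arcs of two parallel (non-nested) blocks are interleaved, and this is further constrained by the positivity requirement~\ref{EnumIt1} that $b_i \le i+n$. I would record each admissible $\vartheta$ by a tuple of non-negative integers counting these interleavings, and then compute, for a fixed $\vartheta$, the left endpoints $a_i$ and the shifts $\gamma(\vartheta)_i = 2\,\#\{ j<i : a_j<a_i\}$ as explicit piecewise-linear functions of the block sizes and of the interleaving parameters. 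I expect $\prod_{i=1}^n [a_i-\gamma(\vartheta)_i]$ to factor, block by block, into ratios of quantum factorials in which each interleaving parameter appears precisely in the role of the variable $a$ (with upper limit $k$) of the sum formula~\eqref{SumFormula} in Lemma~\ref{SumFormulaLem}.

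Granting that, the remaining step is to evaluate the multiple sum by iterating Lemma~\ref{SumFormulaLem} --- a $q$-analogue of the Chu--Vandermonde identity --- eliminating one interleaving parameter at each application. Each such summation contributes one factorial upstairs and one downstairs, and after all of them are performed one reads off $[i+k]!$, $[i+m]!$, $[i+j+k+m]!$, $[j+l+m]!$ in the numerator and $[i]!$, $[k]!$, $[i+j+m]!$, $[m]!$ (together with the leftover $[n]!$) in the denominator; substituting into Lemma~\ref{ExplicitLem} yields~\eqref{InvG2}. The main obstacle is the bookkeeping in the combinatorial step: pinning down $\mathsf{S}(\alpha)$ so that every admissible ordering of the arcs of $\alpha$ is counted exactly once, extracting the closed form of $\gamma(\vartheta)_i$, and arranging the nested sums so that each one matches the exact shape required by Lemma~\ref{SumFormulaLem}. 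Once the indexing is set up correctly the identity telescopes mechanically, and the low-order cases $n = 1,2,3$ can be checked by hand against recipe~\ref{recipe} as a sanity test.
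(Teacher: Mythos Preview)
Your proposal is correct and follows the same approach as the paper: apply Lemma~\ref{ExplicitLem}, parameterize $\mathsf{S}(\alpha)$ by interleaving data, and evaluate the resulting sum using Lemma~\ref{SumFormulaLem}. The only organizational difference is that the paper packages the computation as a recursion: writing $\Upsilon_{ik}$ for the sum with cable sizes $i$ and $k$, it peels off the innermost arc of the $i$-cable together with the $a$ innermost arcs of the $k$-cable that precede it to obtain $\Upsilon_{ik} = [i+m]\,[2i+j+k+m]!\sum_{a=0}^k \Upsilon_{i-1,a}/[2i+j+a+m]!$, then proves the closed form by induction on $i$ (with the base case $\Upsilon_{0k}$ computed directly), applying Lemma~\ref{SumFormulaLem} once per step. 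This recursive framing avoids having to write down the full multi-parameter description of $\mathsf{S}(\alpha)$ and the explicit $\gamma(\vartheta)_i$ that you anticipate as the main bookkeeping obstacle, but the underlying computation is identical to yours.
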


\begin{proof}  
To explicitly show the dependence on the
sizes $i$ and $k$ of the cables of $\alpha$ thus labeled in~\eqref{SpecialLinkPattern}, 
we write $\alpha_{ik}$ for the link pattern $\alpha$ in~\eqref{SpecialLinkPattern}.  We also let
\begin{align}  \label{UpsilonDefn} 
\Upsilon_{ik} = \sum_{\vartheta \, \in \, \mathsf{S}(\alpha_{ik})} \prod_{  \vphantom{\vartheta \, \in \, \mathsf{S}(\alpha_{ik})} t = 1}^n \, 
[a_t - \gamma(\vartheta)_t] 
\qquad \qquad \Longrightarrow \qquad \qquad
\big[\big( \Gram_{2n}\super{0} \big)^{-1}\big]_{\Cap_n, \alpha_{ik}} \overset{\eqref{InvG}}{=} \frac{(-1)^n \Upsilon_{ik}}{[n+1]}. 
\end{align}  
To prove the lemma, we recursively compute the $\Upsilon_{ik}$.  
From item~\ref{EnumIt2} of recipe~\ref{recipe}, we see that the links have to be enumerated in order from innermost to outermost.
Furthermore, by item~\ref{EnumIt1} of recipe~\ref{recipe}, the link having label $1$ has to be either the innermost of the cable of size $i$ or 
the innermost of the cable of size $k$. In either case, for some $a \in \{0,1,\ldots,k\}$, the innermost $a$ links in the cable of size $k$ are enumerated from $1$
to $a$ (in order from innermost to outermost), and the innermost link in the cable of size $i$ is enumerated as $a+1$:
\begin{align}  
\vcenter{\hbox{\includegraphics[scale=0.275]{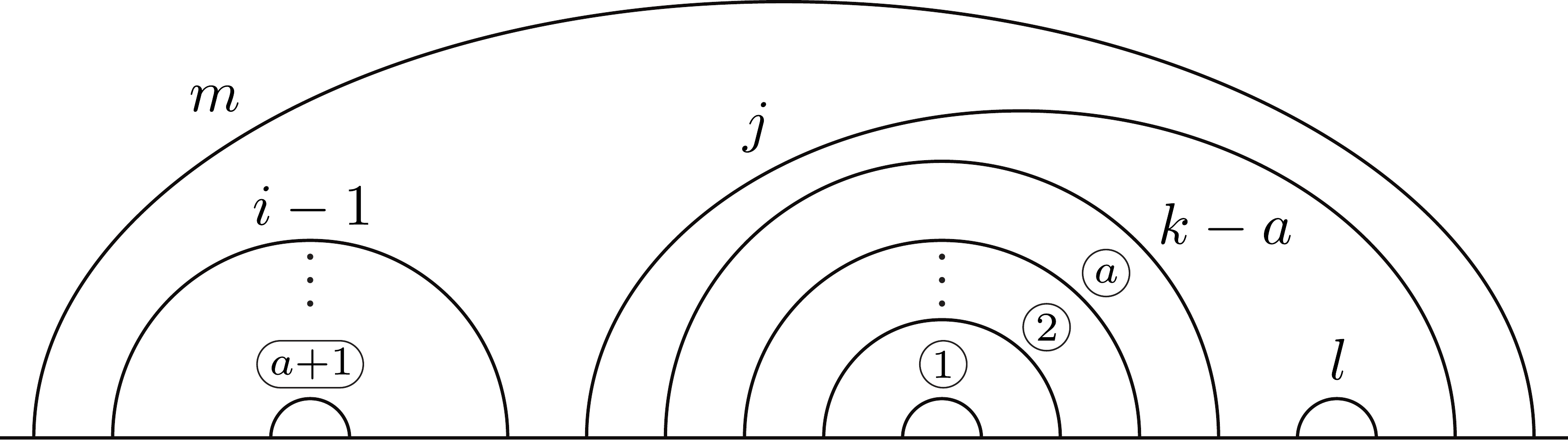} .}} 
\end{align}  
After enumerating the links of $\alpha_{ik}$ in this way, if we drop the links 
with labels $1$ through $a+1$ from $\alpha_{ik}$, 
and reduce the labels of the remaining links by $a+1$, then we obtain an enumeration for links of $\alpha_{i-1,k-a}$. This gives
\begin{align} 
\label{UpsilonRec} 
\Upsilon_{ik} &= \sum_{a = 0}^k \overbrace{[i+m]}^{1.}\overbrace{[2i + j + k + m][2i + j + k + m -1]\dotsm [2i + j + k + m - a + 1]}^{2.} \Upsilon_{i-1,k-a} \\
\label{UpsilonRec2} 
&= [i+m] [2i + j + k + m]!\sum_{a = 0}^k \frac{\Upsilon_{i-1,a}}{[2i + j + a + m]!},
\end{align}
where the factor (resp.~factors) under the first (resp.~second) brace follow (resp.~follows) from the innermost link (resp.~$a$ links) 
dropped from the cable of size $i$ (resp.~$k$) in $\alpha_{ik}$.  
Recursion~(\ref{UpsilonRec}--\ref{UpsilonRec2}) gives the closed formula
\begin{align}  \label{UpsilonFormula} 
\Upsilon_{ik} = \frac{[i+k]![i+m]![i+j+k+m]![j+l+m]!}{[i]![k]![i+j+m]![m]!} . 
\end{align}  
We prove this formula by induction on 
$i \geq 0$. For $i = 0$, we may compute $\Upsilon_{0k}$ explicitly from~\eqref{UpsilonDefn}.  
With the cable of size $i$ empty, the links in the cable of size $k$ in $\alpha_{0k}$ are necessarily enumerated one through $a$.  Thus,
\begin{align} 
\Upsilon_{0k} &= \overbrace{[ j + m ]!}^{1.} \, \overbrace{[j + m +1][j + m + 2]\dotsm[j + m +k]}^{2.} \, \overbrace{[j + m +1][j + m + 2]\dotsm[j + m + l]}^{3.} \\
&= \frac{[ j + k + m]![ j + l + m]!}{ [j + m]!},
\end{align}
where the factors under the first (resp.~second, resp.~third) brace follow from the cable of size $(j+m)$ (resp.~$k$, resp.~$l$) in $\alpha_{0k}$. 
This confirms formula~\eqref{UpsilonFormula} for the case $i = 0$.  
Now, assuming that~\eqref{UpsilonFormula} holds for $\Upsilon_{i-1,k}$, 
we prove that it holds for $\Upsilon_{ik}$ too.  Indeed, using lemma~\ref{SumFormulaLem} with $b = 2i + m + j$, we have 
\begin{align} 
\Upsilon_{ik} & 
\overset{\textnormal{(\ref{UpsilonRec}-\ref{UpsilonRec2})}}{=} 
[i+m] [2i + j + k + m]!\sum_{a = 0}^k \frac{[i + a -1]![i+m-1]![i+j+a+m-1]![j+l+m]!}{[i-1]![a]![i+j+m-1]![m]![2i + j + a + m]!} \\
& 
\underset{\hphantom{\textnormal{(\ref{UpsilonRec}-\ref{UpsilonRec2})})}}{\overset{\eqref{SumFormula}}{=}} 
\left(\frac{[i+m]![2i + j + k + m]![i+m-1]![j+l+m]!}{[i-1]![i+j+m-1]![m]!}\right) \left(\frac{[i+k]![m+i+j+k]!}{[i][k]![2i + j + k + m]![m+i+j]}\right) \\
& \underset{\hphantom{\textnormal{(\ref{UpsilonRec}-\ref{UpsilonRec2})})}}{\overset{\eqref{SumFormula}}{=}} 
\frac{[i+k]![i+m]![i+j+k+m]![j+l+m]!}{[i]![k]![i+j+m]![m]!}.
\end{align}
This proves formula~\eqref{UpsilonFormula} for $\Upsilon_{ik}$.  After inserting this into~\eqref{UpsilonDefn}, we finally obtain~\eqref{InvG2}.
\end{proof}

\begin{prop} \label{SpecialTProp}
\textnormal{\cite[proposition~\red{3.10}]{fk}} 
Suppose $n < \ppmin(q)$. Let $T \in \TL_n(\nu)$ be the tangle
\begin{align}  \label{SpecialTDiag} 
T \quad = \quad \vcenter{\hbox{\includegraphics[scale=0.275]{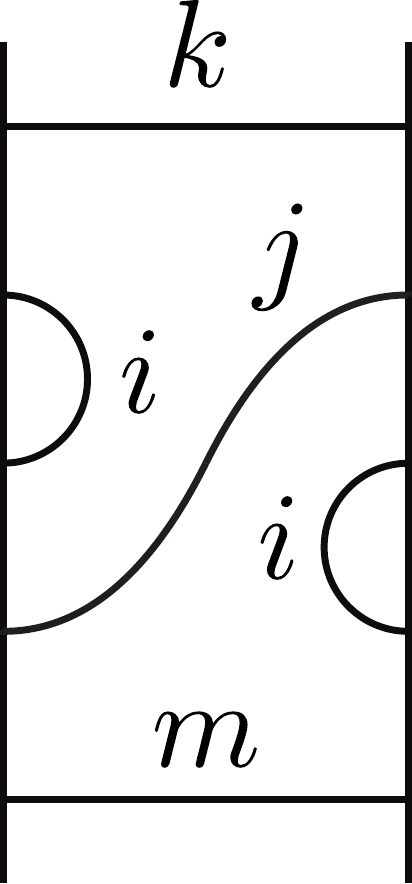}}} \quad 
\qquad \textnormal{or} \qquad \quad \vcenter{\hbox{\includegraphics[scale=0.275]{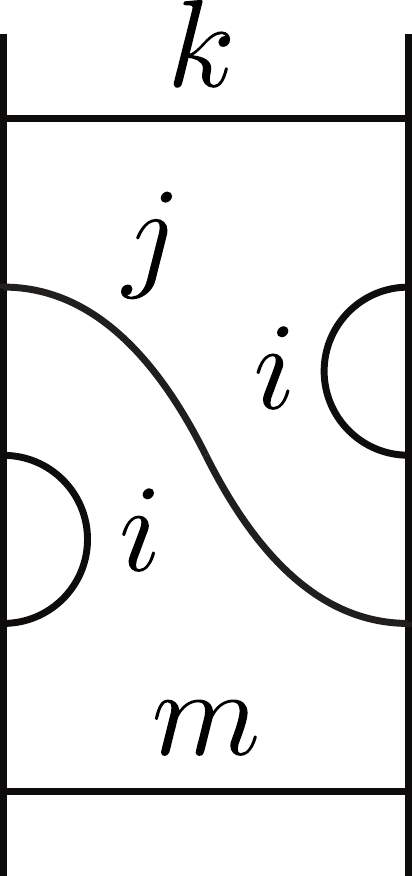} ,}}  
\end{align}  
with $2i + j + k + m = n$. Then we have
\begin{align} \label{SpecialT} 
\textnormal{coef}_T = \frac{[i+k]![i+m]![i+j+k+m]!}{[n]![i]![k]![m]!}. 
\end{align}   
\end{prop}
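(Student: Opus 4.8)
The plan is to deduce Proposition~\ref{SpecialTProp} directly from the Morrison-type formula for the Jones--Wenzl projector coefficients, which in the present paper is Corollary~\ref{WJCoeffCor}, together with the computation of the relevant inverse meander matrix entry carried out in Lemma~\ref{LinkPattLem}. Recall that under the bijection~\eqref{TLBij}, a tangle $T \in \TL_n(\nu)$ corresponds to a link pattern $\alpha_T \in \smash{\LP_{2n}\super{0}}$; the tangle $T$ of~\eqref{SpecialTDiag} is a nesting of cables of sizes $i$ (the ``turn-back'' cap at the top of the box), $j$, $k$, $m$ with $2i+j+k+m=n$, and I would first check that the link pattern $\alpha_T$ obtained by folding $T$ via~\eqref{TLBij} is precisely the link pattern~\eqref{SpecialLinkPattern} of Lemma~\ref{LinkPattLem}, for the appropriate values of the five parameters $(i,j,k,l,m)$ there (with one of them, say $l$, determined by $i+j+k+l+m=n$ and hence equal to $i$ in our case — this identification is the one bookkeeping point that needs care). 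Both pictures in~\eqref{SpecialTDiag} fold to the same $\alpha_T$, so it suffices to treat one of them.

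Once the correspondence $\alpha_T = \alpha$ (in the notation of Lemma~\ref{LinkPattLem}) is established, I would invoke Corollary~\ref{WJCoeffCor}, which gives
\begin{align}
\textnormal{coef}_T = \frac{1}{[n]!}\sum_{\vartheta \, \in \, \mathsf{S}(\alpha_T)} \prod_{i=1}^{n}[a_i - \gamma(\vartheta)_i] ,
\end{align}
valid for $n < \ppmin(q)$. The inner double sum is exactly the quantity $\Upsilon_{ik}$ (in the notation~\eqref{UpsilonDefn} of the proof of Lemma~\ref{LinkPattLem}), whose closed form~\eqref{UpsilonFormula} was proved there by a double induction using the summation identity of Lemma~\ref{SumFormulaLem}. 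Substituting $\Upsilon_{ik}$ and then specializing the fifth parameter to the value forced by our tangle (so that the factor $[j+l+m]!$ in~\eqref{InvG2} collapses against part of the numerator, or more directly, so that~\eqref{UpsilonFormula} reduces to the stated $[i+k]![i+m]![i+j+k+m]!/([i]![k]![m]!)$), one obtains
\begin{align}
\textnormal{coef}_T = \frac{1}{[n]!}\,\Upsilon_{ik} = \frac{[i+k]![i+m]![i+j+k+m]!}{[n]![i]![k]![m]!} ,
\end{align}
which is the claimed formula~\eqref{SpecialT}.

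The main obstacle — and really the only non-routine step — is the combinatorial identification of the folded link pattern and the correct matching of the five parameters in Lemma~\ref{LinkPattLem} with the four cable sizes $i,j,k,m$ of the tangle in~\eqref{SpecialTDiag}; in particular one must verify that the set $\mathsf{S}(\alpha_T)$ of admissible link enumerations is genuinely the same object as the one governing $\Upsilon_{ik}$, so that the recursion~(\ref{UpsilonRec}--\ref{UpsilonRec2}) applies verbatim. Everything downstream is then an application of already-proved results: the analytic continuation from $n+1 < \ppmin(q)$ to $n < \ppmin(q)$ is handled inside Corollary~\ref{WJCoeffCor} itself, and the algebraic simplification is immediate. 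I would also remark, as the paper does implicitly, that this proposition is independently due to~\cite[proposition~\red{3.10}]{fk}, so the role here is to show it follows cleanly from the appendix's machinery; alternatively one could cite that reference and give only the parameter dictionary, but presenting the short derivation from Lemma~\ref{LinkPattLem} keeps the appendix self-contained.
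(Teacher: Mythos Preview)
Your proposal is correct and follows essentially the same route as the paper's own proof: identify $\alpha_T$ with the link pattern~\eqref{SpecialLinkPattern} at $l=i$, invoke Corollary~\ref{WJCoeffCor} so that the sum becomes $\Upsilon_{ik}$ from~\eqref{UpsilonDefn}, and then substitute the closed form~\eqref{UpsilonFormula} (the factor $[j+l+m]! = [i+j+m]!$ cancelling against the denominator). The paper also notes the alternative of combining~\eqref{ProjExplicit2} with~\eqref{InvG2} directly, which you allude to as well.
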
 
\begin{proof} 
The sum in formula~\eqref{WJCoeff} for $\textnormal{coef}_T$ equals the sum $\Upsilon_{ik}$, defined in~\eqref{UpsilonDefn} in the proof of lemma~\ref{LinkPattLem}, 
with $l = i$.  In that proof, we computed a closed formula~\eqref{UpsilonFormula} for this sum, and after inserting this formula into~\eqref{WJCoeff}, we obtain~\eqref{SpecialT}.  (Alternatively, to prove the lemma, we may use~\eqref{ProjExplicit2} of lemma~\ref{InvMeandLem} with~\eqref{InvG2} of lemma~\ref{LinkPattLem}.) 
\end{proof}

\section{Diagram simplifications and evaluations} \label{TLRecouplingSect}

In this appendix, we collect auxiliary results concerning ``networks'' and their evaluations.
Such notions appear in Temperley-Lieb recoupling theory~\cite{pen, kl, cfs}, and are extensively used in our companion article~\cite{fp0}.

First, we define the \emph{Theta network}~\cite{kl} to be the tangle
\begin{align} \label{ThetaDefinition} 
\vcenter{\hbox{\includegraphics[scale=0.275]{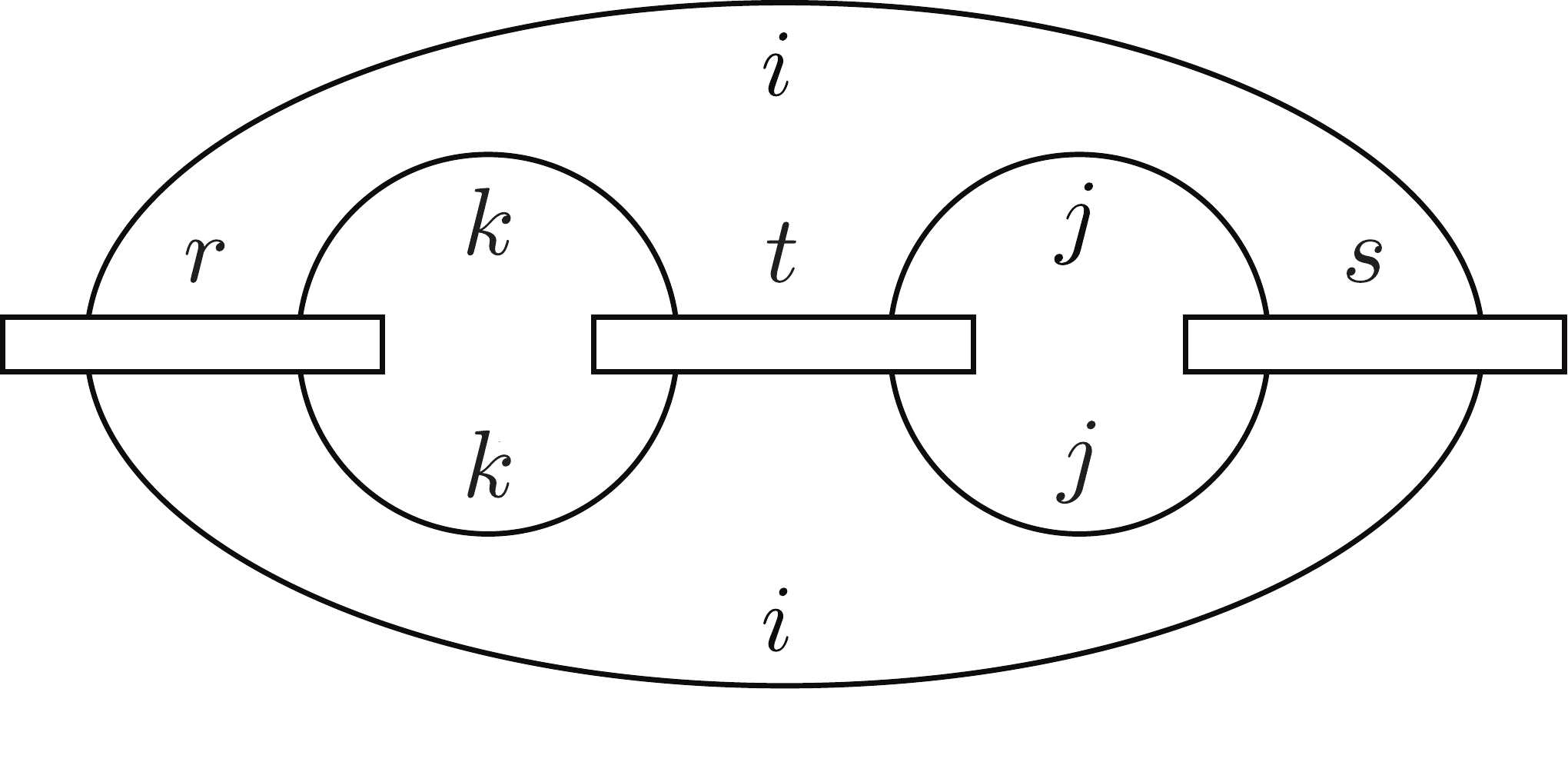}}} \quad 
& = \quad \vcenter{\hbox{\includegraphics[scale=0.275]{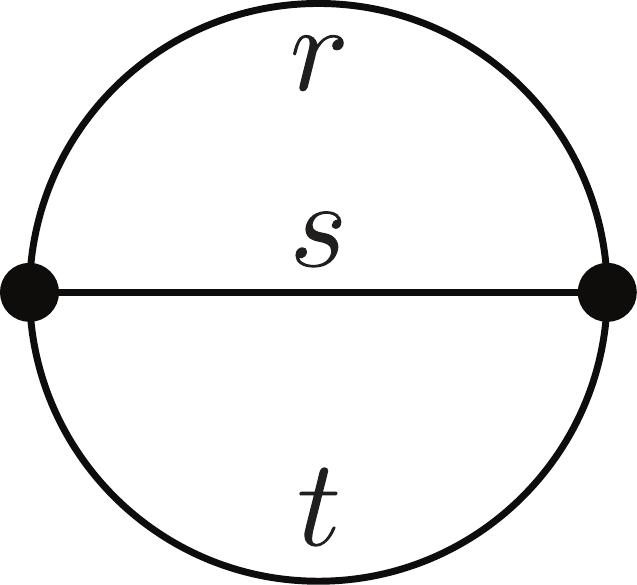} ,}} \qquad \qquad
\begin{aligned} 
i & = \frac{r + s - t}{2} , \\[.7em] 
j & = \frac{s + t - r}{2} , \\[.7em] 
k & = \frac{t + r - s}{2} .
\end{aligned}
\end{align} 
We denote the evaluation~\eqref{evT} of the Theta network by $\ThetaNet(r,s,t)$. It has the following explicit formula:

\begin{lem} \label{ThetaLem} \textnormal{\cite[lemma~\red{A.7}]{fp0}} 
Suppose $\max(r,s,t) < \ppmin(q)$. Then we have 
\begin{align} \label{ThetaFormula1}
\ThetaNet(r,s,t)
= \frac{(-1)^{\frac{r + s + t}{2}} \left[ \frac{r + s + t}{2} + 1 \right]! \left[ \frac{ r + s - t }{2} \right]! \left[ \frac{ s + t - r}{2} \right]! \left[ \frac{t + r - s}{2} \right]! }{[ r ]! [s ]! [ t ]!} . 
\end{align}
\hfill \qed
\end{lem}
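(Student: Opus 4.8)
The plan is to prove \eqref{ThetaFormula1} by induction on the internal label $k$ of \eqref{ThetaDefinition}; since both the Theta network and the claimed formula are symmetric under every permutation of $r,s,t$ — equivalently of $i,j,k$ — there is no loss in always reducing $k$. For the base case $k=0$ we have $r=i$, $t=j$, $s=i+j$, and the $r$- and $t$-projectors are absorbed into $\WJProj\sub{s}$ by the strengthened idempotent property \eqref{ProjectorID1}, so that the Theta network collapses to the closure of $\WJProj\sub{s}$, which equals $(-1)^s[s+1]$ by \eqref{DeltaTangleGen}; this agrees with \eqref{ThetaFormula1} at $k=0$, where the factorials telescope to $(-1)^s[s+1]!/[s]! = (-1)^s[s+1]$. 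Conceptually, this scalar is exactly the one in the loop-erasure identity \eqref{LoopErasure1} and in \eqref{orthogonalidem}: gluing two $(r,s,t)$ three-vertices along their $r$- and $t$-edges yields an element of the one-dimensional algebra $\WJ\sub{s}(\nu) = \WJProj\sub{s}\TL_s(\nu)\WJProj\sub{s}$ (one-dimensional by \eqref{DimOfWJ} and \eqref{PreRecursion2}, since $\DefectSet\sub{s}=\{s\}$ and $\Dim\sub{s}\super{s}=1$), hence a scalar multiple of $\WJProj\sub{s}$ whose closure returns $\ThetaNet(r,s,t)$.

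For the inductive step ($k\geq 1$) I would reduce the internal label $k$ by one, so that $r \mapsto r-1$ and $t \mapsto t-1$ while $s$ and the parity of $r+s+t$ are preserved. Concretely, I would apply the Jones-Wenzl recursion \eqref{wjrecursion} to the projector $\WJProj\sub{r}$ along the $r$-edge, splitting $\ThetaNet(r,s,t)$ into the $\WJProj\sub{r-1}$ term and the $\Gen_{r-1}$ term. In the first term the freed strand is extracted using item \ref{ExtractLemItem} of lemma \ref{CollectionLem} (identity \eqref{ExtractID}), leaving a scalar multiple of $\ThetaNet(r-1,s,t-1)$; in the second term the generator $\Gen_{r-1}$ forces a turn-back, which by the annihilation property \eqref{ProjectorID2} kills all but one internal configuration, and item \ref{LoopErasureLemItem} of lemma \ref{CollectionLem} (identity \eqref{LoopErasure1}) removes the resulting loop, again producing $\ThetaNet(r-1,s,t-1)$ with an explicit coefficient. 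The coefficients generated here are exactly the special Jones-Wenzl projector coefficients \eqref{SpecialT} of proposition \ref{SpecialTProp}, multiplied by the quantum-integer ratio $[r-1]/[r]$ coming from \eqref{wjrecursion}.

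Assembling the two terms gives a recursion $\ThetaNet(r,s,t) = \rho(i,j,k)\,\ThetaNet(r-1,s,t-1)$ with $\rho$ an explicit ratio of quantum integers; substituting the claimed closed form and cancelling factorials reduces the verification of the step to an elementary quantum-integer identity of the type proved in lemma \ref{SumFormulaLem} and \eqref{QintID}. The main obstacle is the diagrammatic bookkeeping of the inductive step: after applying \eqref{wjrecursion} to $\WJProj\sub{r}$, one must correctly track how the peeled strand threads through the $k$-cable and the remaining projectors $\WJProj\sub{s}$ and $\WJProj\sub{t}$, and verify that every surviving term collapses — via \eqref{ProjectorID2}, \eqref{ExtractID}, and \eqref{LoopErasure1} — to a scalar multiple of the smaller Theta network rather than to a genuinely new network. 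Once this reduction and its coefficient are pinned down through proposition \ref{SpecialTProp}, the remaining computation is routine, and \eqref{ThetaFormula1} follows by induction on $k$.
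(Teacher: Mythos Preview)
The paper does not prove this lemma; it is quoted from \cite{fp0} and closed with a \qed. So there is no ``paper's proof'' to compare against, and I can only assess your outline on its own merits.

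Your overall strategy---induction on $k=(t+r-s)/2$, with the $k=0$ base case handled by \eqref{ProjectorID1} and the closure value $(-1)^s[s+1]$ of $\WJProj\sub{s}$---is the standard one, and your base case is correct.

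The inductive step, however, is not set up correctly. After you apply \eqref{wjrecursion} to $\WJProj\sub{r}$ alone, the first term has $\WJProj\sub{r-1}$ together with a free strand, but that free strand still enters and exits $\WJProj\sub{t}$ (not $\WJProj\sub{t-1}$), so this term is \emph{not} yet $\ThetaNet(r-1,s,t-1)$ and identity~\eqref{ExtractID} does not extract it. What actually works here is the partial-trace identity~\eqref{DeltaTangleGen}: the freed strand is precisely one strand of $\WJProj\sub{t}$ traced out, so the first term equals $-\tfrac{[t+1]}{[t]}\,\ThetaNet(r-1,s,t-1)$ on the nose. For the second term, the generator $\Gen_{r-1}$ sits \emph{between} two copies of $\WJProj\sub{r-1}$, not at the boundary, so it does not immediately produce a turn-back into $\WJProj\sub{t}$ killed by \eqref{ProjectorID2}; one must first expand one of the flanking $\WJProj\sub{r-1}$'s (via \eqref{ProjDecomp} and proposition~\ref{SpecialTProp}) to see which internal diagrams survive, and the loop-erasure identity~\eqref{LoopErasure1} is not the relevant tool. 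The surviving contribution does reduce to a multiple of $\ThetaNet(r-1,s,t-1)$, but with a coefficient that is not simply $[r-1]/[r]$ as your phrasing suggests---it depends on $i,j,k$ and comes out of the \eqref{SpecialT} coefficients in a way you have not pinned down.

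You rightly flag the diagrammatic bookkeeping as the main obstacle, but as written the two terms are handled with the wrong identities. Replacing \eqref{ExtractID} by \eqref{DeltaTangleGen} in the first term, and doing the honest expansion of $\WJProj\sub{r-1}$ in the second, the recursion $\ThetaNet(r,s,t)=\rho(i,j,k)\,\ThetaNet(r-1,s,t-1)$ does emerge and closes the induction.
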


We also define the \emph{Tetrahedral network}~\cite{kl} and its evaluation~\eqref{evT} as 
\begin{align} \label{TetraDefinition}
T \quad = \quad 
\vcenter{\hbox{\includegraphics[scale=0.275]{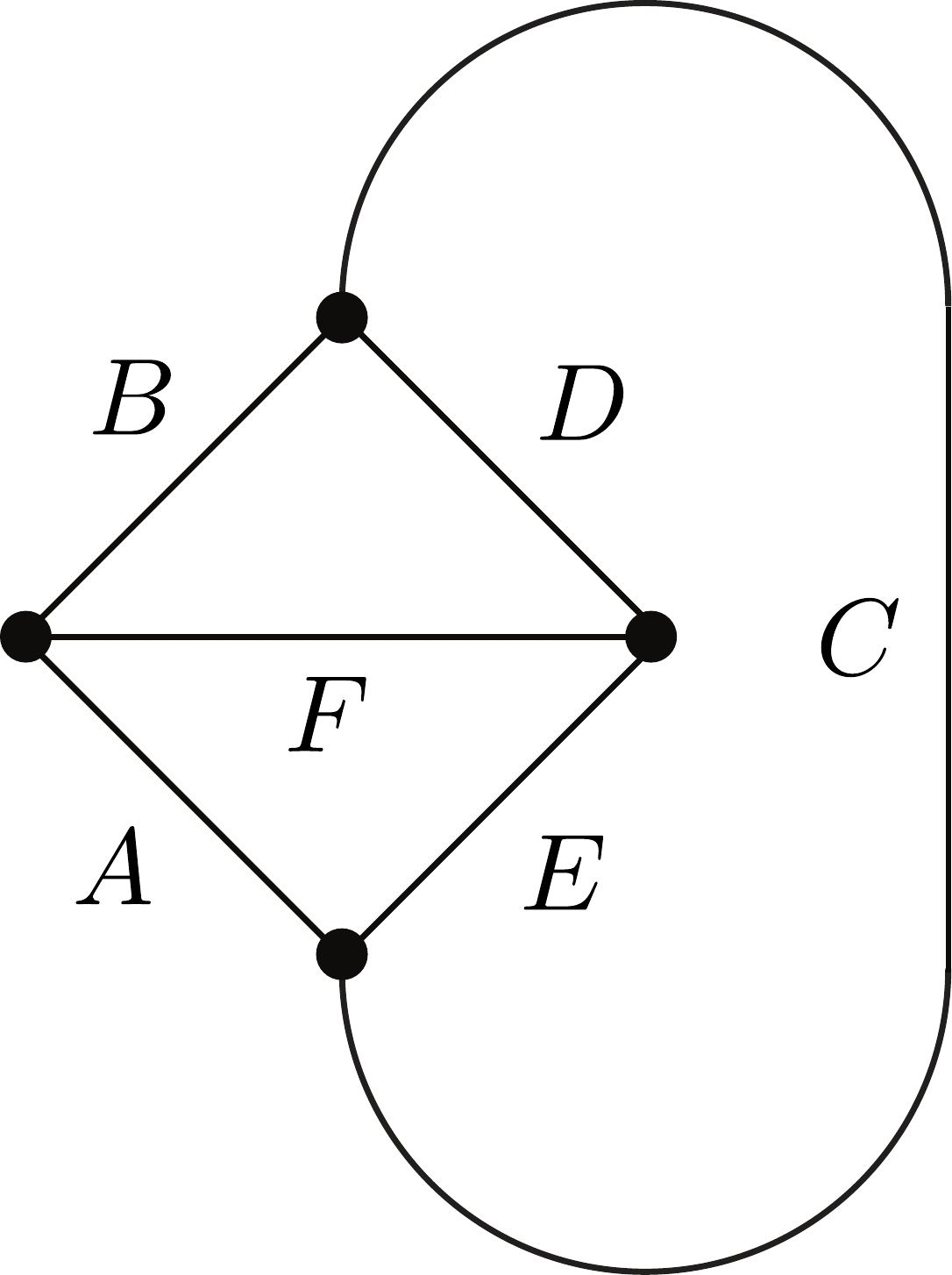}}} \quad 
\qquad \qquad \Longrightarrow \qquad \qquad
(T) \;  = \; \TetraNet\left[ 
\begin{array}{ccc} 
A & B & C \\ 
D & E &F 
\end{array} \right] \; .
\end{align}

The following extraction rules, even though simple, are very useful.

\begin{lem} \label{CollectionLem} \
\begin{enumerate} 
\itemcolor{red}
\item \label{LoopLemGenItem} \textnormal{\cite[lemma~\red{A.1}]{fp0}}
Suppose $s + r < \ppmin(q)$.  Then we have the following extraction rule:
\begin{align} \label{DeltaTangleGen} 
\vcenter{\hbox{\includegraphics[scale=0.275]{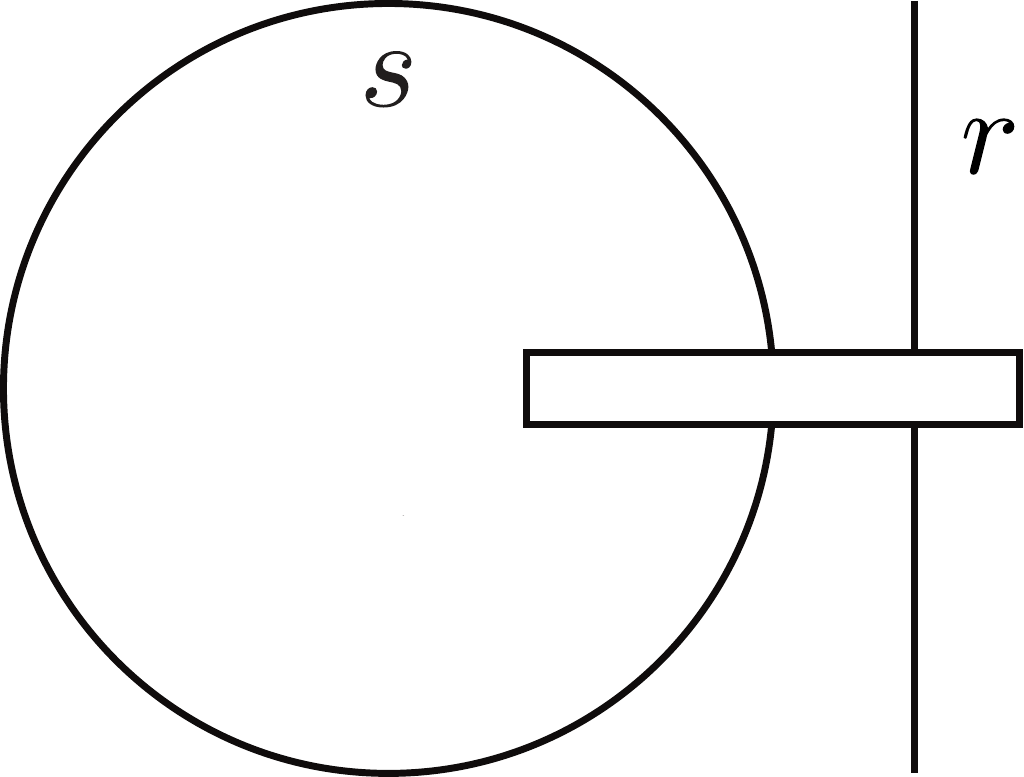}}}
\quad = \quad (-1)^s \frac{[r+s+1]}{[r+1]} 
 \,\, \times \,\,  \vcenter{\hbox{\includegraphics[scale=0.275]{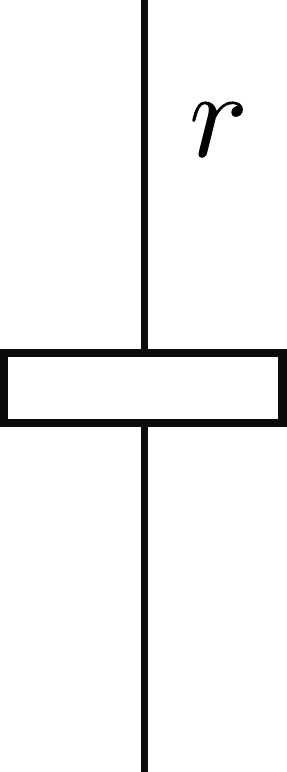} .}} 
\end{align}

\item \label{ExtractLemItem} \textnormal{\cite[lemma~\red{A.4}]{fp0}}
Suppose $s < \ppmin(q)$.  Then, for any network $T$ contained between two projector boxes within a larger network, 
we have the following extraction rule:
\begin{align} \label{ExtractID} 
\vcenter{\hbox{\includegraphics[scale=0.275]{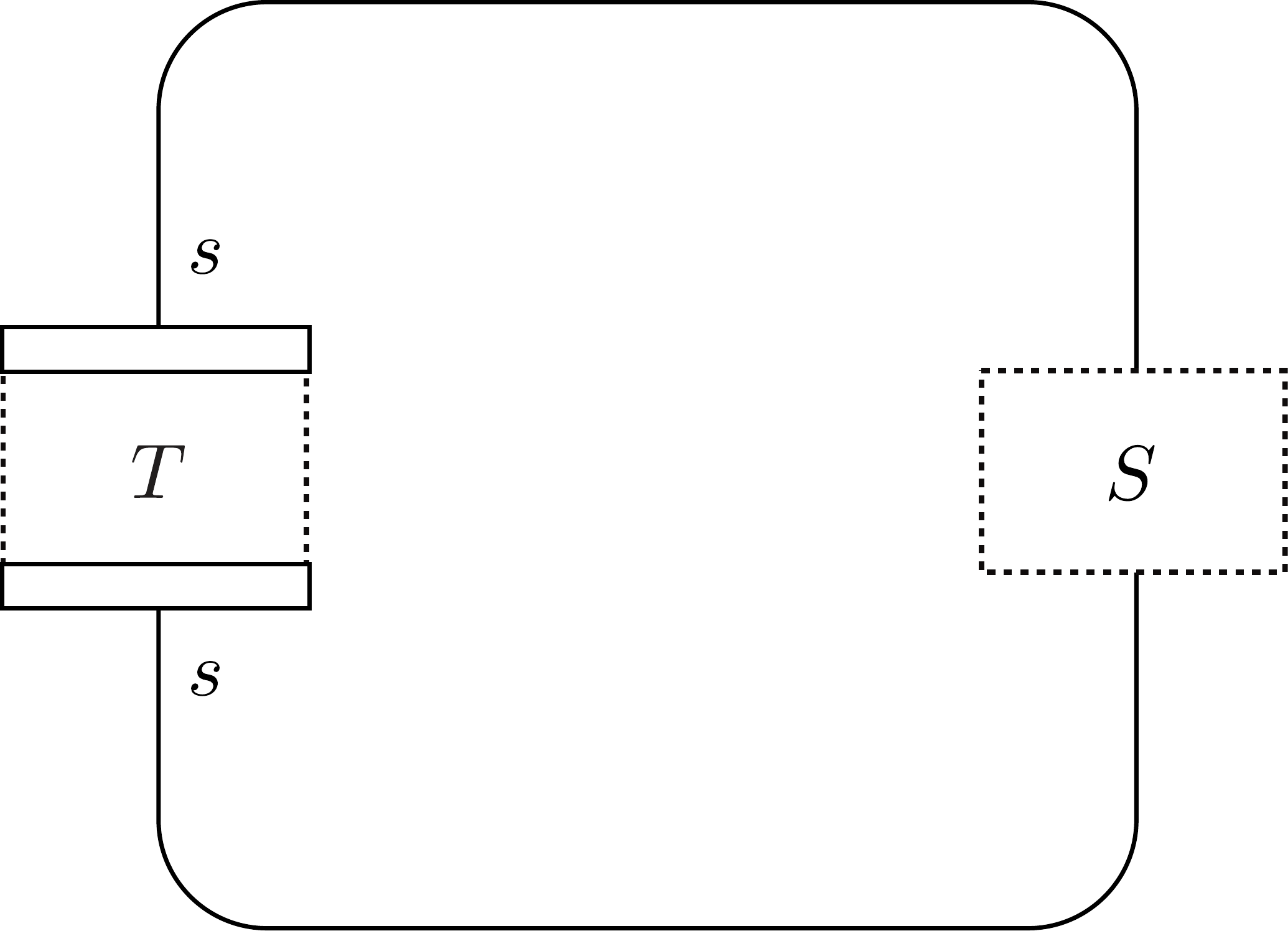}}} 
\quad = \quad (T) \,\, \times \,\, \vcenter{\hbox{\includegraphics[scale=0.275]{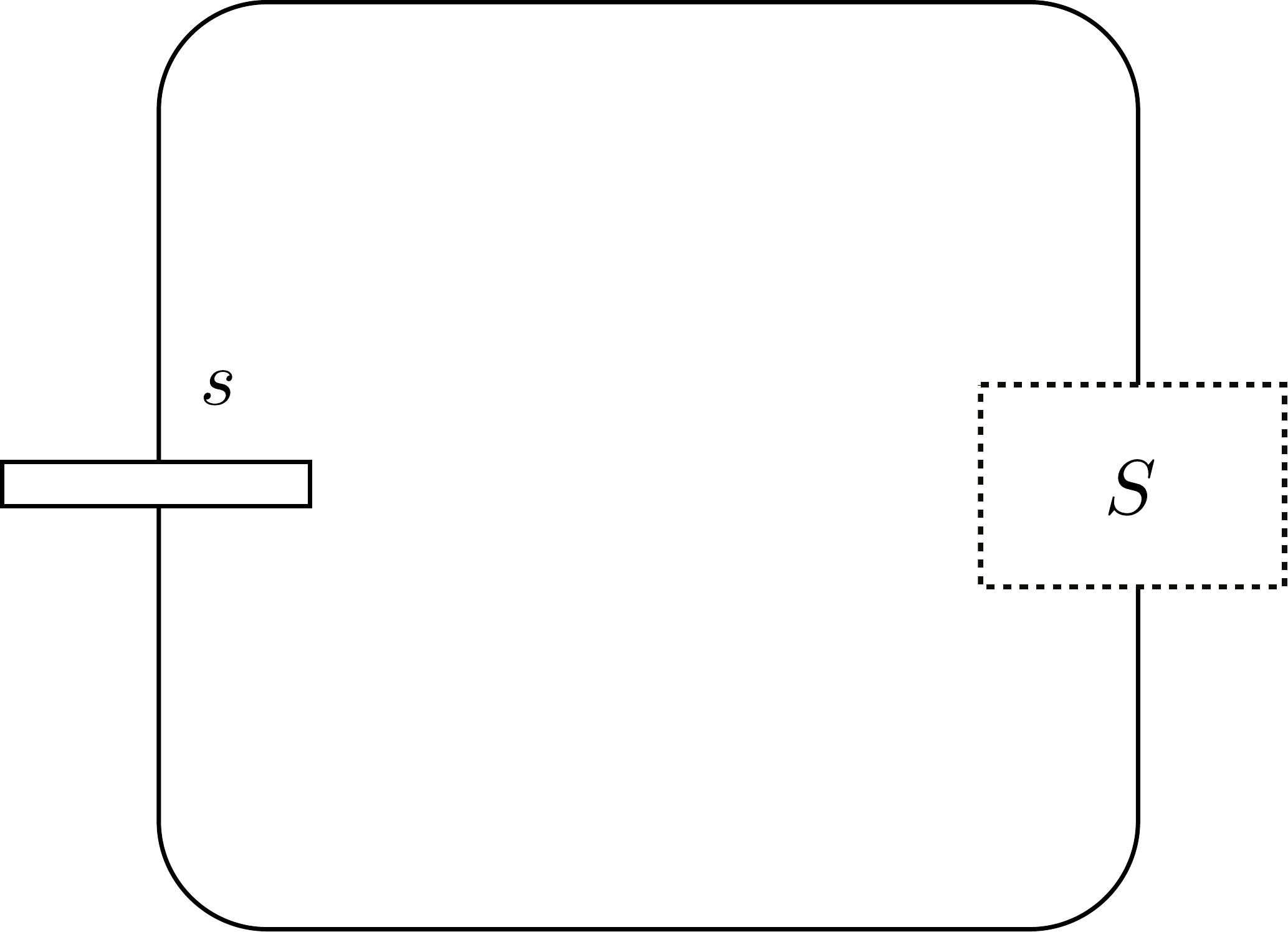} .}}
\end{align} 

\item \label{LoopErasureLemItem} \textnormal{\cite[lemma~\red{A.5}]{fp0}}
Suppose $\max(r,s,s',t) < \ppmin(q)$.  Then we have 
\begin{align}\label{LoopErasure1} 
 \left( \; \vcenter{\hbox{\includegraphics[scale=0.275]{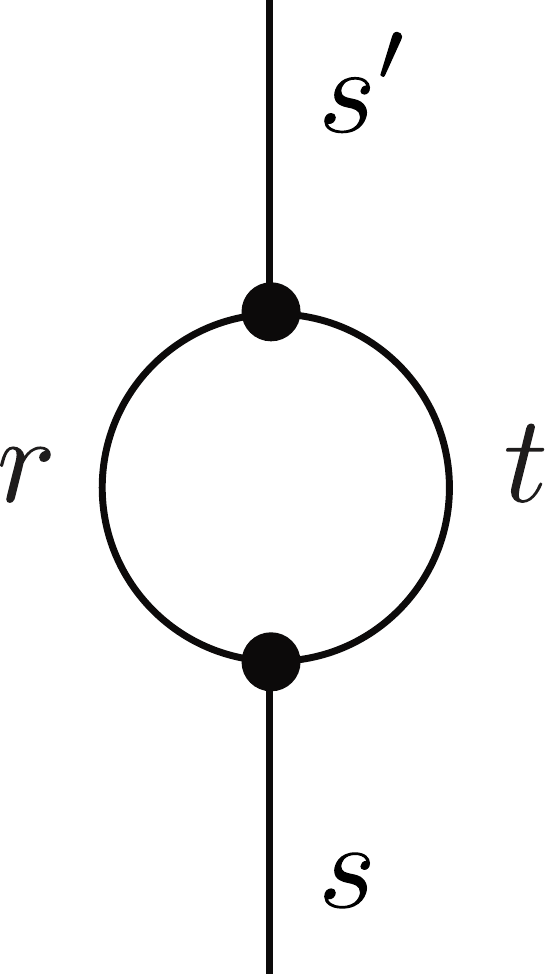}}} \;  \right) \quad 
= \quad \delta_{s, s'} \frac{ \ThetaNet(r,s,t) }{(-1)^s [s+1]} . 
\end{align} 

\item \label{QintIDItem}  \textnormal{\cite[item~\red{1} of lemma~\red{A.6}]{fp0}}
The following identity holds for all $i,j,k \in \bZ$\textnormal{:}
\begin{align} \label{QintID} 
[i] [ j - k]+[j] [ k - i ]+[k] [ i - j ] = 0. 
\end{align} 
\end{enumerate}
\hfill \qed
\end{lem}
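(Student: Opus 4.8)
The statement collects four standard Temperley-Lieb recoupling identities, all of which are cited from~\cite{fp0}; my plan is to establish each of the four from scratch, in order of increasing diagrammatic content.

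\emph{Item~\ref{QintIDItem}}, the quantum-integer identity~\eqref{QintID}, is purely algebraic. I would substitute $[n]=(q^{n}-q^{-n})/(q-q^{-1})$, multiply through by $(q-q^{-1})^{2}$, and expand the three products. Writing $a:=i+j-k$, $b:=j+k-i$, $c:=k+i-j$, one checks that, up to the common factor, $[i][j-k]$ contributes $(q^{a}+q^{-a})-(q^{c}+q^{-c})$, while the cyclic permutations $(i,j,k)\mapsto(j,k,i)\mapsto(k,i,j)$ yield the other two summands; the six symmetrized powers then telescope to zero. No obstacle here.

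\emph{Items~\ref{LoopLemGenItem} and~\ref{ExtractLemItem}} I would both derive from the fact that, for $s<\ppmin(q)$, the Jones-Wenzl projector $\WJProj\sub{s}$ is a primitive idempotent with one-dimensional image in $\TL_{s}(\nu)$, so that $\WJProj\sub{s}\,\TL_{s}(\nu)\,\WJProj\sub{s}=\bC\,\WJProj\sub{s}$ (indeed $\WJProj\sub{s}D\WJProj\sub{s}=0$ for every non-identity link diagram $D\in\LD_{s}$ by property~\ref{wj2item}). For~\eqref{ExtractID}: a sub-network $T$ squeezed between two projector boxes of size $s$ equals, as an element of $\TL_{s}(\nu)$ (its internal loops absorbed as powers of $\nu$), some $\sum_{D\in\LD_{s}}c_{D}D$, hence $\WJProj\sub{s}\,T\,\WJProj\sub{s}=c\,\WJProj\sub{s}$ for a scalar $c$; taking the Markov closure of both sides, using $\WJProj\sub{s}^{2}=\WJProj\sub{s}$ together with the fact that the closure of $\WJProj\sub{s}$ evaluates to its quantum dimension $(-1)^{s}[s+1]$, identifies $c$ with the network evaluation $(T)$. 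For~\eqref{DeltaTangleGen}, the $s$-cable loop around $\WJProj\sub{r}$ again equals $c\,\WJProj\sub{r}$ for a scalar $c$, and I would compute $c$ by induction on $s$, peeling off one strand of the cable at each step via Wenzl's recursion~\eqref{wjrecursion} and keeping track of the signs dictated by $\nu=-[2]$; equivalently, the left-hand side is a partial trace whose value is a ratio of quantum dimensions, which collapses to $(-1)^{s}[r+s+1]/[r+1]$, the base case $s=0$ being trivial.

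\emph{Item~\ref{LoopErasureLemItem}}, equation~\eqref{LoopErasure1}, then follows from the previous items. The broken $\Theta$-network consists of two trivalent vertices on $(r,s,t)$ and $(r,s',t)$ glued along their $r$- and $t$-cables, with the $s$- and $s'$-cables left open; if $s\neq s'$ the two open cables cannot match up, and the same projector-annihilation argument as in item~\ref{ExtractLemItem} forces the network to vanish, giving the Kronecker delta. If $s=s'$, re-closing the $s$-cable into a loop turns the broken network into the full Theta network of value $\ThetaNet(r,s,t)$; since closing $\WJProj\sub{s}$ into a loop multiplies by $(-1)^{s}[s+1]$ (the $r=0$ case of~\eqref{DeltaTangleGen}), dividing yields $\ThetaNet(r,s,t)/((-1)^{s}[s+1])$, and the closed formula for $\ThetaNet$ is Lemma~\ref{ThetaLem}. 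The only genuinely delicate point is the bookkeeping of signs and quantum-integer factors in~\eqref{DeltaTangleGen}, since every other step is either linear algebra over $\bC$ or an instance of the single principle that $\WJProj\sub{s}$ annihilates turn-backs; thus I expect the induction in item~\ref{LoopLemGenItem} — getting the power of $-1$ exactly right — to be the main (and essentially only) obstacle.
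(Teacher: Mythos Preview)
The paper does not actually prove this lemma: it is stated with a bare \verb|\qed| and each item carries a citation to the companion article~\cite{fp0}, so there is no ``paper's own proof'' to compare against. Your sketch is correct and follows the standard Temperley-Lieb recoupling arguments one finds in~\cite{kl,cfs} and presumably in~\cite{fp0}.

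A couple of minor remarks. For item~\ref{ExtractLemItem}, your Markov-closure argument works but is slightly roundabout: once you have $\WJProj\sub{s}T\WJProj\sub{s}=c_{\mathbf 1}\WJProj\sub{s}$ with $T=\sum_{D}c_{D}D$, the identification $c_{\mathbf 1}=(T)$ is immediate from the observation that for a single link diagram $D\in\LD_{s}$ the network evaluation $(D)$ equals $1$ if $D=\mathbf 1_{\TL_s}$ and $0$ otherwise (any non-identity diagram has a turn-back). For item~\ref{LoopLemGenItem}, your partial-trace interpretation is exactly right: the left-hand side is the projector $\WJProj\sub{r+s}$ with its last $s$ strands closed, and peeling one strand at a time via~\eqref{wjrecursion} gives the telescoping product $\prod_{k=1}^{s}\frac{-[r+k+1]}{[r+k]}=(-1)^{s}\frac{[r+s+1]}{[r+1]}$, so the sign bookkeeping you flag as the main obstacle is in fact mechanical. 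For item~\ref{LoopErasureLemItem}, the $s\neq s'$ case is precisely as you say: the sub-tangle between the size-$s$ and size-$s'$ projectors is a map from $s$ strands to $s'$ strands, hence necessarily contains a turn-back on the larger side, which the corresponding projector annihilates.
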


Next, we prove some further diagram identities that we need in section~\ref{GeneratorLemProofSect}.

\begin{lem} \label{NetEvalLem} 
We have the following network evaluations:
\begin{enumerate}
\itemcolor{red}
\item \label{NetworkIt1} 
For $s < \ppmin(q)$, we have
\begin{align} \label{PreLoopBox} 
 \left( \; \vcenter{\hbox{\includegraphics[scale=0.275]{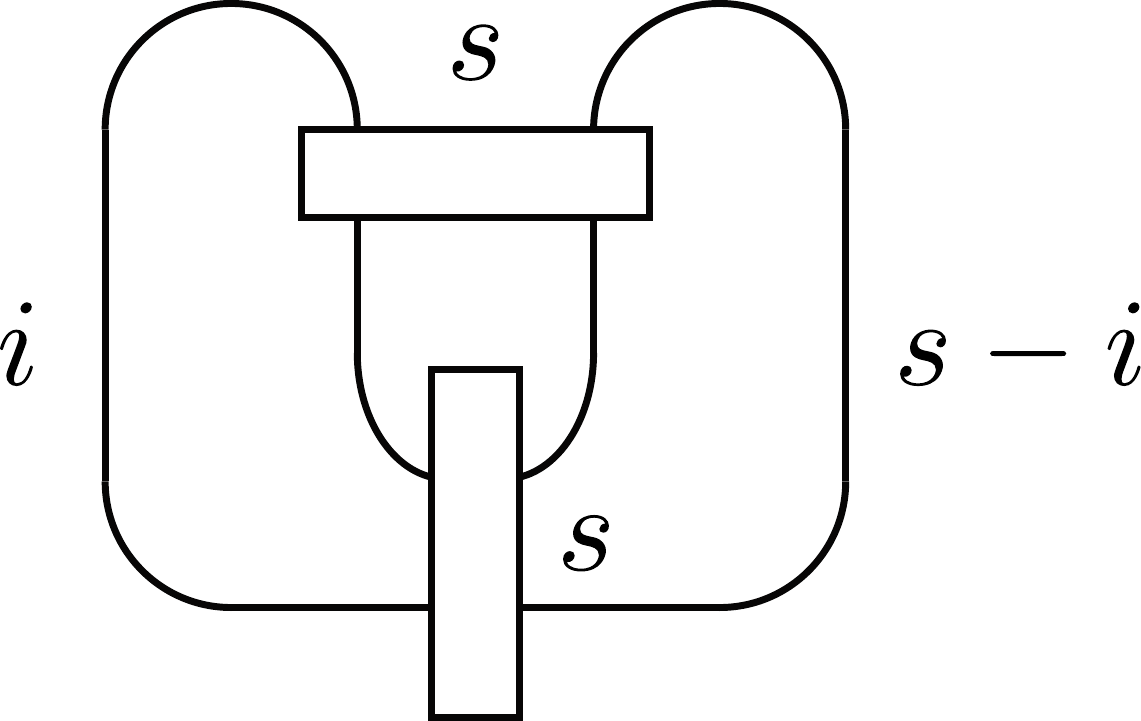}}} \; \right)
\quad = \quad (-1)^s[s+1] \frac{[i]! [s-i]!}{[s]!} .
\end{align} 

\item \label{NetworkIt2} 
For $s < \ppmin(q)$, we have
\begin{align}  \label{LoopBox} 
 \left( \; \vcenter{\hbox{\includegraphics[scale=0.275]{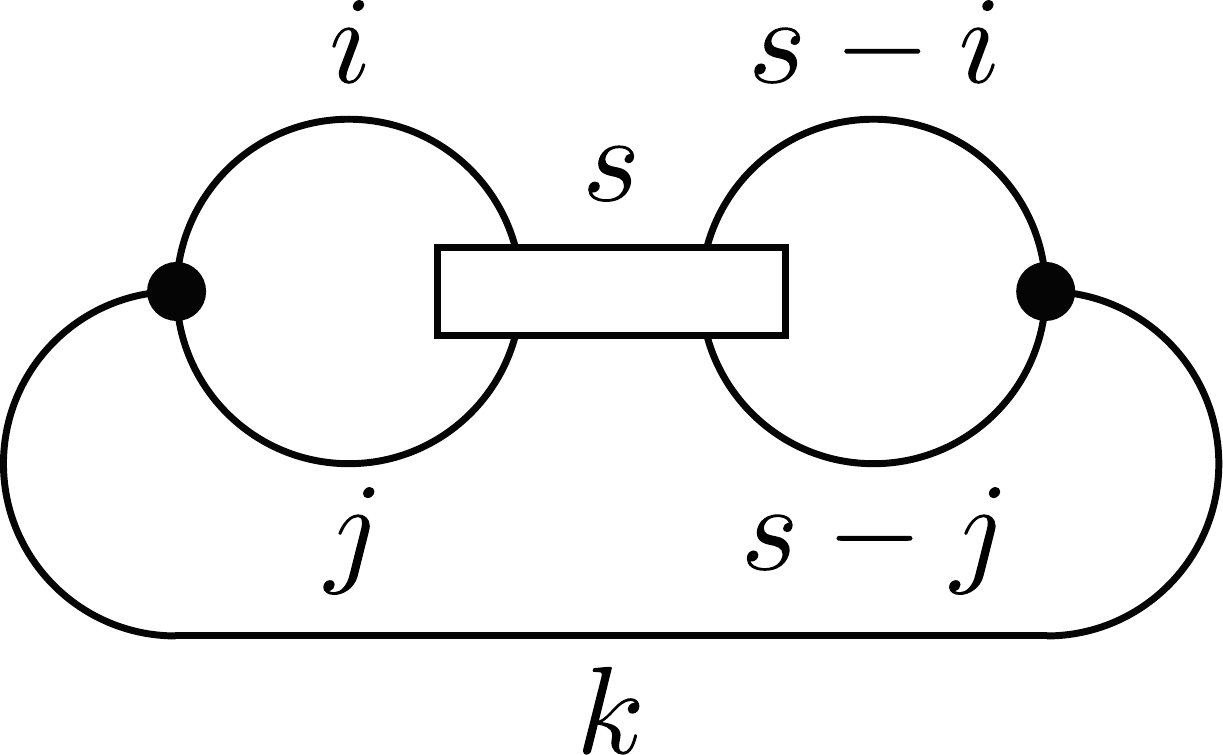}}} \; \right)
\quad = \quad \frac{(-1)^k[s+1]^2}{[k+1]} \frac{[\frac{i-j+k}{2}]! [s-\frac{i-j+k}{2}]!}{[s]!} .
\end{align} 
\item \label{NetworkIt3} 
For $\max(A,B+2,F) < \ppmin(q)$, we have 
\begin{align} \label{SpecialTetNetwork} 
\TetraNet\left[ 
\begin{array}{ccc} 
A & B & 2 \\ 
B+2 & A &F 
\end{array} \right] 
= \frac{1}{[A]} \left[\frac{A - B + F}{2}\right] \ThetaNet(A,F,B+2). 
\end{align} 
\end{enumerate}
\end{lem}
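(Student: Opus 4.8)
The statement to prove is Lemma~\ref{NetEvalLem}, which collects three explicit network evaluations. The plan is to treat each item by reducing it to previously established recoupling identities --- primarily the Theta evaluation (lemma~\ref{ThetaLem}), the Tetrahedral network, and the extraction/loop-erasure rules of lemma~\ref{CollectionLem} --- together with the explicit Jones-Wenzl projector coefficients of proposition~\ref{SpecialTProp}.

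For item~\ref{NetworkIt1}, I would recognize the pictured network as a single Jones-Wenzl projector box of size $s$ with a particular ``cap-cup'' closure (the tangle of type~\eqref{SpecialTDiag} with appropriate labels, closed into loops). Expanding the box over its internal link diagrams via~\eqref{ProjDecomp}, only the diagrams whose closure produces a single loop survive, and their coefficients are supplied by proposition~\ref{SpecialTProp} (formula~\eqref{SpecialT}); summing and using the loop weight $\nu = -[2]$ together with the quantum-integer identity~\eqref{QintID} from lemma~\ref{CollectionLem} should collapse the sum to $(-1)^s[s+1]\,[i]![s-i]!/[s]!$. Alternatively, and more cleanly, one can view the network as $\Theta(i, s-i, s)$-type data run through the extraction rule~\eqref{ExtractID}, so that its value is $(\WJProj\sub{s}\text{-closure})$ which by lemma~\ref{ThetaLem} with the right arguments gives the same closed form; I would pick whichever bookkeeping is shortest.

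For item~\ref{NetworkIt2}, the strategy is to peel off the extra projector box using item~\ref{ExtractLemItem} of lemma~\ref{CollectionLem} to reduce the network to (a multiple of) the one in item~\ref{NetworkIt1}, or to apply item~\ref{LoopErasureLemItem} (equation~\eqref{LoopErasure1}) to extract the Theta-factor $\ThetaNet(\cdot)/((-1)^k[k+1])$ and leave behind the network of item~\ref{NetworkIt1} with $i \mapsto \frac{i-j+k}{2}$. Combining the factor $(-1)^k[s+1]/[k+1]$ from loop erasure with the evaluation $(-1)^s[s+1]\,[\frac{i-j+k}{2}]![s-\frac{i-j+k}{2}]!/[s]!$ from item~\ref{NetworkIt1}, and noting that the two signs $(-1)^s$ and $(-1)^k$ must be handled so that only $(-1)^k$ survives in the final answer (this parity bookkeeping is where I would be most careful), yields~\eqref{LoopBox}.

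For item~\ref{NetworkIt3}, I would evaluate the specific Tetrahedral network $\TetraNet\bigl[\begin{smallmatrix}A&B&2\\B+2&A&F\end{smallmatrix}\bigr]$ directly: because one of the six edges has label $2$, the size-$2$ Jones-Wenzl projector is simply $\mathbf 1 - [2]^{-1}\Gen_1$ by~\eqref{WJ2}, so the tetrahedron expands into two planar networks, one of which is a Theta-type network $\ThetaNet(A,F,B+2)$ and the other a slightly larger network evaluable again by lemma~\ref{ThetaLem} and the loop rule~\eqref{DeltaTangleGen}. Collecting the two terms and simplifying the quantum-integer factorials with~\eqref{QintID} should produce the coefficient $\frac{1}{[A]}\bigl[\frac{A-B+F}{2}\bigr]$ times $\ThetaNet(A,F,B+2)$. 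The main obstacle across all three parts is not conceptual but the sign and factorial arithmetic --- keeping track of the $(-1)$ powers coming from loop weights versus from the Theta formula, and verifying that the admissibility/root-of-unity hypotheses ($s<\ppmin(q)$, resp.\ $\max(A,B+2,F)<\ppmin(q)$) are exactly what is needed for every intermediate quantum integer and factorial to be finite and nonzero; I would state the hypotheses carefully at each reduction step so that lemmas~\ref{ThetaLem} and~\ref{CollectionLem} apply verbatim.
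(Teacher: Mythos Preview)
Your overall strategy --- reduce each network via the recoupling toolkit (proposition~\ref{SpecialTProp}, lemma~\ref{ThetaLem}, lemma~\ref{CollectionLem}) --- is sound, and each of your three routes would succeed. The paper's proof, however, takes a somewhat more streamlined path in each part, and it is worth noting where.

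For item~\ref{NetworkIt1}, you anticipate a sum over surviving diagrams to be collapsed via~\eqref{QintID}. In fact, when you expand one of the two projector boxes, property~\eqref{ProjectorID2} applied to the \emph{other} box kills every term except a single one (the diagram~\eqref{SpecialTDiag} with $m=k=0$), whose coefficient is read off from~\eqref{SpecialT}; the remaining closed loop is then evaluated by~\eqref{DeltaTangleGen} with $r=0$. No summation or use of~\eqref{QintID} is needed.

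For item~\ref{NetworkIt2}, rather than invoking~\eqref{LoopErasure1} (which would require a Theta evaluation and the sign bookkeeping you flag), the paper simply unpacks the two three-vertices via definition~\eqref{3vertex1}, absorbs one projector via~\eqref{ProjectorID1}, and then applies~\eqref{DeltaTangleGen} once to strip off the outer loop. What remains is exactly the network of item~\ref{NetworkIt1} with $i$ replaced by $\tfrac{i-j+k}{2}$, so the signs fall out automatically.

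For item~\ref{NetworkIt3}, your plan to expand the size-$2$ projector via~\eqref{WJ2} into two terms would work but leaves two networks to evaluate and recombine. The paper instead first observes that the two three-vertices adjacent to the size-$2$ edge are essentially trivial (by~\eqref{3vertex1} they are just single strands splitting off), simplifies the tetrahedron accordingly, and then expands the size-$A$ projector: again only one diagram survives by~\eqref{ProjectorID2}, with coefficient $\tfrac{1}{[A]}\bigl[\tfrac{A-B+F}{2}\bigr]$ from~\eqref{SpecialT}, and the residual network is visibly $\ThetaNet(A,F,B+2)$. This one-term route avoids any quantum-integer recombination.

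In short: your proposal is correct, but in each item the paper exploits the second projector box (via~\eqref{ProjectorID2} or~\eqref{ProjectorID1}) to reduce to a \emph{single} surviving term before evaluating, which sidesteps the sums and sign-tracking you anticipate.
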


\begin{proof} 
We prove formulas~\eqref{PreLoopBox}--\eqref{SpecialTetNetwork} in items~\ref{NetworkIt1}--\ref{NetworkIt3} as follows:
\begin{enumerate}[leftmargin=*]
\itemcolor{red}
\item 
Decomposing the upper projector box over all internal link diagrams as in~\eqref{ProjDecomp}, we see by rule~\eqref{ProjectorID2} that only 
tangle~\eqref{SpecialTDiag} in proposition~\ref{SpecialTProp} with $m = k = 0$ contributes a nonvanishing term. 
Thus, using~\eqref{SpecialT}, we get
\begin{align} 
\vcenter{\hbox{\includegraphics[scale=0.275]{e-NeatTangle1.pdf}}} \quad
& \overset{\eqref{SpecialT}}{=} 
\quad  \frac{[i]! [s-i]!}{[s]!} \,\, \times \,\,
\begin{cases} 
\quad \vcenter{\hbox{\includegraphics[scale=0.275]{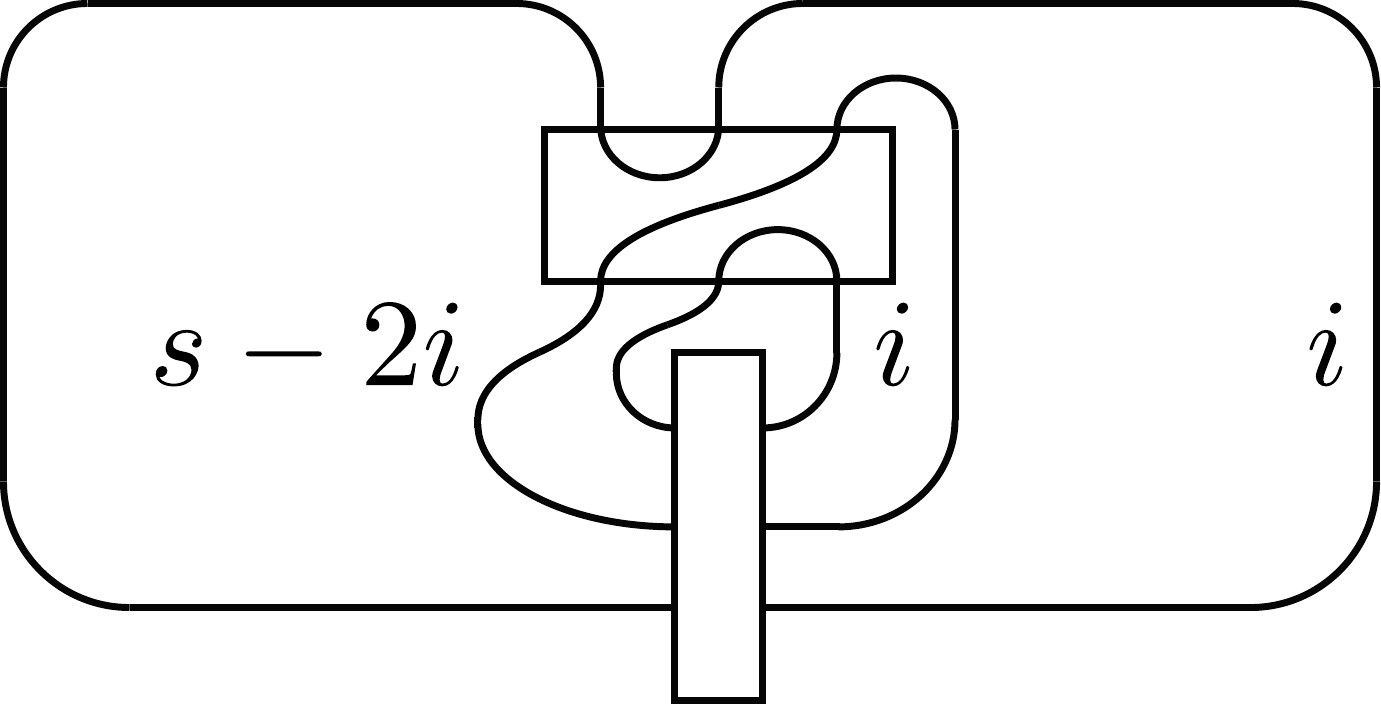} ,}} & \qquad s \geq 2i \\
\quad \vcenter{\hbox{\includegraphics[scale=0.275]{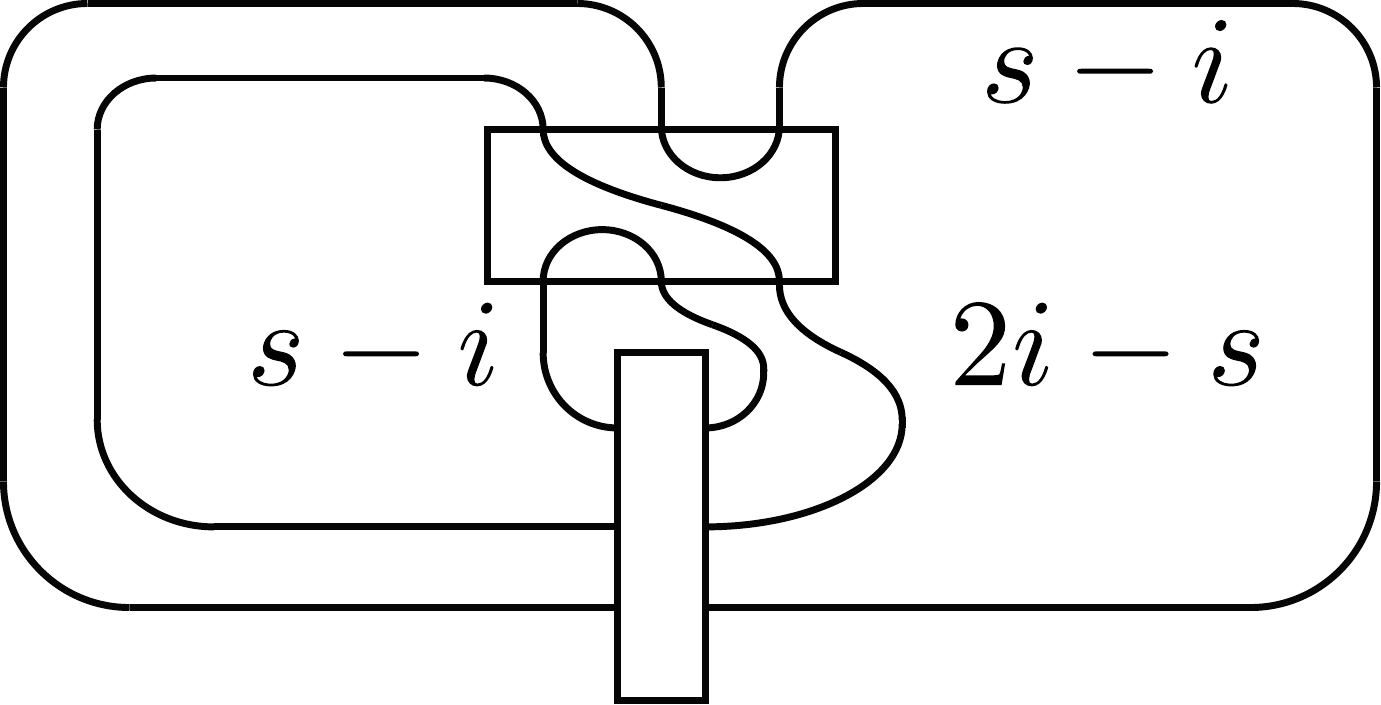} ,}}  & \qquad s \leq 2i. 
\end{cases} 
\end{align}
Now, simplification rule~\eqref{DeltaTangleGen} from 
lemma~\ref{CollectionLem} with $r=0$ gives asserted formula~\eqref{PreLoopBox}.

\item 
We decompose the three-vertices via definition~\eqref{3vertex1}, obtaining
\begin{align} 
\vcenter{\hbox{\includegraphics[scale=0.275]{e-NeatTangle4.pdf}}} \quad
\overset{\eqref{3vertex1}}{=} \quad \vcenter{\hbox{\includegraphics[scale=0.275]{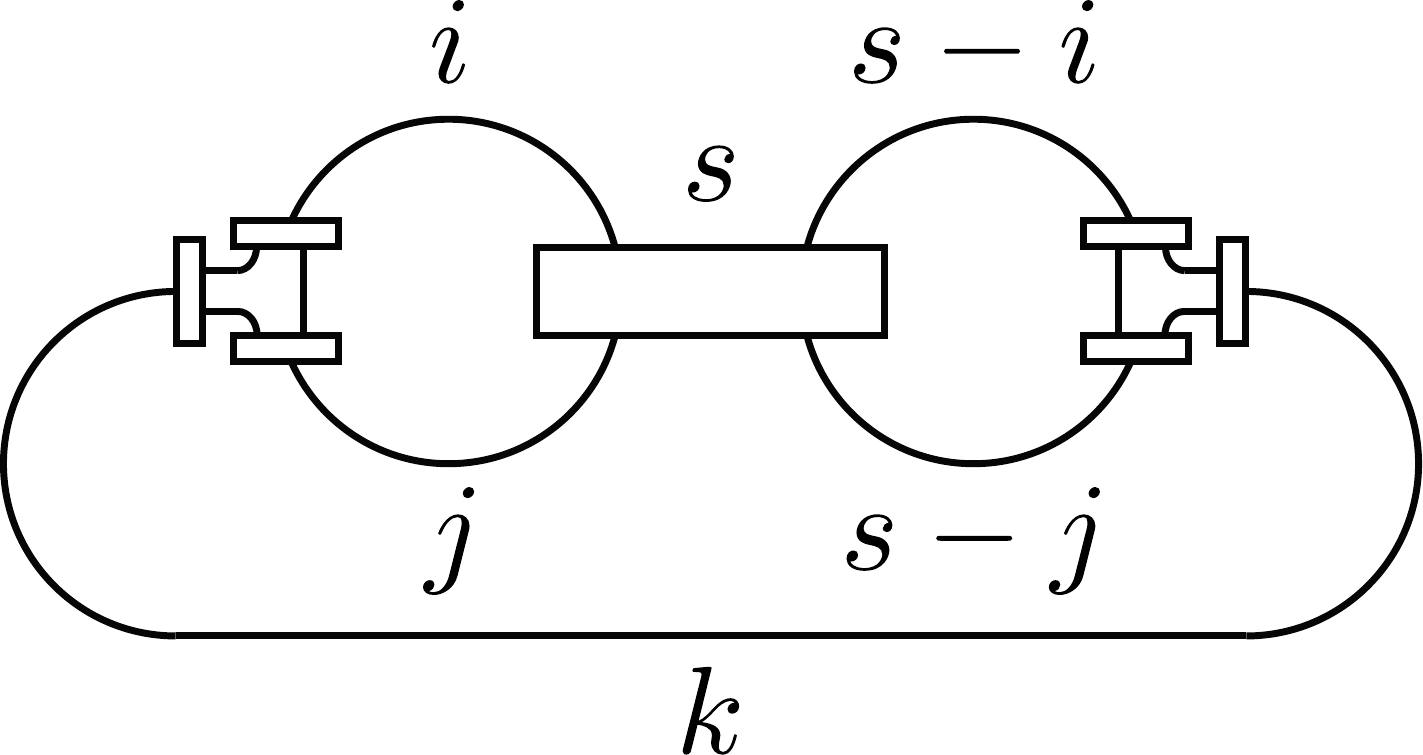}}} \quad
\overset{\eqref{ProjectorID1}}{=} \quad \vcenter{\hbox{\includegraphics[scale=0.275]{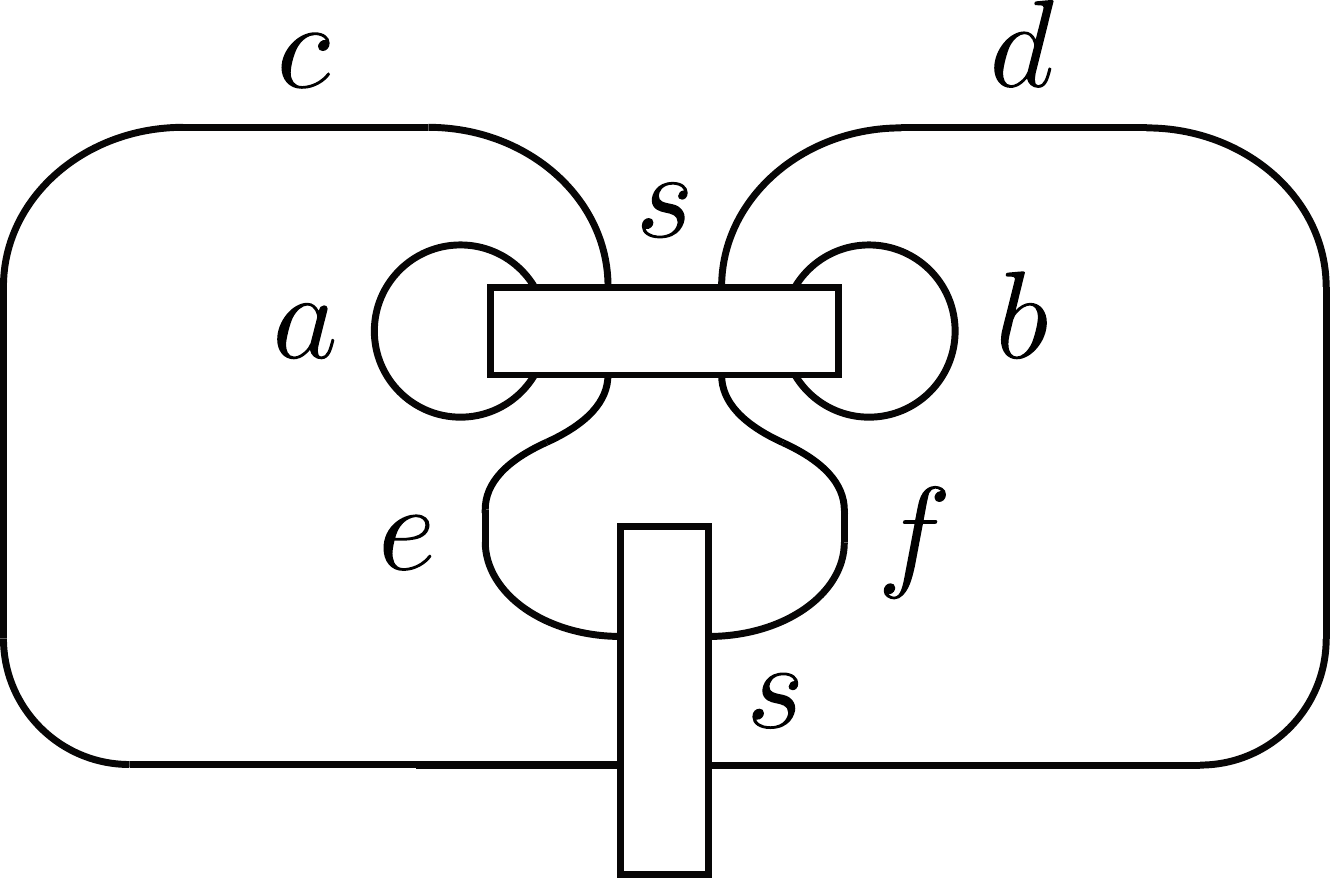} ,}} 
\end{align} 
where 
\begin{align} 
\begin{aligned} 
a = \; & \frac{i+j-k}{2} , \qquad \qquad b = \frac{2s-i-j-k}{2} , \qquad \quad & c = \frac{i + k - j}{2} , \\
d = \; & \frac{j-i+k}{2}, \qquad \qquad e = \frac{k + j - i}{2}, \qquad \quad \qquad  & f = \frac{i + k - j}{2} .
\end{aligned}
\end{align}
Using identity~\eqref{DeltaTangleGen} from 
lemma~\ref{CollectionLem}, we find 
\begin{align} \label{PrePreLoopBox}
\vcenter{\hbox{\includegraphics[scale=0.275]{e-NeatTangle6.pdf}}} \quad
\overset{\eqref{DeltaTangleGen}}{=} \quad 
\frac{(-1)^{a+b}[s+1]}{[s-a-b+1]}  \,\, \times \,\, \vcenter{\hbox{\includegraphics[scale=0.275]{e-NeatTangle1.pdf} .}} 
\end{align} 
After inserting evaluation~\eqref{PreLoopBox} with $i = c$ of the tangle on the right side into~\eqref{PrePreLoopBox}, we obtain~\eqref{LoopBox}.  

\item Finally, to prove item~\ref{NetworkIt3}, we first write asserted equality~\eqref{SpecialTetNetwork} in terms of networks:
\begin{align} \label{SpecialTetNetwork0}
\vcenter{\hbox{\includegraphics[scale=0.275]{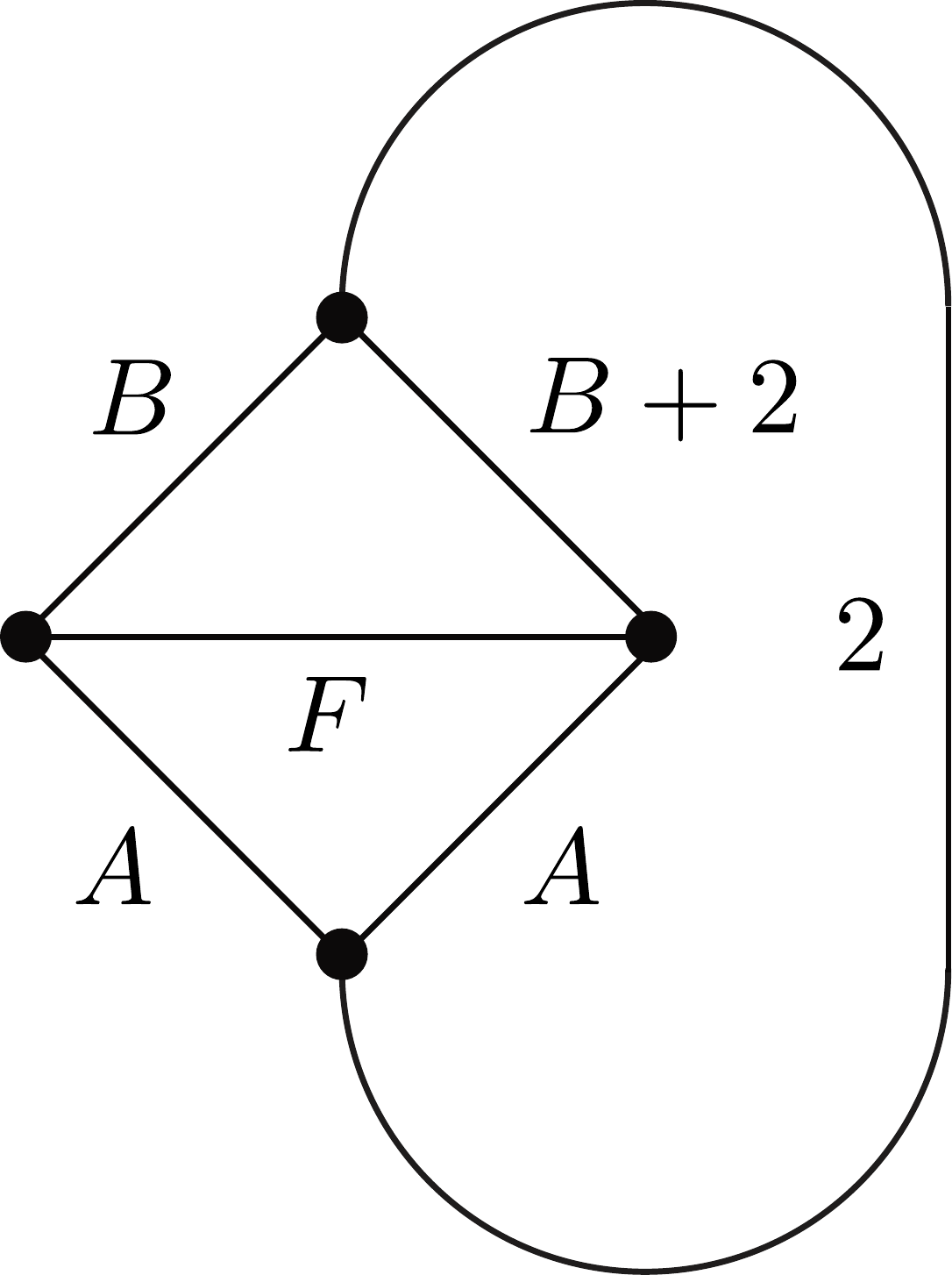}}}
\quad = \quad \frac{1}{[A]} \left[\frac{A-B+F}{2}\right]
 \,\, \times \,\, \vcenter{\hbox{\includegraphics[scale=0.275]{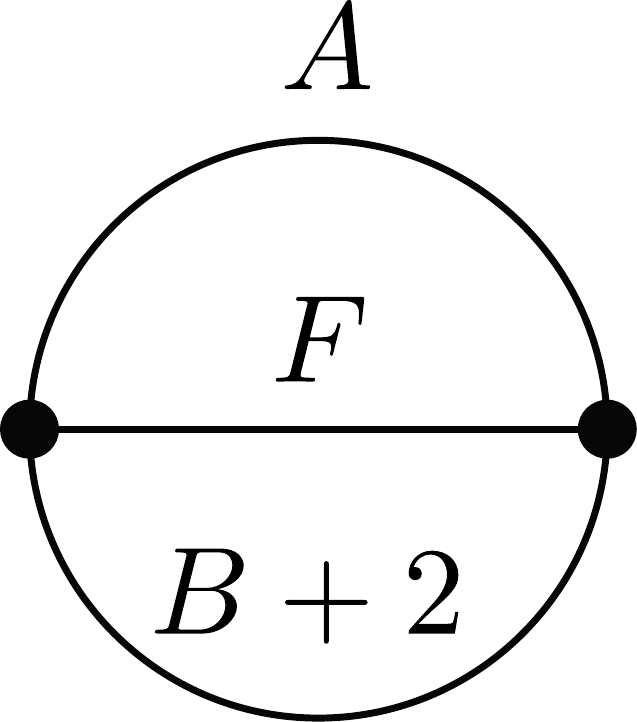} .}}
\end{align} 
By definition~\eqref{3vertex1} the top and bottom three-vertices of the Tetrahedral network 
respectively simplify to
\begin{align} 
\label{SimpleVertex1} 
& \vcenter{\hbox{\includegraphics[scale=0.275]{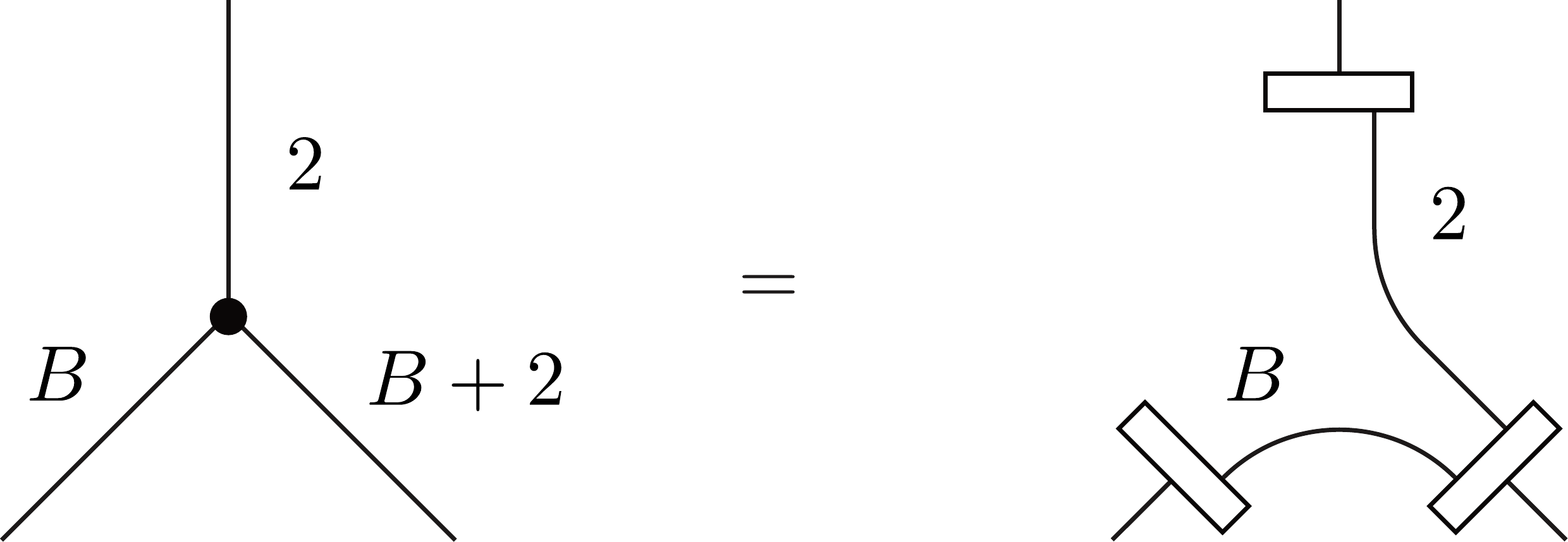} ,}}  \\[1.5em]
\label{SimpleVertex2} 
& \vcenter{\hbox{\includegraphics[scale=0.275]{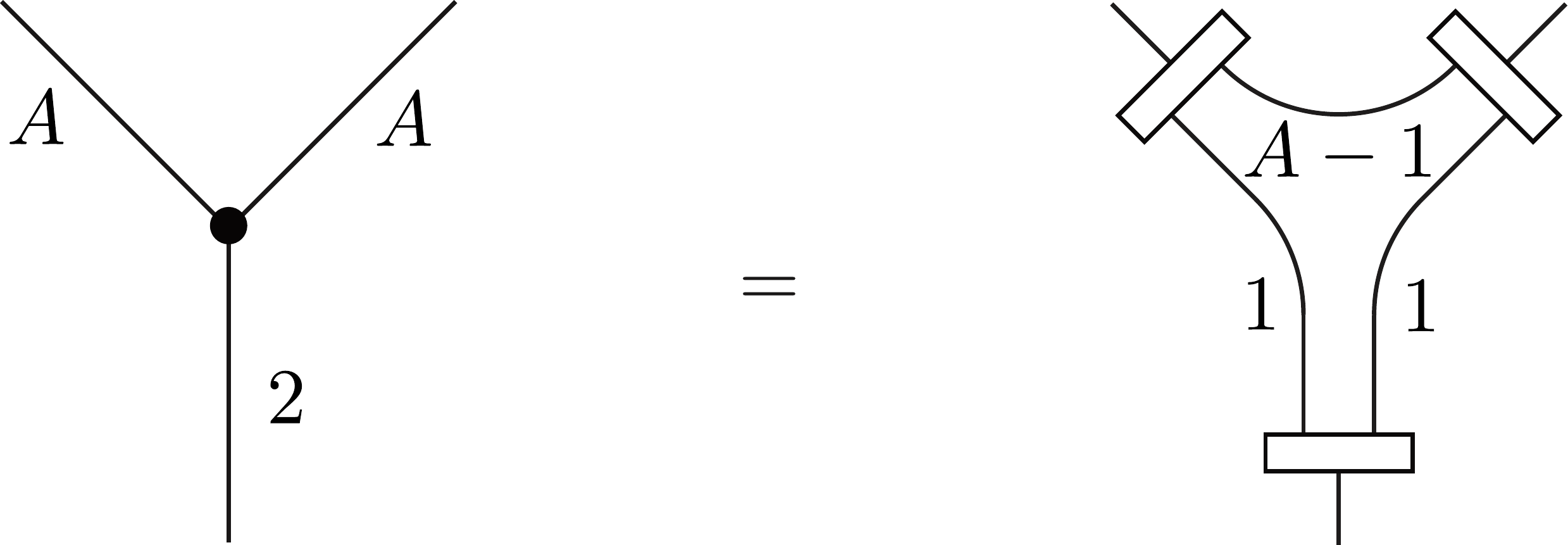} .}} 
\end{align} 
Using~(\ref{SimpleVertex1}--\ref{SimpleVertex2}) and rule~\eqref{ProjectorID0}, 
we simplify the Tetrahedral network on the left side of~\eqref{SpecialTetNetwork0} to  
\begin{align} \label{SimpleSpecialTetNetwork}
\vcenter{\hbox{\includegraphics[scale=0.275]{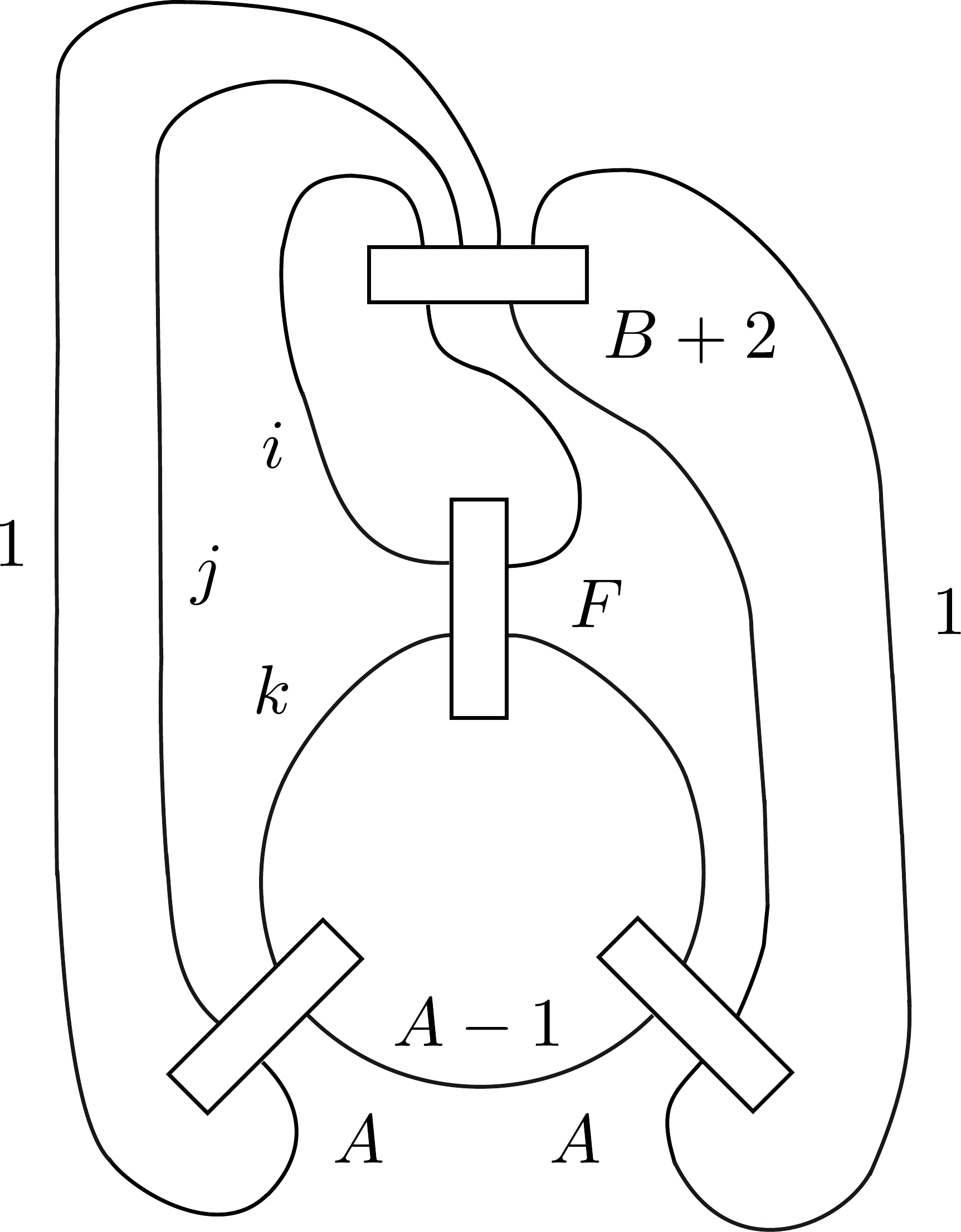} ,}}
\end{align} 
where
\begin{align} 
i = \frac{B + F - A}{2} , \qquad \quad
j = \frac{B + A - F}{2} , \qquad \quad
k = \frac{A + F - B}{2} .
\end{align}
Now, when decomposing the bottom left projector box of size $A$ in~\eqref{SimpleSpecialTetNetwork}, 
rule~\eqref{ProjectorID2} shows that the only nonzero contribution comes from the term with exactly one turn-back link.
Again, we find the coefficient of this term using~\eqref{SpecialT} of proposition~\ref{SpecialTProp}.
Thus, network~\eqref{SimpleSpecialTetNetwork} further simplifies to
\begin{align} \label{SimpleSpecialTetNetwork2}
\frac{1}{[A]} \left[\frac{A-B+F}{2}\right]  \,\, \times \,\, \vcenter{\hbox{\includegraphics[scale=0.275]{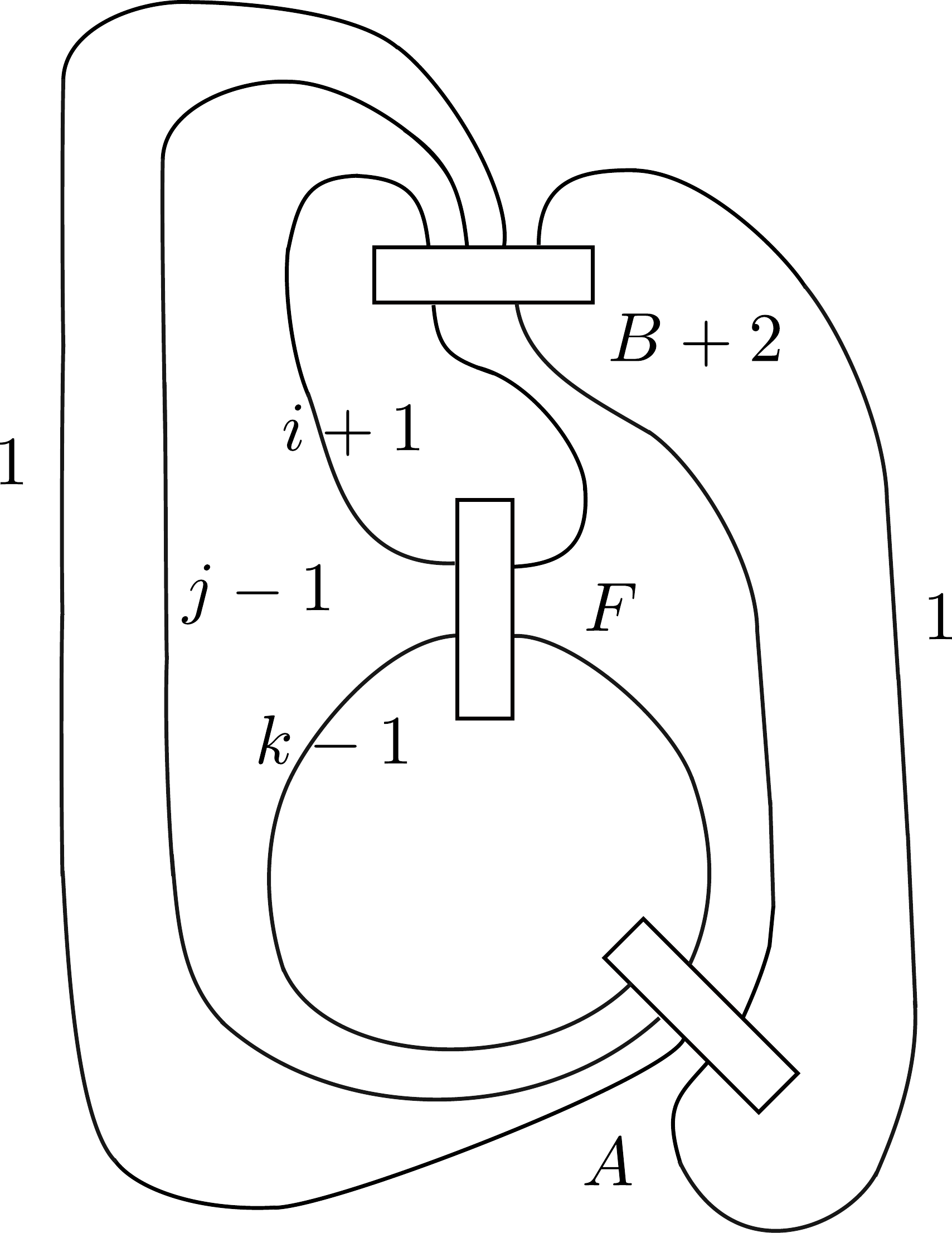} .}} 
\end{align} 
It remains to note that the network in~\eqref{SimpleSpecialTetNetwork2} is exactly the Theta network 
appearing on the right side of~\eqref{SpecialTetNetwork0}.  
This proves item~\ref{NetworkIt3} and concludes the proof.
\end{enumerate}
\end{proof}

To finish, we settle two technical details concerning the evaluation of the Theta network,
for use in sections~\ref{GeneratorLemProofSect}--\ref{RelationLemProofSect}.

\begin{lem} \label{FiniteAndNonzeroLem} 
Suppose $\max (r, t) < \ppmin(q)$. Then for all indices
$s \in \DefectSet\sub{r,t} = \{ |r -t|, |r -t| + 2 , \ldots , r + t \}$ 
and for all indices
$k \in \big\{\frac{r+t-s}{2}, \frac{r+t-s}{2} + 1 , \ldots , \min(r,t) \big\}$, we have
\begin{align} \label{ShouldBeFiniteAndNonzero}
0 < \bigg| \frac{\left[\frac{r+s-t}{2}\right]!\left[\frac{t+s-r}{2}\right]!\left[\frac{r+t+s}{2}-k\right]!}{[s]!\left[k-\frac{r+t-s}{2}\right]![r-k]![t-k]!} \bigg| < \infty. 
\end{align}
\end{lem}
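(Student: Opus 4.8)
The statement asserts that a certain quotient of quantum-integer factorials is finite and nonzero under the hypotheses $\max(r,t)<\ppmin(q)$ and the given ranges for $s$ and $k$. The plan is to show that every argument of a factorial $[\,\cdot\,]!$ appearing in the expression is a nonnegative integer strictly less than $\ppmin(q)$; since $[m]! = [1][2]\cdots[m]$ and $[j]\neq 0$ precisely when $0<j<\ppmin(q)$ (or $j=0$, giving $[0]!=1$), this immediately yields that each factorial is finite and nonzero, whence so is the quotient. Thus the lemma reduces to a finite list of elementary inequalities.

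First I would record the bounds on $s$ and $k$: from $s\in\DefectSet\sub{r,t}$ we have $|r-t|\le s\le r+t$ and $s\equiv r+t\pmod 2$; from the range of $k$ we have $\tfrac{r+t-s}{2}\le k\le\min(r,t)$, and $\tfrac{r+t-s}{2}$ is a nonnegative integer by the parity condition. Then I would check the six factorial arguments one at a time. For the numerator: $\tfrac{r+s-t}{2}\ge 0$ since $s\ge t-r$, and $\tfrac{r+s-t}{2}\le\tfrac{r+(r+t)-t}{2}=r<\ppmin(q)$; symmetrically $0\le\tfrac{t+s-r}{2}\le t<\ppmin(q)$; and $\tfrac{r+t+s}{2}-k\ge\tfrac{r+t+s}{2}-\min(r,t)\ge 0$ (using $k\le\min(r,t)$ and $s\ge|r-t|$), while $\tfrac{r+t+s}{2}-k\le\tfrac{r+t+s}{2}-\tfrac{r+t-s}{2}=s\le r+t$; here one must be slightly more careful, since $r+t$ could reach or exceed $\ppmin(q)$ — but in fact $\tfrac{r+t+s}{2}-k = \big(\tfrac{r+t-s}{2}-k\big)+s \le s$, and moreover when $k=\tfrac{r+t-s}{2}$ this argument equals $s$, so I must instead bound it as $\tfrac{r+t+s}{2}-k\le\tfrac{r+t+s}{2}-\tfrac{r+t-s}{2}=s$ and separately observe $s\le r+t$ is not enough; the correct observation is $\tfrac{r+t+s}{2}-k\le\tfrac{r+t+s}{2} - \tfrac{r+t-s}{2} = s$ and then use that the numerator argument is dominated by $\max(r,t)+\min(r,t)-k+\tfrac{s-|r-t|}{2}$...

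Let me instead phrase the bound cleanly: write $k=\tfrac{r+t-s}{2}+\ell$ with $0\le\ell\le\min(r,t)-\tfrac{r+t-s}{2}=\tfrac{s-|r-t|}{2}$. Then $\tfrac{r+t+s}{2}-k=s-\ell$, $k-\tfrac{r+t-s}{2}=\ell$, $r-k=r-\tfrac{r+t-s}{2}-\ell=\tfrac{r-t+s}{2}-\ell$, $t-k=\tfrac{t-r+s}{2}-\ell$. So the denominator arguments are $s$, $\ell$, $\tfrac{r+s-t}{2}-\ell$, $\tfrac{t+s-r}{2}-\ell$, and the last four numerator-type quantities are all manifestly $\le\max(\tfrac{r+s-t}{2},\tfrac{t+s-r}{2})\le\max(r,t)<\ppmin(q)$ once one checks $s-\ell\le$ the relevant bound; since $\ell\ge\tfrac{s-|r-t|}{2}$... no: $\ell\le\tfrac{s-|r-t|}{2}$ gives $s-\ell\ge s-\tfrac{s-|r-t|}{2}=\tfrac{s+|r-t|}{2}$, which is the wrong direction, but $s-\ell\le s$, and $s$ itself may be large. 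However $s-\ell = (\tfrac{r+s-t}{2}-\ell)+(\tfrac{t+s-r}{2}) $ is false; rather $s-\ell\le\tfrac{r+s-t}{2}+\tfrac{t+s-r}{2}-\ell = s-\ell$ trivially. The genuinely correct bound: $s-\ell\le \max(r,t)$ iff $\ell\ge s-\max(r,t)=\min(r,t)-\big(\max(r,t)+\min(r,t)-s\big)+ \cdots$; since $s\le r+t$, $s-\max(r,t)\le\min(r,t)$, but we need $\ell\ge s-\max(r,t)$, which need NOT hold. So the term $[\tfrac{r+t+s}{2}-k]!$ can indeed have argument up to $s\le r+t$, potentially $\ge\ppmin(q)$. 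The resolution is that this factorial sits in the \emph{numerator}, so it being zero would make the expression zero, not infinite — hence the lemma's hypothesis must secretly prevent $\tfrac{r+t+s}{2}-k\ge\ppmin(q)$; indeed $\tfrac{r+t+s}{2}-k\le\tfrac{r+t+s}{2}-\tfrac{r+t-s}{2}=s$ and when equality $k=\tfrac{r+t-s}{2}$ holds the numerator argument is exactly $s$, but also $[s]!$ appears in the denominator, so they cancel: $\tfrac{[\,\cdots\,]![\,s\,]!\cdots}{[s]!\cdots}$.

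\textbf{Revised plan.} The key realization — and the main subtlety — is that one should not bound the factorials individually but rather simplify the expression first. I would rewrite the quotient, after the substitution $k=\tfrac{r+t-s}{2}+\ell$, in the form
\begin{align}
\frac{\left[\tfrac{r+s-t}{2}\right]!\left[\tfrac{t+s-r}{2}\right]!\,[\,s-\ell\,]!}{[s]!\,[\ell]!\,\left[\tfrac{r+s-t}{2}-\ell\right]!\,\left[\tfrac{t+s-r}{2}-\ell\right]!}
= \frac{\left[\tfrac{r+s-t}{2}\right]!\left[\tfrac{t+s-r}{2}\right]!}{\left[\tfrac{r+s-t}{2}-\ell\right]!\left[\tfrac{t+s-r}{2}-\ell\right]!}\cdot\frac{[\,s-\ell\,]!}{[s]!\,[\ell]!},
\end{align}
and then observe that $\dfrac{[\,s-\ell\,]!}{[s]!\,[\ell]!}=\dfrac{1}{[\ell]!\,[s][s-1]\cdots[s-\ell+1]}$ is problematic only through the $[j]$ with $s-\ell<j\le s$; but in fact the whole original product equals (up to the first ratio, whose arguments are all $\le\max(\tfrac{r+s-t}{2},\tfrac{t+s-r}{2})\le\max(r,t)<\ppmin(q)$) a ratio in which I must check the $[j]$ for $j\in\{s-\ell+1,\dots,s\}$ do not vanish. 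Here I would use the parity/range constraints again: since $s\equiv r+t\pmod 2$ and the original expression arises as a genuine coefficient in a Jones-Wenzl decomposition (cf. proposition~\ref{SpecialTProp} and the calculation in lemma~\ref{InitialCaseLem}), the combination is in fact a ratio of products of quantum integers each lying strictly between $0$ and $\ppmin(q)$; concretely, the cleanest route is to note that the expression equals $\text{coef}_T$ for a suitable tangle $T$ as in proposition~\ref{SpecialTProp} with $2i+j+k+m=s\le\max(r,t)<\ppmin(q)$ after relabeling (taking the relevant tangle inside a projector of size $s$), so formula~\eqref{SpecialT} exhibits it as $\tfrac{[i+k]![i+m]![i+j+k+m]!}{[s]![i]![k]![m]!}$ with all factorial arguments $\le s<\ppmin(q)$, hence finite and nonzero. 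I expect the main obstacle is precisely matching the six parameters $\big(\tfrac{r+s-t}{2},\tfrac{t+s-r}{2},s,\ell,\ldots\big)$ to the $(i,j,k,m)$ of proposition~\ref{SpecialTProp} and verifying that all resulting factorial arguments are bounded by $\max(r,t)$; once that bookkeeping is done, finiteness and nonvanishing are immediate from $[j]\neq 0$ for $0\le j<\ppmin(q)$. I would close by remarking that the absolute value in~\eqref{ShouldBeFiniteAndNonzero} is needed only because quantum integers can be negative, and plays no role in the finiteness/nonvanishing argument.
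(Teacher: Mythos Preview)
Your opening moves match the paper's: reduce the claim to bounding each factorial argument below $\ppmin(q)$, and you correctly pinpoint the one genuine obstruction, namely that $\tfrac{r+t+s}{2}-k$ (equivalently $s-\ell$ after your substitution $k=\tfrac{r+t-s}{2}+\ell$) can be as large as $s\le r+t$, which need not lie below $\ppmin(q)$. Up to this point the two arguments coincide, and your reparametrization by $\ell$ is exactly the right bookkeeping device.

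Where you diverge is in the resolution, and there is a genuine gap. You propose to identify the quotient with a coefficient $\text{coef}_T$ from proposition~\ref{SpecialTProp}, asserting that ``$2i+j+k+m=s\le\max(r,t)<\ppmin(q)$.'' The inequality $s\le\max(r,t)$ is \emph{false}: $s$ ranges over $\DefectSet\sub{r,t}$ up to $r+t$, and already $r=t$ gives $s=2r>r=\max(r,t)$. Since proposition~\ref{SpecialTProp} requires its projector size $n$ (here $n=s$) to satisfy $n<\ppmin(q)$, the appeal does not go through; the ``bookkeeping'' you anticipate is not bookkeeping at all but the same obstruction in disguise. Your earlier attempt to bound the denominator factors $[s-\ell+1]\cdots[s]$ directly also stalls for the same reason.

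The paper's route is different and more direct: it does not invoke proposition~\ref{SpecialTProp} but instead treats the two problematic factorials $\big[\tfrac{r+t+s}{2}-k\big]!$ and $[s]!$ \emph{together}, simplifying their ratio explicitly to the reciprocal of a single quantum factorial whose argument is bounded by $\tfrac{|r-t|+s}{2}-1\le\max(r,t)-1<\ppmin(q)$. Your factorization
\[
\frac{[a]!\,[b]!}{[a-\ell]!\,[b-\ell]!}\cdot\frac{[s-\ell]!}{[s]!\,[\ell]!}
\]
is a reasonable start toward this, and the first factor is indeed controlled by $\max(r,t)$; but you never actually carry out the cancellation of possible zeros in $[s]!$ against those in $[s-\ell]!$. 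That explicit simplification of the ratio --- performed directly on the quantum integers, not via an external coefficient formula --- is the step you need to supply.
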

\begin{proof}
First, we note that 
\begin{align}
\label{condi1General2ndoe-f}
\frac{r+s-t}{2} & \leq r \leq \max (r, t) < \ppmin(q) \\
\frac{t+s-r}{2} & \leq t \leq \max (r, t) < \ppmin(q) \\
k-\frac{r+t-s}{2} & \leq \min(r,t)-\frac{1}{2}(\max(r,t)+\min(r,t)-s)
\leq \min(r,t) \leq \max(r,t) < \ppmin(q) \\
r-k & \leq r- \frac{1}{2}(r+t-s) = \frac{1}{2}(r+s-t) < \ppmin(q) \\
t-k & \leq t- \frac{1}{2}(r+t-s) = \frac{1}{2}(t+s-r)  < \ppmin(q) .
\label{condi1General2ndoe-l}
\end{align}
By definition~\eqref{Qinteger}, the quantum integer $[k]$ does not vanish if $k \in \{ 0, 1, \ldots, \ppmin(q) - 1 \}$, 
so~(\ref{condi1General2ndoe-f}--\ref{condi1General2ndoe-l}) 
show that all of the quantum factorials in~\eqref{ShouldBeFiniteAndNonzero}, except possibly
$[s]!$ and $\left[\frac{r+t+s}{2}-k\right]!$, are nonzero.

Second, we show that the ratio of the remaining factors in~\eqref{ShouldBeFiniteAndNonzero} is also finite and nonzero:
by the assumption on $k$, we have, for any fixed $s \in \DefectSet\sub{r,t}$,
\begin{align} 
\left[\frac{1}{2}(r+t+s-k)\right]!
& = \left[\frac{1}{2}(r+t+s)-\frac{1}{2}(r+t-s)\right] \\
& = \left[\frac{1}{2}(r+t+s)-\frac{1}{2}(r+t-s) + 1\right]
\cdots \left[\frac{1}{2}(r+t+s)-\min(r,t)\right] \\
& = \left[\frac{1}{2}(|r -t| +s)\right]
\left[\frac{1}{2}(|r -t| +s) + 1\right]
\cdots \left[s\right] \\
\label{Ratio}
\Longrightarrow \qquad
\frac{\left[\frac{r+t+s}{2}-k\right]!}{[s]!}
& = \left(\left[\frac{|r -t| +s}{2} - 1\right]!\right)^{-1} .
\end{align}
Now, because $s \in \DefectSet\sub{r,t}$, we have
\begin{align}
\frac{|r -t| +s}{2} - 1 \, \leq \, \frac{|r -t| + r + t}{2} - 1
\leq \max (r, t) - 1 \, < \, \ppmin(q) ,
\end{align}
so the ratio in~\eqref{Ratio} is indeed finite. This concludes the proof.
\end{proof}

\begin{lem} \label{FiniteAndNonzeroLem2} 
Suppose $\max (r, t) < \ppmin(q)$. Then for all indices
$s \in \DefectSet\sub{r,t} = \{ |r -t|, |r -t| + 2 , \ldots , r + t \}$, we have
\begin{align} \label{ThetaNetShouldBeFiniteAndNonzero0}
0 < \bigg| \frac{\ThetaNet(r,t,s)}{(-1)^s [s+1]} \bigg| < \infty. 
\end{align}
\end{lem}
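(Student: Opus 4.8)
The plan is to reduce the quantity to the explicit formula for the Theta network from Lemma~\ref{ThetaLem} and then simply track which quantum factorials can contribute a zero or a pole. Since the Theta network is symmetric in its three edge labels, $\ThetaNet(r,t,s)=\ThetaNet(r,s,t)$, and since $s\in\DefectSet\sub{r,t}$ together with $\max(r,t)<\ppmin(q)$ makes Lemma~\ref{ThetaLem} applicable, I would substitute~\eqref{ThetaFormula1}. Absorbing $[s+1]\,[s]!=[s+1]!$, this yields
\begin{align} \label{PlanThetaRatio}
\frac{\ThetaNet(r,t,s)}{(-1)^s[s+1]}
=(-1)^{\frac{r+t-s}{2}}\,
\frac{\bigl[\tfrac{r+s+t}{2}+1\bigr]!\,\bigl[\tfrac{r+s-t}{2}\bigr]!\,\bigl[\tfrac{s+t-r}{2}\bigr]!\,\bigl[\tfrac{t+r-s}{2}\bigr]!}
{[s+1]!\,[r]!\,[t]!}.
\end{align}
Recall that a quantum factorial $[m]!$ is the product $[1][2]\cdots[m]$ and that $[k]\neq0$ exactly when $\ppmin(q)\nmid k$; hence $[m]!$ is a nonzero complex number whenever $m<\ppmin(q)$. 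So it suffices to show that, after the one unavoidable cancellation below, every quantum factorial surviving in~\eqref{PlanThetaRatio} has all of its arguments strictly below $\ppmin(q)$.

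The ``small'' factorials are immediate from $s\in\DefectSet\sub{r,t}$, i.e.\ $|r-t|\le s\le r+t$: one reads off $\tfrac{r+s-t}{2}\le r$, $\tfrac{s+t-r}{2}\le t$ and $\tfrac{t+r-s}{2}\le\min(r,t)$, and $r,t<\ppmin(q)$ by hypothesis, so the three numerator factorials $\bigl[\tfrac{r+s-t}{2}\bigr]!$, $\bigl[\tfrac{s+t-r}{2}\bigr]!$, $\bigl[\tfrac{t+r-s}{2}\bigr]!$ and the denominator factorials $[r]!$, $[t]!$ are all finite and nonzero. This is exactly the bookkeeping performed in the first half of the proof of Lemma~\ref{FiniteAndNonzeroLem}.

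The remaining, and main, point is the surviving pair $[\tfrac{r+s+t}{2}+1]!\,/\,[s+1]!$. Because $s\le r+t$ we have $\tfrac{r+s+t}{2}+1\ge s+1$, so this ratio is the product $[s+2]\,[s+3]\cdots\bigl[\tfrac{r+s+t}{2}+1\bigr]$ of exactly $\tfrac{r+t-s}{2}\le\min(r,t)<\ppmin(q)$ consecutive quantum integers. What must then be verified — and I expect this to be the only genuine obstacle — is that none of $[s+2],\dots,\bigl[\tfrac{r+s+t}{2}+1\bigr]$ vanishes, equivalently that the interval $\bigl[\,s+2,\ \tfrac{r+s+t}{2}+1\,\bigr]$ contains no multiple of $\ppmin(q)$; since this interval contains only $\tfrac{r+t-s}{2}<\ppmin(q)$ integers it meets at most one such multiple, and ruling it out is the step where the hypotheses on $r,s,t$ must be used in full, in the spirit of the telescoping identity $\tfrac{[(r+t+s)/2-k]!}{[s]!}=\bigl([\tfrac{|r-t|+s}{2}-1]!\bigr)^{-1}$ from Lemma~\ref{FiniteAndNonzeroLem}. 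Once this is granted, the right-hand side of~\eqref{PlanThetaRatio} is a ratio of finite, nonzero quantum factorials, hence itself finite and nonzero, which is the assertion.
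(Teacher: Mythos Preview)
Your reduction via Lemma~\ref{ThetaLem} and the bookkeeping for the ``small'' factorials mirrors the paper exactly. Where you diverge is in the treatment of $\bigl[\tfrac{r+s+t}{2}+1\bigr]!\big/[s+1]!$: you rewrite it as the finite product $[s+2]\,[s+3]\cdots\bigl[\tfrac{r+s+t}{2}+1\bigr]$ and propose to show that no factor vanishes, i.e.\ that the integer interval $\{s+2,\ldots,\tfrac{r+s+t}{2}+1\}$ meets no multiple of $\ppmin(q)$. The paper instead keeps numerator and denominator separate and argues, via a case split on the size of $s+1$ relative to $\ppmin(q)$, that any zero in one factorial is matched by a first-order zero in the other.

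There is a genuine gap: you explicitly leave the key step unproven (``Once this is granted''), and in fact it cannot be proven as you state it. Take $r=t=4$, $s=2$, $\ppmin(q)=5$: then $\max(r,t)=4<5$ and $s\in\DefectSet\sub{4,4}=\{0,2,4,6,8\}$, yet your product is $[4][5][6]$ with $[5]=0$; since $[s+1]!=[3]!$ contributes no compensating zero, the whole ratio vanishes. So the route of ruling out zeros in the product cannot succeed under the hypothesis $\max(r,t)<\ppmin(q)$ alone. The paper's cancellation argument is organised differently --- both of its cases presuppose $s+1\ge\ppmin(q)$ --- but this same example also falls outside that case split, so the lower bound in~\eqref{ThetaNetShouldBeFiniteAndNonzero0} seems to require the stronger hypothesis $r+t<\ppmin(q)$. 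Under that hypothesis your product argument does go through cleanly (every factor then lies strictly below $\ppmin(q)$, or the product is empty), and this is in fact the regime in which the lemma is actually invoked, since the three-vertex generators carry a projector box of size $s\le r+t$ that must itself satisfy $s<\ppmin(q)$.
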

\begin{proof}
Using lemma~\ref{ThetaLem}, we write the quantity of interest in~\eqref{ThetaNetShouldBeFiniteAndNonzero0} in the form
\begin{align}  \label{ThetaNetShouldBeFiniteAndNonzero}
\frac{\ThetaNet(r,t,s)}{[s+1]} 
\overset{\eqref{ThetaFormula1}}{=}
\frac{(-1)^{\frac{r + s + t}{2}} \left[ \frac{r + s + t}{2} + 1 \right]! \left[ \frac{ r + s - t }{2} \right]! \left[ \frac{ s + t - r}{2} \right]! \left[ \frac{t + r - s}{2} \right]! }{[s + 1]!  [ r ]! [ t ]!} . 
\end{align}
From~(\ref{condi1General2ndoe-f}--\ref{condi1General2ndoe-l}) and the assumption $\max (r, t) < \ppmin(q)$, 
we see that all of the quantum factorials in~\eqref{ThetaNetShouldBeFiniteAndNonzero}, except possibly
$\left[ \frac{t + r - s}{2} \right]!$, $[s+1]!$, and $\left[\frac{r+t+s}{2}+1\right]!$, are nonzero. 
The first of these is also nonzero because
\begin{align}
\frac{t + r - s}{2} \leq \frac{t + r - |r-t|}{2} = \min(r,t) < \ppmin(q) .
\end{align}
Finally, we show that the ratio of the remaining factors in~\eqref{ThetaNetShouldBeFiniteAndNonzero} is also finite and nonzero.
We note that
\begin{alignat}{2}
0 & \leq |r-t| \leq s 
\qquad  && \text{and}  \qquad
s \leq r + t \leq 2\max(r,t) < 2\ppmin(q) , \\
0 & \leq \max(r,t) + 1 \leq
\frac{r+t+s}{2}+1 
\qquad  && \text{and}  \qquad
\frac{r+t+s}{2} \leq \frac{r+t+r+t}{2} = r + t \leq 2\max(r,t)  < 2\ppmin(q) .
\end{alignat}
In particular, the factorial $[s+1]!$ or $\left[\frac{r+t+s}{2}+1\right]!$ can only be zero if 
$\ppmin(q) \leq s+1 \leq 2\ppmin(q)$ or $\ppmin(q) \leq \frac{r+t+s}{2}+1 \leq 2\ppmin(q)$, respectively.
We consider two cases:
\begin{enumerate}[leftmargin=*]
\itemcolor{red}
\item If $s = 2\ppmin(q) - 1$, then we necessarily have $s = r+t$, and $\max(r,t) = \ppmin(q)$, and $\min(r,t) = \ppmin(q) - 1$.
In this case, we have $\frac{1}{2}(r+t+s)+1 = 2\ppmin(q)$, so the ratio of $[s+1]!$ and $\left[\frac{r+t+s}{2}+1\right]!$ equals one.

\item If $\ppmin(q) \leq s + 1 < 2\ppmin(q)$, then we necessarily have $r + t > \ppmin(q)$, and hence,
\begin{align}
\ppmin(q) = \frac{\ppmin(q)-1+\ppmin(q)-1}{2} + 1 \leq 
\frac{r+t+s}{2}+1 & \, < \, \frac{2\ppmin(q) - 1 + 2\ppmin(q) - 1}{2} + 1 = 2\ppmin(q) .
\end{align}
From definition~\eqref{Qinteger}, we see that for any $k \in \{ 0, 1, \ldots, \ppmin(q) - 1 \}$, we have
$[k]_q = 0$ if and only if $\ppmin(q) \,|\, k$, and these zeros of $q \mapsto [k]_q$ 
are of first order. This shows that the zeros in the ratio of $[s+1]!$ and $\left[\frac{r+t+s}{2}+1\right]!$ cancel, so 
this ratio is finite and nonzero.
\end{enumerate}
This concludes the proof.
\end{proof}

\section{Further auxiliary results} \label{LemmaApp}

In this appendix, we collect auxiliary results needed in this article. 
Lemmas~\ref{WJSandwichLem},~\ref{ChangeOfBasisLem}, and~\ref{DefectLem} 
follow from their analogues proved in~\cite{fp0} for the valenced Temperley-Lieb algebra $\TL_\multii(\nu)$
together with the fact from~\cite[lemmas~\red{B.2} and~\red{B.3}]{fp0}
that $\TL_\multii(\nu)$ and $\WJ_\multii(\nu)$ as well as their standard modules $\smash{\LS_\multii\super{s}}$ 
and $\smash{\PS_\multii\super{s}}$ are isomorphic.
Lemma~\ref{WJSandwichLem2} is similar to lemma~\ref{WJSandwichLem}.
We use lemmas~\ref{WJSandwichLem}--\ref{WJSandwichLem2} in section~\ref{CellSec},
and lemmas~\ref{ChangeOfBasisLem}--\ref{DefectLem} in section~\ref{GeneratorLemProofSect}.


Throughout, we let $\multii$ and $\multiii$ be two multiindices with $\np_\multii$ and $\np_\multiii$ nonnegative integer entries respectively, 
\begin{align}
\multii = (\sIndex_1, \sIndex_2,\ldots, \sIndex_{\np_\multii}) \in \{ (0) \} \cup \bZpos^\#, \qquad \qquad \multiii = (p_1, p_2, \ldots, p_{\np_\multiii}) \in \{ (0) \} \cup \bZpos^\# , 
\end{align}
and such that $\Summed_\multii + \Summed_\multiii = 0 \Mod 2$.
Then, we extend the definition of Jones-Wenzl tangles to include projector boxes of different sizes on the left and right sides of the tangles,
given by the two multiindices $\multii$ and $\multiii$: we denote by 
\begin{align}
\WJ_\multii^\multiii := \Span \PD_\multii^\multiii
\end{align}
the space of \emph{$(\multii, \multiii)$-Jones-Wenzl tangles}, spanned by \emph{$(\multii, \multiii)$-Jones-Wenzl link diagrams}.
A generic element in $\smash{\WJ_\multii^\multiii}$
is depicted in~\eqref{SandwichmapGen} in lemma~\ref{WJSandwichLem} below.
We distinguish Jones-Wenzl tangles with given number of crossing links, writing
\begin{align}\label{WJDirSum2} 
\WJ_\multii^{\multiii} = \; & \bigoplus_{s \, \in \, \DefectSet_\multii^{\multiii}} \smash{\WJ_\multii^{\multiii; \scaleobj{0.85}{(s)}}} , \\
\WJ_\multii^{\multiii; \scaleobj{0.85}{(s)}} := \; & \Span \{ \text{all Jones-Wenzl link diagrams in 
$\smash{\PD_\multii^\multiii}$ with exactly $s$ crossing links} \} ,
\end{align}
where $\smash{\DefectSet_\multii^{\multiii}}$ denotes the set of all integers $s \geq 0$ such that 
$\smash{\WJ_\multii^{\multiii; \scaleobj{0.85}{(s)}}}$ is not empty. We also note that, with $\multiii = \multii$,
we have
\begin{align} \label{smallerthans}
\WJ_\multii(\nu)\super{< s} =  \bigoplus_{r \, < \, s} \smash{\WJ_\multii^{\multiii; \scaleobj{0.85}{(r)}}} ,
\end{align}
for all $s \in \DefectSet_\multii$, where the left side is defined as 
$\smash{\WJ_\multii(\nu)\super{< s} := \Span \{\BarAction \epsilon \quad \eta \BarAction \, | \, \epsilon, \eta \in \PP_{\multii}\super{r} \text{ and } r < s \}}$, 
c.f.~(\ref{Sandwichmap},~\ref{SandwichmapGen}).

For a $\multii$-Jones-Wenzl link state $\alpha$,
we let $\tilde{\alpha}$ denote the link state obtained by reflecting $\alpha$ about a vertical axis:
\begin{align} \label{Flip} 
\alpha \quad = \quad \vcenter{\hbox{\includegraphics[scale=0.275]{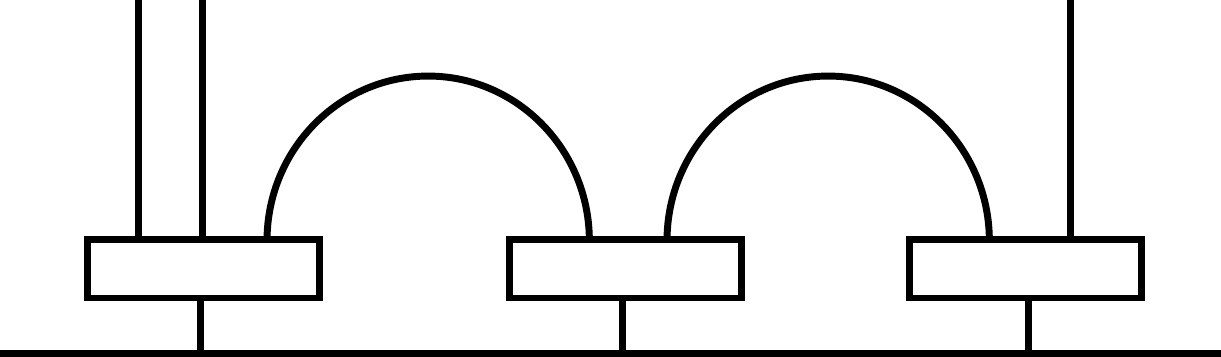}}}
\qquad \qquad \Longrightarrow \qquad \qquad
\tilde{\alpha} \quad = \quad \vcenter{\hbox{\includegraphics[scale=0.275]{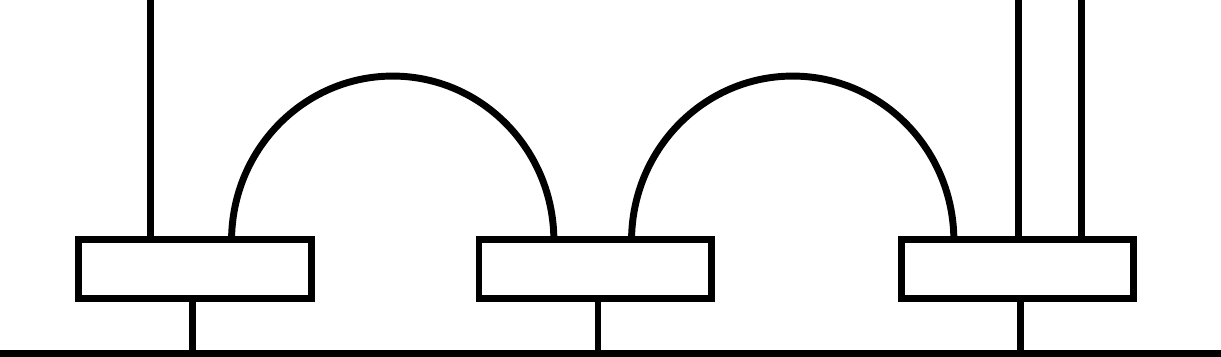} .}} 
\end{align}
Thus, if $\alpha \in \PS_\multii$, then $\tilde{\alpha} \in \PS_{\tilde{\multii}}$, where the multiindex $\tilde{\multii}$ is given by
\begin{align}
\multii = (\sIndex_1, \sIndex_2, \ldots, \sIndex_{\np_\multii-1}, \sIndex_{\np_\multii}) 
\qquad \qquad \Longrightarrow \qquad \qquad
\tilde{\multii} := (\sIndex_{\np_\multii}, \sIndex_{\np_\multii-1}, \ldots, \sIndex_2, \sIndex_1). 
\end{align}

\bigskip

The first two lemmas~\ref{WJSandwichLem}--\ref{WJSandwichLem2} give useful bases for the space
$\WJ_\multii^\multiii$ of Jones-Wenzl tangles. When $\multiii = \multii$, these bases give rise to cellular structures for the 
Jones-Wenzl algebra $\WJ_\multii(\nu)$ (proposition~\ref{CellPropo}).

\begin{lem}  \textnormal{\cite[lemma~\red{2.5}]{fp0}}
\label{WJSandwichLem}  
The map $\BarAction \, \cdot \quad \cdot \, \BarAction \colon 
\smash{\underset{s \, \in \, \DefectSet_\multii^\multiii}{\bigoplus}
\PS_\multii\super{s}} \otimes \smash{\PS_\multiii\super{s}} \longrightarrow \WJ_\multii^\multiii$ 
defined by linear extension~of
\begin{align} \label{SandwichmapGen}
\alpha \otimes \beta \qquad \longmapsto \qquad \BarAction \alpha \quad \beta \BarAction 
\quad := \quad \vcenter{\hbox{\includegraphics[scale=0.275]{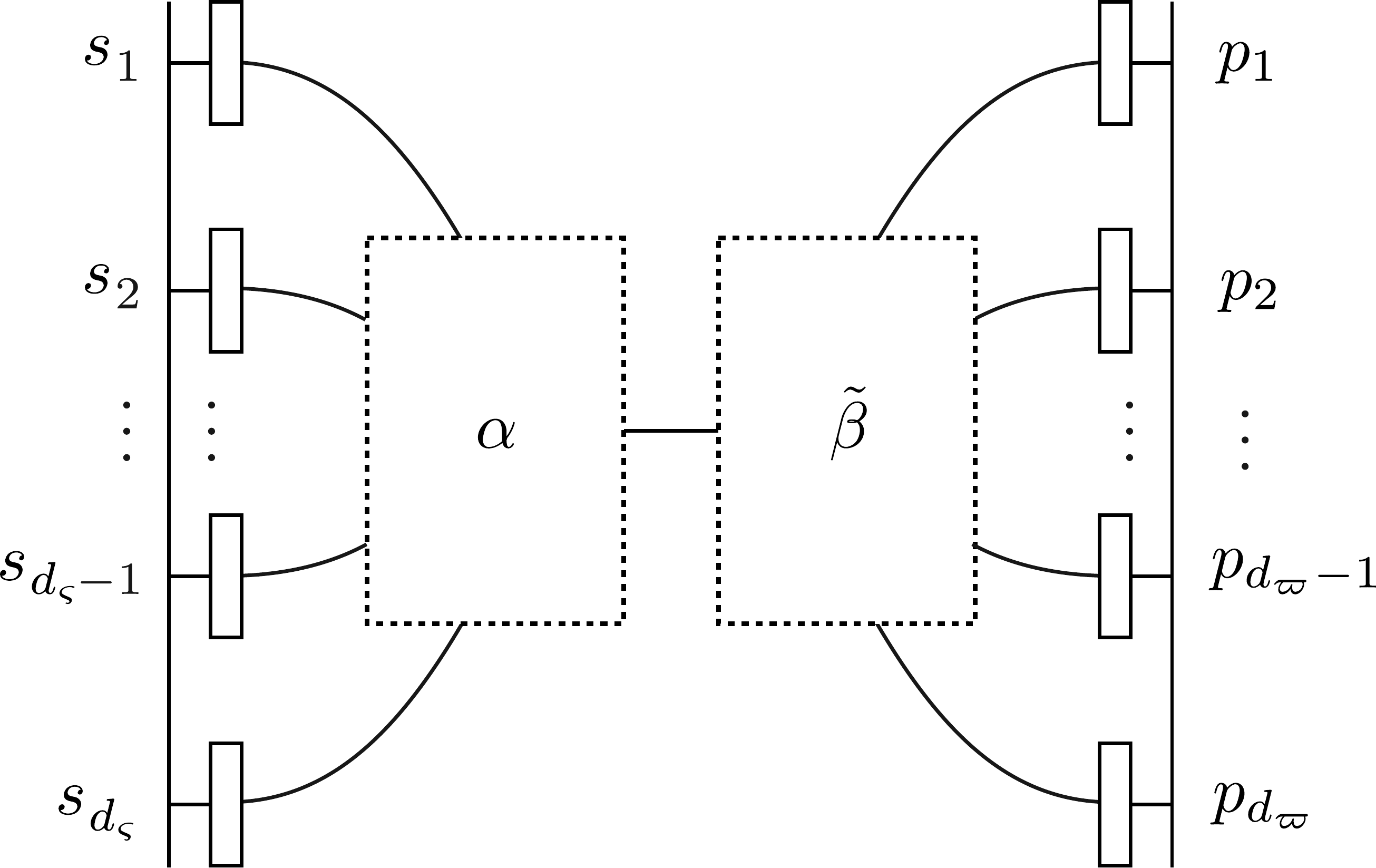} ,}}
\end{align}
for all Jones-Wenzl  link patterns $\alpha \in \smash{\PP_\multii\super{s}}$ and 
$\beta \in \smash{\PP_\multiii\super{s}}$, is an isomorphism of vector spaces.
\hfill \qed
\end{lem}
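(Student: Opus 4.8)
The plan is to show that the map, call it $\Phi$, carries the natural spanning set of its domain bijectively onto the spanning set $\PD_\multii^\multiii$ of $\WJ_\multii^\multiii$, and then to pin down that both of these spanning sets are in fact bases; the linear independence of $\PD_\multii^\multiii$ is the real work.

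First I would check that $\Phi$ is well defined. Since $\Summed_\multii \equiv \Summed_\multiii \pmod 2$, the index set satisfies $\DefectSet_\multii^\multiii = \DefectSet_\multii \cap \DefectSet_\multiii$, and for each such $s$ the Jones-Wenzl link-pattern sets $\PP_\multii\super{s}$ and $\PP_\multiii\super{s}$ are nonempty and span the standard modules $\PS_\multii\super{s}$ and $\PS_\multiii\super{s}$. Because $\Phi$ is prescribed on the spanning set $\{\alpha \otimes \beta \mid s \in \DefectSet_\multii^\multiii,\ \alpha \in \PP_\multii\super{s},\ \beta \in \PP_\multiii\super{s}\}$ of $\bigoplus_s \PS_\multii\super{s} \otimes \PS_\multiii\super{s}$, it is determined by linearity, provided each image $\BarAction \alpha \quad \beta \BarAction$ genuinely lies in $\WJ_\multii^\multiii$. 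This is immediate: gluing the Jones-Wenzl link patterns $\alpha$ and $\tilde\beta$ along their $s$ defects yields a tangle of the form $\WJProj_\multii\, T\, \WJProj_\multiii$, and it is nonzero since the $s$ defects of $\alpha$ are matched only to the $s$ defects of $\tilde\beta$, so no link joins two nodes of a single projector box; hence $\BarAction \alpha \quad \beta \BarAction \in \WJ_\multii^{\multiii;(s)}$ and $\Phi$ respects the direct-sum decomposition of~\eqref{WJDirSum2} on both sides.

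Next comes the combinatorial heart, the ``canonical cut.'' Any element of $\PD_\multii^\multiii$ is, by definition, a nonzero Jones-Wenzl link diagram $\WJProj_\multii\, T\, \WJProj_\multiii$ carrying some number $s$ of crossing links; slicing it along a vertical line through those crossing links produces a left half carrying $\WJProj_\multii$ with $s$ defects and a right half carrying $\WJProj_\multiii$ with $s$ defects. Both halves must be \emph{nonzero} Jones-Wenzl link patterns --- otherwise property~\eqref{ProjectorID2} would force the whole diagram to vanish --- and reflecting the right half about a vertical axis recovers a $\beta \in \PP_\multiii\super{s}$, so the diagram equals $\BarAction \alpha \quad \beta \BarAction$. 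Thus $(\alpha, \beta) \mapsto \BarAction \alpha \quad \beta \BarAction$ is onto $\PD_\multii^\multiii$; and it is injective because the picture $\BarAction \alpha \quad \beta \BarAction$ remembers which Jones-Wenzl link pattern sits on which side (and distinct $s$ land in distinct summands $\WJ_\multii^{\multiii;(s)}$), so the cut is a two-sided inverse. In particular $\Phi$ is surjective, and it restricts to a bijection between the two spanning sets.

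Finally I would upgrade surjectivity to an isomorphism. Since $\Phi$ maps a spanning set of the domain bijectively onto $\PD_\multii^\multiii$, a dimension count finishes the proof the moment one knows that $\PD_\multii^\multiii$ is linearly independent --- and this is the step I expect to be the main obstacle. For $\multiii = \multii$ this is exactly the statement that $\PD_\multii$ is a basis of $\WJ_\multii(\nu)$, which is available; for general $\multiii$ I would re-derive it by a triangularity argument inside the ambient space of ordinary (rectangular) Temperley-Lieb link diagrams: expanding every projector box via~\eqref{ProjDecomp}, the diagram $\BarAction \alpha \quad \beta \BarAction$ becomes a $\bC$-linear combination of ordinary link diagrams whose term obtained by replacing each projector box by its identity summand has coefficient $1$ (property~\ref{wj1item}) and is strictly maximal in the partial order generated by insertion of turn-back links; since this maximal term is precisely the rectangular link diagram underlying $(\alpha, \beta)$, and distinct pairs give distinct underlying diagrams, the change-of-basis matrix from $\PD_\multii^\multiii$ to the ordinary link-diagram basis is unitriangular on its support, whence linear independence. (When $\multiii = \multii$ and $\Summed_\multii < \ppmin(q)$ one could instead pair $\BarAction \alpha \quad \beta \BarAction$ against $\alpha^\cheque$ and $\beta^\cheque$ of~\eqref{DualLS} via the then-nondegenerate bilinear form, but the triangularity argument needs no genericity in $q$, matching the stated generality.) With $\PD_\multii^\multiii$ a basis, the spanning set $\{\alpha\otimes\beta\}$ of the domain has cardinality $\dim \WJ_\multii^\multiii$, hence is a basis, and $\Phi$ is an isomorphism of vector spaces.
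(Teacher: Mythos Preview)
Your proposal is correct. The paper itself does not give a proof of this lemma: it simply cites \cite[lemma~2.5]{fp0} and notes in the preamble to appendix~C that the statement follows from its analogue for the valenced Temperley-Lieb algebra together with the isomorphisms of \cite[lemmas~B.2 and~B.3]{fp0}. The linear independence of $\PD_\multii$ (equivalently of $\PD_\multii^\multiii$) is likewise imported from \cite[lemma~B.1]{fp0} rather than proved here.

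Your direct argument is therefore more self-contained than what appears in this paper. A small sharpening of your triangularity step: the partial order is not really needed. When you expand all projector boxes via~\eqref{ProjDecomp}, every non-identity internal diagram in $\LD_{\sIndex_i}$ carries at least one turn-back on \emph{each} side (left and right turn-backs occur in equal number), so any non-leading term of $\BarAction \alpha \quad \beta \BarAction$ has a link joining two nodes within a single projector-box region on the boundary and is therefore \emph{inadmissible}. The leading terms, by contrast, are exactly the admissible rectangular diagrams, and distinct pairs $(\alpha,\beta)$ give distinct admissible diagrams by your cut bijection. Hence the ``admissible component'' of $\BarAction \alpha \quad \beta \BarAction$ in the ordinary link-diagram basis is a single basis vector with coefficient~$1$, and linear independence is immediate --- the matrix is not merely unitriangular but block-identity on the admissible part. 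This is presumably the argument underlying \cite[lemma~B.1]{fp0}, so your route and the paper's (deferred) route coincide in substance.
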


\begin{lem} \label{WJSandwichLem2} 
Suppose $\min(\Summed_\multii, \Summed_\multiii) < \ppmin(q)$. Then the 
map 
$\BarAction \, \cdot \quad \ProjBox \quad \cdot \, \BarAction \colon \smash{\underset{s \, \in \, \DefectSet_\multii^\multiii}{\bigoplus}
\PS_\multii\super{s}} \otimes \smash{\PS_\multiii\super{s}} \longrightarrow \WJ_\multii^\multiii$ 
defined by linear extension of
\begin{align} \label{InsertBarInTheMiddleMap}
 \alpha \otimes \beta \qquad \longmapsto \qquad \BarAction \alpha \quad \ProjBox \quad \beta \BarAction 
 \quad := \quad \vcenter{\hbox{\includegraphics[scale=0.275]{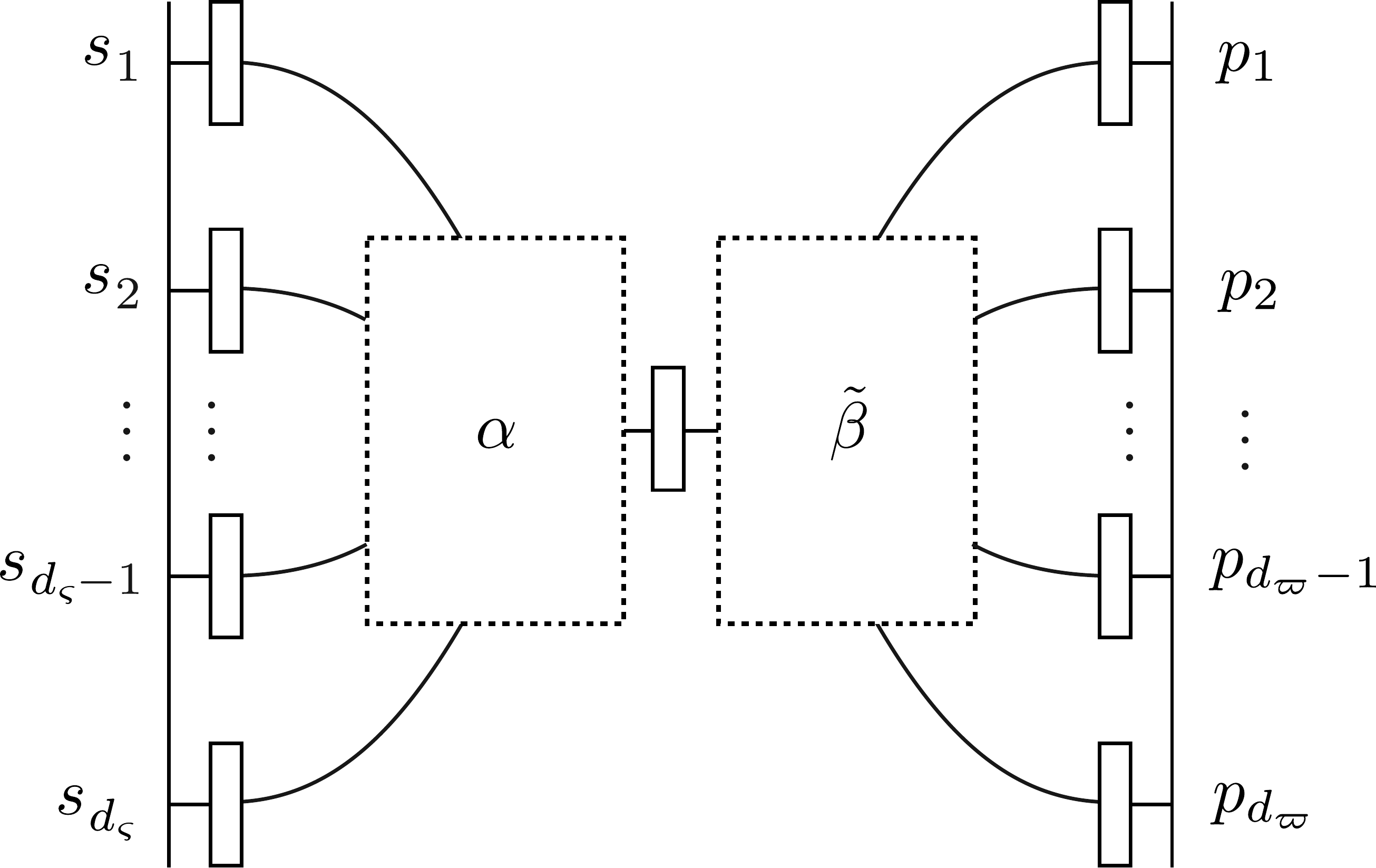} ,}}
\end{align}
for all Jones-Wenzl link patterns 
$\alpha \in \smash{\PP_\multii\super{s}}$ and $\beta \in \smash{\PP_\multiii\super{s}}$, 
is an isomorphism of vector spaces.
\end{lem}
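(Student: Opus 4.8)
The plan is to reduce the claim to Lemma~\ref{WJSandwichLem} by a triangularity argument with respect to the grading of $\WJ_\multii^\multiii$ by the number of crossing links. Write $\Phi$ for the isomorphism of Lemma~\ref{WJSandwichLem}, sending $\alpha \otimes \beta$ to $\BarAction \alpha \quad \beta \BarAction$, and $\Psi$ for the linear map~\eqref{InsertBarInTheMiddleMap} of the present lemma, sending $\alpha \otimes \beta$ to $\BarAction \alpha \;\; \ProjBox \;\; \beta \BarAction$. Both maps share the domain $D := \bigoplus_{s \, \in \, \DefectSet_\multii^\multiii} \PS_\multii\super{s} \otimes \PS_\multiii\super{s}$ and the codomain $\WJ_\multii^{\multiii} = \bigoplus_s \WJ_\multii^{\multiii; \scaleobj{0.85}{(s)}}$, and $\Phi$ carries $\PS_\multii\super{s} \otimes \PS_\multiii\super{s}$ isomorphically onto $\WJ_\multii^{\multiii; \scaleobj{0.85}{(s)}}$ for each $s$.

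First I would check that $\Psi$ is well-defined. Each crossing link of a $(\multii,\multiii)$-Jones-Wenzl link diagram occupies one node on its left side and one node on its right side, so every $s \in \DefectSet_\multii^\multiii$ satisfies $s \leq \min(\Summed_\multii, \Summed_\multiii) < \ppmin(q)$; hence the Jones-Wenzl projector $\WJProj\sub{s}$ inserted in~\eqref{InsertBarInTheMiddleMap} exists, and $\Psi$ indeed maps $D$ into $\WJ_\multii^\multiii$. The main step is to show that, for link patterns $\alpha \in \PP_\multii\super{s}$ and $\beta \in \PP_\multiii\super{s}$,
\begin{align} \label{PlanTri}
\BarAction \alpha \;\; \ProjBox \;\; \beta \BarAction \; = \; \BarAction \alpha \quad \beta \BarAction
\qquad \big( \textnormal{mod } {\textstyle\bigoplus_{r \, < \, s}} \WJ_\multii^{\multiii; \scaleobj{0.85}{(r)}} \big) .
\end{align}
This follows from expansion~\eqref{ProjDecomp} of the inserted projector, $\WJProj\sub{s} = \mathbf{1}_{\TL_s} + \sum_{T \neq \mathbf{1}_{\TL_s}} (\textnormal{coef}_T) \, T$: the unit term reconnects the $s$ crossing links of $\BarAction \alpha \quad \beta \BarAction$ straight through and reproduces $\BarAction \alpha \quad \beta \BarAction$, whereas any other $T \in \LD_s$ has at least one turn-back link, and inserting such a $T$ joins two of the $s$ defect-links coming from $\alpha$ or from $\beta$ into a single turn-back link, producing a $(\multii,\multiii)$-Jones-Wenzl tangle --- possibly zero by~\eqref{ProjectorID2} --- with at most $s - 2$ crossing links.

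Granting~\eqref{PlanTri}, the composite $\Phi^{-1} \Psi \colon D \to D$ preserves the increasing filtration $\bigoplus_{r \, \leq \, s} \PS_\multii\super{r} \otimes \PS_\multiii\super{r}$ and acts as the identity on its associated graded, so $\Phi^{-1} \Psi - \id$ is strictly filtration-lowering; since the grading set $\DefectSet_\multii^\multiii$ is finite, this operator is nilpotent, $\Phi^{-1} \Psi$ is invertible, and hence $\Psi = \Phi \circ (\Phi^{-1} \Psi)$ is an isomorphism of vector spaces. I expect the only delicate part to be the bookkeeping in~\eqref{PlanTri} --- checking that inserting a turn-back link strictly lowers the number of crossing links, and that the degree-$s$ part of $\Psi(\alpha \otimes \beta)$ is exactly $\Phi(\alpha \otimes \beta)$ and so cannot vanish through cancellation, which is ensured by the injectivity of $\Phi$.
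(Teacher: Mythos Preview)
Your proof is correct and follows essentially the same approach as the paper: both arguments hinge on the triangularity relation~\eqref{PlanTri}, obtained from the expansion~\eqref{ProjDecomp} of the inserted projector, which says that $\BarAction \alpha \;\; \ProjBox \;\; \beta \BarAction$ equals $\BarAction \alpha \quad \beta \BarAction$ plus terms with strictly fewer crossing links. The paper then deduces linear independence of~\eqref{2ndColl} directly and finishes with a cardinality count, whereas you package the same content as ``$\Phi^{-1}\Psi$ is a unipotent endomorphism of a filtered space'' --- a slightly cleaner formulation, but the mathematics is identical.
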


\begin{proof} 
The map in the statement is well-defined because the projector box inserted in~\eqref{InsertBarInTheMiddleMap} 
is well-defined, having size at most 
$\max \smash{\DefectSet_\multii^{\multiii}} = \min(\Summed_\multii, \Summed_\multiii) < \ppmin(q)$.
This map sends the collection
$\smash{ \big\{ \alpha \otimes \beta \, \big|\, s \in \DefectSet_\multii^\multiii, \; \alpha \in \PP_\multii\super{s}, \; \beta \in \PP_\multiii\super{s} \big\}}$, 
which is a basis for its domain, to 
\begin{align}\label{BigSet} 
\big\{ \BarAction \alpha \quad \ProjBox \quad \beta \BarAction \, \big| \, s \in \DefectSet_\multii^\multiii, \; \alpha \in \PS_\multii\super{s}, 
\; \beta \in \PS_\multiii\super{s} \big\}.
\end{align}
We need to to prove that collection~\eqref{BigSet} is a basis for $\smash{\WJ_\multii^{\multiii}}$.
By direct-sum decomposition~\eqref{WJDirSum2}, it suffices to prove that for each 
$s \in \smash{\DefectSet_\multii^{\multiii}}$, the following smaller collection is a basis for 
the subspace $\smash{\WJ_\multii^{\multiii; \scaleobj{0.85}{(s)}}}$:
\begin{align}\label{2ndColl} 
\smash{\big\{\BarAction \alpha \quad \ProjBox \quad \beta \BarAction \big| \,\alpha \in \smash{\PP_\multii\super{s}}, 
\; \beta \in \smash{\PP_\multiii\super{s}}\big\}} . 
\end{align}
To prove this, we decompose the projector box between $\alpha$ and $\beta$.  
By recursion property~\eqref{wjrecursion} of the Jones-Wenzl projectors, we find that 
$\BarAction \alpha \quad \ProjBox \quad \beta \BarAction$ can be expanded in the form
\begin{align}\label{ProjDecBasis} 
\BarAction \alpha \quad \ProjBox \quad \beta \BarAction 
= \BarAction \alpha \quad \beta \BarAction 
+ \sum_{\substack{ r \, \in \, \DefectSet_\multii^{\multiii} \\ r \, < \, s}} T_{\alpha,\beta}\super{r} , 
\quad\text{for some $T_{\alpha,\beta}\super{r} \in \smash{\WJ_\multii^{\multiii; \scaleobj{0.85}{(r)}}}$} . 
\end{align}

First, we prove that~\eqref{2ndColl} is a linearly independent set.  For this, we note that any vanishing linear combination of the elements in this set, with coefficients $\smash{c_{\alpha,\beta}\super{s}} \in \bC$,
implies the following linear relation:
\begin{align}\label{LinCmb0} 
\sum_{\substack{\alpha \, \in \, \PP_\multii\super{s} \\ \beta \, \in \, \PP_\multiii\super{s}}} 
c_{\alpha,\beta}\super{s} \BarAction \alpha \quad \ProjBox \quad \beta \BarAction = 0 
\qquad \overset{\eqref{ProjDecBasis}}{\Longrightarrow} \qquad 
\sum_{\substack{\alpha \, \in \, \PP_\multii\super{s} \\ \beta \, \in \, \PP_\multiii\super{s}}} c_{\alpha,\beta}\super{s} \BarAction \alpha \quad \beta \BarAction 
= -\sum_{\substack{\alpha \, \in \, \PP_\multii\super{s} \\ \beta \, \in \, \PP_\multiii\super{s}}} c_{\alpha,\beta}\super{s} 
\sum_{\substack{ r \, \in \, \DefectSet_\multii^{\multiii} \\ r \, < \, s}}  T_{\alpha,\beta}\super{r} . 
\end{align}
The second sum of~\eqref{LinCmb0} is a tangle in $\smash{\WJ_\multii^{\multiii; \scaleobj{0.85}{(s)}}}$, 
and the last sum of~\eqref{LinCmb0}  is a tangle in $\smash{\WJ_\multii(\nu)\super{< s}}$~\eqref{smallerthans}.
Because the intersection of these two spaces is zero, the equality of these sums implies that both sums vanish.  Hence, we have 
\begin{align}\label{BothGone}  
\sum_{\substack{\alpha \, \in \, \PP_\multii\super{s} \\ \beta \, \in \, \PP_\multiii\super{s}}} c_{\alpha,\beta}\super{s} \BarAction \alpha \quad \ProjBox \quad \beta \BarAction = 0 
\qquad \Longrightarrow \qquad 
\sum_{\substack{\alpha \, \in \, \PP_\multii\super{s} \\ \beta \, \in \, \PP_\multiii\super{s}}} c_{\alpha,\beta}\super{s} \BarAction \alpha \quad \beta \BarAction = 0.
\end{align}
Now, lemma~\ref{WJSandwichLem} implies that the collection 
\begin{align}\label{1stColl} 
\smash{\big\{\BarAction \alpha \quad \beta \BarAction \big| \,\alpha \in \smash{\PP_\multii\super{s}}, \; \beta \in \smash{\PP_\multiii\super{s}}\big\}} 
\end{align}
is a basis for the subspace $\smash{\WJ_\multii^{\multiii; \scaleobj{0.85}{(s)}}}$, and as such, it is linearly independent.
Hence, we infer from~\eqref{BothGone} that $\smash{c_{\alpha,\beta}\super{s}} = 0$ for all $(\multii, \multiii)$-Jones-Wenzl link patterns 
$\alpha \in \smash{\PP_{\multiii}\super{s}}$ and $\beta \in \smash{\PP_\multii\super{s}}$, so~\eqref{2ndColl} is linearly independent too.  

Finally, we observe that the cardinalities of the sets~\eqref{2ndColl} and~\eqref{1stColl} are equal, so with the former set linearly independent and
the latter a basis for $\smash{\WJ_\multii^{\multiii; \scaleobj{0.85}{(s)}}}$, we conclude that the former set is a basis for this space too.
\end{proof}

\bigskip

The next two lemmas~\ref{ChangeOfBasisLem}--\ref{DefectLem} are needed in section~\ref{GeneratorLemProofSect}.

\begin{lem} \textnormal{\cite[lemma~\red{4.4}]{fp0}} \label{ChangeOfBasisLem}
Suppose $\max \multii < \ppmin(q)$. Let $\mathsf{B}_\multii$ be a basis for $\PS_\multii$, all of whose elements $\alpha$ may be written in the form
\begin{align}\label{LPrep} 
\vcenter{\hbox{\includegraphics[scale=0.275]{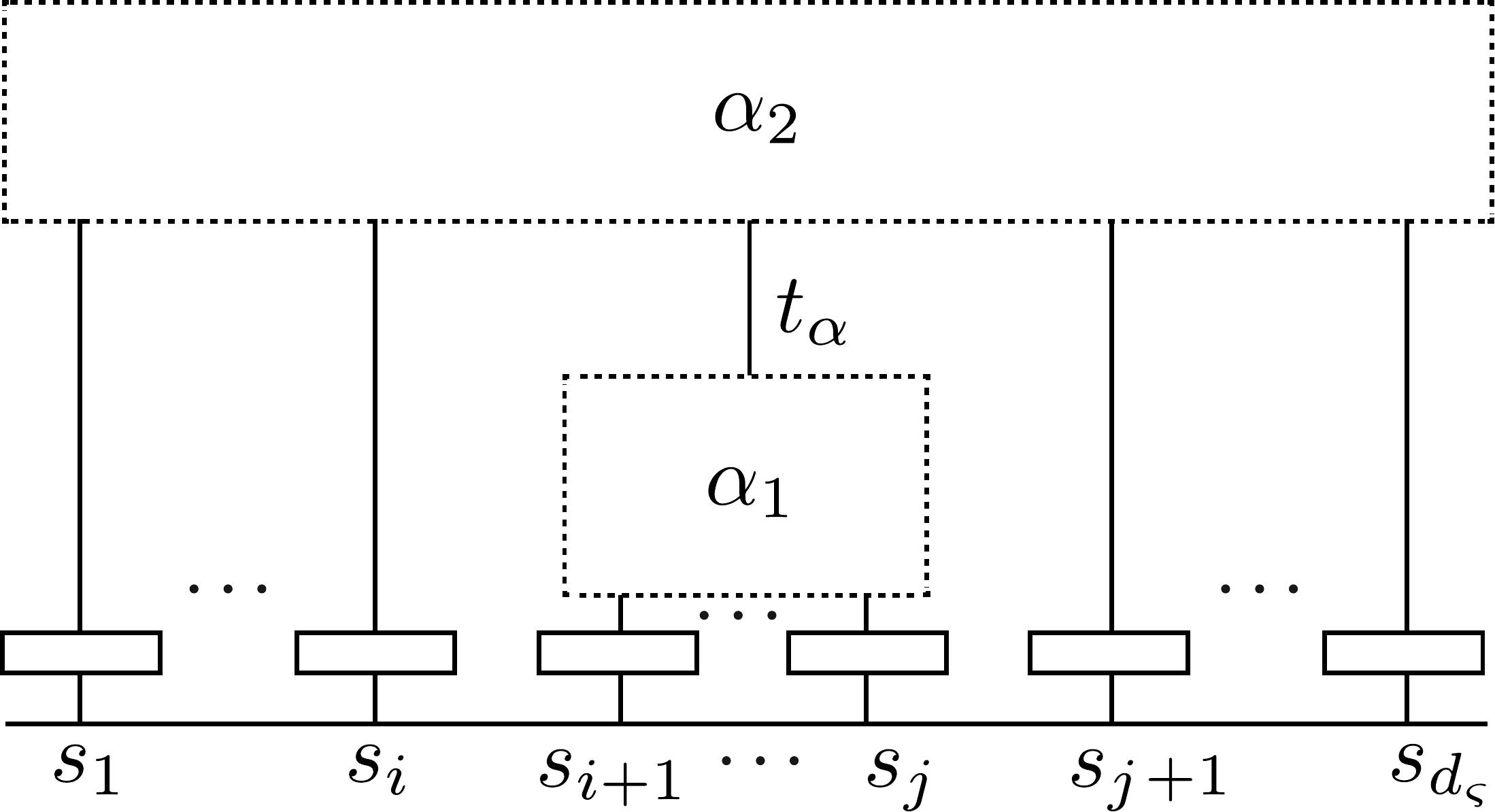} ,}} 
\end{align}
for some integers $i \in \{1,2,\ldots,\np_\multii-1\}$ and $j \in \{i+1, i+2, \ldots, \np_\multii\}$ common to all elements of $\mathsf{B}_\multii$, and 
\textnormal{(}with $\np = \np_\multii$\textnormal{)}
\begin{align} 
t_\alpha \in \DefectSet_{(\sIndex_{i+1}, \sIndex_{i+2}, \ldots, \sIndex_j)}, 
\qquad \alpha_1 \in \PS_{(\sIndex_{i+1}, \sIndex_{i+2}, \ldots, \sIndex_j)}^{(t_\alpha)}, 
\qquad \alpha_2 \in \PS_{(\sIndex_1, \sIndex_2, \ldots, \sIndex_i, t_\alpha, \sIndex_{j+1}, \sIndex_{j+2}, \ldots, \sIndex_{\np})} .
\end{align}
Also, let $T \colon \PS_\multii \longrightarrow \PS_\multii$ be the linear extension of the map sending each element $\alpha \in \mathsf{B}_\multii$, 
represented as~\eqref{LPrep}, to
\begin{align}\label{LPrepIns} 
\vcenter{\hbox{\includegraphics[scale=0.275]{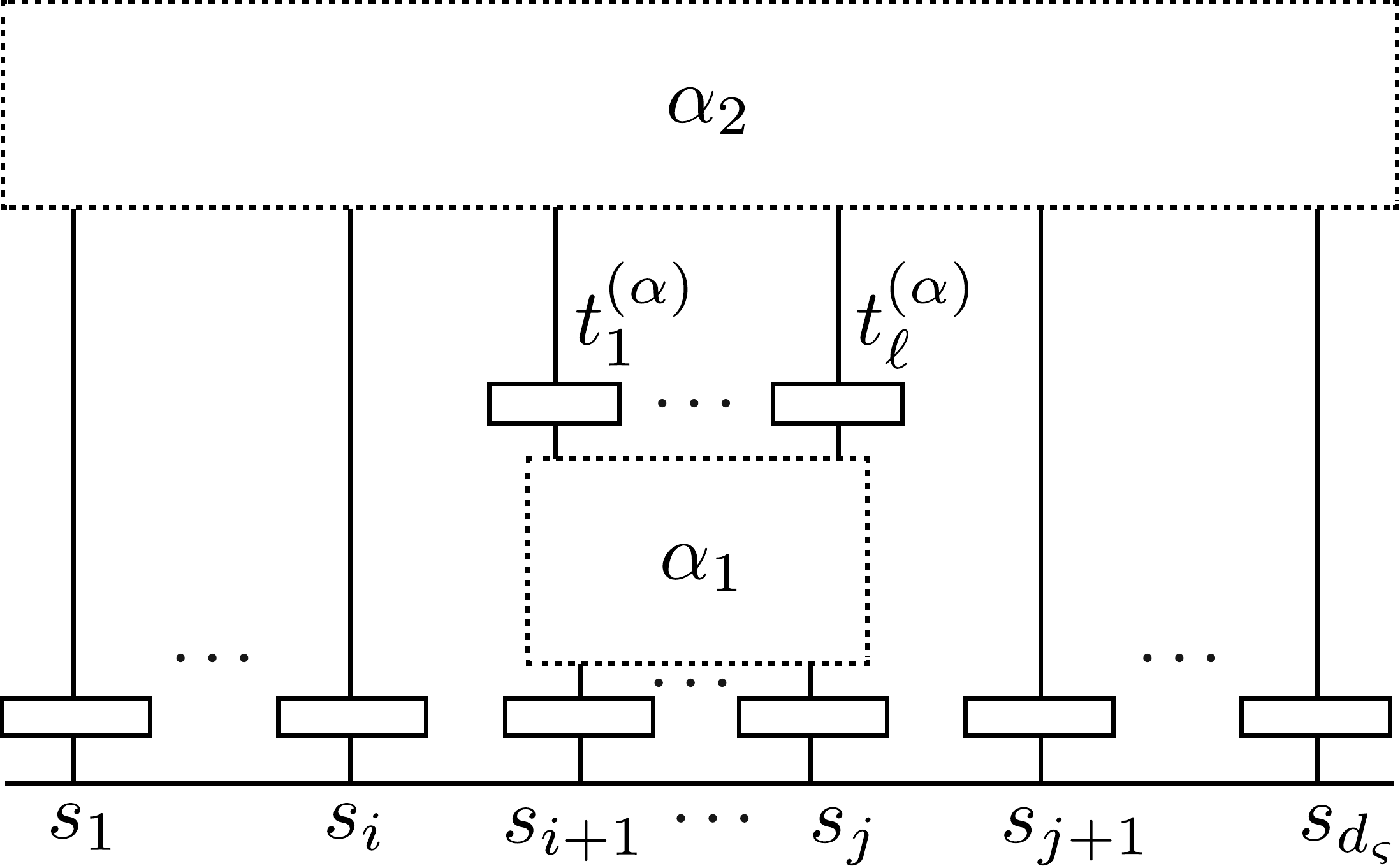} ,}} 
\end{align}
for some integers $\ell_\alpha, \smash{t_1^\alpha}$, $\smash{t_2^\alpha, \ldots, t_\ell^\alpha} \in \bZnn$ 
depending on $\alpha$ and such that $\smash{t_1^\alpha + t_2^\alpha + \dotsm + t_\ell^\alpha = t_\alpha}$,
with $\ell = \ell_\alpha$ vanishing only if $t_\alpha = 0$. 
Then $T$ has an upper-unitriangular matrix representation.
\hfill \qed
\end{lem}


\begin{lem} \textnormal{\cite[lemma~\red{2.4}]{fp0}}  \label{DefectLem} 
If $\alpha \in \smash{\PP_\multii\super{s}}$ with $s = \smin(\multii)$, then 
\begin{enumerate}
\itemcolor{red}
\item \label{DefIt1} all $\smin(\multii)$ defects of $\alpha$ attach to a common projector box of $\alpha$, and
\item \label{DefIt2} if all defects of $\alpha$ attach to its $i$:th projector box, 
then $\smin(\multii) < \sIndex_i$ if $\np_\multii > 1$, and $\smin(\multii) = \sIndex_1$ if $\np_\multii = 1$.
\end{enumerate}
In particular, items~\ref{DefIt1} and~\ref{DefIt2} together imply that
\begin{align} \label{sminineq} 
\smin(\multii) \quad 
\begin{cases} 
< \max \multii, & \np_\multii > 1, \\ = \max \multii, & \np_\multii = 1. 
\end{cases} 
\end{align}
\hfill \qed
\end{lem}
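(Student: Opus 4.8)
The plan is to pass to the purely combinatorial description of a Jones--Wenzl link pattern $\alpha \in \PP_\multii\super{s}$: it is a non-crossing partial matching of the $\Summed_\multii$ nodes, which are grouped left-to-right into the $\np_\multii$ projector boxes of sizes $\sIndex_1, \ldots, \sIndex_{\np_\multii}$, subject to two constraints. First, by property~\eqref{ProjectorID2} no link (arc) may join two nodes of the same box, since such a pattern vanishes. Second, the $s$ unmatched nodes---the defects---are exactly the through-lines, and planarity forces each defect to sit at nesting depth zero, i.e.\ no defect is enclosed by an arc (an enclosing arc $(a,b)$ with $a<p<b$ would be crossed by the through-line emanating from $p$). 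In this language $\smin(\multii)=\min\DefectSet_\multii$ is the least number of unmatched nodes over all such matchings, and for any pattern whose defects all lie in box $i$ one has $s\le\sIndex_i$ at once, since box $i$ has only $\sIndex_i$ nodes.

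First I would dispose of the case $\np_\multii=1$: with a single box every arc would join two nodes of that box, which is forbidden, so no arcs exist, all $\sIndex_1$ nodes are defects, $s=\sIndex_1$ is the only (hence minimal) value, every defect trivially attaches to the first box, and $\smin(\multii)=\sIndex_1=\max\multii$. This gives items~\ref{DefIt1} and~\ref{DefIt2} and the second line of~\eqref{sminineq}. For item~\ref{DefIt1} with $\np_\multii\ge 2$ I would use a defect-reduction move. Suppose, toward a contradiction, that a minimal pattern $\alpha$ (with $s=\smin(\multii)$) had defects in two distinct boxes. Let $p$ be the rightmost defect in the leftmost box containing a defect, and let $r$ be the leftmost defect lying in a box strictly to the right of that one; then $p<r$, they belong to different boxes, and no defect lies strictly between them. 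Because $p$ and $r$ both sit at depth zero, no arc incident to a node of the open interval $(p,r)$ can reach outside $[p,r]$ without enclosing $p$ or $r$, so all nodes of $(p,r)$ are matched among themselves. Adjoining the single arc $(p,r)$---which is planar (it nests around the arcs of $(p,r)$ and encloses no outside defect) and joins distinct boxes, hence respects~\eqref{ProjectorID2}---produces a valid pattern with $s-2$ defects, contradicting minimality. Thus all defects of $\alpha$ attach to one common box, which we name the $i$-th.

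It remains to prove the strict inequality of item~\ref{DefIt2}, namely $\smin(\multii)<\sIndex_i$ for $\np_\multii\ge 2$; together with $\sIndex_i\le\max\multii$ this yields the first line of~\eqref{sminineq}. I would argue by contradiction, assuming $s=\sIndex_i$, so that box $i$ is saturated with defects and no arc touches it. The key structural observation is that no arc may \emph{span} box $i$ either, as it would enclose one of its depth-zero defects; hence the matching splits into independent perfect matchings of the block $L$ of boxes $1,\dots,i-1$ and the block $R$ of boxes $i+1,\dots,\np_\multii$, at least one of which is non-empty since $\np_\multii\ge 2$. The base case $\np_\multii=2$ is then immediate: the block on the non-defect side is a single box, which carries no defect yet cannot be matched within itself by~\eqref{ProjectorID2}, forcing it to be empty---impossible. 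For general $\np_\multii$ one must produce a strictly larger matching out of this saturated configuration, obtained by partially matching box $i$ into a neighbouring block and routing a compensating arc across box $i$ from $L$ to $R$, thereby raising the arc count by at least one and contradicting $s=\smin(\multii)$; I would organise this as an induction on $\np_\multii$ that peels box $i$ off together with one adjacent block.

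The main obstacle is precisely this augmentation step---showing that a configuration saturating a single box is never of minimal defect number once a second box is present. The cleanest packaging is the recursive determination of the defect set in~\cite{fp0}: fusing the last box $\sIndex_{\np_\multii}$ with the first $\np_\multii-1$ boxes via the rule $\DefectSet\sub{r,t}=\{|r-t|,|r-t|+2,\dots,r+t\}$ of~\eqref{SpecialDefSet},~\eqref{DefSet2} gives $\smin(\multii)=\min_{r\in\DefectSet_{\lds}}|r-\sIndex_{\np_\multii}|$, from which $\smin(\multii)<\max\multii$ follows by induction on $\np_\multii$ (using $\smin(\lds)<\max\lds$), and the saturated case $s=\sIndex_i$ is ruled out because this minimum is realised by a matching in which box $i$ meets at least one arc. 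Once this is in hand, the reduction move and the depth-zero characterisation above make every remaining step routine, and assembling items~\ref{DefIt1} and~\ref{DefIt2} yields~\eqref{sminineq}.
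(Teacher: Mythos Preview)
The paper gives no proof of this lemma; it is quoted from the companion article~\cite{fp0} and simply closed with a \textnormal{q.e.d.}\ symbol, so there is no in-paper argument to compare yours against. Your treatment of item~\ref{DefIt1} is clean and correct: the defect-reduction move---joining the rightmost defect $p$ of the leftmost occupied box to the next defect $r$ to its right by a new outermost arc---is valid and strictly lowers the defect count whenever two distinct boxes carry defects.

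Your treatment of item~\ref{DefIt2} has a genuine gap exactly where you flag it, and that gap cannot be closed, because the strict inequality as printed in this paper is \emph{false}. Take $\multii=(1,1,1)$: the minimal pattern with arc $\{2,3\}$ and a single defect at node~$1$ has $\smin(\multii)=1$, all defects sit in box~$1$, yet $\sIndex_1=1$, so $\smin(\multii)<\sIndex_1$ fails; consequence~\eqref{sminineq} fails too since $\max\multii=1$. Your augmentation move (route a compensating arc from $L$ to $R$) cannot fire here because $L$ is empty and three nodes cannot support more than one arc, and your recursive derivation of~\eqref{sminineq} breaks at the same place: with $\lds=(1,1)$ one has $\DefectSet_{\lds}=\{0,2\}$ and $\min_{r}|r-\sIndex_{3}|=\min(|0-1|,|2-1|)=1=\max\multii$, not strictly less. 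The same failure of item~\ref{DefIt2} alone occurs for $\multii=(2,2,1)$ with the lone minimal defect at node~$5$ in box~$3$ of size~$1$. Presumably the version in~\cite{fp0} carries an additional hypothesis or a non-strict inequality; you should consult that source before trying to repair the argument.
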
 


\end{appendices}

\endgroup

\bigskip
\bigskip


\newcommand{\etalchar}[1]{$^{#1}$}

\renewcommand{\bibnumfmt}[1]{\makebox[5.3em][l]{[#1]}}


\end{document}